\theoremstyle{plain}
\newtheorem{theorem}{Theorem}[section]
\newtheorem{lemma}[theorem]{Lemma}
\theoremstyle{remark}
\newtheorem{remark}[theorem]{Remark}
\newtheorem{assumption}[theorem]{Assumption}
\newtheorem{proposition}[theorem]{Proposition}
\numberwithin{equation}{section}
\begin{document}

\begin{frontmatter}
%%%%%%%%%%%%%%%%%%%%%%%%%%%%%%%%%%%%%%%%%%%%%%
%%                                          %%
%% Enter the title of your article here     %%
%%                                          %%
%%%%%%%%%%%%%%%%%%%%%%%%%%%%%%%%%%%%%%%%%%%%%%
\title{StarTrek: Combinatorial Variable Selection with False Discovery Rate Control}
%\title{A sample article title with some additional note\thanksref{t1}}
\runtitle{StarTrek}
%\thankstext{T1}{A sample of additional note to the title.}

\begin{aug}
%%%%%%%%%%%%%%%%%%%%%%%%%%%%%%%%%%%%%%%%%%%%%%%
%% Only one address is permitted per author. %%
%% Only division, organization and e-mail is %%
%% included in the address.                  %%
%% Additional information can be included in %%
%% the Acknowledgments section if necessary. %%
%% ORCID can be inserted by command:         %%
%% \orcid{0000-0000-0000-0000}               %%
%%%%%%%%%%%%%%%%%%%%%%%%%%%%%%%%%%%%%%%%%%%%%%%
\author[A]{\fnms{Lu}~\snm{Zhang}\ead[label=e1]{lu\_zhang@g.harvard.edu}}
\and
\author[B]{\fnms{Junwei}~\snm{Lu}\ead[label=e2]{junweilu@hsph.harvard.edu}}
% \and
% \author[B]{\fnms{Third}~\snm{Author}\ead[label=e3]{third@somewhere.com}}
%%%%%%%%%%%%%%%%%%%%%%%%%%%%%%%%%%%%%%%%%%%%%%
%% Addresses                                %%
%%%%%%%%%%%%%%%%%%%%%%%%%%%%%%%%%%%%%%%%%%%%%%
\address[A]{Department of Statistics,
Harvard University\printead[presep={,\ }]{e1}}

\address[B]{Department of Biostatistics,
Harvard University\printead[presep={,\ }]{e2}}
\end{aug}

\begin{abstract}
Variable selection on the large-scale networks has been extensively studied in the literature. While most of the existing methods are limited to the local functionals especially the graph edges, this paper focuses on selecting the discrete hub structures of the networks. Specifically, we propose an inferential method, called StarTrek filter, to select the hub nodes with degrees larger than a certain thresholding level in the high dimensional graphical models and control the false discovery rate (FDR). Discovering hub nodes in the networks is challenging: there is no straightforward statistic for testing the degree of a node due to the combinatorial structures; complicated dependence in the multiple testing problem is hard to characterize and control. In methodology, the StarTrek filter overcomes this by constructing p-values based on the maximum test statistics via the Gaussian multiplier bootstrap. In theory, we show that the StarTrek filter can control the FDR by providing accurate bounds on the approximation errors of the quantile estimation and addressing the dependence structures among the maximal statistics.

To this end, we establish novel Cram\'er-type comparison bounds for the high dimensional Gaussian random vectors. Comparing to the Gaussian comparison bound via the Kolmogorov distance established by \cite{chernozhukov2014anti}, our Cram\'er-type comparison bounds establish the relative difference between the distribution functions of two high dimensional Gaussian random vectors, which is essential in the theoretical analysis of FDR control. Moreover, the StarTrek filter can be applied to general statistical models for FDR control of discovering discrete structures such as simultaneously testing the sparsity levels of multiple high dimensional linear models. We illustrate the validity of the StarTrek filter in a series of numerical experiments and apply it to the genotype-tissue expression dataset to discover central regulator genes.
\end{abstract}

\begin{keyword}[class=MSC]
\kwd[Primary ]{62H15, 62H22}
%\kwd{???}
\kwd[; secondary ]{60F99}
\end{keyword}

\begin{keyword}
\kwd{Graphical models}
\kwd{multiple testing}
\kwd{false discovery rate control}
\kwd{combinatorial inference}
\kwd{Gaussian multiplier bootstrap}
\kwd{comparison bounds}
\end{keyword}

\end{frontmatter}

%%%%%%%%%%%%%%%%%%%%%%%%%%%%%%%%%%%%%%%%%%%%%%
%%%% Main text entry area:
\section{Introduction}
\label{section:introduction}
Graphical models are widely used for real-world problems in a broad range of fields, including social science, economics, genetics, and computational neuroscience \cite{newman2002random,luscombe2004genomic, rubinov2010complex}. Scientists and practitioners aim to understand the underlying network structure behind large-scale datasets. For a high-dimensional random vector $\bX =(\bX_1, \cdots, \bX_d)  \in \RR^d$, we let $\cG = (\cV, \cE)$ be an undirected graph, which encodes the conditional dependence structure among $\bX$. Specifically, each component of $\bX$ corresponds to some vertex in $\cV=\{1,2\cdots, d\}$, and $(j,k) \notin \cE$ if and only if $\bX_j$ and $\bX_k$ are conditionally independent given the rest of variables. \rev{We denote the associated weight matrix by $\bTheta$ with $\bTheta_{jk}$ being the weight on the edge between $j$ and $k$.} Many existing works in the literature seek to learn the structure of $\cG$ via estimating the weight matrix $\bTheta$. For example, \cite{meinshausen2006high, yuan2007model, friedman2008sparse, rothman2008sparse, peng2009partial, lam2009sparsistency, ravikumar2011high, cai2011constrained, shen2012likelihood} focus on estimating the precision matrix in a Gaussian graphical model. Further, there is also a line of work developing methodology and theory to assess the uncertainty of edge estimation, i.e., constructing hypothesis tests and confidence intervals on the network edges, see \cite{cai2013optimal, gu2015local, ren2015asymptotic, cai2016inference, jankova2017honest, yang2018semiparametric, feng2019high, ding2020estimation}. Recently, simultaneously testing multiple hypotheses on edges of the graphical models has received increasing attention \cite{liu2013ggmfdr, cai2013two, xia2015testing, xia2018multiple,li2019ggm, eisenach2020high}.
%------------------------------------------------------------------------------------%

Most of the aforementioned works formulate the testing problems based on continuous parameters and local properties. For example, \cite{liu2013ggmfdr} proposes a method to select edges in Gaussian graphical models with asymptotic FDR control guarantees. Testing the existence of edges concerns the local structure of the graph. Under certain modeling assumptions, its null hypothesis can be translated into a single point in the continuous parameter space, for example, $\bTheta_{jk}=0$ where $\bTheta$ is the precision matrix or the general weight matrix. However, for many scientific questions involving network structures, we need to detect and infer discrete and combinatorial signals in the networks, which does not follow from single edge testing.  For example, in the study of social networks, it is interesting to discover active and impactful users, usually called ``hub users," as they are connected to many other nodes in the social network \cite{ilyas2011distributed,lee2019discovering}. In gene co-expression network analysis, identifying central regulators/hub genes \cite{yuan2017co,liu2019bioinformatics,liu2019identification} is known to be extremely useful to the study of progression and prognosis of certain cancers and can support the treatment in the future. In neuroscience, researchers are interested in identifying the cerebral areas which are intensively connected to other regions \cite{shaw2008neurodevelopmental,van2013network,power2013evidence} during certain cognitive processes. The discovery of such central/hub areas can provide scientists better understanding of the mechanisms of human cognition.

%------------------------------------------------------------------------------------%

Motivated by these applications in various areas, in this paper, we consider the hub node selection problem from the network models. In specific, given a graph $\cG = (\cV, \cE)$, where $\cV$ is the vertex set and $\cE \subseteq \cV \times \cV$ is the edge set, we consider multiple hypotheses on whether the degree of some node $j \in \cV$ exceeds a given threshold $k_\tau$:
\begin{equation}\nonumber
H_{0j}: \text{degree of node } j < k_{\tau} \text{  v.s.  }  H_{1j}:  \text{degree of node } j \ge k_{\tau},
\end{equation}
\rev{based on i.i.d. samples $\bX_1, \cdots \bX_n \stackrel{i.i.d.}{\sim}  \bX  \in \RR^d$.}
Throughout the paper, these nodes with large degrees will be called hub nodes. For each $j\in [d]$, let $\psi_j = 1$ if $H_{0j}$ is rejected and $\psi_j = 0$ otherwise. When selecting hub nodes, we would like to control the false discovery rate, as defined below:
\[
 {\rm FDR} = \EE{\frac{\sum_{j \in \cH_0} \psi_j}{\max\big\{\sum_{j=1}^d \psi_j, 1\big\}}},
\]
where $\cH_0 = \{j \mid  \text{degree of node } j < k_{\tau} \}$. Remark the hypotheses $H_{0j}, j \in [d]$ are not based on continuous parameters. They instead involve the degrees of the nodes, which are intrinsically discrete/combinatorial functionals. To the best of our knowledge, there is no existing literature studying such combinatorial variable selection problems. The most relevant work turns out to be \cite{lu2017adaptive}, which proposes a general framework for inference about graph invariants/combinatorial quantities on undirected graphical models. However, they study single hypothesis testing and have to decide which subgraph to be tested before running the procedure.

The combinatorial variable selection problems bring many new challenges. First, most of the existing work focus on testing continuous parameters \cite{liu2013ggmfdr, javanmard2013nearly,javanmard2014confidence,javanmard2014hypothesis,belloni2014inference,van2014asymptotically,xia2015testing, xia2018multiple, javanmard2019false, sur2019modern,zhao2020asymptotic}. For discrete functionals, it is more difficult to construct appropriate test statistics and estimate its quantile accurately, especially in high dimensions. 
%Besides, how to develop computationally efficient selection procedures is also nontrivial due to the combinatorial feature of the problem. 
Second, many multiple testing procedures rely on an independence assumption (or certain dependence assumptions) on the null p-values \cite{benjamini1995controlling,benjamini2001control,benjamini2010discovering}. However, the single hypothesis here is about the global property of the graph, which means that any reasonable test statistic has to involve the whole graph. Therefore, complicated dependence structures exist inevitably, which presents another layer of difficulty for controlling the false discoveries. Now we summarize the motivating question for this paper: how to develop a combinatorial selection procedure to discover nodes with large degrees on a graph with FDR control guarantees?
%\jlmargin{}{more ref}
This paper introduces the StarTrek filter to select hub nodes. The filter is based on the maximum statistics, whose quantiles are approximated by the Gaussian multiplier bootstrap procedure. 
%For a given subset $E$ of $\cV \times \cV$, denote the maximum test statistic by $T_{E}:= \max_{(j,k)\in E}|\tTheta_{jk}|$, where ${\tTheta} \in \RR^{d\times d}$ is a generic estimator of the weight matrix $\bTheta$.
Briefly speaking, the Gaussian multiplier bootstrap procedure estimates the distribution of a given maximum statistic of general random vectors with unknown covariance matrices by the distribution of the maximum of a sum of the conditional Gaussian random vectors.   The validity of high dimensional testing problems, such as family-wise error rate (FWER) control, relies on the non-asymptotic bounds of the Kolmogorov distance between the true distribution of the maximum statistics and the Gaussian multiplier bootstrap approximation, which is established in \cite{chernozhukov2013gaussian}. However, in order to control the FDR in the context of combinatorial variable selection, a more refined characterization of the quantile approximation errors is required. In specific, we need the so called Cram\'er-type comparison bounds 
    quantifying the accuracy of the p-values  in order to control the FDR in the  simultaneous testing procedures \cite{chang2016cramer}. In our context, consider two centered Gaussian random vectors $U,V\in \RR^{d}$ with different covariance matrices $\bSigma^U$, $\bSigma^V$ and denote the $\ell_{\infty}$ norms of $U,V$ by $\maxnorm{U},\maxnorm{V}$ respectively, then the Cram\'er-type comparison bounds aim to control the relative error $\left|\frac{\mathbb{P}(\maxnorm{U} > t)}{\mathbb{P}(\maxnorm{V} > t)}-1\right|$ for certain range of $t$. Comparing to the Kolmogorov distance  $\sup_{t\in \RR}\left|{\mathbb{P}(\maxnorm{U} > t)}-{\mathbb{P}(\maxnorm{V} > t)}\right|$ \cite{chernozhukov2015comparison},  the Cram\'er-type comparison bound leads to 
the relative error between two cumulative density functions, which is necessary to guarantee the FDR control. In specific, we show in this paper a novel Cram\'er-type Gaussian comparison bound 
  \begin{equation}\label{eq:intro_ccb_max}
      \sup_{0\le t \le C_0\sqrt{\log d}}\left|\frac{\mathbb{P}(\maxnorm{U} > t)}{\mathbb{P}(\maxnorm{V} > t)}-1\right|=  O\rbr{ \min\Big\{(\log d)^{5/2}\maxdiff^{1/2}, \frac{\zerodiff   \log d  }{\discon}\Big\}},
  \end{equation}
  for some constant $C_0>0$, where $\maxdiff:= ||\bSigma^U-\bSigma^V||_{\max}$ is the entrywise maximum norm difference between the two covariance matrices,  $\zerodiff:= ||\bSigma^U-\bSigma^V||_{0}$ with $\nbr{\cdot}_{0}$ being the entrywise $\ell_0$-norm of the matrix, and $\discon$ is the number of connected subgraphs in the graph whose edge set $\cE = \{(j,k): \bSigma^U_{jk}\neq 0  \text{ or }  \bSigma^V_{jk}\neq 0 \}$. This comparison bound in \eqref{eq:intro_ccb_max} characterizes the relative errors between Gaussian maxima via two types of rates: the $\ell_\infty$-norm  $\Delta_\infty$ and the $\ell_0$-norm $\Delta_0$. This implies a new insight that the Cram\'{e}r type bound between two Gaussian maxima is small as long as  either their covariance matrices are uniformly close or only sparse entries of the two   covariance matrices differ. As far as we know, the second type of rate in \eqref{eq:intro_ccb_max} has not been developed even in Kolmogorov distance results of high dimensional Gaussian maxima. In the study of FDR control, we need both types of rates: the $\Delta_\infty$ rate is used to show that the Gaussian multiplier bootstrap procedure is an accurate  approximation for the maximum statistic quantiles and the $\Delta_0$ rate is used to quantify the complicated dependence structure of the p-values for the single tests on the degree of graph nodes. In order to prove the Cram\'{e}r-type comparison bound in \eqref{eq:intro_ccb_max}, we develop two novel theoretic techniques to prove the two types of rates separately. For the $\Delta_\infty$ rate,  we reformulate the Slepian's interpolation \cite{slepian1962one} into an ordinary differential inequality such that the relative error can be controlled via the Gr{\"o}nwall's inequality \cite{gronwall1919note}.  
  To control the $\Delta_0$ rate, the anti-concentration inequality of Gaussian maxima developed in \cite{chernozhukov2015comparison} is no longer sufficient, we establish a new type of anti-concentration inequality for the derivatives of the soft-max of high dimensional Gaussian vectors.  The existing works on the Cram\'{e}r type comparison bounds such as \cite{liu2010cramer,liu2014phase,chang2016cramer} does not cover the high dimensional maximum statistics. Therefore, their techniques can not be directly extended to our case. To the best of our knowledge, it is the first time in our paper to prove the Cram\'er-type Gaussian comparison bounds \eqref{eq:intro_ccb_max} for high dimensional Gaussian maxima.

In summary,  our paper makes the following major contributions. First, we develop a novel StarTrek filter to select combinatorial statistical signals: the hub nodes with the FDR control.  This procedure involves maximum statistic and Gaussian multiplier bootstrap for quantile estimation. 
Second, in theory, the proposed method  is shown to be valid for  many different models with the network structures. In this paper, we provide two examples,  the Gaussian graphical model and the bipartite network in the multiple linear models. Third, we prove a new  Cram\'er-type Gaussian comparison bound with two types of rates: the maximum norm difference and $\ell_0$ norm difference. These results are quite generic and has its own significance in the probability theory.

\subsection{Related work}
%------------------------------------------------------------------------------------%

Canonical approaches to FDR control and multiple testing \cite{benjamini1995controlling,benjamini2001control,benjamini2010discovering} require that valid p-values are available, and they only allow for certain forms of dependence between these p-values. However, obtaining asymptotic p-values with sufficient accuracy is generally non-trivial for high dimensional hypothesis testing problems concerning continuous parameters \cite{javanmard2013nearly,javanmard2014confidence,javanmard2014hypothesis,belloni2014inference,van2014asymptotically,sur2019modern,zhao2020asymptotic}, not even to mention discrete/combinatorial functionals.

Recently, there is a line of work conducting variable selection without needing to act on a set of valid p-values, including \cite{barber2015controlling,barber2019knockoff,panning2019knockoff,xing2019controlling,dai2020false,dai2020scale}. These approaches take advantage of the symmetry of the null test statistics and establish FDR control guarantee. As their single hypothesis is often formulated as conditional independence testing, it is challenging to apply those techniques to select discrete signals for the problem studied in this paper.

Another line of work develops multiple testing procedures based on asymptotic p-values for specific high dimensional models \cite{liu2013ggmfdr,liu2014hypothesis,javanmard2019false,xia2015testing,xia2018multiple,liu2020integrative}. Among them, \cite{liu2013ggmfdr} studies the edge selection problem on Gaussian graphical models, which turns out to be the most relevant work to our paper. However, their single hypothesis is about the local property of the graph. Our problem of discovering nodes with large degrees concerns the global property of the whole network, therefore requiring far more work. 

There exists some recent work inferring combinatorial functionals. For example, the method proposed in \cite{ke2020estimation} provides a confidence interval for the number of spiked eigenvalues in a covariance matrix. \cite{jin2020estimating} focuses on estimating the number of communities in a network and yields confidence lower bounds. \cite{neykov2019combinatorial,lu2017adaptive} propose a general framework for conducting inference on graph invariants/combinatorial quantities, such as the maximum degree, the negative number of connected subgraphs, and the size of the longest chain of a given graph. \cite{shen2020combinatorial} develops methods for testing the general community combinatorial properties of the stochastic block model. Regarding the hypothesis testing problem, all these works only deal with a single hypothesis and establish asymptotic type-I error rate control. While simultaneously testing those combinatorial hypotheses is also very interesting and naturally arises from many practical problems.

\subsection{Outline}
In Section \ref{sec:method}, we set up the general testing framework and introduce the StarTrek filter for selecting hub nodes. In Section \ref{sec:cramer_theory}, we present our core probabilistic tools: Cram\'er-type Gaussian comparison bounds in terms of maximum norm difference and $\ell_0$ norm difference. To offer a relatively simpler illustration of our generic theoretical results, we first consider the hub selection problem on a bipartite network (multitask regression with linear models). Specifically, the input of the general StarTrek filter is chosen to be the estimators and quantile estimates described in Section \ref{sec:bipartite_selection}. Applying the probabilistic results under this model, we establish FDR control guarantees under certain conditions. Then we move to the Gaussian graphical model in Section \ref{sec:hub_selection}. In Section \ref{sec:simul}, we demonstrate StarTrek's performance through empirical simulations and a real data application.

\subsection{Notations}
% Orlicz norms are defined as $\norm{X}_{\psi_{\alpha}}=\inf\{c>0: \EE{\psi_{\alpha}(|X|/c)}\le 1 \}$ with $\psi_{\alpha}(x):=\exp(x^\alpha) - 1$, for $\alpha \ge 1$. 
Let $\phi(x),\Phi(x)$ be the probability density function (PDF) and the cumulative distribution function (CDF) respectively of the standard Gaussian distribution and denote $\bar{\Phi}(x) = 1 - \Phi(x)$. Let $\mathbf{1}_{d}$ be the vector of ones of dimension $d$. We use $\Indrbr{\cdot}$ to denote the indicator function of a set and $|\cdot|$ to denote the cardinality of a set. For two sets $A$ and $B$, denote their symmetric difference by $A \ominus B$, i.e., $A \ominus B = (A\setminus B) \cup (B\setminus A)$; let $A \times B$ be the Cartesian product. For two positive sequences $\{x_n\}_{n=1}^{\infty}$ and $\{y_n\}_{n=1}^{\infty}$, we say $x_n = O\rbr{y_n}$ if $x_n\le C y_n$ holds for any $n$ with some large enough $C>0$. And we say $x_n = o\rbr{y_n}$ if $x_n/y_n  \rightarrow 0$ as $n\rightarrow \infty$. For a sequence of random variables $\{X_n\}_{n=1}^\infty$ and a scalar $a$, we say $X_n  \le a + \smallop$ if for all $\epsilon > 0$, $\lim_{n \rightarrow \infty} \PP{ X_n - a > \epsilon } = 0$. Given a random variable $Z$, we define its $\psi_{\ell}$-norm for $\ell \ge 1$ as $\|Z\|_{\psi_{\ell}} = \sup_{p \ge 1} p^{-1/\ell} (\EEE|Z|^p)^{1/p}$.
Let $[d]$ denote the set $\{1,\dots,d\}$. The $\ell_{\infty}$ norm and the $\ell_{1}$ norm on $\RR^d$ are denoted by $\maxnorm{\cdot}$ and $\norm{\cdot}_1$ respectively. For a random vector $X$, let $\maxnorm{X}$ be its $\ell_{\infty}$ norm. For a matrix $\Ab \in \RR^{d_1\times d_2}$, we denote its minimal and maximal eigenvalues by $\lambda_{\min}(\Ab), \lambda_{\max}(\Ab)$ respectively, the elementwise max norm by $\nbr{\Ab}_{\max} = \max_{i\in [d_1],j\in [d_2]}|\Ab_{ij}|$ and the elementwise $\ell_0$ norm by $\nbr{\Ab}_{0} = \sum_{i\in [d_1],j\in [d_2]}\Indrbr{\Ab_{ij} \ne 0}$. Throughout this paper, $C, C',C'', C_0, C_1, C_2,\dots$ are used as generic constants whose values may vary across different places.

\section{Methodology}\label{sec:method}
Before introducing our method, we set up the problem with more details. Specifically, we consider a graph $\cG = (\cV_1, \cV_2, \cE)$ with the node sets $\cV_1,\cV_2$ and the edge set $\cE$. Let $d_1=|\cV_1|$, $d_2=|\cV_2|$ and denote its weight matrix by $\bTheta \in \RR^{d_1\times d_2}$. In the undirected graph where $\cV_1 = \cV_2:=\cV$, $\bTheta$ is a square matrix and its element $\bTheta_{jk}$ is nonzero when there is an edge between node $j$ and node $k$, zero when there is no edge. 
%Those diagonal entries $\{\bTheta_{jj}\}_{j=1}^d$ are set to be zero.
In a bipartite graph where $\cV_1 \ne \cV_2$, elements of $\bTheta$ describe the existence of an edge between node j in $\cV_{1}$ and node $k$ in $\cV_{2}$. Without loss of generality, we focus on one of the node sets and denote it by $\cV$ with $|\cV|:=d$. We would like to select those nodes among $\cV$ whose degree exceeds a certain threshold $k_{\tau}$, \rev{based on the $n$ data samples $\bX_1, \cdots \bX_n \stackrel{i.i.d.}{\sim}  \bX  \in \RR^d $}. And the selection problem is equivalent to simultaneously testing $d$ hypotheses:
\begin{equation} \label{eq:problem_setup}
H_{0j}: \text{degree of node } j < k_{\tau} \text{  v.s.  }  H_{1j}:  \text{degree of node } j \ge k_{\tau},
\end{equation}
for $j \in [d]$. Let $\psi_j = 1$ if $H_{0j}$ is rejected and $\psi_j = 0$ otherwise, then for some multiple testing procedure with output $\{\psi_j\}_{j\in[d]}$, the false discovery proportion (FDP) and FDR can be defined as below:
\[
   {\rm FDP} = \frac{\sum_{j \in \cH_0}^d \psi_j}{\maxof{1}{\sum_{j=1}^d \psi_j} },\quad {\rm FDR}  := \EEE[{\rm FDP} ],
\]
where $\cH_0 = \{j \mid  \text{degree of node } j < k_{\tau} \}$. \rev{Given the data $\bX_1, \cdots \bX_n$ from the graphical model}, we aim to propose a multiple testing procedure such that the FDP or FDR can be controlled at a given level $0 < q < 1$.

We illustrate the above general setup in two specific examples. In multitask regression with linear models, we are working with the bipartite graph case, then the weight matrix $\bTheta$ corresponds to the parameter matrix whose row represents the linear coefficients for one given response variable. Given a threshold $k_{\tau}$, we want to select those rows (response variables) with $\ell_0$ norm being at least $k_{\tau}$. In the context of Gaussian graphical models where $\cV_1 = \cV_2$, $\bTheta$ represents the precision matrix, and we want to select those hub nodes i.e., whose degree is larger than or equal to $k_{\tau}$.

\subsection{StarTrek filter}\label{sec:startrek}
Letting $\bTheta_{j}$ be the $j$-th row of $\bTheta$ and $\bTheta_{j,-j}$ be the vector $\bTheta_{j}$ excluding its $j$-th element, we formulate the testing problem for each single node as below,
% The above problem is equivalent to the following
\[
H_{0j}: \jdeg  < k_{\tau} \text{ v.s. } H_{1j}:  \jdeg \ge k_{\tau}.
\]
To test the above hypothesis, we need some estimator of the weight matrix $\bTheta$. 
%or some matrix which has the same $\ell_{0}$ entrywise norm as $\bTheta$.
In Gaussian graphical model, it is natural to use the estimator of a precision matrix. In the bipartite graph (multiple response model), estimated parameter matrix will suffice. Denote this generic estimator by $\tTheta$ (without causing confusion in notation), the maximum test statistic over a given subset $E$ of $\cV \times \cV$ will be
\begin{equation}\label{eq:T_E}
  T_{E}:= \max_{(j,k)\in E}\sqrt{n}\abr{\tTheta_{jk}  }  
\end{equation}
%\lzmargin{}{We should use $|\cdot|$ in the definition of $T_E$}
and its quantile is defined as
$
{c} (\alpha,E) = \inf \left\{ t\in \RR \; | \; \PPP \left( T_E \le t  \right) \ge 1-\alpha    \right\}
$, which is often unknown. Assume it can be estimated by $\hat{c} (\alpha,E)$ from some procedure such as the Gaussian multiplier bootstrap, a generic method called skip-down procedure can be used, which was originally proposed in \cite{lu2017adaptive} for testing a family of monotone graph invariants. When applied to the specific degree testing problem, it leads to the Algorithm \ref{algo:skipdown}. 
% we can come up with the following test for single node hypothesis with nominal level $\alpha$.
%\begin{aligned}
% \cU(M,s, r_0) &= \Big\{\bTheta \in \RR^{d \times d} \,\big|\,  \lambda_{\min}(\bTheta) \ge 1/r_0, \lambda_{\max}(\bTheta) \le r_0,  \max_{j \in [d]} \|\bTheta_{j}\|_{0} \le s, \|\bTheta \|_1 \le M \Big\}.
%\end{aligned}

\begin{algorithm}[htp]
\caption{Skip-down Method in \cite{lu2017adaptive} (for testing the degree of node $j$)}
\begin{algorithmic}\label{algo:skipdown}
\STATE \textbf{Input:} $\{ \tTheta_e \}_{e \in \cV \times \cV}$, significance level $\alpha$.
%, and approximate quantiles $\hat c(\alpha, E)$ for any set $E \subset \{j\}\times \cV$.
%，estimated quantile of the maximum statistic $\hat c(\alpha, E)$ for some subset $E$.
\STATE Initialize $t = 0, E_0 = \{(j,k) : k \in [d], k \neq j\}$.
\REPEAT
\STATE $t \gets t+1$;
 \STATE  Select the rejected edges $\cR \gets  \{(j,k) \in E_{t-1} \mid  \sqrt{n}| \tTheta_{jk} | > \hat{c} (\alpha,E_{t-1}) \}$;
 \STATE $E_t \gets E_{t-1}  \backslash \cR$;
 \UNTIL{$|E_t^{c}| \ge k_\tau$ or $\cR = \emptyset$}
%  \STATE Compute the percentile $\hat p_j$ of  $\max_{k \in E^c_{t-1}}  \sqrt{n}|\hat \bTheta_{jk}^{\text{d}}|$, i.e., find $\hat p_j$ such that
%  \[
%   \max_{k \in E_{t-1}}  \sqrt{n}|\hat \bTheta_{jk}^{\text{d}}| = \hat c (\hat p_j, E_{t-1}).
%   \]
\STATE \textbf{Output:} $\psi_{j,\alpha} = 1$ if $|E_t^c| \ge k$ and $\psi_{j,\alpha} = 0$ otherwise.
\end{algorithmic}
\end{algorithm}

To conduct the node selection over the whole graph, we need to determine an appropriate threshold $\hat{\alpha}$ then reject $H_{0j}$ if $\psi_{j,\hat{\alpha}} =1$. A desirable choice of $\hat{\alpha}$ should be able to discover as many as hub nodes with the FDR remaining controlled under the nominal level $q$. For example, if the \rev{BHq procedure} \cite{benjamini1995controlling} is considered, $\hat{\alpha}$ can be defined as follows:
\begin{equation}\label{eq:BHq_alpha}
\hat{\alpha} = \sup \left\{ \alpha\in (0,1) :  \frac{ \alpha d }{\maxof{1}{\sum_{j\in [d]}{ \psi_{j,\alpha}}} } \le q  \right\}.
\end{equation}
The above range of $\alpha$ is $(0,1)$, it will be very computationally expensive if we do an exhaustive search since for each $\alpha$, we have to recompute the quantiles $\hat{c} (\alpha,E)$ for a lot of sets $E$. 

We overcome the computational difficulty and propose a efficient procedure called 
% Thanks to the nice formulation of the single node test in Algorithm \ref{algo:stepdown}, our 
StarTrek filter, which is presented in Algorithm \ref{algo:startrek}.
% which proceeds in a quite computationally efficient manner, and is presented in Algorithm \ref{algo:startrek}.
% \lzmargin{list the skip down algorithm and say we have to tune $\alpha$, mention it is computationally slow
% }{combine algorithm 1 and algorithm 2}
% \begin{algorithm}[htp]
% \caption{Single node testing}
% \begin{algorithmic}\label{algo:stepdown}
% \STATE \textbf{Input:} $\{\hat{\bTheta}\}_{e \in \cV \times \cV}$, confidence level $\alpha$, $\psi_{j,{\alpha}} =0$ for node $j$.
% \STATE Consider a permutation $\pi$ such that $\abr{\hat{\bTheta}_{j, \pi(k)}}$ is monotonely decreasing, and compute $\hat{c}(\alpha, E)$ where $E=\{\pi(k): k_{\tau} \le k \le d\}$.
% \STATE \textbf{Output:} $\psi_{j,{\alpha}} =1$ if $|\hat \bTheta_{j, \pi(k_{\tau})}| \ge \hat{c}(\alpha, E)$.
% \end{algorithmic}
% \end{algorithm}
\begin{algorithm}[htp]
\caption{StarTrek Filter}
\begin{algorithmic}\label{algo:startrek}
\STATE \textbf{Input:} $\{\tTheta_e \}_{e \in \cV \times \cV}$, nominal FDR level $q$.
%estimated quantile of the maximum statistic $\hat c(\alpha, E)$ for some subset $E$.
% \FOR {\texttt{<some condition>}}
%         \State \texttt{<do stuff>}
% \ENDFOR
\FOR {$j \in [d]$}
% \STATE Consider a map $\pi_{j}:[d-1] \rightarrow \{k\in [d]:k\ne j\}$ such that 
% % $\{|\hat \bTheta_{j, \pi(k)}|\}_{k \in [d-1]}$ 
% \[
% |\hat \bTheta_{j, \pi(1)}| \ge |\hat \bTheta_{j, \pi(2)}| \ge \ldots \ge |\hat \bTheta_{j, \pi(d-1)}|
% \]

% % is monotonely decreasing, 
% and compute $\alpha_j = \hat{c}^{-1}(|\hat\bTheta_{j, \pi(k_{\tau})}|,E_j)$ where $E_j=\{\pi(k): k_{\tau} \le k \le d-1\}$.
\STATE We order the elements in $\{|\tTheta_{j\ell}|: \ell \neq j\}$ as
% $\{|\hat \bTheta_{j, \pi(k)}|\}_{k \in [d-1]}$ 
$
|\tTheta_{j, (1)}| \ge |\tTheta_{j, (2)}| \ge \ldots \ge |\tTheta_{j, (d-1)}|,
$ where $|\tTheta_{j, (\ell)}|$ is the $\ell$th largest entry.
 Compute $\alpha_j = \max_{1 \le s \le  k_{\tau}} \hat{c}^{-1}(\sqrt{n}|\tTheta_{j, (s)}|,E^{(s)}_j)$ where $E^{(s)}_j:=\{ (j,\ell):\ell \neq j, |\tTheta_{j\ell}| \le |\tTheta_{j, (s)}|\}$.
\ENDFOR
\STATE Order $\alpha_j$ as $\alpha_{(1)} \le \alpha_{(2)} \le \dots \le \alpha_{(d)} $ and set $\alpha_{(0)}=0$, let $j_{\max} = \max\{0\le j\le d:\alpha_{(j)} \le {qj}/{d}\}$.
% {Find a permutation $\pi:[d]\rightarrow [d]$ such that $\alpha_{\pi(j)}$ is monotone increasing, denote $\alpha_{\pi(0)} = 0 $ and let $j_{\max} = \max\{0\le j\le d:\alpha_{\pi(j)} \le {qj}/{d}\}$.}{use () instead of $\pi$}
\STATE \textbf{Output: $S =\{j: \alpha_j \le \alpha_{(j_{\max})}\}$} if $j_{\max}>0$; $S =\emptyset$ otherwise.
\end{algorithmic}
\end{algorithm}
%\lzmargin{remark it is a t-stat}{}
Remark it only involves estimating $k_{\tau}$ different quantiles of some maximum statistics per node, which is more efficient than the Skip-down procedure \cite{lu2017adaptive} in terms of computation. 
% \rev{Algorithms 1 and 2 can be shown to be equivalent, see the proof in Section A in the supplementary material. Intuitively, 
% they are equivalent as the two algorithms just change the order of choosing the confidence level and conducting variable selection. Algorithm 1 first conducts variable selection at each confidence level and then choose the best confidence level. In comparison, Algorithm 2 changes the order by first assigning p-values to each variable and then conducting the variable selection. We show the numerical comparison between two algorithms in Section \ref{sec:speed_simul}. }
\rev{We shall note that Algorithm \ref{algo:startrek} is equivalent to running the BHq procedure with Algorithm \ref{algo:skipdown}: rejecting $H_{0j}$ if $\psi_{j,\hat{\alpha}} =1$, $j \in [d]$, where $\hat{\alpha}$ is defined by \eqref{eq:BHq_alpha} and the test $\psi_{j,\alpha}$ is defined by Algorithm \ref{algo:skipdown}; see the proof at the beginning of Appendix \ref{app:pf:fdr} in the supplementary material. Without causing confusion, we will refer to the BHq procedure with Algorithm \ref{algo:skipdown} simply by Algorithm \ref{algo:skipdown}. Intuitively, Algorithm \ref{algo:startrek} directly acts the hub node selection by conducting the BHq adjustment to the p-values of all $d$ nodes but Algorithm \ref{algo:skipdown} has to specify a significance level first to test the degree for each node then search for an appropriate significance level for all nodes. We conduct numerical comparison for the two methods in Section \ref{sec:speed_simul}.}

\section{Cram\'{e}r-type comparison bounds for Gaussian maxima}\label{sec:cramer_theory}
%%%%%%%%%%%%%%%%%%%%%%%%%%%%%%%%%%%%%%%%%%%%%%%%%%%%%%%%%%%%%%%%%%%%%%%%%%%%%%%%%%%%%%%%%%%%%%%%%%%%%%%%%%%%%%%%%%%%%%%%%%%%%%%%%%%%%%%%%%%%%%%%%%%%%%%%%%%%%%%%%%%%%%%%%%%%%%%%%%%%%%%%%%%%%%%%%%%%%%%%%%%%%%%%%% cramer type approximation notations
%%%%%%%%%%%%%%%%%%%%%%%%%%%%%%%%%%%%%%%%%%%%%%%%%%%%%%%%%%%%%%%%%%%%%%%%%%%%%%%%%%%%%%%%
In this section, we present the theoretic results on the Cram\'{e}r-type comparison bounds for Gaussian maxima.
Let $U,V\in \RR^{d}$ be two centered Gaussian random vectors with different covariance matrices $\bSigma^U=(\sigma_{jk}^U)_{1\le j,k\le d},\bSigma^V=(\sigma_{jk}^V)_{1\le j,k\le d}$.
% (random vectors with one replicate will not be bolded.)
%Let $\bV_1,\cdots,\bV_n \in \RR^{p}$ denote another centered i.i.d. Gaussian random vectors with covariance matrix $\bSigma^V$. $\bSigma^Z$ and $\bSigma^V$ are in general different, w
Recall that the maximal difference of the covariance matrices is $\maxdiff:= ||\bSigma^U-\bSigma^V||_{\max}$ and the elementwise $\ell_0$ norm difference of the covariance matrices is denoted by $\zerodiff:= \nbr{\bSigma^U-\bSigma^V}_{0} = \sum_{j,k\in [d]}\Indrbr{\sigma^U_{jk}\ne \sigma^{\rev{V}}_{jk}}$. %, where $\nbr{\cdot}_{0}$ is the elementwise $\ell_0$ norm of the matrix. 
%thus we have $\zerodiff=\sum_{j,k\in [d]}\Indrbr{\sigma^U_{jk}\ne \sigma^U_{jk}}$. 
The Gaussian maxima of $U$ and $V$ are denoted as $\maxnorm{U}$ and $\maxnorm{V}$.
%Define the maxima of plain gaussian random vectors as follows.
% \begin{equation}\label{eq:gau_maxima}
% \maxnorm{U}=\max_{1\le j \le p}|U_j|, \maxnorm{V}=\max_{1\le j \le p}|V_j|,
% \end{equation}
Now we present a Cram\'{e}r-type comparison bound (CCB) between Gaussian maxima in terms of the maximum norm difference $\maxdiff$.
\begin{theorem}[CCB with maximum norm difference]\label{thm:ccb_max}
Suppose $(\log d)^{5}\maxdiff = O(1)$, then we have
% \lzmargin{add scaling condition}{present a simple bound}
\begin{equation}\label{eq:ccb_max}
    \sup_{0\le t \le C_0\sqrt{\log d}}\left|\frac{\mathbb{P}(\maxnorm{U} > t)}{\mathbb{P}(\maxnorm{V} > t)}-1\right| = O \left( (\log d)^{5/2}\maxdiff^{1/2} \right),
    %M_1(\log d)^{3/2} A(\maxdiff)e^{M_1(\log d)^{3/2} A(\maxdiff)}
    %[A(\maxdiff)+1]e^{M_1(\log d)^{3/2} A(\maxdiff)}-1 %:= \pi(\maxdiff)
\end{equation}
for some constant $C_0>0$.
% and the constant $C_1$ only depends on
% $A(\maxdiff)=K_1\maxdiff^{1/2} \exp{(K_2\maxdiff^{1/2})}$ with $K_1$ and $K_2$ only depending on
%$\max_{j}(\bSigma^U)_{jj}$, $\min_{j}(\bSigma^V)_{jj}$, $\max_{j}(\bSigma^V)_{jj}$ 
% the median of Gaussian maxima $\maxnorm{U},\maxnorm{V}$, and 
% the variance terms $\min_{1\le j\le d}\{\sigma^U_{jj},\sigma^V_{jj}\},\max_{1\le j\le d}\{\sigma^U_{jj},\sigma^V_{jj}\}$.
% and $M_1$ being some universal constant.
% And $M_1$ is a universal constant, which does dependend
% comes from the smooth approximation of $l_{\infty}$ norm, which has nothing to do with the distributional assumptions.
\end{theorem}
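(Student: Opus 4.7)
The plan is to combine Slepian's smart-path interpolation with a double smoothing of the tail functional, convert the resulting Stein identity into a differential inequality on the \emph{relative} error $H'(s)/H(s)$, and then integrate via Gr\"onwall's inequality. The bandwidths of the two smoothings are then optimized to match the target rate $(\log d)^{5/2}\maxdiff^{1/2}$.

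First I would symmetrize and smooth. Write $\maxnorm{U} = \max_{j\in[2d]} \tilde U_j$ with $\tilde U = (U,-U)$ (and likewise $\tilde V$), so only one-sided maxima over $2d$ coordinates need to be handled. Approximate the max by the soft-max $F_\beta(x) := \beta^{-1}\log\sum_j e^{\beta x_j}$, which satisfies $\max_j x_j \le F_\beta(x)\le \max_j x_j + \beta^{-1}\log(2d)$, and replace $\mathbf{1}\{y>t\}$ by a smooth $g_\mu:\mathbb{R}\to[0,1]$ supported on $(t-\mu,\infty)$, equal to $1$ on $[t,\infty)$, with $\|g'_\mu\|_\infty\lesssim \mu^{-1}$ and $\|g''_\mu\|_\infty\lesssim \mu^{-2}$. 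Let $\Phi := g_\mu\circ F_\beta$. Taking $\tilde U,\tilde V$ independent, define the Slepian path $Z(s) := \sqrt{s}\,\tilde U + \sqrt{1-s}\,\tilde V$, a centered Gaussian with covariance $s\bSigma^{\tilde U}+(1-s)\bSigma^{\tilde V}$, and set $H(s) := \mathbb{E}[\Phi(Z(s))]$, so $H(0)$ approximates $\mathbb{P}(\maxnorm{V}>t)$ and $H(1)$ approximates $\mathbb{P}(\maxnorm{U}>t)$ up to the smoothing slack $\beta^{-1}\log d + \mu$.

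Next I would differentiate $H$ and apply Gaussian integration by parts. A direct Stein-identity computation yields $H'(s) = \tfrac12 \sum_{j,k}(\sigma^{\tilde U}_{jk}-\sigma^{\tilde V}_{jk})\,\mathbb{E}[\partial_{jk}\Phi(Z(s))]$. Using the soft-max identities $\partial_j F_\beta = \pi_j \ge 0$ with $\sum_j \pi_j = 1$, and $\sum_{j,k}|\partial_{jk}F_\beta|\le 2\beta$, one obtains $\sum_{j,k}|\partial_{jk}\Phi|\le \|g''_\mu\|_\infty + 2\beta\|g'_\mu\|_\infty\lesssim \mu^{-2}+\beta\mu^{-1}$, localized to the event $\{F_\beta(Z(s))\in[t-\mu,t]\}$ because $g'_\mu,g''_\mu$ vanish off this strip. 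Pulling $\maxdiff$ out of the sum gives
\[
|H'(s)| \;\lesssim\; \maxdiff\bigl(\mu^{-2}+\beta\mu^{-1}\bigr)\,\mathbb{P}\bigl(F_\beta(Z(s))\in[t-\mu,t]\bigr).
\]
To turn this into a relative bound, I would use a Mill's-ratio anti-concentration: for $0\le t\le C_0\sqrt{\log d}$ and $\eta\sqrt{\log d}=O(1)$,
\[
\frac{\mathbb{P}(\maxnorm{Z(s)} > t-\eta)}{\mathbb{P}(\maxnorm{Z(s)}>t)}\;\le\; 1 + C\eta\sqrt{\log d},
\]
which follows from the Gaussian tail estimate $\mathbb{P}(\maxnorm{Z(s)}>t)\asymp d\phi(t)/t$ at the relevant scale. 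Bracketing $F_\beta$ by $\max_j Z_j(s)$ up to slack $\beta^{-1}\log d$ and applying the above with $\eta=\mu+\beta^{-1}\log d$ yields $\mathbb{P}(F_\beta(Z(s))\in[t-\mu,t]) \lesssim (\mu+\beta^{-1}\log d)\sqrt{\log d}\cdot H(s)$, so that
\[
\bigl|(\log H)'(s)\bigr| \;\lesssim\; \maxdiff\bigl(\mu^{-1}+\beta\bigr)\sqrt{\log d}.
\]

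Finally I would integrate and optimize. Gr\"onwall's inequality gives $|\log(H(1)/H(0))| \lesssim \maxdiff(\mu^{-1}+\beta)\sqrt{\log d}$. Choosing $\beta\asymp (\log d)^2\maxdiff^{-1/2}$ and $\mu\asymp (\log d)^{-2}\maxdiff^{1/2}$ balances both summands at $(\log d)^{5/2}\maxdiff^{1/2}$, and the assumption $(\log d)^5\maxdiff=O(1)$ keeps this quantity bounded so that $|\log(1+x)|\asymp |x|$. Undoing the smoothing replaces $H(0),H(1)$ by the true tails with a further relative error of order $\mu\sqrt{\log d}+\beta^{-1}(\log d)^{3/2}$, both of which are of smaller order than the Gr\"onwall rate. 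The hard part will be the relative anti-concentration step: the absolute anti-concentration of \citet{chernozhukov2015comparison} only delivers $\mathbb{P}(\maxnorm{Z(s)}\in[t-\mu,t])\lesssim \mu\sqrt{\log d}$, whereas the Cram\'er-type comparison requires the bound \emph{divided} by the upper tail $\mathbb{P}(\maxnorm{Z(s)}>t)$, uniformly over $s\in[0,1]$ and over the stated range $t\le C_0\sqrt{\log d}$; establishing this Mill's-ratio inequality for Gaussian maxima of arbitrary correlation structures is the genuinely new probabilistic input that powers the proof.
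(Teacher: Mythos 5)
Your blueprint matches the paper's proof of Theorem~\ref{thm:ccb_max} in all structural respects: smooth the tail functional, run the Slepian path $W(s)=\sqrt{s}U+\sqrt{1-s}V$, apply Stein's identity to get $\Psi_t'(s)=\tfrac12\sum_{j,k}(\sigma^U_{jk}-\sigma^V_{jk})\EEE[\partial_j\partial_k\varphi(W(s))]$, localize the Hessian to the strip $\{t-\epsilon\le\maxnorm{W(s)}\le t+\epsilon\}$, convert absolute anti-concentration into a \emph{relative} one, and close the argument with Gr\"onwall's inequality after picking the bandwidth to balance terms. Your repackaging is marginally cleaner in one respect: by reusing $H(s)$ in the right-hand side of the anti-concentration, you obtain a differential inequality on $(\log H)'(s)$ directly, whereas the paper first derives the integral inequality \eqref{eq:integral_eq} on $R_t(s)=Q_t(s)/Q_t(0)-1$ by restarting the Slepian interpolation at $W(s)$, and then invokes Gr\"onwall in integral form; the two are equivalent. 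Your use of two smoothing parameters $(\beta,\mu)$ instead of a single $\epsilon$ changes nothing once you impose $\beta\mu\asymp\log d$, which is exactly the choice built into the Bentkus approximant $\varphi_{t,\epsilon}$ used by the paper (Remark~\ref{rk:l_inf_func}).

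The one genuine gap is the ``Mill's-ratio anti-concentration'' step, and the gap lies not in recognizing it as the crux — you do — but in the justification you offer. You assert it ``follows from the Gaussian tail estimate $\PPP(\maxnorm{Z(s)}>t)\asymp d\phi(t)/t$ at the relevant scale,'' but that asymptotic is an i.i.d.\ (or weak-dependence) heuristic and fails for general covariance structures: think of $Z(s)$ with all coordinates equal, where the tail is $\asymp\phi(t)/t$ with no factor of $d$ and the ratio $\PPP(\maxnorm{Z}>t-\eta)/\PPP(\maxnorm{Z}>t)$ behaves like $e^{\eta t}$, not $1+C\eta\sqrt{\log d}$, unless $t\lesssim\sqrt{\log d}$ is enforced and the correct dependence on the correlation-agnostic median is used. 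The paper does \emph{not} prove this from scratch; it cites parts~2 and~3 of Theorem~2.1 in \citet{arun2018cram}, which give precisely
$\PPP(t-\epsilon\le\maxnorm{V}\le t+\epsilon)\le K_1(t+1)\epsilon\exp(K_2(t+1)\epsilon)\PPP(\maxnorm{V}>t)$
with $K_1,K_2$ depending only on the diagonal variances and the median of $\maxnorm{V}$ (bounded by $O(\sqrt{\log d})$). So this ratio bound is not a ``genuinely new probabilistic input'' to be invented here; the new contribution of the theorem is exactly the iterated-Slepian / Gr\"onwall machinery that turns a pointwise comparison bound into a Cram\'er-type one, and for that part your outline is on target. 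To finish your proof you should replace the heuristic with a citation of \citet{arun2018cram} (or supply a proof of the non-uniform anti-concentration in the restricted range $0\le t\le C_0\sqrt{\log d}$), and then the bandwidth optimization with $\epsilon\asymp\maxdiff^{1/2}/(t+1)$ carries through as in Theorem~\ref{thm:ccb_max_general} to deliver the rate $(\log d)^{5/2}\maxdiff^{1/2}$ under $(\log d)^5\maxdiff=O(1)$.
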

\begin{remark}\label{rk:thm:ccb_max}
{\rm  We can actually prove a more general form (see Theorem \ref{thm:ccb_max_general} in the appendix) of the upper bound on the above term, without the assumption on $\maxdiff$. In fact, we bound the right hand side of \eqref{eq:ccb_max} as 
$
M_3(\log d)^{3/2} A(\maxdiff)e^{M_3(\log d)^{3/2} A(\maxdiff)},
$ 
where $A(\maxdiff)=M_1 
\log d \maxdiff^{1/2} \exp{(M_2 \log^2 d \maxdiff^{1/2})}$ with the constants $M_1, M_2$ only depending on the variance terms $\min_{1\le j\le d}\{\sigma^U_{jj},\sigma^V_{jj}\},\max_{1\le j\le d}\{\sigma^U_{jj},\sigma^V_{jj}\}$ and $M_3$ being a universal constant.}
%$\max_{j}(\bSigma^U)_{jj}$, $\min_{j}(\bSigma^V)_{jj}$, $\max_{j}(\bSigma^V)_{jj}$ 
%the median of Gaussian maxima $\maxnorm{U},\maxnorm{V}$, and the variance terms $\min_{1\le j\le d}\{\sigma^U_{jj},\sigma^V_{jj}\},\max_{1\le j\le d}\{\sigma^U_{jj},\sigma^V_{jj}\}$.
% \lzmargin{remark add the complex form}{}
%The median of the Gaussian maxima $\maxnorm{U}, \maxnorm{V}$ can be bounded as $O(\sqrt{\log d})$, and the worst rates contributed by $K_1,K_2$ are $O(\log d)$ and $O({\log^2 d})$ respectively, which have only a logarithmic dependence on the dimension $d$.
\end{remark}
\rev{When applying Theorem \ref{thm:ccb_max} to Gaussian multiplier bootstrap, $\maxdiff$ actually controls the maximum differences between the true covariance matrix and the empirical covariance matrix, where $\maxdiff = O_P(\sqrt{\log d/n})$.
The proof of the theorem can be found in Appendix \ref{app:pf:thm:ccbmax}. Compared with the proof of Kolmogorov distance results in \cite{chernozhukov2013gaussian,chernozhukov2014anti}, the key innovation in our proof of the Cram\'{e}r-type Gaussian comparison bounds is a contraction mapping inequality. In specific, denote the Slepian interpolation $W(s)= \sqrt{s}U + \sqrt{1-s}V, s\in[0,1]$ and the tail probability of maxima $Q_t(s)=\mathbb{P}(||W(s)||_{\infty}>t)$. Our proof shows that $R_t(s) = Q_t(s)/Q_t(0)-1$ has the following key inequality:
\[
 |R_{t}(s)|\le AB \int_{0}^{s} |R_{t}(\mu)| d\mu + AB\cdot s + A,
\]
where $AB$ and $A$ are only depending on $\Delta_{\infty}$. By Gr\"{o}nwall's inequality \cite{gronwall1919note}, we then derive the bound on $R_t(1)$ explicitly in terms of $A$ and $B$, which finally lead to the desired Cram\'{e}r-type comparison bound in \eqref{eq:ccb_max}.}

\rev{The above theorem is a key ingredient for deriving Cram\'{e}r-type deviation results for the Gaussian multiplier bootstrap procedure. However, in certain situations especially in the applications of graphical models, comparison bounds in terms of maximum norm difference may not be appropriate. There exist cases where the covariance matrices of two Gaussian random vectors are not uniformly closed to each other, but have lots of identical entries. Namely, $\maxdiff$ is not negligible but $\zerodiff$ is small. To this end, we develop a different version of the Cram\'{e}r-type comparison bound as below. }
% {talk about why dependence is need to be used}
%This result can be utilized to quantify the covariance between maxima statistics, which is crucial for the hub node selection problem.
\begin{theorem}
[CCB with elementwise $\ell_{0}$-norm difference]\label{thm:ccb_sparse_unitvar}
%Assume $U$ and $V$ have unit variances i.e., $\sigma^U_{jj}=\sigma^V_{jj}=1, j \in [d]$. Suppose there exists a disjoint partition of nodes $\cup_{\ell=1}^{\discon}\cC_\ell = [d]$ such that $\sigma^U_{jk}= \sigma^V_{jk} = 0$ if $j \in \cC_{\ell}$ and $k\in \cC_{\ell'}$ for some $\ell' \neq \ell$.
Assume the Gaussian random vectors $U$ and $V$ have unit variances, i.e., $\sigma^U_{jj}=\sigma^V_{jj}=1, j \in [d]$ and 
%there exists some $\sigma_0<1$ such that $|\sigma^V_{jk} |\le  \sigma_0$ for any $j\ne k$ and $ |\{(j,k): j\ne k, |\sigma^U_{jk} |>  \sigma_0 \} | \le b_0$ for some constant $b_0$.
there exists some $\sigma_0<1$ such that $|\sigma^V_{jk} |\le  \sigma_0, |\sigma^U_{jk} | \le \sigma_0$ for any $j\ne k$. 
%$\lambda_{\min}(\bSigma^U)\ge 1/b_0>0,\lambda_{\min}(\bSigma^V) \ge 1/b_0>0$ for some constant $b_0>0$. 
Suppose there exists a disjoint $\discon$-partition of nodes $\cup_{\ell=1}^{\discon}\cC_\ell = [d]$ such that  $\sigma^U_{jk}=\sigma^V_{jk}=0$ when $j \in \cC_{\ell}$ and  $k \in \cC_{\ell'}$ for some $\ell \neq \ell'$. We have
\begin{equation}\label{eq:ccb_sparse_unitvar}
  \sup_{0\le t \le C_0 \sqrt{\log d}} \left|\frac{\mathbb{P}(\maxnorm{U} > t)}{\mathbb{P}(\maxnorm{V} > t)}-1\right| = O\left(\frac{ \zerodiff \log d  }{\discon}\right),
\end{equation}
for some constant $C_0 > 0$.
%\jlmargin{}{simplify it}
% whenever $t$ satisfies $0\le t \le K\sqrt{\log d}$.
% where $t$ is chosen such that $\mathbb{P}(T_{U} > t), \mathbb{P}(T_{V} > t)\ge \frac{1}{d}$
\end{theorem}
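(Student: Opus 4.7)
The plan is to exploit the block-independence structure forced by the partition. Since $\sigma^U_{jk} = \sigma^V_{jk} = 0$ whenever $j \in \mathcal{C}_\ell$ and $k \in \mathcal{C}_{\ell'}$ with $\ell \neq \ell'$, and both $U, V$ are Gaussian, the subvectors $\{U_{\mathcal{C}_\ell}\}_{\ell=1}^{\discon}$ are mutually independent, and similarly for $\{V_{\mathcal{C}_\ell}\}$. Moreover, the set $\mathcal{D} := \{(j,k): \sigma^U_{jk} \neq \sigma^V_{jk}\}$, of cardinality $\zerodiff$, is contained in $\bigcup_\ell (\mathcal{C}_\ell \times \mathcal{C}_\ell)$, so every differing entry is confined to a single block. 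This is the structural feature from which the $1/\discon$ improvement over Theorem~\ref{thm:ccb_max} must come.

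First I would approximate $\bbone\{\maxnorm{\cdot} > t\}$ by a soft-max proxy $F_\beta$ with parameter $\beta \asymp \log d$ so that the smoothing error on the tail probability is negligible on the range $t \in [0,C_0\sqrt{\log d}]$, where $\mathbb{P}(\maxnorm{V}>t)$ is at worst inverse-polynomial in $d$. I then apply the Slepian interpolation $W(s) = \sqrt{s}\,U + \sqrt{1-s}\,V$ together with Gaussian integration-by-parts to obtain
\begin{equation*}
\mathbb{E}[F_\beta(V)] - \mathbb{E}[F_\beta(U)] = \tfrac{1}{2}\int_0^1 \sum_{(j,k)\in \mathcal{D}} (\sigma^V_{jk} - \sigma^U_{jk})\, \mathbb{E}\bigl[\partial_{jk}^{2} F_\beta(W(s))\bigr]\, ds,
\end{equation*}
where only the $\zerodiff$ entries in $\mathcal{D}$ contribute, and each coefficient is bounded by $2\sigma_0$. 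Crucially, $W(s)$ inherits the same cross-block zero covariances ($\sqrt{s}\cdot 0 + \sqrt{1-s}\cdot 0 = 0$), so its subvectors $\{W(s)_{\mathcal{C}_\ell}\}$ remain jointly independent throughout the interpolation.

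The hard part, and the genuinely new ingredient relative to Theorem~\ref{thm:ccb_max}, is to establish an anti-concentration estimate for the soft-max derivatives of the form
\begin{equation*}
\sup_{s \in [0,1]}\;\max_{(j,k)\in \mathcal{D}}\, \bigl|\mathbb{E}[\partial_{jk}^{2} F_\beta(W(s))]\bigr| \;\lesssim\; \frac{\log d}{\discon}\, \mathbb{P}(\maxnorm{V}>t).
\end{equation*}
The $\log d$ is the standard smoothing penalty. The decisive factor $1/\discon$ expresses a localization phenomenon: writing $\partial_{jk}^{2} F_\beta$ in terms of the soft-max weights $\pi_j,\pi_k$, the quantity $\pi_j\pi_k$ is only non-negligible on the event that the signed argmax of $W(s)$ lies in the unique block $\mathcal{C}_\ell$ containing $j$ and $k$. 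By the block-independence of $W(s)$, and because the assumption $|\sigma^U_{jk}|, |\sigma^V_{jk}| \le \sigma_0 < 1$ together with unit variances prevents any one block from dominating the tail, the probability that the overall argmax falls into a given block is $O(1/\discon)$ times the total exceedance probability. I would prove this by conditioning on the $\discon - 1$ other block maxima and applying standard Gaussian anti-concentration inside $\mathcal{C}_\ell$ to redistribute tail mass uniformly across blocks.

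Combining the Slepian identity with the above derivative bound, summing the $\zerodiff$ contributions in $\mathcal{D}$, and absorbing the smoothing error gives
\begin{equation*}
|\mathbb{P}(\maxnorm{U} > t) - \mathbb{P}(\maxnorm{V} > t)| \;\lesssim\; \frac{\zerodiff \log d}{\discon}\, \mathbb{P}(\maxnorm{V} > t)
\end{equation*}
uniformly in $t \in [0, C_0\sqrt{\log d}]$. Dividing through by $\mathbb{P}(\maxnorm{V} > t)$ yields the stated Cramér-type ratio bound. As a consistency check, $\zerodiff = 0$ forces the two laws to agree and both sides vanish, while taking $\discon = 1$ recovers a within-block rate with no improvement.
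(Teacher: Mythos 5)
Your outer skeleton agrees with the paper's: smooth the indicator by the softmax proxy, Slepian-interpolate, apply Stein's identity, and observe that only the $\zerodiff$ entries where $\sigma^U_{jk}\ne\sigma^V_{jk}$ survive. The paper does exactly this. You have also correctly identified where the new work lies: bounding the second-order softmax derivative expectation with an extra factor of $1/\discon$ relative to the tail probability.

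However, the mechanism you propose for that factor does not hold. You claim ``the probability that the overall argmax falls into a given block is $O(1/\discon)$ times the total exceedance probability,'' and plan to prove it by conditioning on the other block maxima and redistributing tail mass uniformly across blocks. This is false in general: blocks can have arbitrary sizes, so a single large block can absorb nearly all of the argmax probability. The $1/\discon$ is not a statement about the argmax being equidistributed over blocks. What the paper actually exploits is much more local. It extracts from each of the $\discon$ other components a \emph{single representative coordinate} whose variance matches $\sigma_j^2$, forming a $\discon$-dimensional vector $G$ of i.i.d.\ Gaussians independent of $Z_j$. The crucial step is the lower bound $\PP{\maxnorm{V}>t}\ge\PP{\Gmax>t}$ (because $G$ is a sub-vector of $V$), together with the identity $\PP{\Gmax\le y}=(1-2\bar\Phi(y/\sigma_j))^{\discon}$. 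Conditioning on $(Z_j,G)$, the paper splits into the case $\Zmax>\Gmax$ (needs a uniform bound on the conditional density of $\Zmax$, Lemma \ref{lem:maximas_pdf_bound}) and the case $\Gmax=\Zmax$ (direct Mills-ratio calculation), and in both cases the $1/\discon$ emerges from the ratio $\bigl(1-2\bar\Phi(\tfrac{t+\epsilon}{\sigma_j})\bigr)^{\discon}\phi(\tfrac{t-\epsilon}{\sigma_j})\big/\bigl(1-(1-2\bar\Phi(\tfrac{t}{\sigma_j}))^{\discon}\bigr)=O(\sqrt{\log d}/\discon)$ established in Lemma \ref{lem:Lambda_bound}. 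This is a Gaussian-order-statistics estimate for i.i.d.\ representatives, not a partition of the argmax event over blocks. Without the explicit construction of $G$ and the lower bound on $\PP{\maxnorm{V}>t}$ via $\Gmax$, your ``redistribute tail mass uniformly'' step has no concrete instantiation and, as stated, is incorrect when block sizes are unequal.

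Two smaller issues. The assumption $|\sigma^U_{jk}|,|\sigma^V_{jk}|\le\sigma_0<1$ is not used to ``prevent a block from dominating''; it is used to keep conditional variances bounded away from zero so that the conditional density of $\Zmax$ given $(Z_j,G)$ is controlled (see Lemma \ref{lem:maximas_pdf_bound_v2}). And the bound you state uniformly over $s\in[0,1]$ is too strong: in the unit-variance version the paper's bound on $\mathrm{II}(s)$ carries a factor $\bigl(1-(s+(1-s)\sigma_0)^2\bigr)^{-1/2}$ which blows up as $s\to1$, and the argument goes through only because this singularity is integrable over $[0,1]$. You would need to carry the $s$-dependence through the Slepian integral rather than take a supremum.
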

%\jlmargin{}{How to define a joint connectivity?}
When applying the above result to our multiple degree testing problem, specifically the covariance of maximum test statistics for pairs of non-hub nodes, we can show $\zerodiff =O(1)$. In Theorem \ref{thm:ccb_sparse_unitvar}, the quantity $\discon$ represents the number of connected subgraphs shared by the coviarance matrix networks of $U$ and $V$. 
We refer to Theorem \ref{thm:ccb_sparse} in the appendix for a generalized definition of $\discon$ to strengthen the results in \eqref{eq:ccb_sparse_unitvar}.
The  $\discon$ in the denominator of the right hand side of Cram\'er-type comparison bound in \eqref{eq:ccb_sparse_unitvar} is necessary: it is possible that even if $\Delta_0$ is small, when $\discon$ is large, the Cam\'er-type  Gaussian comparison bound is not converging to zero. For example, consider Gaussian vectors with unit variances
$U=(X_1, X_2, Z, \ldots,Z) \in \RR^d$, 
$V = (Y_1, Y_2, Z, \ldots,Z)\in \RR^d$, where  $\text{corr}(X_1, X_2) = 0.9, \text{corr}(Y_1, Y_2) = 0$ and $(X_1, X_2) \independent Z$, $(Y_1, Y_2) \independent Z$. For this case, the Cam\'er-type Gaussian comparison bound
\[
\sup_{0\le t \le C_0 \sqrt{\log d}} \left|\frac{\mathbb{P}(\maxnorm{U} > t)}{\mathbb{P}(\maxnorm{V} > t)}-1\right| =  \sup_{0\le t \le C_0 \sqrt{\log d}} \left|\frac{\mathbb{P}(\max\{|X_1|,|X_2|, |Z|\} > t)}{\mathbb{P}(\max\{|Y_1|,|Y_2|,|Z| \} > t)}-1\right|
\]
is not converging to zero as $d$ goes to infinity even if the corresponding $\Delta_0$ is 1 but $\discon=2$.
%\jlmargin{}{Why it depends on independency: if U and V differs in one entry but they are very dependent.}

%\fbox{phase transition}
%Compared with Theorem \ref{thm:ccb_max}, the above theorem provides a sharper comparison bound for large $\discon$ and small $\zerodiff$. The two theorems together describe a interesting phase transition phenomenon, i.e., the dependence on $\bSigma^U-\bSigma^V$ of the Cram\'{e}r-type comparison bound exhibits a difference behavior in the regime of large $\discon$ and small $\zerodiff$ versus the regime of small $\maxdiff$.
% \lzmargin{phase transition}{contrast it with thm 3.1}
% \lzmargin{}{a sharper bound when $p$ is large}

\rev{The proof of Theorem \ref{thm:ccb_sparse_unitvar} can be found in Appendix \ref{app:pf:thm:ccb_sparse_unitvar}. Our main technical innovation is to establish a new type of anti-concentration bound for ``derivatives" of Gaussian maxima. Different from the anti-concentration inequalities in  \cite{chernozhukov2014anti} bounding the maxima of Slepian interpolation $\EEE[\Indrbr{t-\epsilon \le ||W(s)||_{\infty}\le t +\epsilon} ]$, we are able to further bound its derivatives:
\begin{equation}\label{eq:varphi_anti}
\EEE[|\partial_j \partial_k \varphi(W(s))|\cdot \Indrbr{t-\epsilon \le ||W(s)||_{\infty}\le t +\epsilon} ] \lesssim\frac{ \PP{\maxnorm{V}> t} (\log d)^2}{\epsilon \beta \discon},
\end{equation}
where $\varphi$ is the some smooth approximation of the maxima with the parameter $\beta$ measuring the level of approximation.
 The above anti-concentration bound is non-uniform and has only a logarithm dependence on the dimension $d$. It provides a relatively sharp characterization when $t$ is large and the graph is not highly connected (i.e., $\discon$ is large). }

\section{Discovering hub responses in multitask regression}
\label{sec:bipartite_selection}
The theoretical results presented in Section \ref{sec:cramer_theory} will be the cornerstone for establishing FDR control of the multiple testing problem described in Section \ref{sec:method}. As seen previously, the testing problem \eqref{eq:problem_setup} is set up in a quite general way: $\bTheta$ is a weight matrix, and we would like to select rows whose $\ell_0$ norm exceeds some threshold. 
% Note that the StarTrek procedure proposed in Algorithm \ref{algo:startrek} and the Cram\'{e}r-type comparison bounds for Gaussian maxima are also applicable to other interesting examples in addition to the Gaussian graphical models. 
This section considers the specific application to multitask/multiple response regression, which turns out to be less involved. We take advantage of it and demonstrate how to utilize the probabilistic tools in Section \ref{sec:cramer_theory}. After that, the theoretical results on FDR control for the Gaussian graphical models are presented and discussed in Section \ref{sec:hub_selection}.
% in the sense that $\bTheta$ can be some weight matrix. In the multi-task regression setting, we might be interested in 
% \lzmargin{mention bipartite, for simplicity of illustration}{
% multi-task regression; cite javamard paper}

In multitask regression problem, multiple response variables are regressed on a common set of predictors. We can view this example as a bipartite graph $\cG = (\cV_1, \cV_2, \cE), |\cV_1|=d_1, |\cV_2|=d_2$, where $\cV_1$ contains the response variables and $\cV_2$ represents the common set of predictors. Each entry of the weight matrix $\bTheta$ indicates whether a given predictor is non-null or not for a given response variable. In the case of parametric model, $\bTheta \in \RR^{d_1 \times d_2}$ corresponds to the parameter matrix. One might be interested in identifying shared sparsity patterns across different response variables. It can be solved by selecting a set of predictors being non-null for all response variables \cite{obozinski2006multi,dai2016knockoff}. This section problem is column-wise in the sense that we want to select columns of $\bTheta$, denoted by $\bTheta_{\cdot j}$, such that $\norm{\bTheta_{\cdot j}}_0 = d_1$. It is also interesting to consider a row-wise selection problem formalized in \eqref{eq:problem_setup}. Under the multitask regression setup, we would like to select response variables with at least a certain amount of non-null predictors. We will call this type of response variables hub responses throughout the section. This has practical applications in real-world problems such as the gene-disease network.

Consider the multitask regression problem with linear models, we have $n$ i.i.d. pairs of the response vector and the predictor vector, denoted by $(\bY_1,\bX_1), (\bY_2,\bX_2), \dots, (\bY_n,\bX_n)$, where
$\bY_i \in \RR^{d_1},\bX_i \in \RR^{d_2}$ satisfy the following relationship,
\begin{eqnarray}\label{eq:mul_linear_model}
\bY_i = \bTheta \bX_i + \bE_i,  \text{ where } \bE_i\sim \cN(0,\Db_{d_1\times d_1}  ) \text{ and }\bX_i \independent \bE_i,
\end{eqnarray}
where $\bTheta  \in \RR^{d_1 \times  d_2}$ is the parameter matrix and $\Db$ is a $d_1$ by $d_1$ diagonal matrix whose diagonal elements $\sigma_j^2$ is the noise variance for response variable $\bY^{(j)}$. Let $\bX$ be the design matrix with
rows $\bX_1^{\top},\dots, \bX_n^{\top}$, shared by different response variables, and assume the noise variables are independent conditional on the design matrix $\bX$.
% As for matrix form, we let $\bY = (\bY_1,\cdots,\bY_n)^{\top}$ and denote by $\bX$ the design matrix with
% rows $\bX_1^{\top},\dots, \bX_n^{\top}$, t
% hen we have $\bY \in \RR^{n \times d_1},\bX \in   \RR^{n \times d_2}$
% \begin{eqnarray}\label{eq:NoisyModel}
% \bY = \bX\bTheta+ \bW, \quad \text{where } \text{vec}({\bW})\sim
% \normal_{n\times d_1}(0,\Ib_{n\times n}\bigotimes \sigma^2 \Ib_{d_1\times d_2})\, .
% \end{eqnarray}
Let $s = \max_{j\in [d_1]}\norm{\bTheta_{j}}_0$ be the sparsity level of the parameter matrix $\bTheta$, we want to select columns of the parameter matrix which has at least $k_{\tau}$ nonzero entries, i.e., select nodes with large degree among $[d_1]$ in the bipartite graph $\cG = (\cV_1, \cV_2, \cE)$.

As mentioned in Section \ref{sec:method}, some estimator of the parameter matrix is needed to conduct hypothesis testing. Debiased Lasso is widely used for parameter estimation and statistical inference in high dimensional linear models \cite{javanmard2014confidence,javanmard2014hypothesis}. For each response variable $\bY^{(j)}, j \in [d_1]$, we compute the debiased Lasso estimator, denoted by $\tdTheta_{j}$ as 
\begin{align}\label{eq:dlasso}
\tdTheta_j = \hat \bTheta_j + \frac{1}{n}\, \Mb \bX^{\top}(\bY^{(j)} - \bX \hat  \bTheta_j), 
\text{ where  }\hat\bTheta_j= \arg\min_{\beta\in\RR^{d_2}}
\Big\{\frac{1}{2n}\|\bY^{(j)}-\bX\beta
\|^2_2+\lambda\|\beta\|_1\Big\}\, .  %\label{eq:lasso}
\end{align}
% where 
% \begin{eqnarray}
% \hat \beta^u = \hat \beta^n(\lambda) + \frac{1}{n}\, M \bX^{\top}(\bY^{(j)} - \bX \hat \beta^n(\lambda)) \label{eq:dlasso}
% \end{eqnarray}
Note the above $\Mb$ is defined as $\Mb = (m_1,\dots,m_{d_2})^{\top}$ where 
\begin{align}
\label{eq:optimization}
m_i = &\argmin_m  m^{\top} \hSigma m, \quad \text{s.t. }  \|\hSigma m - e_i \|_{\infty} \le \mu\,,
\end{align}
% \end{eqnarray}
and here $\hSigma = (\bX^{\top} \bX)/n$. 
%The above $\hat \beta^u$ can be used as the estimate of the $j$-th row of the parameter matrix, i.e., $\hat \bTheta_{j} =\hat \beta^u$. 

Then the debiased estimator of the parameter matrix, defined by $\tdTheta := (\tdTheta_{1}, \cdots, \tdTheta_{d_1})^\top$, will be used the input $\{\tTheta_e\}_{e \in \cV_1 \times \cV_2}$ of Algorithm \ref{algo:startrek}. In addition, we also need to compute the quantile of the maximum statistics. There exist many work studying the asymptotic distribution of the debiased Lasso estimator. Among them, the results in \cite{javanmard2014confidence} (when translated into our multitask regression setup) imply, for each response variable $\bY^{(j)}, j \in [d_1]$,
\begin{equation} \label{eq:dlasso_normal}
    \sqrt{n} (\tdTheta_j - \bTheta_j) = Z + \lerr, \quad Z |\bX \sim \cN(0,\sigma_j^2 M\hSigma M^{\top}),
\end{equation}
under proper assumptions. Additionally with a natural probabilistic model of the design matrix, the bias term can be showed to be $\maxnorm{\lerr} =O(\frac{s \log d_2}{\sqrt{n}}) $ with high probability. As discussed in \cite{javanmard2014confidence}, the asymptotic normality result can be used for deriving confidence intervals and statistical hypothesis tests. As the noise variance $\sigma_j$ is unknown, the scaled Lasso is used for its estimation \cite{javanmard2014confidence,sun2012scaled}, given by the following joint optimization problem,
\begin{equation}\label{eq:scaled_lasso}
    \{\hat\bTheta_j,\hat\sigma_j \} = \arg \min_{\beta \in \RR^{d_2}, \sigma > 0}\Big\{\frac{1}{2\sigma n}\|\bY^{(j)}-\bX\beta
    \|^2_2+ \frac{\sigma}{2} +
    \lambda\|\beta\|_1\Big\}.
\end{equation}
 Regarding our testing problem, intuitively we can use the quantile of the Gaussian maxima of $ \cN(0,\hat \sigma_j^2 M\hSigma M^{\top})$ to approximate the quantile of maximum statistic $T_{E} =\underset{(j,k)\in E}{\max}\sqrt{n} |\tdTheta_{jk}|$ for some given subset $E$. Specifically, let $Z_j\mid \bX, \bY^{(j)}\sim  \cN(0,\hat \sigma_j^2 M\hSigma M^{\top})$ where $Z_j \in \RR^{d_2}$ and consider the subset $E \subset \{j\}\times \cV_2$, we approximate the quantile of $T_E$ by the following
 
\begin{equation}\label{eq:dlasso_chat}
T^{\cN}_{E} := \underset{(j,k)\in E}{\max} \; | Z_{jk} |,\quad \hat{c} (\alpha,E) = \inf \left\{ t\in \RR : \PPP_{Z} \left( T^{\cN}_{E} \le t  \right) \ge 1-\alpha    \right\}.
\end{equation}
% denote the quantile estimate by $\hat c (\alpha, E)$. 
 Indeed, under proper scaling conditions, we can show that, i.e., as ${n,d \rightarrow \infty}$,
\begin{equation}\label{eq:dlasso_quantile_valid}
\sup_{\alpha \in (0,1)} 
    \left|\PPP
    \left( \max_{(j,k) \in E}  \sqrt{n} |\tdTheta_{jk} -{\bTheta}_{jk}|> \hat{c}(\alpha, E) 
    \right) - \alpha
    \right|\rightarrow 0.
\end{equation}
The above result is based on two ingredients: the asymptotic normality result and the control of the bias term $\lerr$. Below we list the required assumptions for those two ingredients, i.e., \eqref{eq:dlasso_normal} and $\maxnorm{\lerr} =O(\frac{s \log d_2}{\sqrt{n}})$. 
\begin{assumption}[Debiased Lasso with random designs]\label{asp:dlasso}
%  Below we list some conditions on the design matrix for de-baised Lasso and some additional scaling conditions for FDP control,
The following assumptions are from the ones of Theorems 7 and 8 in \cite{javanmard2014confidence}.
 \begin{itemize}
     \item Let $\bSigma = \EE{\bX_1 \bX_1^\top}\in\RR^{d_2\times d_2}$ 
    be such that $\sigma_{\min}(\bSigma) \ge C_{\min} > 0$, and
    $\sigma_{\max}(\bSigma) \le C_{\max} < \infty$,
    and $\max_{j\in [d_2]}\bSigma_{jj}\le 1$.
    Assume $\bX\bSigma^{-1/2}$ have independent subgaussian  rows, with
    zero mean and subgaussian norm $\|\bSigma^{-1/2} \bX_i\|_{\psi_2} = \kappa$, for some
    constant $ \kappa \in(0, \infty)$.
    \item  $\mu =a\sqrt{(\log d_2)/n}$, and $n\ge \max(\nu_0s\log (d_2/s), \nu_1\log d_2)$,
    $\nu_1 = \max(1600\kappa^4, a/4)$,
    and 
    $\lambda = \sigma\sqrt{(c^2\log
    d_2)/n}$.
    % \lzmargin{change p}{}
 \end{itemize}
 \end{assumption}
% Now we denote the quantile estimate of $T_{E}$ as $\hat{c} (\alpha,E) = \inf \left\{ t\in \RR : \PPP \left( \maxnorm{Z} \le t  \right) \ge 1-\alpha \right\}$, where $Z \sim \cN(0, )$, specifically we denote $\hat c (\alpha, E)$ for some given subset $E \in [d_2]$. 
% As $\sigma$ is unknown, we will derive a consistent estimator using the procedure in.

% The above assumptions are required \cite{javanmard2014confidence} for establishing asymptotic normality and controlling the bias term $\Delta$ at a certain rate. 
Remark that there may exist other ways of obtaining a consistent estimator of $\bTheta$ and sufficiently accurate quantile estimates under different assumptions. Since it is not the main focus of this paper, we will not elaborate on it. As mentioned before, the Kolmogorov type result in \eqref{eq:dlasso_quantile_valid} can be immediately applied to the global testing problem to guarantee FWER control. {However, it is not sufficient for FDR control of the multiple testing problem in this paper. And this is when the Cram\'{e}r-type comparison bound for Gaussian maxima established in Section \ref{sec:cramer_theory} play its role.} In addition, signal strength condition is needed.
% In the following, we will illustrate how the general techniques are exploited in our multiple testing problem for the multi-task regression model. 
% Before presenting our key FDR control result in Theorem \ref{thm:fdr_linear}, 
% {Add however here. We need stronger signal strength condition in FDR.} 
%{move however to after the Assumption 4.1} 
%  \lzmargin{only include the fisrt two, mention this assumption is needed for debiased lasso, use for quantile}{}
% \lzmargin{present the asymptotic normality result in debiased lasso, and talk about how to compute the quantile }{}
% \lzmargin{replace $A,B$ by $\theta$}{}
Recall that $\cH_0 = \{j\in [d_1]: \norm{\bTheta_{j}}_0 < k_\tau\}$ with $d_0 = |\cH_0|$, we consider the following rows of $\bTheta$,
\begin{equation}\label{eq:strong_Y_set}
    \cB:=\{j\in \cH_{0}^{c}:\forall k \in \text{supp}(\bTheta_j),|\bTheta_{jk}|>c\sqrt{{\log d_2}/{n}}\},
\end{equation}
and define the proportion of such rows as $\rho = |\cB|/d_1$.
In the context of multitask regression, $\rho$ measures the proportion of hub response variables whose non-null parameter coefficients all exceed certain thresholds, thus characterizes the overall signal strength. \rev{As mentioned at the beginning, the application of the StarTrek filter in this section is less involved than that in Gaussian graphical models. The major simplification comes from how the multitask regression problem with linear models gets set up: the response vector $\bY_i \in \RR^{d_1}$ follows $\bY_i = \bTheta \bX_i + \bE_i$, where
$\bE_i\sim \cN(0,\Db_{d_1\times d_1}  ) \text{ and }\bX_i \independent \bE_i$; thus those $d_1$ response variables are independent, conditional on the covariate $\bX_i$. Such conditional independence carries over to the statistics of testing each response (node) in our proposed method. In Gaussian graphical models, the test statistics for each node unavoidably have complicated dependence structures, due to the nature of this combinatorial selection problem and how the nodes are connected to each other. To this end, we will introduce some quantity (see \eqref{eq:dep_term_set}) to measure the dependence level of the graph and characterize how such dependence affects the validity of our StarTrek filter (see \eqref{eq:cond2} in Assumption \ref{asp:tradeoff_fdp}).} Below we present our result on FDP/FDR control under appropriate assumptions.
%\jlmargin{}{specify or standardize $\tilde \Theta^d$}
%$$
%    \frac{\log d_2}{\epsilon^2}\rbr{
%    \frac{ 1}{d_0 \rho}+ \frac{s(\log d_2)^{2} }{n^{1/2}} + 
%     \frac{(\log d_2)^2}{(n \rho)^{1/5}} + 
%    + \frac{1}{\rho d_2}}
%$$ 
\begin{theorem}[FDP/FDR control]\label{thm:fdr_linear}
Under Assumption \ref{asp:dlasso} and the scaling condition  
    $
%    \frac{ \log d_2 }{\min{(d_0, d_2)} \rho}
\frac{d_2\log d_2 + d_0}{d_0 d_2 \rho} +   \frac{s \log^2 d_2 }{n^{1/2}} + 
     \frac{\log^2 d_2}{(n \rho)^{1/5}} 
%+ \frac{1}{n^{1/5}s \log d_2 \rho}     
%    + \frac{s^{1/4}(\log d_2)^{15/4}}{n^{1/5}}
     = o(1)$, 
%    for any $\delta>0$,
% Under similar scaling condition on $(n,d)$ as in Proposition \ref{prop:approx_cramer_gmb} $(\log ed)^3 (\log (ed+n))^{56/3}/n + s   (\log d)^{5/2}/\sqrt{n}= o(1)$ and the condition listed below
%, where $b_{0}$ is some constant which does not depend on $n$ or $d$, 
if we implement the StarTrek procedure in Algorithm \ref{algo:startrek} with $\bTheta$ estimated by \eqref{eq:dlasso} and the quantiles approximated by \eqref{eq:dlasso_chat},  as $(n,d_1, d_2)\rightarrow \infty$, we have
\begin{equation}\label{eq:fdp_linear}
    %\PP{ \mathrm{FDP} \le q\frac{d_{0}}{d_1}} ~~~ \text{in probability}, 
     \mathrm{FDP} \le q\frac{d_0}{d_1} + \smallop \quad \text{and}
    \quad   \lim_{(n,d_1, d_2)\rightarrow \infty}\mathrm{FDR} \le q\frac{d_{0}}{d_1}.
\end{equation}
% Under a weaker condition, we have the FDR control result
% \begin{equation}\label{eq:fdp_linear}
    %  \lim_{(n,d)\rightarrow \infty}\mathrm{FDR} \le q\frac{d_{0}}{d_1}.
% \end{equation}
\end{theorem}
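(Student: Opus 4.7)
I would reduce the FDP bound to an empirical-process argument on the scores $\alpha_j$ and then verify (a) a multiplicative marginal bound on $\mathbb{P}(\alpha_j\le t)$ for $j\in\cH_0$, and (b) uniform concentration of $V(t)=\sum_{j\in\cH_0}\bbone\{\alpha_j\le t\}$. By the definition of $j_{\max}$, the threshold $t^\star=\alpha_{(j_{\max})}$ satisfies $t^\star d_1\le q\,R(t^\star)$ with $R(t)=\sum_{j\in[d_1]}\bbone\{\alpha_j\le t\}$, hence
\[
\text{FDP}=\frac{V(t^\star)}{R(t^\star)\vee 1}\le q\cdot\frac{d_0}{d_1}\cdot\frac{V(t^\star)}{d_0\,t^\star}.
\]
It therefore suffices to show $\sup_{t\ge t_0}V(t)/(d_0 t)\le 1+\smallop$ for a data-dependent cutoff $t_0$. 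A standard signal analysis via \eqref{eq:strong_Y_set} and the debiased-Lasso bias bound $\maxnorm{\lerr}=O(s\log d_2/\sqrt n)$ forces $p_{j,s}:=\hat c^{-1}(\sqrt n|\tdTheta_{j,(s)}|,E_j^{(s)})=\smallop(1/d_1)$ for every $j\in\cB$ and every $s\le k_\tau$, giving $j_{\max}\ge d_1\rho(1+\smallop)$ and $t_0\gtrsim q\rho$ with high probability; the scaling $\log d_2/(d_0\rho)=o(1)$ then guarantees $d_0 t_0\to\infty$, which is the lower-cutoff condition that the empirical process needs.

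The core step is the marginal ratio $\mathbb{P}(\alpha_j\le t)\le t(1+o(1))$ uniformly in $t\in[t_0,1]$ for each $j\in\cH_0$. Let $s_j^\star=\|\bTheta_j\|_0<k_\tau$. On the event where the top $s_j^\star$ ranks of $|\tdTheta_{j,\cdot}|$ are precisely the signal coordinates (a consequence of the bias bound when in-support entries of $\bTheta_j$ dominate $\sqrt{\log d_2/n}$---otherwise the excess signals fall below the noise level and may be absorbed into the null analysis), the set $E_j^{(s_j^\star+1)}$ is purely null and the identity $\sqrt n|\tdTheta_{j,(s_j^\star+1)}|=T_{E_j^{(s_j^\star+1)}}$ holds. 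Thus $p_{j,s_j^\star+1}$ is the bootstrap self-tail probability of $T_{E_j^{(s_j^\star+1)}}$. Combining the normality statement \eqref{eq:dlasso_normal} with Theorem~\ref{thm:ccb_max} applied to a $\maxdiff$ of order $(\log d_2)^{O(1)}n^{-1/2}$ between the true and Gaussian-surrogate covariances upgrades the Kolmogorov guarantee \eqref{eq:dlasso_quantile_valid} into the \emph{multiplicative} bound $\mathbb{P}(p_{j,s_j^\star+1}\le t)/t\to 1$ uniformly in $t\ge t_0$. The remaining $p_{j,s}$ are either stochastically dominated by the same (pure-noise $s$) or are $\smallop(t_0)$ (signal-contaminated $s$), so $\alpha_j=\max_s p_{j,s}$ inherits the ratio bound.

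The final ingredient is the structural observation that in \eqref{eq:mul_linear_model} the noise covariance $\Db$ is diagonal, so conditional on the common design $\bX$ the responses $\bY^{(j)}$ are independent across $j\in[d_1]$; consequently the triples $(\tdTheta_j,\hat\sigma_j,\alpha_j)$ are conditionally independent. Hence $V(t)\mid\bX$ is a sum of independent Bernoullis of mean $d_0 t(1+o(1))$ by the preceding step, and a Chebyshev bound over an $O(d_1)$-grid of candidate thresholds yields $\sup_{t\ge t_0}V(t)/(d_0 t)\le 1+\smallop$; combined with the reduction this gives $\text{FDP}\le q d_0/d_1+\smallop$, and dominated convergence ($\text{FDP}\le 1$) promotes this to $\limsup\text{FDR}\le q d_0/d_1$. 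The main obstacle is the marginal step: upgrading the Kolmogorov-distance validity of the multiplier bootstrap to a multiplicative ratio that is uniform down to scales $t\sim q\rho$, which is exactly where Theorem~\ref{thm:ccb_max} is indispensable. The data-driven ordering in $E_j^{(s)}$ and the max over $s$ add technical bookkeeping but no genuinely new probabilistic challenge; the $\zerodiff$-type bound (Theorem~\ref{thm:ccb_sparse_unitvar}) is not required in this bipartite setup thanks to cross-$j$ conditional independence, but will be essential in the Gaussian-graphical-model analysis of Section~\ref{sec:hub_selection}.
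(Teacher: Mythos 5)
Your plan matches the paper's high-level strategy closely: the BHq reduction $\mathrm{FDP}\le q\,(d_0/d_1)\cdot V(t^\star)/(d_0 t^\star)$, the observation that the diagonal noise covariance $\Db$ makes $(\tdTheta_j,\hat\sigma_j,\alpha_j)$ conditionally independent across $j$ given $\bX$ so the cross-covariance term vanishes (the paper states this explicitly and drops $\mathrm{III}_3=0$), the use of Theorem~\ref{thm:ccb_max} to upgrade Kolmogorov validity to a multiplicative ratio, and Chebyshev over a grid plus uniform integrability for the FDR statement. Two points in your argument need work.

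First, the marginal ratio bound for $j\in\cH_0$. You condition on the event that the top $s_j^\star=\jdeg$ ranks of $|\tdTheta_{j,\cdot}|$ are exactly the support of $\bTheta_j$, and acknowledge this may fail when the null row's nonzero entries are small. But there is \emph{no} signal-strength condition on rows in $\cH_0$ (the set $\cB$ in \eqref{eq:strong_Y_set} only constrains $j\in\cH_0^c$), so this failure is generic, not exceptional, and your phrase ``may be absorbed into the null analysis'' is not a proof. The paper sidesteps this entirely: via the equivalence \eqref{eq:equiv} and the monotonicity of bootstrap quantiles under set inclusion, $\{\alpha_j\le\alpha\}$ for $j\in\cH_0$ implies $\max_{e\in N_{0j}}\sqrt n|\tdTheta_e-\sTheta_e|\ge\hat c(\alpha,N_{0j})$ with $N_{0j}=\{(j,k):\bTheta_{jk}=0\}$ a deterministic, purely null edge set (see the derivation leading to \eqref{eq:monotone}); no claim about the ordering of signal versus noise coordinates is needed. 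Your argument should be replaced by this containment, since the ratio bound then follows directly from Lemma~\ref{lem:multitask_cramer_approx} applied to $N_{0j}$.

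Second, the grid for the concentration step. You invoke Chebyshev over an $O(d_1)$-grid of candidate thresholds. The paper instead constructs a grid of size $\lambda_d=O(\log d_2)$ by taking $t_m-t_{m-1}=h_d$ with $t_{\lambda_d}/h_d=O(\log d_2)$ and $h_d\,t_{\lambda_d}=o(1)$, and controls the ratio $\alpha_{m-1}/\alpha_m$ between consecutive points via the non-uniform anti-concentration bound from \citet{arun2018cram}. A naive $O(d_1)$-cardinality union bound would put an extra factor of order $d_1/\log d_2$ into the variance sum, which the stated scaling condition $\frac{d_2\log d_2+d_0}{d_0d_2\rho}+\frac{s\log^2 d_2}{n^{1/2}}+\frac{\log^2 d_2}{(n\rho)^{1/5}}=o(1)$ does not absorb. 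You would need to either adopt the paper's coarse adaptive grid or strengthen the assumptions; as written, the concentration step does not close.
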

The proof of Theorem \ref{thm:fdr_linear} can be found in Appendix \ref{pf:thm:fdr_linear}.
%\lzmargin{remove $\delta$}{}
Note that signal strength conditions which require some entries of parameter matrix $\bTheta$ have magnitudes exceeding $c\sqrt{\log d_2/n}$ are usually assumed in existing work studying FDR control problem for high dimensional models \cite{liu2013ggmfdr,liu2014phase,liu2014hypothesis,xia2015testing,xia2018multiple,javanmard2019false}. 

\section{Discovering hub nodes in Gaussian graphical models}\label{sec:hub_selection}
% \lzmargin{recall 2.2. 2.3; debiased step and quantile estiamtion and startrek filter for node selection }{}
This section focuses on the hub node selection problem on Gaussian graphical models \rev{where $\bX_1,\ldots, \bX_n  \stackrel{\mathrm{i.i.d.}}{\sim} N_d (0,\bSigma)$. Let the weight matrix be the precision matrix $\bTheta = \bSigma^{-1}$. Given some estimator $\hat{\bTheta}$ (e.g., the graphical Lasso (GLasso) estimator \cite{friedman2008sparse} or the CLIME estimator \cite{cai2011constrained}),
% which will have the same $\ell_{0}$ elementwise norm as the adjacency matrix $\bTheta$. 
we consider the following one-step estimator $\{\dTheta_{e}\}_{e\in \cV \times \cV}$:
\begin{equation}
\label{eq:ggm_tdTheta}
\dTheta_{jk} := \hat{\bTheta}_{jk} - \frac{\hat{\bTheta}_{j}^{\top} \left( \hat{\bSigma} \hat{\bTheta}_k - \eb_k\right)}{\hat{\bTheta}_j^{\top} \hat{\bSigma}_j},
\quad
\tdTheta_{jk}:={\dTheta_{jk}}/{\sqrt{\dTheta_{jj}\dTheta_{kk}}}.
\end{equation}
where $\eb_k$ denotes the $k$th canonical basis in $\RR^d$.
then use its standardized version $\{\tdTheta_{e}\}_{e\in \cV \times \cV}$ as the input of Algorithm \ref{algo:startrek}.
}
% Recall in Section \ref{sec:method}, we first compute the one-step estimator $\{\dTheta_{e}\}_{e\in \cV \times \cV}$ in \eqref{eq:one_step} then take its standardized version $\{\tdTheta_{e}\}_{e\in \cV \times \cV}$ as the input of Algorithm \ref{algo:startrek} i.e.,
% \begin{equation}
% \label{eq:ggm_tdTheta}
% \dTheta_{jk} := \hat{\bTheta}_{jk} - \frac{\hat{\bTheta}_{j}^{\top} \left( \hat{\bSigma} \hat{\bTheta}_k - \eb_k\right)}{\hat{\bTheta}_j^{\top} \hat{\bSigma}_j},
% \quad
% \tdTheta_{jk}:={\dTheta_{jk}}/{\sqrt{\dTheta_{jj}\dTheta_{kk}}}.
% \end{equation}
Our StarTrek filter selects nodes with large degrees based on the maximum statistics $T_{E}= \underset{(j,k)\in E}{\max} \; \sqrt{n} | \tdTheta_{jk}|$ over certain subset $E$. The quantiles are approximated using the Gaussian multiplier bootstrap \cite{chernozhukov2013gaussian}: given the Gaussian multipliers $\xi_i \stackrel{\mathrm{i.i.d.}}{\sim} N(0,1)$, we compute
\begin{equation}\label{eq:ggm_chat}
\hat{c} (\alpha,E) = \inf \left\{ t\in \RR : \PPP_\xi \left( T^{\cB}_{E} \le t  \right) \ge 1-\alpha    \right\},
\end{equation}
where $T^{\cB}_{E} := \underset{(j,k)\in E}{\max} \; \frac{1}{\sqrt{n~ {\hat{\bTheta}_{jj}\hat{\bTheta}_{kk}}}}   \big|\sum_{i=1}^n  {\hat{\bTheta}^{\top}_j \left( \bX_i \bX_i^{\top} \hat{\bTheta}_k - \eb_k  \right)} \xi_i \big|$.
% \begin{equation}\label{eq:gmb_quantile}
% T^{\cB}_{E} := \underset{(j,k)\in E}{\max} \; \frac{1}{\sqrt{n~ {\hat{\bTheta}_{jj}\hat{\bTheta}_{kk}}}}   \bigg|\sum_{i=1}^n  {\hat{\bTheta}^{\top}_j \left( \bX_i \bX_i^{\top} \hat{\bTheta}_k - \eb_k  \right)} \xi_i \bigg|,
% \end{equation}
% and obtain
% \begin{equation}\label{eq:ggm_chat}
% \hat{c} (\alpha,E) = \inf \left\{ t\in \RR : \PPP_\xi \left( T^{\cB}_{E} \le t  \right) \ge 1-\alpha    \right\}.
% \end{equation}

 \cite{chernozhukov2013gaussian} shows that the above quantile approximation is accurate enough for FWER control in modern high dimensional simultaneous testing problems. Their results are based on the control of the non-asymptotic bounds in a Kolmogorov distance sense. \cite{lu2017adaptive} also takes advantage of this result to test single hypothesis of graph properties or derive confidence bounds on graph invariants.

However, in order to conduct combinatorial variable selection with FDR control guarantees, we need more refined studies about the accuracy of the quantile approximation. This is due to the ratio nature of the definition of FDR, as explained in Section \ref{sec:startrek}. Compared with the results in \cite{chernozhukov2013gaussian}, we provide a Cram\'er-type control on the approximation errors of the Gaussian multiplier bootstrap procedure. This is built on the probabilistic tools in Section \ref{sec:cramer_theory}, in particular, the Cram\'er-type Gaussian comparison bound with max norm difference in Theorem \ref{thm:ccb_max}.
%\lzmargin{recall the method, recall the multiple testing and how to estimate the quantile}{}
% \lzmargin{add some plots for connectivity and $S$}{}
% In this section, we will study the FDR control problem for the undirected Gaussian graphical models, where $\cG = (\cV, \cE)$ with $|\cV|=d$, and the precision matrix $\bTheta$. 
%The previous section has provided a nice illustration of the usage of the probabilistic tools in Section \ref{sec:cramer_theory} and relevant techniques for multiple testing problem.
% \lzmargin{to show FDR, we need cramer type bound due to the ratio form, emphasize ratio of FDR; comparison bound with max norm is for what; l0 norm is for what; characterize dependency via the difference between dependent case and independent case via sparse pattern, this is why we use l0 norm}{}
Due to the dependence structure behind the hub selection problem in Graphical models, we also have to utilize Theorem \ref{thm:ccb_sparse_unitvar}. In a bit more detail, computing the maximum test statistic for testing node actually involves the whole graph, resulting complicated dependence among the test statistics. The non-differentiability of the maximum function makes it very difficult to track this dependence. Also note that, this type of difficulty can not be easily circumvented by alternative methods, due to the discrete nature of the combinatorial inference problem. However, we figure out that the Cram\'er-type Gaussian comparison bound with $\ell_0$ norm difference plays an important role in handling this challenge.

In general, the sparsity/density of the graph is closed related to the dependence level of multiple testing problem on graphical models. For example, \cite{liu2013ggmfdr,xia2015testing,xia2018multiple} make certain assumptions on the sparsity level and control the dependence of test statistics when testing multiple hypotheses on graphical models/networks. For the hub node selection problem in this paper, a new quantity is introduced, and we will explain why it is suitable. % \lzmargin{due to the dependence structure we have to characterize the sparisity/density, i.e., S, how node tests interact }{cite Weidong Liu Ying Xia}
%  Now we will focus on a more complicated model since there exist complicated dependence structures between test statistic of different nodes. And the discrete nature of the combinatorial inference problem makes it even more challenging. Following the same notations, l
Recall that we define the set of non-hub response variables in Section \ref{sec:bipartite_selection}. Similarly, the set of non-hub nodes is denoted by $\cH_0 = \{j\in [d]: \norm{\bTheta_{j}}_0 < k_\tau\}$ with $d_{0}=|\cH_{0}|$. Now we consider the following set, 
% \begin{equation}\label{eq:strong_hub_set}
%     \cB:=\{j\in \cH_{0}^{c}:\forall k \in \text{supp}(\bTheta_j),|\bTheta_{jk}|>c\sqrt{{\log d}/{n}}\}.
% \end{equation}
% Signal strength condition is nearly necessary for controlling false discovery proportion, the signal strength threshold $\sqrt{\log d/n}$ is of the same rate as previous work on FDP control for edge testing problem \cite{liu2013ggmfdr}.
% \lzmargin{change the condition on $\cB$ in a ratio form}{connection with Weiding Liu's work}
% where $\cH_{0}^{c}$ is the set of ``big hubs'' and $E_{j}^{\ast}=\{k:\bTheta_{jk}\neq 0\}$ for $j\in \cH_{0}^{c}$. 
\begin{equation}\label{eq:dep_term_set}
%\nonumber
S = \{(j_1,j_2,k_1,k_2): j_1,j_2\in \cH_0, j_1\ne j_2,k_1 \ne k_2 , \bTheta_{j_1 j_2} = \bTheta_{j_1 k_1} = \bTheta_{j_2 k_2} = 0, \bTheta_{j_1 k_2} \ne 0, \bTheta_{j_2 k_1}\ne 0\}.
\end{equation}
%\begin{figure}[htbp]
%    \centering
%    \includegraphics[width = 0.3\linewidth]{S_demo.pdf}
%    \caption{Caption}
%    \label{fig:my_label}
%\end{figure}
% As mentioned before, in our multiple testing problem, each test of a single node will involve computing a statistic based on the whole graph, hence results in complicated dependence structures between test statistic/p-values of different nodes.
% \begin{figure}[htbp]
%     \centering
%     \includegraphics[width = 0.65\linewidth]{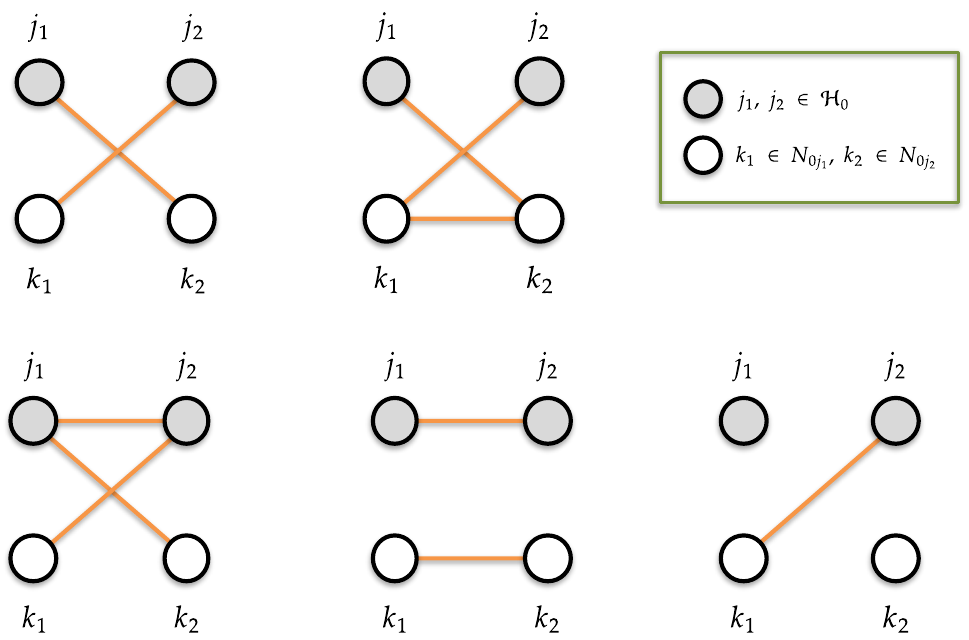}
%     \caption{A graphical demonstration of the definition of $S$ via some examples of a $4$-vertex graph}
%     \label{fig:demo1}
% \end{figure}
%% 10 15 24 51
\begin{figure}[htbp]
    \centering
    \includegraphics[width = 0.75\linewidth]{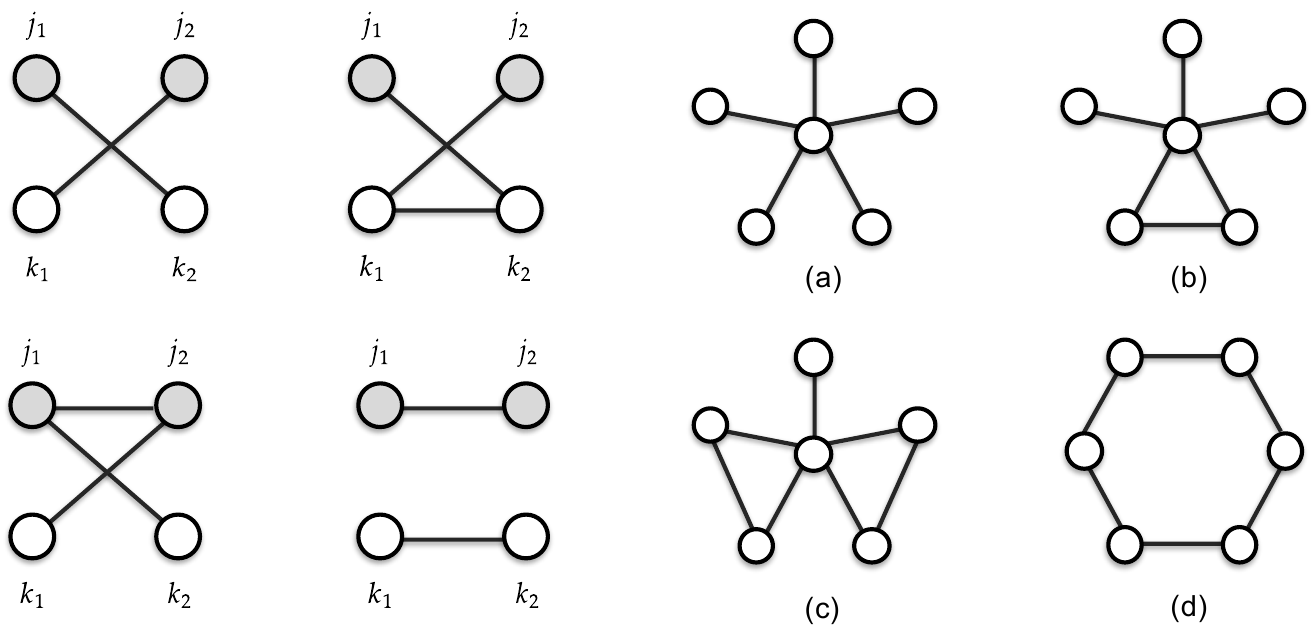}
    \caption{Left panel: a graphical demonstration of the definition of $S$ via four examples of a $4$-vertex graph; Right panel: four different graph patterns with $6$ vertices. Calculating $|S|$ yields $10,15,24,51$ for (a),(b),(c),(d) respectively.}
    \label{fig:demo}
\end{figure}
Remark that in the above definition, $k_1$ can be the same as $j_2$ and $k_2$ can be the same as $j_1$. 
% Then $(j_1,j_2,k_1,k_2)$ form a graph with two or three vertices. As mentioned before, the dependence among test statistic (of non-hub nodes) will be disturbing factors for detecting true hub nodes. In the above definition, each time we look at a pair of two non-hub nodes $j_1,j_2$ which are not connected by an edge. Due to the upper bound on the degree, they only have a limited number of neighbours. Therefore, the number of pairs of $k_1,k_2$ such that $\bTheta_{j_1k_1}\bTheta_{j_1k_2}\ne 0$ or $\bTheta_{j_2k_1}\bTheta_{j_2k_2}\ne 0 $ is also small, To this end, we look at $(k_1k_2)$ such that $\bTheta_{j_1k_1} = \bTheta_{j_2k_2}=0 $ but $\bTheta_{j_1k_2}, \bTheta_{j_1k_2}\ne0 $. Based on this intuition, we look at all pairs of non-hub nodes and examining the edges in subgraph with nodes $(j_1,j_2,k_1,k_2)$. 
% We define the set $S$ as \eqref{eq:dep_term_set} and use $|S|$ as our measure of the dependence level of the graph. 
If there exists a large number of nodes which are neither connected to $j_1$ nor $j_2$, we then do not need to worry much about the dependence between the test statistics for non-hub nodes. Therefore, $|S|$ actually measures the dependence level via checking how a pair of non-hub nodes interact through other nodes. \cite{liu2013ggmfdr,cai2013two} also examine the connection structures in the $4$-vertex graph and control the dependence level by carefully bounding the number of the $4$-vertex graphs with different numbers of edges. 
% \begin{figure}[!htbp]
%     \centering
%     \includegraphics[width = 0.55\linewidth]{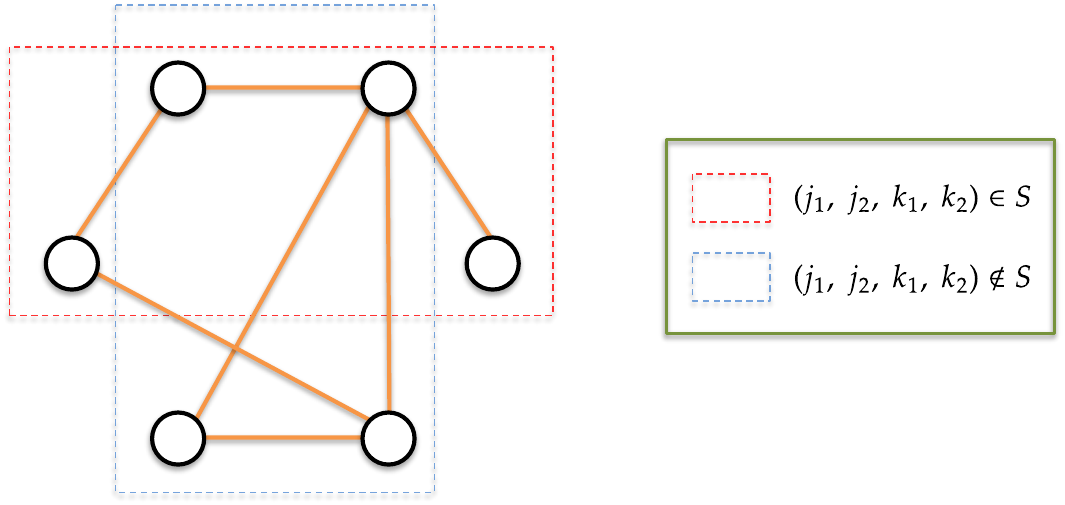}
%     \caption{A graphical demonstration of the definition of $S$ via a graph with $2$ hub nodes out of $6$ nodes.}
%     \label{fig:demo2}
% \end{figure}

We provide a graphical demonstration of $S$ and show how $|S|$ looks like in certain types of graph patterns via some simple examples. 
%For the given non-hub nodes $j_1,j_2$, we denote $N_{0j_1} = \{k_1: \bTheta_{j_1k_1}=0\}$, $N_{0j_2} = \{k_2: \bTheta_{j_2k_2}=0\}$. 
Though the definition of $S$ does not exclude the possibility of $(j_1,j_2,k_1,k_2)$ being a graph with $2$ or $3$ vertices, we only draw $4$-vertex graph in Figure \ref{fig:demo} for convenience. In the left panel of Figure \ref{fig:demo}, we consider four different cases of the $4$-vertex graph. The upper two belong to the set $S$, while the lower \rev{two} do not.
% Look at the graph with 6 vertices in Figure \ref{fig:demo2}, if we specify $k_\tau $ to be $3$, then there are two hubs nodes, one with $4$ edges, the other with $4$ edges. The four nodes in the rectangle with red dash border belongs to $S$ if we set the lower two nodes as $j_1,j_2$ and the upper two nodes as $k_2,k_1$. Regarding the four nodes in the rectangle with blue dash border, there are only two non-hub nodes. Both of them are connected to the other hub-node, thus the $4$-vertex graph does not belongs to $S$.
In the right panel, we consider four graphs which all have $6$ vertices. They have different graph patterns. For example, (a) clearly has a hub structure. All of the non-hub nodes are only connected to the hub node. While in (d), the edges are evenly distributed and each node are connected to its two nearest neighbours. For each graph, we count the value of $|S|$ and obtain $10,15,24,51$ respectively, which show a increasing trend of $|S|$. This sort of matches our intuition that it is relatively easier to discover hub nodes on graph (a) compared with graph (d). See more evidence in the empirical results of Section \ref{sec:simul}.
% \lzmargin{add some plots of for demonstrating S }{}
%We actually ses $|S|$ is directly related to the term $\zerodiff$ when applying the Cram\'er-type comparison bound in Theorem \ref{thm:ccb_sparse_unitvar} to prove FDR control.
% \lzmargin{
% how to explain S, connect to Ying Xia's dependence assumption;
% related to the dependency of test;
% how $j_1$ and $j_2$ interact through other nodes;
% S is related to $\Delta_0$; add some plots}{ss}
% ZZ
% \lzmargin{delete assmp 5.2; under general graphical model assumption and mild scaling condition -> FDR;
% assmp 5.1 -> FDP
% }{}

In addition to $|S|$, we also characterize the dependence level via the connectivity of the graph, specifically let $p$ be the number of connected components. And similarly as in Section \ref{sec:bipartite_selection}, we define $\rho$ to measure the signal strength, i.e., $\rho = |\cB|/d$, where $\cB:=\{j\in \cH_{0}^{c}:\forall k \in \text{supp}(\bTheta_j),|\bTheta_{jk}|>c\sqrt{{\log d}/{n}}\}$. In the following, we list our assumptions needed for FDR control.
\begin{assumption}\label{asp:tradeoff_fdp}
{Suppose that $\bTheta \in \cU(M,s, r_0)$ and the following conditions hold:}
%Suppose that $\bTheta \in \cU(M,s, r_0)$ and $|\cB|>0$, and the following two trade-off conditions are satisfied
% Below we list some tail conditions on $\bX$ and some trade-off conditions between the connectivity, dependence and signal strength of the graphical model associated with the Gaussian vector $\bX$.
%Suppose that $\bTheta \in \cU(M,s, r_0)$, and the following conditions hold
\begin{enumerate}[(i)]
    % \lzmargin{\item Suppose there exists a constant $0\le K_d < \infty$ such that $\max_{1\le j\le d}|\bX_j|_{\psi_1} \le K_d$.}{delete it}
    % \item 
    % Signal strength \& scaling \& edge sparsity:
    \item Signal strength and scaling condition.
    \begin{eqnarray}\label{eq:cond1}
    \frac{\log d }{\rho}\rbr{
    \frac{(\log d)^{19/6}}{n^{1/6}} + \frac{ (\log d)^{11/6}}{\rho^{1/3} n^{1/6}} + 
    % \frac{(\log d)^{{13}/{8}}}{n^{1/8}} + 
    \frac{{s(\log d)^{3}}}{{n}^{1/2}}
    } = o(1).
    \end{eqnarray}
    % \lzmargin{do not use sqrt}{}
    \item Dependency and connectivity condition.
%    \& connectivity \& star sparsity: 
    % for any $\delta>0$.
    \begin{eqnarray}\label{eq:cond2}
   \frac{\log d}{\rho d_0} + \frac{({\log d})^{2}|S|}{\rho d_0^2 p} = o(1).
    \end{eqnarray}
\end{enumerate}
\end{assumption}
In the above assumption, \eqref{eq:cond1} places conditions on the signal strength and scaling. The first and the second term come from the Cram\'er-type large deviation bounds in the high dimensional CLT setting \cite{arun2018cram} and the Cram\'er-type Gaussian comparison bound established in Theorem \ref{thm:ccb_max}. And the third term comes from the fact that the relevant test statistics arise as maxima of approximate averages instead of the exact averages and thus the approximation error needs to be controlled. See similar discussions about this in \cite{chernozhukov2013gaussian}. Remark that the signal strength condition is mild here, due to similar reasons as the discussion in Section \ref{sec:bipartite_selection}. Regarding \eqref{eq:cond2}, there is a trade-off between the dependence level and connectivity level of the topological structure. $|S|/d_0^2$ characterizes how the test statistics of non-hub nodes are correlated to each other in average. $p$ by definition describes the level of connectivity. Due to the condition \eqref{eq:cond2}, larger signal strength generally makes the hub selection problem easier. And when $|S|/d_0^2$ is small, the graph is allowed to be more connected. When there exist more sub-graphs, we allow higher correlations between the non-hub nodes. Note that the cardinality of $S$ is directly related to the $\ell_0$ norm covariance matrix difference term $\Delta_0$, and arises from the application of Theorem \ref{thm:ccb_sparse_unitvar}. 
%\lzmargin{scaling condition on the topological structures}{}
%\lzmargin{comment on assumption (i,
%explain the two terms in (5.2), cite upenn group
%comment on how the rates come from
%mention cramer type comparison bound, mention $\zerodiff$ and how it relates to $|S|$, mention $p$}{}
%\lzmargin{tradeoff between $|S|/d0^2$ and $p$; $|S|/d0^2$ characterizes in average how correlated (test statistics of null nodes); $\rho$ is the signal strength condition, larger $\rho$ makes the problem easier; when $|S|/d0^2$ is small, then the graph can be more connected.}{}
% \begin{assumption}[Trade-off assumption for FDR control]
% \label{asp:tradeoff_fdr}
% \lzmargin{scaling condition}{}
% \end{assumption}
In the following, we present our core theoretical result on FDP/FDR control for hub selection using the StarTrek filter on Gaussian graphical models.
%\jlmargin{}{specify what is $\tilde \Theta^d$}
\begin{theorem}[FDP/FDR control]\label{thm:fdr_hub}
Under Assumption \ref{asp:tradeoff_fdp},
% Under similar scaling condition on $(n,d)$ as in Proposition \ref{prop:approx_cramer_gmb} $(\log ed)^3 (\log (ed+n))^{56/3}/n + s   (\log d)^{5/2}/\sqrt{n}= o(1)$ and the condition listed below
%, where $b_{0}$ is some constant which does not depend on $n$ or $d$,
the StarTrek procedure in Algorithm \ref{algo:startrek} with \eqref{eq:ggm_tdTheta} as input and the quantiles approximated by \eqref{eq:ggm_chat} satisfies:
%for the StarTrek procedure described in Section \ref{sec:startrek}, we have the following error rate control result, i.e., 
% $\lim_{(n,d)\rightarrow \infty} \mathrm{FDR} \le  q$,
% \begin{equation}\label{eq:fdp_hub}
% \end{equation}
as $(n,d)\rightarrow \infty$,
\begin{equation}\label{eq:fdp_hub}
    % \lim_{(n,d)\rightarrow \infty} 
    %\mathrm{FDP} \le q\frac{d_{0}}{d} ~~~ \text{in probability and}
    \mathrm{FDP} \le q\frac{d_0}{d} + \smallop \quad \text{and}
    \quad \lim_{(n,d)\rightarrow \infty}\mathrm{FDR} \le q\frac{d_{0}}{d}.
\end{equation} 
% Under weaker condition, i.e., Assumption \ref{asp:tradeoff_fdr},
% We list the following two conditions below:
% \begin{itemize}\label{cond:sparse_strength}
%     \item Hub sparsity condition: $d_{0}\ge \rho d $ for some $\rho > 0$. 
%     \item Signal strength condition: $|\cB| = \omega_d$, where $\omega_d \rightarrow \infty$ and satisfies $\omega_d  h_d \rightarrow \infty$ for some sequence $h_d = o (1/\sqrt{\log d})$
%     % \item Graph connectivity condition: the number of connected components on the associated graph is $p$.
%     \item Scaling conditions: $\frac{d(\log d)^{19/6}}{n^{1/6}} = o(1)$.
%     \item Connectivity and edge sparsity trade-off conditions: $\frac{|S|\log d}{d_0 p} = o(1)$.
% \end{itemize}
% where $S$ represents the following set:
% $$
% \{(j_1,j_2,k_1,k_2): j_1\ne j_2,k_1 \ne k_2 , \bTheta_{j_1 j_2} = \bTheta_{j_1 k_1} = \bTheta_{j_2 k_2} = 0, \bTheta_{j_1 k_2} \ne 0, \bTheta_{j_2 k_1}\ne 0\}
% $$
% and $p$ is the number of connected components on the associated graph.
% it suffices to show the term $\frac{|S|\log d}{d_0 p} = o(1)$, which actually holds under the trade-off condition.
\end{theorem}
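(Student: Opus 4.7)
The plan is to reduce the FDP bound to a BH-style argument where the $\alpha_j$'s in Algorithm \ref{algo:startrek} play the role of p-values, and then control the numerator $\sum_{j\in\cH_0}\mathbf{1}\{\alpha_j\le t\}$ by a first/second moment argument based on the two Cram\'er-type comparison bounds in Section \ref{sec:cramer_theory}.

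First I would write $\sqrt{n}(\tdTheta_{jk}-\sTheta_{jk})$ as a leading linear term plus a bias $\lerr_{jk}$, and show $\maxnorm{\lerr}=O(s(\log d)/\sqrt{n})$ on an event of high probability, following the argument behind Lemma \ref{lem:quantile}. This reduces the degree test to the behavior of the maximum of a smooth approximate average whose bootstrap counterpart is \eqref{eq:gmb_quantile}. Next I would unpack the definition of $\alpha_j$: for a non-hub node $j\in\cH_0$, since $\|\bTheta_{j,-j}\|_0<k_\tau$, at least one of the top $k_\tau$ entries of $\{|\tdTheta_{j\ell}|\}$ corresponds to a true null edge, and on that index $s$ the quantity $\hat c^{-1}(\sqrt{n}|\tdTheta_{j,(s)}|,E_j^{(s)})$ upper-bounds the p-value of a max-statistic built only from null coordinates. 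This lets me replace $\alpha_j$ by $P_j:=1-\hat F_{E^*_j}\bigl(\max_{e\in E^*_j}\sqrt{n}|\tdTheta_e-\sTheta_e|\bigr)$ for a suitable null edge set $E^*_j$, up to signal-strength error governed by $\rho$ via the set $\cB$ of \eqref{eq:dep_term_set}.

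The centerpiece is a Cram\'er-type moderate deviation for $P_j$: I would combine Theorem \ref{thm:ccb_max} applied to the true Gaussian approximation versus the bootstrap conditional Gaussian (whose covariance differs in max-norm by $O(\sqrt{(\log d)/n})$) with the Cram\'er high-dimensional CLT from Koike/Kuchibhotla--Rinaldo type results to deduce
\[
\sup_{j\in\cH_0}\sup_{u\in I_n}\Bigl|\frac{\PP{P_j\le u}}{u}-1\Bigr|=o(1)
\]
on a range $I_n=[1/d,\,c_0]$ large enough to cover all thresholds used by the BH step. The three summands in condition \eqref{eq:cond1} correspond exactly to the Cram\'er CLT error, the $(\log d)^{5/2}\maxdiff^{1/2}$ term from Theorem \ref{thm:ccb_max}, and the bias error $s(\log d)^{3}/\sqrt n$. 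This yields the first-moment bound $\EE{\sum_{j\in\cH_0}\mathbf{1}\{\alpha_j\le t\}}\le d_0 t(1+o(1))$.

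The hard step will be the second-moment bound needed to make the ratio $\sum_{j\in\cH_0}\mathbf{1}\{\alpha_j\le t\}/(d_0 t)$ converge to $1$ in probability uniformly over the relevant range of $t$, because the $\alpha_j$ depend on the same data and encode a global graph structure. For each pair $(j_1,j_2)$ of non-hub nodes, I would express $\PP{\alpha_{j_1}\le t,\alpha_{j_2}\le t}$ as a joint tail of two Gaussian maxima on $E^*_{j_1}\cup E^*_{j_2}$; the covariance between the two underlying Gaussian vectors differs from the ``independent'' (block-diagonal) covariance in $\ell_0$-norm by exactly the count of configurations contributing to $S$ in \eqref{eq:dep_term_set}, while the number of connected components in the joint support is controlled by $p$. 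Plugging into Theorem \ref{thm:ccb_sparse_unitvar} gives
\[
\sum_{j_1\ne j_2\in\cH_0}\PP{\alpha_{j_1}\le t,\alpha_{j_2}\le t}\le d_0^2 t^2\Bigl(1+O\bigl((\log d)^2|S|/(d_0^2 p)\bigr)\Bigr),
\]
so a Chebyshev argument combined with \eqref{eq:cond2} delivers the uniform ratio consistency. From here, a standard BH sup argument over the sorted $\alpha_{(j)}$, together with the selection of $j_{\max}$ in Algorithm \ref{algo:startrek}, yields $\mathrm{FDP}\le q\,d_0/d+o_{\PPP}(1)$; bounded convergence and $\mathrm{FDP}\le 1$ give the $\limsup$ bound on FDR stated in \eqref{eq:fdp_hub}.
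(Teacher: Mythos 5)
Your proposal follows essentially the same route as the paper's proof: (i) dominate $\alpha_j$ for a null node by a p-value of a max-statistic over the null edge set $N_{0j}$ (the paper formalizes this via the equivalence $\{\psi_{j,\alpha}=1\}=\{\alpha_j\le\alpha\}$ and the monotonicity inequality \eqref{eq:monotone}); (ii) bound $\hat\alpha$ from below by $q|\cB|/d$ using the power of the single-node test, which produces the signal-strength term $\rho$; (iii) control $\sum_{j\in\cH_0}\mathbf{1}\{\alpha_j\le t\}/(d_0 t)$ uniformly in $t$ via a Chebyshev-type argument where the marginal Cram\'er deviation is driven by Theorem~\ref{thm:ccb_max} plus the Cram\'er high-dimensional CLT, and the cross-term is driven by Theorem~\ref{thm:ccb_sparse_unitvar} with $\ell_0$-difference $\lesssim k_\tau^2$ per pair (summed over pairs, this is $|S|$) and number of components $\gtrsim p$; (iv) conclude with a BH-style sup argument and dominated convergence. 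The one step your sketch compresses too much is the cross-term: $\PP{\alpha_{j_1}\le t,\alpha_{j_2}\le t}$ is not directly a joint tail of two Gaussian maxima, because the $\tdTheta$'s are not Gaussian and the bootstrap quantiles are random; the paper first replaces each indicator $I_j(\alpha_m)$ by its Gaussian surrogate $W_j(\alpha_m)$ (Lemmas~\ref{lem:crossterm_Ij} and~\ref{lem:IW_diff_bound}, which themselves consume the Cram\'er CLT and Theorem~\ref{thm:ccb_max}) and only then applies Theorem~\ref{thm:ccb_sparse_unitvar}. Also note a small slip: $\cB$ is the strong-signal set in \eqref{eq:strong_Y_set} (adapted to the GGM), not the dependence set $S$ of \eqref{eq:dep_term_set}.
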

% \lzmargin{write down the FDR control proof}{}
% \lzmargin{first FDR then FDP }{}
% \lzmargin{
% Note that the sparsity condition is a natural condition. Since otherwise $d_{0} = o(d)$, then rejecting all the hypothesis still guarantee $\mathrm{FDR}\le \frac{d_0}{d}\rightarrow 0$ with power always being $1$. 

% And this sparsity condition does not correspond to the usual sparsity condition in high dimensional linear models or Gaussian graphical models. We claim the signal strength condition is almost necessary and not so strong. For example, one choice of $w_{d}$ sequence can be $(\sqrt{\log d})^{\delta}$ with some $\delta >1$.
% }{delete 
% }

% \lzmargin{
% proof idea:
% - valid p-value\\
% - why in multiple testing, need more accurate bounds for p-value\\
% - p-valid dependency \\
% - repeat some points in sec 3 \\
% }{}

% \lzmargin{check $log (d) (q+o(1))$ for FDR control}{}
%\lzmargin{%contrast FDR and FDP, different assumption, 
%mention the importance of FDP control}{}
%\lzmargin{unlike the multiple selection for continuous signals cite weidong and xia ying, the dependence between the test statistic can not be solved by the existing tools/results, but it can be quantified as the $\maxdiff$ term, $\maxdiff$ term is related to $|S|$}{}
%\lzmargin{in order to establish FDP control, we have to characterize the dependence between test statistic}{}
The proof can be found in Appendix \ref{app:pf:thm:fdr_hub}. Remark that control of the FDR does not prohibit the FDP from varying. Therefore our result on FDP provides a stronger guarantee on controlling the false discoveries. See clear empirical evidence in Section \ref{sec:synthetic}. To the best of our knowledge, the proposed StarTrek filter in Section \ref{sec:method} and the above  FDP/FDR control result are the first Algorithm and theoretical guarantee for the problem of simultaneously selecting hub nodes. Existing work like \cite{liu2013ggmfdr, liu2014hypothesis, xia2015testing,xia2018multiple,javanmard2019false} focus on the discovery of continuous signals and their tools are not applicable to the problem here.

\section{StarTrek for general graphical models}
\label{sec:gen_fdr}
\rev{Sections \ref{sec:bipartite_selection} and \ref{sec:hub_selection}  apply the StarTrek filter to two concrete examples: Gaussian graphical models and multitask regression and provide FDR results with explicit assumptions. In this section, we will discuss how to generalize the results in Theorem \ref{thm:fdr_hub} to the general graphical models. }

Recall that in Section \ref{sec:startrek}, we denote $\bTheta$ as the weight matrix of the general graphical models, i.e.,  $\bTheta_{e} \neq 0$ if and only if $e \in \cE$ where $\cE$ is the edge set, and $\tilde \bTheta$ is the generic estimator of $\bTheta$. Since Algorithm \ref{algo:startrek} requires the quantile of the maximal statistics in \eqref{eq:T_E}, we need the estimator $\tilde \bTheta$ to be asymptotically Gaussian. In specific, for each $e \in \cV \times \cV$,  there exist $n$ i.i.d. mean zero random vectors $\bY_1(e),\cdots,\bY_n(e)$ such that $\tilde \bTheta_e - \bTheta_e = n^{-1}\sum_{i=1}^n \bY_i(e) + o_P(1/\sqrt{n})$. We can then estimate the quantile of $T_E = \max_{e \in E} \sqrt{n}|\tilde\bTheta_e - \bTheta_e|$ by some Gaussian multiplier bootstrap statistic $T^{\cB}_{E}$, i.e.,
\begin{equation}
\label{eq:ggm_gen}
\hat{c} (\alpha,E) = \inf \left\{ t\in \RR : \PPP_\xi \left( T^{\cB}_{E} \le t  \right) \ge 1-\alpha    \right\},
\end{equation}
%e.g., \eqref{eq:ggm_chat} for the Gaussian graphical model. 
In specific, similar to the assumptions in Theorem 3.2 of \cite{chernozhukov2013gaussian}, we need the following general assumptions on $\tilde \bTheta$.

\begin{assumption}\label{asmp:zeta12} 
There exist $\zeta_1 \ge 0$ and $\zeta_2  \ge 0$ such that for any edge set $E \subseteq \cV\times \cV$, we have
 \begin{align}\nonumber
 % \label{eq:TY_zetas}
    &\mathbb{P}(|T_E-T_{0E}|>\zeta_{1})<\zeta_{2}, \text{ where } T_{0E}= \max_{e\in E}\Big| \frac{1}{\sqrt{n}}\sum_{i=1}^{n}\bY_{i} (e) \Big| ,\\ \nonumber
 \label{eq:TW_zetas}&\mathbb{P}(\mathbb{P}_{\cqt}(|T_E^{\cB}-T_{0E}^{\cB}|>\zeta_{1})>\zeta_{2})<\zeta_{2}, \text{ where } T_{0E}^{\cB} = \max_{e\in E} \Big| \frac{1}{\sqrt{n}}\sum_{i=1}^{n}\bY_{i} (e) \xi_i  \Big|,
\end{align}
where the i.i.d. mean zero random vectors $\bY_1(e),\cdots,\bY_n(e) \sim \bY(e)$ satisfy $\min_{e \in \cV \times \cV} \mathbb{E}[ \bY^2(e)]>c$, $\max_{e \in \cV \times \cV}  \norm{\bY(e)}_{\psi_1} \le C$ and $\max_{e \in \cV \times \cV}  \norm{\bY^2(e)}_{\psi_1} \le C\sqrt{n/\log d}$ for some positive constants $c$ and $C$, the multiplier variables $\xi_{i}\stackrel{i.i.d.}{\sim}\cN(0,1)$ are independent from the data and $\mathbb{P}_{\cqt}$ is the measure only on $\xi_1, \ldots, \xi_n$.
\end{assumption}
For the multi-task regression problem, we verify the above assumption in Lemma~\ref{lem:linear_can_approx} in the supplementary material. As for the Gaussian graphical models, the assumption is verified in the proof of Lemma \ref{lem:approx_quantile} in the supplementary material. The assumption has also been validated under other graphical models. For example, \cite{yu2020simultaneous} proved the assumption holds for the exponential family pairwise graphical models, which include non-negative Gaussian, conditionally specified mixed graphical models, exponential square-root graphical model, etc.

%The proof of Theorem \ref{thm:fdr_hub} (Appendix \ref{pf:thm:fdr_hub} in the supplementary material) is not inherently tied to the normality assumptions in the data generating models and there are only two parts which are specific to the Gaussian graphical models: (1) deriving the rates $\zeta_1$ and $\zeta_2$ which are plugged into the Cram\'{e}r-type deviation bounds in \eqref{eq:TTZ_accuracy}; (2) bounding the covariance $\cov(W_j(\alpha), W_k(\alpha))$ for $j\ne k, j, k \in \cH_0$ where $W_j(\alpha)= \Indrbr{\max_{e \in N_{0j} } \sqrt{n}|\tilde\bTheta_e| \ge \hat c(\alpha, N_{0j}) }$  and $N_{0j}=\{(j,\ell) : \bTheta_{j\ell}=0\}$. Both parts can be conducted similarly for non-Gaussian graphical models and would impose model-specific scaling conditions. We shall mention that the Cram\'{e}r-type deviation results in Appendix \ref{sec:gmb_theory} in the supplementary material are general and do not rely on any particular distributional assumptions about the data: 
% we can derive the following Cram\'{e}r-type deviation bound  under some scaling conditions: 
%\begin{equation}
% \label{eq:TTZ_accuracy}
% \sup_{\alpha \in [\alpha_{L},1]}
% \left|
% \frac{\mathbb{P}(T_E > \hat{c}(\alpha, E))}{\alpha} - 1 
% \right| 
% =   O\rbr{\frac{(\log d)^{19/6}}{n^{1/6}} + \frac{ (\log d)^{11/6}}{n^{1/6}\alpha_L^{1/3}} + \zeta_1 \log d + \frac{\zeta_2}{\alpha_L} },
% \end{equation}
% under Assumption \ref{asmp:zeta12}, where $\alpha_L = \Omega(1/d)$, $\zeta_1, \zeta_2$ are defined in \eqref{eq:TY_zetas}, \eqref{eq:TW_zetas}. 

Similar to \eqref{eq:dep_term_set}, for the general case, we also need to define a dependency set $S$  as
\[
%\nonumber
S=\{(j_1,j_2,k_1,k_2): j_1, j_2\in \cH_0, j_1 \ne j_2, \bTheta_{j_1k_1} =\bTheta_{j_2k_2}=0,
\cov(\bY((j_1,k_1)),\bY((j_2,k_2))) \neq 0\}.
\]
We also impose the cardinality of $S$ similar to Assumption \ref{asp:tradeoff_fdp} for general graphical models. 
\vspace{-0.2cm}
\begin{assumption}\label{asp:tradeoff_fdp_gen}
{Suppose that $\bTheta \in \cU(M,s, r_0)$ and the following scaling 
 condition holds:}
\begin{eqnarray} \nonumber 
%    \label{eq:cond1_gen}
    \frac{\log d }{\rho}\rbr{
    \frac{(\log d)^{19/6}}{n^{1/6}} + \frac{ (\log d)^{11/6}}{\rho^{1/3} n^{1/6}} + 
   \zeta_1 \log d + \frac{\zeta_2}{\rho}
    } + \zeta_2 d^4  +   \frac{\log d}{\rho d_0} + \frac{({\log d})^{2}|S|}{\rho d_0^2} = o(1),
%    \\ \nonumber 
%       \frac{\log d}{\rho d_0} + \frac{({\log d})^{2}|S|}{\rho d_0^2} = o(1).
    \end{eqnarray}
 where the definitions of $d_0$ and $\rho$ are the same as in Section \ref{sec:hub_selection}.   
%\begin{enumerate}[(i)]
%    \item Signal strength and scaling condition.
%    \begin{eqnarray} \nonumber 
%%    \label{eq:cond1_gen}
%    \frac{\log d }{\rho}\rbr{
%    \frac{(\log d)^{19/6}}{n^{1/6}} + \frac{ (\log d)^{11/6}}{\rho^{1/3} n^{1/6}} + 
%   \zeta_1 \log d + \frac{\zeta_2}{\rho}
%    } + \zeta_2 d^4 = o(1).
%    \end{eqnarray}
%    \item Dependency and connectivity condition.
%    \begin{eqnarray}  \nonumber
%%    \label{eq:cond2_gen}
%   \frac{\log d}{\rho d_0} + \frac{({\log d})^{2}|S|}{\rho d_0^2} = o(1).
%    \end{eqnarray}
%\end{enumerate}
%where the definitions of $d_0$ and $\rho$ are the same as in Section \ref{sec:hub_selection}. 
\end{assumption}
%Note Assumption \ref{asp:cov_form} is 
%Assumption \ref{asp:tradeoff_fdp_gen} are general scaling assumption. 
%Now we state our general FDR control result as below.
%\vspace{-0.2cm}

We then have the following theorem on the FDR control for the general graphical models.
\begin{theorem}\label{thm:fdr_gen}
Under Assumptions \ref{asmp:zeta12} and \ref{asp:tradeoff_fdp_gen}, the StarTrek procedure in Algorithm \ref{algo:startrek} with the generic estimator $\tTheta$ and the approximated quantiles \eqref{eq:ggm_gen} satisfies:
as $(n,d)\rightarrow \infty$,
\begin{equation}\nonumber
%\label{eq:fdp_gen} 
    \mathrm{FDP} \le q{d_0}/{d} + \smallop \quad \text{and}
    \quad \lim_{(n,d)\rightarrow \infty}\mathrm{FDP} \le q{d_{0}}/{d}.
\end{equation} 
\end{theorem}
\vspace{-0.3cm}
The proof can be found in Appendix \ref{app:pf:thm:fdr_gen}.

\section{Numerical results}
\label{sec:simul}
\rev{In this section, we conduct simulation studies to complement the main theoretical claims of the paper and demonstrate the empirical performance of our method. Section \ref{sec:mult_simul} presents numerical results of applying the StarTrek filter to the multitask regression problem in Section \ref{sec:bipartite_selection}. Section \ref{sec:synthetic} focuses on the Gaussian graphical models studied in Section \ref{sec:hub_selection}. In Section \ref{sec:speed_simul}, we numerically compare our method to the grid search based on the skip-down method and demonstrate the computational advantages of the StarTrek filter. We also study the power performance of our approach against three competitor testing methods.}

\subsection{Simulations for multitask regression}
\rev{
Section \ref{sec:bipartite_selection} considers the application of the StarTrek filter to the multitask regression problem and provides theoretical results on FDP/FDR control. Here we conduct some simulation studies. The synthetic datasets are generated from the multitask regression model described in \eqref{eq:mul_linear_model}. We sample the covariates from a Gaussian autoregressive model of order 1 (AR(1)) and choose the noise variance to be $1$ for all responses. Now we describe how to generate the parameter matrix. First, the number of non-zero coefficients for each row is independently uniformly sampled from the integers between 0 and 20. Then the locations of non-zero coefficients are independently uniformly drawn from among the covariates. Finally, the values of non-zero coefficients are taking uniform random signs and identical magnitudes of 1. Throughout the simulated examples, we fix the number of responses and the number of covariates ($d_1=d_2 =300$) and vary the sample size $n$ and the autocorrelation coefficient of the AR(1) design. We also run the selection procedure under two choices of the nominal FDR level i.e., $q \in \{ 0.1, 0.2\}$. Given the sparsity level of each response is uniformly distributed over integers between 0 and 20, we choose the threshold $k_\tau$ for determining hub responses to be 19, which is roughly the upper 10\% quantile of the sparsity level's distribution. To run the StarTrek filter, we exactly follow the procedures described in Section \ref{sec:bipartite_selection} to calculate the test statistics and the approximated quantiles. The involving estimation steps are based on \cite{javanmard2014confidence,sun2012scaled}.

Table \ref{tb:multFDR} shows that the FDRs of the StarTrek filter are all well controlled below the nominal levels for different sample sizes and autocorrelation coefficient $\bar\sigma$. From Table \ref{tb:multPower}, we find that the power of the proposed method increases as the sample size grows and  decreases as the covariates become more dependent (i.e., with higher autocorrelations).
}
\vspace{-0.1cm}
\label{sec:mult_simul}
\begin{table}[hptb]
\addtolength{\tabcolsep}{-4pt}
\begin{center}
    \caption{\rev{{Empirical FDR for the multitask regression problem. We set $d_1 = d_2 = 300$ and $n = 150, 200, 250$ and $300$. The autocorrelaiton $\bar\sigma$ varies from $0.3$ to $0.7$. }}}
     \vspace{0.05cm}
    \begin{tabular}{
    @{\hspace{0.01em}}c
    @{\hspace{1em}}c
    @{\hspace{1em}}c
    @{\hspace{1em}}c
    @{\hspace{1em}}c
    @{\hspace{1em}}|c
    @{\hspace{1em}}  c
    @{\hspace{1em}}  c
    @{\hspace{1em}}  c
    @{\hspace{1em}}  c
    @{\hspace{1.2em}} }
    \toprule
  &\multicolumn{4}{c}{$q=0.1$}&\multicolumn{4}{c}{$q=0.2$} \\
      \hline
 n &150 &200 &250 &300
        &150 &200 &250 &300 \\
    \hline
$\bar\sigma=0.3$ & 0.0656 & 0.0416 & 0.0162 & 0.0184 & 0.0991 & 0.0676 & 0.0269 & 0.0355 \\ 
  $\bar\sigma=0.4$ & 0.0638 & 0.0355 & 0.0144 & 0.0158 & 0.1006 & 0.0577 & 0.0252 & 0.0300 \\ 
  $\bar\sigma=0.5$ & 0.0554 & 0.0376 & 0.0177 & 0.0179 & 0.0827 & 0.0532 & 0.0253 & 0.0349 \\ 
  $\bar\sigma=0.6$ & 0.0525 & 0.0316 & 0.0144 & 0.0155 & 0.0762 & 0.0516 & 0.0257 & 0.0270 \\ 
  $\bar\sigma=0.7$ & 0.0406 & 0.0454 & 0.0233 & 0.0224 & 0.0557 & 0.0662 & 0.0464 & 0.0385 \\ 
    \toprule
    \end{tabular}
    \label{tb:multFDR}
  \end{center}
     \vspace{-20pt} 
\end{table}
\vspace{-0.3cm}
\begin{table}[hptb]
\addtolength{\tabcolsep}{-4pt}
\begin{center}
    \caption{\rev{{Empirical power for the multitask regression. We set $d_1 = d_2 = 300$ and $n = 150, 200, 250$ and $300$. The autocorrelaiton $\bar\sigma$ is chosen from $0.3$ to $0.7$. }}}
    \vspace{0.15cm}
    \begin{tabular}{
    @{\hspace{1.2em}}c
    @{\hspace{1.2em}}c
    @{\hspace{1em}}c
    @{\hspace{1em}}c
    @{\hspace{1em}}c
    @{\hspace{1em}}|c
    @{\hspace{1em}}  c
    @{\hspace{1em}}  c
    @{\hspace{1em}}  c
    @{\hspace{1em}}  c
    @{\hspace{1.2em}} }
    \toprule
  &\multicolumn{4}{c}{$q=0.1$}&\multicolumn{4}{c}{$q=0.2$} \\
    \hline
  n  &150 &200 &250 &300
        &150 &200 &250 &300 \\
    \hline
$\bar\sigma=0.3$ & 0.8902 & 1.0000 & 1.0000 & 1.0000 & 0.9206 & 1.0000 & 1.0000 & 1.0000 \\ 
  $\bar\sigma=0.4$ & 0.8090 & 1.0000 & 1.0000 & 1.0000 & 0.8481 & 1.0000 & 1.0000 & 1.0000 \\ 
  $\bar\sigma=0.5$ & 0.6563 & 0.9912 & 1.0000 & 1.0000 & 0.7081 & 0.9953 & 1.0000 & 1.0000 \\ 
  $\bar\sigma=0.6$ & 0.4058 & 0.9549 & 0.9976 & 1.0000 & 0.4590 & 0.9622 & 0.9990 & 1.0000 \\ 
  $\bar\sigma=0.7$ & 0.1215 & 0.7119 & 0.9678 & 0.9965 & 0.1578 & 0.7621 & 0.9778 & 0.9995 \\ 
    \toprule
    \end{tabular}
    \label{tb:multPower}
  \end{center}
     \vspace{-20pt} 
\end{table}

\subsection{\rev{Simulations for Gaussian graphical models}}
\label{sec:synthetic}
\rev{In this section, we provide simulations results for Section \ref{sec:hub_selection}.} The synthetic datasets are generated from Gaussian graphical models. The corresponding precision matrices are specified based on four different types of graphs. Given the number of nodes $d$ and the number of connected components $p$, we will randomly assign those nodes into $p$ groups. Within each group (sub-graph), the way of assigning edges for different graph types will be explained below in detail. After determinning the adjacency matrix of the graph, we follow \cite{zhao2012huge} to construct the precision matrix, more specifically, we set the off-diagonal elements to be of value $v$ which control the magnitude of partial correlations and is closely related to the signal strength. In order to ensure positive-definiteness, we add some value $v$ together with the absolute value of the minimal eigenvalues to the diagonal terms. In the following simulations, $v$ and $u$ are set to be $0.4$ and $0.1$ respectively. Now we explain how to determine the edges within each group (sub-graph) for four different graph patterns.
\begin{itemize}
    \item \textbf{Hub graph.} We randomly pick one node as the hub node of the sub-graph, then the rest of the nodes are made to connect with this hub node. There is no edge between the non-hub nodes.
    \item \textbf{Random graph.} This is the Erd\"os-R\'enyi random graph. There is an edge between each pair of nodes with certain probability independently. In the following simulations, we will set this probability to be $0.15$ unless stated otherwise.
    \item \textbf{Scale-free graph.} In this type of graphs, the degree distribution follows a power law. We construct it by the Barab\'asi-Albert algorithm: starting with two connected nodes, then adding each new node to be connected with only one node in the existing graph; and the probability is proportional to the degree of the each node in the existing graph. The number of the edges will be the same as the number of nodes.
    \item \textbf{K-nearest-neighbor (knn) graph.} For a given number of $k$, we add edges such that each node is connected to another $k$ nodes. In our simulations, $k$ is sampled from $
    \{1,2,3,4\}$ with probability mass $\{0.4,0.3,0.2,0.1\}$.
\end{itemize}
See a visual demonstration of the above four different graph patterns in Appendix \ref{app:graph_pattern}. Throughout the simulated examples, we fix the number of nodes $d$ to be $300$ and vary other quantities such as sample size $n$ or the number of connected components $p$. To estimate the precision matrix, we run the graphical Lasso algorithm with 5-fold cross-validation. Then we obtain the standardized debiased estimator as described in \eqref{eq:ggm_tdTheta}. To obtain the quantile estimates, we use the Gaussian multiplier bootstrap with {4000} bootstrap samples. The threshold $k_\tau$ for determining hub nodes is set to be $3$. And all results (of FDR and power) are averaged over 64 independent replicates. 

As we can see from Table \ref{tb:FDR}, the FDRs of StarTrek filter for different types of graph are well controlled below the nominal levels. In hub graph, the FDRs are relatively small but the power is still pretty good. Similar phenomenon for multiple edge testing problem is observed \cite{liu2013ggmfdr}. In the context of node testing, it is also unsurprising. These empirical results actually match our demonstration about $|S|$ in Figure \ref{fig:demo}: hub graphs have a relatively weaker dependence structure (smaller $S$ values) and make it is easier to discover true hub nodes without making many errors.
% as the hub graph has a simple structure, which makes it easy to select true hub nodes without making too many errors. This also matches the fact that the value of $|S|$ is relatively small in the graph with hub patterns, as showed in Figure \ref{fig:demo}. 
 \vspace{-0.15in}
  \hspace{-0.2cm}
\begin{table}[hptb]
% \small
\addtolength{\tabcolsep}{-4pt}
\begin{center}
    \caption{\rev{{Empirical FDR for Gaussian graphical models. }}}
    \begin{tabular}{
    @{\hspace{1.2em}}c 
    @{\hspace{1.2em}}c
    @{\hspace{1em}}c
    @{\hspace{1em}}c|
    @{\hspace{1em}}c
    @{\hspace{1em}}c
    @{\hspace{1em}} c
    @{\hspace{1em}} c
    @{\hspace{1em}} c 
    @{\hspace{1.2em}} }
    \toprule
    $d= 300~$ &\multicolumn{3}{c}{$q=0.1$}&\multicolumn{3}{c}{$q=0.2$} \\
    \hline
    $n$ &200 &300 &400
        &200 &300 &400\\
    \hline & \multicolumn{5}{c}{$\quad\quad p=20$}&\multicolumn{1}{c}{} \\
    \hline
% hub & 0.0000 & 0.0000 & 0.0007 & 0.0000 & 0.0000 & 0.0029 \\ 
%   random & 0.0255 & 0.0383 & 0.0467 & 0.0521 & 0.0770 & 0.0833 \\ 
%   scale-free & 0.0093 & 0.0211 & 0.0282 & 0.0352 & 0.0486 & 0.0581 \\ 
%   knn & 0.0101 & 0.0296 & 0.0370 & 0.0228 & 0.0620 & 0.0769 \\ 
hub & 0.0000 & 0.0007 & 0.0000 & 0.0018 & 0.0016 & 0.0015 \\ 
  random & 0.0186 & 0.0329 & 0.0438 & 0.0438 & 0.0727 & 0.0851 \\ 
  scale-free & 0.0091 & 0.0243 & 0.0259 & 0.0265 & 0.0480 & 0.0579 \\ 
  knn & 0.0103 & 0.0288 & 0.0345 & 0.0275 & 0.0648 & 0.0736\\ 
    
    \hline & \multicolumn{5}{c}{$\quad\quad p=30$}&\multicolumn{1}{c}{} \\
    \hline

% hub & 0.0013 & 0.0000 & 0.0016 & 0.0027 & 0.0054 & 0.0036 \\ 
%   random & 0.0347 & 0.0359 & 0.0568 & 0.0725 & 0.0753 & 0.0963 \\ 
%   scale-free & 0.0215 & 0.0335 & 0.0317 & 0.0521 & 0.0624 & 0.0584 \\ 
%   knn & 0.0297 & 0.0420 & 0.0563 & 0.0504 & 0.0857 & 0.1030 \\   
hub & 0.0012 & 0.0017 & 0.0000 & 0.0031 & 0.0039 & 0.0036 \\ 
  random & 0.0464 & 0.0498 & 0.0478 & 0.0874 & 0.0969 & 0.0911 \\ 
  scale-free & 0.0205 & 0.0326 & 0.0271 & 0.0414 & 0.0602 & 0.0580 \\ 
  knn & 0.0216 & 0.0475 & 0.0431 & 0.0551 & 0.0909 & 0.0883 \\ 
    \toprule
    \end{tabular}
    \label{tb:FDR}
  \end{center}
     \vspace{-20pt} 
\end{table}

The power performance of the StarTrek filter is showed in Table \ref{tb:Power}. As the sample size grows, we see the power is increasing for all four different types of graphs. When $p$ is larger, there are more hub nodes in general due to the way of constructing the graphs, and we find the power is higher. Among different types of graphs, the power in hub graph and scale-free graph is higher than that in random and knn graph since the latter two are relatively denser and have more complicated topological structures.
\begin{table}[htbp]
% \small
\addtolength{\tabcolsep}{-4pt}
\begin{center}
    \caption{\rev{{Empirical power for Gaussian graphical models.}}}
    \begin{tabular}{
    @{\hspace{1.2em}}c 
    @{\hspace{1.2em}}c
    @{\hspace{1em}}c
    @{\hspace{1em}}c|
    @{\hspace{1em}}c
    @{\hspace{1em}}c
    @{\hspace{1em}} c
    @{\hspace{1em}} c
    @{\hspace{1em}} c 
    @{\hspace{1.2em}} }
    \toprule
    $d= 300~$ &\multicolumn{3}{c}{$q=0.1$}&\multicolumn{3}{c}{$q=0.2$} \\
    \hline
    $n$ &200 &300 &400
        &200 &300 &400\\
    \hline & \multicolumn{5}{c}{$\quad\quad p=20$}&\multicolumn{1}{c}{} \\
    \hline
    
% hub & 0.7109 & 0.9453 & 0.9898 & 0.7805 & 0.9648 & 0.9938 \\ 
%   random & 0.3343 & 0.7815 & 0.9408 & 0.4520 & 0.8514 & 0.9604 \\ 
%   scale-free & 0.4524 & 0.8145 & 0.9363 & 0.5281 & 0.8614 & 0.9568 \\ 
%   knn & 0.0905 & 0.5306 & 0.8067 & 0.1634 & 0.6511 & 0.8630 \\ 
hub & 0.6789 & 0.9406 & 0.9812 & 0.7727 & 0.9609 & 0.9867 \\ 
  random & 0.3445 & 0.7734 & 0.9390 & 0.4637 & 0.8413 & 0.9592 \\ 
  scale-free & 0.4799 & 0.8050 & 0.9347 & 0.5549 & 0.8479 & 0.9545 \\ 
  knn & 0.1337 & 0.5689 & 0.8381 & 0.2254 & 0.6913 & 0.8920 \\   
    \hline & \multicolumn{5}{c}{$\quad\quad p=30$}&\multicolumn{1}{c}{} \\
    \hline

% hub & 0.6848 & 0.9244 & 0.9706 & 0.7588 & 0.9459 & 0.9784 \\ 
%   random & 0.4882 & 0.8863 & 0.9790 & 0.5770 & 0.9225 & 0.9870 \\ 
%   scale-free & 0.6472 & 0.9047 & 0.9810 & 0.7197 & 0.9331 & 0.9870 \\ 
%   knn & 0.2409 & 0.6841 & 0.8922 & 0.3298 & 0.7706 & 0.9241 \\ 

hub & 0.6861 & 0.9242 & 0.9736 & 0.7497 & 0.9405 & 0.9810 \\ 
  random & 0.5136 & 0.8728 & 0.9741 & 0.6027 & 0.9085 & 0.9842 \\ 
  scale-free & 0.6296 & 0.8975 & 0.9778 & 0.7060 & 0.9230 & 0.9842 \\ 
  knn & 0.2442 & 0.7036 & 0.8990 & 0.3396 & 0.7799 & 0.9335 \\ 

    \toprule
    \end{tabular}
    \label{tb:Power}
  \end{center}
     \vspace{-20pt} 
\end{table}

  \hspace{-0.2cm}
\begin{figure}[t]
    \centering
\begin{center}
\begin{tabular}{ c c }
$p=20$ & $p=30$\\
 \hspace{-0.4cm}
\includegraphics[width = 0.5\linewidth]{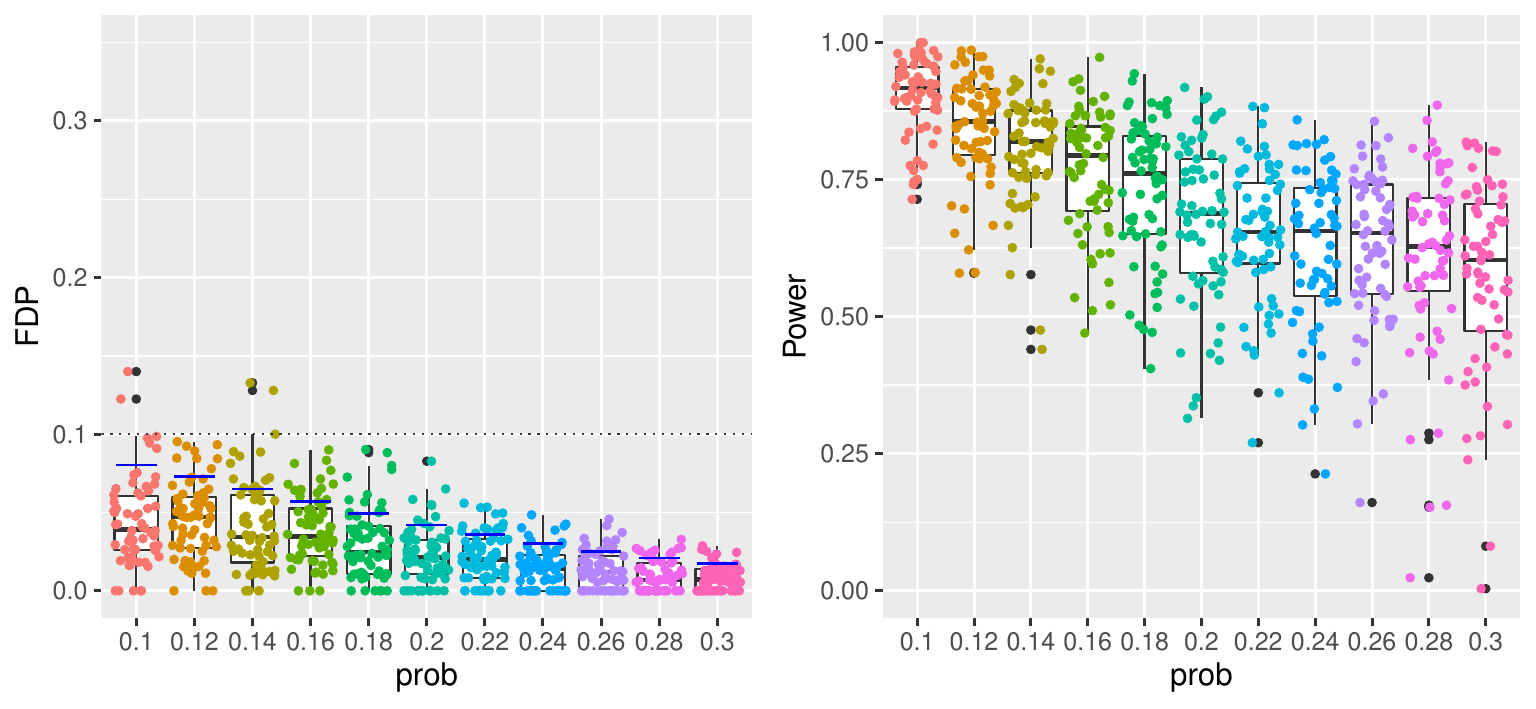}
    &  \hspace{-0.5cm}  \includegraphics[width = 0.5\linewidth]{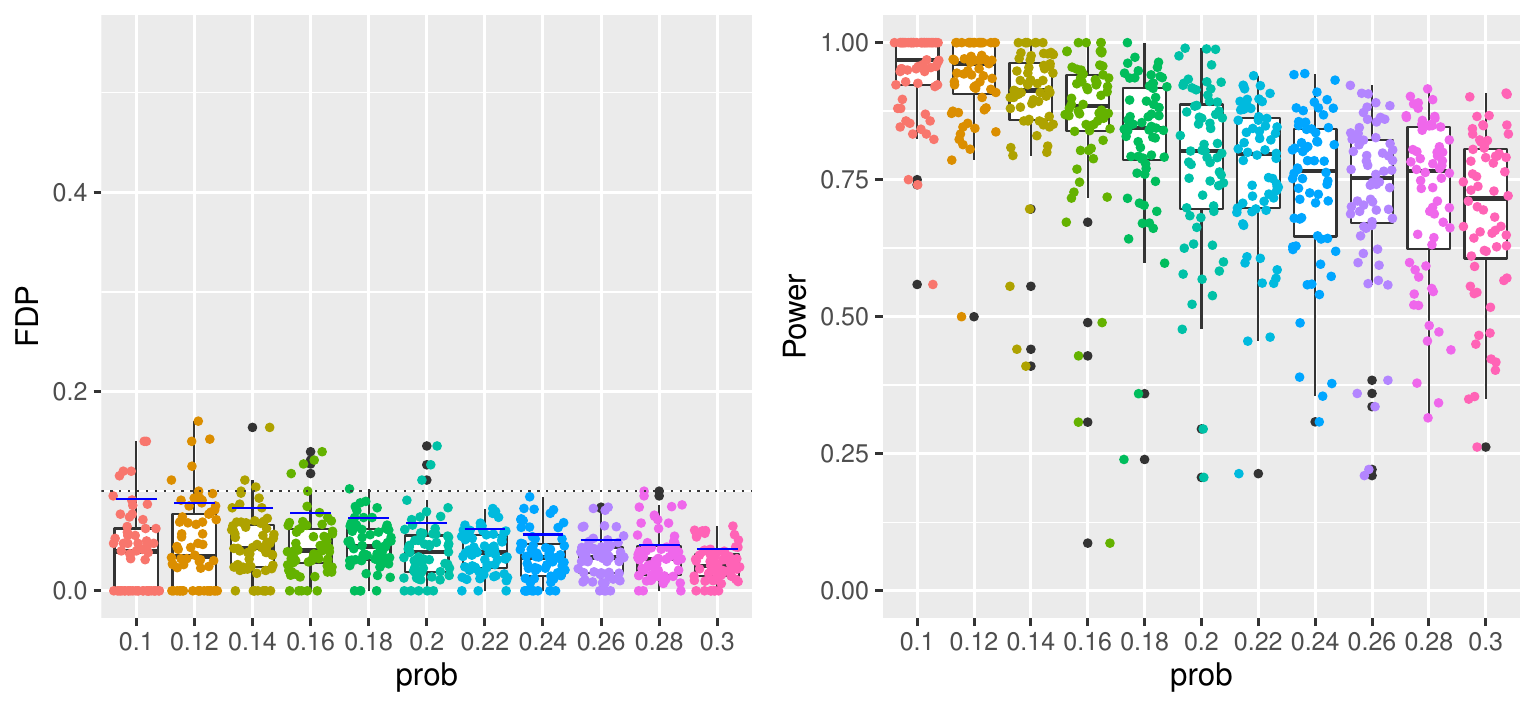}
\end{tabular}
\end{center}    
\vspace{-10pt}
        \caption{\rev{FDP and power plots for the StarTrek filter in the random graph. The connecting probability is varied on the x-axis. The number of samples $n$ is chosen to be $300$ and the number of connected components $p$ equals 20 and 30 . The nominal FDR level is set to be $q=0.1$; the short blue solid lines correspond to  $q{d_0}/{d}$, calculated by averaging over the $64$ replicates. For both panels, the box plots are plotted with the black points representing the outliers. Colored points are jittered around, demonstrating how the FDP and power distribute.}}
    \label{fig:p20q0.1}
\end{figure}
In \rev{Figure \ref{fig:p20q0.1}}, we demonstrate the performance of our method in the random graph with different parameters. Specifically, we vary the connecting probability changing from $0.1$ to $0.3$ in the x-axis. In those plots, we see the FDRs are all well controlled below the nominal level $q=0.1$. As the connecting probability of the random graph grows, the graph gets denser, resulting more hub nodes. Thus we can see the height of the short blue solids lines (representing $q d_0/d$) is decreasing. Based on our results in Theorem \ref{thm:fdr_hub}, the target level of FDP/FDR control is $q d_0/d$. This is why we find the mean and median of each box-plot is getting smaller as the connecting probability increases (hence $d_0$ decreases). 
%corresponding to Figures 9 and 10 can be found Appendix G.4.

The box-plots and the jittering points show that our StarTrek procedure not only controls the FDR but also prohibit it from varying too much, as implied by the theoretical results on FDP control in Section \ref{sec:hub_selection}. Regarding the power plots, we see that the power is smaller when the graph is denser since the hub selection problem becomes more difficult with more disturbing factors. Plots with nominal FDR level $q=0.2$ are deferred to Appendix \ref{app:fdp_power_plots}.

\subsection{\rev{Comparison with the grid search based on the Skip-down method}}
\label{sec:speed_simul}
\begin{figure}
    \centering
    \includegraphics[width = 1\linewidth]{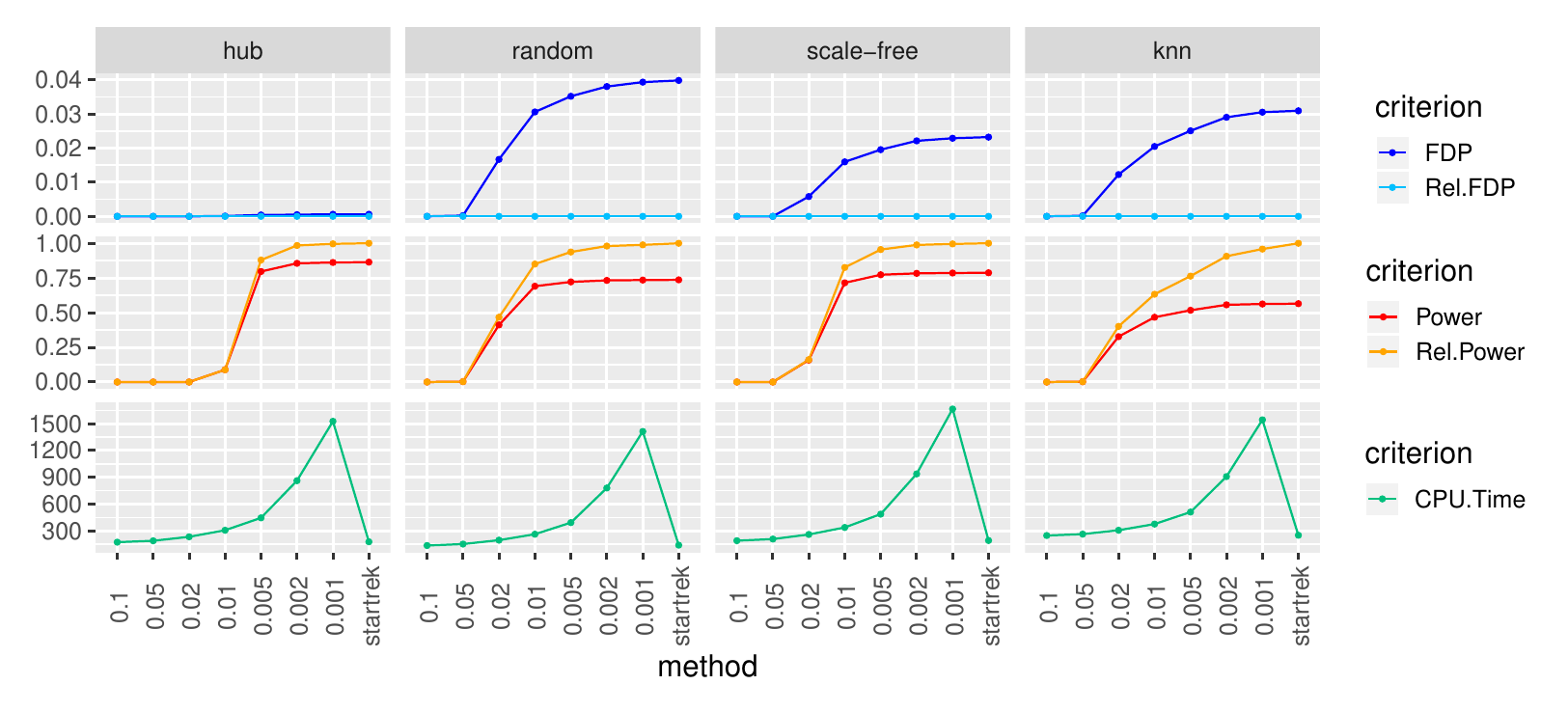}
    \vspace{-20pt}
    \caption{\rev{Empirical comparison of the StarTrek filter with the grid search based on the Skip-down method. On the x-axis, we have the grid method with 7 granularity levels and our proposed StarTrek filter. The top panel compares FDR and relative FDR; the middle panel compares power and relative power; the bottom panel compares the CPU time (in seconds). Four columns correspond to four different graph patterns. For each graph pattern, we average those criterion quantities over 6 different settings (i.e., $3$ choices of $n$ and $2$ choices of $p$) where 64 independent replicates are simulated for each setting. The FDR nominal level $q$ is $0.1$.}}
    \label{fig:speedq0.1}
\end{figure}
% \begin{figure}
%     \centering
%     \includegraphics[width = 1.1\linewidth]{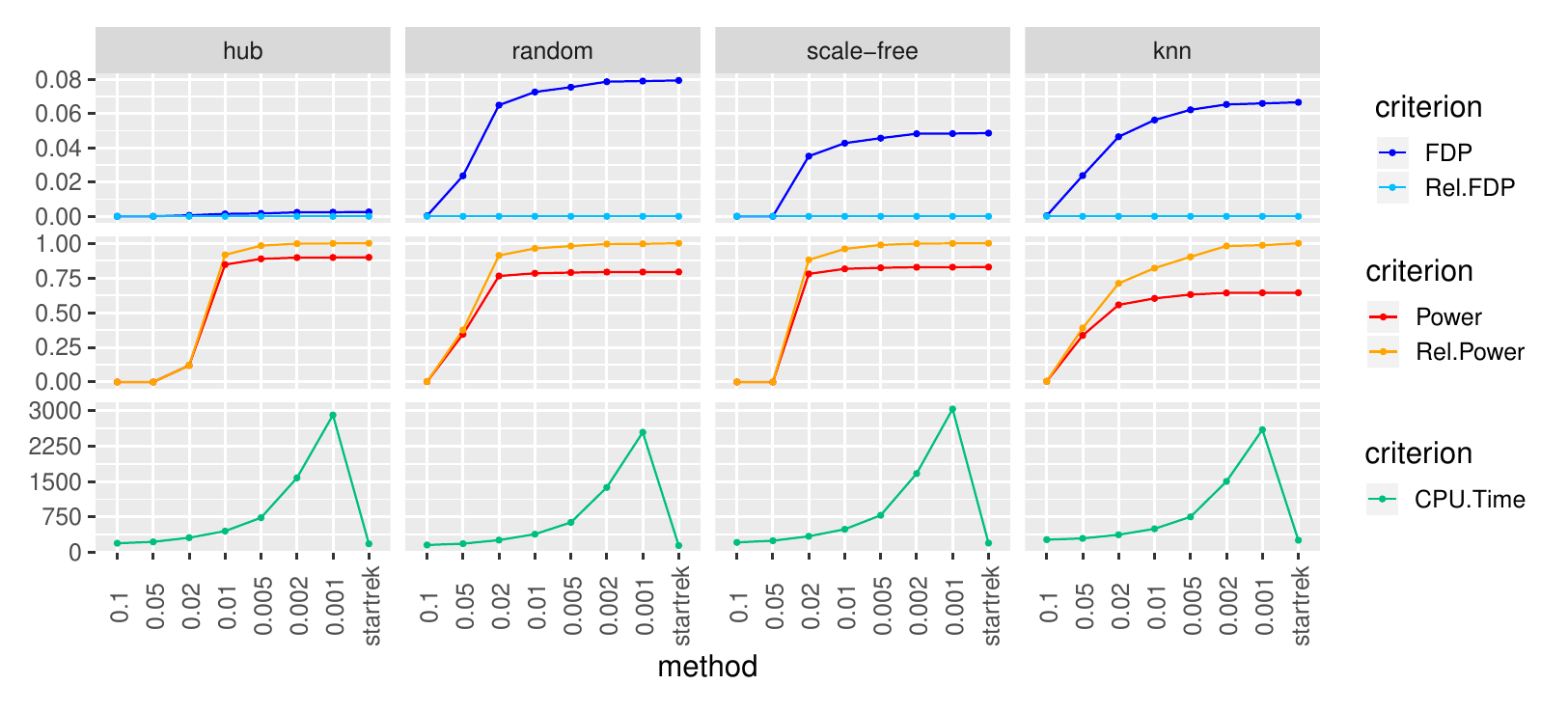}
%     \caption{Caption}
%     \label{fig:speedq0.2}
% \end{figure}
\rev{
In this section, we empirically compare the performance of two methods: our StarTrek filter (Algorithm \ref{algo:startrek}) and the BHq procedure with Algorithm 1 (referred simply as Algorithm \ref{algo:skipdown} without causing confusion).
% the grid search based on Algorithm \ref{algo:skipdown} and our StarTrek filter (Algorithm \ref{algo:startrek}). Without causing confusion, we refer to the grid search based on Algorithm \ref{algo:skipdown} simply by Algorithm \ref{algo:skipdown}. 
To implement Algorithm \ref{algo:skipdown}, we estimate $\hat{\alpha}$ in \eqref{eq:BHq_alpha} by a grid-search of the suprema on a evenly-spaced grid over $(0, q)$ with $7$ spacing sizes: $0.1, 0.05, 0.02, 0.01$, $0.005, 0.002, 0.001$. Note smaller spacing sizes correspond to higher granularity levels. We can see that the computation complexity of Algorithm \ref{algo:startrek} is $O(d k_\tau + d^2\log d)$ and the time complexity is $O(d^2 k_\tau/g)$ for Algorithm \ref{algo:skipdown} where $g$ is the grid spacing size. In additional to the computational differences, we shall note that the selected hub node set from the grid search method must be a subset of that from the StarTrek filter, and as the grid becomes sufficiently granular (i.e., the gird spacing size becomes sufficiently small), the selected hub node sets from the two methods will be the same. To illustrate such points in our empirical comparison, we will additionally compute a relative version of FDR and power. Specifically, we compute the FDR and power for both methods but treating the selected hub node set from the StarTrek filter as the true hub node set. We follow Section \ref{sec:synthetic} to generate the synthetic data and consider exactly the same settings in Table \ref{tb:FDR}. The results are then visualized in Figure \ref{fig:speedq0.1}. First, we see that the relative FDR is always 0 and the relative power approaches $1$ as the gird spacing size decreases to $0$, which illustrates the equivalence of two algorithms. In terms of power and computational performances, we find that StarTrek filter achieves higher power than Algorithm \ref{algo:skipdown} when the granularity level is coarse.  By making the grid sufficiently granular, the grid search method can attain comparable power but cost much longer computational time than the StarTrek filter, hence demonstrate the superiority of our proposed method.
}

\subsection{Comparison with other testing procedures}
\label{sec:comp_simul}
\begin{figure}
    \centering
    \includegraphics[width = 1\linewidth]{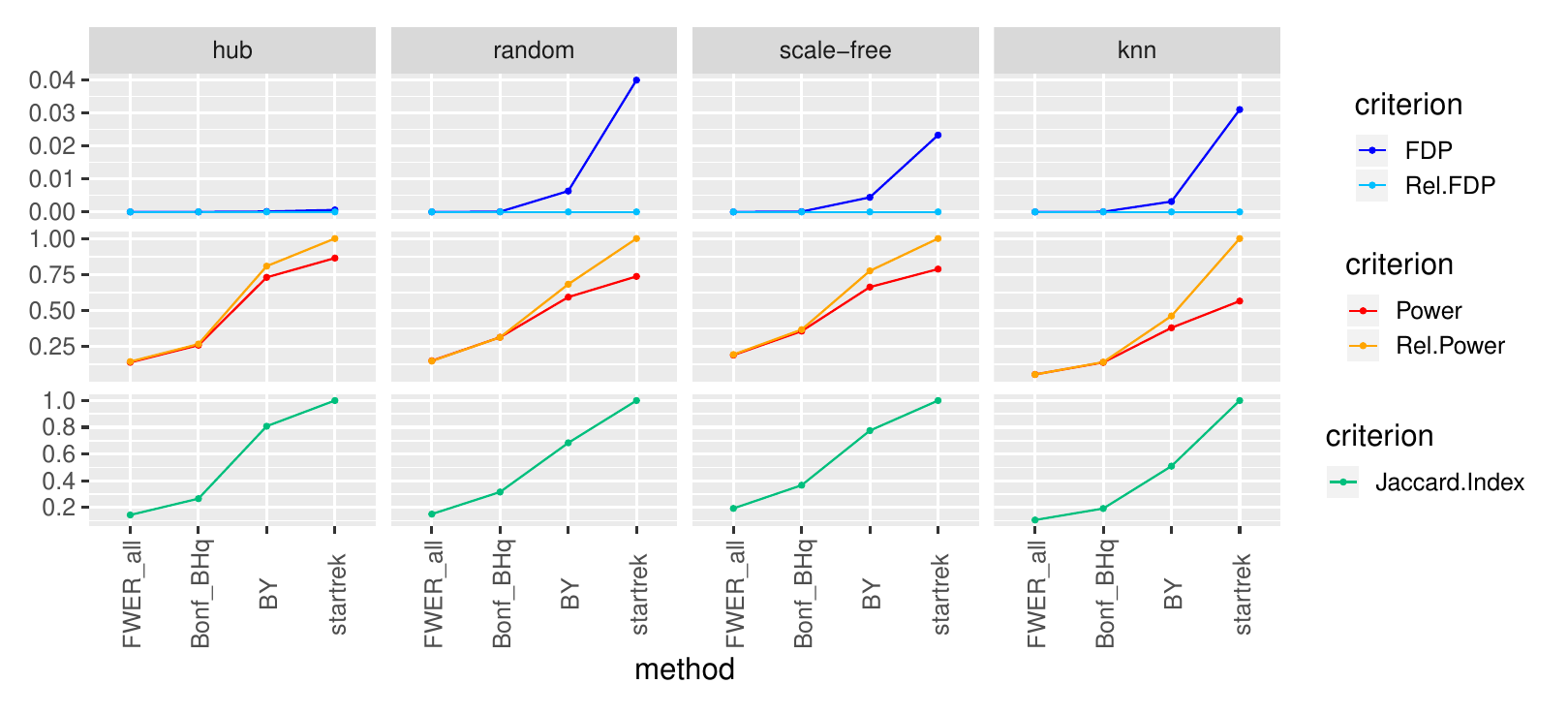}
    \vspace{-20pt}
    \caption{\rev{Empirical comparison of the StarTrek filter with the three competitor methods. On the x-axis, we have Method 1 (FWER\_all), Method 2 (Bonf\_BHq), Method 3 (BY) and our proposed StarTrek filter. The top panel compares FDR and relative FDR; the middle panel compares power and relative power; the bottom panel compares the Jaccard index (with respect to the set of selected hub nodes by the StarTrek filter). Four columns correspond to four different graph patterns. For each graph pattern, we average those criterion quantities over 6 different settings (i.e., $3$ choices of $n$ and $2$ choices of $p$) where 64 independent replicates are simulated for each setting. The FDR nominal level $q$ is $0.1$.}}
    \label{fig:compq0.1}
\end{figure}

This section compares the performance of the StarTrek filter against some other testing procedures. Three competitor methods are considered. 
\begin{enumerate}
    \item Method 1 computes the p-values with respect to testing $\bTheta_{jk}=0$ for all the $(d^2-d)/2$ pairs of $(j,k)$. Then it adopts the canonical FWER control method to select the significant edges and count the selected edges for each row/column to determine whether each node is selected to be a hub node.
    \item Method 2 computes all the p-values as in Method 1, but changes the way of applying FWER control adjustment. For each node $j$, it applies the Bonferroni procedure to the $d-1$ p-values corresponding to the $j$-th column of the precision matrix, resulting the node p-value. Then the BHq procedure is further applied to these node p-values to select the hub nodes.
    \item Method 3 utilizes the node p-values computed from Algorithm 2 but applies the BY procedure \cite{benjamini2001control} instead of the BHq procedure \cite{benjamini1995controlling}. 
\end{enumerate}
All the three competitor methods are more conservative than the StarTrek filter since they are either only based on continuous edge testing procedures or not adapting to the complex dependence structures. We follow Section \ref{sec:synthetic} to generate the synthetic data and consider exactly the same settings in Table \ref{tb:FDR} which involve $3$ choices of the sample size $n$, $2$ choices of $p$ and $4$ different types of graph patterns. 

%\vspace{-0.1in}
In Figure \ref{fig:compq0.1}, we visualize the performances of  StarTrek filter against the above three competitor methods in terms of FDR and power. To understand how the set of selected hub nodes produced from each competitor method is similar/different to that from the StarTrek filter, we also calculate a relative version of FDR and power similarly as in Section \ref{sec:speed_simul} and the Jaccard index \cite{jaccard1901distribution}. We find that our proposed StarTrek filter is less conservative and more powerful than all the three competitor methods, among which Method 1 is the most conservative method and Method 3 has the most similar selected hub node set to the StarTrek filter.

\subsection{Application to gene expression data}
\label{sec:data}
\begin{figure*}
   \centering
\begin{tabular}{m{1em}ccccc}
\rot{{$\quad\quad\quad\quad\quad\quad\quad$Male}}&
\includegraphics[width=3cm]{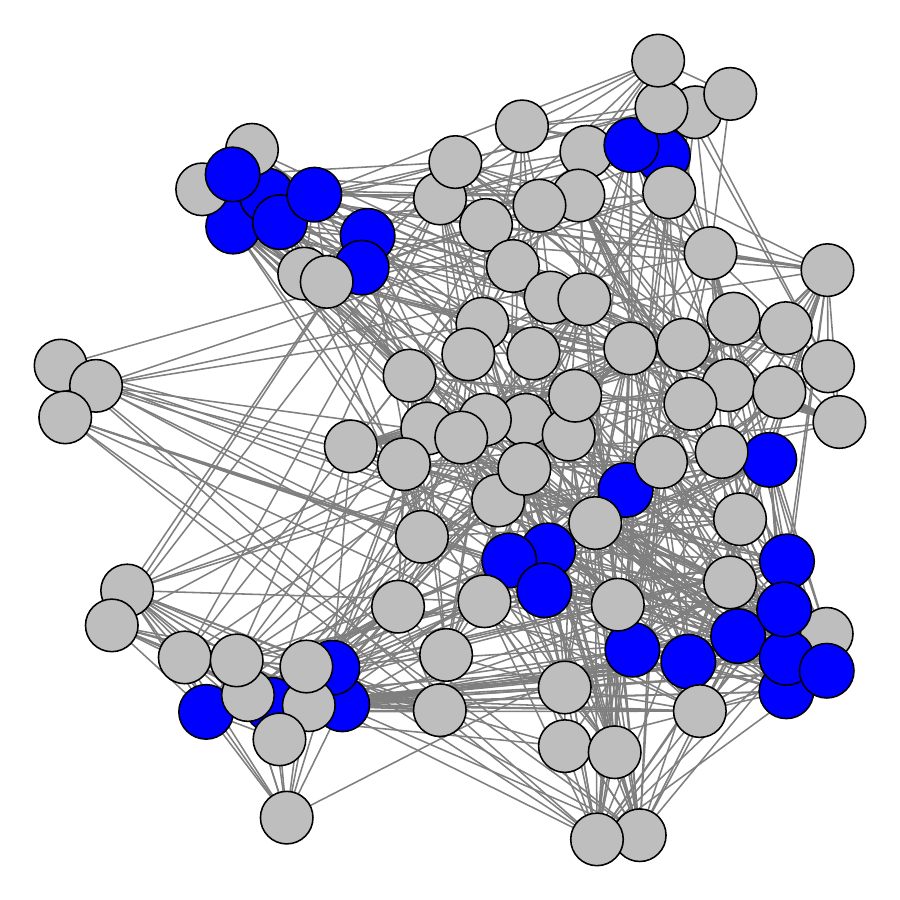}&
\includegraphics[width=3cm]{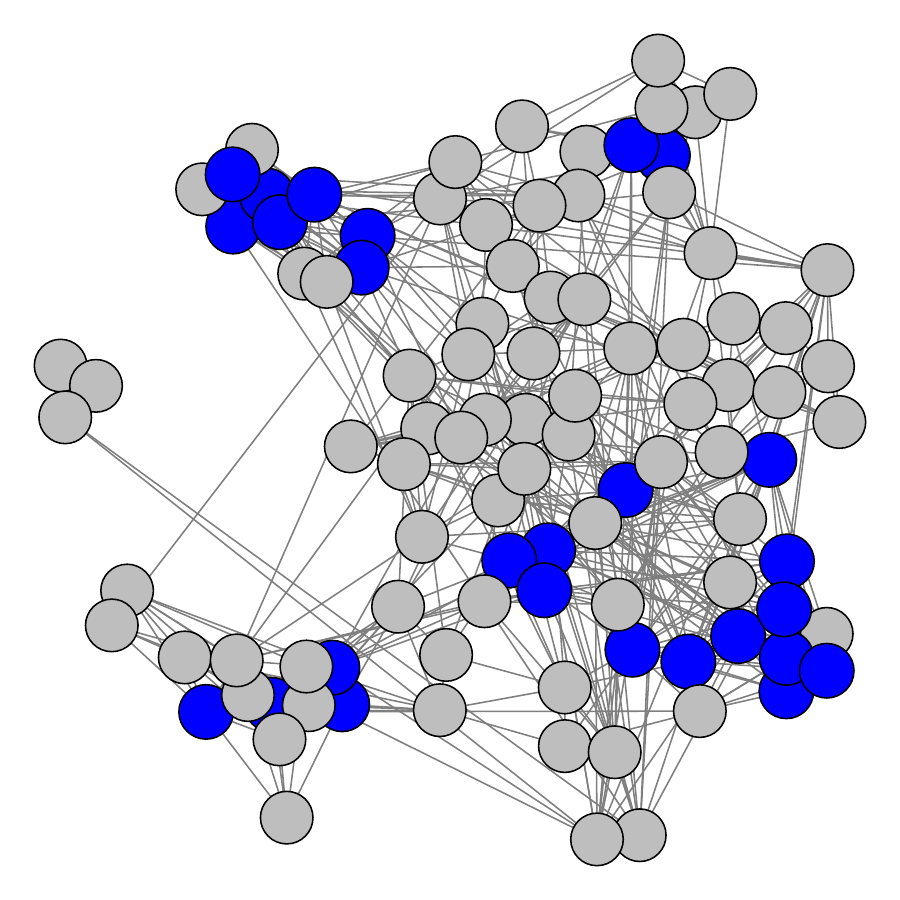}&
\includegraphics[width=3cm]{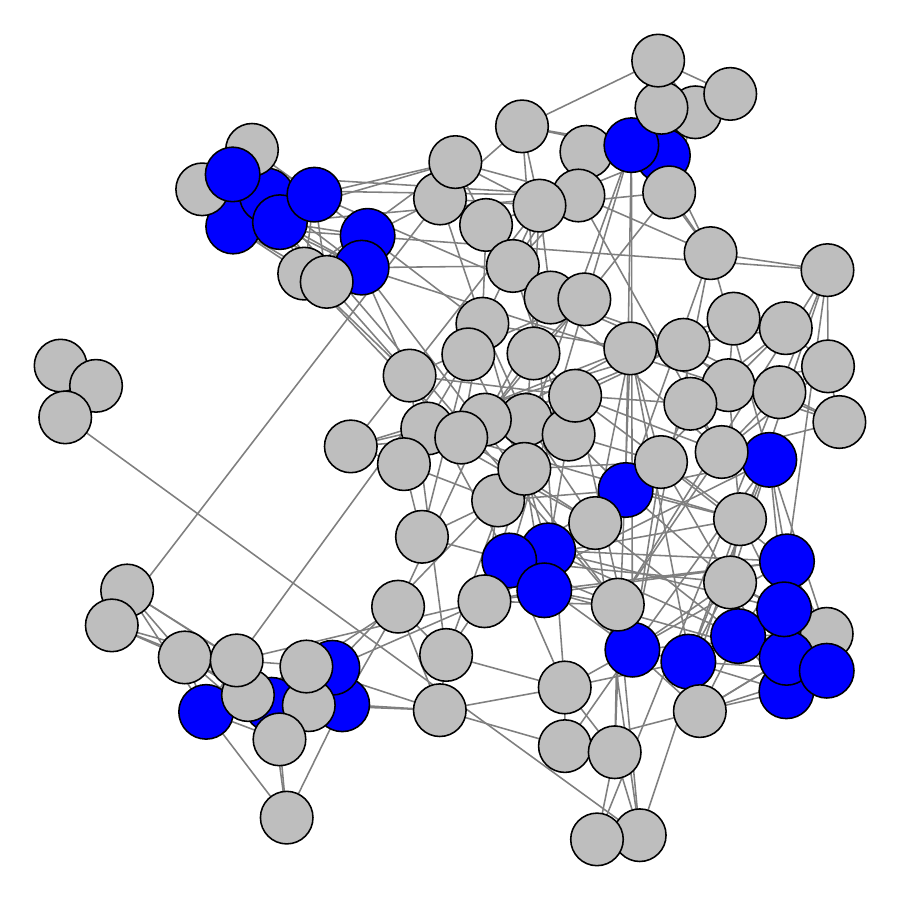}&
\includegraphics[width=3cm]{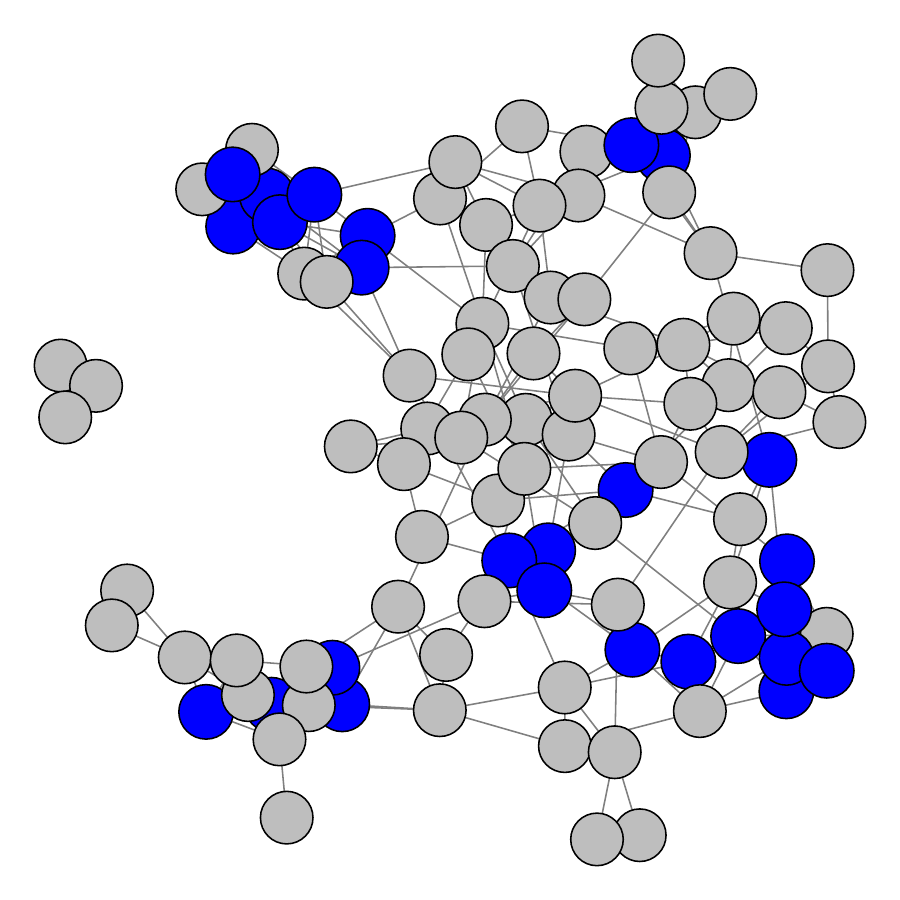}\\[-40pt]
\rot{{$\quad\quad\quad\quad\quad\quad$Female}}&
\includegraphics[width=3cm]{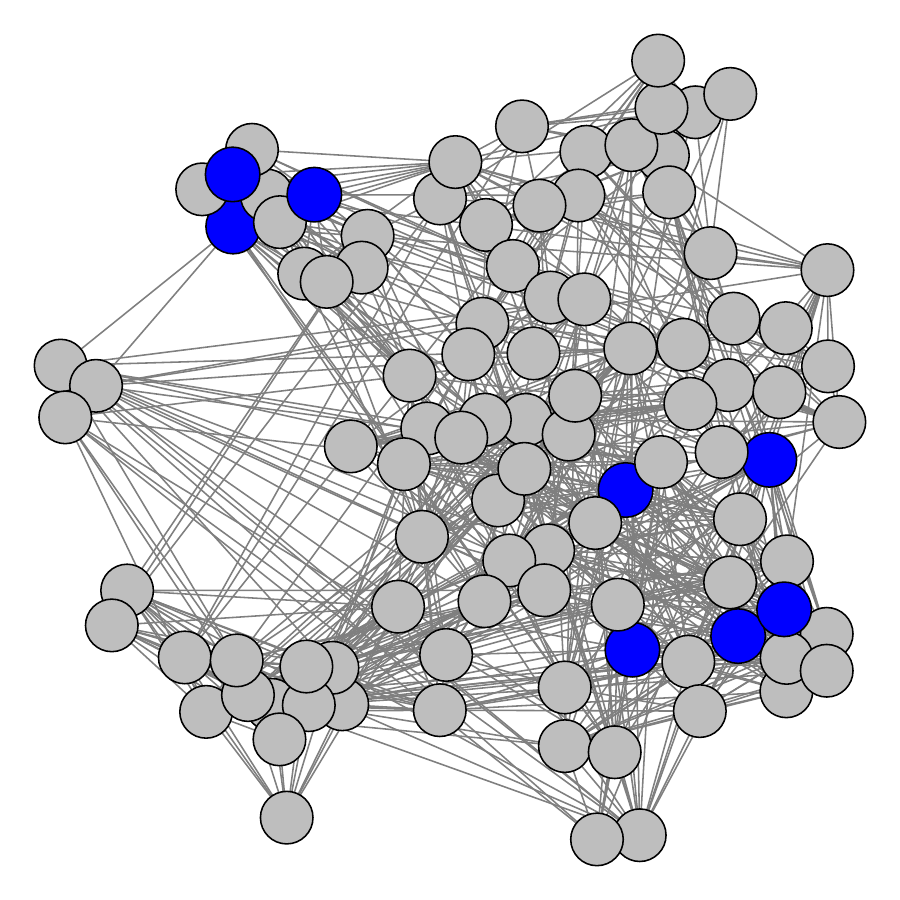}&
\includegraphics[width=3cm]{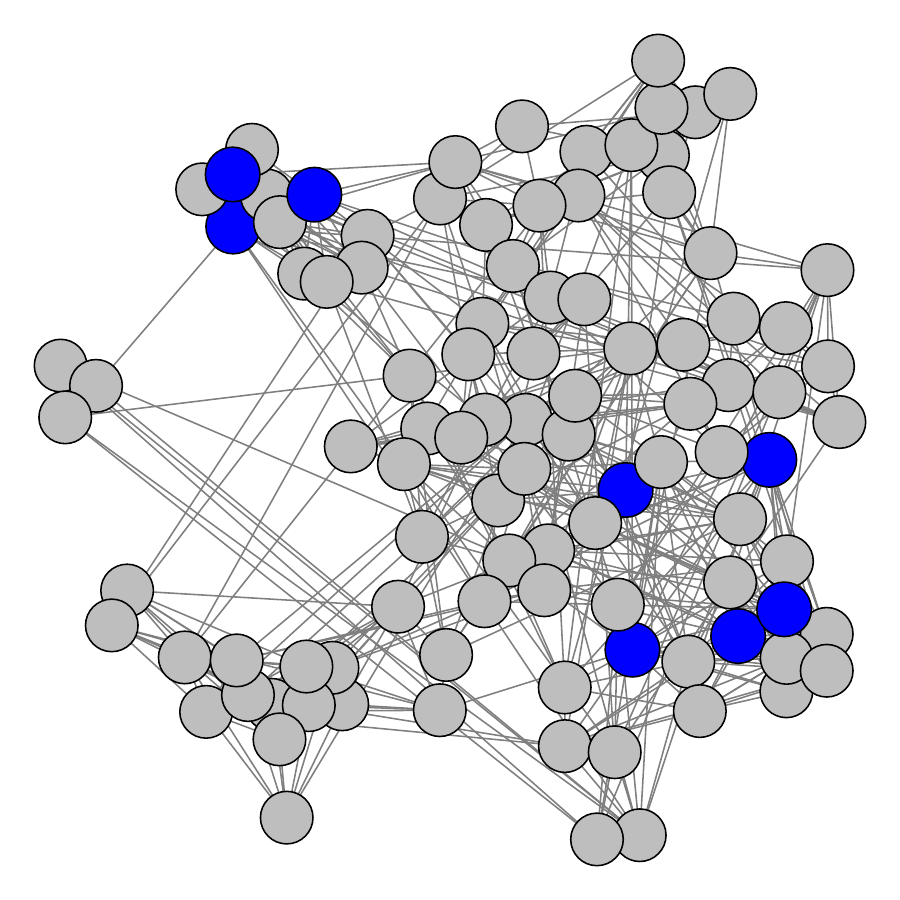}&
\includegraphics[width=3cm]{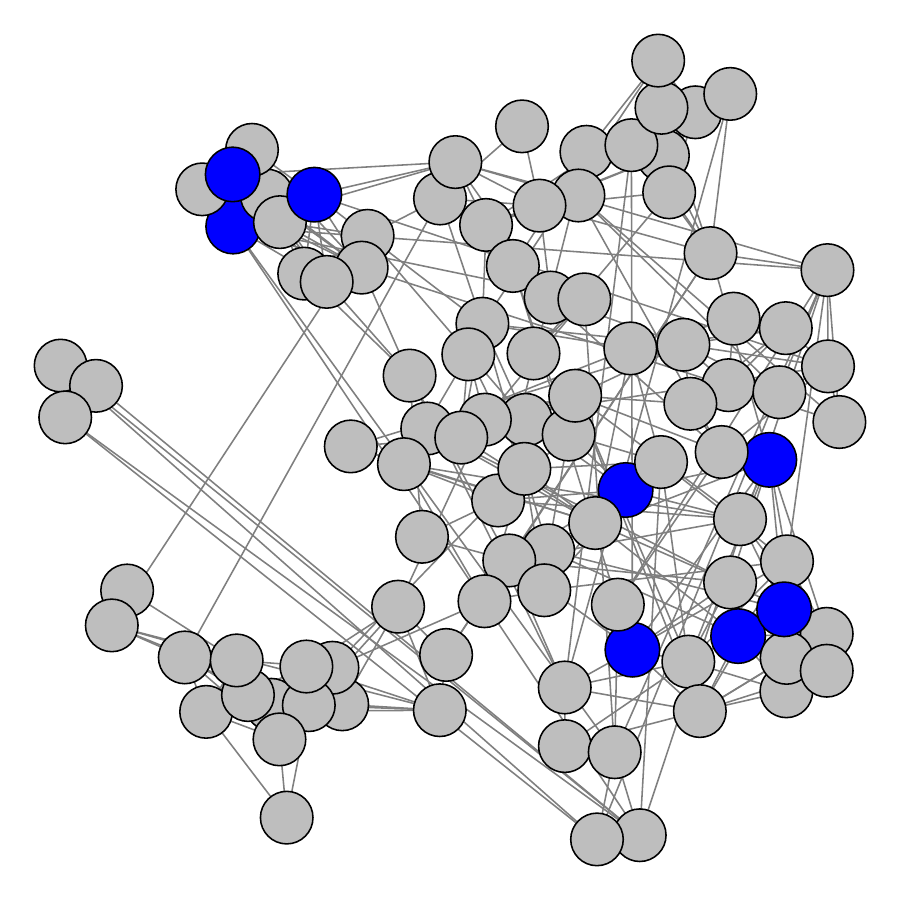}&
\includegraphics[width=3cm]{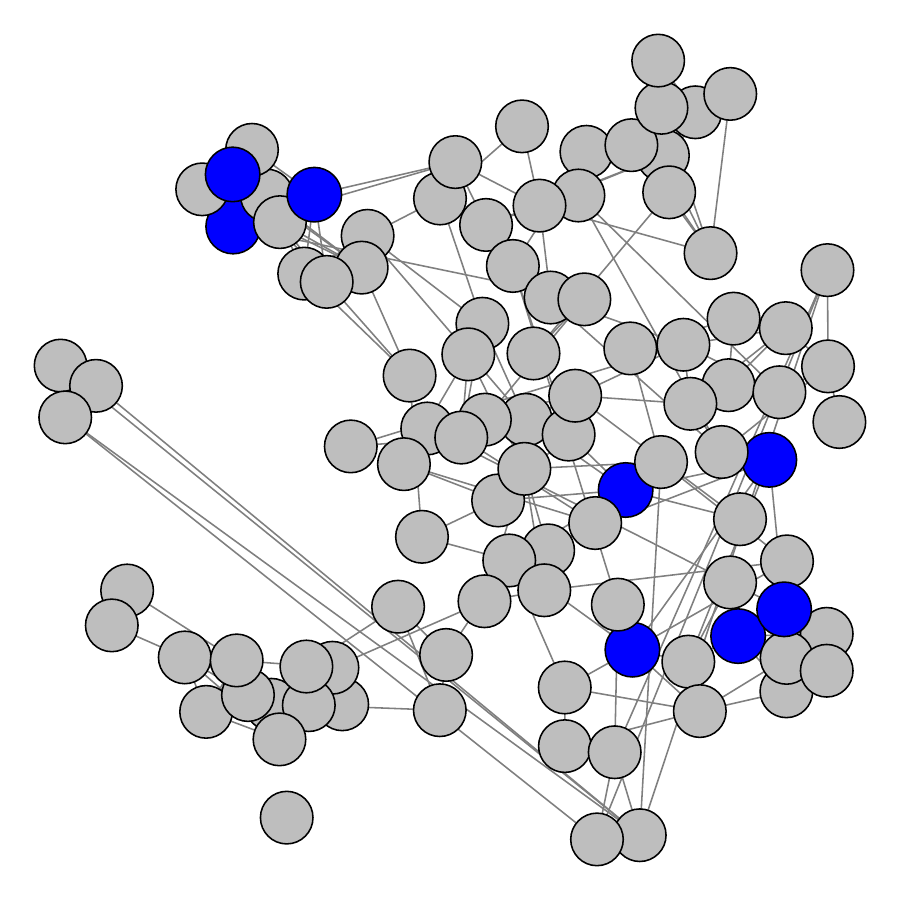}\\
\end{tabular}
\vskip-40pt
    \caption{The above graphs are based the estimated precision matrices (the left two plots). The adjacency matrices of the other six plots are based on the standardized estimated precision matrices but thresholded at $0.025, 0.05, 0.075$ respectively. Blue vertices represent the selected hub genes. }
    \label{fig:breast_gr}
\end{figure*}
\vskip-5pt
We also apply our method to the Genotype-Tissue Expression (GTEx) data studied in \cite{lonsdale2013genotype}. Beginning with a 2.5-year pilot phase, the GTEx project establishes a great database and associated tissue bank for studying the relationship between certain genetic variations and gene expressions in human tissues. The original dataset involves 54 non-diseased tissue sites across 549 research subjects. Here we only focus on analyzing the breast mammary tissues. It is of great interest to identify hub genes over the gene expression network.
% \vspace{}
% \begin{figure}
%     \centering
%     \includegraphics[width = 1.1\linewidth]{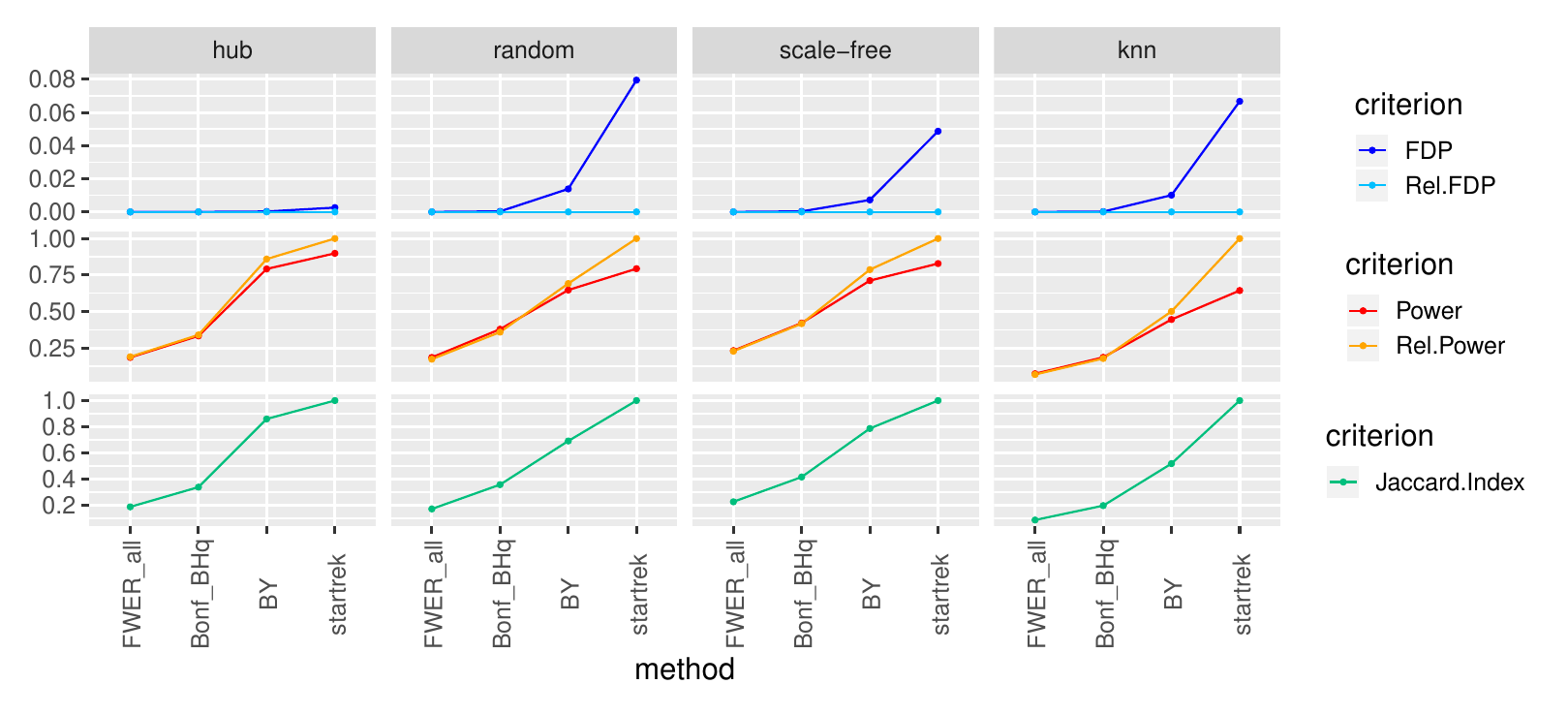}
%     \caption{Caption}
%     \label{fig:compq0.2}
% \end{figure}
% \vspace{-0.1in}
% \begin{figure}[!htbp]
%     \centering
%     \includegraphics[width = 0.87\linewidth]{plots/real/breast_igraph.pdf}
%     \caption{The above graphs are based the estimated precision matrices (the left two plots). The adjacency matrices of the other six plots are based on the standardized estimated precision matrices but thresholded at $0.04, 0.06, 0.08$ respectively. Blue vertices represent the selected hub genes. }
%     \label{fig:breast_gr}
% \end{figure}

%\vspace{-0.1in}
First we calculate the variances of the gene expression data and focus on the top $100$ genes in the following analysis. The data involves $n=291$ samples for male individuals and $n=168$ samples for female individuals. The original count data is log-transformed and scaled. We then obtain the estimator of the precision matrix by the Graphical Lasso with 2-fold cross-validation. As for the hub node criterion, we set $k_\tau$ as the 50\% quantile of the node degrees in the estimated precision matrix. We run StarTrek filter with $2000$ bootstrap samples and nominal FDR level $q=0.1$ to select hub genes.
% \vskip-8pt
% for both the male and female datasets.
%The selected hub genes are showed in Figure \ref{fig:breast_pval}, we plot the sorted p-values ($\alpha_j,j\in [d]$ in Algorithm \ref{algo:startrek}) and the rejection line
%We also apply our method to the Genotype-Tissue Expression (GTEx) data studied in Lonsdale et al. (2013). The GTEx project began with a 2.5-year pilot phase to study tissue-specific gene expression and regulation. The data was collected from 54 non-diseased tissue sites across 549 research subjects. We focus mainly on analyzing the breast mammary tissues.
%\vspace{-0.3in}
\begin{figure}[!htbp]
    \centering
    \includegraphics[width = 0.96 \linewidth]{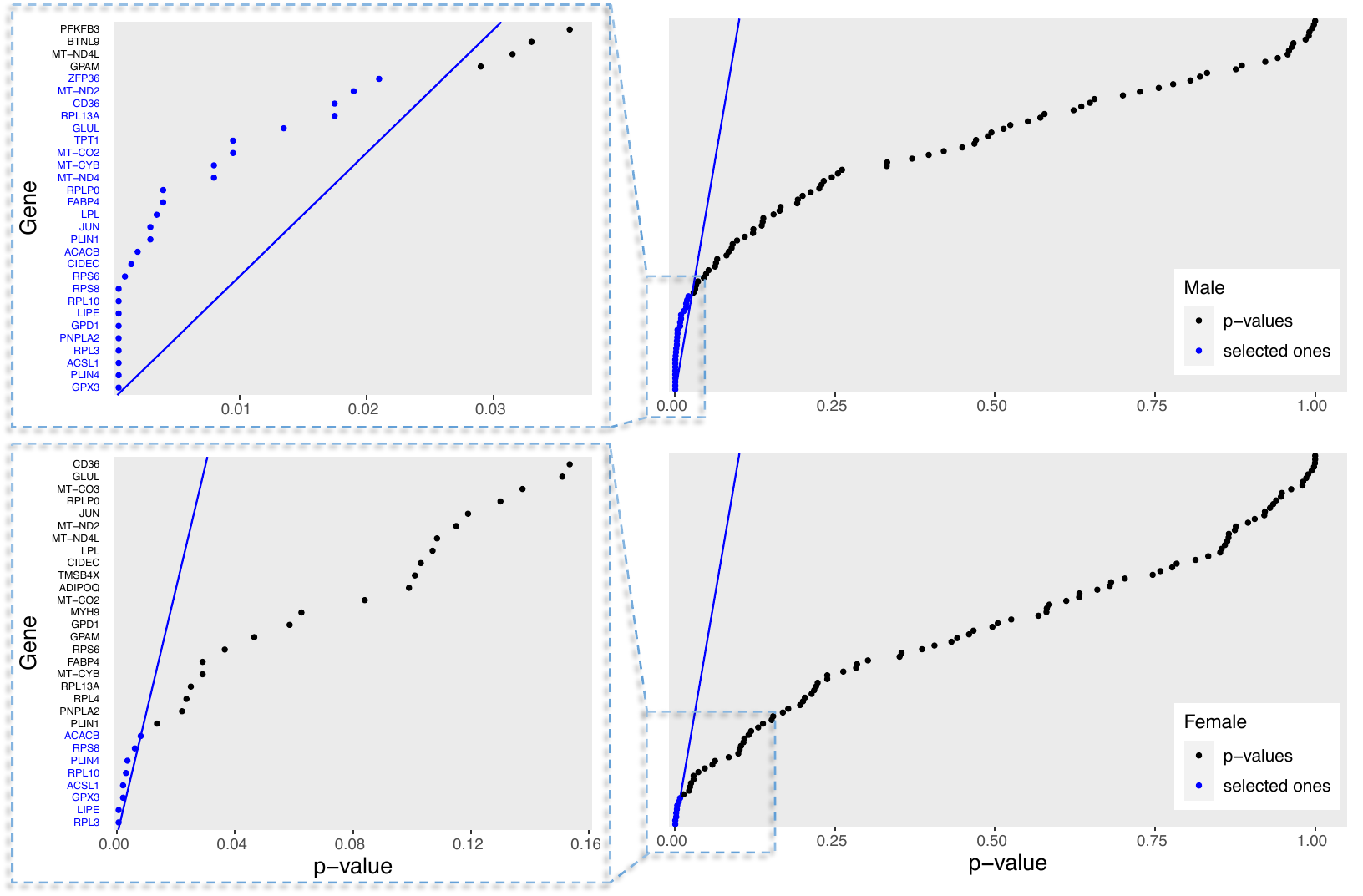}
    \caption{Plots of the sorted p-values ($\alpha_j,j\in [d]$) in Algorithm \ref{algo:startrek}. Those blue points correspond to selected hub genes. The blue line is the rejection line of the BHq procedure. The coordinates of the plots are flipped. We abbreviate the names of the 100 genes and only show selected ones with blue colored text. 
    % The upper panel and the lower panel are based on male and female data respectively.
    }
    \label{fig:breast_pval}
\end{figure}
%\hspace{-0.2cm}
% \vskip-15pt

Figure \ref{fig:breast_gr} shows that the selected hub genes by the StarTrek filter also have large degrees on the estimated gene networks (based on the estimated precision matrices). In Figure \ref{fig:breast_pval}, the results for male and female dataset agree with each other except that the number of selected hub genes using female dataset is smaller due to a much smaller sample size. The selected hub genes are found to play an important role in breast-related molecular processes, either as central regulators or their abnormal expressions are considered as the causes of breast cancer initiation and progression, see relevant literature in genetic research such as \cite{hellwig2016epsin,blein2015targeted,chen2016systematic,li2019association,lou2020overexpression,mohamed2014promoter,bai2019screening,sirois2019unique,marino2020upregulation,malvia2019study}. 
Therefore, our proposed method for selecting hub nodes can be applied to the hub gene identification problem. It may improve our understanding of the mechanisms of breast cancer and provide valuable prognosis and treatment signature. 

\section{Discussions}
%\label{sec:discuss}
In this paper, we have proposed a novel method to select the hub nodes in the graph with degrees larger than a certain thresholding level. To show the validity of the method, we prove Cram\'er-type Gaussian comparison bounds with two types of covariance matrix differences and Cram\'er-type deviation results of the Gaussian multiplier bootstrap procedure. The extension of our results to other bootstrap methods is interesting for future research. In specific, \cite{deng2020beyond} generalizes the Kolmogorov distance results of the Gaussian multiplier bootstrap \cite{chernozhukov2013gaussian} to the wild bootstrap and empirical bootstrap by proposing new comparison bounds and anti-concentration inequalities. Their techniques have the potential to be extended to the Cram\'er-type deviation bounds in the future. Moreover, \cite{chernozhuokov2022improved} showed a faster rate of Kolmogorov distance consistency of the Gaussian multiplier bootstrap and it could be extended to Cram\'er-type deviation bounds to improve the rates in our paper as well.

 % \begin{acks}[Acknowledgments]
 % The authors are grateful for the support of NSF DMS1916211, NIH R35 CA220523, NIH R01 ES32418, NIH U01CA209414.
 % \end{acks}  

\begin{acks}[Acknowledgments]
The authors are grateful for the support of NSF DMS1916211, NIH R35 CA220523, NIH R01 ES32418, NIH U01CA209414.
\end{acks}

\bibliographystyle{imsart-number} % Style BST file (imsart-number.bst or imsart-nameyear.bst)
\bibliography{reference}       % Bibliography file (usually '*.bib')

\appendix 

\newpage
% \setcounter{page}{1}

%\begin{center}
%\textit{\large Supplementary material to}
%\end{center}
%\begin{center}
%\title{\large StarTrek: Combinatorial Variable Selection with False Discovery Rate Control}
%\end{center}

%
%\author{Lu Zhang{\small${^1}$} \and Junwei Lu{\small${^2}$}}
%
%\vspace{2ex}
%\author{%
%\small
%    {\small${^1}$}Department of Statistics, Harvard University\\[0.5ex]
%    {\small${^2}$}Department of Biostatistics, Harvard University\\[2ex]%
%%    \today
%}

%\setcounter{page}{1}

\begin{supplement}
 \stitle{}
\sdescription{}
This document contains the supplementary material to the paper
  ``StarTrek:  Combinatorial Variable Selection with False Discovery Rate Control". Appendix \ref{app:pf:fdr} presents the proofs of the FDR control results. In Appendix \ref{app:pf:cramer_theory}, we provide the proofs of two types of Cram\'{e}r-type comparison bounds for Gaussian maxima. Appendix \ref{sec:gmb_theory} proves the Cram\'{e}r-type deviation bounds for the Gaussian multiplier bootstrap. In Appendix \ref{app:node_test_validity}, we establish the validity and a power result of our test on the degree of a single node. Appendix \ref{app:plots_tables} contains some plots and tables deferred from the main paper.
\section{Proofs for FDR control}
\label{app:pf:fdr}
In this section, we aim to prove Theorem \ref{thm:fdr_hub}.
% Throughout the proof for the Gaussian graphical models, we denote the standardized version of the one-step estimator \eqref{eq:one_step} by $\tdTheta_e$. 
%\jlmargin{}{change back to $\tilde \Theta^{d}$}
%\jlmargin{}{fix one notation for indicator $\Indrbr{\cdot}$}
In order to prove the theorem, we need Lemma \ref{lem:single_test} which is about the test of single node degree. Remark that this lemma proves the asymptotic validity of the test in Algorithm \ref{algo:skipdown} and provides a power analysis. The signal strength condition is only required for the power analysis part. To see why Lemma \ref{lem:single_test} is useful for establishing FDR control for our StarTrek procedure in Algorithm \ref{algo:startrek}, we notice the following equivalence:
\begin{equation} \label{eq:equiv}
\{ \psi_{j,\alpha}  = 1\}  = \{\alpha_j \le \alpha \},  
\end{equation}
where $\alpha$ is a given type-I error level, $\psi_{j,\alpha}$ is the test described in Algorithm \ref{algo:skipdown}, and $\alpha_j$ is defined in Algorithm \ref{algo:startrek}. \rev{Remark that $\{ \psi_{j,\alpha}  = 1\}  = \{\alpha_j \le \alpha \}$ implies the StarTrek Filter (Algorithm \ref{algo:startrek}) is essentially equivalent to the procedure described in \eqref{eq:BHq_alpha} where the test $ \psi_{j,\alpha}$ is calculated from the skip-down method (Algorithm \ref{algo:skipdown}).

To see why the equivalence in \eqref{eq:equiv} holds, we first} show $ \{\alpha_j \le \alpha \} \subset \{ \psi_{j,\alpha}  = 1\}$. Note
\begin{align}\nonumber
\{\alpha_j \le \alpha \} ~=& \bigcap_{1\le s \le  k_{\tau}}\{ \hat{c}^{-1}(\sqrt{n}|\tTheta_{j, (s)}|,E^{(s)}_j) \le \alpha\}\\
~=& \bigcap_{1\le s \le  k_{\tau}}\{ \sqrt{n}|\tTheta_{j, (s)}| \ge \hat{c}(\alpha, E^{(s)}_j) \} \label{eq:reject_set},
\end{align}
 where $E^{(s)}_j:=\{ (j,\ell):\ell \neq j, |\tTheta_{j\ell}| \le |\tTheta_{j, (s)}|\}$. The first equality is due to the definition of $\alpha_j$ and the second equality holds by the definition of $\hat{c}^{-1}$. Examining \eqref{eq:reject_set}, we immediately know $\sqrt{n}|\tTheta_{j, (1)}| \ge \hat{c}(\alpha, E^{(1)}_j)$ (here $E^{(1)}_j = E_0 = \{(k,j): k\in [d], k\ne j \}$), thus the edge corresponding to $\tTheta_{j, (1)}$ will be rejected in the first iteration of Algorithm \ref{algo:skipdown}. Regarding the edge corresponding to $\tTheta_{j, (2)}$, if $\sqrt{n}|\tTheta_{j, (2)}| \ge \hat{c}(\alpha, E^{(1)}_j)$, then it will be rejected in the first iteration, too. Otherwise, Algorithm \ref{algo:skipdown} enters the second iteration. Since \eqref{eq:reject_set} implies $\sqrt{n}|\tTheta_{j, (2)}| \ge \hat{c}(\alpha, E^{(2)}_j)$, we know the edge corresponding to $\tTheta_{j, (2)}$ must be rejected in the second iteration of Algorithm \ref{algo:skipdown}. Following this kind of argument, we are able to show that \eqref{eq:reject_set} implies that all those edges corresponding to $\{\tTheta_{j, (s)}, 1\le s \le  k_{\tau}\}$ will be rejected according to Algorithm \ref{algo:skipdown}. Since the number of rejected edges is at least $k_\tau$, we have $\psi_{j,\alpha}  = 1$. Second, we show $\{ \psi_{j,\alpha}  = 1\}  \subset \{\alpha_j \le \alpha \}$. If $\psi_{j,\alpha}  = 1\}$, we know the edges corresponding to $\{\tTheta_{j, (s)}, 1\le s \le  k_{\tau}\}$ will be rejected, which immediately imply $\sqrt{n}|\tTheta_{j, (1)}| \ge \hat{c}(\alpha, E^{(1)}_j)$. Regarding the edge corresponding to $\tTheta_{j, (2)}$, it must get rejected in the first two iterations of Algorithm \ref{algo:skipdown}. In either cases, we always have $\sqrt{n}|\tTheta_{j, (2)}| \ge \hat{c}(\alpha, E^{(2)}_j)$ due to $ E^{(2)}_j \subset E^{(1)}_j$ and the fact that $ \hat{c}(\alpha, E) \le  \hat{c}(\alpha, E') $ when $E \subset E'$. Finally, we establish \eqref{eq:equiv}.

%\lzmargin{explain the equivalence and why we need the lemma.}{}
\rev{
The validity of the single node test in Algorithm \ref{algo:skipdown} is mainly based on the following result which characterizes the accuracy of the approximate quantiles.
% Before going to the formal statements of Lemma \ref{lem:single_test}, we also present a key result below:
\begin{lemma}\label{lem:quantile} Let $\cU(M,s, r_0)$ denote the parameter space of precision matrices i.e., 
\begin{equation}\nonumber
\begin{aligned}
 \cU(M,s, r_0) &= \Big\{\bTheta \in \RR^{d \times d} \,\big|\,  \lambda_{\min}(\bTheta) \ge 1/r_0, \lambda_{\max}(\bTheta) \le r_0,  \max_{j \in [d]} \|\bTheta_{j}\|_{0} \le s, \|\bTheta \|_1 \le M \Big\}.
\end{aligned}
\end{equation}
Suppose $\bTheta \in \cU(M,s, r_0)$ and $(\log (dn))^7/n + s^2 (\log dn)^{4}/ {n} = o(1)$, for any edge set $E \subseteq \cV \times \cV$, we have
\begin{equation}\label{eq:quantile-valid2}
    \lim_{(n,d)\rightarrow \infty} \sup_{\bTheta \in \cU(M,s, r_0)}  \sup_{\alpha \in (0,1)} 
    \left|\PPP
    \left( \max_{e \in E}  \sqrt{n} |\tdTheta_{e}-\sTheta_e|> \hat{c}(\alpha, E) 
    \right) - \alpha
    \right|=0.
\end{equation}
where $\tdTheta$ defined in \eqref{eq:ggm_tdTheta} is the standardized version of the one-step estimator and $\sTheta$ denotes the
standardized true precision matrix $(\bTheta_{jk}/\sqrt{\bTheta_{jj} \bTheta_{kk}})_{j,k \in [d]}$.
\end{lemma}
}
% As mentioned in Section \ref{sec:startrek}, such a result
%In the above expression, the standardized precision matrix $(\bTheta_{jk}/\sqrt{\bTheta_{jj} \bTheta_{kk}})_{j,k \in [d]}$ is denoted by $\sTheta$. 
% Note that $\cU(M,s, r_0)$ denotes the parameter space of precision matrices and is defined as below:
\begin{lemma}\label{lem:single_test}
 Under the same conditions as Lemma \ref{lem:quantile}, given some $ 1\le j  \le d$, we have the following results.
%  \jlmargin{}{Bold $\Theta$?}
%  {\color{orange} Note that this signal strength condition is only needed for power results, not for type I error control.
%  }

\begin{enumerate}[(i)]
    \item \acc{Additionally, suppose for any $|\bTheta_{jk}|>0$, we also have $|\bTheta_{jk}| \ge c\sqrt{\log d/ n}$ for some constant $c>0$.} Under the alternative hypothesis $H_{1j}:  \jdeg \ge k_{\tau}$, we then have for any $\alpha \in (0,1)$,
    % If $\|\bTheta_j\|_{0} \ge k_{\tau}$, i.e., the alternative hypothesis $H_{1j}:\text{ degree of node } j \ge k$ is true, we have
    
 $$
   \lim_{(n,d)\rightarrow \infty} \PPP(\psi_{j,\alpha} = 1)= 1.
   %\text{ and } \lim_{(n,d)\rightarrow \infty} \PPP( \alpha_j \le 1/d) = 1;
 $$
    \item 
    % \st{Additionally, suppose for any $|\bTheta_{jk}|>0$, we also have $|\bTheta_{jk}| \ge c\sqrt{\log d/ n}$ for some constant $c>0$.} 
    Under the null hypothesis $H_{0j}:  \jdeg < k_{\tau} $, we have for any $u \in (0,1)$,
% $\|\bTheta_j\|_{0} \le k_{\tau}-1$, i.e., the null hypothesis $H_{0j}: \text{ degrees of node } j \le k$ is true, we have for any $u \in (0,1)$,
 $$
   \lim_{(n,d)\rightarrow \infty} \PP{\psi_{j,\alpha} = 1}\le \alpha. 
   %\text{ and } \lim_{(n,d)\rightarrow \infty} \PP{ \alpha_j \le u} = u.
 $$

\end{enumerate}
\end{lemma}
The proof of the above two lemmas are deferred to \acc{Appendix} \ref{sec:proof:lem:single_test}. 
% And the standardized precision matrix $(\bTheta_{jk}/\sqrt{\bTheta_{jj} \bTheta_{kk}})_{j,k \in [d]}$ is still denoted by $\bTheta$ without causing confusion. 
The maximum statistic used in our testing procedure takes the form of 
$T_{E}= \max_{(j,k) \in E}  \sqrt{n}|\tdTheta_{jk}| $. 
In our key proof procedure, we deal with the case where $E = \{(j,k):\bTheta_{jk}=0\}$. Since some of the results hold for general $E$, we will work with the general notations. Specifically, through out Appendices \ref{app:pf:thm:fdr_hub} and \ref{app:lems:fdr_hub}, we introduce the following notations: in order to approximate 
 \begin{equation}\label{eq:TE}
   T_E := \max_{(j,k) \in E}  \sqrt{n}\left|(  \dTheta_{jk}/\sqrt{\dTheta_{jj} \dTheta_{kk}} - {\bTheta_{jk}}/{\sqrt{\bTheta_{jj}\bTheta_{jk}}})\right|
 \end{equation}
 by the multiplier bootstrap process
 \begin{equation}\label{eq:TEcB}
   T_{E}^{\cB} :=\max_{(j,k) \in E}  \frac{1}{\sqrt{n~ \hat{\bTheta}_{jj}\hat{\bTheta}_{kk} }} \bigg| \sum_{i=1}^n \hat{\bTheta}_{j}^{\top} (\bX_i \bX_i^{\top} \hat{\bTheta}_{k}- \eb_k)\xi_i \bigg|,
 \end{equation}
we define two intermediate processes:
\begin{align}\label{eq:brTE}
  \breve{T}_E &:=  \max_{(j,k) \in E} \bigg| \frac{1}{\sqrt{n~ {\bTheta}_{jj}{\bTheta}_{kk}}}  \sum_{i=1}^n  {\bTheta}_{j}^{\top} (\bX_i \bX_i^{\top} {\bTheta}_{k}- \eb_k) \bigg|,
 \\ \label{eq:brTEcB}
 \breve{T}_{E}^{\cB} &:=  \max_{(j,k) \in E}\bigg| \frac{1}{\sqrt{n~ \bTheta_{jj}\bTheta_{kk}}}  \sum_{i=1}^n  {\bTheta}_{j}^{\top} (\bX_i \bX_i^{\top} {\bTheta}_{k}- \eb_k)\xi_i \bigg|.
\end{align}

\subsection{Proof of Theorem \ref{thm:fdr_hub}}
\label{app:pf:thm:fdr_hub}
\begin{proof}[Proof of Theorem \ref{thm:fdr_hub}]
\label{pf:thm:fdr_hub}
Given some $j\in\cH_{0}$, denote $N_{0j}=\{(j,k):\bTheta_{jk}=0\}$. 
By  the first part of Lemma \ref{lem:single_test}, 
%and Markov's inequality
we have \acc{$\forall j \in \cB$,}
\begin{equation}
 \label{eq:fact-e1}
 \acc{\PP{\psi_{j,\alpha} = 1} > 1- 3/d^2,}
%\frac{\sum_{j\in \cB}{ \psi_{j,\alpha}}  }{|\cB|} \rightarrow 1 \quad \text{in probability},
\end{equation}
when $\alpha = \Omega (1/d)$, where $\cB:=\{j\in \cH_{0}^{c}:\forall k \in \text{supp}(\bTheta_j),|\bTheta_{jk}|>c\sqrt{{\log d}/{n}}\}$. 
%%%% directly applying Markov, and we have the numerator is less than or equal to |\cB|, thus E|... - 1| = 1 - E(...)
%Then by the definition of $\hat{\alpha}$ in the StarTrek filter Algorithm \ref{algo:startrek}, we have
Note that we have
%together with the definition of signal strength condition \ref{cond:sparse_strength}, we have 
 \begin{equation}
 \label{eq:alpha_interval}
     \mathbb{P}\left(\frac{q|\cB|}{d}\le \hat{\alpha} \le 1\right) \ge \PP{ \frac{ {q|\cB|}/{d}\cdot d }{\sum_{j\in [d]} \psi_{j,{q|\cB|}/{d}}}\le q  } = \PP{ \frac{ |\cB| }{\sum_{j\in [d]} \psi_{j,{q|\cB|}/{d}}}\le 1  }  \acc{\ge 1 - 3/d },
     %% with eps = 1/2 or some othe choices
 \end{equation}
where the first inequality is by \eqref{eq:BHq_alpha} and \acc{the last inequality is due to  $q\frac{|\cB|}{d} = \Omega(1/d)$, \eqref{eq:fact-e1} and the the union bound}. Rewrite the FDP (with $\hat \alpha$) as
\begin{align} \nonumber
   \mathrm{FDP}(\hat \alpha) 
    :=~ \frac{\sum_{j\in \cH_0}{ \psi_{j,\hat \alpha} } }{\max \left\{   1, \sum_{j\in [d]} \psi_{j,\hat \alpha}  \right\}} %\\ \nonumber
  =~  \frac{ \hat \alpha d }{\max \left\{   1, \sum_{j\in [d]} \psi_{j,\hat \alpha}   \right\}}\cdot 
  {\frac{\sum_{j\in \cH_0} \psi_{j,\hat \alpha} }{ d_0 \hat \alpha  }}\cdot \frac{d_0}{d},
\end{align}
and notice that
$$
 \frac{ \hat \alpha d }{\max \left\{   1, \sum_{j\in [d]} \psi_{j,\hat \alpha}  \right\}}\cdot \frac{d_0}{d} \le \frac{q d_0}{d} \le q.
$$
Then it suffices to control the $\mathrm{FDP}(\hat \alpha)$ by dealing  with ${\rbr{\sum_{j\in \cH_0}\psi_{j,\hat \alpha}}}/{ d_0 \hat \alpha  }$. By \eqref{eq:alpha_interval}, the FDP control problem is now reduced to showing 
%\jlmargin{}{what does in prob mean?} 
\begin{equation} \nonumber
% \label{eq:fdp_term1}
    \sup_{\alpha\in [\alpha_L,1]}
    \frac{\sum_{j\in\cH_{0}}\psi_{j,\alpha}}{d_{0}\alpha} \le 1 +\smallop,
    %~~~ \text{in probability}.
\end{equation}
where $\alpha_L =q|\cB|/d$,
%\begin{equation} \nonumber
%% \label{eq:fdp_term1}
%    1 \ge  \sup_{\alpha\in [\alpha_L,1]}
%    \frac{\sum_{j\in\cH_{0}}\psi_{j,\alpha}}{d_{0}\alpha} - \smallop.
%    %~~~ \text{in probability}.
%\end{equation}
By \eqref{eq:monotone} in the proof of the second part of Lemma \ref{lem:single_test}, $\psi_{j,\alpha}=1$ implies that $\max_{e \in N_{0j} } \sqrt{n}|\tdTheta_e -\sTheta_e| \ge \hat c(\alpha, N_{0j})$, where $N_{0j}=\{(j,k) : \bTheta_{jk}=0\}  =\{(j,k) : \sTheta_{jk}=0\}$. Therefore, we have
\[
\frac{\sum_{j\in\cH_{0}}\psi_{j,\alpha}}{d_{0}\alpha} \le \frac{\sum_{j\in \cH_{0}} \Indrbr{\max_{e \in N_{0j} } \sqrt{n}|\tdTheta_e -\sTheta_e| \ge \hat c(\alpha, N_{0j}) } }{d_{0}\alpha}.
\]
Hence it suffices to prove that 
\begin{equation}\label{eq:fdp_term2}
    \sup_{\alpha \in [\alpha_L,1]} 
    \left |
    \frac{\sum_{j\in \cH_{0}}\Indrbr{\max_{e \in N_{0j} } \sqrt{n}|\tdTheta_e -\sTheta_e| \ge \hat c(\alpha, N_{0j}) }}{d_{0}\alpha} - 1 \right |\rightarrow 0 ~~~ \text{in probability}.
\end{equation}
% where $\tilde{\bTheta}_{jk} = {{\bTheta}_{jk}}/{\sqrt{{\bTheta}_{jj}{\bTheta}_{kk}}}$. 
In order to prove \eqref{eq:fdp_term2}, we construct a discrete grid of the interval $[\alpha_L,1]$. The number of grid points is denoted by $\lambda_d$ and will be decided later. First, we let $t_1 :=  \hat c(1, N_{0j}) = 0$, $t_{\lambda_d} := \hat c(\alpha_L, N_{0j})$. Here $\hat c(\alpha_L, N_{0j}) = \inf \left\{ t\in \RR : \PPP_\xi \left( T^{\cB}_{N_{0j}} \le t  \right) \ge 1-\alpha   \right\}$ is the quantile based on the Gaussian multiplier bootstrap process and depends on the data $\bX$. Note that the involving random vectors in the Gaussian multiplier bootstrap process are Gaussian conditioning on the data $\bX$ and have bounded variances with probability growing to $1$. Since $\alpha_L = \Omega(1/d)$, then by the maximal inequalities for sub-Gaussian random variables (Lemma 5.2 in \cite{van2014probability}), we have $t_{\lambda_d} = O(\sqrt{\log d})$ with probability growing to $1$. Second, note there exists $h_d$ such that $h_d t_{\lambda_d} = o(1)$ and $ t_{\lambda_d}/h_d = O(\log d)$. Based on such $h_d$, we construct equally spaced sequences $\{t_m\}_{m=1}^{\lambda_d}$ over the range $[t_1, t_{\lambda_d}] = [0, t_{\lambda_d}]$ with $t_{m} - t_{m-1} = h_d$. Then by setting $\alpha_m$ such that $t_{m} = \hat c(\alpha_m, N_{0j})$, we obtain a discrete grid $\{\alpha_m\}_{m=1}^{\lambda_d}$ of the interval $[\alpha_L,1]$. For such $\alpha_m, 1\le m \le \lambda_d$, we have
%, conditioning on the data $\bX$,
%In order to prove \eqref{eq:fdp_term2}, we construct a discrete grid of the interval $[\alpha_L,1]$: $1 = \alpha_{1} \ge \alpha_2 \ge \ldots \ge \alpha_{\lambda_{d}}=\alpha_L$ such that the quantiles of these grid points:  $t_{m} = \hat c(\alpha_m, N_{0j}) = q(\alpha_{m};T^{\cB}_{ N_{0j}})$, recalling that the Gaussian multiplier bootstrap process $T^{\cB}_{ N_{0j}}$ is defined in \eqref{eq:TEcB}, satisfy
%\[
%   \text{$t_{m}-t_{m-1}=h_{d}$ for some sequence $h_d = o(1/\sqrt{\log d})$, for all $1 \le m \le \lambda_d$. }
%\]
%Here the length of the grids $\lambda_d$ needs to be specified later to guarantee that the condition above is satisfied.
%Notice \jlmargin{$t_{\lambda_d} =O(\sqrt{\log d})$ since $\alpha_L = \Omega(1/d)$}{Why?}, we can choose $h_d$ decreasing (to zero) just a little bit faster than $1/\sqrt{\log d}$ such that $ h_d t_{\lambda_d} = o(1)$. \jlmargin{}{Still unclear how to construct $\alpha_m$}
%\jlmargin{Notice, $t_{\lambda_d} =O(\sqrt{\log d})$ \fbox{why} We can choose $h_d$ growing just a little slower than $1/\sqrt{\log d}$  such that $t_{\lambda_d}/h_d = O(\log d)$.}{}
% (Theorem $2.1$ in \cite{arun2018cram} $2$ or $3$)
%By applying \jlmargin{anti-concentration bound for Gaussian maxima}{cite} to $T^{\cB}_{ N_{0j}}$, 
%We have 
\begin{align} \nonumber
  \max_{1\le m \le \lambda_d} \left|\frac{\alpha_{m-1}}{\alpha_{m}}-1\right| 
  &= \max_{1\le m \le \lambda_d} \left|\frac{\PP{T^{\cB}_{ N_{0j}} > t_{m-1}}}{\PP{T^{\cB}_{ N_{0j}} > t_{m}}} - 1 \right| \\ \label{eq:alphas_ratio_bound}
  &\le \max_{1\le m \le \lambda_d} C''(t_m-t_{m-1})(t_m+1) \exp(C' (t_m-t_{m-1})(t_m+1) ) = o(1)
\end{align}
with probability growing to 1, where the first equality holds by the definition of $\alpha_m$, the first inequality holds due to part $2$ and $3$ of Theorem 2.1 in \cite{arun2018cram} (by first choosing $r-\epsilon, r+\epsilon$ in part $3$ to be $t_{m-1}, t_m$ respectively then letting $r-\epsilon, r$ in part $2$ to be $t_{m-1},t_{m}$ respectively). And the right hand side of the inequality is $o(1)$ since {$(t_{m}-t_{m-1})t_m \le h_d t_{\lambda_d} =o(1)$} with probability growing to 1.
%\jlmargin{where the first equality is because $t_m$ is the $\alpha_m$-quantile of $T_N$, the first inequality is due to part 3 of Theorem 2.1, cite{XX}, and the last inequality is due to $h_d t_m = XX = o(1)$}{}.
%uniformly over $m =1,\cdots,\lambda_d$. 

Denote  
%\begin{equation}
   $ I_j(\alpha) = \Indrbr{\max_{e \in N_{0j} } \sqrt{n}|\tdTheta_e -\sTheta_e| \ge \hat c(\alpha, N_{0j})}$.
%\end{equation}
Then given $\alpha_{m}\le \alpha \le \alpha_{m-1}$, for $m=1,\cdots,\lambda_d$, we have
\begin{eqnarray} \label{eq:squeeze}
       \frac{\sum_{j\in \cH_{0}} I_j(\alpha_{m})}{d_{0}\alpha_{m}}\cdot \frac{\alpha_{m}}{\alpha_{m-1}} \le \frac{\sum_{j\in \cH_{0}}  I_j(\alpha) }{d_{0}\alpha} 
       \le \frac{\sum_{j\in \cH_{0}} I_j(\alpha_{m-1})}{d_{0}\alpha_{m-1}}\cdot \frac{\alpha_{m-1}}{\alpha_{m}}.
\end{eqnarray}
 Hence by \eqref{eq:alphas_ratio_bound} and \eqref{eq:squeeze}, showing \eqref{eq:fdp_term2} is reduced to proving
 \begin{equation}\label{eq:seq_def}
     \max_{1\le m \le \lambda_{d}} 
    \left |
    \frac{\sum_{j\in \cH_{0}} I_j(\alpha_{m})}{d_{0}\alpha_{m}} - 1 \right |\rightarrow 0, ~~~ \text{in probability}.
\end{equation}
Then it suffices to show that, for any $\epsilon>0$, 
$$
\PP{
    \max_{1\le m \le \lambda_{d}} 
    \abr{
    \frac{\sum_{j\in \cH_{0}} I_j(\alpha_{m})}{d_{0}\alpha_{m}} - 1 
    }
    \ge \epsilon } \rightarrow 0.
$$
By the union bound argument and Chebyshev's inequality, we have
\begin{align} \nonumber
    &~~~~ 
    ~\PP{
    \max_{1\le m \le \lambda_{d}} 
    \abr{
    \frac{\sum_{j\in \cH_{0}} I_j(\alpha_{m})}{d_{0}\alpha_{m}} - 1 
    }\ge \epsilon}\\ \nonumber
    &\le ~\sum_{m=1}^{\lambda_{d}} 
   \PP{\abr{
    \frac{\sum_{j\in \cH_{0}} I_j(\alpha_{m})}{d_{0}\alpha_{m}} - 1 } \ge \epsilon}\\  \label{eq:seq_moments}
    &\le~  \sum_{m=1}^{\lambda_{d}} \frac{\EE{\sum_{j\in \cH_{0}} I_j(\alpha_{m}) - d_{0}\alpha_{m}}^2}{\epsilon^{2} d_{0}^{2}\alpha_{m}^{2}}\\ %\label{eq:seq_III_12}
    \begin{split}
    & = ~ \underbrace{\sum_{m=1}^{\lambda_{d}} \frac{\sum_{j\in \cH_{0}}\Var{I_j(\alpha_{m}) - d_{0}\alpha_{m}}}{\epsilon^{2} d_{0}^{2}\alpha_{m}^{2}}}_{\mathrm{III}_1} + \underbrace{ 
    \sum_{m=1}^{\lambda_{d}} \frac{\rbr{\EE{\sum_{j\in \cH_{0}} I_j(\alpha_{m}) - d_{0}\alpha_{m}}}^2}{\epsilon^{2} d_{0}^{2}\alpha_{m}^{2}} }_{\mathrm{III}_2} \\ \label{eq:seq_III_3}
    &~~~~~ +\underbrace{ \sum_{m=1}^{\lambda_{d}}\frac{ 
    \sum_{j_1,j_2 \in \cH_0, j_1 \ne j_2} \Cov{I_{j_1}(\alpha_m)}{ I_{j_2}(\alpha_m)}
    }{\epsilon^{2} d_{0}^{2}\alpha_{m}^{2}}}_{\mathrm{III}_3}.
    \end{split}
\end{align}
By Lemma \ref{lem:moments_Ij} and Lemma \ref{lem:crossterm_Ij}, we have 
\begin{eqnarray}\nonumber
\mathrm{III}_{1} + \mathrm{III}_{2}  +\mathrm{III}_{3}&\le& \frac{C't_{\lambda_d} }{\epsilon^2h_d } 
\left( \frac{d}{d_0|\cB|}+
\eta^2(d,n)\right) 
+ \frac{C'''d}{\epsilon^2 |\cB|d_0}  \cdot\frac{t_{\lambda_d}}{h_d  } \cdot
\left(1 + \eta(d,n) d_0+\frac{|S|\log d}{d_0 p}
\right)\\
&\le& \frac{C_1 t_{\lambda_d} \eta^2(d,n)}{\epsilon^2h_d } + \frac{C_2}{\epsilon^2 \rho d_0}  \cdot\frac{t_{\lambda_d}}{h_d  } \cdot \left(1 + \eta(d,n) d_0+\frac{|S|\log d}{d_0 p}
\right),
\label{eq:III123_bound}
\end{eqnarray}
where we substitute $\zeta_1 = {s(\log d)^2}/\sqrt{n}$, $\zeta_2 = 1/d^2$ and $\alpha_L = q|\cB|/d = \Omega  (\rho) $ in $\eta(d,n,\zeta_1,\zeta_2, \alpha_L)$ of Lemma \ref{lem:moments_Ij} and note $|\cB|>0$ then obtain the concise form $\eta(d,n)$ below,
\begin{eqnarray} \nonumber
     \eta(d,n) = \frac{(\log d)^{19/6}}{n^{1/6}} + \frac{ (\log d)^{11/6}}{\rho^{1/3} n^{1/6}} + \frac{{s(\log d)^{3}}}{n^{1/2}} + \frac{1}{d}.
%   \eta(d,n,\zeta_1,\zeta_2, \alpha_L)
    % &=& \eta_1(d,n,\alpha_L)+\zeta_1 q(\alpha_{L}/2;T_{\bZ})+\frac{\zeta_2}{\alpha_L}\\
%     &=& \frac{(\log d)^{8/3}}{n^{1/6}}\cdot \sqrt{\log d} ~\bigvee
% \frac{(\log d)^{{5}/{8}}}{n^{1/8}}\cdot {\log d} + \frac{{s(\log d)^{5/2}}}{\sqrt{n}} +     \frac{1}{d|\cB|}
\end{eqnarray}
%Note that in above terms, $t_{\lambda_d} = q(\alpha_L;T^B_{N_{0j}}) = O\rbr{\sqrt{\log d}}$ for any $j$, $h_d$ is some sequence satisfying $h_d = o(1/\sqrt{\log d})$. 
%As for $\eta(d,n,\zeta_1,\zeta_2, \alpha_L)$, we apply the result in Lemma \ref{lem:quantile}, i.e., 
%$\zeta_1 = {s(\log d)^2}/\sqrt{n}$, $\zeta_2 = 1/d^2$. 
% Simply bound $q(\alpha_{L}/2;T_{\bZ}) = O\rbr{\sqrt{\log d}}$, 
%thus obtain the following (with constants ignored)
%by the definition of $\alpha_L$ and the assumption $|\cB|>0$ (which holds under the assumption on $\rho$ in Theorem \ref{thm:fdr_hub}). 
Recall that $t_{\lambda_d} = q(\alpha_L;T^B_{N_{0j}}) = O\rbr{\sqrt{\log d}}$ with probability growing to 1 and $t_{\lambda_d}/h_d = O(\log d)$.
Under {Assumption \ref{asp:tradeoff_fdp}},
%the stated condition in Theorem \ref{thm:fdr_hub},
we have 
\begin{eqnarray} \nonumber
 \frac{ \log d }{\rho}\rbr{
    \frac{(\log d)^{19/6}}{n^{1/6}} + \frac{ (\log d)^{11/6}}{\rho^{1/3} n^{1/6}} + 
% \frac{(\log d)^{{13}/{8}}}{n^{1/8}} 
    \frac{{s(\log d)^{3}}}{{n}^{1/2}}
   } = o(1),\quad  \frac{\log d}{\rho d_0}  +  \frac{({\log d})^{2}|S|}{\rho d_0^2 p} = o(1), %\frac{ (\log d)^{11/6}}{\rho^{1/3} n^{1/6}} 
\end{eqnarray}
and thus $\mathrm{III}_{1} + \mathrm{III}_{2}  +\mathrm{III}_{3} = o(1)$ with probability growing to 1.
% and
% \begin{eqnarray}
%  \RN{4}_2 =  \frac{(\sqrt{\log d})^{\delta +4}|S|}{\rho d_0^2 p}
% \end{eqnarray}
%Then rearranging the terms in \eqref{eq:III123_bound}, we can derive $ \mathrm{III}_{1} + \mathrm{III}_{2}  +\mathrm{III}_{3} = o(1)$ when $\RN{4}_1 + \RN{4}_2=o(1) $.
% \RN{4}_1 + \RN{4}_2 $ up to some factor of ${C_3}/{\epsilon^2}$, where
Therefore, we have proved $\eqref{eq:fdp_term2}$, and finally establish the FDP control result below,
% $$
%   \lim_{(n,d)\rightarrow \infty}  \mathrm{FDP} \le q\frac{d_{0}}{d} ~~~ \text{in probability}.
% $$
$$
\mathrm{FDP}(\hat \alpha) \le q\frac{d_0}{d} + \smallop.
$$
%$$
%q\frac{d_0}{d} \ge \mathrm{FDP}(\hat \alpha)  - \smallop.
%$$
In order to establish FDR control, it remains to check the uniformly integrability of the random variable sequence in \eqref{eq:seq_def}. Note for a sequence of random variable $R_1,R_2, \cdots$, we have $
\sup_{n} \EE{|R_n|\Indrbr{|R_n| > x}} \le x^{-1} \sup_{n} \EE{R_n^2}$ by Markov's inequality. Then to show the uniform integrability of the random variable sequence $\{R_n \}_{n=1}^\infty$, where $R_n = \max_{1\le m \le \lambda_{d}} 
    \abr{
    \frac{\sum_{j\in \cH_{0}} I_j(\alpha_{m})}{d_{0}\alpha_{m}} - 1 
    }
    $, it suffices to show $\sup_n \EE{R^2_n} < \infty$. Indeed, we have
    \begin{eqnarray*} \nonumber
    &~~&  \sup_{n}\EE{ \rbr{ \max_{1\le m \le \lambda_{d}} 
    \abr{
    \frac{\sum_{j\in \cH_{0}} I_j(\alpha_{m})}{d_{0}\alpha_{m}} - 1 
    } }^2}\\
    &\le & \sup_{n} \sum_{m=1}^{\lambda_{d}} \frac{\EE{\sum_{j\in \cH_{0}} I_j(\alpha_{m}) - d_{0}\alpha_{m}}^2}{ d_{0}^{2}\alpha_{m}^{2}} \\
    & = & \sup_{n} \epsilon^2 ( \mathrm{III}_{1} + \mathrm{III}_{2}  +\mathrm{III}_{3}). 
        \end{eqnarray*} 
Since $ \mathrm{III}_{1} + \mathrm{III}_{2}  +\mathrm{III}_{3} = o(1)$ with probability growing to 1, we immediately have  $\sup_n \EE{R^2_n} < \infty$, thus finally establish the FDR control result:
$$
\lim_{(n,d)\rightarrow \infty}\mathrm{FDR} \le q\frac{d_{0}}{d}.
$$
\end{proof}
% In Theorem \ref{thm:fdr_hub}, we have showed $\mathrm{III}_{1}$ and $\mathrm{III}_{2}$ are negligible. 
% and $V$ which only differs from $U$ by the fact that its upper block covariance matrix is purely zero, we have for case $1$,
% \begin{equation}
% \left|
% \frac{\sum_{j,m \in \cH_{0}, \bTheta_{jm}=0} \cov(G_j(\alpha),G_m(\alpha))}{\epsilon^{2}d_{0}^{2}\alpha^{2}} 
% \right| 
% \le  C_{\bTheta}\frac{k_{\tau}^{2} {\log d}}{p\epsilon^{2}}
% \end{equation}
% where $C_{\bTheta}$ is the constant, only depending on the constant in $\cU(M,s, r_0)$.

% also assume there exists a certain number of independent blocks, whose cardinality is $p$, then we have
% \begin{equation}
%     \left|\frac{\mathbb{P}(T_{U} > t)}{\mathbb{P}(T_{V} > t)}-1\right|\le C_{\bSigma} ???
% \end{equation}
% where $0\le t \le 2\sigma_{\max}\sqrt{\log d}$ (Note that we only need to prove the case where $\frac{\omega_d}{d} \le \alpha \le 1$. When choosing $\alpha = \frac{2}{d}$, we can find the upper bound of $t$ by considering the independence case) and $C_{\bSigma}$

\subsection{Ancillary lemmas for Theorem \ref{thm:fdr_hub}}
\label{app:lems:fdr_hub}
\begin{lemma}\label{lem:moments_Ij}
Recalling the definitions of $\mathrm{III}_{1}, \mathrm{III}_{2}$ in \eqref{eq:seq_III_3}, we have
$$
\mathrm{III}_{1} + \mathrm{III}_{2}\le  \frac{C't_{\lambda_d} }{\epsilon^2h_d } 
\left(
\frac{1}{\rho d_0}+
\eta^2(d,n,\zeta_1,\zeta_2, \alpha_L)  \right),
%\frac{C'}{\epsilon^2}\cdot \frac{q(\alpha_L;T^B)\eta_2(d,n,\alpha_L) }{h_d} +
$$
where $\eta(d,n,\zeta_1,\zeta_2, \alpha_L)= O \big(\frac{(\log d)^{19/6}}{n^{1/6}} + \frac{ (\log d)^{11/6}}{n^{1/6}\alpha_L^{1/3}} + \zeta_1 \log d + \frac{\zeta_2}{\alpha_L}\big)$ with $\zeta_1 = {s(\log d)^2}/\sqrt{n} $, $\zeta_2 = 1/d^2$.
\end{lemma}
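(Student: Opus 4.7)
The plan is to bound $\mathrm{III}_{1}$ and $\mathrm{III}_{2}$ separately, exploiting that (a) $I_{j}(\alpha_{m})$ is an indicator so $\Var{I_{j}(\alpha_{m})} \le \EE{I_{j}(\alpha_{m})}$, (b) $\alpha_{m} \ge \alpha_{L}$ throughout the grid, and (c) the grid length satisfies $\lambda_{d} \le t_{\lambda_{d}}/h_{d}$. The non-trivial work is to produce a Cram\'er-type relative error bound
\[
\Lambda_{j,m}\,:=\,\left|\frac{\PP{\max_{e \in N_{0j}} \sqrt{n}|\tdTheta_{e}-\sTheta_{e}| \ge \hat{c}(\alpha_{m}, N_{0j})}}{\alpha_{m}}-1\right| \le \eta(d,n,\zeta_{1},\zeta_{2},\alpha_{L}),
\]
valid uniformly over $j \in \cH_{0}$ and $1 \le m \le \lambda_{d}$, after which the two pieces assemble almost mechanically.

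Granted $\Lambda_{j,m}\le \eta$, I would write $\EE{I_{j}(\alpha_{m})} \le \alpha_{m}(1+\eta)$, so $\Var{I_{j}(\alpha_{m})} \le \alpha_{m}(1+\eta)$, and
\[
\mathrm{III}_{1} \;\le\; \sum_{m=1}^{\lambda_{d}} \frac{d_{0}\alpha_{m}(1+\eta)}{\epsilon^{2} d_{0}^{2}\alpha_{m}^{2}} \;\le\; \frac{(1+\eta)\lambda_{d}}{\epsilon^{2} d_{0}\alpha_{L}} \;\le\; \frac{C\, t_{\lambda_{d}}}{\epsilon^{2} h_{d} d_{0}\alpha_{L}}.
\]
Similarly, $|\EE{\sum_{j\in\cH_{0}} I_{j}(\alpha_{m})} - d_{0}\alpha_{m}| \le d_{0}\alpha_{m}\eta$, so
\[
\mathrm{III}_{2} \;\le\; \sum_{m=1}^{\lambda_{d}} \frac{(d_{0}\alpha_{m}\eta)^{2}}{\epsilon^{2} d_{0}^{2}\alpha_{m}^{2}} \;=\; \frac{\lambda_{d}\eta^{2}}{\epsilon^{2}} \;\le\; \frac{C\, t_{\lambda_{d}}\eta^{2}}{\epsilon^{2} h_{d}}.
\]
Using $\alpha_{L} = \Omega(\rho)$ and adding the two bounds yields the claimed inequality, with a single constant $C'$ absorbing the pre-factors.

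The bulk of the effort is therefore the derivation of $\Lambda_{j,m}\le \eta$, which I would obtain by chaining three approximations between the statistic $T_{N_{0j}}$ and the bootstrap process $T^{\cB}_{N_{0j}}$ using the intermediate processes $\breve T_{N_{0j}}$ and $\breve T^{\cB}_{N_{0j}}$ defined in \eqref{eq:brTE}--\eqref{eq:brTEcB}. First, the deterministic bias of the one-step estimator is controlled by $\|\lerr\|_{\infty} = O(s\log d/\sqrt{n})$, converting the comparison $T_{N_{0j}}$ vs.\ $\breve T_{N_{0j}}$ into the contribution $\zeta_{1}\log d$. Second, the Cram\'er-type moderate deviation bound for the high-dimensional CLT (Theorem 2.1 of Koike, 2018) applied to $\breve T_{N_{0j}}$ versus its Gaussian analogue produces the $(\log d)^{19/6}/n^{1/6}$ and $(\log d)^{11/6}/(n^{1/6}\alpha_{L}^{1/3})$ terms. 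Third, the gap between the Gaussian analogue and the plug-in bootstrap $T^{\cB}_{N_{0j}}$ is handled by the new Cram\'er-type Gaussian comparison bound Theorem~\ref{thm:ccb_max}: conditional on $\bX$, both have conditionally Gaussian structure with covariances differing, in $\ell_{\infty}$, by $\|\hat\bTheta-\bTheta\|_{\max}$ times a constant, which is $O(\sqrt{\log d/n})$ with probability at least $1-\zeta_{2} = 1-d^{-2}$ by standard concentration for sub-Gaussian quadratic forms. The exceptional event on which any of these concentration bounds fails is absorbed into the term $\zeta_{2}/\alpha_{L}$.

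The main obstacle is step three: Theorem~\ref{thm:ccb_max} delivers a relative error of order $(\log d)^{5/2}\maxdiff^{1/2}$, and one has to verify on a high-probability event that $\maxdiff \lesssim \log d/\sqrt{n}$ for the conditional covariance difference, so that the resulting $(\log d)^{3}/n^{1/4}$ contribution is dominated by the CLT-based terms already in $\eta$. Care is also needed so that the relative-error approximations composed across the three steps add (not multiply) to $O(\eta)$; this is handled by writing each ratio as $1+(\text{error})$ and using $|1{-}\prod_{k}(1{+}e_{k})| \le \sum_{k}|e_{k}| + O((\sum_{k}|e_{k}|)^{2})$, with the quadratic remainder absorbed under the scaling conditions of Assumption~\ref{asp:tradeoff_fdp}. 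No step requires finer control than what the stated probabilistic tools already supply.
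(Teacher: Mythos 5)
Your assembly of $\mathrm{III}_1$ and $\mathrm{III}_2$ from the Cram\'er-type relative-error estimate $\Lambda_{j,m}\le\eta$ is exactly what the paper does — bound the variance of the indicator by its mean, use $\alpha_m\ge\alpha_L$, and use $\lambda_d\le t_{\lambda_d}/h_d+1$ — so that half of the proposal is essentially identical. The paper simply invokes its Proposition \ref{prop:approx_cramer_gmb} (applied to $T_E,\breve T_E,T_E^{\cB},\breve T_E^{\cB}$ with $E=N_{0j}$) to get the relative error bound, whereas you propose to re-derive that bound inline.

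The re-derivation, however, has a substantive misattribution. You claim the Cram\'er-type moderate deviation bound for the high-dimensional CLT (arun2018cram, Theorem 2.1 / Corollary 5.1) produces \emph{both} $(\log d)^{19/6}/n^{1/6}$ and $(\log d)^{11/6}/(n^{1/6}\alpha_L^{1/3})$. That result is uniform over the moderate-deviation quantile range and carries no $\alpha$-dependence; it can only supply the $(\log d)^{19/6}/n^{1/6}$ term. The $\alpha_L^{-1/3}$ factor in the paper's $\eta$ arises from the Gaussian comparison step, specifically by combining Theorem \ref{thm:ccb_max} with Markov's inequality on the covariance-error $\maxdiff$: one pays $(\log d)^{5/2}\delta^{1/2} + \PP{\maxdiff>\delta}/\alpha$ and optimizes $\delta$, giving $(\EE{\maxdiff}\,(\log d)^5/\alpha)^{1/3}\asymp (\log d)^{11/6}/(n^{1/6}\alpha^{1/3})$ under $\EE{\maxdiff}\lesssim\sqrt{\log d/n}$. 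Your alternative — a high-probability truncation $\PP{\maxdiff>C\sqrt{\log d/n}}\le d^{-2}$ (from Bernstein-type concentration, since the entries $\bY_{ij}$ are sub-exponential, not sub-Gaussian) plus charging the exceptional event to $\zeta_2/\alpha_L$ — would instead give a CCB contribution of order $(\log d)^{5/2}(\log d/n)^{1/4}=(\log d)^{11/4}/n^{1/4}$ on the good event, which is dominated by $(\log d)^{19/6}/n^{1/6}$, plus $d^{-2}/\alpha_L$. That is an $O(\eta)$ bound (indeed slightly sharper), so the lemma would still go through, but it is not the paper's derivation and it does not produce the $(\log d)^{11/6}/(n^{1/6}\alpha_L^{1/3})$ term you listed. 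You should recognize that the $\alpha_L^{-1/3}$ factor is a signature of the Markov-truncation route, not of the HDCLT moderate-deviation result.
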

\begin{proof}[Proof of Lemma \ref{lem:moments_Ij}]
First note the definitions of $T_{E}, \breve{T}_E, T_{E}^{\cB}$ and $\breve{T}_{E}^{\cB}$ in \eqref{eq:TE}, \eqref{eq:brTE}, \eqref{eq:TEcB} and \eqref{eq:brTEcB} respectively, then we apply Proposition \ref{prop:approx_cramer_gmb} to $T=T_{E}, T_{\bY} = \breve{T}_E, T^\cB = T_{E}^{\cB}, T_{\cW} = \breve{T}_{E}^{\cB}$ with $E=N_{0j}$. And we can find the terms $\zeta_1, \zeta_2$ in \eqref{eq:approx_cond1}, \eqref{eq:approx_cond2} to be ${s(\log d)^2}/\sqrt{n}, 1/d^2$ respectively, due to \eqref{eq:zeta12_rate1} and \eqref{eq:zeta12_rate2} (i.e., the bound on the differences $T_E -T_0$, $T_{E}^{\cB}-T_0^{\cB}$) in the proof of Lemma \ref{lem:quantile}. Thus we have 
\begin{equation}\label{eq:fdp_cgmb}
    % \sup_{\alpha \in [\alpha_{L},1]} 
    \left |
    \frac{\mathbb{P}(\max_{e \in N_{0j} } \sqrt{n}|\tdTheta_e -\sTheta_e| \ge \hat c(\alpha, N_{0j}) )}{\alpha} - 1 \right | = \eta(d,n,\zeta_1,\zeta_2, \alpha_L),
\end{equation}
where $ \sTheta_e= 0, e\in N_{0j} $ and $\eta(d,n,\zeta_1,\zeta_2, \alpha_L)= O\rbr{\frac{(\log d)^{19/6}}{n^{1/6}} + \zeta_1 \log d + \frac{\zeta_2}{\alpha_L}}$ with $\zeta_1 = {s(\log d)^2}/\sqrt{n}$, $\zeta_2 = 1/d^2$. Recalling the definition of $\mathrm{III}_{2}$ in \eqref{eq:seq_III_3}, we have
$$
\mathrm{III}_{2} = \sum_{m=1}^{\lambda_{d}} \frac{\rbr{\EE{\sum_{j\in \cH_{0}} I_j(\alpha_{m}) - d_{0}\alpha_{m}}}^2}{\epsilon^{2} d_{0}^{2}\alpha_{m}^{2}},
$$
where $I_j(\alpha) = \Indrbr{\max_{e \in N_{0j} } \sqrt{n}|\tdTheta_e - \sTheta_e| \ge \hat c(\alpha, N_{0j})}$. Note that $\alpha_m \in [\alpha_L, 1],$ $\forall~ 1\le m \le \lambda_d$, then we arrive at the following bound
\begin{equation}
\label{eq:III2}
\mathrm{III}_{2} \le \frac{\lambda_d}{\epsilon^2}\cdot \eta^2(d,n,\zeta_1,\zeta_2, \alpha_L) \le \frac{t_{\lambda_d}}{\epsilon^2 h_d}\cdot \eta^2(d,n,\zeta_1,\zeta_2, \alpha_L)
\end{equation}
% we can bound $\mathrm{III}_{2}$ by $\frac{\lambda_d}{\epsilon^2}\cdot\eta_2(d,n,\alpha_L)$. Since $\lambda_{d}\sim \frac{t_{\lambda_d}}{h_{d}}\le \frac{q(\alpha_{L};T_{Z})}{h_{d}} = (\sqrt{\log d})^{1+\delta}$ for some $\delta>1$, the term $\mathrm{III}_{2}$ will vanish \cyancom{since the $o(1)$ term above actually has a rate of polynomial decaying in $n$, up to some polynomial factor in $\log d$.}
% By the independence assumption about GMB p-values, together with \eqref{eq:fdp_cgmb}, we obtain
up to some constant, where the first inequality holds by \eqref{eq:fdp_cgmb}. As for the second inequality, we recall the construction of $\{t_m\}_{m=1}^{\lambda_d}$ (over the course of derivations from \eqref{eq:fdp_term2} to \eqref{eq:alphas_ratio_bound}) in the proof of Theorem \ref{thm:fdr_hub} thus note $\alpha_1 = 1, t_1 = 0$ and $t_{\lambda_d} - t_{1} = \sum_{m=2}^{\lambda_d} (t_m - t_{m-1}) = (\lambda_d - 1)h_d$. 
%(see the construction of $\{t_m\}_{m=1}^{\lambda_d}$ in the proof of Theorem \ref{thm:fdr_hub} and the properties (i), (ii), (iii)). 
Regarding the term $\mathrm{III}_{1}$, we have
\begin{eqnarray}\nonumber
    \mathrm{III}_{1} 
    &=& \sum_{m=1}^{\lambda_{d}} \frac{\sum_{j\in \cH_{0}}\Var{I_j(\alpha_{m}) - d_{0}\alpha_{m}}}{\epsilon^{2} d_{0}^{2}\alpha_{m}^{2}}\\
    &=& \sum_{m=1}^{\lambda_{d}} \frac{\sum_{j\in \cH_{0}}\EEE( I_j(\alpha_{m}))(1-\EEE( I_j(\alpha_{m}))}{\epsilon^{2} d_{0}^{2}\alpha_{m}^{2}}
    \le \frac{1}{\epsilon^2 d_{0}}\sum_{m=1}^{\lambda_{d}} \frac{C}{\alpha_{m}}
    % &\le& \frac{1}{\epsilon^2 d_{0}h_{d}} \int_{0}^{t_{\lambda_{d}}}\frac{\eta_2(d,n,\alpha_L) }{\mathbb{P}(T^{\cB} > t )} dt + \frac{1}{\epsilon^2 d_{0}} \frac{\eta_2(d,n,\alpha_L) }{\alpha_{\lambda_{d}}}\\
     \le \frac{C}{\epsilon^2 d_{0}\alpha_{L}} \cdot\frac{t_{\lambda_d}}{h_d  }, \label{eq:III1}
     %\frac{1}{\epsilon^2 d_{0}} \frac{\eta_2(d,n,\alpha_L) }{\alpha_{\lambda_{d}}}\\
     %&\le& \frac{C}{\epsilon^2 \rho} \cdot \frac{q(\alpha_{L};T_{Z})}{w_{d}h_{d}}
\end{eqnarray}
where the first inequality holds due to $\eqref{eq:fdp_cgmb}$ and the second inequality holds since $\alpha_m \ge \alpha_L$ $\forall~ 1\le m \le \lambda_d$ and $t_{\lambda_d} = (\lambda_d - 1)h_d$.
% we have
% \begin{equation}
%      \mathrm{III}_{1}  \le \frac{1}{\epsilon^2 d_{0}}\sum_{m=1}^{\lambda_{d}} \frac{C}{\mathbb{P}(T_{\bZ} > t_{m} )}\le 
%      \frac{1}{\epsilon^2 d_{0}h_{d}} \int_{0}^{t_{\lambda_{d}}}\frac{C}{\mathbb{P}(T^{\cB} > t )} dt + \frac{1}{\epsilon^2 d_{0}} \frac{C}{\alpha_{\lambda_{d}}}
% \end{equation}
Therefore, combining \eqref{eq:III2} with \eqref{eq:III1}, we obtain
\begin{eqnarray}\nonumber
\mathrm{III}_{1}  + \mathrm{III}_{2} 
\le \frac{1}{\epsilon^2} \cdot \frac{t_{\lambda_d}}{h_d  } \left( \frac{C}{d_0 \alpha_L} + \eta^2(d,n,\zeta_1,\zeta_2, \alpha_L) \right)
\le   \frac{C't_{\lambda_d} }{\epsilon^2h_d } \left(
\frac{1}{\rho d_0}+
\eta^2(d,n,\zeta_1,\zeta_2, \alpha_L) 
\right)
\end{eqnarray}
for some constant $C'$, where the second inequality holds by the definition $\alpha_L =q|\cB|/d$ in the proof of Theorem \ref{thm:fdr_hub} and the definition $\rho = |\cB|/d$ in Section \ref{sec:hub_selection}.
\end{proof}

\begin{lemma}\label{lem:crossterm_Ij}
Recalling the definition of $\mathrm{III}_{3}$ in \eqref{eq:seq_III_3}, we have
$$
\mathrm{III}_{3} \le 
\frac{C'''t_{\lambda_d}}{\rho\epsilon^2 d_0 h_d}
\left(1 + \eta(d,n,\zeta_1,\zeta_2, \alpha_L)  d_0+\frac{|S|\log d}{d_0 p}
\right),
%  \frac{C'''d}{\epsilon^2 |\cB|d_0}  \cdot\frac{t_{\lambda_d}}{h_d  } \cdot
% \left(1 + \eta_2(d,n,\alpha_L)d_0+\frac{|S|\log d}{d_0 p}
% \right)
%\frac{C'''d}{\epsilon^2 d_{0}|\cB|}  \cdot\frac{t_{\lambda_d}}{h_d  } \cdot
%\left(1 + \eta_2(d,n,\alpha_L)+\frac{|S|\log d}{d_0 p}\right)
$$
where $\eta(d,n,\zeta_1,\zeta_2, \alpha_L)= O \big(\frac{(\log d)^{19/6}}{n^{1/6}} + \frac{ (\log d)^{11/6}}{n^{1/6}\alpha_L^{1/3}} + \zeta_1 \log d + \frac{\zeta_2}{\alpha_L}\big)$ with $\zeta_1 = {s(\log d)^2}/\sqrt{n}$, $\zeta_2 = 1/d^2$.
\end{lemma}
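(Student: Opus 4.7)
The plan is to bound each cross-covariance
\[
\mathrm{Cov}(I_{j_1}(\alpha), I_{j_2}(\alpha)) = \PP{A_{j_1}\cap A_{j_2}} - \PP{A_{j_1}}\PP{A_{j_2}},
\]
where $A_j = \{\max_{e\in N_{0j}}\sqrt{n}|\tdTheta_e - \sTheta_e|\ge \hat c(\alpha, N_{0j})\}$, by combining a Gaussian bootstrap approximation with the Cram\'er-type comparison of Theorem \ref{thm:ccb_sparse_unitvar}, and then aggregating the pair-wise bounds via the combinatorial count $|S|$. First I would apply inclusion-exclusion, $\PP{A_{j_1}\cap A_{j_2}} = \PP{A_{j_1}}+\PP{A_{j_2}}-\PP{A_{j_1}\cup A_{j_2}}$, so that $A_{j_1}\cup A_{j_2}$ becomes a single max-norm event over $N_{0j_1}\cup N_{0j_2}$. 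Invoking Proposition \ref{prop:approx_cramer_gmb} (the same device driving Lemma \ref{lem:moments_Ij}) on this joint max and on each marginal reduces every probability to a centered-Gaussian tail $\PP{\maxnorm{U}>\tau}$, up to a multiplicative factor $1+O(\eta)$ with $\eta=\eta(d,n,\zeta_1,\zeta_2,\alpha_L)$; carried through $d_0^2$ pairs, these errors eventually produce the $\eta\,d_0$ term inside the final bracket.

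The main step compares the joint Gaussian tail to an independent-block version. I would introduce an auxiliary centered Gaussian $V_{j_1,j_2}$ matching $U_{j_1,j_2}$ on the two diagonal blocks indexed by $N_{0j_1}$ and $N_{0j_2}$ but with all cross-block covariance entries zeroed. Block independence of $V_{j_1,j_2}$ gives $\PP{\maxnorm{V_{j_1,j_2}}>\tau} = \PP{A_{j_1}}+\PP{A_{j_2}}-\PP{A_{j_1}}\PP{A_{j_2}}$, so up to the first-step error the covariance equals $\PP{\maxnorm{V_{j_1,j_2}}>\tau} - \PP{\maxnorm{U_{j_1,j_2}}>\tau}$. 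Tracing through the linear-score form of the one-step estimator \eqref{eq:ggm_tdTheta} and the bootstrap \eqref{eq:gmb_quantile}, the covariance of $\sqrt{n}\tdTheta_{j_1k_1}$ with $\sqrt{n}\tdTheta_{j_2k_2}$ under the null is governed by $\bTheta_{j_1j_2}\bTheta_{k_1k_2}+\bTheta_{j_1k_2}\bTheta_{j_2k_1}$, so on pairs with $\bTheta_{j_1j_2}=0$ the nonzero cross-block entries are in bijection with the 4-tuples in the slice $S_{j_1,j_2}\subseteq S$ at $(j_1,j_2)$. Applying Theorem \ref{thm:ccb_sparse_unitvar} with $\discon$ set to the number $p$ of connected components of the joint precision graph then yields a relative comparison error of order $|S_{j_1,j_2}|\log d/p$ uniformly for $\tau\le C_0\sqrt{\log d}$.

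Combining the two steps I obtain $\abr{\mathrm{Cov}(I_{j_1}(\alpha),I_{j_2}(\alpha))}\lesssim \alpha(\eta + |S_{j_1,j_2}|\log d/p)$ for pairs with $\bTheta_{j_1j_2}=0$. The remaining pairs with $\bTheta_{j_1j_2}\ne 0$ number at most $d_0 k_\tau = O(d_0)$, because each non-hub row carries fewer than $k_\tau$ nonzero entries; for these I use the trivial estimate $\mathrm{Cov}\le \min(\PP{A_{j_1}},\PP{A_{j_2}})\le \alpha(1+O(\eta))$, which after normalization by $d_0^2\alpha_m^2$ supplies the leading $1$ inside the bracket. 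Summing $\sum_{j_1\ne j_2}|S_{j_1,j_2}|\le 2|S|$, dividing by $d_0^2\alpha_m^2$ with $\alpha_m\ge \alpha_L=\Omega(\rho)$, and summing over the $\lambda_d\le t_{\lambda_d}/h_d$ grid points produces the claimed bound on $\mathrm{III}_3$.

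The principal obstacle will be the second step: matching the cross-block $\ell_0$-sparsity pattern of the Gaussian bootstrap covariance to the combinatorial count $|S|$ of \eqref{eq:dep_term_set}, and verifying that Theorem \ref{thm:ccb_sparse_unitvar} applies with $\discon$ proportional to $p$. This requires a careful covariance computation for the linear-score representation of $\tdTheta$ under the null, together with an argument that the joint-block covariance graph inherits the connected-component structure of $\bTheta$; the $\Delta_\infty$-type comparison of Theorem \ref{thm:ccb_max} is not sharp enough here because cross-block covariances are not uniformly small, so the sparsity-based bound is what unlocks the $|S|$-dependence that the assumption \eqref{eq:cond2} is designed to exploit.
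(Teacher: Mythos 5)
Your proposal is correct and mirrors the paper's proof essentially step for step: you decouple $\mathrm{III}_3$ into a Gaussian-replacement error (handled by a Cram\'er-type bootstrap approximation, yielding the $\eta\,d_0$ term) plus a Gaussian cross-covariance term, then compare the joint max to its block-independent surrogate via inclusion-exclusion, compute the cross-block covariance entries through the Isserlis-type formula $\bTheta_{j_1j_2}\bTheta_{k_1k_2}+\bTheta_{j_1k_2}\bTheta_{j_2k_1}$, split on whether $\bTheta_{j_1j_2}=0$, and apply Theorem \ref{thm:ccb_sparse_unitvar} with $\discon\asymp p$ to the sparse case while counting the $O(d_0 k_\tau)$ non-sparse pairs by the trivial variance bound. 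This is exactly the decomposition in Lemmas \ref{lem:IW_diff_bound} and \ref{lem:crossterm_Wj}, and you correctly flag the verification of the connectivity and variance hypotheses of Theorem \ref{thm:ccb_sparse_unitvar} as the delicate part.
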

\begin{proof}[Proof of Lemma \ref{lem:crossterm_Ij}]
Note that $\mathrm{III}_{3}$ in \eqref{eq:seq_III_3} equals
%\jlmargin{}{Add label and cite it in Lemma A.5}
\begin{align}
\label{eq:Ijdef}
\begin{split}
%\nonumber
\mathrm{III}_{3}&=\sum_{m=1}^{\lambda_{d}}\frac{ 
    \sum_{j_1,j_2 \in \cH_0, j_1 \ne j_2} \Cov{I_{j_1}(\alpha_m)}{ I_{j_2}(\alpha_m)}
    }{\epsilon^{2} d_{0}^{2}\alpha_{m}^{2}},\\
&\quad \text{ where $I_j(\alpha) = \Indrbr{\max_{e \in N_{0j} } \sqrt{n}|\tdTheta_e - \sTheta_e| \ge \hat c(\alpha, N_{0j})}$ } 
\end{split}
\end{align}
for $j\in \{j_1,j_2\}$. To quantify the covariance between $I_{j_1}(\alpha_m)$ and $I_{j_2}(\alpha_m)$ for $j_1,j_2 \in \cH_0, j_1\ne j_2$, we define 
\begin{equation}\label{eq:Wjdef}
    W_j(\alpha) = \Indrbr{\max_{e \in N_{0j} } |Z_{e}| \ge  c(\alpha, N_{0j})},
\end{equation}
where $(Z_e)_{e \in E}$ (with $E= N_{0j}$) is a Gaussian random vector and shares the same mean vector and covariance matrix as the term $(\frac{1}{\sqrt{n~ {\bTheta}_{jj}{\bTheta}_{kk}}}  \sum_{i=1}^n  {\bTheta}_{j}^{\top} (\bX_i \bX_i^{\top} {\bTheta}_{k}- \eb_k) )_{(j,k)\in E}$ in $\breve{T}_E$. Here $\breve{T}_E$ (with $E= N_{0j}$) has the explicit form below
$$
 \breve{T}_E =  \max_{(j,k) \in E}  \frac{1}{\sqrt{n~ {\bTheta}_{jj}{\bTheta}_{kk}}} \bigg| \sum_{i=1}^n  {\bTheta}_{j}^{\top} (\bX_i \bX_i^{\top} {\bTheta}_{k}- \eb_k) \bigg|. 
$$
Remark here $\breve{T}_E$ corresponds to the term $T_{\bY}$ in Proposition \ref{prop:approx_cramer_gmb} and $\max_{e \in E}|Z_{e}|$ corresponds to the term $T_{\bZ}$ in Proposition \ref{prop:cramer_gmb}. And $c(\alpha, N_{0j})$ is the corresponding Gaussian maxima quantile $q(\alpha;T_{\bZ})$ (which does not need to be computed). Since $\mathbb{P}(T_{\bZ} > q(\alpha;T_{\bZ}) ) = \alpha$, we immediately have have $\EEE[W_j(\alpha)] = \PP{\max_{e \in N_{0j} } \sqrt{n}|Z_{e}| \ge  c(\alpha, N_{0j})}  = \alpha$. 

Now we replace $I_{j_{1}}(\alpha), I_{j_{2}}(\alpha)$ in $\mathrm{III}_{3}$ by $W_{j_{1}}(\alpha),W_{j_{2}}(\alpha)$ and define $\mathrm{III}'_{3}$ as
\begin{equation}\label{eq:III3'_def}
    \mathrm{III}'_{3} := \sum_{m=1}^{\lambda_{d}}\frac{ 
    \sum_{j_1,j_2 \in \cH_0, j_1 \ne j_2} \Cov{W_{j_1}(\alpha_m)}{ W_{j_2}(\alpha_m)}
    }{\epsilon^{2} d_{0}^{2}\alpha_{m}^{2}}.
\end{equation}
To bound $\mathrm{III}_{3}- \mathrm{III}'_{3}$, we first note $ \cov(I_{j_1}(\alpha), I_{j_2}(\alpha)) 
= \EEE[I_{j_1}(\alpha) I_{j_2}(\alpha)] -  \EEE[I_{j_1}(\alpha)]\EEE[ I_{j_2}(\alpha)]  $ then separately deal with the term $\left| \EE{I_{j_1}(\alpha)I_{j_2}(\alpha)} -  \EE{W_{j_1}(\alpha)W_{j_2}(\alpha)} \right|$ and the term $\left| \EE{I_{j_1}(\alpha)}\EE{I_{j_2}(\alpha)} -  \EE{W_{j_1}(\alpha)}\EE{W_{j_2}(\alpha)} \right|$.

By Lemma \ref{lem:IW_diff_bound}, we have up to some constant factor,
$$
\frac{\left| \EE{I_{j_1}(\alpha)I_{j_2}(\alpha)} -  \EE{W_{j_1}(\alpha)W_{j_2}(\alpha)} \right|   }{\alpha^2} \le  \frac{\eta(d,n,\zeta_1,\zeta_2, \alpha_L)}{\alpha}.
$$
%since $\EE{ W_{j_1}(\alpha) W_{j_2}(\alpha)} =  $. 
Applying the same strategy to the term $\EE{I_{j_1}(\alpha)}\EE{I_{j_2}(\alpha)}$, we obtain
$$
\frac{\left| \EE{I_{j_1}(\alpha)}\EE{I_{j_2}(\alpha)} -  \EE{W_{j_1}(\alpha)}\EE{W_{j_2}(\alpha)} \right|   }{\alpha^2} \le  
\frac{\eta(d,n,\zeta_1,\zeta_2, \alpha_L)}{\alpha}.
%\frac{1}{\alpha}\cdot  O\left(\frac{(\log d)^{19/6}}{n^{1/6}}\right)
$$
% under the scaling conditions, we have $o(1)/\alpha = o(1)$. 
% Now it suffices to deal with the term 
% Then the contribution of the above quantities (approximation bounds) to the term $\mathrm{III}_3$
% $$
% \left(\frac{1}{\epsilon^{2} \rho d_0}\sum_{m=1}^{\lambda_{d}}\frac{1
%     }{ \alpha_{m}}\right) \cdot  O\left(\frac{d(\log d)^{19/6}}{n^{1/6}}\right)
% $$
Combining the above two inequalities, and noting the definition of $\mathrm{III}'_{3}$ in \eqref{eq:III3'_def}, we derive the following bound on the difference between $\mathrm{III}_{3}$ and $\mathrm{III}'_{3}$,
$$
\left|
\mathrm{III}_{3} - \mathrm{III}'_{3}
\right|\le \frac{1}{\epsilon^{2}}\sum_{m=1}^{\lambda_{d}}\frac{\eta(d,n,\zeta_1,\zeta_2, \alpha_L)
    }{ \alpha_{m}}\le \frac{C' t_{\lambda_d} }{\rho \epsilon^2 h_d} \cdot \eta(d,n,\zeta_1,\zeta_2, \alpha_L).
%\cdot  O\left(\frac{d(\log d)^{19/6}}{n^{1/6}}\right)
$$
where the second inequality holds due to the fact $\alpha_m \ge \alpha_L$ $\forall~ 1\le m \le \lambda_d$ and $t_{\lambda_d} = (\lambda_d - 1)h_d$, the definition $\alpha_L =q|\cB|/d$ in the proof of Theorem \ref{thm:fdr_hub}, and the definition $\rho = |\cB|/d$ in Section \ref{sec:hub_selection}.

% Applying the derivations for \eqref{eq:III1} in Theorem \ref{thm:fdr_indep} and assuming the stated scaling conditions, we can show the above term is $o(1)$. 
The above bound on $\left|\mathrm{III}_{3} - \mathrm{III}'_{3} \right|$, when combined with Lemma \ref{lem:crossterm_Wj}, immediately establishes
\begin{eqnarray}\nonumber
\mathrm{III}_{3}
&\le& \frac{C' t_{\lambda_d} }{\rho \epsilon^2 h_d} \cdot \eta(d,n,\zeta_1,\zeta_2, \alpha_L)
+\frac{C''t_{\lambda_d}}{\rho\epsilon^2  d_{0} h_d} 
 \left(1+ C_{\bTheta}\frac{|S|\log d}{d_0 p}
\right) \\ \nonumber
&\le& \frac{C'''t_{\lambda_d}}{\rho\epsilon^2 d_0 h_d}
\left(1 + \eta(d,n,\zeta_1,\zeta_2, \alpha_L) d_0+\frac{|S|\log d}{d_0 p}
\right),
\end{eqnarray}
for some constant $C'''$.
\end{proof}

\begin{lemma}\label{lem:crossterm_Wj}
Recalling the term $\mathrm{III}'_{3}$ from \eqref{eq:III3'_def} in the proof of Lemma \ref{lem:crossterm_Ij}, we have
$$
 \mathrm{III}'_{3} = \sum_{m=1}^{\lambda_{d}}\frac{ 
    \sum_{j_1,j_2 \in \cH_0, j_1 \ne j_2} \cov(W_{j_1}(\alpha_m), W_{j_2}(\alpha_m))
    }{\epsilon^{2} d_{0}^{2}\alpha_{m}^{2}}\le 
    \frac{C''t_{\lambda_d}}{\rho \epsilon^2 d_{0}h_d} 
     \left(1+ C_{\bTheta}\frac{|S|\log d}{d_0 p}
    \right).
% \frac{C''t_{\lambda_d}}{\epsilon^2 \rho d_{0} h_d} \left(1+ C_{\bTheta}\frac{|S|\log d}{d_0 p}\right)
 $$
\end{lemma}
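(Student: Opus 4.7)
The plan is to produce a per-pair bound on $\cov(W_{j_1}(\alpha),W_{j_2}(\alpha))$ for $j_1\ne j_2\in\cH_0$ and then sum. Since $\EEE[W_j(\alpha)]=\alpha$ by construction,
\[
\cov(W_{j_1},W_{j_2}) \;=\; \PPP\big(W_{j_1}=1,\,W_{j_2}=1\big) - \alpha^2,
\]
and the right-hand side is precisely the difference of this intersection probability under the true joint law of $(Z^{(j_1)},Z^{(j_2)})$ and under the independent surrogate in which the two blocks are decoupled but the marginals preserved. By inclusion--exclusion this equals $-\big(\PPP(\{W_{j_1}=1\}\cup\{W_{j_2}=1\})-(2\alpha-\alpha^2)\big)$, so controlling the covariance reduces to comparing a union-type Gaussian max event between the true and the independent laws.

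The first step is to expose the cross-covariance structure. Applying Isserlis' theorem to the Gaussian $\bX$, the standardized cross-covariance between the $(j_1,k_1)$-coordinate of $Z^{(j_1)}$ and the $(j_2,k_2)$-coordinate of $Z^{(j_2)}$ is proportional to $\bTheta_{j_1 j_2}\bTheta_{k_1 k_2}+\bTheta_{j_1 k_2}\bTheta_{j_2 k_1}$ (the $\bTheta_{j_1 k_1}\bTheta_{j_2 k_2}$ summand vanishes since $k_i\in N_{0j_i}$). I then split the pairs into two cases. In Case A, $\bTheta_{j_1 j_2}\ne 0$: both $j_1,j_2$ are non-hubs of degree $<k_\tau$, so the number of such pairs is at most $(k_\tau-1)d_0$, and I use the trivial bound $|\cov(W_{j_1},W_{j_2})|\le \alpha$. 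In Case B, $\bTheta_{j_1 j_2}=0$: I apply the $\ell_0$-version Cram\'er-type Gaussian comparison (Theorem~\ref{thm:ccb_sparse_unitvar}) to the joint Gaussian vector against its independent surrogate, after rescaling each coordinate by its corresponding quantile so that the two thresholds reduce to the common value $t=1$ and then restandardizing to unit variances. The elementwise $\ell_0$ cross-block difference $\Delta_0^{(j_1,j_2)}$ counts pairs $(k_1,k_2)$ with $k_1\in N_{0j_1},\;k_2\in N_{0j_2},\;\bTheta_{j_1 k_2}\bTheta_{j_2 k_1}\ne 0$ (the case $k_1=k_2$ is impossible in Case B), which matches exactly the quadruples of $S$ attached to $(j_1,j_2)$; summing across Case B thus gives $\sum_{(j_1,j_2)}\Delta_0^{(j_1,j_2)}=|S|$. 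Combining the relative error $\Delta_0^{(j_1,j_2)}\log d/p$ from Theorem~\ref{thm:ccb_sparse_unitvar} with the marginal bound $\PPP(\{W_{j_1}=1\}\cup\{W_{j_2}=1\})\lesssim \alpha$ yields $|\cov(W_{j_1},W_{j_2})|\lesssim \alpha\,\Delta_0^{(j_1,j_2)}\log d/p$ for Case B pairs.

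Adding the two cases, $\sum_{j_1\ne j_2\in\cH_0}|\cov(W_{j_1},W_{j_2})|\le C\alpha_m\big(d_0 + |S|\log d/p\big)$, with constants absorbing $k_\tau$ and the constants of Theorem~\ref{thm:ccb_sparse_unitvar} into $C_{\bTheta}$. Plugging this into the definition \eqref{eq:III3'_def},
\[
\mathrm{III}'_3 \;\le\; \sum_{m=1}^{\lambda_d}\frac{C\alpha_m(d_0+|S|\log d/p)}{\epsilon^2 d_0^2\alpha_m^2}\;=\;\sum_{m=1}^{\lambda_d}\frac{C}{\epsilon^2 d_0\alpha_m}\Big(1+\frac{|S|\log d}{d_0 p}\Big)\;\le\;\frac{C\lambda_d}{\epsilon^2 d_0\alpha_L}\Big(1+\frac{|S|\log d}{d_0 p}\Big),
\]
and invoking $\alpha_L=\Omega(\rho)$ together with $\lambda_d\le t_{\lambda_d}/h_d$ delivers the claimed bound.

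The main obstacle is Case B: Theorem~\ref{thm:ccb_sparse_unitvar} is stated for two Gaussian vectors with equal unit variances and a single threshold, whereas our two max-type events carry distinct quantiles $c(\alpha,N_{0j_1})$ and $c(\alpha,N_{0j_2})$. The delicate steps will be (i) rescaling coordinates by their quantiles so that both thresholds align at $t=1$, (ii) restandardizing to unit variance while tracking that the effective threshold still lies in the range $[0,C_0\sqrt{\log d}]$ required by Theorem~\ref{thm:ccb_sparse_unitvar}, and (iii) verifying that the number of connected components of the covariance graph induced by the joint vector is bounded below (up to an absolute constant) by $p$, so that the $1/p$ factor in Theorem~\ref{thm:ccb_sparse_unitvar} transfers cleanly into the final bound.
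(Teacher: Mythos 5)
Your overall plan coincides with the paper's: you bound $\cov(W_{j_1},W_{j_2})$ per pair by comparing the joint Gaussian to the decoupled surrogate, split according to whether $\bTheta_{j_1j_2}$ is zero (your Case B is the paper's Case 1), handle $\bTheta_{j_1j_2}\ne 0$ by the trivial bound together with the observation that each $j_1$ has fewer than $k_\tau$ such partners, and handle $\bTheta_{j_1j_2}=0$ via the $\ell_0$-CCB in Theorem~\ref{thm:ccb_sparse_unitvar}, matching the per-pair $\Delta_0$ with the slice of $S$ attached to $(j_1,j_2)$. The Isserlis computation, the identification $\sum_{(j_1,j_2)}\Delta_0^{(j_1,j_2)}\asymp |S|$, the connectivity lower bound $\ge p-O(1)$, and the final $\sum_m 1/\alpha_m \lesssim t_{\lambda_d}/(h_d\alpha_L)$ with $\alpha_L=\Omega(\rho)$ are all as in the paper.

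One concrete step in your ``delicate steps'' list would fail if carried out as stated. In step~(ii) you propose, after aligning the two thresholds $c(\alpha,N_{0j_1})\ne c(\alpha,N_{0j_2})$ by rescaling one block, to ``restandardize to unit variance.'' The original standardized bootstrap surrogates do have unit variance on the null coordinates, so after multiplying block $j_2$ by $\lambda=c(\alpha,N_{0j_1})/c(\alpha,N_{0j_2})$ its coordinates have variance $\lambda^2$; dividing by $\lambda$ to restore unit variance exactly undoes the alignment, returning you to two distinct thresholds. The paper's resolution is not to restandardize but to invoke a \emph{general-variance} version of Theorem~\ref{thm:ccb_sparse_unitvar} (proved in Appendix~\ref{pf:thm:ccb_sparse_unitvar}), which requires only $a_0\le \sigma^U_{jj}=\sigma^V_{jj}\le a_1$ together with the balanced-variance requirement that each component contains at least one coordinate per block whose variance matches. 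After the one-block rescaling the two blocks have distinct but bounded variances, the covariance ratios $|\tilde\sigma_{jk}|\le\sigma_0<1$ still hold on all but $O(1)$ pairs, and the threshold remains $O(\sqrt{\log d})$, so the generalized CCB applies. You should also record, as the paper does, that for Case~B pairs $\Delta_0^{(j_1,j_2)}\le k_\tau^2=O(1)$, which is what keeps the per-pair relative error $\Delta_0^{(j_1,j_2)}\log d/p$ small enough for the comparison bound to be meaningful. With these corrections your outline reproduces the paper's proof.
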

\begin{proof}[Proof of Lemma \ref{lem:crossterm_Wj}]
%We use similar strategy to deal with the term $\Cov{W_{j_1}(\alpha)}{ W_{j_2}(\alpha)}$:
Similarly as in the proof of Lemma \ref{lem:crossterm_Ij}, we define $(Z_e)_{e \in N_{0j_1} \cup N_{0j_2}}$ to be jointly Gaussian such that this $(|N_{0j_1}| + |N_{0j_2}|)$-dimensional Gaussian random vector shares the same mean vector and covariance matrix as $(\frac{1}{\sqrt{n~ {\bTheta}_{jj}{\bTheta}_{kk}}}  \sum_{i=1}^n  {\bTheta}_{j}^{\top} (\bX_i \bX_i^{\top} {\bTheta}_{k}- \eb_k) )_{(j,k)\in N_{0j_1}\cup N_{0j_2}}$. 
%in $\breve{T}_E$ 
Note that the two sub-vectors $(Z_e)_{e \in N_{0j_1}}$ and $(Z_e)_{e \in N_{0j_1}}$ are generally dependent. Then we define $(Z'_e)_{e \in N_{0j_1}}, (Z'_e)_{e \in N_{0j_2}}$ to be two Gaussian random vectors such that 
\begin{equation} %\nonumber
\label{eq:Ze12_def}
(Z'_e)_{e \in N_{0j_1}} \stackrel{d}{=} (Z_e)_{e \in N_{0j_1}},~~  (Z'_e)_{e \in N_{0j_2}} \stackrel{d}{=} (Z_e)_{e \in N_{0j_2}} ~~ \text{ and }  (Z'_e)_{e \in N_{0j_1}} \independent  (Z'_e)_{e \in N_{0j_2}}.
\end{equation}
Recalling the definition of $W_j(\alpha)$ in \eqref{eq:Wjdef}: $W_j(\alpha) = \Indrbr{\max_{e \in N_{0j} } |Z_{e}| \ge  c(\alpha, N_{0j})}$, we thus have the following,
\begin{align}
\mathrm{IV}_{j_1j_2}(\alpha)\label{eq:IV_def}
:=~& \frac{
\left|
\cov(W_{j_1}(\alpha_m), W_{j_2}(\alpha_m))
\right| }{\alpha^2}\\ \nonumber
=~&
\frac{
\left|
\EE{W_{j_1}(\alpha)W_{j_2}(\alpha)}-
\EE{W_{j_1}(\alpha)}\EE{W_{j_2}(\alpha)}
\right| }{\alpha^2}\\ \nonumber
=~& \frac{1}{\alpha^2}
\Big|
\PPP(\max_{e \in N_{0j_1} } |Z_{e}| \ge  c(\alpha, N_{0j_1}),\max_{e \in N_{0j_2} } |Z_{e}| \ge  c(\alpha, N_{0j_2})) - \\ \nonumber
~~~& \PPP(\max_{e \in N_{0j_1} } |Z'_{e}| \ge  c(\alpha, N_{0j_1}),\max_{e \in N_{0j_2} } |Z'_{e}| \ge  c(\alpha, N_{0j_2}))
\Big| \\ \nonumber
=~& \frac{1}{\alpha^2}
\Big|
\PPP(\max_{e \in N_{0j_1} } |Z_{e}| \ge  t,\max_{e \in N_{0j_2} } |Z_{e}| \ge  t )-  \PPP(\max_{e \in N_{0j_1} } |Z'_{e}| \ge  t,\max_{e \in N_{0j_2} } |Z'_{e}| \ge  t)
\Big| \\
=~& \frac{1}{\alpha^2}
\Big|
\PPP(\max_{e \in N_{0j_1}\cup N_{0j_2} } |Z_{e}| \ge  t) - \PPP(\max_{e \in N_{0j_1}\cup N_{0j_2} } |Z'_{e}| \ge  t)
\Big|,
\label{eq:note_CGB}
\end{align}
%where $(Z_e)_{e \in N_{0j_1}},(Z_e)_{e \in N_{0j_2}}$ have the same covariance matrix structures as $(Z'_e)_{e \in N_{0j_1}}, (Z'_e)_{e \in N_{0j_2}}$, but $(Z'_e)_{e \in N_{0j_1}}\independent  (Z'_e)_{e \in N_{0j_2}}$. 
where the third equality follows due to the construction of $(Z_e)_{e \in N_{0j_1}\cup N_{0j_2}}, (Z'_e)_{e \in N_{0j_1}\cup N_{0j_2}}$.  Note that in the fourth equality, we assume $c(\alpha, N_{0j_1}) = c(\alpha, N_{0j_2}):=t $ without loss of generality, since we can rescale one of the maximum statistic by rescaling the Gaussian random vectors. Remark that the scaling will not break down the application of Theorem \ref{thm:ccb_sparse_unitvar}, which will be explained in detail 
later in this proof.
% following derivations since the variances within each block $N_{0j}$ will still be the same. 
 The last inequality holds by \eqref{eq:Ze12_def} and the fact that $ \PP{A \cap B} = \PP{A} + \PP{B} - \PP{A \cup B} $.

 Notice that we can apply the Cram\'{e}r-type Gaussian comparison bound with $\ell_0$ norm to control \eqref{eq:note_CGB}. Specifically, we first figure out the difference between the covariance matrices of $(Z_e)_{e \in N_{0j_1}\cup N_{0j_2}}$ and $(Z'_e)_{e \in N_{0j_1}\cup N_{0j_2}}$. Denote the covariance matrices by $\bSigma^{Z}$ and $\bSigma^{Z'} $ respectively. As these two Gaussian random vectors have two sub-vectors, we write their covariance matrices in a block form
 $$
\bSigma^{Z} = \left(
\begin{array}{cc}
  \bSigma^{Z}_{11}   & \bSigma^{Z}_{12} \\
  \bSigma^{Z}_{21}   & \bSigma^{Z}_{22}
\end{array}
\right),~~\bSigma^{Z'} = \left(
\begin{array}{cc}
  \bSigma^{Z'}_{11}   & \bm{O} \\
  \bm{O}   & \bSigma^{Z'}_{22}
\end{array}
\right).
$$
where $\bSigma^{Z'}$ is block diagonal due to \eqref{eq:Ze12_def}. Note that we also have $\bSigma^{Z}_{11} = \bSigma^{Z'}_{11} $ and $\bSigma^{Z}_{22} = \bSigma^{Z'}_{22}$. Then we have
\begin{equation} \label{eq:bSigmaZ_diff}
\bSigma^{Z} - \bSigma^{Z'} = \left(
\begin{array}{cc}
  \bm{O}  & \bSigma^{Z}_{12} \\
  \bSigma^{Z}_{21}   & \bm{O}
\end{array}
\right).
\end{equation}
%where the block correspond to 
% By the construction of these two Gaussian random vectors, we have 

 Throughout the following proof, we assume $\bTheta_{jj}=1,j\in[d]$ without loss of generality, since the standardized version is considered in $\breve{T}_E$ \eqref{eq:brTE}. Recall that $(Z_e)_{e \in N_{0j_1} \cup  N_{0j_2}}$ shares the same covariance structure as $(Y_e)_{e \in N_{0j_1} \cup  N_{0j_2}}$ where $ Y_{e}$ (with $e = (j,k)$) is defined as
%  write down the involving Gaussian random vectors $(Z_e)_{e \in N_{0j_1} \cup  N_{0j_2}}$ below:
% have the same covariance structures as the following
%Ignoring the approximation issue for now, we can relate $F_j(\alpha)$ to the following quantity
\begin{equation} \nonumber
 Y_{e}:=  \frac{1}{\sqrt{n}}  \sum_{i=1}^n  {\bTheta}_{j}^{\top} (\bX_i \bX_i^{\top} {\bTheta}_{k}- \eb_k). %,~ k  \in N_{0j_1}.
% \quad   \frac{1}{\sqrt{n}}  \sum_{i=1}^n  {\bTheta}_{j_2}^{\top} (\bX_i \bX_i^{\top} {\bTheta}_{k}- \eb_k) ,~ k  \in N_{0j_2}.
\end{equation}
%Remark that in the above expression, we assume $\bTheta_{jj}=1,j\in[d]$ without loss of generality, since the standardized version is considered in $\breve{T}_E$ \eqref{eq:brTE}. 
Then we are ready to calculate the covariance matrix  $\bSigma^{Z}$. Specifically, we compute the entries in each block.
% of 
%$$
%\bSigma^{Z} = \left[
%\begin{array}{cc}
%  \bSigma^{Z}_{11} & \bSigma^{Z}_{12}\\ 
%  \bSigma^{Z}_{21} & \bSigma^{Z}_{22}
%\end{array}
%\right].
%$$
Regarding the block $\bSigma^{Z}_{11} $, for any $k,k' \in N_{0 j_1}$ where $N_{0j_1} = \{k: \bTheta_{j_1k}=0\}$, we have the corresponding $(k,k')$ entry in $\bSigma^{Z}_{11} $ equals
\begin{equation}\label{eq:diag_block_entry}
    \cov({\bTheta}_{j_1}^{\top} (\bX_i \bX_i^{\top} {\bTheta}_{k}- \eb_{k}), {\bTheta}_{j_1}^{\top} (\bX_i \bX_i^{\top} {\bTheta}_{k'}- \eb_{k'})) = \bTheta_{j_1 j_1} \bTheta_{k k'} + \bTheta_{j_1 k}\bTheta_{j_1 k'}  = \bTheta_{k k'},   
    %\bTheta_{j_1 j_1} \bTheta_{k k'}  
\end{equation}
by applying Isserlis' theorem \cite{isserlis1918formula} and noting $\bTheta_{j_1 k} = \bTheta_{j_1 k'} =0 $. Similar results hold for the block $\bSigma^{Z}_{22} $. Regarding the block $\bSigma^{Z}_{12}$, consider $k_1  \in N_{0 j_1}, k_2 \in N_{0 j_2}$, then we have the corresponding $(k_1,k_2)$ entry in the block equals
\begin{equation}\label{eq:offdiag_block_entry}
    \cov({\bTheta}_{j_1}^{\top} (\bX_i \bX_i^{\top} {\bTheta}_{k_1}- \eb_{k_1}), {\bTheta}_{j_2}^{\top} (\bX_i \bX_i^{\top} {\bTheta}_{k_2}- \eb_{k_2})) = \bTheta_{j_1 j_2} \bTheta_{k_1 k_2} + \bTheta_{j_1 k_2}\bTheta_{j_2 k_1}.
\end{equation}
Now we have fully characterized the covariance matrix $\bSigma^Z$ and the covariance matrix difference in \eqref{eq:bSigmaZ_diff} 
%the off-diagonal elements in the covariance matrix of the Gaussian random vector $(Z_e)_{e \in N_{0j_1} \cup  N_{0j_2}}$ 
for any $j_1,j_2 \in \cH_0, j_1 \ne j_2$. Specifically, we have
$\norm{ \bSigma^{Z} - \bSigma^{Z'}}_{0} = \norm{ \bSigma_{12}^{Z}}_{0} = \sum_{k_1  \in N_{0 j_1}, k_2 \in N_{0 j_2}}\Indrbr{\bTheta_{j_1 j_2} \bTheta_{k_1 k_2} + \bTheta_{j_1 k_2}\bTheta_{j_2 k_1} \ne 0} $.
Based on whether $\bTheta_{j_1 j_2}$ is zero or not, we consider the following two cases then handle them separately:
\begin{itemize}
  \item Case 1: $\bTheta_{j_1j_2} = 0$. If $k_1=k_2$, then we have the covariance matrix entry \eqref{eq:offdiag_block_entry} equal zero; If $k_1 \ne k_2$, then \eqref{eq:offdiag_block_entry} is nonzero only if $\bTheta_{j_1 k_2}\ne 0, \bTheta_{j_2 k_1}\ne 0$ (i.e., $k_2 \notin N_{0j_1}$, $k_1 \notin N_{0 j_2}$). By the fact $j_1,j_2 \in \cH_{0}, j_1\ne j_2$ and the definition of $\cH_0 = \{j: \jdeg < k_{\tau}\}$, we have $\#\{(k_1,k_2):k_1\ne k_2, \bTheta_{j_1 k_2}\ne 0, \bTheta_{j_2 k_1}\ne 0\}\le k_{\tau}^{2}$. Hence $\norm{ \bSigma^{Z} - \bSigma^{Z'}}_{0} \le k_\tau^2$.
  % the covariance matrix difference $\bSigma^Z - \bSigma^{Z'}$ has at most $k_{\tau}^{2}$ nonzero elements.
%   in its upper right block $\bSigma^{Z}_{12}$ ( whose dimension is $|N_{0 j_1}|\times |N_{0 j_2}|$).
  \item Case 2: $\bTheta_{j_1j_2} \ne 0$. The covariance matrix entry \eqref{eq:offdiag_block_entry} is nonzero only if $\bTheta_{j_1 k_2}\ne 0, \bTheta_{j_2 k_1}\ne 0$ (i.e., $k_2 \notin N_{0j_1}$, $k_1 \notin N_{0 j_2}$) or $\bTheta_{k_1k_2} \ne 0$.
\end{itemize}

We start from the simpler case, i.e., Case 2 where $\bTheta_{j_1j_2} \ne 0$. Simply, we obtain
$$
\mathrm{IV}_{j_1j_2}(\alpha)= \frac{|\cov(W_{j_1}(\alpha),W_{j_2}(\alpha))|}{\alpha^2} \le \frac{\var(W_{j_1}(\alpha))}{\alpha^2} + \frac{\var(W_{j_2}(\alpha))}{\alpha^2} \le  \frac{C}{\alpha},
$$
for some constant $C$ since $\Var{W_j(\alpha)} = \EE{W_j(\alpha)} (1 - \EE{W_j(\alpha)}) = \alpha (1 - \alpha)$ for $j = j_1, j_2$. 
%For the second inequality, we  due to \eqref{eq:fdp_cgmb}.
For a fixed $j_1$, we also know that $|\{j_2 \in \cH_0:j_2\ne j_1, \bTheta_{j_1j_2}\ne 0 \}|< k_{\tau}$. Then we have
%can bound the relevant terms $\mathrm{IV}_{j_1j_2}(\alpha)$ (see \eqref{eq:IV_def}) under Case $2$ (i.e., $\bTheta_{j_1,j_2} \ne 0$) in $\mathrm{III}'_3$ by 
\begin{equation}
\label{eq:case2_terms}
\sum_{m=1}^{\lambda_d} \sum_{\bTheta_{j_1j_2} \ne 0}\frac{\mathrm{IV}_{j_1j_2}(\alpha_m)}{\epsilon^2 d^2_{0}}  \le 
\sum_{m=1}^{\lambda_d} \frac{ d_0k_{\tau}}{\epsilon^2 d_0^2} \cdot \frac{C}{\alpha_m}\le \frac{1}{\epsilon^2 d_0}  \sum_{m=1}^{\lambda_d} \frac{C'}{\alpha_m}   ,
% \cdot\frac{t_{\lambda_d}}{h_d  }
\end{equation}
where the last inequality holds due to the same derivations for $\mathrm{III}_1$ in the proof of Lemma \ref{lem:moments_Ij}.
%, by the definition of $\alpha_L, \lambda_d$ in Appendix \ref{pf:thm:fdr_hub} and $\rho$ in Section \ref{sec:hub_selection}.
%which will also vanish due to the same reason as \eqref{eq:III1}. 

Regarding Case $1$ where $\bTheta_{j_1j_2} = 0$, we will give a more careful treatment to $\mathrm{IV}_{j_1j_2}(\alpha)$ in \eqref{eq:IV_def}. 
Due to the discussion about Case 1, we have $\norm{ \bSigma^{Z} - \bSigma^{Z'}}_{0} \le k_\tau^2$.
% $\nbr{ \bSigma^{U}_{12} - \bSigma^{V}_{12}}_{0} = \nbr{ \bSigma^{U}_{12}}_{0} < k_{\tau}^2$, thus 
%$ \nbr{ \bSigma^{U} - \bSigma^{V}}_{0} < k_{\tau}^2$. 
This fact will be utilized to derive a nice bound on $\mathrm{III'_3}$. Indeed, we can apply Theorem \ref{thm:ccb_sparse_unitvar} to \eqref{eq:note_CGB} (with $U$ and $V$ chosen to be $Z_e)_{e \in N_{0j_1}\cup N_{0j_2}}$ and $(Z'_e)_{e \in N_{0j_1}\cup N_{0j_2}}$ respectively)
%(using the form in Remark %\ref{rk:thm:ccb_sparse}
%)
%(using the more refined version from \eqref{eq:uneq_entry} instead of the one from \eqref{eq:uneq_entry_max})
 and obtain
\begin{equation}\label{eq:case1_IVj1j2}
% \frac{
% \left|
% \EEE[W_{j_1}(\alpha)W_{j_2}(\alpha)]-
% \EEE[W_{j_1}(\alpha)]\EEE[W_{j_2}(\alpha)]
% \right| }{\alpha^2} 
\mathrm{IV}_{j_1j_2}(\alpha)
\le  \frac{\log d }{\alpha p}\rbr{ \sum_{k_1  \in N_{0 j_1}, k_2 \in N_{0 j_2}, k_1 \ne k_2} \Indrbr{\bTheta_{j_1 k_2}\bTheta_{j_2 k_1} \ne 0}}. 
\end{equation}
when $\bTheta_{j_1j_2} = 0$ (i.e., under Case 1). Recall Theorem \ref{thm:ccb_sparse_unitvar} assumes for Gaussian random vectors $U$ and $V$, there exists a disjoint $\discon$-partition of nodes $\cup_{\ell=1}^{\discon}\cC_\ell = [d]$ such that  $\sigma^U_{jk}=\sigma^V_{jk}=0$ when $j \in \cC_{\ell}$ and  $k \in \cC_{\ell'}$ for some $\ell \neq \ell'$.
%there exist disjoint partitions of nodes $\cup_{\ell=1}^{\discon}\cC^U_\ell = \cup_{\ell=1}^{\discon}\cC^V_\ell= [d], $ such that  $\sigma^U_{jk}$ ($\sigma^V_{jk}$) equals $0$ when $j,k$ belong to different components $\cC^U_\ell$ ($\cC^V_\ell$), and $\forall \ell \in[\discon]$, $\cC^U_{\ell} \cap \cC^V_{\ell} \ne \emptyset$.
 This is the connectivity assumption. Theorem \ref{thm:ccb_sparse_unitvar} also assumes that $U$ and $V$ have unit variances i.e., $\sigma^U_{jj}=\sigma^V_{jj}=1, j \in [d]$ and there exists some $\sigma_0<1$ such that $|\sigma^V_{jk} |\le  \sigma_0$ for any $j\ne k$ and $ |\{(j,k): j\ne k, |\sigma^U_{jk} |>  \sigma_0 \} | \le b_0$ for some constant $b_0$. Under its general version (which is actually proved in Appendix \ref{pf:thm:ccb_sparse_unitvar}), we only need to assume $a_0 \le\sigma^U_{jj}=\sigma^V_{jj} \le a_1,~\forall j \in [d]$, and given any $j \in \cC_\ell$ with some $\ell$, there exists at least one $m  \in \cC_{\ell'}$ such that $\sigma^U_{jj}= \sigma^V_{jj} = \sigma^{U}_{mm}  =  \sigma^{V}_{mm}$ for any $\ell' \ne \ell$. 
% and $ \sigma^U_{jj} = \sigma^U_{kk}$ for $j, k $ belong to the same component. 
From now, we will call it the general variance condition. Accordingly, we assume there exists some $\sigma_0<1$ such that $|\sigma^V_{jk}/\sqrt{\sigma^V_{jj} \sigma^V_{kk}} |\le  \sigma_0$ for any $j\ne k$ and $ |\{(j,k): j\ne k, |\sigma^U_{jk} | \sqrt{\sigma^U_{jj} \sigma^U_{kk}} >  \sigma_0 \} | \le b_0$ for some constant $b_0$. Such condition is referred as the general covariance assumption. 
%$\lambda_{\min}(\bSigma^U)\ge 1/b_0>0,\lambda_{\min}(\bSigma^V) \ge 1/b_0>0$ for some constant $b_0>0$. 
%Theorem \ref{thm:ccb_sparse_unitvar} also assumes there exist disjoint partitions of nodes $\cup_{\ell=1}^{\discon}\cC^U_\ell = \cup_{\ell=1}^{\discon}\cC^V_\ell= [d], $ such that  $\sigma^U_{jk}$ ($\sigma^V_{jk}$) equals $0$ when $j,k$ belong to different components $\cC^U_\ell$ ($\cC^V_\ell$), and $\forall \ell \in[\discon]$, $\cC^U_{\ell} \cap \cC^V_{\ell} \ne \emptyset$. This is the connectivity assumption. 
Below we give the details of applying Theorem \ref{thm:ccb_sparse_unitvar} (with a general version of the variance assumption) by checking those three conditions.
%Below we give details on how we apply Theorem \ref{thm:ccb_sparse} by checking the connectivity assumption and the variance assumption. 

%Specifically, we verify the sufficient condition (SC2) of Assumption \ref{asp:connect_prop}. 
%Note that Theorem \ref{thm:fdr_hub} assumes that the the number of connected components in the associated graph $\cG$ of the 
%We will construct a subset $\cE_0$
%{Regarding the connectivity assumption}, 
%Recall that the 
%Specifically, we verify the connectivity assumption, covariance assumption and the general version of the variance assumption.
We start from the connectivity assumption and the general variance condition. Notice that in Section \ref{sec:hub_selection}, $p$ denotes the number of connected components in the associated graph $\cG$ of $\bX$. Then we know there exist disjoint partitions of nodes $\cup_{\ell=1}^{p}\cC^X_{\ell} = [d]$ such that $\bTheta_{jk} =0$ when $j\in \cC^X_{\ell}, k \in \cC^X_{\ell'}$ for some $\ell \ne \ell'$. We will utilize this fact to examine the covariance matrices of $U:= (Z_e)_{e \in N_{0j_1}\cup N_{0j_2}}$ and $V:=(Z'_e)_{e \in N_{0j_1}\cup N_{0j_2}}$ and show the connectivity assumption holds. 
%  and show that for $\any s \in [0,1]$, the Gaussian random vector $W(s) = $
	Note that for given $j_1,j_2 \in \cH_0, j_1 \ne j_2$, there exist at least $p-2$ components $\cup_{\ell=1}^{p-2}\cC^X_{\ell}$ such that $j_1$ and $j_2$ do not belong to them. Without loss of generality, we write $j_1, j_2 \notin \cup_{\ell=1}^{p-2}\cC^X_\ell$. Thus we have $\cup_{l=1}^{p-2}\cC^X_\ell \subset N_{0j_1} \cap N_{0j_2}$ by definition.
	
	In the following, we will show the number of connected components on the associated graph of the Gaussian random vector $U:= (Z_e)_{e \in N_{0j_1}\cup N_{0j_2}}$ is at least $2(p-2)$ by examining its covariance matrix $\bSigma_{Z}$. First we focus on the covariance entries in the block $\bSigma_{11}^Z$. When $\ell_1, \ell_2 \in [p-2]$ and $\ell_1 \ne \ell_2$, we have for any $k \in \cC^X_{\ell_1}, k' \in \cC^X_{\ell_2}$ (thus $k,k' \in N_{0j_1} \cap N_{0j_2}$), the $(k,k')$ covariance entry \eqref{eq:diag_block_entry} in the block $\bSigma_{11}^Z$  equals
		\begin{equation}\label{eq:zero1}
			\bTheta_{j_1 j_1} \bTheta_{k k'} + \bTheta_{j_1 k}\bTheta_{j_1 k'}  = \bTheta_{j_1 j_1} \bTheta_{k k'} = 0,
		\end{equation}
	where the first equality holds since $k,k' \in N_{0j_1}$, and the second equality holds since $\ell_1 \ne \ell_2$.
%	Based on \ref{eq:diag_block_entry} and \ref{eq:offdiag_block_entry}, we have the $(k, k')$ entry in $\bSigma_{Z}_{11}$ equals 
%
%	Consider any two different  components $ C_{\ell_1}, C_{\ell_2} \subset \cup_{l=1}^{p-2}\cC_\ell \subset N_{0j_1} \cap N_{0j_2}$. 
	Similarly, we have the $(k,k')$ covariance entry in the block $\bSigma_{22}^Z$ also equals to zero. Next we compute the covariance entries in the block $\bSigma_{12}^Z$. For the same $(k, k')$, we know that $k \in N_{0j_1}, k' \in N_{0j_2}$. Thus the corresponding covariance entry \eqref{eq:offdiag_block_entry} equals
	\begin{equation}\label{eq:zero2}
		 \bTheta_{j_1 j_2} \bTheta_{k k'} + \bTheta_{j_1 k'}\bTheta_{j_2 k} = 0,
	\end{equation}
	since we also have $k \in N_{0j_2}, k' \in N_{0j_1}$ and $k \in \cC^X_{\ell_1}, k' \in \cC^X_{\ell_2}$ for some $\ell_1 \ne \ell_2$. Denote the nodes in the associated graph of $\bSigma^Z$ by $\cV_{Z}:=\{(j, k): k \in N_{0j}, j = j_1, j_2\}$. Remark here we use a pair $(j, k)$ to represent a node since there exists some $k \in N_{0j_1} \cap N_{0j_2}$ and we have to distinguish the covariance entries $(j_1, k)$ and $(j_2, k)$. Based on previous calculations, we immediately find $ \cup_{\ell=1}^{2(p-2)}\cC^Z_{\ell} \subset \cV_Z$, where $\cC^Z_{\ell} $ is chosen to be
	\begin{equation}
		\cC^Z_{\ell} = \left\{  
		\begin{array}{ll}
			\{(j_1, k) : k \in \cC^X_{\ell} \} 	&~\text{when}~ 1 \le  \ell \le p-2,  \\
			\{(j_2, k) : k \in \cC^X_{\ell} \} 	&~\text{when}~ p-1 \le  \ell \le 2(p-2).		\end{array}
			\right.
	\end{equation}
	Further, we know they form different components on the associated graph of $\bSigma^Z$. This is due to \eqref{eq:zero1} and \eqref{eq:zero2}.
%	First we compute the covariance entry in the block $\bSigma_{11}^U$. For any $k \in C_{\ell_1}, k' \in C_{\ell_2},\ell_1\ne \ell_2$, we have the covariance matrix entry (which is calculated explicitly in \eqref{eq:diag_block_entry}) equals
%$$
% \bTheta_{j_1 j_1} \bTheta_{k k'} + \bTheta_{j_1 k}\bTheta_{j_1 k'}  = \bTheta_{k k'}  = 0.
%$$
%since and $k, k' \in N_{0j_1}$
%And for any $k_1 \in C_{\ell_1}, k_2 \in C_{\ell_2},\ell_1\ne \ell_2$ (thus $k_1,k_2 \in N_{0j_1}\cap N_{0j_2}$), we have the covariance matrix entry (which is calculated explicitly in \eqref{eq:offdiag_block_entry}) equals
%$$
%\bTheta_{j_1 j_2} \bTheta_{k_1 k_2} + \bTheta_{j_1 k_2}\bTheta_{j_2 k_1} = 0.
%$$
%since and $k_1\in N_{0j_2},k_2 \in N_{0j_1}$
The above results also apply to the Gaussian random vector $V:=(Z'_e)_{e \in N_{0j_1}\cup N_{0j_2}}$ by construction of $Z'_e$, i.e., we have the same subset of nodes $ \cup_{\ell=1}^{2(p-2)}\cC^Z_{\ell} \subset \cV_Z$ from different components on the associated graph of $\bSigma^{Z'}$.
%Remark that $U$ and $V$ share the same partition of different components due to the construction of $V$, thus the connectivity assumption is satisified. 
%It remains to check the covariance assumption and the general variance condition. 

When $k \in \cC_\ell$ for some $\ell \in [p-2]$, the corresponding diagonal entries of the covariance matrices $\bSigma^Z, \bSigma^{Z'}$ equal
$$
\bTheta_{j_1 j_1} \bTheta_{k k} + \bTheta_{j_1 k}\bTheta_{j_1 k} = \bTheta_{j_1 j_1} \bTheta_{k k} = 1 = \bTheta_{j_2 j_2} \bTheta_{k k}, 
$$
where the first equality holds since $ \bTheta_{j_1 k} = 0$ when $k \in \cC_\ell \subset N_{0j_1}$. As for the second equality, we use the fact that $\bTheta_{jj}=1,j\in[d]$. This is because $\breve{T}_E$ in \eqref{eq:brTE} considers the standardized version ${{\bTheta}_{jk}}/{\sqrt{{\bTheta}_{jj}{\bTheta}_{kk}}}$. Remark that the rescaling in Lemma \ref{lem:crossterm_Wj} is performed on one of the two random vectors $(Z'_e)_{e \in N_{0j_1}}, (Z'_e)_{e \in N_{0j_2}}$. Then we have the variances across the $p-2$ components $\cup_{\ell=1}^{p-2}\cC^Z_{\ell}$ are the same. The variances across the other $p-2$ components $\cup_{\ell=p-1}^{2(p-2)}\cC^Z_{\ell}$ are also the same. Finally, we show there exist at least $p-2$ components $\cup_{\ell=1}^{p-2}\cC^Z_{\ell}$ (or $\cup_{\ell=p-1}^{2(p-2)}\cC^Z_{\ell}$) satisfying the requirement in the connectivity assumption and the general variance condition. 
%Hence we immediately prove that there exist at least $p-2$ components $\cup_{\ell=1}^{p-2}$ (or $\cC^Z_{\ell} \cV_Z$) on the associated graph of $W(s) = \sqrt{s} U + \sqrt{1-s} V$ for $\forall s \in [0,1]$. Finally, the sufficient condition SC2 is verified.

%into $p-1$ components, where $\cup_{\ell=1}^{p-2}$ (or $\cC^Z_{\ell} \cV_Z$) form $p-2$ components and the rest of nodes $\cV_Z \setminus  \cup_{\ell=1}^{p-2}$ (or $\cV_Z \setminus  \cup_{\ell=p-1}^{2(p-2)}$) is treated as one component. 
%the variances of each random variable within $U_1$ (or $U_2,V_1,V_2$) will still be the same. Hence the variance assumption is also satisfied.
%Then the contribution of the above quantity to the term $\mathrm{III}_3$ can be quantified as below
Regarding the general covariance condition, we first note
that $\bTheta \in \cU(M,s, r_0)$ which says that $\lambda_{\min}(\bTheta) \ge 1/r_0, \lambda_{\max}(\bTheta) \le r_0$. Thus we have $\max_{j,k \in [d], j\ne k }|\bTheta_{jk}| \le \sigma_0$ for some $\sigma_0 <1$. Below we will examine all the off-diagonal entries of $\bSigma^{Z}$ and $\bSigma^{Z'}$. Regarding the block $\bSigma^{Z}_{11} $, for any $k,k' \in N_{0 j_1}, k\ne k'$ where $N_{0j_1} = \{k: \bTheta_{j_1k}=0\}$, \eqref{eq:diag_block_entry} says that the corresponding $(k,k')$ entry in $\bSigma^{Z}_{11} $ equals $\bTheta_{k k'}$ (here we have $|\bTheta_{k k'}| \le \sigma_0$). Similar results hold for the block $\bSigma^{Z}_{22} $.
%\begin{equation} %\label{eq:diag_block_entry}
%  \bTheta_{k k'},   
%    %\bTheta_{j_1 j_1} \bTheta_{k k'}  
%\end{equation}
%by applying Isserlis' theorem \cite{isserlis1918formula} and noting $\bTheta_{j_1 k} = \bTheta_{j_1 k'} =0 $. Similar results hold for the block $\bSigma^{Z}_{22} $. 
Regarding the block $\bSigma^{Z}_{12}$, consider $k_1  \in N_{0 j_1}, k_2 \in N_{0 j_2}$, then we have the corresponding $(k_1,k_2)$ entry in the block equals $\bTheta_{j_1 k_2}\bTheta_{j_2 k_1}$. This is due to \eqref{eq:offdiag_block_entry} and the fact that $\bTheta_{j_1 j_2} = 0$ under Case 1. Only when $k_2 = j_1, k_1 = j_2$, we have $\bTheta_{j_1 k_2}\bTheta_{j_2 k_1} = 1$. Otherwise, $|\bTheta_{j_1 k_2}\bTheta_{j_2 k_1}| \le \sigma_0^2 <  \sigma_0$ always holds. As for the $\bSigma^{Z'}$, since its  block $\bSigma^{Z'}_{12} = \bm{O}$, we immediately have the absolute values of all its off-diagonal entries is bounded by $\sigma_0$. In summary, we verify the covariance condition of Theorem \ref{thm:ccb_sparse_unitvar} (here $U$ and $V$ are chosen to be $Z_e)_{e \in N_{0j_1}\cup N_{0j_2}}$ and $(Z'_e)_{e \in N_{0j_1}\cup N_{0j_2}}$ respectively).
%\begin{equation}% \label{eq:offdiag_block_entry}
%    \cov({\bTheta}_{j_1}^{\top} (\bX_i \bX_i^{\top} {\bTheta}_{k_1}- \eb_{k_1}), {\bTheta}_{j_2}^{\top} (\bX_i \bX_i^{\top} {\bTheta}_{k_2}- \eb_{k_2})) = \bTheta_{j_1 j_2} \bTheta_{k_1 k_2} + \bTheta_{j_1 k_2}\bTheta_{j_2 k_1}.
%\end{equation} 

Having checked all the three conditions, we now obtain 
%arrive at the following bound for the relevant terms \eqref{eq:IV_def} under Case $1$ (i.e., $\bTheta_{j_1,j_2} \ne 0$) in $\mathrm{III}'_3$ by summing \eqref{eq:case1_IVj1j2} over all such $j_1,j_2 \in \cH_0, j_1 \ne j_2$,
\begin{eqnarray}\nonumber
&~& \sum_{m=1}^{\lambda_d} \sum_{\bTheta_{j_1j_2} = 0}\frac{\mathrm{IV}_{j_1j_2}(\alpha_m)}{\epsilon^2 d^2_{0}}
%\sum_{m=1}^{\lambda_{d}}\left|
%\frac{\sum_{\text{Case 1}} \cov(W_{j_1}(\alpha_m),W_{j_2}(\alpha_m))}{\epsilon^{2}d_{0}^{2}\alpha^{2}} 
%\right| 
\\ \nonumber
&\le& \sum_{m=1}^{\lambda_{d}}\bigg\{\frac{1 }{\epsilon^2 d_0^2 } \cdot \frac{\log d }{ \alpha_m p}\bigg( \sum_{k_1  \in N_{0 j_1}, k_2 \in N_{0 j_2}, k_1 \ne k_2} \Indrbr{\bTheta_{j_1 k_2}\bTheta_{j_2 k_1} \ne 0}\bigg)  \bigg\}\\ 
&\le&  \frac{C_{\bTheta}|S|\log d}{ \epsilon^{2} d_0 p}  \left(\frac{1}{d_0}\sum_{m=1}^{\lambda_d} \frac{C'}{\alpha_m} \right), \label{eq:case1_terms}
%\sum_{(j_1,j_2,k_1,k_2) \in S} \bTheta_{j_1 k_2}\bTheta_{j_2 k_1} 
\end{eqnarray}
where $S$ represents the set
$$
%\nonumber
S = \{(j_1,j_2,k_1,k_2): j_1,j_2\in \cH_0, j_1\ne j_2,k_1 \ne k_2 , \bTheta_{j_1 j_2} = \bTheta_{j_1 k_1} = \bTheta_{j_2 k_2} = 0, \bTheta_{j_1 k_2} \ne 0, \bTheta_{j_2 k_1}\ne 0\}
$$
as defined in Section \ref{sec:hub_selection}, and $C_{\bTheta}$ is some universal constant over $\bTheta \in \cU(M,s, r_0)$. Finally, combining \eqref{eq:case1_terms} with \eqref{eq:case2_terms}, we obtain the following bound on $ \mathrm{III}'_{3}$,
\begin{eqnarray} \nonumber
\label{eq:III3'_bound}
 \mathrm{III}'_{3}
 &\le& \frac{C_{\bTheta}|S|\log d}{\epsilon^{2} d_0 p}  \left(\frac{1}{d_0}\sum_{m=1}^{\lambda_d} \frac{C'}{\alpha_m} \right)+ \frac{1}{\epsilon^2 d_0}  \sum_{m=1}^{\lambda_d} \frac{C'}{\alpha_m} 
%\frac{C''t_{\lambda_d}}{\rho \epsilon^2 d_{0}h_d} 
\\ \nonumber
 &= &  \left(1+ \frac{C_{\bTheta}|S|\log d}{d_0 p}
 \right)\cdot \frac{1}{\epsilon^2 d_0}  \sum_{m=1}^{\lambda_d} \frac{C'}{\alpha_m} 
%  \cdot\frac{t_{\lambda_d}}{h_d  } 
 \\ \nonumber
 &\le& \frac{C''t_{\lambda_d}}{\rho \epsilon^2 d_{0}h_d} 
%  \cdot\frac{t_{\lambda_d}}{h_d  } 
 \left(1+ \frac{C_{\bTheta}|S|\log d}{d_0 p}
 \right),
\end{eqnarray}
where the last inequality holds due to the same derivations for $\mathrm{III}_1$ in the proof of Lemma \ref{lem:moments_Ij}.
% Similar as previous derivation, it suffices to show the term $\frac{|S|\log d}{d_0 p} = o(1)$, which actually holds under the trade-off condition.
\end{proof}

\begin{lemma} \label{lem:IW_diff_bound}
Recall the definitions of $I_{j}(\alpha)$ and $W_{j}(\alpha)$ in \eqref{eq:Ijdef} and  \eqref{eq:Wjdef}, for $j_1, j_2 \in \cH_0, j_1 \ne j_2$, when $\alpha \in [\alpha_L,1]$, we have
  \begin{equation}
  \label{eq:lem_IW_diff_bound}
{\big| \EE{I_{j_1}(\alpha)I_{j_2}(\alpha)} -  \EE{W_{j_1}(\alpha)W_{j_2}(\alpha)} \big|   } \le    {\eta(d,n,\zeta_1,\zeta_2, \alpha_L)} {\alpha}.
\end{equation}
\end{lemma}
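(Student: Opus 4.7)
The plan is to reduce the joint events to two individual max events and one union event via inclusion--exclusion, and then apply the Cram\'er-type deviation bound for the Gaussian multiplier bootstrap from Proposition \ref{prop:approx_cramer_gmb} to each of the three pieces. This parallels how \eqref{eq:fdp_cgmb} was used in Lemma \ref{lem:moments_Ij}, but now a joint event rather than a single-node event has to be controlled.

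First, using $\mathbf{1}\{A \cap B\} = \mathbf{1}\{A\} + \mathbf{1}\{B\} - \mathbf{1}\{A \cup B\}$ applied to both $(I_{j_1}, I_{j_2})$ and $(W_{j_1}, W_{j_2})$, I would obtain
\[
\big|\EE{I_{j_1}(\alpha)I_{j_2}(\alpha)} - \EE{W_{j_1}(\alpha)W_{j_2}(\alpha)}\big| \le \sum_{\ell=1,2} \big|\PP{A_\ell^I} - \PP{B_\ell^W}\big| + \big|\PP{A_1^I \cup A_2^I} - \PP{B_1^W \cup B_2^W}\big|,
\]
where $A_\ell^I = \{\max_{e \in N_{0j_\ell}}\sqrt{n}|\tdTheta_e - \sTheta_e| \ge \hat c(\alpha, N_{0j_\ell})\}$ and $B_\ell^W = \{\max_{e \in N_{0j_\ell}} |Z_e| \ge c(\alpha, N_{0j_\ell})\}$ are the events defining $I_{j_\ell}(\alpha)$ and $W_{j_\ell}(\alpha)$. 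The first two single-node terms on the right are each bounded by $\eta(d,n,\zeta_1,\zeta_2,\alpha_L)\cdot \alpha$ by the Cram\'er-type estimate \eqref{eq:fdp_cgmb}, since $\PP{B_\ell^W} = \alpha$ by construction.

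For the union term, I would rewrite the two events as a single maximum exceeding a common threshold by rescaling. Indexing a combined process by the disjoint union $\tilde E = \{1\}\times N_{0j_1} \cup \{2\}\times N_{0j_2}$, consider $X_{\ell,e} := \sqrt{n}|\tdTheta_e - \sTheta_e|/\hat c(\alpha, N_{0j_\ell})$ and its Gaussian analogue $Y_{\ell,e} := |Z_e|/c(\alpha, N_{0j_\ell})$. Then $A_1^I \cup A_2^I = \{\max_{(\ell,e)\in \tilde E} X_{\ell,e} \ge 1\}$ and similarly $B_1^W \cup B_2^W = \{\max_{(\ell,e)\in \tilde E} Y_{\ell,e} \ge 1\}$. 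Applying Proposition \ref{prop:approx_cramer_gmb} to this combined scaled process, with the same $\zeta_1, \zeta_2$ covariance-approximation estimates used in the proof of Lemma \ref{lem:quantile}, yields a Cram\'er-type bound with relative error $\eta$; since the nominal Gaussian tail probability is bounded by $\PP{B_1^W} + \PP{B_2^W} = 2\alpha$, the relative error translates to an absolute error of $O(\eta\alpha)$, and adding this to the two single-node contributions gives \eqref{eq:lem_IW_diff_bound}.

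The main obstacle is that the bootstrap thresholds $\hat c(\alpha, N_{0j_\ell})$ used to normalize $X_{\ell,e}$ are themselves data-dependent, so Proposition \ref{prop:approx_cramer_gmb} cannot be applied naively to the random process $X_{\ell,e}$. I would address this by first conditioning on the high-probability event from Lemma \ref{lem:quantile} on which $\hat c(\alpha, N_{0j_\ell})$ is close to its Gaussian quantile $c(\alpha, N_{0j_\ell})$, then replacing the random scaling by the deterministic Gaussian quantile up to a negligible error; alternatively, one can bracket the union event between the two max events with thresholds $\min\{\hat c_1,\hat c_2\}$ and $\max\{\hat c_1,\hat c_2\}$ over $N_{0j_1}\cup N_{0j_2}$, and absorb the bracket using the Gaussian anti-concentration inequality of \citet{chernozhukov2014anti}. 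Either route produces residual terms of order $O(\eta\alpha)$, so the final bound $|\EE{I_{j_1}(\alpha)I_{j_2}(\alpha)} - \EE{W_{j_1}(\alpha)W_{j_2}(\alpha)}|\lesssim \eta(d,n,\zeta_1,\zeta_2,\alpha_L)\cdot \alpha$ follows.
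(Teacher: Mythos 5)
Your inclusion--exclusion decomposition is correct, and it is the same identity $\PP{A\cap B}=\PP{A}+\PP{B}-\PP{A\cup B}$ that the paper ultimately uses; the two single-node terms are handled exactly as in the paper, via \eqref{eq:fdp_cgmb} together with $\PP{B_\ell^W}=\alpha$. The gap is in the union term, and it comes from the order in which you perform the two reductions.

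Because you apply inclusion--exclusion before comparing the bootstrap thresholds to the Gaussian quantiles, the union event $A_1^I\cup A_2^I$ still carries the \emph{random} thresholds $\hat c(\alpha,N_{0j_\ell})$. Your rescaling $X_{\ell,e}=\sqrt{n}|\tdTheta_e-\sTheta_e|/\hat c(\alpha,N_{0j_\ell})$ divides the statistic by a data-dependent quantity, which destroys the scaled-average-plus-small-error structure that Proposition \ref{prop:approx_cramer_gmb} needs, so that proposition cannot be invoked for $\max_{(\ell,e)}X_{\ell,e}$. You flag this obstacle yourself, but the two remedies you sketch are not actually carried out and it is not evident they give the right order: conditioning on $\{\hat c\approx c\}$ leaves a residual that must be small compared with $\alpha$ (not merely $o(1)$), since $\PPP(A_1^I\cup A_2^I)\asymp\alpha$ may tend to zero; and bracketing by $\min\{\hat c_1,\hat c_2\}$ and $\max\{\hat c_1,\hat c_2\}$ does not reach a deterministic threshold either, since both endpoints remain random. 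The paper avoids this by reversing the two steps. It first bounds the ``threshold-swap'' term
\[
\mathrm{D}_{12}=\big|\PP{A_1\cap A_2}-\PP{B_1\cap B_2}\big|\le \PP{A_1\ominus B_1}+\PP{A_2\ominus B_2},
\]
where $A_\ell$ and $B_\ell$ involve the \emph{same} process $T_{\bY_\ell}$ under the random threshold $q_{\cqt}(\alpha;T_{\bW_\ell})$ versus the fixed one $q(\alpha;T_{\bZ_\ell})$; each symmetric difference is then the event that $T_{\bY_\ell}$ falls between the two thresholds, controlled by \eqref{eq:prop1_rate} and \eqref{eq:EE_maxdiff_rate}. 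Only afterwards does the paper apply inclusion--exclusion, so that the resulting union term $\mathrm{D}'_{12}$ carries the deterministic Gaussian quantiles $q(\alpha;T_{\bZ_\ell})$; the rescaling to a common threshold $t$ is then deterministic and legitimate, the union becomes a single maximum $T_{\bY_{12}}$, and Corollary 5.1 of \citet{arun2018cram} applies directly. You are one transposition away from a complete argument: swap the bootstrap thresholds for the Gaussian quantiles \emph{before} the inclusion--exclusion, not after.
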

\begin{proof}[Proof of Lemma \ref{lem:IW_diff_bound}]
First express $\big| \EE{I_{j_1}(\alpha)I_{j_2}(\alpha)} -  \EE{W_{j_1}(\alpha)W_{j_2}(\alpha)} \big|$ as
\begin{align} \nonumber
   ~& \big| \EE{I_{j_1}(\alpha)I_{j_2}(\alpha)} -  \EE{W_{j_1}(\alpha)W_{j_2}(\alpha)} \big| \\ \nonumber
   =~&  \Big|  \PPP \Big( \max_{e \in N_{0 j_1} } \sqrt{n}|\tdTheta_e | \ge \hat c(\alpha, N_{0j_1}), 
\max_{e \in N_{0 j_2} } \sqrt{n}|\tdTheta_e  | \ge \hat c(\alpha, N_{0j_2})\Big )  \\ \nonumber
 ~~& -  \PPP \Big( \max_{e \in N_{0j_1} } |Z_{e}| \ge  c(\alpha, N_{0j_1}), \max_{e \in N_{0j_1} } 
 |Z_{e}| \ge  c(\alpha, N_{0j_2}) \Big)  \Big|\\ 
 \begin{split}\label{eq:IW_diff_expand}
   =~&  \Big|  \PPP \Big( T_{ N_{0j_1}} \ge \hat c(\alpha, N_{0j_1}), 
T_{ N_{0j_2}} \ge \hat c(\alpha, N_{0j_2}) \Big)  \\
 ~~& -  \PPP \Big( \max_{e \in N_{0j_1} } |Z_{e}| \ge  c(\alpha, N_{0j_1}), \max_{e \in N_{0j_1} } |Z_{e}| \ge  c(\alpha, N_{0j_2}) \Big)  \Big|,
 \end{split}
\end{align}
where the second equality holds by the definition of $T_{E}$ in \eqref{eq:TE} and the definitions of $N_{0j_1},N_{0j_2}$. Now proving the bound in \eqref{eq:lem_IW_diff_bound} is reduced to showing
\begin{align} \label{eq:IW_diff_bound2}
   \begin{split}
  &  \Big|  \PPP \Big( T_{ N_{0j_1}} \ge \hat c(\alpha, N_{0j_1}), 
T_{ N_{0j_2}} \ge \hat c(\alpha, N_{0j_2}) \Big) 
 -  \PPP \Big( \max_{e \in N_{0j_1} } |Z_{e}| \ge  c(\alpha, N_{0j_1}), \max_{e \in N_{0j_1} } |Z_{e}| \ge  c(\alpha, N_{0j_2}) \Big)  \Big| \\
 ~~&\le   {\eta(d,n,\zeta_1,\zeta_2, \alpha_L)} {\alpha}.
 \end{split}
\end{align}
We first relate the notations in the above expression to the notations in Appendix \ref{sec:gmb_theory}: $T_{ N_{0j_1}}, T_{ N_{0j_2}}$ correspond to $T$; $\hat c(\alpha, N_{0j_1}), \hat c(\alpha, N_{0j_2})$ correspond to $q_{\cqt}(\alpha, T^{\cB})$; $ \max_{e \in N_{0j_1} } |Z_{e}|$, $\max_{e \in N_{0j_2} } |Z_{e}|$ correspond to $T_{\bZ}$; $ c(\alpha, N_{0j_1}), c(\alpha, N_{0j_2})$ correspond to $q(\alpha; T_{\bZ})$. In Appendix \ref{sec:gmb_theory}, we prove Propositions \ref{prop:cramer_gmb} and \ref{prop:approx_cramer_gmb}. And the strategy can be used to derive the bound on \eqref{eq:IW_diff_expand}. First, we note that $T_{ N_{0j_1}}, T_{ N_{0j_2}}$ satisfy the conditions of Proposition \ref{prop:approx_cramer_gmb}, i.e., \eqref{eq:approx_cond1} and \eqref{eq:approx_cond2}. This is due to the same derivations as the first parapraph of the proof of Lemma \ref{lem:moments_Ij}. Since the proving strategy is quite similar, we omit the proof of \eqref{eq:IW_diff_bound2} for simplicity. Instead, we prove \eqref{eq:simple_IW_diff_bound}, i.e.,
%it suffices to show \eqref{eq:simple_IW_diff_bound}, i.e.,
%We can apply similar strategy as in the proofs of Propositions \ref{prop:cramer_gmb} and \ref{prop:approx_cramer_gmb} to derive the bound on \eqref{eq:IW_diff_expand}. Note Proposition \ref{prop:approx_cramer_gmb} can be seen as an extended version of Propositions \ref{prop:cramer_gmb}. Though the test statistics and the approximate quantiles in \eqref{eq:lem_IW_diff_bound} are under the same setup as in Proposition \ref{prop:approx_cramer_gmb}, \jlmargin{we omit the proof of \eqref{eq:IW_diff_expand} for simplicity}{rewrite}.
%Following a similar proof of Proposition \ref{prop:approx_cramer_gmb}, we have \jlmargin{what you need here}{} 
%Instead, we consider the same setup as in Propositions \ref{prop:cramer_gmb} and give the proof of the following result: 
when $\alpha \in [\alpha_L,1]$,
\begin{align}
  \begin{split}\label{eq:simple_IW_diff_bound}
\mathrm{D} := ~ & \Big| \PP{T_{\bY_1} \ge  q_{\cqt}(\alpha;T_{\bW_1}), 
T_{\bY_2} \ge q_{\cqt}(\alpha;T_{\bW_2})} 
 -  \PP{ T_{\bZ_1} \ge q(\alpha; T_{\bZ_1}),  T_{\bZ_2} \ge q(\alpha; T_{\bZ_2}) }  \Big| \\
 \le ~&   C \alpha \rbr{ \frac{ (\log d)^{11/6}}{n^{1/6}\alpha_L^{1/3}}
+\frac{   (\log d)^{19/6}}{n^{1/6}}}   ,
 \end{split}
 \end{align}
where $T_{\bY_1}, T_{\bY_2}$ correspond to $\breve{T}_{E}$ with $E=N_{0j_1}, N_{0j_2}$ respectively,  $T_{\bW_1}, T_{\bW_2}$ correspond to $\breve{T}_{E}^\cB$ with $E=N_{0j_1}, N_{0j_2}$ respectively, and $T_{\bZ_1} = \max_{e \in N_{0j_1} } |Z_{e}|, T_{\bZ_2} = \max_{e \in N_{0j_2} } |Z_{e}|$. As for the quantiles, $q_{\cqt}(\alpha;T_{\bW_1}), q_{\cqt}(\alpha;T_{\bW_2})$ are the Gaussian multiplier bootstrap quantiles based on $T_{\bW_1},T_{\bW_2}$. $q(\alpha; T_{\bZ_1}), q(\alpha; T_{\bZ_2})$ are the quantiles of the Gaussian maxima $T_{\bZ_1}, T_{\bZ_2}$. 
%Hence \eqref{eq:simple_IW_diff_bound} is under the same setup as in Proposition \ref{prop:cramer_gmb}.
Denote $A_1 = \{ T_{\bY_1} \ge  q_{\cqt}(\alpha;T_{\bW_1}) \}, A_2 = \{ T_{\bY_2} \ge  q_{\cqt}(\alpha;T_{\bW_2})\}$, $B_1= \{ T_{\bY_1} \ge q(\alpha; T_{\bZ_1})\}$, $B_2= \{ T_{\bY_2} \ge q(\alpha; T_{\bZ_2}) \}$, we have
\begin{align}\nonumber
\mathrm{D}_{12} ~:= ~&   \big| \PP{T_{\bY_1} \ge  q_{\cqt}(\alpha;T_{\bW_1}), 
T_{\bY_2} \ge q_{\cqt}(\alpha;T_{\bW_2})} 
 -  \PP{ T_{\bY_1} \ge q(\alpha; T_{\bZ_1}),  T_{\bY_2} \ge q(\alpha; T_{\bZ_2}) }  \big| \\ \nonumber
 \le ~&  \PP{ (A_1 \cap A_2)  \ominus (B_1 \cap  B_2)} \\ \nonumber
 = ~&  \PP{ (A_1 \cap A_2)  \cap  (B_1^{c} \cup  B_2^c )}  + \PP{ (B_1 \cap B_2)   \cap (A_1^{c} \cup  A_2^c )} \\  \nonumber
  \le ~&  \PP{ A_1 \cap  B_1^{c}  }  +   \PP{  A_2  \cap  B_2^c }  +  \PP{  B_1 \cap  A_1^{c} } + \PP{  B_2   \cap  A_2^c } \\  \nonumber
   = ~&  \PP{ (A_1 \cap  B_1^{c}) \cup (B_1 \cap  A_1^{c})  }  +   \PP{  (A_2  \cap  B_2^c )\cup (B_2   \cap  A_2^c)} \\ 
    = ~&  \PP{ A_1 \ominus B_1  }  +   \PP{ A_2 \ominus B_2  }. \label{eq:D12_2diffs}
\end{align}
%We directly apply the derivations in the proof of Proposition \ref{prop:cramer_gmb} to bound $ \PP{ A_1 \ominus B_1  }$ and $\PP{ A_2 \ominus B_2  }$. 

%Specifically, \eqref{eq:prop1_rate} and \eqref{eq:EE_maxdiff_rate} together imply the bound on $\mathrm{II}$ in the proof of Proposition \ref{prop:cramer_gmb}, which is equivalently expressed as
%\begin{equation}\nonumber
%   \left|
%     {\mathbb{P}(T_{\bY} > q_{\cqt}(\alpha;T_{\bW}) )} - \mathbb{P}(T_{\bY} > q(\alpha;T_{\bZ}) )
%     \right| \le      C' \alpha \rbr{ \frac{ (\log d)^{11/6}}{n^{1/6}\alpha_L^{1/3}}
% +\frac{   (\log d)^{19/6}}{n^{1/6}}}. 
% \end{equation}
By \eqref{eq:prop1_rate} and \eqref{eq:EE_maxdiff_rate}, we can bound \eqref{eq:D12_2diffs} as 
\begin{equation} \label{eq:lem_D12_bound}
\mathrm{D}_{12} \le  \PP{ A_1 \ominus B_1  }  +   \PP{ A_2 \ominus B_2  } \le  ~2C' \alpha \rbr{ \frac{ (\log d)^{11/6}}{n^{1/6}\alpha_L^{1/3}}
+\frac{   (\log d)^{19/6}}{n^{1/6}}}.
\end{equation}
By the triangle inequality, we have the following bound on $\mathrm{D}$,
\begin{align}\nonumber 
  \mathrm{D} = ~& \Big| \PP{T_{\bY_1} \ge  q_{\cqt}(\alpha;T_{\bW_1}), 
T_{\bY_2} \ge q_{\cqt}(\alpha;T_{\bW_2})} 
 -  \PP{ T_{\bZ_1} \ge q(\alpha; T_{\bZ_1}),  T_{\bZ_2} \ge q(\alpha; T_{\bZ_2}) }  \Big|\\ \nonumber
 \le ~& \mathrm{D}_{12} + {
 \Big| \PP{T_{\bY_1} \ge  q(\alpha; T_{\bZ_1}), 
T_{\bY_2} \ge q(\alpha; T_{\bZ_2})} 
 -  \PP{ T_{\bZ_1} \ge q(\alpha; T_{\bZ_1}),  T_{\bZ_2} \ge q(\alpha; T_{\bZ_2}) }  \Big|
 }\\ \nonumber 
 \le ~& \mathrm{D}_{12} + \Big|  \PP{  T_{\bY_1} \ge  q(\alpha; T_{\bZ_1})}  -  \PP{  T_{\bZ_1} \ge  q(\alpha; T_{\bZ_1})}\big| + \big|  \PP{  T_{\bY_2} \ge  q(\alpha; T_{\bZ_2})}  -  \PP{  T_{\bZ_2} \ge  q(\alpha; T_{\bZ_2})}\Big|  \\
&~ + \underbrace{
 \Big| \PP{ \{ T_{\bY_1} \ge  q(\alpha; T_{\bZ_1})\} \cup\{  
T_{\bY_2} \ge q(\alpha; T_{\bZ_2}) \} } 
 -  \PP{ \{ T_{\bZ_1} \ge q(\alpha; T_{\bZ_1})\} \cup \{  T_{\bZ_2} \ge q(\alpha; T_{\bZ_2})\} }  \Big|
 }_{\mathrm{D}'_{12}},
 \label{eq:lem_D_bound} 
\end{align}
where the last inequality holds since $ \PP{A \cap B} = \PP{A} + \PP{B} - \PP{A \cup B} $. For the second term and the third term in \eqref{eq:lem_D_bound}, we can directly apply the results \eqref{eq:I_rate} in Proposition \ref{prop:cramer_gmb} and bound them as 
\begin{align} \label{eq:direct_bounds}
\begin{split} 
~~& \Big|  \PP{  T_{\bY_1} \ge  q(\alpha; T_{\bZ_1})}  -  \PP{  T_{\bZ_1} \ge  q(\alpha; T_{\bZ_1})}\big| + \big|  \PP{  T_{\bY_2} \ge  q(\alpha; T_{\bZ_2})}  -  \PP{  T_{\bZ_2} \ge  q(\alpha; T_{\bZ_2})}\Big| \\  
~~& \le C \alpha \cdot \frac{ (\log d)^{19/6}}{n^{1/6}} 
\end{split}
\end{align}
for some constant $C$. 
Regarding the term $\mathrm{D}'_{12}$, we assume $q(\alpha; T_{\bZ_2}) = q(\alpha; T_{\bZ_2}):=t$ without loss of generality. This is because $q(\alpha; T_{\bZ_1}), q(\alpha; T_{\bZ_2})$ are all deterministic values and we can rescale the random vector inside one of the maximum statistics $T_{\bZ_1}, T_{\bZ_2}$. Now we rewrite $\mathrm{D}'_{12}$ based on $q(\alpha; T_{\bZ_2}) = q(\alpha; T_{\bZ_2})=t$ and derive the following bound:
\begin{eqnarray}\label{eq:D12prline1}
\mathrm{D}'_{12} 
& = & \Big| \PP{ \max \{T_{\bY_1},T_{\bY_2}\} \ge t} 
 -  \PP{ \max \{T_{\bZ_1},T_{\bZ_2}\} \ge t }  \Big| \\ \nonumber 
&\le &   \frac{ C''(\log d)^{19/6}}{n^{1/6}} \cdot \PP{ \max \{T_{\bZ_1},T_{\bZ_2}\} \ge t}\\ \nonumber
&\le &   \frac{ C''(\log d)^{19/6}}{n^{1/6}} \cdot \big( \PP{ T_{\bZ_1} \ge q(\alpha; T_{\bZ_1})}+ \PP{ T_{\bZ_2} \ge q(\alpha; T_{\bZ_2})} \big) \\ 
& = &  2 C'' \alpha \cdot \frac{ (\log d)^{19/6}}{n^{1/6}},\label{eq:lem_D12pr_bound}
\end{eqnarray} 
where the first inequality holds by applying Corollary 5.1 of \cite{arun2018cram} similarly as in the derivation of \eqref{eq:I_rate}. Here we briefly explain why Corollary 5.1 of \cite{arun2018cram} is applicable to \eqref{eq:D12prline1}.
Note that $\max \{T_{\bY_1},T_{\bY_2}\} = {T_{\bY_{12}}}$ is the maximum statistic with respect to the random vectors which concatenate the random vectors involved in $T_{\bY_1}, T_{\bY_2}$. Write $T_{\bY_1}, T_{\bY_2}$ explicitly as
\begin{equation} \nonumber
  T_{\bY_1}:= \left \Vert \frac{1}{\sqrt{n}}\sum_{i=1}^{n}\bY^{(1)}_{i}\right \Vert_{\infty},
\quad T_{\bY_2}:=\left\Vert\frac{1}{\sqrt{n}}\sum_{i=1}^{n}\bY^{(2)}_{i}\right\Vert_{\infty},
\end{equation}
and denote $\bY^{(12)}_i = (\bY^{(1)}_i, \bY^{(2)}_i)$, then $T_{\bY_{12}}$ is defined as 
\begin{equation}\nonumber
  T_{\bY_{12}}:= \left \Vert \frac{1}{\sqrt{n}}\sum_{i=1}^{n}\bY^{(12)}_{i}\right \Vert_{\infty}.
\end{equation}
By the definition of $\bZ_1, \bZ_2$, we have ${\rm Cov}((\bZ_1^\top, \bZ_2^\top)^\top) = {\rm Cov}((\bY_1^\top, \bY_2^\top)^\top)$.
% the Gaussian random vector $(Z_{e})_{e \in N_{0j_1}, N_{0j_2} }$ shares the same covariance matrix structure as the i.i.d. random vectors $\{ \bY^{(12)}_i \}_{i=1}^n$ by the construction of $ \{Z_{e}\}_{e \in N_{0j_1}}, \{Z_e\}_{N_{0j_2}}$ in the proof of Lemma \ref{lem:crossterm_Ij}. 
Hence we can apply Corollary 5.1 of \cite{arun2018cram} to \eqref{eq:D12prline1}.
Now we combine \eqref{eq:lem_D12_bound}, \eqref{eq:lem_D_bound}, \eqref{eq:direct_bounds} with \eqref{eq:lem_D12pr_bound} and obtain the following bound
$$
\mathrm{D} \le  C \alpha \rbr{ \frac{ (\log d)^{11/6}}{n^{1/6}\alpha_L^{1/3}}
+\frac{   (\log d)^{19/6}}{n^{1/6}}}, 
$$
 for some constant $C$, thus \eqref{eq:simple_IW_diff_bound} is established. The above strategy of obtaining \eqref{eq:simple_IW_diff_bound} can be similarly applied to the term in \eqref{eq:IW_diff_expand}, then establishes the bound in \eqref{eq:lem_IW_diff_bound}.

%$T=T_{E}, T_{\bY} = \breve{T}_E, T^\cB = T_{E}^{\cB}, T_{\bW} = \breve{T}_{E}^{\cB}$ with $E=N_{0j}$ 
\end{proof}

% \section{Proofs in Section \ref{sec:bipartite_selection}}
\subsection{Proof of Theorem \ref{thm:fdr_linear}}
\label{app:pf:thm:fdr_linear}
\begin{proof}[Proof of Theorem \ref{thm:fdr_linear}]
\label{pf:thm:fdr_linear}
 Throughout the proof, we condition on the design matrix $\bX$, but without explicitly writing it out in order to simplify the notation. In the context of selecting hub response variables, we recall $\cH_0 = \{j\in [d_1]: \norm{\bTheta_{j}}_0 \ge k_\tau\}$ and $d_0 = |\cH_0|$. For a non-hub response variable $j\in\cH_{0}$, let $N_{0j}$ be the set of its null covariates, i.e., $N_{0j}=\{(j,k):\bTheta_{jk}=0\}$. 
 
 To establish FDR control, we follow the same derivations as in the proof of Theorem \ref{thm:fdr_hub}. Specifically, it suffices to bound
\begin{eqnarray}\nonumber
%     && \max_{1\le m \le \lambda_{d}} 
%     \mathbb{P}(\left |
%     \frac{\sum_{j\in \cH_{0}} I_j(\alpha_{m})}{d_{0}\alpha_{m}} - 1 \right |\ge \epsilon)\\
%     &\le& \sum_{m=1}^{\lambda_{d}} 
%   \mathbb{P}(\left |
%     \frac{\sum_{j\in \cH_{0}} I_j(\alpha_{m})}{d_{0}\alpha_{m}} - 1 \right | \ge \epsilon)\\ 
%     &\le & \sum_{m=1}^{\lambda_{d}} \frac{\EEE[\sum_{j\in \cH_{0}} I_j(\alpha_{m}) - d_{0}\alpha_{m}]^2}{\epsilon^{2} d_{0}^{2}\alpha_{m}^{2}}\\
    & ~ & \sum_{m=1}^{\lambda_{d}} \frac{\mathbb{\var}[\sum_{j\in \cH_{0}} I_j(\alpha_{m}) - d_{0}\alpha_{m}]}{\epsilon^{2} d_{0}^{2}\alpha_{m}^{2}} +  \sum_{m=1}^{\lambda_{d}} \frac{(\EEE[\sum_{j\in \cH_{0}} I_j(\alpha_{m}) - d_{0}\alpha_{m}])^2}{\epsilon^{2} d_{0}^{2}\alpha_{m}^{2}}  \\ \label{eq:linear_cross_term}
    &~& + \sum_{m=1}^{\lambda_{d}}\frac{ 
    \sum_{j_1,j_2 \in \cH_0, j_1 \ne j_2} \cov(I_{j_1}(\alpha_m), I_{j_2}(\alpha_m))
    }{\epsilon^{2} d_{0}^{2}\alpha_{m}^{2}}:= \mathrm{III}_{1} + \mathrm{III}_{2}  +0
\end{eqnarray}
for any $\epsilon>0$. In the above terms, the sequence $\{\alpha_m\}_{m=1}^{\lambda_d}$ is chosen similarly as in the proof of Theorem \ref{thm:fdr_hub} and $I_j(\alpha)$ is defined as
\begin{equation}\nonumber
    I_j(\alpha) = \Indrbr{\max_{e \in N_{0j} } \sqrt{n}|\tdTheta_e| \ge \hat c(\alpha, N_{0j})},
\end{equation}
where $\tdTheta_j$ is the debiased Lasso estimator defined in \eqref{eq:dlasso}. Note that the cross term in \eqref{eq:linear_cross_term} equals zero as $\cov(I_{j_1}(\alpha_m), I_{j_2}(\alpha_m))=0$. This is because $\bY^{(j)},j\in [d_1]$ are conditionally independent given $\Xb$. Therefore it suffices to bound $\mathrm{III}_{1}$ and $\mathrm{III}_{2}$. By applying Lemma \ref{lem:moments_Ij} with the term $\eta(d,n,\zeta_1,\zeta_2, \alpha_L)$ replaced by $ \eta_0(d_1,d_2, n, \zeta_1,\zeta_2, \alpha_L)$ in Lemma \ref{lem:multitask_cramer_approx}, \eqref{eq:linear_cross_term} can be controlled by 
$$
\mathrm{III}_{2}  +\mathrm{III}_{2}\le
\frac{C't_{\lambda_{d_2}} }{\epsilon^2h_{d_2} } \left(
\frac{d_1}{d_0|\cB|}+  \eta_0(d_1,d_2, n, \zeta_1, \zeta_2, \alpha_L)
% \eta_2(d_1, d_2, n, \delta_n,\eta_n, \zeta_{1}, \zeta_{2})
\right),
$$
where $\alpha_L =q|\cB|/d_1$ and $t_{\lambda_{d_2}} ,h_{d_2}$ are similarly defined as in the proof of Theorem \ref{thm:fdr_hub}. According to Lemma \ref{lem:multitask_cramer_approx}, we have the explicit form of $\eta_0(d_1,d_2, n, \zeta_1, \zeta_2, \delta, \alpha_L)$:
\begin{equation} \nonumber
\eta_0(d_1,d_2, n, \zeta_1, \zeta_2,  \delta, \alpha_L)= \zeta_1 \log d_2 + (\log d_2)^{5/2}\delta^{1/2} + \frac{\eta + \zeta_2}{\alpha_L},
%    \eta_0(d_1,d_2, n, \alpha_L),
%    = \eta_1(d_1, n, \delta_{n}, \eta_n)+\zeta_1 \sqrt{\log d_2}+{\zeta_2}/\alpha_L 
    % \eta_2(d_2,n,\alpha_L) = \eta_1(d_2,n,\alpha_L)+\zeta_1 q(\alpha_{L}/2;T_{\bZ})+{\zeta_2}  
\end{equation}
where 
%$\eta_1(d_2,n,\alpha_L)=
%{\pi(\delta_{n})}+ \eta_n$ with $\pi(\maxdiff) = [A(\maxdiff)+1]e^{M_1(\log d)^{3/2} A(\maxdiff)}-1$, and 
 $\zeta_1 = O({s\log d_2}/{\sqrt{n}})$, $\zeta_2 = O( e^{-c_1n}  + d_2^{-\tilde{c_0}\wedge c_2})$, $\delta$ satisfies $\frac{1}{\delta}\sqrt{\frac{s \log d_2}{n }} =O(1) $ and $\eta = e^{-c_1 n} + \frac{1}{d_2} + \frac{1}{n \delta^2 }$. By rearranging, we obtain the following bound on $\mathrm{III}_{2}  +\mathrm{III}_{2}$:
 % can finally bound \eqref{eq:linear_cross_term} as
%\begin{equation}\nonumber
%    \frac{\log d_2}{\epsilon^2}\rbr{
%    \frac{ 1}{d_0 \rho}+ \frac{s(\log d_2)^{3/2} }{n^{1/2}} + \frac{s^{1/4}(\log d_2)^{11/4}}{n^{1/5}}
%    + \frac{1}{\rho}\big(\frac{1}{d_2} + \frac{1}{n^{1/5}s \log d_2}   + {e^{-c_1n}  + d_2^{-\tilde{c_0}\wedge c_2}} \big)
%    }.
%\end{equation}
\begin{equation}\nonumber
    \frac{\log d_2}{\epsilon^2}\rbr{
    \frac{ 1}{d_0 \rho}+ \frac{s(\log d_2)^{2} }{n^{1/2}} + 
    (\log d_2)^{5/2}\delta^{1/2} + \frac{1}{n\delta^2 \rho} + 
     \frac{1}{\rho}\big(\frac{1}{d_2}   + {e^{-c_1n}  + d_2^{-\tilde{c_0}\wedge c_2}} \big)
    }.
\end{equation}
where $\rho = \cB/d_1$. We choose $\delta $ to be $ \frac{1}{(n \rho)^{2/5}\log d_2}$ and have $\delta> \frac{1}{n^{2/5}\log d_2}$ (since $\rho <1$). Thus this choice of $\delta$ satisfies the requirement in Lemma \ref{lem:multitask_cramer_approx}. Finally we have $\eqref{eq:linear_cross_term}$ is bounded as
\begin{equation}\nonumber
    \frac{\log d_2}{\epsilon^2}\rbr{
    \frac{ 1}{d_0 \rho}+ \frac{s(\log d_2)^{2} }{n^{1/2}} + 
     \frac{(\log d_2)^2}{(n \rho)^{1/5}} 
    + \frac{1}{\rho d_2} 
    %\big(\frac{1}{d_2}   + {e^{-c_1n}  + d_2^{-\tilde{c_0}\wedge c_2}} \big)
    }.
\end{equation}
%$$
%\left( \frac{d_2\log d_2 + d_0}{d_0 d_2 \rho} + \frac{1}{n^{1/5}s \log d_2 \rho} \right)
%$$
Under the stated assumption in Theorem \ref{thm:fdr_linear}, the above term is $o(1)$. Thus the FDP control result is established. Due to similar derivations as in Theorem \ref{thm:fdr_hub}, the FDR control result follows. 
\end{proof}

\subsection{Ancillary lemmas for Theorem \ref{thm:fdr_linear}}
To prove FDR control, we will establish a key result, i.e., Lemma \ref{lem:multitask_cramer_approx} in this section. Recall that in Section \ref{sec:bipartite_selection}, we utilize the following result 
  \begin{equation} \nonumber %\label{eq:Z_Delta_terms}
    \sqrt{n} (\tdTheta_j - \bTheta_j) = Z_j + \lerr , \quad Z_j |\bX \sim \cN(0,\sigma_j^2 M\hSigma M^{\top}).
    \end{equation}
and approximate the quantile of the maximum statistics $T_{E} =\max_{(j,k) \in E}\sqrt{n}|\tdTheta_{jk}|$ by $T_{E}^\cN = \max_{(j,k) \in E} |Z_{jk}|$. Lemma \ref{lem:multitask_cramer_approx} basically establishes the Cram\'{e}r deviation bounds for such quantile approximation. Note that this lemma can be seen as a special case of Proposition \ref{prop:cramer_gmb} since the involving random vector $\sqrt{n} (\tdTheta_j - \bTheta_j)$ can be decomposed into a Gaussian random vector plus some error term. Hence we do not need to use the results in \cite{arun2018cram} to handle the case of a general random vector (and quantify Gaussian approximation errors). 

In this section, we will define some notations similar to the theoretical results in Appendix \ref{sec:gmb_theory}. First, we will drop the $j$-th subscript for simplicity. Without loss of generality, we prove relevant results for $E = \{(j,k): k \in [d_2]\} $ and drop the subscript $E$. Note the results hold for any $j\in [d_1]$ and any subset of $\{(j,k): k \in [d_2]\} $. Now we rewrite \eqref{eq:dlasso_normal} using new notations, i.e.,
  \begin{equation} \label{eq:Z_Delta_terms}
    \sqrt{n} (\tdTheta_j - \bTheta_j) = \bZ + \lerr , \quad \bZ |\bX \sim \cN(0,\sigma_j^2 M\hSigma M^{\top}),
    \end{equation}
   and denote its maximum by $T_{\bZ} = \maxnorm{\bZ}$. Intuitively, we can use the quantile of $T_{\bZ}$ to approximate the quantile of $T:= \sqrt{n}\maxnorm {\tdTheta_j - \bTheta_j}$. Since the covariance matrix $\sigma_j^2 M\hSigma M^{\top}$ of the Gaussian random vector $\bZ$ is not completely known, we can not directly compute its quantile (denoted by $q(\alpha;\bZ)$). Instead, we first estimate the unknown parameter $\sigma_j$ by $ \hat\sigma_j$, which is constructed according to \eqref{eq:scaled_lasso}. Then we define $\bW \sim \cN(0,\hat \sigma_j^2 M\hSigma M^{\top})$ (given the data $\bX, \bY^{(j)}$), and denote its maximum by $T_{\bW} = \maxnorm{W}$. We will approximate the unknown quantile of $T$ by the conditional quantile $q_{\xi}(\alpha;T_{\bW})$. Here we use the $\xi$ subscript to emphasize that we are conditioning on the data when defining such quantiles. 

Due to the existence of the term $\lerr$ in \eqref{eq:Z_Delta_terms}, there also exist additional estimation errors when we approximate the quantiles of $T$ by the conditional quantiles $q_{\xi}(\alpha;T_{\bW})$. Lemma \ref{lem:linear_maxdiff_bound} characterizes such approximation errors. As for the difference between the distributions of the two Gaussian random vectors $\bW$ and $\bZ$, Lemma \ref{lem:linear_maxdiff_bound} provides a bound on the maximal difference of their covariance matrices, which is denoted by $\maxdiff$. Finally, Lemma \ref{lem:multitask_cramer_approx} builds on these results and establishes the Cram\'{e}r-type deviation bounds for the quantile approximation of $T$.
\begin{lemma}\label{lem:linear_can_approx}
In the context of multiple linear models, we have
$$
 \mathbb{P}(|T-T_{\bZ}|>\zeta_{1})<\zeta_{2},
$$
where $\zeta_1 = O({s\log d_2}/{\sqrt{n}})$ and $\zeta_2 = O( e^{-c_1n}  + d_2^{-\tilde{c_0}\wedge c_2})$.
\end{lemma}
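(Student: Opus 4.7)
The plan is to reduce the bound on $|T - T_{\bZ}|$ to a tail bound on the bias term $\|\Xi\|_\infty$ in the decomposition \eqref{eq:Z_Delta_terms}, and then to invoke existing results on the debiased Lasso with random designs from \citet{javanmard2014confidence}.

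First I would apply the reverse triangle inequality for the $\ell_\infty$ norm. Since $\sqrt{n}(\tilde{\bTheta}_j - \bTheta_j) = \bZ + \Xi$, we have $T = \sqrt{n}\|\tilde{\bTheta}_j - \bTheta_j\|_\infty = \|\bZ + \Xi\|_\infty$, so
\[
|T - T_{\bZ}| \;=\; \bigl| \|\bZ + \Xi\|_\infty - \|\bZ\|_\infty \bigr| \;\le\; \|\Xi\|_\infty.
\]
Hence the probability $\mathbb{P}(|T - T_{\bZ}| > \zeta_1)$ is upper bounded by $\mathbb{P}(\|\Xi\|_\infty > \zeta_1)$, and the problem reduces to a concentration bound on the bias term of the debiased Lasso estimator.

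Next I would control $\|\Xi\|_\infty$ by combining the deterministic decomposition of the debiased Lasso with the properties of the random design. By the definition in \eqref{eq:dlasso} and \eqref{eq:optimization}, one obtains $\Xi = \sqrt{n}(M \hat{\bSigma} - I)(\hat{\bTheta}_j - \bTheta_j)$, so that
\[
\|\Xi\|_\infty \;\le\; \sqrt{n}\, \|M\hat{\bSigma} - I\|_{\max} \cdot \|\hat{\bTheta}_j - \bTheta_j\|_1 \;\le\; \sqrt{n}\,\mu\, \|\hat{\bTheta}_j - \bTheta_j\|_1,
\]
by the feasibility constraint in \eqref{eq:optimization}. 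Under Assumption \ref{asp:dlasso}, I would invoke the Lasso $\ell_1$-error bound: with probability at least $1 - O(e^{-c_1 n} + d_2^{-c_2})$ one has $\|\hat{\bTheta}_j - \bTheta_j\|_1 = O(s\sqrt{\log d_2/n})$ (this uses the restricted eigenvalue / compatibility condition, which holds for sub-Gaussian random designs with high probability as established by standard arguments, e.g., \citet{javanmard2014confidence}). Plugging in $\mu = a\sqrt{(\log d_2)/n}$ then yields $\|\Xi\|_\infty = O(s \log d_2/\sqrt{n})$ with the stated probability.

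The only subtlety is bookkeeping of the probabilities of the exceptional events: one needs (i) the scaled-Lasso noise estimator $\hat{\sigma}_j$ to be consistent so that $\lambda$ is correctly calibrated, (ii) the restricted eigenvalue property of $\hat{\bSigma}$ for the Lasso oracle inequality, and (iii) feasibility of $M$ with $\mu$ as prescribed, each of which may be invoked from Theorems 6--8 of \citet{javanmard2014confidence} and the discussion around the scaled Lasso. Taking a union bound over these events produces the advertised failure probability $\zeta_2 = O(e^{-c_1 n} + d_2^{-\tilde{c}_0 \wedge c_2})$. The main obstacle is simply tracking the constants $\tilde{c}_0, c_1, c_2$ through these cited results rather than any new technical difficulty.
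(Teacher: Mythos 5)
Your proposal is correct and follows essentially the same route as the paper: both reduce $|T - T_{\bZ}|$ via the reverse triangle inequality to a tail bound on $\|\Xi\|_\infty$ and then invoke \citet{javanmard2014confidence}. The only difference is that the paper cites Theorem 2.5 of that reference directly, which already packages the decomposition and the bound $\mathbb{P}(\|\Xi\|_\infty \gtrsim s\log d_2/\sqrt{n}) \le 4e^{-c_1 n} + 4 d_2^{-\tilde{c}_0 \wedge c_2}$, whereas you re-derive that bound from the H\"older/feasibility/Lasso-$\ell_1$-error chain — which is exactly how Theorem 2.5 is proved, so the two arguments are the same modulo how much of the cited result is unpacked.
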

\begin{proof}[Proof of Lemma \ref{lem:linear_can_approx}]
By Theorem 2.5 in \cite{javanmard2014confidence}, we have 
\begin{align} \nonumber
 \sqrt{n} (\tdTheta_j - \bTheta_j) = \bZ + \lerr, \quad \bZ |\bX \sim \cN(0,\sigma_j^2 M\hSigma M^{\top}),
\end{align}
and
\begin{align} \nonumber
\PPP\left(\| \lerr \|_{\infty} \ge
\Big(\frac{16ac\, \sigma}{C_{\rm min}} \Big)\frac{s\log d_2}{\sqrt{n}}\right)\le  4\,e^{-c_1n}  + 4\,d_2^{-\tilde{c_0}\wedge c_2}\,.
\end{align}
Thus we immediately obtain the following bound on the difference between $T$ and $T_{\bZ}$:
$$
 \mathbb{P}(|T-T_{\bZ}|>\zeta_{1})<\zeta_{2}
$$
where $\zeta_1 = O({s\log d_2}/{\sqrt{n}})$ and $\zeta_2 = O( e^{-c_1n}  + d_2^{-\tilde{c_0}\wedge c_2})$.
\end{proof}

\begin{lemma}\label{lem:linear_maxdiff_bound}
For the the maximal difference term $\maxdiff = \norm{\hat{\sigma}^2 M\hSigma M^{\top} - \sigma^2 M\hSigma M^{\top}}_{\max} $, we have 
\begin{equation}\label{eq:linear_maxdiff_bound}
   	\PP{\maxdiff \ge \delta } \le \eta,
%    \sup_{\bX \in  \cE_n}\Pcmid{\maxdiff \ge \delta_n}{\bX}\le \eta_n,
\end{equation}
where $\delta$ satisfies $\frac{1}{\delta}\sqrt{\frac{s \log d_2}{n }} =O(1) $ and $\eta = O\rbr{e^{-c_1 n} + \frac{1}{d_2} + \frac{1}{n \delta^2 }}$.
% +\frac{1}{n^{1/2}s \log d_2}}$.
\end{lemma}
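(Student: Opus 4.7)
The plan is to factor the quantity of interest and bound each factor separately. Since $M\hSigma M^\top$ does not depend on the response $\bY^{(j)}$, while $\hat\sigma_j$ depends on $\bY^{(j)}$ only through the scaled Lasso \eqref{eq:scaled_lasso}, we have the deterministic decomposition
\begin{equation*}
\maxdiff \;=\; \nbr{\hat\sigma_j^{2}M\hSigma M^{\top}-\sigma_j^{2}M\hSigma M^{\top}}_{\max}\;=\;|\hat\sigma_j^{2}-\sigma_j^{2}|\cdot\nbr{M\hSigma M^{\top}}_{\max}.
\end{equation*}
So the task reduces to (i) controlling the $\max$-norm of $M\hSigma M^\top$ by a constant with probability $1-O(e^{-c_1 n}+d_2^{-1})$, and (ii) controlling the deviation $|\hat\sigma_j^2-\sigma_j^2|$ by a quantity of order $\delta$ on an event whose complement has probability $O(1/(n\delta^2))$.

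For step (i), I would use the constraint in \eqref{eq:optimization}: every column of $M^\top$ satisfies $\|\hSigma m_i-\eb_i\|_\infty\le\mu=a\sqrt{\log d_2/n}$. Writing $M\hSigma M^\top=M+M(\hSigma M^\top-I)$, the $(i,k)$-entry has magnitude at most $|M_{ik}|+\|M\|_\infty\cdot\mu$, where $\|M\|_\infty$ is the entrywise $\ell_\infty$ norm of $M$. Under Assumption \ref{asp:dlasso}, the arguments behind Theorems 7--8 of \citet{javanmard2014confidence} give $\|M\|_\infty \le C$ and, in fact, $\|M\hSigma M^\top\|_{\max}\le C'$ on an event $\cE_{\rm des}$ with $\PPP(\cE_{\rm des}^c)\le c\,e^{-c_1 n}+c\,d_2^{-1}$, coming from concentration of $\hSigma$ around $\bSigma$ and feasibility of $m_i=\bSigma^{-1}\eb_i$ for the optimization \eqref{eq:optimization} (plus uniform eigenvalue bounds on $\bSigma$). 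This is the routine part.

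For step (ii), I would invoke the standard consistency of the scaled Lasso. By Theorem~2 of \citet{sun2012scaled} (under Assumption \ref{asp:dlasso}, in particular the sample size and sparsity restrictions there), there is an event $\cE_\sigma$ on which $|\hat\sigma_j/\sigma_j-1|\le C''\sqrt{s\log d_2/n}$. The second ingredient is a moment bound: $\sigma_j^{-2}\hat\sigma_j^{2}$ is close to $\chi_n^2/n$ in the bulk, so $\EE{(\hat\sigma_j^{2}-\sigma_j^{2})^2}\le C/n$ up to $o(1/n)$ terms using the residual decomposition. Then Chebyshev gives $\PPP(|\hat\sigma_j^{2}-\sigma_j^{2}|\ge\delta/(2C'))\le C/(n\delta^2)$. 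Combined with the event where the Lasso bias is negligible (which is of the same order $\sqrt{s\log d_2/n}\lesssim\delta$ by the hypothesis on $\delta$), and with the event $\cE_{\rm des}$, a union bound produces \eqref{eq:linear_maxdiff_bound} with $\eta=O(e^{-c_1 n}+1/d_2+1/(n\delta^2))$.

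The main technical obstacle is the moment bound $\EE{(\hat\sigma_j^2-\sigma_j^2)^2}=O(1/n)$ that yields the $1/(n\delta^2)$ term rather than a worse deviation rate; a crude application of the scaled Lasso result would only give a high-probability bound of order $\sqrt{s\log d_2/n}$ with exponentially small tail, which is not the shape of the claimed $\eta$. To get the $1/(n\delta^2)$ factor, I would split $\hat\sigma_j^2-\sigma_j^2$ into the oracle noise variance $n^{-1}\|\bE^{(j)}\|_2^{2}-\sigma_j^2$ (a centered chi-square with $\Var=2\sigma_j^4/n$, handled by Chebyshev) plus a Lasso bias term that is already $o(\delta)$ on $\cE_\sigma\cap\cE_{\rm des}$; the desired tail then comes from Chebyshev applied only to the oracle piece.
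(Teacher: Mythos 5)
Your proposal takes essentially the same route as the paper: factor $\maxdiff$ through $\|M\hSigma M^\top\|_{\max} = O(1)$, then decompose the noise-variance error into an oracle $\chi^2_n$ piece handled by Chebyshev (which you correctly identified as the only way to produce the $1/(n\delta^2)$ rate rather than an exponential tail) plus Lasso-bias terms absorbed into high-probability events. The only cosmetic difference is the bookkeeping of the $1/d_2$ term: the paper traces it to the Student's-$t$ tail bound on $\maxnorm{\bX^\top\bE}/(n\sigma^\star)$ inside the scaled-Lasso decomposition of \citet{sun2012scaled}, whereas you fold it into the design-matrix event $\cE_{\rm des}$, but this does not change the resulting $\eta$.
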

\begin{proof}[Proof of Lemma \ref{lem:linear_maxdiff_bound}]
To bound $\maxdiff$, we start with the term $|\hat{\sigma}/{\sigma} - 1|$. First we denote
$$
\cE_n= \cE_n(\phi_0, s_0, K) := \Big\{\bX \in \RR^{n \times d_1}: \min_{S:|S| \le s_0} \phi(\hat \bSigma, S) \ge \phi_0, \max_{j\in [d_1]}\bSigma_{jj}\le K, \bSigma = (\bX^\top \bX)/n
\Big\}
$$
similarly as in Theorem 7.(a) of \cite{javanmard2014confidence}, where $\phi(\hat \bSigma, S)$ is the compatibility constant as defined in Definition 1 of \cite{javanmard2014confidence}. Following the proof of Lemma $14$ in  \cite{javanmard2014confidence}, we have 
\begin{align} 
\nonumber
	\PP{ \Big| \frac{\hat{\sigma}}{\sigma} -1 \Big| \ge \epsilon }  
	&~\le   \PP{ \bX \notin \cE_n} + \sup_{\bX \in  \cE_n}\PPP \Big(\Big| \frac{\hat{\sigma}}{\sigma} -1 \Big| \ge \epsilon \, \Big| \, \bX \Big)
	\\ 
    &~ \le  4e^{-c_1 n} + 
    \sup_{\bX \in  \cE_n}\PPP\Big( \frac{\maxnorm{\bX^{\top} \bE}}{n \sigma^\star}\ge {\tilde \lambda}/{4} \, \Big| \,  \bX \Big) + 
     \sup_{\bX \in  \cE_n} \PPP\Big(  \Big| \frac{{\sigma}^\star}{\sigma} -1  \Big| \ge \frac{\epsilon}{10} \, \Big| \,\bX \Big)\label{eq:noise_bound1}
    \end{align}
where $\tilde \lambda =10 \sqrt{(2 \log d_2)/n}$, $\sigma^\star $ is the oracle estimator of $\sigma$ introduced in \cite{sun2012scaled} and $\epsilon$ satisfies $\frac{2 \sqrt{s}\tilde \lambda}{\sigma^\star \phi_0} \le \frac{\epsilon}{2} < a_0$. 
%we let $\epsilon = \epsilon_n := C_{\sigma}\sqrt{\frac{s \log d_2}{n^{4/5}}}$ with some constant $C_{\sigma}$ to satisfy the constraint on $\epsilon$. 

Now we separately bound the last two terms in \eqref{eq:noise_bound1}. The second term in \eqref{eq:noise_bound1} can be bounded by the derivation in the proof of Theorem 2 (ii) \cite{sun2012scaled}, i.e.,
\begin{eqnarray}\nonumber
    \sup_{\bX \in  \cE_n}\PPP\Big(  \frac{\maxnorm{\bX^{\top} \bE}}{n \sigma^\star}\ge {\tilde \lambda}/{4} \, \Big| \, \bX \Big) 
    &\le& d_2  \PPP\Big( |L_k|\ge \sqrt{{2 \log (d_2^{25/4})}/{n}} \, \Big| \, \bX \Big)\\
    \label{eq:boundfromA7}
    &\le& d_2 \cdot \frac{C}{ d_2^{25/4} \sqrt{ \log d_2}} \le \frac{C}{ d_2},
\end{eqnarray}
where $L_k$ is the $k$-th element of $\frac{\bX^{\top} \bE}{n \sigma^\star}$ and $\frac{\sqrt{n-1}L_k}{\sqrt{1-L_k^2}}$ follows the Student's t-distribution with $n-1$ degrees of freedom. Then \eqref{eq:boundfromA7} holds due to equation (A7) in \cite{sun2012scaled} together with the union bound. As for the last term in $\eqref{eq:noise_bound1}$, we note $n(\sigma^\star/\sigma)^2$ follows the $\chi^2_n$ distribution according to \cite{sun2012scaled}. Thus by Markov's inequality, we have
\begin{eqnarray}\label{eq:oracle_sigma_bound}
    \sup_{\bX \in  \cE_n}\PPP\Big( \Big| \frac{{\sigma}^\star}{\sigma} -1  \Big| \ge \frac{\epsilon}{10} \, \Big| \, \bX \Big) 
    &\le& \frac{C' \EE{(n(\sigma^\star/\sigma)^2 -n)^2} }{n^2 \epsilon^2 } \le \frac{2 C'}{n\epsilon^2}. 
    %= \frac{C_1}{n^{1/5}s \log d_2}.
\end{eqnarray}
Now we arrive at the following bound on $\maxdiff$:
\begin{eqnarray} \nonumber
    &~~&  \PPP\Big(\maxdiff \ge (\epsilon^2 + 2\epsilon ) \cdot \sigma^2 \norm{M\hSigma M^{\top}}_{\max} \Big) \\ \nonumber
    &=&  \PPP\Big(\norm{\hat{\sigma}^2 M\hSigma M^{\top} - \sigma^2 M\hSigma M^{\top}}_{\max} \ge {(\epsilon^2 + 2\epsilon ) \cdot \sigma^2 \norm{M\hSigma M^{\top}}_{\max} } \Big) \\ \nonumber
    &\le &  \PPP\Big(\Big| \frac{\hat{\sigma}}{\sigma} -1 \Big| \ge \epsilon \Big) \\ \nonumber    
    &\le &  4e^{-c_1 n} + \frac{C}{ d_2} + \frac{2 C'}{n\epsilon^2},
    % \le 4e^{-c_1 n} + \frac{C}{ d_2}+  \frac{C_1}{n^{1/5}s \log d_2},
\end{eqnarray}
%where $C''$ is some constant such that $C'' \ge \frac{\hat{\sigma}}{\sigma}+1$ with probability greater than . 
where the last inequality comes from combining \eqref{eq:noise_bound1}, \eqref{eq:boundfromA7} with \eqref{eq:oracle_sigma_bound}. Note that the proof of Theorem 16 in \cite{javanmard2014confidence} shows that $\norm{M\hSigma M^{\top}}_{\max} = O(1)$. Hence, we finally establish \eqref{eq:linear_maxdiff_bound} with 
$$
\delta = {\sigma^2 (\epsilon^2 + 2\epsilon )} {\norm{M\hSigma M^{\top}}_{\max}} = C_\sigma \epsilon, \quad \eta = O\rbr{e^{-c_1 n} + \frac{1}{d_2} + \frac{1}{n \delta^2}}.
$$
where $ C_\sigma$ is some constant and $\delta = C_{\sigma}\epsilon $  satisfies $\frac{1}{\delta}\sqrt{\frac{s \log d_2}{n }} =O(1) $ due to the choice of $\epsilon$.
\end{proof}

\begin{lemma} \label{lem:multitask_cramer_approx}
%[CGMB with approximation]\label{prop:approx_cramer_gmb}
%Under the same conditions as in Proposition \ref{prop:cramer_gmb} 
Based on the result about the approximation error between $T$ and $T_{\bZ}$ (Lemma \ref{lem:linear_can_approx}) and the bound on $\maxnorm{\Delta}$ in Lemma \ref{lem:linear_maxdiff_bound}, we have
%\begin{eqnarray} \nonumber
%    \mathbb{P}(|T-T_{\bZ}|>\zeta_{1})<\zeta_{2}
%    % \mathbb{P}(\mathbb{P}_{e}(|T^{\cB}-T_{\bW}|>\zeta_{1})>\zeta_{2})<\zeta_{2}
%\end{eqnarray}
%where $\zeta_1 = O({s\log d_2}/{\sqrt{n}}),\zeta_2 = O( e^{-c_1n}  + d_2^{-\tilde{c_0}\wedge c_2})$, we have
% and satisfy the condition that
% \begin{eqnarray}\label{eq:approx_rate}
%   \zeta_{2} = o\rbr{\alpha_{L}}, ~~~~ \zeta_{1}=o\rbr{\frac{1}{q(\alpha_{L};T_{\bZ}){\log d}}}
% \end{eqnarray}
%Recall that 
\begin{equation}
 \label{eq:linear_approx_cmb}
\sup_{\alpha \in [\alpha_{L},1]}
\left|
\frac{\mathbb{P}(T > q(\alpha;T_{\bW}) )}{\mathbb{P}(T_{\bZ} > q(\alpha;T_{\bZ}) )} - 1 
\right| = O\left(\eta_0(d_1,d_2, n, \zeta_1, \zeta_2, \delta,\alpha_L)
\right),
%= O\left(\eta_2(d_1, d_2, n,\delta_{n}, \eta_n, \zeta_1, \zeta_2)
%\right) 
% o(1), \bluecom{~?~ \max\left\{
% O\left(\frac{(\log d)^{19/6}}{n^{1/6}}\right)
% ,
% O\left(\frac{s(\log d)^{3/2}}{n^{1/2}}\right)
% \right\}}
\end{equation}
where $\eta_0(d_1,d_2, n, \zeta_1, \zeta_2, \delta, \alpha_L):= \zeta_1 \log d_2 + (\log d_2)^{5/2}\delta^{1/2}+ \frac{\eta + \zeta_2}{\alpha_L}$ with $\zeta_1 = O({s\log d_2}/{\sqrt{n}})$, $\zeta_2 = O( e^{-c_1n}  + d_2^{-\tilde{c_0}\wedge c_2})$. Here $\delta$ is a term to be determined and we requite $\frac{1}{\delta}\sqrt{\frac{s \log d_2}{n }} =O(1) $. $\eta$ depends on $\delta$, i.e.,
 $\eta =  e^{-c_1 n} + \frac{1}{d_2} + \frac{1}{n \delta^2}$.
%\eta_2(d_1, d_2, n, \delta_n,\eta_n, \zeta_{1}, \zeta_{2}, \alpha_L)= 
%\eta_1(d_1, n, \delta_{n}, \eta_n)+\zeta_1 \sqrt{\log d_2}+{\zeta_2}/\alpha_L$.
\end{lemma}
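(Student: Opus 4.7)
The plan is to compare $\mathbb{P}(T > q(\alpha;T_{\bW}))$ to the denominator $\mathbb{P}(T_{\bZ} > q(\alpha;T_{\bZ})) = \alpha$ through two bridges via the intermediate quantity $\mathbb{P}(T_{\bZ} > q(\alpha;T_{\bW}))$. The first bridge passes from $T$ to $T_{\bZ}$ and absorbs the additive residual $\lerr$ using Lemma \ref{lem:linear_can_approx}; the second bridge passes from $T_{\bZ}$ to $T_{\bW}$ and absorbs the covariance mismatch $\hat\sigma_j^2 M\hSigma M^\top - \sigma_j^2 M\hSigma M^\top$ using Theorem \ref{thm:ccb_max} combined with the tail bound of Lemma \ref{lem:linear_maxdiff_bound}.

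For the first bridge, Lemma \ref{lem:linear_can_approx} immediately gives the sandwich
\[
\mathbb{P}(T_{\bZ} > q(\alpha;T_{\bW}) + \zeta_1) - \zeta_2 \;\le\; \mathbb{P}(T > q(\alpha;T_{\bW})) \;\le\; \mathbb{P}(T_{\bZ} > q(\alpha;T_{\bW}) - \zeta_1) + \zeta_2.
\]
To convert the additive shift $\pm\zeta_1$ into a multiplicative factor, I would invoke a Cram\'er-type moderate-deviation inequality for the Gaussian maximum $T_{\bZ}$ (parts 2--3 of Theorem 2.1 in \citet{arun2018cram}), yielding
$
\mathbb{P}(T_{\bZ} > q(\alpha;T_{\bW}) \pm \zeta_1) = \mathbb{P}(T_{\bZ} > q(\alpha;T_{\bW}))\cdot(1 + O(\zeta_1 \log d_2)),
$
provided $q(\alpha;T_{\bW}) = O(\sqrt{\log d_2})$; this range condition holds with high probability for $\alpha \ge \alpha_L$ by standard sub-Gaussian maxima tail estimates together with $\hat\sigma_j/\sigma_j = 1+o(1)$.

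For the second bridge, on the event $\{\maxdiff \le \delta\}$ guaranteed by Lemma \ref{lem:linear_maxdiff_bound} (complement has probability $\le \eta$), Theorem \ref{thm:ccb_max} applied to the Gaussian vectors $\bZ$ and $\bW$ at $t = q(\alpha;T_{\bW}) \in [0, C_0\sqrt{\log d_2}]$ delivers
\[
\frac{\mathbb{P}(T_{\bZ} > q(\alpha;T_{\bW}))}{\mathbb{P}(T_{\bW} > q(\alpha;T_{\bW}))} - 1 = O\bigl((\log d_2)^{5/2}\delta^{1/2}\bigr).
\]
Since $\mathbb{P}(T_{\bW} > q(\alpha;T_{\bW})) = \alpha$ by definition of the (conditional) quantile, we get $\mathbb{P}(T_{\bZ} > q(\alpha;T_{\bW})) = \alpha(1 + O((\log d_2)^{5/2}\delta^{1/2}))$. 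Combining both bridges, dividing through by $\alpha \ge \alpha_L$, and absorbing the two failure probabilities $\zeta_2$ (from Lemma \ref{lem:linear_can_approx}) and $\eta$ (from Lemma \ref{lem:linear_maxdiff_bound}) as the additive contribution $(\zeta_2 + \eta)/\alpha_L$, taking the supremum over $\alpha \in [\alpha_L, 1]$ yields exactly $\eta_0(d_1,d_2,n,\zeta_1,\zeta_2,\delta,\alpha_L)$.

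The principal obstacle is the anti-concentration step in the first bridge: a uniform Nazarov-type bound would only produce an \emph{absolute} error $O(\zeta_1 \log d_2)$, which after dividing by $\alpha$ would blow up to $O(\zeta_1\log d_2/\alpha_L)$ and ruin the rate. The Cram\'er-type moderate-deviation bound of \citet{arun2018cram} is essential because it gives a \emph{relative} error $1 + O(\zeta_1 \log d_2)$ in the regime $t \asymp \sqrt{\log d_2}$, so the $\alpha$ in the numerator cancels with the $\alpha$ in the denominator. A secondary but routine issue is verifying that the random quantile $q(\alpha;T_{\bW})$ lies within the admissible range of Theorem \ref{thm:ccb_max} for every $\alpha \ge \alpha_L$ simultaneously, which can be handled by a deterministic upper bound on the conditional Gaussian maximum once $\hat\sigma_j$ is controlled.
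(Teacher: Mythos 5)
Your proposal is correct and essentially takes the paper's route: the two bridges ($T \to T_{\bZ}$ via the residual $\lerr$ controlled by Lemma \ref{lem:linear_can_approx}, and $T_{\bZ} \to T_{\bW}$ via the covariance mismatch on $\{\maxdiff\le\delta\}$ controlled by Lemma \ref{lem:linear_maxdiff_bound} plus the Cram\'er comparison bound), the use of the \citet{arun2018cram} moderate-deviation inequality to convert the additive $\pm\zeta_1$ shift into a relative error, and the $(\zeta_2+\eta)/\alpha_L$ contribution from the failure events, reassembling to exactly the claimed $\eta_0$. Your correct emphasis that a plain Nazarov-style absolute anti-concentration bound would be lethal after dividing by $\alpha$ is indeed the crux that forces the Cram\'er-type tools.

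The one genuine (if minor) difference in route: you apply Theorem \ref{thm:ccb_max} \emph{directly at the random threshold} $t=q(\alpha;T_{\bW})$ and use $\mathbb{P}(T_{\bW}>q(\alpha;T_{\bW}))=\alpha$ to read off $\mathbb{P}(T_{\bZ}>q(\alpha;T_{\bW}))=\alpha(1+O((\log d_2)^{5/2}\delta^{1/2}))$. The paper instead invokes Lemma \ref{lem:compqt}, which on $\{\maxdiff\le\delta\}$ sandwiches $q(\alpha;T_{\bW})$ between the \emph{deterministic} quantiles $q(\tfrac{\alpha}{1\mp\pi(\delta)};T_{\bZ})$, and then splits the resulting comparison into an anti-concentration piece ($\mathrm{II}_{11}$) and a level-shift piece ($\mathrm{II}_{12}$). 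The two are substantively equivalent: Lemma \ref{lem:compqt} is itself just a quantile-flavored restatement of Theorem \ref{thm:ccb_max}, and the level-shift term $\pi(\delta)/(1-\pi(\delta))$ is exactly your $O((\log d_2)^{5/2}\delta^{1/2})$. The quantile-sandwich route has the modest advantage of never applying a probabilistic inequality at a data-dependent threshold—all invocations of the anti-concentration bound land at fixed $T_{\bZ}$-quantiles, so the ``does the bound hold at a random $t$'' point you flag as a ``secondary but routine issue'' is sidestepped entirely rather than argued. If you wanted to make your version fully rigorous you would have to make explicit that Theorem \ref{thm:ccb_max} is uniform over $t\in[0,C_0\sqrt{\log d_2}]$ and that $q(\alpha;T_{\bW})$ lies in that range with probability at least $1-\eta$ for all $\alpha\ge\alpha_L$ simultaneously; you gesture at this, and it does go through, but the paper's detour through Lemma \ref{lem:compqt} is a slightly cleaner way to package the same content.
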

\begin{proof}[Proof of Lemma \ref{lem:multitask_cramer_approx}]
First we have $ \mathbb{P}(|T-T_{\bZ}|>\zeta_{1})<\zeta_{2}
$ by Lemma \ref{lem:linear_can_approx}, thus we obtain 
\begin{equation}\nonumber
\left|
\frac{\mathbb{P}(T > q(\alpha;T_{\bW}) )}{\mathbb{P}(T_{\bZ} > q(\alpha;T_{\bZ}) )} - 1 
\right| 
 \le \max\{\mathrm{II}_{1},\mathrm{II}_{2}\}+\frac{2\zeta_{2}}{\alpha}
\end{equation}
for $ \alpha \in [\alpha_{L},1]$, where $\mathrm{II}_{1}$ and $\mathrm{II}_{2}$ are defined as: 
\begin{equation}\nonumber
\mathrm{II}_{1} := \left|
\frac{\mathbb{P}(T_{\bZ} > q(\alpha;T_{\bW}) +\zeta_1 )}{\mathbb{P}(T_{\bZ} > q(\alpha;T_{\bZ}) )} - 1 \right|,\quad
\mathrm{II}_{2} := \left|
\frac{\mathbb{P}(T_{\bZ} > q(\alpha;T_{\bW}) -\zeta_1 )}{\mathbb{P}(T_{\bZ} > q(\alpha;T_{\bZ}) )} - 1 \right|.
\end{equation}
The above two terms can be bounded similarly. Take $\mathrm{II}_{1}$ as an example, we use similar strategy as in Proposition \ref{prop:cramer_gmb}. Consider the event $S:=\{\maxdiff \le \delta \}$ where $\delta$ satisfies $\frac{1}{\delta}\sqrt{\frac{s \log d_2}{n }} =O(1)$, we apply Lemma \ref{lem:compqt} and bound $\mathrm{II}_{1}$ by
%further by $\mathrm{II}_{12}+\mathrm{II}_{22}$, which are defined as below,
 $$
%\underbrace{\frac{1}{\alpha} \Big|
%\PPP \Big( T_{\bZ} > q(\frac{\alpha}{1 - \pi(\delta_n)};T_{\bZ}) + \zeta_1  \Big) - \PPP(T_{\bZ} > q(\alpha;T_{\bZ}) )
%\Big|}_{\mathrm{II}_{11}} 
\frac{1}{1 - \pi(\delta)}\cdot \mathrm{II}_{11} + \mathrm{II}_{12}+ \frac{ \PP{\maxdiff > \delta}}{\alpha},
$$
where $\mathrm{II}_{11}$ and $\mathrm{II}_{11}$ are defined as
\begin{eqnarray}\nonumber
\mathrm{II}_{11} &:=&  \frac{1 - \pi(\delta)}{\alpha} \Big|
\PPP \Big( T_{\bZ} > q(\frac{\alpha}{1 - \pi(\delta)};T_{\bZ}) + \zeta_1  \Big) - \PPP(T_{\bZ} > q(\frac{\alpha}{1 - \pi(\delta)};T_{\bZ}) )
\Big|,
\\
\mathrm{II}_{12} &:=& \frac{1}{\alpha} \Big|
\PPP \Big( T_{\bZ} > q(\frac{\alpha}{1 - \pi(\delta)};T_{\bZ})   \Big) - \PPP(T_{\bZ} > q(\alpha;T_{\bZ}) )  \Big|,
\end{eqnarray}
where $\pi(\maxdiff) = [A(\maxdiff)+1]e^{M_1(\log d)^{3/2} A(\maxdiff)}-1$.
By applying the part 3 of Theorem 2.1 in \cite{arun2018cram} (with $r+ \epsilon = q(\frac{\alpha}{1 - \pi(\delta)};T_{\bZ}) + \zeta_1, r- \epsilon = q(\frac{\alpha}{1 - \pi(\delta)};T_{\bZ})$) to the Gaussian random vector $\bZ$, we have
\begin{equation}
  \mathrm{II}_{11} \le K_4 \zeta_1 \big( q(\frac{\alpha}{1 - \pi(\delta)};T_{\bZ}) + {\zeta_1}/{2}
  \big) \le C \zeta_1 \log d_2.
\end{equation}
where the second inequality holds due to the similar reason stated in the proof of Proposition \ref{prop:approx_cramer_gmb}. And the term $\mathrm{II}_{11}$ can be simply derived as
\begin{equation}
  \mathrm{II}_{12} = \frac{1}{\alpha} \Big|\frac{\alpha}{1 - \pi(\delta)} - \alpha \Big|  = \frac{ \pi(\delta)}{1- \pi(\delta)}.
\end{equation}
Combing the results above, we have
\begin{equation}\nonumber
\left|
\frac{\mathbb{P}(T > q(\alpha;T_{\bW}) )}{\mathbb{P}(T_{\bZ} > q(\alpha;T_{\bZ}) )} - 1 
\right| 
 \le C'\zeta_1 \log d_2 + \frac{ \pi(\delta)}{1- \pi(\delta)} + \frac{ \pi(\delta)}{1+ \pi(\delta)} + \frac{ 2\PP{\maxdiff > \delta}}{\alpha}  + \frac{2 \zeta_2}{\alpha}.
\end{equation}
%Choosing $\delta_n =$, we finally have
%\begin{equation}\nonumber
%\left|
%\frac{\mathbb{P}(T > q(\alpha;T_{\bW}) )}{\mathbb{P}(T_{\bZ} > q(\alpha;T_{\bZ}) )} - 1 
%\right| 
% \le 2C\zeta_1 \log d + \frac{ \pi(\delta_n)}{1- \pi(\delta_n)} + \frac{ \pi(\delta_n)}{1+ \pi(\delta_n)} + \frac{ 2\PP{\maxdiff > \delta_n}}{\alpha}  + \frac{2 \zeta_2}{\alpha}.
%\end{equation}
%\begin{equation}\nonumber
%\mathrm{II}_{11} := \left|
%\frac{\mathbb{P}(T_{\bZ} > q(\alpha;T_{\bW}) +\zeta_1 )}{\mathbb{P}(T_{\bZ} > q(\alpha;T_{\bW}) )} - 1 \right|,\quad
%\mathrm{II}_{12} := \left|
%\frac{\mathbb{P}(T_{\bZ} > q(\alpha;T_{\bW}) )}{\mathbb{P}(T_{\bW} > q(\alpha;T_{\bW}) )} - 1 \right|.
%\end{equation}
%By the non-uniform anti-concentration bound in \cite{arun2018cram}, we have
%$$
%\mathrm{II}_{11} \le K_4 \zeta_1 (q(\alpha;T_{\bW}) + \zeta_1/2) \le  K_4 \zeta_1 \sqrt{\log d_2} %(q(\alpha_L/2;T_{\bZ})).
%$$
%Combining the above with the bound on $\mathrm{II}_{12}$ from Lemma \ref{lem:linear_cab} 
Applying the bound in Lemma \ref{lem:linear_maxdiff_bound}, we finally establish \eqref{eq:linear_approx_cmb} i.e., $\eta_0(d_1,d_2, n, \zeta_1, \zeta_2, \alpha_L):= \zeta_1 \log d_2 + (\log d_2)^{5/2}\delta^{1/2}+ \frac{\eta + \zeta_2}{\alpha_L}$ up to some constant factor, where $\eta =  e^{-c_1 n} + \frac{1}{d_2} + \frac{1}{n \delta^2}$.
% $\eta_2(d_1, d_2, n, \delta_n,\eta_n, \zeta_{1}, \zeta_{2})= \eta_1(d_1, n, \delta_{n}, \eta_n)+\zeta_1 \sqrt{\log d_2}+{\zeta_2}/\alpha_L$ up to some constant factor.
\end{proof}

\subsection{Proof of Theorem \ref{thm:fdr_gen}}
\label{app:pf:thm:fdr_gen}
In this section, we present the proof of our FDR control result for general graphical models, i.e., Theorem \ref{thm:fdr_gen}.
%state our general FDR control result and present its proof. 
Recall that the generic estimator $\tTheta$ is approximated by general mean zero random vectors $\bY_i(e), i\in [n]$, i.e.,
$$
\tilde \bTheta_e - \bTheta_e - \frac{1}{n} \sum_{i=1}^n \bY_i(e)  =o_P(1/\sqrt{n}).
$$
We estimate the quantile of $T_{E}$ by some Gaussian multiplier bootstrap statistic, i.e.,
\begin{equation}\nonumber
%\label{eq:ggm_gen}
\hat{c} (\alpha,E) = \inf \left\{ t\in \RR : \PPP_\xi \left( T^{\cB}_{E} \le t  \right) \ge 1-\alpha    \right\},
\end{equation}
where $T^{\cB}_{E}$ can be approximated by $T_{0E}^{\cB} = \max_{e\in E} |\frac{1}{\sqrt{n}}\sum_{i=1}^{n}\bY_{i} (e) \xi_i |$. 

\begin{proof}[Proof of Theorem \ref{thm:fdr_gen}]
\label{pf:thm:fdr_gen}
%We can similarly establish Lemma \ref{lem:single_gen} as Lemma A.2 (see its proof in). 
First note we have a general version of Lemma \ref{lem:single_test} i.e., Lemma \ref{lem:single_gen}. Then we follow the same derivations as in the proof of Theorem \ref{thm:fdr_hub} (specifically from the beginning to \eqref{eq:seq_III_3}) and reduce the FDR control problem to proving: for any $\epsilon>0$, 
\begin{align}\label{eq:key_3terms}
    \begin{split}
    & ~ \underbrace{\sum_{m=1}^{\lambda_{d}} \frac{\sum_{j\in \cH_{0}}\Var{I_j(\alpha_{m}) - d_{0}\alpha_{m}}}{\epsilon^{2} d_{0}^{2}\alpha_{m}^{2}}}_{\mathrm{III}_1} + \underbrace{ 
    \sum_{m=1}^{\lambda_{d}} \frac{\rbr{\EE{\sum_{j\in \cH_{0}} I_j(\alpha_{m}) - d_{0}\alpha_{m}}}^2}{\epsilon^{2} d_{0}^{2}\alpha_{m}^{2}} }_{\mathrm{III}_2} \\
    &~~~~~ +\underbrace{ \sum_{m=1}^{\lambda_{d}}\frac{ 
    \sum_{j_1,j_2 \in \cH_0, j_1 \ne j_2} \Cov{I_{j_1}(\alpha_m)}{ I_{j_2}(\alpha_m)}
    }{\epsilon^{2} d_{0}^{2}\alpha_{m}^{2}}}_{\mathrm{III}_3}   \rightarrow 0.
    \end{split}
\end{align}
By Lemma \ref{lem:moments_Ij_gen} and Lemma \ref{lem:crossterm_Ij_gen}, we have 
\begin{eqnarray}
\nonumber
\mathrm{III}_{1} + \mathrm{III}_{2}  +\mathrm{III}_{3}&\le& \frac{C't_{\lambda_d} }{\epsilon^2h_d } 
\left( \frac{d}{d_0|\cB|}+
\eta^2 \right) 
+ \frac{C'''t_{\lambda_d} }{\epsilon^2 \rho  h_d }   \cdot
\left( \eta +\frac{|S|\log d}{d_0^2 }
\right)\\ \nonumber
&\le& \frac{C_1 t_{\lambda_d} \eta^2 }{\epsilon^2h_d } + \frac{C_2}{\epsilon^2 \rho d_0}  \cdot\frac{t_{\lambda_d}}{h_d  } \cdot \left(1 + \eta d_0+\frac{|S|\log d}{d_0 }
\right),
%\le \frac{C_1 t_{\lambda_d} \eta^2 }{\epsilon^2h_d } + \frac{C_2}{\epsilon^2 \rho d_0}  \cdot\frac{t_{\lambda_d}}{h_d  } \cdot \left(1 + \eta d_0+\frac{|S|\log d}{d_0 p}
%\right),
%\label{eq:III123_bound}
\end{eqnarray}
%$
%\le \frac{C'''t_{\lambda_d}}{\rho\epsilon^2  h_d}
%\left(\eta(d,n,\zeta_1,\zeta_2, \alpha_L) +\frac{|S|\log d}{d^2_0}
%\right)
%$
where $\eta :=\eta(d,n,\zeta_1,\zeta_2, \alpha_L)= O \big(\frac{(\log d)^{19/6}}{n^{1/6}} + \frac{ (\log d)^{11/6}}{n^{1/6}\alpha_L^{1/3}} + \zeta_1 \log d + \frac{\zeta_2}{\alpha_L}\big)$ as defined in Lemma \ref{lem:moments_Ij_gen}, with $\zeta_1$ and $\zeta_2$ being the general terms defined in Assumption \ref{asmp:zeta12} and $\alpha_L = q|\cB|/d = \Omega  (\rho) $. 
%Expanding 

Recall that $t_{\lambda_d} = q(\alpha_L;T^B_{N_{0j}}) = O\rbr{\sqrt{\log d}}$ with probability growing to 1 and $t_{\lambda_d}/h_d = O(\log d)$.
Under Assumption \ref{asp:tradeoff_fdp_gen},
we have 
\begin{eqnarray} \nonumber
 \frac{ \log d }{\rho}\rbr{
    \frac{(\log d)^{19/6}}{n^{1/6}} + \frac{ (\log d)^{11/6}}{\rho^{1/3} n^{1/6}} + 
   \zeta_1 \log d + \frac{\zeta_2}{\rho}
   } = o(1),\quad  \frac{\log d}{\rho d_0}  +  \frac{({\log d})^{2}|S|}{\rho d_0^2 } = o(1), 
\end{eqnarray}
and thus $\mathrm{III}_{1} + \mathrm{III}_{2}  +\mathrm{III}_{3} = o(1)$ with probability growing to 1.
Therefore, we have proved $\eqref{eq:key_3terms}$, and finally establish the FDP control result below,
$$
\text{FDP}(\hat \alpha) \le q\frac{d_0}{d} + \smallop.
$$
FDR control can be similarly established as in the proof of Theorem \ref{thm:fdr_hub}.
\end{proof}

\subsection{Ancillary lemmas for Theorem \ref{thm:fdr_gen}}
\label{app:lems:fdr_gen}

\begin{lemma}\label{lem:single_gen} 
Suppose Assumptions \ref{asmp:zeta12} and \ref{asp:tradeoff_fdp_gen} hold. Given some $ 1\le j  \le d$, we have the following results.
\begin{enumerate}[(a)]
    \item \acc{Additionally, suppose for any $|\bTheta_{jk}|>0$, we also have $|\bTheta_{jk}| \ge c\sqrt{\log d/ n}$ for some constant $c>0$.} Under the alternative hypothesis $H_{1j}:  \jdeg \ge k_{\tau}$, we then have for any $\alpha \in (0,1)$,
    % If $\|\bTheta_j\|_{0} \ge k_{\tau}$, i.e., the alternative hypothesis $H_{1j}:\text{ degree of node } j \ge k$ is true, we have
    
 $$
   \lim_{(n,d)\rightarrow \infty} \PPP(\psi_{j,\alpha} = 1)= 1.
   %\text{ and } \lim_{(n,d)\rightarrow \infty} \PPP( \alpha_j \le 1/d) = 1;
 $$
    \item 
    % \st{Additionally, suppose for any $|\bTheta_{jk}|>0$, we also have $|\bTheta_{jk}| \ge c\sqrt{\log d/ n}$ for some constant $c>0$.} 
    Under the null hypothesis $H_{0j}:  \jdeg < k_{\tau} $, we have for any $u \in (0,1)$,
% $\|\bTheta_j\|_{0} \le k_{\tau}-1$, i.e., the null hypothesis $H_{0j}: \text{ degrees of node } j \le k$ is true, we have for any $u \in (0,1)$,
 $$
   \lim_{(n,d)\rightarrow \infty} \PP{\psi_{j,\alpha} = 1}\le \alpha. 
   %\text{ and } \lim_{(n,d)\rightarrow \infty} \PP{ \alpha_j \le u} = u.
 $$

\end{enumerate}
\end{lemma}
\begin{proof}[Proof of Lemma \ref{lem:single_gen}]
\label{pf:lem:single_gen}
Following the same derivations as in the proof of Lemma \ref{lem:single_test} (specifically from the beginning to \eqref{eq:power-e2}), we can reduce the proof of (a) to 
\begin{align} \label{eq:power-e2_gen}
%   {\color{blue}\nonumber \text{my correct version}:} \\ 
 \PP{\min_{e \in N_{0j}^c   } |\bTheta_e| > \frac{\hat{c} (\alpha, E_0)}{\sqrt{n}} + C_0 
   \sqrt{\frac{\log d}{n}} \text{ and } %\max_{e \in \cV \times \cV}
   \norm{\tTheta - \bTheta}_{\max} \le C_0 \sqrt{\frac{\log d}{n}}} > 1-3/d^2.
\end{align}
for some constant $C_0 >0$, where $N_{0j}^c = \{(j,k) : |\bTheta_{jk}| >0 \}$ and $E_{0} = \{(j,k): k \neq j,  k \in [d]\}$. For any fixed $\alpha \in (0,1)$, we consider sufficiently large $d$ such that $1/d \le \alpha$. Then due to \eqref{eq:quantile_logd_gen} in Lemma \ref{lem:tail_gen} and the definition $\hat{c} (\alpha,E) = \inf \left\{ t\in \RR : \PPP_\xi \left( T^{\cB}_{E} \le t  \right) \ge 1-\alpha  \right\}$, we have $\hat{c} (\alpha, E_0) \le C_0 \sqrt{\log d}$ for some constant $C_0>0$, with probability greater than $1 - 1/d^2$. Choosing the constant in the signal strength condition of Lemma \ref{lem:single_gen} to be $2 C_0$, we have with probability greater than $1 - 1/d^2$
% \colored{%\color{blue}
\begin{align*}
   ~& \min_{e \in N_{0j}^c } |\sTheta_e| \ge 2C_0 \sqrt{\frac{\log d}{n}}\ge \frac{\hat{c} (\alpha, E_0)}{\sqrt{n}} +C_0 \sqrt{\frac{\log d}{n}} 
%   \text{~~and~}\\
%  ~&~ \PP{ \norm{\tdTheta - \sTheta}_{\max} \le C_0 \sqrt{\frac{\log d}{n}} } \ge 1- 2/d^2.
\end{align*}
Then applying \eqref{eq:max_tail_gen} in Lemma \ref{lem:tail_gen}, we have \eqref{eq:power-e2_gen} holds, thus proving (a). Regarding (b), it can be proved by following the same derivations of the second part of Lemma \ref{lem:single_test} and noting the accuracy results of quantile approximation i.e., Lemma \ref{lem:quantile-valid_gen}, thus we omit the details.
\end{proof}

\begin{lemma}\label{lem:tail_gen}
Under Assumptions \ref{asmp:zeta12} and \ref{asp:tradeoff_fdp_gen}, we have
\begin{equation}\label{eq:max_tail_gen}
   \PPP\Big( \max_{e \in E_{\cV}} \sqrt{n}|\tTheta_{e} -  \bTheta _{e}| > C_0 \sqrt{\log d }\Big) <\frac{2}{d^2},
\end{equation}
\begin{equation} \label{eq:quantile_logd_gen}
 \max_{j \in [d]} \mathbb{P}\left(  \mathbb{P}_{\xi}(T^{\cB}_{E_j}  \ge 2 C_0 \sqrt{\log d}\mid 
\{\bX_i\}_{i=1}^n
) \ge  2/d\right)  \le  2/d^2,
\end{equation}	
for some constant $C_0 >0$, where $E_{\cV} = \cV \times \cV$ and $E_j = \{(j,k): k \neq j,  k \in [d]\}$.
% and $T_{0E_j}^{\cB} = \max_{e\in E_j} \frac{1}{\sqrt{n}}\sum_{i=1}^{n}\bY_{i} (e) \xi_i$.

\end{lemma}
\begin{proof}[Proof of Lemma \ref{lem:tail_gen}]\label{pf:lem:tail_gen}
To prove \eqref{eq:max_tail_gen}, we first notice that
\begin{align} \nonumber
   ~&~ \PPP\Big( \max_{e \in E_{\cV}} \sqrt{n}|\tTheta_{e} -  \bTheta _{e}| > 2C \sqrt{\log d }\Big)   \\ \nonumber
  = ~&~ \PPP\Big( \max_{e \in E_{\cV}} \sqrt{n}|\tTheta_{e} -  \bTheta _{e} -   n^{-1} \sum_{i=1}^n \bY_i(e) +   n^{-1} \sum_{i=1}^n \bY_i(e) | > 2C \sqrt{\log d }\Big)  \\ \nonumber
  \le  ~&~ \PPP\Big( \max_{e\in E_{\cV} } \sqrt{n}|\tTheta_{e} -  \bTheta _{e} -   n^{-1} \sum_{i=1}^n \bY_i(e)  | > C \sqrt{\log d }\Big) + \PPP\Big( \max_{e \in E_{\cV}} |\frac{1}{\sqrt{n}} \sum_{i=1}^n \bY_i(e) | > C \sqrt{\log d }\Big)\\ 
  :=  ~&~ \mathrm{II}_1 +  \mathrm{II}_2
%    \le  ~&~ \sum_{e \in E} \PPP\Big( \sqrt{n}|\tTheta_{e} -  \bTheta _{e} -   n^{-1} \sum_{i=1}^n \bY_i(e)  | > C \sqrt{\log d }\Big) + \PPP\Big( \max_{j,k \in [d]} |\frac{1}{\sqrt{n}} \sum_{i=1}^n \bY_i((j, k)) | > C \sqrt{\log d }\Big) 
\label{eq:max_tail_decomp}
\end{align}
Regarding the term $\mathrm{II}_1$, we have
\begin{align}\nonumber
	 \mathrm{II}_1  
	 \le  ~&~ \sum_{e \in E_{\cV}} \PPP\Big( \sqrt{n}|\tTheta_{e} -  \bTheta _{e} -   n^{-1} \sum_{i=1}^n \bY_i(e)  | > C \sqrt{\log d }\Big) \\
	 =  ~&~  \sum_{e \in E_{\cV}}  \PPP\Big( | T_{e} -  T_{0e}|  > C \sqrt{\log d} \Big)
	 \le  \sum_{e \in E_{\cV}}  \PPP\Big( | T_{e} -  T_{0e}|  > \zeta_1 \Big)
	\le \zeta_2 d^2   \le 1/d^2, \label{eq:max_tail_II1}
	 \end{align}
where the first inequality is by the union bound, the second to last inequalities hold due to Assumptions \ref{asmp:zeta12} and \ref{asp:tradeoff_fdp_gen} and $|E_{\cV}| = d^2$.  
Regarding the term $\mathrm{II}_2$, we apply the maximal inequality (Lemma 2.2.2 in \cite{vanderVaart1996Weak})) and obtain
\begin{align}\label{eq:max_tail_II2}
	 \mathrm{II}_2 = \PPP\Big( \max_{e \in E_{\cV}} \big|\frac{1}{n} \sum_{i=1}^n \bY_i(e) \big| > C \sqrt{\frac{\log d}{n} }\Big) \le 1/d^2
\end{align}
under the tail condition ($\max_{e \in \cV \times \cV}  \norm{\bY(e)}_{\psi_1} \le C$) in Assumption \ref{asmp:zeta12}. Therefore, combining \eqref{eq:max_tail_decomp}, \eqref{eq:max_tail_II1} with \eqref{eq:max_tail_II2} establishes \eqref{eq:max_tail_gen}.

Regarding \eqref{eq:quantile_logd_gen}, we notice that
\begin{align}
\nonumber
~&~\mathbb{P}\left(  \mathbb{P}_{\xi}(T^{\cB}_{E_j}  \ge 2C_0 \sqrt{\log d}\mid 
\{\bX_i\}_{i=1}^n
) \ge  2/d\right)  \\ \nonumber
=~&~ \mathbb{P}\left(  \mathbb{P}_{\xi}(T^{\cB}_{0E_j}  + T^{\cB}_{E_j} - T^{\cB}_{0E_j} \ge 2C_0  \sqrt{\log d}\mid 
\{\bX_i\}_{i=1}^n
) \ge  2/d\right)  \\ \nonumber
\le ~&~ \mathbb{P}\left(  \mathbb{P}_{\xi}(T^{\cB}_{0E_j}  \ge C_0  \sqrt{\log d}\mid 
\{\bX_i\}_{i=1}^n 
)  + \mathbb{P}_{\xi}( |T^{\cB}_{E_j} - T^{\cB}_{0E_j} |\ge C_0  \sqrt{\log d}\mid 
\{\bX_i\}_{i=1}^n
)
\ge  2/d\right) \\ \nonumber
\le ~&~ \mathbb{P}\left(  \mathbb{P}_{\xi}(T^{\cB}_{0E_j}  \ge C_0  \sqrt{\log d}\mid 
\{\bX_i\}_{i=1}^n 
)  \ge 1/d \right) \\ \nonumber
~&~ + \mathbb{P}\left( \mathbb{P}_{\xi}( |T^{\cB}_{E_j} - T^{\cB}_{0E_j} |\ge C_0  \sqrt{\log d}\mid 
\{\bX_i\}_{i=1}^n
)
\ge 1/d \right) \\
\le ~&~ \mathbb{P}\left(  \mathbb{P}_{\xi}(T^{\cB}_{0E_j}  \ge C_0  \sqrt{\log d}\mid 
\{\bX_i\}_{i=1}^n 
)  \ge  1/d \right) + 1/d^2, 
\label{eq:quantile_logd_gen_sub1}
\end{align}
where the first and second inequalities hold due to the union bound, and the last inequality holds under Assumptions \ref{asmp:zeta12} and \ref{asp:tradeoff_fdp_gen}. Now it suffices to prove
\begin{equation} \label{eq:quantile_logd_gen_sub2}
 \max_{j \in [d]} \mathbb{P}\left(  \mathbb{P}_{\xi}(T^{\cB}_{0E_j}  \ge C_0 \sqrt{\log d}\mid 
\{\bX_i\}_{i=1}^n
) \ge  1/d\right)  \le  1/d^2,
\end{equation}	
where $T_{0E_j}^{\cB} = \max_{e\in E_j} \frac{1}{\sqrt{n}}\sum_{i=1}^{n}\bY_{i} (e) \xi_i$. By applying the union bound, we derive the following inequality
\begin{align}
\nonumber
~~& \mathbb{P}\left(  \mathbb{P}_{\xi}(T^{\cB}_{0E_j}  \ge C_0 \sqrt{\log d}\mid 
\{\bX_i\}_{i=1}^n
)   \ge   1/d\right) \\ \nonumber 
\le~&  \mathbb{P}\left( \sum_{e \in E_j}  \mathbb{P}_{\xi}( \big |\frac{1}{\sqrt{n}}\sum_{i=1}^{n}\bY_{i} (e) \xi_i  \big | \ge C_0 \sqrt{\log d}\mid 
\{\bX_i\}_{i=1}^n
) 
 1/d\right).
\end{align}
Then we follow the same techniques as in the proof of Lemma \ref{lem:quantile_logd}, i.e., applying the union bound again and reduce the problem to proving 
\begin{align}\label{eq:xi_cond_tail_bound}
	\mathbb{P}_{\xi}( \big |\frac{1}{\sqrt{n}}\sum_{i=1}^{n}\bY_{i} (e) \xi_i  \big | \ge C_0 \sqrt{\log d}\mid 
\{\bX_i\}_{i=1}^n
) \ge  1/d^2
\end{align}
with probability less than $1/d^2$ for any $e \in E_j$. Conditional on $\{\bX_i\}_{i=1}^n $, we note $\frac{1}{\sqrt{n}} \sum_{i=1}^n \bY_i(e) \xi_i $ are mean zero Gaussian random variables with the variance as $\Varc{ \frac{1}{\sqrt{n}} \sum_{i=1}^n \bY_i(e) \xi_i  }{  \{\bX_i\}_{i=1}^n } =  \frac{1}{n}  \sum_{i=1}^n  \bY^2_i(e)$, for any $e \in E_j$. Under the tail condition ($\max_{e \in \cV \times \cV}  \norm{\bY^2(e)}_{\psi_1} \le C\sqrt{n/\log d} $) in Assumption \ref{asmp:zeta12}, we have
%Then it suffices to prove
\begin{align}
\nonumber
\mathbb{P}\left( \frac{1}{n}  \sum_{i=1}^n  \bY^2_i(e) > C \right) = \mathbb{P}\left( \frac{1}{n}  \sum_{i=1}^n  \frac{\bY^2_i(e)}{\sqrt{n/\log d}} > C \sqrt{\frac{\log d}{n}} \right) \le  1/d^2. 
\end{align}
Then by the Bernstein's inequality (see e.g., Section 2.1 of \cite{wainwright2019}), we prove \eqref{eq:xi_cond_tail_bound}. Therefore, combining \eqref{eq:quantile_logd_gen_sub1} with \eqref{eq:quantile_logd_gen_sub2} yields \eqref{eq:quantile_logd_gen}.
%by the sub-Gaussian condition.
%Then we follow the same techniques as in the proof of Lemma \ref{lem:quantile_logd} and reduce the problem to proving
%Then we calculate the conditional variance $\Varc{ \frac{1}{\sqrt{n}} \sum_{i=1}^n \bY_i((j, k)) \xi_i  }{  \{\bX_i\}_{i=1}^n } =  \frac{1}{n}  \sum_{i=1}^n  \bY^2_i((j, k))$. 
%\begin{align}
%\nonumber
%\end{align}
\end{proof}

\begin{lemma}\label{lem:quantile-valid_gen}
Under Assumptions \ref{asmp:zeta12} and \ref{asp:tradeoff_fdp_gen}, we have the generic estimator $\tTheta$ satisfies, for any edge set $E \subseteq \cV \times \cV$ and any given $\alpha \in (0, 1)$,

\begin{equation}\label{eq:quantile-valid_gen}
    \lim_{(n,d)\rightarrow \infty}
    \left|\PPP
    \left( \max_{e \in E}  \sqrt{n} |\tTheta_{e}-\bTheta_e|> \hat{c}(\alpha, E) 
    \right) -  \alpha
    \right|=0.
\end{equation}
\end{lemma}
\begin{proof}[Proof of Lemma \ref{lem:quantile-valid_gen}]\label{pf:lem:quantile-valid_gen}
To prove Lemma \ref{lem:quantile-valid_gen}, we use the same strategy of proving Lemma \ref{lem:quantile}, that is to verify the three conditions in Corollary 3.1 of \cite{chernozhukov2013gaussian}:
\begin{enumerate}[(a)]
 \item $\min_{e\in \cV \times \cV }\mathbb{E}[ \bY^2(e)]>c$ and $\max_{e\in \cV \times \cV}  \norm{\bY(e)}_{\psi_1} \le C$ for some positive constants $c$ and $C$ and $(\log(dn))^7/n = o(1)$.
   \item $\PPP(|T_E - T_{0E}| > \zeta_1) < \zeta_2$ holds for some $\zeta_1 , \zeta_2 >0$;
   \item And $\PPP(\PPP_{\xi}(|T_{E}^{\cB}-T_{0E}^{\cB}| > \zeta_1\mid \{\bX_i\}_{i=1}^n)> \zeta_2) < \zeta_2$ holds for $\zeta_1 \sqrt{\log d}  + \zeta_2 = o(1)$.
\end{enumerate}
The above three condition holds under Assumptions \ref{asmp:zeta12} and \ref{asp:tradeoff_fdp_gen}. Therefore, by Corollary 3.1 of \cite{chernozhukov2013gaussian}, we establish \eqref{eq:quantile-valid_gen}.

\end{proof}

\begin{lemma}\label{lem:moments_Ij_gen}
Recalling the definitions of $\mathrm{III}_{1}, \mathrm{III}_{2}$ in \eqref{eq:key_3terms}, we have
$$
\mathrm{III}_{1} + \mathrm{III}_{2}\le  \frac{C't_{\lambda_d} }{\epsilon^2h_d } 
\left(
\frac{1}{\rho d_0}+
\eta^2(d,n,\zeta_1,\zeta_2, \alpha_L)  \right),
$$
where $\eta(d,n,\zeta_1,\zeta_2, \alpha_L)= O \big(\frac{(\log d)^{19/6}}{n^{1/6}} + \frac{ (\log d)^{11/6}}{n^{1/6}\alpha_L^{1/3}} + \zeta_1 \log d + \frac{\zeta_2}{\alpha_L}\big)$ with $\zeta_1$ and $\zeta_2$ being the general terms defined in Assumption \ref{asmp:zeta12}.
% with $\zeta_1 = {s(\log d)^2}/\sqrt{n} $, $\zeta_2 = 1/d^2$.
\end{lemma}
\begin{proof}[Proof of Lemma \ref{lem:moments_Ij_gen}]
Note that the Cram\'{e}r-type deviation results in Appendix C (i.e., Propositions C.1 and C.2) are general and do not rely on any particular distributional assumptions about the data. Hence we follow the proof of Lemma \ref{lem:moments_Ij} to establish the following bounds 
\begin{eqnarray}\nonumber
\mathrm{III}_{1}  + \mathrm{III}_{2} 
\le   \frac{C't_{\lambda_d} }{\epsilon^2h_d } \left(
\frac{1}{\rho d_0}+
\eta^2(d,n,\zeta_1,\zeta_2, \alpha_L) 
\right)
\end{eqnarray}
for some constant $C'$, where $\alpha_L =q|\cB|/d$ as defined similarly in the proof of Theorem \ref{thm:fdr_hub} and $\rho = |\cB|/d$. The only difference between this lemma and Lemma \ref{lem:moments_Ij} is that $\zeta_1$ and $\zeta_2$ are not derived for a specific model but defined as general terms in Assumption \ref{asmp:zeta12}.  
% where the second inequality holds by the definition $\alpha_L =q|\cB|/d$ in the proof of Theorem \ref{thm:fdr_hub} and the definition $\rho = |\cB|/d$ in Section \ref{sec:hub_selection}.

\end{proof}

\begin{lemma}\label{lem:crossterm_Ij_gen}
Recalling the definition of $\mathrm{III}_{3}$ in \eqref{eq:key_3terms}, we have
$$
\mathrm{III}_{3} 
%\le 
%\frac{C'''t_{\lambda_d}}{\rho\epsilon^2 d_0 h_d}
%\left(1 + \eta(d,n,\zeta_1,\zeta_2, \alpha_L)  d_0+\frac{|S|\log d}{d_0 p}
%\right), 
\le \frac{C'''t_{\lambda_d}}{\rho\epsilon^2  h_d}
\left(\eta(d,n,\zeta_1,\zeta_2, \alpha_L) +\frac{|S|\log d}{d^2_0}
\right),
$$
where $\eta(d,n,\zeta_1,\zeta_2, \alpha_L)= O \big(\frac{(\log d)^{19/6}}{n^{1/6}} + \frac{ (\log d)^{11/6}}{n^{1/6}\alpha_L^{1/3}} + \zeta_1 \log d + \frac{\zeta_2}{\alpha_L}\big)$ with $\zeta_1$ and $\zeta_2$ being the general terms defined in Assumption \ref{asmp:zeta12}.
%with $\zeta_1 = {s(\log d)^2}/\sqrt{n}$, $\zeta_2 = 1/d^2$.
\end{lemma}
\begin{proof}[Proof of Lemma \ref{lem:crossterm_Ij_gen}]
Note that $\mathrm{III}_{3}$ in \eqref{eq:key_3terms} equals
\begin{align}
\label{eq:Ijdef_gen}
\begin{split}
\mathrm{III}_{3}&=\sum_{m=1}^{\lambda_{d}}\frac{ 
    \sum_{j_1,j_2 \in \cH_0, j_1 \ne j_2} \Cov{I_{j_1}(\alpha_m)}{ I_{j_2}(\alpha_m)}
    }{\epsilon^{2} d_{0}^{2}\alpha_{m}^{2}},\\
&\quad \text{ where $I_j(\alpha) = \Indrbr{T_{N_{0j}} = \max_{e \in N_{0j} } \sqrt{n}|\tTheta_e - \bTheta_e| = \max_{e \in N_{0j} } \sqrt{n}|\tTheta_e| \ge \hat c(\alpha, N_{0j})}$ } 
\end{split}
\end{align}
for $j\in \{j_1,j_2\}$. To quantify the covariance between $I_{j_1}(\alpha_m)$ and $I_{j_2}(\alpha_m)$ for $j_1,j_2 \in \cH_0, j_1\ne j_2$, we define 
\begin{equation}\label{eq:Wjdef_gen}
    W_j(\alpha) = \Indrbr{\max_{e \in N_{0j} } |Z_{e}| \ge  c(\alpha, N_{0j})},
\end{equation}
where $(Z_e)_{e \in E}$ (with $E= N_{0j}$) is a Gaussian random vector and shares the same mean vector and covariance matrix as the term $(\frac{1}{\sqrt{n}}\sum_{i=1}^{n}\bY_{i}(e))_{e \in E}$ in $T_{0E}$. Here $T_{0E}$ (with $E= N_{0j}$) has the explicit form below
$$
 T_{0E}= \max_{e\in E} \frac{1}{\sqrt{n}}\sum_{i=1}^{n}\bY_{i}(e).
$$
Following the same derivations as in the proof of Lemma, we define $\mathrm{III}'_{3}$ as
\begin{equation}\label{eq:III3'_def_gen}
    \mathrm{III}'_{3} := \sum_{m=1}^{\lambda_{d}}\frac{ 
    \sum_{j_1,j_2 \in \cH_0, j_1 \ne j_2} \Cov{W_{j_1}(\alpha_m)}{ W_{j_2}(\alpha_m)}
    }{\epsilon^{2} d_{0}^{2}\alpha_{m}^{2}}
\end{equation}
then bound $\mathrm{III}_{3} - \mathrm{III}'_{3}$ via bounding the term $\left| \EE{I_{j_1}(\alpha)I_{j_2}(\alpha)} -  \EE{W_{j_1}(\alpha)W_{j_2}(\alpha)} \right|$ and the term $\left| \EE{I_{j_1}(\alpha)}\EE{I_{j_2}(\alpha)} -  \EE{W_{j_1}(\alpha)}\EE{W_{j_2}(\alpha)} \right|$. By Lemma \ref{lem:IW_diff_bound_gen} and applying the same strategy to the term $\EE{I_{j_1}(\alpha)}\EE{I_{j_2}(\alpha)}$, we have up to some constant factor, 
\begin{align}
\nonumber
~~&\alpha^{-2}(\left| \EE{I_{j_1}(\alpha)I_{j_2}(\alpha)} -  \EE{W_{j_1}(\alpha)W_{j_2}(\alpha)} \right|  
+ \left| \EE{I_{j_1}(\alpha)}\EE{I_{j_2}(\alpha)} -  \EE{W_{j_1}(\alpha)}\EE{W_{j_2}(\alpha)} \right| )  \\ \nonumber
  \le~&   \frac{\eta(d,n,\zeta_1,\zeta_2, \alpha_L)}{\alpha}.
\end{align}
Thus we obtain
$$
\left|
\mathrm{III}_{3} - \mathrm{III}'_{3}
\right|\le \frac{1}{\epsilon^{2}}\sum_{m=1}^{\lambda_{d}}\frac{\eta(d,n,\zeta_1,\zeta_2, \alpha_L)
    }{ \alpha_{m}}\le \frac{C' t_{\lambda_d} }{\rho \epsilon^2 h_d} \cdot \eta(d,n,\zeta_1,\zeta_2, \alpha_L).
$$
where $\alpha_L =q|\cB|/d$ as defined similarly in the proof of Theorem \ref{thm:fdr_hub} and $\rho = |\cB|/d$.

%where the second inequality holds due to the fact $\alpha_m \ge \alpha_L$ $\forall~ 1\le m \le \lambda_d$ and $t_{\lambda_d} = (\lambda_d - 1)h_d$, the definition $\alpha_L =q|\cB|/d$ in the proof of Theorem \ref{thm:fdr_hub}, and the definition $\rho = |\cB|/d$ in Section \ref{sec:hub_selection}.
Combining the above bound with Lemma \ref{lem:crossterm_Wj_gen}, immediately produces
\begin{eqnarray}\nonumber
\mathrm{III}_{3}
&\le& \frac{C' t_{\lambda_d} }{\rho \epsilon^2 h_d} \cdot \eta(d,n,\zeta_1,\zeta_2, \alpha_L)
+\frac{C''' t_{\lambda_d} |S| \log d}{\rho \epsilon^2 d^2_{0}h_d}
%\frac{C''t_{\lambda_d}}{\rho\epsilon^2  d_{0} h_d} 
% \left(1+ C_{\bTheta}\frac{|S|\log d}{d_0 p}
%\right)
 \\ \nonumber
&\le& \frac{C'''t_{\lambda_d}}{\rho\epsilon^2  h_d}
\left(\eta(d,n,\zeta_1,\zeta_2, \alpha_L) +\frac{|S|\log d}{d^2_0}
\right),
\end{eqnarray}
for some constant $C'''$.
\end{proof}

\begin{lemma}\label{lem:crossterm_Wj_gen}
Recalling the term $\mathrm{III}'_{3}$ from \eqref{eq:III3'_def_gen} in the proof of Lemma \ref{lem:crossterm_Ij_gen}, we have
\begin{equation}\label{eq:crossterm_Wj_bound_gen}
	 \mathrm{III}'_{3} = \sum_{m=1}^{\lambda_{d}}\frac{ 
    \sum_{j_1,j_2 \in \cH_0, j_1 \ne j_2} \cov(W_{j_1}(\alpha_m), W_{j_2}(\alpha_m))
    }{\epsilon^{2} d_{0}^{2}\alpha_{m}^{2}}
    \le 
    \frac{C''' t_{\lambda_d} |S| \log d}{\rho \epsilon^2 d^2_{0}h_d}.
%    -
%    \frac{C''t_{\lambda_d}}{\rho \epsilon^2 d_{0}h_d} 
%     \left(1+ C_{\bTheta}\frac{|S|\log d}{d_0 p}
%    \right).
\end{equation}
\end{lemma}
\begin{proof}[Proof of Lemma \ref{lem:crossterm_Wj_gen}]
The proof of Lemma \ref{lem:crossterm_Wj_gen} is similar to the proof of Lemma \ref{lem:crossterm_Wj}. The key strategy is to utilize the equivalent expression of $\cov(W_{j_1}(\alpha_m), W_{j_2}(\alpha_m))$ (see \eqref{eq:note_CGB}) and the Cram\'{e}r-type Gaussian comparison bound with $\ell_0$ norm. First, we define $(Z_e)_{e \in N_{0j_1} \cup N_{0j_2}}$ to be jointly Gaussian such that this $(|N_{0j_1}| + |N_{0j_2}|)$-dimensional Gaussian random vector shares the same mean vector and covariance matrix as the term $(\frac{1}{\sqrt{n}}\sum_{i=1}^{n}\bY_{i}(e))_{e \in E}$ with $E=N_{0j_1}\cup N_{0j_2}$, where $\bY_{i}(e), i\in [n]$ are the general random vectors defined in Assumption \ref{asmp:zeta12}. Then we define $(Z'_e)_{e \in N_{0j_1}}, (Z'_e)_{e \in N_{0j_2}}$ to be two Gaussian random vectors such that 
\begin{equation}\nonumber
%\label{eq:Ze12_def}
(Z'_e)_{e \in N_{0j_1}} \stackrel{d}{=} (Z_e)_{e \in N_{0j_1}},~~  (Z'_e)_{e \in N_{0j_2}} \stackrel{d}{=} (Z_e)_{e \in N_{0j_2}} ~~ \text{ and }  (Z'_e)_{e \in N_{0j_1}} \independent  (Z'_e)_{e \in N_{0j_2}}.
\end{equation}
Similarly as in the proof of Lemma \ref{lem:crossterm_Wj}, we can quantify the difference between the covariance matrices of $(Z_e)_{e \in N_{0j_1}\cup N_{0j_2}}$ and $(Z'_e)_{e \in N_{0j_1}\cup N_{0j_2}}$ as a block matrix, denoted as $\Delta$; see \eqref{eq:bSigmaZ_diff} for the expression. We have that the diagonal block entries of $\Delta$ are all zero and the each of off-diagonal block entries equals to the covariance between $Z_{e_1}$ and $Z_{e_2}$ for $ e_1 \in N_{0j_1},  e_2 \in N_{0j_2}$. Thus we have,
\begin{align} \nonumber
~&~	\frac{
\left|
\cov(W_{j_1}(\alpha_m), W_{j_2}(\alpha_m))
\right| }{\alpha^2} \\ \nonumber
=~&~ \frac{1}{\alpha^2}
\Big|
\PPP(\max_{e \in N_{0j_1}\cup N_{0j_2} } |Z_{e}| \ge  t) - \PPP(\max_{e \in N_{0j_1}\cup N_{0j_2} } |Z'_{e}| \ge  t)
\Big|\\ \nonumber
\le  ~&~ \frac{1}{\alpha } \cdot  C''  \Delta_0 \log d \\
%\sum_{e_1:=(j_1, k_1)\in N_{0j_1}, e_2:=(j_2, k_2) \in N_{0j_2}} \Indrbr{\cov( Z_{e_1}, Z_{e_2}) \ne 0 } \\
= ~&~ \frac{C'' \log d}{\alpha }  \sum_{e_1:=(j_1, k_1) \in N_{0j_1}, e_2:=(j_2, k_2) \in N_{0j_2}    }\Indrbr{\cov(\bY((j_1, k_1)), \bY((j_2, k_2)) ) \ne 0 }
\label{eq:crossterm_Wj_bound_gen_sub1}
\end{align}
where the first equality holds due to \eqref{eq:note_CGB}, the first inequality holds by applying the Cram\'{e}r-type Gaussian comparison bound with $\ell_0$ norm (i.e., Theorem \ref{thm:ccb_sparse}), and the last equality holds by the definition of $\Delta_0$ and the construction of $(Z_e)_{e \in N_{0j_1} \cup N_{0j_2}}$. Note that when applying the Cram\'{e}r-type Gaussian comparison bounds, we do not impose any assumption about the connectivity of the associated graph of $\Delta$, hence deriving a bound of $O(  \Delta_0 \log d )$ instead of $O(\frac{\Delta_0 \log d }{\discon})$ as in the proof of Lemma \ref{lem:crossterm_Wj}.
Now we obtain the following bound for $ \mathrm{III}'_{3}$
\begin{align} \nonumber
	\mathrm{III}'_{3} 
	=~& \sum_{m=1}^{\lambda_{d}}\frac{ 
    \sum_{j_1,j_2 \in \cH_0, j_1 \ne j_2} \cov(W_{j_1}(\alpha_m), W_{j_2}(\alpha_m))
    }{\epsilon^{2} d_{0}^{2}\alpha_{m}^{2}}  \\ \nonumber
    \le ~& \Big( \sum_{m=1}^{\lambda_{d}} 
    \frac{ C'' \log d}{\epsilon^{2} d_{0}^{2}\alpha_{m} }\Big ) \sum_{j_1,j_2 \in \cH_0, j_1 \ne j_2} \sum_{(j_1, k_1) \in N_{0j_1}, (j_2, k_2) \in N_{0j_2}    }\Indrbr{\cov(\bY((j_1, k_1)), \bY((j_2, k_2)) ) \ne 0 } \\ \nonumber
   = ~& \Big( \sum_{m=1}^{\lambda_{d}} 
    \frac{1}{\alpha_{m} }\Big )  
    \frac{ C'' |S| \log d}{\epsilon^2 d_0^2 } \le   \frac{C''' t_{\lambda_d}}{\rho h_d}  \cdot 
    \frac{|S| \log d}{ \epsilon^2 d^2_0 }, 
\end{align}
 where the first inequality holds due to \eqref{eq:crossterm_Wj_bound_gen_sub1}, the first equality holds by the definition of $S$, and the last inequality holds due to the same derivations for $\mathrm{III}_1$ in the proof of Lemma \ref{lem:moments_Ij}. Therefore \eqref{eq:crossterm_Wj_bound_gen} is established.
%Indeed, the proof does not rely on $\bY_{i}(e)$ taking any specific form and only requires
%$$
%\forall i \in [n], ~~
% \cov(\bY_{i}(e_1)
%, \bY_{i}(e_2)) = \bTheta_{j_1 j_2} \bTheta_{k_1 k_2} + \bTheta_{j_1 k_2}\bTheta_{j_2 k_1},
%$$
%where $e_1 = (j_1, k_1)$ and $e_2 = (j_2, k_2)$. Since $e_1, e_2 \in E=N_{0j_1}\cup N_{0j_2}$, we have $\bTheta_{j_1 k_1} \bTheta_{j_2 k_2} = 0 $ thus the above expression holds under Assumption \ref{asp:cov_form}. Therefore, we can follow the proof of Lemma \ref{lem:crossterm_Wj} to establish the bound in \eqref{eq:crossterm_Wj_bound_gen}.
\end{proof}

\begin{lemma} \label{lem:IW_diff_bound_gen}
Recall the definitions of $I_{j}(\alpha)$ and $W_{j}(\alpha)$ in \eqref{eq:Ijdef_gen} and  \eqref{eq:Wjdef_gen}, for $j_1, j_2 \in \cH_0, j_1 \ne j_2$, when $\alpha \in [\alpha_L,1]$, we have
  \begin{equation}\nonumber
%  \label{eq:lem_IW_diff_bound_gen}
{\big| \EE{I_{j_1}(\alpha)I_{j_2}(\alpha)} -  \EE{W_{j_1}(\alpha)W_{j_2}(\alpha)} \big|   } \le    {\eta(d,n,\zeta_1,\zeta_2, \alpha_L)} {\alpha}.
\end{equation}
\end{lemma}

\begin{proof}[Proof of Lemma \ref{lem:IW_diff_bound_gen}]
Lemma \ref{lem:IW_diff_bound_gen} is the same as Lemma \ref{lem:IW_diff_bound} except that the terms $I_{j}(\alpha)$ and $W_{j}(\alpha)$ in \eqref{eq:Ijdef_gen} and \eqref{eq:Wjdef_gen} are defined in terms of general $T_{E}, T_{0E}, T_E^{\cB}, T_{0E}^{\cB}$ in Assumption \ref{asmp:zeta12}. The proof of Lemma \ref{lem:IW_diff_bound} only relies on Assumption \ref{asmp:zeta12} and general results in Appendix C and \cite{arun2018cram}. Thus we can follow it to establish the above bounds for $I_{j}(\alpha)$ and $W_{j}(\alpha)$ (defined in \eqref{eq:Ijdef_gen} and \eqref{eq:Wjdef_gen}).
\end{proof}

\section{Proofs of Cram\'{e}r-type comparison bounds}
\label{app:pf:cramer_theory}
In this section, we will prove two types of Cram\'{e}r-type comparison bounds:  Theorems \ref{thm:ccb_max} and \ref{thm:ccb_sparse_unitvar}.
One of the challenges to derive the comparison bounds for Gaussian maxima is that the maximum function is non-smooth. In order to show the Cram\'{e}r-type comparison bound, we first   consider smooth approximation of the maximum. The following lemma from \cite{bentkus1990smooth} show the existence of such smooth approximation.
%consider the smooth approximation, which can be dated back to \cite{bentkus1990smooth} (presented as a lemma \ref{lem:l_inf_smooth}), also used in a most recent paper \cite{arun2018cram}. 
\begin{lemma}[Theorem 1, \cite{bentkus1990smooth}]
\label{lem:l_inf_smooth}
Consider the Euclidean space $\RR^{d}$ with $\ell_{\infty}$-norm, for any $t,\epsilon \ge 0$, there exists a smooth approximating function $\varphi_{r,\epsilon}$ satisfying the following:
\begin{enumerate}[(a)]
    \item $\varphi_{r,\epsilon}: \RR^{d} \rightarrow [0,1],  \varphi_{r,\epsilon} \in \mathbb{C}^{\infty}$, where $\mathbb{C}^{\infty}$ is the smooth function class with functions differentiable for all degrees of differentiation.
    \item $\varphi_{r,\epsilon}(x)=1$ if $ \maxnorm{x} \le r$, $\varphi_{r,\epsilon}(x)=0$ if $\maxnorm{x}\ge r+\epsilon$,
    \item $\sup_{x\in \RR^{d}}\norm {D^{j}\varphi_{r,\epsilon}(x)}_{1} \le c(j)\epsilon^{-j}\log^{j-1}(d+1)$,
\end{enumerate}
where $ \norm{D^{j}\varphi_{r,\epsilon}(x)}_{1}=\sum_{i_1=1}^{d}\cdots \sum_{i_j=1}^{d}\left|
\frac{\partial^j \varphi_{r,\epsilon}(x) }{\partial x_{i_1} \cdots \partial x_{ i_j}}  \right|$ and the constants $c(j)$ only depends on $j$.
\end{lemma}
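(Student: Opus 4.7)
My plan is to construct $\varphi_{r,\epsilon}$ as a composition $g_\epsilon \circ F_\beta$ of a smooth univariate cutoff $g_\epsilon$ with a log-sum-exp soft-max approximation $F_\beta$ of the $\ell_\infty$-norm. First I would define
\begin{equation*}
F_\beta(x) = \beta^{-1} \log \Big( \sum_{i=1}^d e^{\beta x_i} + e^{-\beta x_i} \Big),
\end{equation*}
which satisfies the sandwich $\maxnorm{x} \le F_\beta(x) \le \maxnorm{x} + \beta^{-1} \log(2d)$, and choose $\beta \asymp \epsilon^{-1} \log(d+1)$ so the approximation error is a small multiple of $\epsilon$. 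I would then pick a $\mathbb{C}^\infty$ cutoff $g_\epsilon:\mathbb{R}\to[0,1]$ (obtained by convolving an indicator with a compactly supported mollifier) with $g_\epsilon(u)=1$ for $u\le r+\epsilon/4$, $g_\epsilon(u)=0$ for $u\ge r+3\epsilon/4$, and $\|g_\epsilon^{(j)}\|_\infty \le c_j \epsilon^{-j}$. Then $\varphi_{r,\epsilon}:=g_\epsilon\circ F_\beta$ is automatically in $\mathbb{C}^\infty$ (property (a)), and the sandwich bound yields property (b): when $\maxnorm{x}\le r$ one has $F_\beta(x)\le r+\epsilon/4$ (for a suitable constant hidden in $\beta$), and when $\maxnorm{x}\ge r+\epsilon$ one has $F_\beta(x)\ge r+\epsilon$.

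The core technical step is property (c), the $\ell_1$-bound on higher derivatives. I would first establish the auxiliary estimate $\|D^j F_\beta\|_1 \le C_j \beta^{j-1}$ by induction on $j$. The key observation is that $\partial_i F_\beta = p_i^+ - p_i^-$, where $p_i^{\pm} = e^{\pm\beta x_i}/\sum_{k}(e^{\beta x_k}+e^{-\beta x_k})$ form probability weights, so $\sum_i|\partial_i F_\beta|\le 1$. Each successive differentiation produces either a factor of $\beta$ multiplying an entry of $p^\pm$ (when the derivative hits an $e^{\pm\beta x_j}$) or a $\beta$-factor times a product of two such $p$-weights (when the derivative hits the normalizer); in every case, summing the absolute values over the new index stays bounded by $\beta$ times a sum of probability weights, which collapses to a constant. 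This combinatorics gives the $\beta^{j-1}$ growth.

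The composition bound then follows from Faà di Bruno: for any multi-index $(i_1,\ldots,i_j)$,
\begin{equation*}
|\partial_{i_1}\!\cdots\partial_{i_j}\varphi_{r,\epsilon}(x)| \le \sum_{k=1}^j \|g_\epsilon^{(k)}\|_\infty \sum_{\pi\in\mathcal{P}_k(j)} \prod_{B\in\pi} |\partial_B F_\beta(x)|,
\end{equation*}
where $\mathcal{P}_k(j)$ denotes partitions of $\{i_1,\ldots,i_j\}$ into $k$ nonempty blocks. Summing over all tuples $(i_1,\ldots,i_j)$ and using the $\|D^{|B|}F_\beta\|_1$ estimate blockwise, each $k$-block term contributes at most $c_j\cdot\epsilon^{-k}\beta^{j-k}$. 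Substituting $\beta\asymp\epsilon^{-1}\log(d+1)$ yields a bound of order $\epsilon^{-j}\log^{j-k}(d+1)$, dominated by the $k=1$ case which gives precisely $\epsilon^{-j}\log^{j-1}(d+1)$.

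The main obstacle will be the careful bookkeeping in the inductive derivative bound for $F_\beta$: mixed partials produce a combinatorial tree of terms involving products of $p^\pm_i$'s and $\beta$-factors, and one must verify that after taking absolute values and summing each new index against $\beta$ times a probability, the $\ell_1$-sum does not accumulate any hidden $d$- or $j$-dependence beyond the claimed $C_j \beta^{j-1}$. Once that is in place, assembling the Faà di Bruno sum is routine.
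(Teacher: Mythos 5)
Your construction is the same softmax-plus-smooth-cutoff approach that the paper itself describes in Remark~\ref{rk:l_inf_func} (following \citet{arun2018cram}), and indeed the same idea underlying \citet{bentkus1990smooth}, which the paper cites for this lemma rather than proving it. Your argument is correct: the sandwich $\maxnorm{x}\le F_\beta(x)\le \maxnorm{x}+\beta^{-1}\log(2d)$ together with $\beta\asymp\epsilon^{-1}\log(d+1)$ gives (b), the $\ell_1$-derivative estimate $\|D^jF_\beta\|_1\le C_j\beta^{j-1}$ (which you rightly flag as the point requiring the most care -- each differentiation either collapses an index via a Kronecker delta or introduces a new probability-weight factor that sums to $1$, so only one factor of $\beta$ is paid per order beyond the first) feeds into Fa\`a di Bruno, and the $k=1$ partition term dominates to yield $\epsilon^{-j}\log^{j-1}(d+1)$.
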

% \jlmargin{\begin{remark}\label{rk:lem:l_inf_smooth}
% %First we note an important fact that the dependence on the dimension $d$ in property (c) is necessary but only logarithmic. 
% The following equivalent expressions of $\Indrbr{\maxnorm{x } \le t}$ is derived from property (b),
% \begin{equation*}
% \Indrbr{\maxnorm{x }\le t} = \varphi_{t,\epsilon}(x) - \Indrbr{t < \maxnorm{x } < t+\epsilon}\cdot \varphi_{t,\epsilon}(x)= \varphi_{t-\epsilon,\epsilon}(x) - \Indrbr{t-\epsilon < \maxnorm{x } < t}\cdot \varphi_{t-\epsilon,\epsilon}(x).
% \end{equation*}
% \end{remark}}{delete}
\begin{remark}\label{rk:l_inf_func} 
\cite{arun2018cram} gives a concrete example of $\varphi_{r,\epsilon}(x)$  satisfying the three properties in Lemma \ref{lem:l_inf_smooth}:
 \begin{equation}\label{eq:l_inf_func}
\varphi_{r,\epsilon}(x) = g_0\left(\frac{2(F_{\beta}(z_x - r\mathbf{1}_{2d}) - \epsilon/2)}{\epsilon}\right),
\end{equation}
where $\beta = 2 \log (2d) /\epsilon$, $g_0(t) := 30\Indrbr{0 \le t \le 1}\int_t^1 s^2(1 - s)^2 ds  + \Indrbr{t \le 0}$, $F_{\beta}(\cdot)$ is the ``softmax'' function 
\[
F_{\beta}(z) := \frac{1}{\beta}\log \Big(\sum_{m=1}^{2d} \exp\left(\beta z_m\right) \Big)\quad \text{ for }  z\in\RR^{2d},
\]
$z_x = (x^{\top},-x^{\top})^{\top}$, and $\mathbf{1}_{2d}$ is the vector of $1$'s of dimension $2d$. 

In fact, in the proof of  Theorem \ref{thm:ccb_max}, we do not need a specific form of $\varphi_{r,\epsilon}(x)$ and any function satisfying Lemma \ref{lem:l_inf_smooth} will work. While in the proof of Theorem \ref{thm:ccb_sparse_unitvar}, we need to utilize the specific form in \eqref{eq:l_inf_func}.

\end{remark}
% \lzmargin{mention it will be needed for thm 3.2}{introduce softmax}
\subsection{Proof of Theorem \ref{thm:ccb_max}} \label{app:pf:thm:ccbmax}
As mentioned in Remark \ref{rk:thm:ccb_max}, we can prove the Cram\'er-type comparison bound with max norm difference as
$
M_3(\log d)^{3/2} A(\maxdiff)e^{M_3(\log d)^{3/2} A(\maxdiff)},
$ 
without the assumption on $\maxdiff$.
% where $M_1$ is a universal constant and $A(\maxdiff)=K_1\maxdiff^{1/2} \exp{(K_2\maxdiff^{1/2})}$ with $K_1$ and $K_2$ only depending on
% %$\max_{j}(\bSigma^U)_{jj}$, $\min_{j}(\bSigma^V)_{jj}$, $\max_{j}(\bSigma^V)_{jj}$ 
% the median of Gaussian maxima $\maxnorm{U},\maxnorm{V}$, and the variance terms $\min_{1\le j\le d}\{\sigma^U_{jj},\sigma^V_{jj}\},\max_{1\le j\le d}\{\sigma^U_{jj},\sigma^V_{jj}\}$,
 Therefore we state the more general form of Theorem \ref{thm:ccb_max} below and give its proof. Note that under the assumption $(\log d)^{5}\maxdiff = O(1)$ and the discussions in Remark \ref{rk:thm:ccb_max}, the bound \eqref{eq:ccb_max} in Theorem \ref{thm:ccb_max} immediately follows from Theorem \ref{thm:ccb_max_general}.
\begin{theorem}[CCB with max norm difference]\label{thm:ccb_max_general} Let $U$ and $V$ be two Gaussian random vectors and we have
% Suppose $(\log d)^{5}\maxdiff = O(1)$, then we have
\begin{equation}\label{eq:ccb_max_general}
    \sup_{0\le t \le C_0\sqrt{\log d}}\left|\frac{\mathbb{P}(\maxnorm{U} > t)}{\mathbb{P}(\maxnorm{V} > t)}-1\right|\le M_3(\log d)^{3/2} A(\maxdiff)e^{M_1(\log d)^{3/2} A(\maxdiff)}, %:= \pi(\maxdiff),
    %C_1(\log d)^{5/2}\maxdiff^{1/2},
    %M_1(\log d)^{3/2} A(\maxdiff)e^{M_1(\log d)^{3/2} A(\maxdiff)}
    %[A(\maxdiff)+1]e^{M_1(\log d)^{3/2} A(\maxdiff)}-1 %:= \pi(\maxdiff)
\end{equation}
%\jlmargin{}{Move definition of $\pi(\maxdiff)$ to prop c.1}
where $C_0>0$ is some constant, $A(\maxdiff)=M_1 \log d \maxdiff^{1/2} \exp{(M_2 \log^2 d \maxdiff^{1/2})}$, the constants $M_1, M_2$ only depend on $\min_{1\le j\le d}\{\sigma^U_{jj},\sigma^V_{jj}\},\max_{1\le j\le d}\{\sigma^U_{jj},\sigma^V_{jj}\}$, and $M_3$ is a universal constant.
% where the constant $C_1$ only depends on
% the variance terms $\min_{1\le j\le d}\{\sigma^U_{jj},\sigma^V_{jj}\},\max_{1\le j\le d}\{\sigma^U_{jj},\sigma^V_{jj}\}$.
\end{theorem}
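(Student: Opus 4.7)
The plan is to combine Slepian's interpolation with a smoothing of the $\ell_\infty$-ball indicator, and then convert the resulting comparison into a Grönwall-type integral inequality that is closed for the \emph{ratio} rather than the difference. Set $W(s) = \sqrt{s}\,U + \sqrt{1-s}\,V$ for $s \in [0,1]$, so that $W(0)$ and $W(1)$ have the laws of $V$ and $U$ respectively, and let $Q_t(s) = \mathbb{P}(\maxnorm{W(s)} > t)$ and $R_t(s) = Q_t(s)/Q_t(0) - 1$. The target is a bound on $|R_t(1)|$ uniform in $t \in [0,C_0\sqrt{\log d}]$.

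First I would replace the indicator of $\{\maxnorm{\cdot}>t\}$ by $1-\varphi_{t,\epsilon}$, where $\varphi_{t,\epsilon}$ is the smooth surrogate from Lemma \ref{lem:l_inf_smooth}, and work with $\widetilde Q_t(s) = \mathbb{E}[1-\varphi_{t,\epsilon}(W(s))]$. Differentiating in $s$ along the Slepian path and applying Gaussian integration by parts yields
\begin{equation*}
\widetilde Q_t{}'(s) \;=\; -\tfrac{1}{2}\sum_{j,k}(\sigma^U_{jk}-\sigma^V_{jk})\,\mathbb{E}\!\left[\partial_j\partial_k\varphi_{t,\epsilon}(W(s))\right].
\end{equation*}
Since $\partial_j\partial_k\varphi_{t,\epsilon}$ is supported on the shell $\{t-\epsilon \le \maxnorm{W(s)} \le t+\epsilon\}$, this derivative is at most $\maxdiff$ times $\|D^2\varphi_{t,\epsilon}\|_1 \lesssim \epsilon^{-2}\log d$ (Lemma \ref{lem:l_inf_smooth}) times the shell mass of $W(s)$.

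The key step is to express that shell mass multiplicatively in terms of $Q_t(s)$. Using the anti-concentration bound of \citet{chernozhukov2014anti} together with standard Gaussian tail estimates, for $t = O(\sqrt{\log d})$ the shell mass is $O(\epsilon\sqrt{\log d})$ and can be bounded by a constant times $\epsilon\sqrt{\log d}\cdot Q_t(s)$. Combined with the separate smoothing error $|Q_t(s) - \widetilde Q_t(s)| \lesssim \epsilon\sqrt{\log d}$, integrating from $0$ to $s$ and dividing by $Q_t(0)$ gives the integral inequality
\begin{equation*}
|R_t(s)| \;\le\; A \;+\; AB\,s \;+\; AB\int_0^s |R_t(\mu)|\,d\mu,
\end{equation*}
with $A \lesssim \epsilon\sqrt{\log d}$ and $B \lesssim \epsilon^{-1}\sqrt{\log d}\,\maxdiff^{1/2}$ (so $AB$ absorbs the $\epsilon^{-2}\log d$ factor). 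Grönwall's inequality then yields $|R_t(1)| \lesssim A\,e^{AB}$, and optimizing $\epsilon$ (roughly $\epsilon \asymp \maxdiff^{1/4}\log d$, iterated once to recover the doubly-exponential form of $A(\maxdiff)$) delivers the claimed $M_3(\log d)^{3/2}A(\maxdiff)\exp(M_3(\log d)^{3/2}A(\maxdiff))$.

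The main obstacle is the shell estimate: the standard anti-concentration inequalities for Gaussian maxima provide only an \emph{absolute} upper bound on $\mathbb{P}(t-\epsilon \le \maxnorm{W(s)} \le t+\epsilon)$, whereas closing the Grönwall loop for a \emph{relative} error demands that this mass be expressed, for every $s\in[0,1]$ and every $t \lesssim \sqrt{\log d}$, as a constant multiple of $Q_t(s)$ itself. Establishing this uniform-in-$s$ multiplicative shell bound (and keeping the constants independent of $d$) is the delicate ingredient that upgrades the Kolmogorov-distance comparison of \citet{chernozhukov2015comparison} to the Cramér-type comparison required here.
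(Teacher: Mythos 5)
Your proposal reproduces the structural skeleton of the paper's proof almost exactly: replace the indicator of the $\ell_\infty$-ball by the smooth surrogate $\varphi_{t,\epsilon}$ of Lemma~\ref{lem:l_inf_smooth}, differentiate along the Slepian path $W(s)=\sqrt{s}U+\sqrt{1-s}V$ via Stein/Gaussian integration by parts to obtain $\Psi_t'(s)=\tfrac12\sum_{j,k}(\sigma^U_{jk}-\sigma^V_{jk})\EEE[\partial_j\partial_k\varphi(W(s))]$, bound this by $\maxdiff\cdot\|D^2\varphi\|_1$ times a shell probability for $W(s)$, express the shell probability multiplicatively in terms of $Q_t(s)$, and close the loop by Gr\"onwall's inequality applied to $R_t(s)=Q_t(s)/Q_t(0)-1$. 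This is exactly what the paper does, and the idea of converting the Slepian comparison into an integral inequality for the \emph{ratio} rather than the difference is the genuine innovation that you have correctly identified.

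The one genuine gap is the ingredient you yourself flag as ``the main obstacle,'' and your third paragraph does not actually fill it. You assert that ``using the anti-concentration bound of \citet{chernozhukov2014anti} together with standard Gaussian tail estimates, for $t=O(\sqrt{\log d})$ the shell mass \ldots can be bounded by a constant times $\epsilon\sqrt{\log d}\cdot Q_t(s)$.'' That does not follow: the CCK anti-concentration inequality gives only the \emph{absolute} bound $\PP{t-\epsilon\le\maxnorm{W(s)}\le t+\epsilon}\lesssim\epsilon\sqrt{\log d}$, and for $t$ near the typical value of the maximum $Q_t(s)$ may already be polynomially small in $d$, so no constant multiple of $Q_t(s)$ dominates the absolute bound uniformly in $t\le C_0\sqrt{\log d}$. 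What the paper invokes instead is the non-uniform (Cram\'er-type) anti-concentration bound from Theorem~2.1 of \citet{arun2018cram},
\begin{equation*}
\PP{t-\epsilon\le\maxnorm{W(s)}\le t+\epsilon}\le K_1(t+1)\epsilon\exp\bigl(K_2(t+1)\epsilon\bigr)\PP{\maxnorm{W(s)}>t},
\end{equation*}
which is itself multiplicative in the tail probability and is precisely the object whose existence you acknowledge as delicate. Without a reference or a self-contained proof of such a bound, the step from the absolute shell estimate to the multiplicative one is unjustified, and the Gr\"onwall argument cannot be closed. As a minor secondary point, your stated $A\lesssim\epsilon\sqrt{\log d}$ and $B\lesssim\epsilon^{-1}\sqrt{\log d}\maxdiff^{1/2}$ make $AB$ independent of $\epsilon$, so ``optimizing $\epsilon\asymp\maxdiff^{1/4}\log d$'' is not coherent; the paper instead takes $\epsilon=\maxdiff^{1/2}/(t+1)$ so that $(t+1)\epsilon=\maxdiff^{1/2}$ inside the exponential of the anti-concentration constant, which is what produces the nested exponential form $A(\maxdiff)=M_1\log d\,\maxdiff^{1/2}\exp(M_2\log^2 d\,\maxdiff^{1/2})$ in one pass rather than by iteration.
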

\begin{proof}[Proof of Theorem \ref{thm:ccb_max_general}]\label{pf:thm:ccb_max_general}
Using the smooth approximation in Lemma \ref{lem:l_inf_smooth}, we can bound the difference between the distribution functions of Gaussian maxima as
\begin{eqnarray}\nonumber \label{eq:phi_smoothing}
    &~&\big|
    \PPP(\maxnorm{U} > t) - \PPP(\maxnorm{V} > t)\big|
     \\ \nonumber
    &=& \big| \EEE[\Indrbr{\maxnorm{U} \le t}- \Indrbr{\maxnorm{V} \le t}]\big| \\
    &\le&   \PPP(t-\epsilon \le \maxnorm{V} \le t+\epsilon) + \max_{j=1,2}\abr{\EEE\varphi_{j}(U) - \EEE\varphi_{j}(V)}, 
    \label{eq:kol_dist}
\end{eqnarray}
where $\varphi_{1}(x):= \varphi_{t,\epsilon}(x), \varphi_{2}(x) := \varphi_{t-\epsilon,\epsilon}(x)$. Regarding the inequality in \eqref{eq:kol_dist}, we first notice that
%The second inequality holds since
$$
\Indrbr{\maxnorm{x } \le t} = \varphi_{t,\epsilon}(x) - \Indrbr{t < \maxnorm{x }  < t+\epsilon}\cdot \varphi_{t,\epsilon}(x)= \varphi_{t-\epsilon,\epsilon}(x) - \Indrbr{t-\epsilon < \maxnorm{x }  < t}\cdot \varphi_{t-\epsilon,\epsilon}(x),
$$
{where the first equality is due to property (b) in Lemma \ref{lem:l_inf_smooth}.} Hence we have

\begin{eqnarray*}
  \Indrbr{\maxnorm{U} \le t} &\le & \varphi_{j}(U),\quad j = 1,2 \\
  \Indrbr{\maxnorm{V} \le t} &\ge & \varphi_{1}(V) - \Indrbr{t< \nbr{V}< t + \epsilon},\\
  \Indrbr{\maxnorm{V} \le t} &\ge & \varphi_{2}(V) - \Indrbr{t- \epsilon < \nbr{V}< t },
\end{eqnarray*}
then \eqref{eq:kol_dist} immediately follows by combining the above three inequalities.
%due to the property (b) in Lemma \ref{lem:l_inf_smooth}.
% $I_{(||x||\le t)} = \varphi_{t,\epsilon}(x) - I_{(t\le||x||\le t+\epsilon)}*\varphi_{t,\epsilon}(x)$ and $I_{(||x||\le t)} = \varphi_{t-\epsilon,\epsilon}(x) - I_{(t-\epsilon\le||x||\le t)}*\varphi_{t-\epsilon,\epsilon}(x)$.

The first term in \eqref{eq:kol_dist} is  related to the anti-concentration inequalities for the Gaussian maxima. By applying Theorem 2.1 in \cite{arun2018cram}, we have
\begin{equation}\label{eq:anti_nonunif} 
    \mathbb{P}(t-\epsilon \le \maxnorm{V} \le t+\epsilon) \le K_1(t+1)\epsilon\exp(K_2(t+1)\epsilon)\mathbb{P}(\maxnorm{V} > t).
\end{equation}
% \jlmargin{}{1, K1 K2 to median; 2. median and log d directly cite gaussian maximal inequality}
The explicit forms of $K_1,K_2$ can be found in Theorem 2.1 of \cite{arun2018cram}. They only depend on $\min_{1\le j\le d}\{\sigma^U_{jj},\sigma^V_{jj}\},\max_{1\le j\le d}\{\sigma^U_{jj},\sigma^V_{jj}\}$ and the median of Gaussian maxima. 
% https://math.stackexchange.com/questions/503710/distance-between-mean-and-median
% Remark that we have $|\text{Med}(X)-\EEE(X)|\le \EEE[|X - \EEE[X]|]$, together with the result on the expected value of sub-Gaussian maxima, we can control the median of $\maxnorm{V}$ by $O(\sqrt{\log d})$.
Remark that the median of $\maxnorm{V}$ is bounded by $O(\sqrt{\log d})$ by the maximal inequalities for sub-Gaussian random variables (Lemma 5.2 in \cite{van2014probability}). Plugging this into the explicit form of $K_1,K_2$ in Theorem 2.1 of \cite{arun2018cram}, we have $K_1 = O(\log d), K_2 = O({\log^2 d})$. Then \eqref{eq:anti_nonunif} can be written as
$$
 \mathbb{P}(t-\epsilon \le \maxnorm{V} \le t+\epsilon) \le  M_1 \log d (t+1)\epsilon\exp( {M_2 \log^2d}~ (t+1)\epsilon)\mathbb{P}(\maxnorm{V} > t),
$$
for some constants $M_1, M_2$ only depending on $\min_{1\le j\le d}\{\sigma^U_{jj},\sigma^V_{jj}\},\max_{1\le j\le d}\{\sigma^U_{jj},\sigma^V_{jj}\}$.
%And $K_1, K_2$ are polynomial in the median of $\maxnorm{V}$ (with degrees of $2$ and $4$ respectively). Therefore, the worst rate for $K_1$ will be $O(\log d)$, and $O({\log^2 d})$ for $K_2$.
% the median of the Gaussian maxima $\maxnorm{U}, \maxnorm{V}$ is $O(\sqrt{\log d})$, which comes from the classical result on sub-Gaussian maxima and the fact that ; and the constants $K_1,K_2$ depend on the median of Gaussian maxima only up to $2$-nd power. 

Overall the above bound has only a logarithmic dependence on the dimension $d$, similar to the anti-concentration bounds from \cite{chernozhukov2014anti}. But it quantifies the deviation with respect to the tail probability of the Gaussian maxima, thus offers a more refined characterization, which is crucial to our proof. 

Now we deal with the second term in \eqref{eq:kol_dist}. It is not hard to check that the following proof works for both $\varphi_1$ and $\varphi_2$. Therefore, without loss of generality, we use a unified notation $\varphi$ to represent either functions. We consider the Slepian interpolation between $U$ and $V$: $ W(s):= \sqrt{s}U + \sqrt{1-s}V, ~~ s\in[0,1]$.
% \begin{equation}\nonumber
%     W(s):= \sqrt{s}U + \sqrt{1-s}V, ~~ s\in[0,1]
% \end{equation}
Let $\Psi_{t}(s) = \EEE[\varphi(W(s))]$, then we have 
\begin{equation}\label{eq:slepian_int} 
    \abr{\EEE\varphi(U) - \EEE\varphi(V)} = |\Psi_{t}(1)-\Psi_{t}(0)| = \left|\int_{0}^{1}\Psi_t '(s)ds \right |,
\end{equation}
where $ \Psi_{t}'(s)=\frac{1}{2}\sum_{j=1}^{d}\EEE[\partial_{j}\varphi(W(s))(s^{-1/2}U_{j}-(1-s)^{-1/2}V_{j})]$. Applying Stein's identity (Lemma 2 of \cite{chernozhukov2015comparison}) to $(s^{-1/2}U_{j}-(1-s)^{-1/2}V_{j},W(s)^{\top})^{\top}$ and $\partial_{j}\varphi(W(s))$, we have
\begin{eqnarray}\label{eq:Psi_prime}
    \Psi_{t}'(s)=\frac{1}{2}\sum_{j,k=1}^{d}(\sigma_{jk}^U - \sigma_{jk}^V)\EEE[\partial_j \partial_k \varphi(W(s))].
\end{eqnarray}
Hence we obtain the following bound on \eqref{eq:slepian_int},
\begin{eqnarray}\nonumber
    \left|\int_{0}^{1}\Psi_t '(s)ds \right |  
    &\le& \frac{1}{2}\sum_{j,k=1}^{d}|\sigma_{jk}^U - \sigma_{jk}^V|\cdot \left|\int_{0}^{1}\EEE[\partial_j \partial_k \varphi(W(s))]ds\right|\\ \nonumber
    &\le&  \frac{\maxdiff}{2}\int_{0}^{1}\sum_{j,k=1}^{d}\EEE[\abr{\partial_j \partial_k \varphi(W(s))}]ds\\ \nonumber
    &\le&  \frac{\maxdiff}{2}\int_{0}^{1}\sum_{j,k=1}^{d}\EEE[|\partial_j \partial_k \varphi(W(s))|\cdot \Indrbr{t-\epsilon \le ||W(s)||_{\infty}\le t +\epsilon}]ds\\ \nonumber
     &\le& \frac{\maxdiff}{2}\int_{0}^{1}\sup_{x\in\RR^{d}}\norm{D^{2}\varphi(x)}_{1} \cdot \EEE[\Indrbr{t-\epsilon \le ||W(s)||_{\infty}\le t +\epsilon)}]ds\\
    &\le& \frac{c(2)\maxdiff \log (d+1)}{2\epsilon^2}\int_{0}^{1}\mathbb{P}(t-\epsilon\le ||W(s)||_{\infty}\le t +\epsilon )ds  \label{eq:Ws_anti_term}
\end{eqnarray}
where the second inequality is by the definition of $\maxdiff$ and the third one comes from the property (b) in Lemma \ref{lem:l_inf_smooth} for $\varphi_{j}(x),j=1,2$ (recalling $\varphi_1 (x) = \varphi_{t,\epsilon}(x)$ and $\varphi_2 (x) = \varphi_{t-\epsilon,\epsilon}(x)$). Note that property (c) gives a upper bound for the partial derivative terms. Thus the fourth inequality holds.
% As for the fourth one, we make use of the property (c), which gives a upper bound for the partial derivative terms.
% \begin{eqnarray}
%     \left|\int_{0}^{1}\Psi_t '(s)ds \right | 
%         &\le& \frac{\Delta}{2}\int_{0}^{1}\sup_{x\in\RR^{d}}||D^{2}\varphi(x)||_{1}\EEE[I_{(t-\epsilon \le ||W(s)||_{\infty}\le t +\epsilon))}]ds\\
%     &\le& \frac{c(2)\Delta \log (d+1)}{2\epsilon^2}\int_{0}^{1}\mathbb{P}(t-\epsilon\le ||W(s)||_{\infty}\le t +\epsilon )ds
% \end{eqnarray}

By the definition of Slepian interpolation, we have, for any $s\in[0,1]$, $W(s)$ is a Gaussian random vector and the variances can be controlled between $\min_{1\le j\le d}\{\sigma^U_{jj},\sigma^V_{jj}\}$ and $\max_{1\le j\le d}\{\sigma^U_{jj},\sigma^V_{jj}\}$.
%$\bSigma_{jj}^{U}$ and $\bSigma_{jj}^{V}$. 
The median of $\maxnorm{W(s)}$ can also be similarly bounded by $O(\sqrt{\log d})$ as $\maxnorm{V}$. Applying the {anti-concentration inequalities} again to $W(s)$ in \eqref{eq:Ws_anti_term}, we thus obtain
% Thus the constants can be viewed as the same, when applying the anti-concentration bounds to $W(s)$. We still denote those constants as $K_1, K_2$ without causing confusion. Now we obtain, \jlmargin{anti-concentration inequalities}{ref; does it mean K1, K2 defiition is changed?}
\begin{align}\label{eq:slepian_bounds}
     \left|\int_{0}^{1}\Psi_t '(s)ds \right|  \le \frac{c(2)\maxdiff  \log (d+1)}{2\epsilon^2} \cdot M_1 \log d(t+1)\epsilon\exp( M_2 \log^2 d (t+1)\epsilon)\cdot \int_{0}^{1}\mathbb{P}(||W(s)||_{\infty}> t ) ds.
\end{align}
Let $Q_t(u)=\mathbb{P}(||W(u)||_{\infty}>t)$ and $R_t(u) = Q_t(u)/Q_t(0)-1$. Combining \eqref{eq:phi_smoothing}, \eqref{eq:anti_nonunif}, \eqref{eq:slepian_int} and \eqref{eq:slepian_bounds}, we have
\begin{eqnarray}\nonumber
    |Q_t(1)-Q_t(0)| 
    &=& \big|
    \PPP(\maxnorm{U} > t) - \PPP(\maxnorm{V} > t)\big|
    \\ \nonumber
     &\le & M_1 \log d (t+1)\epsilon\exp(M_2 \log^2 d(t+1)\epsilon)Q_{t}(0) \\
    &~~& +~ \frac{c(2)\maxdiff \log (d+1)}{2\epsilon^2}  M_1 \log d(t+1)\epsilon\exp(M_2 \log^2 d(t+1)\epsilon) \int_{0}^{1}Q_{t}(s) ds.
    \label{eq:cdf_diff_expansion}
\end{eqnarray}
%\jlmargin{}{K1 K2 same above?}
If starting with the interpolation between $W(s)$ and $V$ instead of that between $U$ and $V$, we can similarly obtain the bound on $ |Q_t(s)-Q_t(0)|$. And the integral $ \int_{0}^{1}Q_{t}(s) ds$ in \eqref{eq:cdf_diff_expansion} can be directly replaced by $ \int_{0}^{u}Q_{t}(s) d s$. Namely, we have
\begin{equation}\label{eq:integral_eq}
  \frac{ |Q_t(u)-Q_t(0)| }{|Q_t(0)|}  =|R_{t}(u)|\le A(t,\epsilon)B(\maxdiff,\epsilon) \int_{0}^{u} |R_{t}(s)| ds + A(t,\epsilon)B(\maxdiff,\epsilon)\cdot u + A(t,\epsilon),
\end{equation}
where we denote $A(t,\epsilon)=M_1 \log d (t+1)\epsilon\exp(M_2 \log^2 d (t+1)\epsilon)$ and $B(\maxdiff,\epsilon)=\frac{c(2)\maxdiff \log (d+1)}{2\epsilon^2}$.

Notice that \eqref{eq:integral_eq} is an integral inequality and we can thus bound $R_{t}(s)$ by Gr\"onwall's inequality \cite{gronwall1919note} 
\[
|R_{t}(u)| \le (A(t,\epsilon)B(\maxdiff,\epsilon)u+A(t,\epsilon))e^{A(t,\epsilon)B(\maxdiff,\epsilon)u}.
\]
% \begin{equation}\nonumber
%     % |R_{t}(s)| \le (A+1)e^{ABs} -1 
%       |R_{t}(s)| \le (ABs+A)e^{ABs}
% \end{equation}
In particular, we have $|R_{t}(1)| \le (A(t,\epsilon)B(\maxdiff,\epsilon)+A(t,\epsilon))e^{A(t,\epsilon)B(\maxdiff,\epsilon)}$. Remember that $\epsilon$ is the smoothing parameter that controls the level of approximation. Choosing $\epsilon=\maxdiff^{1/2}/(t+1)$, we then have $A(\maxdiff):=A(t,\epsilon)=M_1 \log d \maxdiff^{1/2} \exp{(M_2 \log^2 d \maxdiff^{1/2})}$ for some constants $M_1, M_2$ only depending on $\min_{1\le j\le d}\{\sigma^U_{jj},\sigma^V_{jj}\},\max_{1\le j\le d}\{\sigma^U_{jj},\sigma^V_{jj}\}$ and $B(t):=B(\maxdiff,\epsilon)=\frac{c(2) \log (d+1) (t+1)^2}{2}$. When $0\le t \le C_0\sqrt{\log d}$, we have $B(t) \le M_1(\log d)^{3/2}$ for some universal constant $M_3$. Therefore the bound in \eqref{eq:ccb_max_general} is established, i.e.,
$$
\sup_{0\le t \le C_0\sqrt{\log d}}|R_{t}(1)| \le  M_3(\log d)^{3/2} A(\maxdiff)e^{M_3(\log d)^{3/2} A(\maxdiff)}. %\pi(\maxdiff)
$$
%where $ \pi(\maxdiff) = $
%for some constant $M_1$. Since we have
%$0\le t \le C_0\sqrt{\log d}$, then $B(t) \le M_1(\log d)^{3/2}$ for some constant $M_1$, thus \eqref{eq:ccb_max} is established.
\end{proof}
\subsection{Proof of Theorem \ref{thm:ccb_sparse_unitvar}}
\label{app:pf:thm:ccb_sparse_unitvar}
Before proving Theorem \ref{thm:ccb_sparse_unitvar}, we note its assumption about the connectivity can be relaxed. Therefore, we first present Theorem \ref{thm:ccb_sparse} with a weaker connectivity assumption,
% with unit variance assumption is a direct consequence of a more general result, i.e., Theorem \ref{thm:ccb_sparse}. 
which is stated below.
% and will be the key assumption of Theorem \ref{thm:ccb_sparse}.
\begin{assumption}[$\discon$-connectivity property]
\label{asp:connect_prop}
We say two Gaussian random vectors $U$ and $V$ satisfy the $\discon$-connectivity property if for any $j$ such that $\sigma^U_{jk}\ne \sigma^V_{jk}$ for some $k$, there exists a subset $\cE_{0} \subset [d]$ satisfying the following three requirements:
\begin{enumerate}[(a)]
  \item $j\in \cE_0,|\cE_{0}| = \discon  + 1$;
  \item When $m, m' \in \cE_0$ and $m\ne m'$, $ \sigma^{U}_{m m} = \sigma^{U}_{m' m'}$ and $\sigma^{U}_{m m'} = \sigma^{V}_{m m'} = 0$ hold;
  \item $\forall~ k \in [d]$, $|\{m \in \cE_0: |\sigma^{U}_{k m}|+ |\sigma^{V}_{k m}| \ne 0 \}| \le c_0$ for some constant $c_0$.
\end{enumerate}
\end{assumption}
This assumption gives a characterization of the connectivity of the associated graphs of the Gaussian random vectors $U$ and $V$. Below we give a few sufficient conditions (SC) for it.
\begin{enumerate}[\bf SC1]
    \item  $U$ and $V$ have unit variances. There exists a disjoint $(\discon + 2)$-partition of nodes $\cup_{\ell=1}^{\discon +2}\cC_\ell = [d]$ such that  $\sigma^U_{jk}=\sigma^V_{jk}=0$ when $j \in \cC_{\ell}$ and  $k \in \cC_{\ell'}$ for some $\ell \neq \ell'$.
    \item  $U$ and $V$ have unit variances. There exist disjoint partitions of nodes $\cup_{\ell=1}^{\discon + 2}\cC^U_\ell = \cup_{\ell=1}^{\discon + 2}\cC^V_\ell= [d], $ such that  $\sigma^U_{jk}$ ($\sigma^V_{jk}$) equals $0$ when $j,k$ belong to different elements $\cC^U_\ell$ ($\cC^V_\ell$), and $\forall \ell \in[\discon + 2]$, $\cC^U_{\ell} \cap \cC^V_{\ell} \ne \emptyset$.
    \item $\forall ~s \in [0,1]$, the Gaussian random vector $W(s):=\sqrt{s}U + \sqrt{1-s}V$ always has the same variances $\sigma_s^2$ across different components. The associated graph of $W(s)$ has at least $\discon + 2$ components, i.e., there exists a disjoint partition of nodes $\cup_{\ell=1}^{\discon + 2}\cC^W_\ell = [d]$, such that each $\cC^W_\ell$ comes from a different component. And the partition $\cup_{\ell=1}^{\discon + 2}\cC^W_\ell = [d]$ works any $s \in [0,1]$.
\end{enumerate}
%\begin{condition}
%\end{condition}
%\begin{condition}
%\end{condition}
%\begin{condition}
%\end{condition}
\begin{remark}\nonumber
Note that the above first condition SC1 is the main assumption of Theorem \ref{thm:ccb_sparse_unitvar} (except that $\discon + 2$ is replaced by $\discon $). It is immediate that the condition SC1 implies SC2. We will verify SC2 is indeed a sufficient condition of Assumption \ref{asp:connect_prop} in the following paragraph. 
% Thus Theorem \ref{thm:ccb_sparse_unitvar} follows as a consequence of Theorem \ref{thm:ccb_sparse}. 
Regarding SC3, its sufficiency can be verified similarly, thus we omit the details.

%First, the variance requirement in Theorem \ref{thm:ccb_sparse_unitvar} that $\sigma^U_{jj}=\sigma^V_{jj}, j \in [d]$ is easily satisfied by the unit variance assumption in Theorem \ref{thm:ccb_sparse_unitvar}. 
Simply, we have $\sigma^U_{jj}=\sigma^V_{jj} = 1, j \in [d]$ by the unit variance assumption. For any $j$ such that $\sigma^U_{jk}\ne \sigma^V_{jk}$ for some $k$, we will construct a subset $\cE_0$ and show it satisfies the three requirements (a), (b) and (c). 
%the existence of the subset $\cE_0$ by construction. 
Note that the condition SC1 assumes the existence of disjoint partitions of nodes $\cup_{\ell=1}^{\discon + 2}\cC^U_\ell = \cup_{\ell=1}^{\discon + 2}\cC^V_\ell= [d]$. We suppose $j \in \cC^{U}_{\ell_1} \cap  \cC^{V}_{\ell_2} $ for some $\ell_1,\ell_2$, then $\cE_0$ is constructed by including $j$ and picking one element $m_{\ell}$ from $\cC^U_\ell \cap \cC^V_\ell $ for each $\ell \in [\discon + 2]\setminus \{\ell_1, \ell_2\}$. As $\cC^U_\ell \cap \cC^V_\ell \ne \emptyset, \forall \ell \in [\discon + 2]$, we have $|\cE_0| \ge 1+ \discon $, hence the requirement (a) is satisfied. Regarding the requirement (b), when $m, m' \in \cE_0, m\ne m'$, we immediately have $\sigma^{U}_{m m} = \sigma^{V}_{m 'm'} = 1$ by the unit variance assumption. Since every element in $\cE_0$ comes from a different component $\cC^U_\ell$ ($\cC^V_\ell$), we also have $\sigma^{U}_{m m'} = \sigma^{V}_{m m'} = 0$ when $m, m' \in \cE_0, m\ne m'$. Lastly, due to the same reason, we have $\forall k \in [d]$, $|\{m \in \cE_0: |\sigma^{U}_{k m}|+ |\sigma^{V}_{k m}| \ne 0 \}| \le 2$. Hence the requirement (c) is also satisfied.
%Under the unit variance assumption, $\sigma^{U}_{m m} = \sigma^{U}_{m' m'}$ simply holds. Therefore, our construction of $\cE_0$ satisfies all the requirements in Theorem \ref{thm:ccb_sparse}. 
\end{remark}
Now we prove Theorem \ref{thm:ccb_sparse}, which is stated below. Note that it requires weaker connectivity assumption compared with Theorem \ref{thm:ccb_sparse_unitvar} but needs to assume minimal eigenvalue conditions.
\begin{theorem}[CCB with elementwise $\ell_{0}$ norm difference]
\label{thm:ccb_sparse}
%\cyancom{Assume $U$ and $V$ have equal variances i.e., $\sigma^U_{jj}=\sigma^V_{jj}, j \in [d]$, and the number of connected components $\cC_\ell$ on the associated graph is at least $\discon$, denoted as $\cup_{l=1}^{+1}\cC_\ell \in [d]$, satisfying $\sigma^U_{jk}= \sigma^V_{jk},\forall j,k\in \cup_{\ell=1}^{p}\cC_\ell$. For any $j$ such that $\sigma^U_{jk}\ne \sigma^V_{jk}$ for some $k$, assume there exist}
%% among the largest principal submatrix of $\{(j,k):\sigma^U_{jk}= \sigma^V_{jk}\}$, assume the number of connected components on the associated graph is at least $p+1$ 
%$\discon$ random variables $G_m$ from $\cup_{l=1}^{p}\cC_\ell$ such that $\var(G_m) = \var(X_j)$, then we have
Consider the two Gaussian random vectors $U$ and $V$ to have equal variances  $\sigma^U_{jj}=\sigma^V_{jj} = O(1)$, for  $j \in [d]$ and we assume  $\lambda_{\min}(\bSigma^U)\ge 1/b_0>0,\lambda_{\min}(\bSigma^V) \ge 1/b_0>0$ for some constant $b_0>0$. Suppose $U$ and $V$ also satisfy Assumption \ref{asp:connect_prop}, we then have
%Suppose for any $j$ such that $\sigma^U_{jk}\ne \sigma^V_{jk}$ for some $k$, there exists a subset $\cE_{0} \subset [d]$ satisfying $j\in \cE_0,|\cE_{0}| = \discon  + 1$, $ \sigma^{U}_{m m} = \sigma^{U}_{m' m'}$, $\sigma^{U}_{m m'} = \sigma^{V}_{m m'} = 0$ when $m, m' \in \cE_0, m\ne m'$, and $\forall~ k \in [d]$, $|\{m \in \cE_0: |\sigma^{U}_{k m}|+ |\sigma^{V}_{k m}| \ne 0 \}| \le c_0$ for some constant $c_0$, then we have
%$1\le m < m' \le  \discon + 1$, then we have,
\begin{equation}\label{eq:ccb_sparse}
   \sup_{0\le t \le C_0 \sqrt{\log d}} 
   \left|
   \frac{\PPP(\maxnorm{U} > t)}{\PPP(\maxnorm{V} > t)} - 1
   \right| = 
   O\left(\frac{\zerodiff \log d}{\discon}
  \right).
\end{equation}
for some constant $C_0>0$.
% whenever $t$ satisfies $0\le t \le K\sqrt{\log d}$.
% is chosen that that $\PPP(T_{U} > t), \PPP(T_{V} > t)\ge \frac{1}{d}$
\end{theorem}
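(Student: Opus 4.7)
The plan is to mimic the Slepian-interpolation skeleton used in Theorem \ref{thm:ccb_max}, but to replace the uniform bound on $\sup_x \|D^2\varphi(x)\|_1$ (which costs a factor $(\log d)/\epsilon^2$) by a \emph{sparsity-sensitive} bound on $\sum_{j,k}\EE{|\partial_j\partial_k\varphi(W(s))|\cdot\mathbf{1}\{t-\epsilon\le\|W(s)\|_\infty\le t+\epsilon\}}$ that extracts a $1/\discon$ gain from the connectivity hypothesis. I would begin by reducing the distribution-function difference to $\EE\varphi(U)-\EE\varphi(V)$ plus an anti-concentration term exactly as in \eqref{eq:phi_smoothing}, where $\varphi=\varphi_{t,\epsilon}$ is the \emph{specific} smoothing from Remark \ref{rk:l_inf_func} (so that the softmax $F_\beta$ with $\beta=2\log(2d)/\epsilon$ is available for the derivative analysis). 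Writing $W(s)=\sqrt{s}U+\sqrt{1-s}V$ and $\Psi(s)=\EE\varphi(W(s))$, Stein's identity gives
\[
\Psi(1)-\Psi(0)=\tfrac12\int_0^1\sum_{j,k:\sigma^U_{jk}\neq\sigma^V_{jk}}(\sigma^U_{jk}-\sigma^V_{jk})\,\EE{\partial_j\partial_k\varphi(W(s))}\,ds,
\]
so there are only $\zerodiff$ nonzero summands.

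The crucial new ingredient is the targeted anti-concentration bound
\[
\EE{|\partial_j\partial_k\varphi(W(s))|\,\mathbf{1}\{t-\epsilon\le\|W(s)\|_\infty\le t+\epsilon\}}\lesssim\frac{\PP{\|V\|_\infty>t}(\log d)^2}{\epsilon\beta\,\discon},
\]
which I would prove as follows. Compute $\partial_j\partial_k\varphi$ explicitly in terms of $g_0'$, $g_0''$ and the softmax weights $\pi_m(z_x)=e^{\beta z_{x,m}}/\sum_{m'}e^{\beta z_{x,m'}}$. Each term is a polynomial in these weights times a factor bounded by $1/(\epsilon\beta)$ that is supported on $\{t-\epsilon\le\|W(s)\|_\infty\le t+\epsilon\}$. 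Because $\sum_m\pi_m=1$, the matrix $(\pi_j\pi_k)_{j,k}$ has row-sums $\pi_j$; summing over $(j,k)$ with $\sigma^U_{jk}\neq\sigma^V_{jk}$ therefore controls things by $\sum_{j}\pi_j\,\#\{k:\sigma^U_{jk}\neq\sigma^V_{jk}\}$. Then invoke the $\discon$-connectivity hypothesis: for every such coordinate $j$, Assumption \ref{asp:connect_prop} produces $\discon$ additional coordinates $m\in\cE_0$ whose $W_m(s)$ are (i) independent of $W_j(s)$, (ii) have identical variance, and (iii) interact with at most $c_0$ other coordinates in the entire problem. Symmetry/exchangeability among $\{j\}\cup(\cE_0\setminus\{j\})$ combined with the fact that the softmax weights sum to $1$ gives $\EE{\pi_j\cdot(\text{rest})}\lesssim\frac{1}{\discon}\EE{\sum_{m\in\{j\}\cup\cE_0}\pi_m\cdot(\text{rest})}$, which is what produces the $1/\discon$. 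The truncation to the band $\{t-\epsilon\le\|W(s)\|_\infty\le t+\epsilon\}$ lets me replace $\EE{(\text{rest})}$ by $\PP{\|W(s)\|_\infty>t}$ up to an anti-concentration factor (Theorem 2.1 of \citet{arun2018cram} applied to the Gaussian $W(s)$, whose coordinate variances are uniformly bounded above and away from $0$ by the assumption $\sigma^U_{jj}=\sigma^V_{jj}=O(1)$ and $\lambda_{\min}\geq 1/b_0$).

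With the anti-concentration for derivatives in hand, plugging into the Stein expression yields
\[
\bigl|\EE\varphi(U)-\EE\varphi(V)\bigr|\lesssim\maxdiff\cdot\zerodiff\cdot\frac{\PP{\|V\|_\infty>t}(\log d)^2}{\epsilon\beta\,\discon}\cdot\int_0^1\frac{\PP{\|W(s)\|_\infty>t}}{\PP{\|V\|_\infty>t}}\,ds,
\]
and, after bounding $\maxdiff\le 2$ trivially and using $\beta=2\log(2d)/\epsilon$ so that the $\epsilon\beta$ denominator becomes $2\log(2d)$, I get an integral inequality for $R_t(u)=\PP{\|W(u)\|_\infty>t}/\PP{\|V\|_\infty>t}-1$ of the form $|R_t(u)|\le A\int_0^u|R_t(s)|\,ds+Au+A$ with $A=O(\zerodiff\log d/\discon)$ for $0\le t\le C_0\sqrt{\log d}$. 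Applying Grönwall exactly as in the proof of Theorem \ref{thm:ccb_max} closes the argument.

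The main obstacle will be the anti-concentration bound for $\partial_j\partial_k\varphi$: extracting the $1/\discon$ factor needs a genuinely new argument because the classical Chernozhukov--Chetverikov--Kato anti-concentration controls $\PP{\|W\|_\infty\in[t-\epsilon,t+\epsilon]}$ but cannot see the softmax weight $\pi_j$ that multiplies the indicator here. The combinatorial step --- using Assumption \ref{asp:connect_prop} to embed $j$ into an exchangeable family of size $\discon+1$ inside a Gaussian whose relevant partial covariances stay unchanged under the exchange --- is the place where the unit-variance (or matched-variance) hypothesis is genuinely used, and it is the place where the minimum-eigenvalue assumption comes in (to keep the conditional densities of $W_j(s)$ given the other coordinates uniformly bounded, which is what makes the exchange argument quantitative).
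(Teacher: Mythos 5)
Your skeleton is right and you have correctly identified the crux: the proof needs a new anti-concentration bound of the form
\begin{equation*}
\EEE\bigl[|\partial_j\partial_k\varphi(W(s))|\cdot\Indrbr{t-\epsilon\le\|W(s)\|_\infty\le t+\epsilon}\bigr]
\lesssim\frac{\PPP(\|V\|_\infty>t)(\log d)^2}{\epsilon\beta\,\discon},
\end{equation*}
which is exactly what the paper establishes. Two issues, one structural and one substantive.

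The structural issue: once this bound is relative to $\PPP(\|V\|_\infty>t)$ and uniform in $s\in[0,1]$, you can simply integrate it over $s$ and rearrange, which is what the paper does; no Gr\"onwall step is needed, and your extra factor $\int_0^1\PPP(\|W(s)\|_\infty>t)/\PPP(\|V\|_\infty>t)\,ds$ is an artifact of mixing the two formulations. Gr\"onwall was essential for Theorem \ref{thm:ccb_max} because there the comparison was against $\PPP(\|W(s)\|_\infty>t)$; here, by bounding $\PPP(\|V\|_\infty>t)\ge\PPP(\|G\|_\infty>t)$ where $G=(Z_m)_{m\in\cE_0\setminus\{j\}}$, the paper gets the ratio uniformly in $s$ without iteration.

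The substantive issue is your mechanism for extracting the $1/\discon$. You want $\EEE[\pi_j\cdot(\text{rest})]\lesssim\frac{1}{\discon}\EEE[\sum_{m\in\{j\}\cup\cE_0}\pi_m\cdot(\text{rest})]$ by exchangeability of the $\discon+1$ coordinates inside $\{j\}\cup\cE_0$. But Assumption \ref{asp:connect_prop} does \emph{not} give exchangeability. Requirements (a)--(b) make the marginal $(\Zmax\text{-restricted-to-}\cE_0$, $Z_j)$ block i.i.d. and independent of one another, and (c) limits how many coordinates in $\cE_0$ a given $Z_k$ with $k\notin\cE_0$ can depend on, to at most $c_0$. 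It does not say that the dependence pattern between $\{j\}\cup\cE_0$ and the rest of $Z$ is permutation-invariant. In a typical instance, different $m,m'\in\cE_0$ will have different nonzero covariance entries with the coordinates outside $\cE_0$, so the joint law of $W(s)$ is not invariant under swapping $m$ and $m'$, and the symmetrization inequality you rely on fails. The ``rest'' (the indicator on the band) is a function of the whole vector, and the whole vector is not symmetric in those coordinates.

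The paper instead first reduces $\pi_j\pi_k$ to $e^{-\beta(\|Z\|_\infty-|Z_j|)}\Indrbr{|Z_j|<\|Z\|_\infty}+e^{-\beta(\|Z\|_\infty-|Z_k|)}\Indrbr{|Z_k|<\|Z\|_\infty}$, then splits on whether the maximum is attained inside or outside $G$. The $1/\discon$ comes out of explicit Gaussian calculations: conditioning on $(Z_j,G)$ and using a pointwise $O(\sqrt{\log d})$ bound on the conditional density of $\Zmax$ (Lemma \ref{lem:maximas_pdf_bound}, which is where the minimal-eigenvalue assumption is used), then integrating against the density of $\|G\|_\infty$ --- an explicit function built from $\discon$-th powers of $1-2\bar\Phi$. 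The $1/\discon$ factor drops out of a calculus bound on the ratio $\frac{(1-2\bar\Phi((t+\epsilon)/\sigma_j))^\discon}{1-(1-2\bar\Phi(t/\sigma_j))^\discon}\phi((t-\epsilon)/\sigma_j)$ (Lemma \ref{lem:Lambda_bound}), which is close in spirit to your ``$j$ is no more likely to be the argmax than any of the $\discon$ twins'' intuition, but is established by direct Mill's-ratio estimates on a one-parameter family of i.i.d. Gaussians rather than by a symmetry of the full vector. That extra machinery is precisely what absorbs the asymmetry between $\cE_0$ and the remaining coordinates that your exchangeability step cannot.
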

%\begin{remark}\label{rk:thm:ccb_sparse}
% The above bound can be rewritten in terms of the following explicit form:
% $$
% \sup_{0\le t \le C_0 \sqrt{\log d}} \left|\frac{\PPP(\maxnorm{U} > t)}{\PPP(\maxnorm{V}> t)}-1\right| = O\bigg(\frac{\log d }{\discon}  \sum_{j\ne k}\Indrbr{\sigma^U_{jk}\ne \sigma^V_{jk}}
% \bigg).
% $$
%\end{remark}

% \begin{lemma}
% Consider $(U_1, U_2), (V_1, V_2)\in \RR^{p},  $ be two centered Gaussian random vectors with different covariance matrices $\bSigma^{U},\bSigma^{V}$, and also write 
% $$
% \bSigma^{U} = \left(
% \begin{array}{cc}
%   \bSigma^{U}_{11}   & \bSigma^{U}_{12} \\
%   \bSigma^{U}_{21}   & \bSigma^{U}_{22}
% \end{array}
% \right)
% $$
%  and 
% $$
% \bSigma^{V} = \left(
% \begin{array}{cc}
%   \bSigma^{V}_{11}   & \bSigma^{V}_{12} \\
%   \bSigma^{V}_{21}   & \bSigma^{V}_{22}
% \end{array}
% \right)
% $$ 
% we assume $ \bSigma^{U}_{11}  = \bSigma^{V}_{11},\bSigma^{U}_{22}  = \bSigma^{V}_{22}, \bSigma^{V}_{12} = \bm{O}, || \bSigma^{U}_{12}||_{0}\le k_{\tau}^2$ and denote the entry of $\bSigma^{U}_{12}$ by $\sigma^{U}_{jk}$, also assume there exists a certain number of independent blocks, whose cardinality is $p$, then we have
% \begin{equation}
%     \left|\frac{\mathbb{P}(T_{U} > t)}{\mathbb{P}(T_{V} > t)}-1\right|\le C_{\bSigma} ???
% \end{equation}
% where $0\le t \le 2\sigma_{\max}\sqrt{\log d}$ (Note that we only need to prove the case where $\frac{\omega_d}{d} \le \alpha \le 1$. When choosing $\alpha = \frac{2}{d}$, we can find the upper bound of $t$ by considering the independence case) and $C_{\bSigma}$
% \end{lemma}

\begin{proof}[Proof of Theorem \ref{thm:ccb_sparse}]
\label{pf:thm:ccb_sparse}
%  We can utilize the strategy in Theorem \ref{thm:ccb_max}, which leads to the following bound 
Following the same derivations as in Theorem \ref{thm:ccb_max_general}, we have
\begin{align}
\nonumber
    ~~& \abr{ \PPP(\maxnorm{U} > t) - \PPP(\maxnorm{V} > t)} \\ \nonumber
    \le~ & M_1 \log d (t+1)\epsilon\exp(M_2 \log^2 d (t+1)\epsilon)\PPP(\maxnorm{V} > t) +  \max_{j=1,2}\abr{\EEE[\varphi_{j}(U)] - \EEE[\varphi_{j}(V)]} \\
    \le ~& M_1 \log d(t+1)\epsilon\exp(M_2 \log^2 d (t+1)\epsilon)\PPP(\maxnorm{V} > t) +  \left|\int_{0}^{1}\Psi_t '(s)ds \right|,
    \label{eq:ccbsparse_terms}
    \end{align}
    where $\epsilon = {c}/{\max\{(\log d)^{3/2}, \discon \log d \}}$ for some small enough constant $c>0$, and the constants $M_1, M_2$ only depend on $\min_{1\le j\le d}\{\sigma^U_{jj}\}$, $\max_{1\le j\le d}\{\sigma^U_{jj}\}$. 
     The above two inequalities hold by \eqref{eq:kol_dist}, \eqref{eq:anti_nonunif} and \eqref{eq:slepian_int}. We further bound $\left|\int_{0}^{1}\Psi_t '(s)ds \right|$ as below,
% with $ \max_{j=1,2}\abr{\EEE[\varphi_{j}(U)] - \EEE[\varphi_{j}(V)]}\le |\Psi_{t}(1)-\Psi_{t}(0)|$ where
\begin{eqnarray} \nonumber
%   |\Psi_{t}(1)-\Psi_{t}(0)|  \nonumber
    &~& \left|\int_{0}^{1}\Psi_t '(s)ds \right| \\ \nonumber
    &\le& \frac{1}{2}\sum_{j,k=1}^{d}|\sigma_{jk}^U - \sigma_{jk}^V|\left|\int_{0}^{1}\EEE[\partial_j \partial_k \varphi(W(s))]ds\right|\\ \nonumber
    &\le&  \frac{M}{2} \sum_{j\ne k,\sigma^U_{jk}\ne \sigma^V_{jk}}\int_{0}^{1}\EEE[|\partial_j \partial_k \varphi(W(s))|]ds\\ \nonumber %\label{eq:uneq_entry}
    &\le&  \frac{M}{2} \sum_{j\ne k,\sigma^U_{jk}\ne \sigma^V_{jk}}\int_{0}^{1}\EEE[|\partial_j \partial_k \varphi(W(s))|\cdot \Indrbr{t-\epsilon \le ||W(s)||_{\infty}\le t +\epsilon}]ds  \\ \label{eq:uneq_entry_max}
    &\le& \frac{M \zerodiff  }{2}\max_{j\ne k,\sigma^U_{jk}\ne \sigma^V_{jk}} 
    \int_{0}^{1} \EEE[|\partial_j \partial_k \varphi(W(s))|\cdot \Indrbr{t-\epsilon \le ||W(s)||_{\infty}\le t +\epsilon} ]ds,
\end{eqnarray}
where the first inequality holds due to \eqref{eq:Psi_prime},  the second inequality is because $\sigma_{jk}^{U} = O(1), \sigma_{jk}^{V} = O(1)$ for all $j,k$  and the constant $M$  only depends on the maximal variances of the elements of $U,V$, the third inequality holds by the property (b) in Lemma \ref{lem:l_inf_smooth} for $\varphi_j(x), j=1,2$, and the last inequality holds by the definition of $\zerodiff$. Note that $\varphi_{1}(x):= \varphi_{t,\epsilon}(x), \varphi_{2}(x) := \varphi_{t-\epsilon,\epsilon}(x)$ as defined in the proof of Theorem \ref{thm:ccb_max_general}. We use the same strategy to deal with $\varphi_{1}(x)$ and $\varphi_{2}(x)$. Below we give the derivations when $\varphi= \varphi_{1}(x)$ and it is not hard to check these derivations work for $\varphi_{2}(x)$ as well. Recall the explicit construction of $\varphi:\RR^d\to \RR$ introduced in Remark \ref{rk:l_inf_func},
\[
\varphi(x) = \varphi_{r,\epsilon}(x) = g_0\left(\frac{2(F_{\beta}(z_x - r\mathbf{1}_{2d}) - \epsilon/2)}{\epsilon}\right),
\]
where  $\beta = 2 \log (2d) /\epsilon$, $g_0(t) := 30\Indrbr{0 \le t \le 1}\int_t^1 s^2(1 - s)^2 ds  + \Indrbr{t \le 0 }$, $F_{\beta}$ is the ``softmax'' function 
\[
F_{\beta}(z) := \frac{1}{\beta} \log \Big(\sum_{m=1}^{2d}  \exp\left(\beta z_m \right) \Big) \quad \text{ for } z\in \RR^{2d},
\]
$z_x = (x^{\top}, -x^{\top})^{\top}$ and $\mathbf{1}_{2d}$ is the vector of $1$'s of dimension $2d$. 

To bound \eqref{eq:uneq_entry_max}, we consider the case where $j\ne k$ and $\sigma^U_{jk}\ne \sigma^V_{jk}$. Note that 
\begin{equation}\label{eq:partials_bounds}
|\partial_j \partial_k \varphi(W(s))| \le ||g''||_{\infty} |\tilde{\pi}_j(Z)\tilde{\pi}_k(Z)|+\beta ||g' ||_{\infty} |\tilde{\pi}_j(Z) \tilde{\pi}_k(Z)|,
\end{equation}
where $g(t):=g_0(\frac{2(t-\epsilon/2)}{\epsilon})$, $Z:=W(s)$ and 
$$
\tilde{\pi}_j(z):= \frac{e^{\beta z_j} - e^{-\beta z_j} }{\sum_{m=1}^d e^{\beta z_m} + \sum_{m=1}^d e^{-\beta z_m} }.
$$
The above result follows from a direct calculation. Due to the boundedness of $|| g_0'||_{\infty}, || g_0''||_{\infty}$ and $\beta = {2 \log (2d)}/{\epsilon}$, we obtain the following bound on \eqref{eq:partials_bounds},
\begin{eqnarray*}
|\partial_j \partial_k \varphi(W(s))| &\le & 
(||g''||_{\infty} + \beta ||g' ||_{\infty}  ) |\tilde{\pi}_j(Z) \tilde{\pi}_k(Z)| \\
&\le & 
\Big( \frac{4}{\epsilon^2}  ||g''_0||_{\infty} + \frac{2 \beta}{\epsilon}  || g_0'||_{\infty}  \Big) |\tilde{\pi}_j(Z) \tilde{\pi}_k(Z)| \\
&\le & \frac{C_1 \log(2d) }{\epsilon^2}|\tilde{\pi}_j(Z) \tilde{\pi}_k(Z)| \le  \frac{C_1 \log(2d) }{\epsilon^2}|{\pi}_j(Z) {\pi}_k(Z)|  ,
\end{eqnarray*}
 for some constant $C_1$, where $\pi_j(z)= {e^{\beta |z_j|} }/{\sum_{m=1}^d e^{\beta |z_m|} }$.
%By the symmetry of Gaussian random vectors and the bound
%Denote $
%% \pi_j(z)= \frac{e^{\beta z_j} }{\sum_{m=1}^p e^{\beta z_m} }
%\pi_j(z)= {e^{\beta |z_j|} }/{\sum_{m=1}^d e^{\beta |z_m|} }
%$, we immediately have
%$
%|\tilde{\pi}_j(z) \tilde{\pi}_k(z)| \le  \pi_j(z)   \pi_k(z)
%%|\tilde{\pi}_j(z) \tilde{\pi}_k(z)| \le (\pi_j(z) + \pi_j(-z) ) ( \pi_k(z) + \pi_k(-z) )
%$.
Recalling $Z=W(s)$, we have
\begin{eqnarray} \nonumber
    &~ &\int_{0}^{1} \EE{|\partial_j \partial_k \varphi(W(s))|\cdot \Indrbr{t-\epsilon \le ||W(s)||_{\infty}\le t +\epsilon}}ds \\ \nonumber
    &\le&
    \frac{C_1 \log(2d) }{\epsilon^2}{
    \int_{0}^1 {\EE{ \pi_j(Z)\pi_k(Z) \cdot
    \Indrbr{t-\epsilon \le \Zmax \le t +\epsilon} } } ds } \\
    &= &
    \frac{C_1 \log(2d) }{\epsilon^2} 
     \mathbb{P}(\maxnorm{V} > t) \int_{0}^1 \underbrace{\frac{\EE{ \pi_j(Z)\pi_k(Z)\cdot \Indrbr{t-\epsilon \le \Zmax
    \le t +\epsilon}} }{ \PPP( \maxnorm{V} > t)
    } }_{\rm{II}(s)} ds .\label{eq:int_inside}
    % {
    % \int_{0}^1 \sqrt{\EEE[ (\pi_j(Z)\pi_k(Z))^2\cdot I_{(t-\epsilon \le \Zmax\le t +\epsilon)}   ] }\sqrt{\PPP(t-\epsilon \le \Zmax \le t +\epsilon)  }ds } 
\end{eqnarray}
Below we focus on bounding the term $\rm{II(s)}$ for any $ s \in [0,1]$. %below, %for any $s$,
%\begin{eqnarray}\label{eq:int_inside}
%\frac{\EEE[ \pi_j(Z)\pi_k(Z)\cdot \Indrbr{t-\epsilon \le \Zmax \le t +\epsilon}] }{ \mathbb{P}(\maxnorm{V} > t)
%    }, \quad \forall s \in [0,1].
%% \EEE[ (\pi_j(Z)\pi_k(Z))^2\cdot I_{(t-\epsilon \le \Zmax \le t +\epsilon)}   ] 
%% &\le& \EEE[ \pi_j(Z)\pi_k(Z) I_{(t-\epsilon \le \Zmax \le t +\epsilon)}] \nonumber
%\end{eqnarray}
First we rewrite $\pi_j(Z)\pi_k(Z)$ and simply derive the following inequality,
\begin{eqnarray}\nonumber
\pi_j(Z)\pi_k(Z)  
&=& \frac{e^{\beta |Z_j|} }{\sum_{m=1}^d e^{\beta |Z_m|} } \cdot \frac{e^{\beta |Z_k|} }{\sum_{m=1}^d e^{\beta |Z_m|} } \\ \nonumber
&=& \frac{ e^{-\beta (\Zmax - |Z_j|)}\cdot e^{-\beta (\Zmax -|Z_k|)}   }{(1 + \sum_{|Z_m|\ne \Zmax  } e^{-\beta (\Zmax -|Z_m|)} )^2} \\
&\le& e^{-\beta (\Zmax -|Z_j|)} \cdot e^{-\beta (\Zmax -|Z_k|)},
\label{eq:piZjk_bound}
\end{eqnarray}
where the second equality comes from dividing both the numerator and denominator by $ e^{2\beta \Zmax}$ in the first line. Note that $\PPP(|Z_j|=|Z_k|)=0$ since the random vector $Z$ follows a non-degenerate $d$-dimensional multivariate Gaussian distribution. Hence we have
\begin{equation}\label{eq:as_event}
1 = \Indrbr{|Z_j| = \Zmax, |Z_k| < \Zmax} +  \Indrbr{|Z_j| < \Zmax}, \text{ almost surely}.
\end{equation}
 Plugging the equality \eqref{eq:as_event} into \eqref{eq:piZjk_bound}, we can further bound $\pi_j(Z)\pi_k(Z)$ as
$$
\pi_j(Z)\pi_k(Z) \le e^{-\beta (\Zmax - |Z_k|)}\cdot  \Indrbr{|Z_k| < \Zmax}  +
e^{-\beta (\Zmax - |Z_j|)}\cdot  \Indrbr{|Z_j| < \Zmax},  \text{ almost surely}.
$$
%Without loss of generality, the second term in the above expression is larger than the first term.
%Hence in order to bound \eqref{eq:int_inside}, it suffices to deal with the following quantity
%Combining the above result with \eqref{eq:int_inside} immediately yields %quantity.
%\begin{eqnarray}\nonumber
%       &~ &\int_{0}^{1} \EE{|\partial_j \partial_k \varphi(W(s))|\cdot \Indrbr{t-\epsilon \le ||W(s)||_{\infty}\le t +\epsilon}}ds \\ \nonumber
%    &\le&
%   \frac{ \EEE[  e^{-\beta (\Zmax - |Z_j|)}\cdot  \Indrbr{|Z_j| < \Zmax} \Indrbr{t-\epsilon \le \Zmax \le t +\epsilon} ]}{ \mathbb{P}(\maxnorm{V} > t)} \label{eq:pi_jpi_k_reduce}
%\end{eqnarray}
Then we can bound $\mathrm{II}(s)$ by
\begin{align}\nonumber
  \mathrm{II}(s) &=   \frac{\EE{ \pi_j(Z)\pi_k(Z)\cdot \Indrbr{t-\epsilon \le \Zmax
    \le t +\epsilon}} }{ \PPP( \maxnorm{V} > t)
    } \\ \label{eq:pi_jpi_k_reduce1}
  &\le    \frac{
  ~ \EEE[ e^{-\beta (\Zmax - |Z_k|)}\cdot  \Indrbr{|Z_k| < \Zmax} \Indrbr{t-\epsilon \le \Zmax \le t +\epsilon} ]}{ \mathbb{P}(\maxnorm{V} > t)} \\ 
   &\quad + \frac{
  ~ \EEE[ e^{-\beta (\Zmax - |Z_j|)}\cdot  \Indrbr{|Z_j| < \Zmax} \Indrbr{t-\epsilon \le \Zmax \le t +\epsilon} ]}{ \mathbb{P}(\maxnorm{V} > t)} . \label{eq:pi_jpi_k_reduce}
\end{align}
%up to a factor of 2, where $Z=W(s)$.
%Under the assumption that there exists a certain number of independent blocks (with cardinality $p$) among $\{(j,k):\bSigma^U_{jk}= \bSigma^U_{jk}\}$, 

%\cyancom{Under the stated assumption, 
%%(for any $j,k$ such that $\sigma^U_{jk}\ne \sigma^V_{jk}$, among the largest principal submatrix of $\{(j,k):\sigma^U_{jk}= \sigma^V_{jk}\}$, assume the number of connected components on the associated graph is at least $p+1$ and it is possible to pick up $p$ random variables $G_m$ from other components different from $j(k)$ such that $\var(G_m) = \var(X_j)$)
%for $Z = U$ or $V$, we can find a $\discon-$dimensional random vector $G$ such that $(Z_j,G)$ are independent Gaussian random variables and for $m\in [\discon]$, $\var(G_m) = \var(Z_j) = \sigma_{j}^2$. Denote the indices of the random variables in $G$ by $\cE_{G}$ i.e., $\cE_G=\{j \in [d]:X_j = G_m \text{ for some } m \in [p]\}$, then $\cE_{G} \in \cup_{l=1}^{p}\cC_\ell$. Since $\sigma^U_{jk}= \sigma^V_{jk},\forall j,k\in \cup_{l=1}^{p}\cC_\ell$, thus we can find such $G$ for $Z=W(s) , s\in [0,1]$.}
We use the same strategy to bound \eqref{eq:pi_jpi_k_reduce1} and \eqref{eq:pi_jpi_k_reduce}. Below we give the derivations for bounding \eqref{eq:pi_jpi_k_reduce} and note these also work for \eqref{eq:pi_jpi_k_reduce1}.

For any $j\ne k$ and $\sigma^U_{jk}\ne \sigma^V_{jk}$, Assumption \ref{asp:connect_prop} says that there exists a subset $\cE_{0} \subset [d]$ satisfying $j\in \cE_0,|\cE_{0}| = \discon  + 1$, and $ \sigma^{U}_{m m} = \sigma^{U}_{m' m'}$, $\sigma^{U}_{m m'} = \sigma^{V}_{m m'} = 0$ when $m, m' \in \cE_0, m\ne m'$. This implies the following: when $s=0$ or $1$ (i.e., $Z = U$ or $V$), we can find a $\discon-$dimensional random vector $G$ such that $(Z_j,G)$ are all independent and $\var(G_\ell) = \var(Z_j) = \sigma_{j}^2$ for $\ell \in [\discon]$. Note that $G$ is constructed as $(Z_m)_{m \in \cE_0, m \ne j}$ with $\cE_0$ being the same for $Z= U$ and $V$. Therefore, for any $s \in (0,1), Z=W(s)=\sqrt{s}U + \sqrt{1-s}V$, we can construct $G = (Z_m)_{m \in \cE_0, m \ne j} $ such that $(Z_j,G)$ are all independent and $\var(G_\ell) = \var(Z_j) = \sigma_{j}^2$ for $ \ell \in [\discon]$. Throughout the following proof and the lemmas in Appendix \ref{app:lemmas:thm:ccb_sparse}, we will use the notation $Z, G$ without making the dependence on $s$ explicitly. And we
%Again under the assumption, there exists such $G$ satisfying the aforementioned properties for both $Z=U$ and $Z=V$, thus it also satisfies the properties for any $Z=W(s)=\sqrt{s}U + \sqrt{1-s}V, s\in [0,1]$. 
denote the indices of the random variables in $G$ (among $Z$) by $\cE_{G}$, i.e., $\cE_G= \cE_0 \setminus \{j \} = \{m \in [d]:Z_m = G_\ell \text{ for some } \ell \in [\discon]\}$.

 We will consider two separate cases based on whether $\Gmax = \Zmax$ holds. Formally, we write $
  \Indrbr{|Z_j| < \Zmax} \le
  \Indrbr{E_1} +  \Indrbr{E_2}$
 %\Indrbr{ \exists k, k\ne j, Z_k\notin \cE_G, |Z_k| = \Zmax, |Z_k| > \Gmax} +  \Indrbr{\Gmax = \Zmax , \Gmax > |Z_j|},
% \end{eqnarray*}
 with ${E_1}$ and ${E_2}$ defined as
 \begin{eqnarray}\label{eq:E1set_def}
    E_1 &:=& \{ \Zmax >  \Gmax, \Zmax > |Z_j|
%     \exists~ k \in [d], k\ne j, Z_k\notin \cE_G, \text{ s.t. } |Z_k| = \Zmax, |Z_k| > \Gmax 
    \}, \\
   {E_2}&:=& \{\Gmax = \Zmax > |Z_j| \}.
   \label{eq:E2set_def}
 \end{eqnarray}
 Then the numerator of the fraction in \eqref{eq:pi_jpi_k_reduce} can be bounded by the summation of the following two terms:
%  $\mathrm{II}_1 $ and $\mathrm{II}_2 $, which are defined as below
\begin{equation}
\label{eq:II1II2_def}
\begin{aligned}
    \mathrm{II}_1 &:= \EE{  e^{-\beta (\Zmax - |Z_j|)}\cdot  \Indrbr{E_1}\cdot  \Indrbr{t-\epsilon \le \Zmax \le t +\epsilon} }, \\
    \mathrm{II}_2 &:= \EE{  e^{-\beta (\Zmax - |Z_j|)}\cdot  \Indrbr{E_2}\cdot \Indrbr{t-\epsilon \le \Zmax \le t +\epsilon} }.
    \end{aligned}
\end{equation}
% $$
% $$
% $$
% $$
Combining \eqref{eq:pi_jpi_k_reduce} with \eqref{eq:II1II2_def} and applying Lemmas \ref{lem:II1_bound} and \ref{lem:II2_bound}, we have
%\frac{\EEE[ \pi_j(Z)\pi_k(Z)\cdot \Indrbr{t-\epsilon \le \Zmax
%    \le t +\epsilon} ] }{ \mathbb{P}(\maxnorm{V} > t)
%    } 
\begin{equation}\label{eq:IIs_bound}
\mathrm{II}(s)    \le  \frac{2( \mathrm{II}_1 + \mathrm{II}_2) }{  \mathbb{P}(\maxnorm{V} > t) } \le  \frac{C' \epsilon \log d}{\beta \discon},\quad \forall s \in [0,1],
\end{equation}
for some constant $C'$.
% we can rigorously prove that 
% $$
% \Delta_1 = O\left( \frac{\epsilon(d-p)}{\beta p^{1+ \rho_0}} \right)
% $$
% where $\rho_0 = (\frac{\sigma_k}{\sigma_j})^2\frac{(1- \rho\sigma_j/\sigma_k^2)}{1-\rho^2}$.
% As for the term $\Delta_2$, we can utilize the same strategy and show that
% $$
% \Delta_2 = O\left( \frac{\epsilon}{\beta p} \right)
% $$
% Then we have
% $$
% \frac{\EEE[ \pi_j(Z)\pi_k(Z)\cdot I_{(t-\epsilon \le \Zmax
%     \le t +\epsilon)} ] }{ \mathbb{P}(T_V > t)
%     } \le 
% % \EEE[ \pi_j(Z)\pi_k(Z) I_{(t-\epsilon \le \Zmax \le t +\epsilon)}] \le \frac{C'\epsilon}{\beta p} \cdot (1+ \frac{d-p}{p^{\rho_0}})
% $$
By \eqref{eq:ccbsparse_terms}, \eqref{eq:uneq_entry_max}, \eqref{eq:int_inside} and \eqref{eq:IIs_bound}, we thus obtain the following inequality
\begin{eqnarray}\nonumber
   &~& \left|\mathbb{P}(\maxnorm{U} > t) - \mathbb{P}(\maxnorm{V} > t)\right|\\  \nonumber
   &\le& A(t,\epsilon) \mathbb{P}(\maxnorm{V} > t) +    \frac{C_1 M \zerodiff   \log(2d) }{2\epsilon^2} 
     \mathbb{P}(\maxnorm{V} > t) \cdot  \frac{C'\epsilon \log d}{\beta  \discon} \\
   &=& \mathbb{P}(\maxnorm{V} > t)\left(A(t,\epsilon)+ B(\zerodiff , \discon) \right),
   \label{eq:Qt1_bound}
     %\sqrt{\frac{\epsilon}{\beta p} (1 + \frac{d-p}{p^{\rho_0}}) } \int_{0}^{1} \sqrt{A}\sqrt{\mathbb{P}(||W(s)||_{\infty} > t)  }ds
\end{eqnarray}
 where $A(t,\epsilon) := M_1 \log d (t+1)\epsilon\exp(M_2 \log^2 d (t+1)\epsilon)$, $B(\zerodiff , \discon):= C''{(\log d/\discon )\zerodiff  } $ for some constants $M_1,M_2, C''$. In the last line, we also subsitute $\beta=\frac{2 \log (2d)}{\epsilon}$. 
%  Same as the proof of Theorem \ref{thm:ccb}, we can also start with the interpolation between $V$ and $W(s)$, then the upper limit of the integral can be replaced by $s$, and we have
% $$
% \sqrt{\mathbb{P}(||W(s)||_{\infty} > t) } \le 
% \frac{| \mathbb{P}(||W(s)||_{\infty} > t)  -  \mathbb{P}(T_V > t)|}{ \sqrt{\mathbb{P}(T_V > t)}} +  \sqrt{\mathbb{P}(T_V > t)}
% $$
% By re-arranging, we have 
% $$
% R_t(s) \le A + \frac{\sqrt{A}B}{\sqrt{Q_t(0)}}\int_0^s R_t(y)dy +  \frac{\sqrt{A}B}{\sqrt{Q_t(0)}}s
% $$
% where $B= \frac{C k_{\tau}^2}{2} \frac{C' \log(2p) }{\epsilon^2} \sqrt{\frac{\epsilon}{\beta p} (1 + \frac{d-p}{p^{\rho_0}}) }$.
By re-arranging \eqref{eq:Qt1_bound}, we finally have 
$$
\left|\frac{\PPP(\maxnorm{U} > t)}{\PPP(\maxnorm{V}> t)}-1\right| \le A(t,\epsilon) + B(\zerodiff, \discon).
%(A+1)e^{\sqrt{A}B/ \sqrt{Q_t(0)}} -1
$$ 
Since $0\le t \le C_0 \sqrt{\log d}$ and $\epsilon = {c}/{\max\{(\log d)^{3/2}, \discon \log d \}}$ for some small enough constant $c>0$, we have $A(t,\epsilon)= O(B(\zerodiff, \discon))$. Then \eqref{eq:ccb_sparse} can be established, i.e., %for $0\le t \le C_0 \sqrt{\log d}$,
$$
\sup_{0\le t \le C_0 \sqrt{\log d}} \left|\frac{\PPP(\maxnorm{U} > t)}{\PPP(\maxnorm{V}> t)}-1\right| \le C''' B(\zerodiff  ,\discon) =   O\left(\frac{ \zerodiff \log d  }{\discon}
\right).
$$
%As a byproduct, we can immediately establish a similar bound as above but involving the explicit term $\Indrbr{\sigma^U_{jk}\ne \sigma^V_{jk}}$; see the full expression of the bound in Remark \ref{rk:thm:ccb_sparse}. This comes from working with \eqref{eq:uneq_entry} instead of \eqref{eq:uneq_entry_max} in the previous derivations.
% we can further simplify $B$ into
% $$
% B=\frac{C(\sqrt{\log (2d)})^3 }{\epsilon}\sqrt{\frac{1}{p}}\sqrt{(1+\frac{d-p}{p^{\rho_0}})}
% $$
% Note that 
% $
% Q_t(0) = \alpha^2
% $
% As for the term $O(A)$, we can bound it easilty in the final form. As for the other term, we have it bounded as
% $$
% \frac{(\log (2d))^2}{\sqrt{\epsilon}}\frac{1}{\sqrt{p}}\sqrt{(1+\frac{d-p}{p^{\rho_0}})}\frac{1}{\alpha} 
% $$
\end{proof}
Now we prove Theorem \ref{thm:ccb_sparse_unitvar} using similar strategies as in Theorem \ref{thm:ccb_sparse}. Recall that the connectivity assumption in Theorem \ref{thm:ccb_sparse_unitvar} assumes that 
%there exist disjoint partitions of nodes $\cup_{\ell=1}^{\discon}\cC^U_\ell = \cup_{\ell=1}^{\discon}\cC^V_\ell= [d], $ such that  $\sigma^U_{jk}$ ($\sigma^V_{jk}$) equals $0$ when $j,k$ belong to different components $\cC^U_\ell$ ($\cC^V_\ell$), and $\forall \ell \in[\discon]$, $\cC^U_{\ell} \cap \cC^V_{\ell} \ne \emptyset$. 
there exists a disjoint $\discon$-partition of nodes $\cup_{\ell=1}^{\discon}\cC_\ell = [d]$ such that  $\sigma^U_{jk}=\sigma^V_{jk}=0$ when $j \in \cC_{\ell}$ and  $k \in \cC_{\ell'}$ for some $\ell \neq \ell'$. Since this connectivity assumption is stronger than that in Theorem \ref{thm:ccb_sparse}, we are able to do slightly more careful analysis in Lemma \ref{lem:II1_bound}. As a result, the minimal eigenvalue condition is no longer needed.  Also note that Theorem \ref{thm:ccb_sparse_unitvar} assumes the unit variance condition and there exists some $\sigma_0<1$ such that $|\sigma^V_{jk} |\le  \sigma_0, |\sigma^U_{jk} | \le \sigma_0$ for any $j\ne k$. Both the variance condition and the covariance condition can be relaxed.
%there exists some $\sigma_0<1$ such that $|\sigma^V_{jk} |\le  \sigma_0$ for any $j\ne k$ and $ |\{(j,k): j\ne k, |\sigma^U_{jk} |>  \sigma_0 \} | \le b_0$ for some constant $b_0$.
In the following proof, we establish the Cram\'{e}r-type comparison bound under a general variance condition. This general version is actually used in the proof of Theorem \ref{thm:fdr_hub}. Specifically, the general variance condition says that $a_0 \le\sigma^U_{jj}=\sigma^V_{jj} \le a_1,~\forall j \in [d]$. After relaxing the unit variance assumption, some balanced variance assumption on the above components $\cC_\ell$ is required. It says that given any $j \in \cC_\ell$ with some $\ell$, there exists at least one $m  \in \cC_{\ell'}$ such that $\sigma^U_{jj}= \sigma^V_{jj} = \sigma^{U}_{mm} = \sigma^{V}_{mm}$ for any $\ell' \ne \ell$. Remark this condition is mainly needed for Lemma \ref{lem:II1_bound_v2}. We will call all these assumptions about variances as general variance condition.
%after relaxing the unit variance assumption. 
%
%It also the 
%
% and $ \sigma^U_{jj} = \sigma^U_{kk}$ for $j, k $ belong to the same component. Remark the condition that $ \sigma^U_{jj} = \sigma^U_{kk}$ for $j, k $ from the same component is still needed for Lemma \ref{lem:II1_bound_v2} after relaxing the unit variance assumption. 
%This condition is needed for Lemma \ref{lem:II1_bound_v2}.
Denote $\tilde{\sigma}^{U}_{jk} = {\sigma}^{U}_{jk}/\sqrt{{\sigma}^{U}_{jj} {\sigma}^{U}_{kk}}$. Accordingly, the covariance condition on $\sigma_{jk}$ in Theorem \ref{thm:ccb_sparse_unitvar} can also be relaxed into the following: there exists some $\sigma_0<1$ such that $|\tilde{\sigma}^V_{jk}| = |{\sigma}^V_{jk}|/\sqrt{{\sigma}^V_{jj} {\sigma}^V_{kk}}\le  \sigma_0$ for any $j\ne k$ and $ |\{(j,k): j\ne k, |\tilde{\sigma}^U_{jk} | = |{\sigma}^U_{jk} |/\sqrt{\sigma^U_{jj} \sigma^U_{kk} }>  \sigma_0 \} | \le b_0$ for some constant $b_0$. We will call this condition as general covariance condition.

\begin{proof}[Proof of Theorem \ref{thm:ccb_sparse_unitvar}]
\label{pf:thm:ccb_sparse_unitvar}
Following exactly the same derivations in Theorem \ref{thm:ccb_sparse} (up to \eqref{eq:II1II2_def}), we arrive at the following 
\begin{equation}\nonumber
    \mathrm{II}(s)    \le  \frac{2( \mathrm{II}_1 + \mathrm{II}_2) }{  \mathbb{P}(\maxnorm{V} > t) },
\end{equation}
where $\mathrm{II}(s), \mathrm{II}_1,  \mathrm{II}_2$ are defined in \eqref{eq:int_inside} and \eqref{eq:II1II2_def}, except that the random vector $G$ can be constructed to satisfy more properties. Assuming the connectivity assumption of Theorem \ref{thm:ccb_sparse_unitvar} and the general variance condition, we construct $G$ by choosing one random variable $Z_m$ from each component (except the one to which $Z_j$ belongs) satisfying $\Var{Z_m} =\sigma^{U}_{mm} = \sigma^{V}_{mm} = \sigma^{U}_{jj} = \sigma^{V}_{jj} = \Var{Z_j}$. Such construction still satisfies the mentioned properties in Theorem \ref{thm:ccb_sparse}. Specifically, $(G, Z_j)$ consists of $(\discon + 1)$ i.i.d. Gaussian random variables. Moreover, for any $k\ne j, k\notin \cE_{G} = \{m \in [d]:Z_m = G_\ell \text{ for some } \ell \in [\discon]\}$, there exists at most one $m \in \{j\} \cup \cE_{G}$, such that $Z_k$ and $Z_m$ belong to the same component. Based on this property, we prove Lemma \ref{lem:II1_bound_v2} and Lemma \ref{lem:maximas_pdf_bound_v2}, which do not require minimal eigenvalue conditions compared with Lemma \ref{lem:II1_bound} and Lemma \ref{lem:maximas_pdf_bound}. We still apply Lemma \ref{lem:maximas_pdf_bound_v2} to bound the term $\mathrm{II}_2$. Regarding the term $\mathrm{II}_1$, we control it by using Lemma  \ref{lem:II2_bound}. Therefore, we obtain the following
% satisfies the following property in additional to being independent and  
% Regarding the term 
\begin{equation}
\label{eq:IIs_bound_v2}
	 \mathrm{II}(s)  \le \frac{C'\epsilon \log  d}{\beta \discon}\left( 1 + \frac{b_0}{\sqrt{1 - (s + (1-s)\sigma_0)^2 } } \right).
\end{equation}
Note a simple calculus result:
\begin{equation}\nonumber
	\int_0^1 \frac{b_0}{\sqrt{1 - (s + (1-s)\sigma_0)^2}}  \le \frac{0.5 \pi b_0 }{1- \sigma_0} <  C''
 \end{equation}
for some constant $C''$ when $\sigma_0 <1$. Combining the above bound with \eqref{eq:IIs_bound_v2}, \eqref{eq:ccbsparse_terms}, \eqref{eq:uneq_entry_max}, \eqref{eq:int_inside} and \eqref{eq:IIs_bound}, we establish the bound \eqref{eq:ccb_sparse_unitvar} thus prove Theorem \ref{thm:ccb_sparse_unitvar}.
\end{proof}
\subsection{Ancillary lemmas for Theorem \ref{thm:ccb_sparse}}
\label{app:lemmas:thm:ccb_sparse}
Throughout the lemmas in this section, we will use $Z$ and $G$ without making the dependence on $s$ explicitly, as mentioned in the proof of Theorem \ref{thm:ccb_sparse}.
\begin{lemma}\label{lem:II1_bound}
 Suppose $\lambda_{\min}(\bSigma^U)\ge 1/b_0>0,\lambda_{\min}(\bSigma^V) \ge 1/b_0>0$ for some constant $b_0>0$. For the term $\mathrm{II}_1 = \EE{  e^{-\beta (\Zmax - |Z_j|)}\cdot  \Indrbr{E_1}\cdot  \Indrbr{t-\epsilon \le \Zmax \le t +\epsilon} }$ with ${E_1}$ defined in \eqref{eq:E1set_def} and $\epsilon = {c}/{\max\{(\log d)^{3/2}, \discon \log d \}}$ for some small enough constant $c>0$,
%Denote $ \mathrm{II}_1 = \EE{  e^{-\beta (\Zmax - |Z_j|)}\cdot  \Indrbr{ \exists k, k\neq j, k \notin E_G, |Z_k| = \Zmax, |Z_k| > \Gmax} \Indrbr{t-\epsilon \le \Zmax \le t +\epsilon} }
% $, 
whenever $t$ satisfies $0\le t \le C_0\sqrt{\log d}$ for some constant $C_0>0$, we have
%  $ \frac{\phi(\frac{t}{\sigma_{\max}})}{\frac{t}{\sigma_{\max}}} \ge \frac{1}{d^2} $, 
 %under the condition that 
%$\rho_0^{-1} = 1 + \frac{1}{\log d}$, 
%we have
\begin{eqnarray}\label{eq:boundII1_denom}
\frac{\mathrm{II}_1 }{  \mathbb{P}(\maxnorm{V} > t) } \le  \frac{C'\epsilon \log  d}{\beta \discon}. %\cdot \frac{(\sqrt{\log d} )^{2}}{p}
 \end{eqnarray}
\end{lemma}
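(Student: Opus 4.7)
The plan is to control $\mathrm{II}_1$ in three stages: (i) integrate out $|Z_j|$ against its bounded conditional density so the exponential factor $e^{-\beta(\Zmax-|Z_j|)}$ collapses into $C/\beta$; (ii) apply the non-uniform anti-concentration inequality of Theorem 2.1 in \citet{arun2018cram} to a suitable Gaussian sub-maximum to replace the resulting small-window probability by $\epsilon\sqrt{\log d}$ times a tail probability; and (iii) compare against $\PP{\maxnorm{V}>t}$, which enjoys a $\discon$-fold boost thanks to the $(\discon{+}1)$-fold uncorrelated equal-variance structure on $\cE_0$ guaranteed by Assumption \ref{asp:connect_prop}.

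For step (i), observe that on $E_1$ the maximum is achieved strictly outside $\cE_0$, so $\Zmax=M_{-G}:=\max_{k\notin\cE_0}|Z_k|$, which is measurable with respect to $\sigma(Z_R)$ for $Z_R=(Z_k)_{k\notin\cE_0}$. Conditioning on $\sigma(Z_R)$, the residual variance of $Z_j$ is bounded below by a positive constant via $\lambda_{\min}(\bSigma^U)\wedge\lambda_{\min}(\bSigma^V)\ge 1/b_0$, so the conditional density of $|Z_j|$ is uniformly bounded by some constant $C$. A direct calculation then yields
\[
\int_{0}^{u} e^{-\beta(u-v)}\,f_{|Z_j|\mid Z_R}(v)\,dv\;\le\;\frac{C}{\beta},\qquad\text{uniformly in } Z_R \text{ and } u>0,
\]
so that, by Fubini, $\mathrm{II}_1\le (C/\beta)\,\PP{t-\epsilon\le M_{-G}\le t+\epsilon,\;\Gmax<M_{-G}}$.

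For step (ii), on the inner event $\max(M_{-G},\Gmax)=M_{-G}\in[t-\epsilon,t+\epsilon]$, so this probability is dominated by $\PP{t-\epsilon\le \maxnorm{Z_{-j}}\le t+\epsilon}$, where $Z_{-j}=(Z_k)_{k\neq j}$. The choice $\epsilon=c/\max\{(\log d)^{3/2},\discon\log d\}$ was made precisely so that $\epsilon(t+1)=O(1)$ uniformly for $t\le C_0\sqrt{\log d}$, and the non-uniform anti-concentration bound of \citet{arun2018cram} then gives $\PP{t-\epsilon\le\maxnorm{Z_{-j}}\le t+\epsilon}\le C'\epsilon\sqrt{\log d}\,\PP{\maxnorm{Z_{-j}}>t}$. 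For step (iii), since $(Z_j,G)$ consists of $\discon{+}1$ uncorrelated equal-variance Gaussians by Assumption \ref{asp:connect_prop}(b), one has
\[
\PP{\maxnorm{V}>t}\;\ge\;\PP{\max(|Z_j|,\Gmax)>t}\;\gtrsim\;\min\bigl\{1,\,\discon\,\bar\Phi(t/\sigma_j)\bigr\}.
\]
In the light-tail regime $\discon\bar\Phi(t/\sigma_j)\lesssim 1$, one controls $\PP{\maxnorm{Z_{-j}}>t}$ by a constant multiple of $\bar\Phi(t/\sigma_j)$ via a sharp Gaussian-maximum tail estimate together with Assumption \ref{asp:connect_prop}(c), which caps at $c_0$ the number of $\cE_0$-coordinates that any single coordinate of $Z_R$ can be correlated with; in the complementary heavy-tail regime the factor $(2\Phi(u/\sigma_j)-1)^\discon\le e^{-2\discon\bar\Phi(u/\sigma_j)}$ arising from integrating out the independent $G$ supplies the $1/\discon$ saving directly.

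The main obstacle will be step (iii): extracting the $1/\discon$ factor uniformly across the two tail regimes requires careful bookkeeping of the Gaussian-maximum tail estimates, and in particular a careful handling of the dependence between $Z_R$ and $(Z_j,G)$ that breaks the clean conditional independence used in step (i). The minimum-eigenvalue hypothesis is precisely what keeps the conditional density of $|Z_j|$ bounded in step (i), while Assumption \ref{asp:connect_prop} is exactly what preserves essentially $\discon$-fold independence for the comparison in step (iii); both hypotheses feed directly into the bookkeeping that yields the final bound $C'\epsilon\log d/(\beta\discon)$.
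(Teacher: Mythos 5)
Your proposal swaps the conditioning variable used in the paper: you condition on $Z_R := (Z_k)_{k\notin\cE_0}$ and integrate out $|Z_j|$ against its bounded conditional density, whereas the paper's proof of Lemma~\ref{lem:II1_bound} conditions on $(G,Z_j)$ and bounds the conditional density of $\Zmax$ given $(G,Z_j)$ by $C\sqrt{\log d}$ (Lemma~\ref{lem:maximas_pdf_bound}). That swap is the source of a genuine gap. The $1/\discon$ in the target bound arises in the paper from the precise ratio
\[
\Lambda(t,\epsilon,\discon)=\frac{\bigl(1-2\bar\Phi(\tfrac{t+\epsilon}{\sigma_j})\bigr)^{\discon}}{1-\bigl(1-2\bar\Phi(\tfrac{t}{\sigma_j})\bigr)^{\discon}}\cdot\phi\!\left(\tfrac{t-\epsilon}{\sigma_j}\right),
\]
controlled by Lemma~\ref{lem:Lambda_bound}, and that cancellation requires keeping two things explicit in the numerator: the factor $\PP{\Gmax\le t+\epsilon}=(1-2\bar\Phi((t+\epsilon)/\sigma_j))^{\discon}$ coming from the independent $G$, and the single-coordinate Gaussian PDF $\phi((t-\epsilon)/\sigma_j)$ coming from integrating $Z_j$ against its unconditional marginal density (the Fubini step culminating in $\mathrm{III}\le\frac{8\epsilon}{\beta\sigma_j}\exp(-(t-\epsilon)^2/(2\sigma_j^2))$). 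Your scheme destroys both: integrating $|Z_j|$ conditionally on $Z_R$ collapses the exponential into a bare $C/\beta$ (no surviving $\phi$ factor), and after applying anti-concentration you are left with $\PP{\maxnorm{Z_{-j}}>t}$ rather than a $\discon$-th-power-suppressed quantity.

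Concretely, your step (iii) requires $\PP{\maxnorm{Z_{-j}}>t}/\PP{\maxnorm{V}>t}\lesssim\sqrt{\log d}/\discon$, but this is false in general: both numerator and denominator are tails of Gaussian maxima over $\Theta(d)$ coordinates, so their ratio is $\Theta(1)$ (and even if $\bSigma^U=\bSigma^V$ the ratio is essentially $1$, not $1/\discon$). The claim that $\PP{\maxnorm{Z_{-j}}>t}\lesssim\bar\Phi(t/\sigma_j)$ in the ``light-tail regime'' cannot hold, since a max over $d-1$ coordinates has tail $\approx 2(d-1)\bar\Phi$ in the i.i.d.\ case, not a constant multiple of a single tail. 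Likewise the ``heavy-tail regime'' remark about integrating out the independent $G$ does not square with the rest of your scheme: after conditioning on $Z_R$, the pair $(Z_j,G)$ is no longer a product (unconditional independence of $(Z_j,G)$ is exactly what the paper preserves by conditioning on $(G,Z_j)$ instead), so you cannot cleanly retain $\Indrbr{\Gmax<M_{-G}}$ as an independent $\discon$-fold factor while simultaneously integrating out $|Z_j|$ given $Z_R$; dropping it (which would be the rigorous move) loses the only handle on $\discon$ your decomposition had. In short, anti-concentration is applied too early in your argument and washes out the single-coordinate Gaussian tail that the paper uses, via Lemma~\ref{lem:Lambda_bound}, to produce the $1/\discon$ saving. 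The use of $\lambda_{\min}(\bSigma^U),\lambda_{\min}(\bSigma^V)\ge 1/b_0$ to control a residual variance, and the $\epsilon\sqrt{\log d}$ anti-concentration ingredient, do appear in both arguments, but they are deployed in ways that are not interchangeable.
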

\begin{proof}[Proof of Lemma \ref{lem:II1_bound}]
% Consider the conditional distribution of $\Zmax | \{Z_j, G\}$, by Lemma \eqref{lem:maximas_pdf_bound}, its density function can be upper bounded by $C\sqrt{\log d}$, whenever $t$ satisfies $0\le t \le C_0\sqrt{\log d}$.
% $\frac{\phi(t)}{t}\ge \frac{1}{d^2}$.
%$ \frac{\phi(\frac{t}{\sigma_{\max}})}{\frac{t}{\sigma_{\max}}} \ge \frac{1}{d^2} $.
We will bound $\mathrm{II}_1$ by the law of total expectation. Specifically, we first calculate the conditional expectation given $(G,Z_j)$ then take expectation with respect to $(G,Z_j)$. Denoting the conditional density function of $\Zmax \mid  Z_j = z_j, G = g $ by $ f_{g,z_j}(u)$, we write out the integral form of $\mathrm{II}_1$ as 
%Specifically, we first calculate the conditional expectation given $(G,X_j)$ then take expectation with respect to $(G,X_j)$, i.e.,
 
%Specifically,
% Then we have the following
\begin{align}\nonumber
\mathrm{II}_1 =~&  \EE{  e^{-\beta (\Zmax - |Z_j|)}\cdot   \Indrbr{ 
\maxnorm{Z} >  \maxnorm{G}, \maxnorm{Z} > Z_j 
% \exists k, k\ne j, Z_k\notin \cE_G, |Z_k| = \Zmax, |Z_k| > \Gmax
} \cdot  \Indrbr{t-\epsilon \le \Zmax \le t +\epsilon} } \\ \nonumber
=~ &  \EE{ e^{\beta |Z_j|}\cdot \Indrbr{\maxnorm{G} \le t+\epsilon,|Z_j| \le t+\epsilon} \left(
\int_{t-\epsilon}^{t+\epsilon} f_{G,Z_j}(u) e^{-\beta u}\Indrbr{u>\maxnorm{G}, u >|Z_j|}du
 \right)}\\    \nonumber
\le ~ &  \EE{ e^{\beta |Z_j|}\cdot \Indrbr{\maxnorm{G} \le t+\epsilon,|Z_j| \le t+\epsilon} \left(
\int_{t-\epsilon}^{t+\epsilon} C\sqrt{\log d}\cdot e^{-\beta u}\Indrbr{u>|Z_j|}du
 \right)}\\   \nonumber  
 \le ~ & C\sqrt{\log d}~ \mathbb{P}(\Gmax \le t + \epsilon)  \EE{ \int_{|z_j|\le t+\epsilon} \phi\rbr{\frac{z_j}{\sigma_j}}e^{\beta|z_j|} \left(
\int_{t-\epsilon}^{t+\epsilon} e^{-\beta u}\Indrbr{u>|z_j|}du
 \right)dz_{j} }\\  
\le ~ & C\sqrt{\log d} ~\mathbb{P}(\Gmax \le t + \epsilon) \underbrace{\int_{|z_j| \le t+\epsilon} \phi\rbr{\frac{z_j}{\sigma_j}}e^{\beta|z_j|} \left(
\int_{t-\epsilon}^{t+\epsilon} e^{-\beta u}\Indrbr{u>|z_j|}du
 \right)dz_{j}}_{\mathrm{III}}, \label{eq:lem:II1_bound}
\end{align}
where the first inequality holds since $\Indrbr{u>\maxnorm{G},|Z_j|} \le \Indrbr{u>|Z_j|}$ and the conditional density function $ f_{g,z_j}(u)$ is bounded by $C\sqrt{\log d}$ when $ \maxnorm{g},|z_j| < u \le t+\epsilon$ and $0\le t \le C_0\sqrt{\log d}$, as a result of Lemma \ref{lem:maximas_pdf_bound}. Recall that $\phi(\cdot)$ denotes the standard Gaussian PDF. We use the fact that $ Z_j \independent G$, $Z_j \sim \cN(0, \sigma_j^2)$ and write out the integral form of the expectation with respect to $Z_j$, thus the second inequality follows. Then the integral $\mathrm{III}$ can be further rewritten as
\begin{eqnarray} \nonumber
% &~&  \int_{|z_j|\le t+\epsilon} \phi\rbr{\frac{z_j}{\sigma_j}}e^{\beta|z_j|} \left(
%\int_{t-\epsilon}^{t+\epsilon} e^{-\beta u}\Indrbr{u>|z_j|}du
% \right)dz_{j}\\ \nonumber
 \mathrm{III} &=& 2 \int_{t-\epsilon}^{t+\epsilon}   e^{-\beta u}  \left( \int_{0}^{u} \phi\rbr{\frac{x}{\sigma_j}}e^{\beta x} dx \right)du\\ \nonumber
 &=& 2 \int_{t-\epsilon}^{t+\epsilon}   e^{-\beta u} 
 \left(
e^{\frac{\beta^2 \sigma_{j}^2}{2}} \int_{0}^{u} \phi\rbr{\frac{x}{\sigma_j} - \beta\sigma_j} dx
\right)du\\ \nonumber
 &=& 2 \int_{t-\epsilon}^{t+\epsilon}   e^{-\beta u} 
 \left(
e^{\frac{\beta^2 \sigma_{j}^2}{2}} \int_{-\beta \sigma_j}^{u/\sigma_j - \beta \sigma_j} \phi\rbr{x} dx
\right)du\\ \nonumber
 &\le& 
2 \int_{t-\epsilon}^{t+\epsilon}   e^{-\beta u} 
 \left(
e^{\frac{\beta^2 \sigma_{j}^2}{2}} 
\bar{\Phi}\big( \beta\sigma_j - {u}/{\sigma_j} \big)
\right)du\\ \nonumber
&\le& 
2 \int_{t-\epsilon}^{t+\epsilon}   e^{-\beta u} 
 \left(
e^{\frac{\beta^2 \sigma_{j}^2}{2}} 
\frac{e^{-\frac{ (\beta\sigma_j - {u}/{\sigma_j} )^2 }{2}}}{ 
\beta\sigma_j - {u}/{\sigma_j}
}
\right)du\\ \label{eq:lem:III_bound}
&\le& 
\frac{4}{\beta \sigma_j} \int_{t-\epsilon}^{t+\epsilon}   e^{-\beta u} 
 \left(
e^{\beta u} 
{e^{-\frac{u  }{2 \sigma_j}}}
\right)du \le
\frac{8\epsilon}{\beta\sigma_j} 
\exp\rbr{-\frac{(t-\epsilon)^2}{2 \sigma^2_{j}}},  %~~ (Remark: \sigma_{\max} = \sigma_{\min} = 1)
 \end{eqnarray}
where the first equality holds by Fubini's theorem, and the second equality holds by the definition of $\phi(\cdot)$. Regarding the first inequality, we use the fact that $u/\sigma_j - \beta \sigma_j < 2u/\sigma_j - \beta \sigma_j <0$ for $u \le t+\epsilon$ and $t\le C_0\sqrt{\log d}$. This is because
  $\beta = \frac{2 \log(2d)}{\epsilon}$ and $\epsilon = {c}/{\max\{(\log d)^{3/2}, \discon \log d \}}$ for some small enough constant $c>0$.
% and 
%note $u \le t+\epsilon$ where $t\le C_0\sqrt{\log d}$. Since $\beta = \frac{2 \log(2d)}{\epsilon}$ and $\epsilon = {c}/{\max\{(\log d)^{3/2}, \discon \log d \}}$, we have $u/\sigma_j - \beta \sigma_j <0$ for $u \le t+\epsilon$.
% and choose $\beta$ large enough such that $u/\sigma_j - \beta \sigma_j <0$. 
 Then $\int_{-\beta \sigma_j}^{u/\sigma_j - \beta \sigma_j} \phi\rbr{x} dx \le \bar{\Phi}( \beta\sigma_j - {u}/{\sigma_j} )$, recalling  $\bar{\Phi} = 1- {\Phi}$, where ${\Phi}$ is the standard Gaussian CDF. The second inequality holds as a result of Lemma \ref{lem:mills}. The third inequality holds due to $\beta\sigma_j > 2 u/\sigma_j$ for $u  \le t+ \epsilon$.

By \eqref{eq:lem:II1_bound} and \eqref{eq:lem:III_bound}, we arrive at the following bound
\begin{eqnarray} \nonumber
\frac{\mathrm{II}_1 }{  \mathbb{P}(\maxnorm{V} > t) } 
&\le& C\sqrt{\log d}  \cdot 
\frac{\mathbb{P}(\Gmax \le t + \epsilon)}{  \mathbb{P}(\maxnorm{V}> t)}  \cdot  \frac{8\epsilon}{\beta\sigma_j} 
\exp\rbr{-\frac{(t-\epsilon)^2}{2 \sigma^2_{j}}}  \\ \nonumber
&\le&  C\sqrt{\log d} \cdot \frac{C_1\epsilon}{\beta} \cdot 
\frac{\mathbb{P}(\Gmax \le t + \epsilon)}{  \mathbb{P}(\Gmax > t)} 
\cdot \phi\big( \frac{t - \epsilon}{\sigma_{j}}\big)/\sigma_j\\ \nonumber
&=&  C\sqrt{\log d} \cdot \frac{C_1\epsilon}{\beta} \cdot 
\underbrace{
\frac{(1 - 2 \bar{\Phi}(\frac{t + \epsilon}{\sigma_{j}}))^{\discon}}{ 1- (1 - 2 \bar{\Phi}\big(\frac{t }{\sigma_{j}})\big)^{\discon}}
\cdot \phi\big( \frac{t - \epsilon}{\sigma_{j}}\big)/\sigma_j}_{\Lambda(t,\epsilon,\discon)},
 \end{eqnarray}
 for some constants $C,C_1$, where the second inequality holds due to the definition of $\phi(z)$ and $ \mathbb{P}(\maxnorm{V}> t) \ge  \mathbb{P}(\maxnorm{G}> t)$. This is because 
 \begin{equation}\label{eq:VG_ineq}
  \mathbb{P}(\maxnorm{V}> t) \ge  \mathbb{P}(\max_{k \in \cE_G}|V_k|> t) = \mathbb{P}(\maxnorm{G_{V}}> t) = \mathbb{P}(\maxnorm{G}> t) , 
 \end{equation}
 where $ G_V = (Z_m)_{m \in \cE_0, m \ne j}$ with $Z = V$ has the same distribution as $G$. 
% where the second inequality holds due to $\maxnorm{V}\ge \maxnorm{G}$ and the definition of $\phi(z)$. 
 Regarding the last line, by the construction of $ G =(G_\ell )_{\ell \in [\discon]} = (Z_m)_{m \in \cE_0, m \ne j}$ in the proof of Theorem \ref{thm:ccb_sparse}, we have $\{G_\ell \}_{\ell \in [\discon]}$ are $\discon$ i.i.d. Gaussian random variables with $\Var{G_\ell}= \Var{Z_j} = \sigma_j^2$. 
% Because
% \[
%  \mathbb{P}(\maxnorm{V}> t) \ge  \mathbb{P}(\max_{k \in \cE_G}|V_k|> t) = \mathbb{P}(\maxnorm{G}> t). 
% \]
%  
 %Remark that in the second line, the random vector $G$ appearing in the numerator is chosen w.r.t $Z= W(s), s\in [0,1]$ and the random vector $G$ in the denominator is chosen w.r.t. $V=W(0)$. But we do not distinguish their notations since only their PDFs (CDFs) matter in our derivations and they are identically distributed.
 By applying Lemma \ref{lem:Lambda_bound} to the term ${\Lambda(t,\epsilon,\discon)}$ in the last line, we further obtain, %and assuming the stated conditions, we have
\begin{eqnarray} \nonumber
\frac{ \mathrm{II}_1 }{  \mathbb{P}(\maxnorm{V} > t) } 
&\le&  C'\sqrt{\log d} \cdot \frac{\epsilon}{\beta} \frac{\sqrt{\log d} }{\discon} = \frac{C'\epsilon \log  d}{\beta \discon},
\end{eqnarray}
for some constant $C'$, therefore \eqref{eq:boundII1_denom} is established.
\end{proof}
\begin{lemma}\label{lem:II2_bound}

For the term $\mathrm{II}_2 = \EE{  e^{-\beta (\Zmax - |Z_j|)}\cdot \Indrbr{E_2} \cdot \Indrbr{t-\epsilon \le \Zmax \le t +\epsilon}}$ with ${E_2}$ defined in \eqref{eq:E2set_def} and $\epsilon = {c}/{\max\{(\log d)^{3/2}, \discon \log d \}}$ for some small enough constant $c>0$,
%Denote $\mathrm{II}_2 = \EE{  e^{-\beta (\Zmax - |Z_j|)}\cdot  \Indrbr{\Gmax = \Zmax,\Gmax > |Z_j| } \Indrbr{t-\epsilon \le \Zmax \le t +\epsilon}}$, 
whenever $t$ satisfies $0\le t \le C_0\sqrt{\log d}$ for some constant $C_0>0$,
% $ \frac{\phi(\frac{t}{\sigma_{\max}})}{\frac{t}{\sigma_{\max}}} \ge \frac{1}{d^2} $, 
 %under the condition that 
%$\rho_0^{-1} = 1 + \frac{1}{\log d}$, 
we have
\begin{eqnarray}\label{eq:boundII2_denom}
\frac{ \mathrm{II}_2 }{  \mathbb{P}(\maxnorm{V} > t) } \le \frac{C''\epsilon\sqrt{\log d}}{\beta \discon}.  %\cdot \frac{1}{p}
 \end{eqnarray}
\end{lemma}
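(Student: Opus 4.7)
The proof plan is to mirror the structure of Lemma \ref{lem:II1_bound}. On the event $E_2=\{\Gmax=\Zmax>|Z_j|\}$ I would first replace $\Zmax$ by $\Gmax$ inside the exponent; dropping the redundant constraint that $\Gmax$ dominates all other $|Z_k|$'s gives the upper bound
$$\mathrm{II}_2 \le \EE{e^{-\beta(\Gmax-|Z_j|)}\Indrbr{\Gmax\ge|Z_j|}\Indrbr{t-\epsilon\le\Gmax\le t+\epsilon}}.$$
Since $(G,Z_j)$ are independent by the construction of $G$ and $Z_j\sim\cN(0,\sigma_j^2)$, this factors into an integral over $u=\Gmax$ against the explicit density $f_\Gmax(u)=\tfrac{2\discon\phi(u/\sigma_j)}{\sigma_j}(1-2\bar\Phi(u/\sigma_j))^{\discon-1}$, multiplied by an inner expectation $\EE{e^{-\beta(u-|Z_j|)}\Indrbr{|Z_j|\le u}}$.

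Next, I would handle the inner expectation exactly as in \eqref{eq:lem:III_bound}: complete the square inside $\int_0^u e^{\beta z}\phi(z/\sigma_j)\,dz$ and apply Mills ratio $\bar\Phi(x)\le\phi(x)/x$. Because $\beta=2\log(2d)/\epsilon$ is much larger than $u/\sigma_j^2$ on the relevant range $u\lesssim\sqrt{\log d}$, this yields $\EE{e^{-\beta(u-|Z_j|)}\Indrbr{|Z_j|\le u}}\le 4\phi(u/\sigma_j)/(\beta\sigma_j)$, so
$$\mathrm{II}_2 \le \frac{8\discon}{\beta\sigma_j^2}\int_{t-\epsilon}^{t+\epsilon}\phi(u/\sigma_j)^2(1-2\bar\Phi(u/\sigma_j))^{\discon-1}\,du \le \frac{16\epsilon\discon\,\phi((t-\epsilon)/\sigma_j)^2(1-2\bar\Phi((t-\epsilon)/\sigma_j))^{\discon-1}}{\beta\sigma_j^2}.$$
Dividing by $\PP{\maxnorm V>t}\ge\PP{\Gmax>t}=1-(1-2\bar\Phi(t/\sigma_j))^{\discon}$ (the latter via \eqref{eq:VG_ineq}) leaves a quantity structurally identical to $\Lambda(t,\epsilon,\discon)$ from Lemma \ref{lem:II1_bound}, augmented by one extra factor of $\phi(u/\sigma_j)$.

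The main obstacle is producing the factor $\discon$ in the denominator, and this requires a case split that mirrors Lemma \ref{lem:Lambda_bound}. When $\discon\bar\Phi(t/\sigma_j)\lesssim 1$, the elementary inequality $1-(1-p)^\discon\ge\discon p/2$ together with Mills ratio $\phi/\bar\Phi\asymp t/\sigma_j$ absorbs one copy of $\phi$ into the $\discon\bar\Phi$ in the denominator, and the remaining factor $t\phi/\sigma_j$ is controlled by $\sqrt{\log d}$ using $t\le C_0\sqrt{\log d}$; when $\discon\bar\Phi(t/\sigma_j)\gtrsim 1$, the factor $(1-2\bar\Phi)^{\discon-1}$ is exponentially small in $\discon$ and swallows the prefactor $\discon$ while $\PP{\Gmax>t}$ is of order one. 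The hard part will be unifying these two regimes into the single clean bound $C''\epsilon\sqrt{\log d}/(\beta\discon)$, since the relevant Mills-ratio bookkeeping is delicate; I expect to either invoke Lemma \ref{lem:Lambda_bound} directly or, if the extra factor of $\phi$ demands it, to establish a closely analogous variant tailored to this integrand.
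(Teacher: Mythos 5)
Your proposal matches the paper's proof essentially line for line: on $E_2$ replace $\Zmax$ by $\Gmax$ and drop the redundant equality constraint, integrate against the explicit density of $\Gmax$ (a maximum of $\discon$ i.i.d.\ Gaussians), Mills-ratio bound the inner conditional expectation to absorb the exponential, and reduce to a $\Lambda$-type ratio; the ``closely analogous variant'' of Lemma~\ref{lem:Lambda_bound} that you anticipate needing is precisely what the paper supplies as Lemma~\ref{lem:Lambda_bound2}. One small slip worth noting: on $[t-\epsilon,t+\epsilon]$ the factor $(1-2\bar\Phi(y/\sigma_j))^{\discon-1}$ is \emph{increasing}, so for an upper bound it must be evaluated at $t+\epsilon$, not $t-\epsilon$ (the paper correctly evaluates this factor at $t+\epsilon$ and the decreasing factor $\phi$ at $t-\epsilon$), though this does not change the final order of the bound.
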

% \begin{lemma} Consider 
%  $$
% \Delta_1 = \EEE[  e^{-\beta (\Zmax - |Z_j|)}\cdot  I_{( \exists k, Z_k\notin E_G, |Z_k| = \Zmax, |Z_k| > \Gmax)} I_{(t-\epsilon \le \Zmax \le t +\epsilon)} ]
% $$
% and 
% $$
% \Delta_2 = \EEE[  e^{-\beta (\Zmax - |Z_j|)}\cdot  I_{(\Gmax = \Zmax,\Gmax > |Z_j| )} I_{(t-\epsilon \le \Zmax \le t +\epsilon)} ].
% $$
% we have

% $$
% \Delta_1 = O\left( \frac{\epsilon(d-p)}{\beta p^{1+ \rho_0}} \right)
% $$
% where $\rho_0 = (\frac{\sigma_k}{\sigma_j})^2\frac{(1- \rho\sigma_j/\sigma_k)^2}{1-\rho^2}$.
% As for the term $\Delta_2$, we can utilize the same strategy and show that
% $$
% \Delta_2 = O\left( \frac{\epsilon}{\beta p} \right)
% $$
% \end{lemma}
\begin{proof}[Proof of Lemma \ref{lem:II2_bound}]
By the definition of $E_2$ in \eqref{eq:E2set_def} and the tower property, we have
%To deal with the term,
\begin{align}\nonumber
\mathrm{II}_2
  &~= \EE{  e^{-\beta (\Zmax - |Z_j|)}\cdot  \Indrbr{\Gmax = \Zmax > |Z_j| } \cdot \Indrbr{t-\epsilon \le \Zmax \le t +\epsilon}}\\ \nonumber
&~= \EE{  e^{-\beta (\Gmax - |Z_j|)}\cdot  \Indrbr{\Gmax = \Zmax,\Gmax > |Z_j| } \cdot \Indrbr{t-\epsilon \le \Gmax \le t +\epsilon}}\\ \nonumber
&~ \le \EE{  e^{-\beta (\Gmax - |Z_j|)}\cdot  \Indrbr{\Gmax > |Z_j| } \cdot \Indrbr{t-\epsilon \le \Gmax \le t +\epsilon}}\\
  &~= \EE{ \Ec{  e^{ \beta |Z_j|} \Indrbr{|Z_j| < \Gmax}}{G} e^{-\beta \Gmax}\cdot  \Indrbr{ t-\epsilon \le \Gmax \le t +\epsilon} }. \label{eq:II2_cond_bound}
\end{align}
First we bound $\mathrm{III}(g):=\Ec{  e^{ \beta |Z_j|} \Indrbr{|Z_j| < \Gmax}}{G=g}$ when $\maxnorm{g} \in [t-\epsilon, t + \epsilon]$. Specifically,
\begin{eqnarray} \nonumber
\mathrm{III}(g) = \frac{2}{\sigma_j}\int_{0}^{\maxnorm{g}} e^{\beta x}\phi\Big( \frac{x}{\sigma_j} \Big) dx
&=& \frac{2 e^{{\beta^2\sigma_j^2}/{2}}}{\sigma_j} \int_{0}^{\maxnorm{g}} \phi\Big(\frac{x -\beta\sigma_j^2}{\sigma_j}\Big)dx
\\ \nonumber
&\le& {2  e^{\beta^2\sigma_j^2/2}} \int_{-\infty}^{\maxnorm{g}/\sigma_j-\beta\sigma_j} \phi(y)dy\\ \nonumber
&= &  {2  e^{\beta^2\sigma_j^2/2}} ~\bar{\Phi}(\beta\sigma_j- {\maxnorm{g}}/{\sigma_j})\\ \nonumber
&\le & {2  e^{\beta^2\sigma_j^2/2}} \frac{\phi\rbr{ \beta\sigma_j- {\maxnorm{g}}/{\sigma_j}}}{\beta\sigma_j- {\maxnorm{g}}/{\sigma_j} }\\
&\le & \frac{4}{\beta \sigma_j} \phi\rbr{\frac{\maxnorm{g}}{\sigma_j}} e^{\beta\maxnorm{g}}, 
\label{eq:bound1condonG}
\end{eqnarray}
% for some constant $C_1,C_2,C_3>0$, 
where the first equality holds due to $Z_j \independent G$, and the second equality comes from rearranging. The first inequality holds by the change of variable $y=(x-\beta\sigma_j^2)/\sigma_j$ and setting the lower limit of the integral as $-\infty$. Because $\beta = \frac{2 \log(2d)}{\epsilon}$ and $\epsilon = {c}/{\max\{(\log d)^{3/2}, \discon \log d \}}$ for some small enough constant $c>0$, we have $\maxnorm{g}/\sigma_j < \beta \sigma_j   $ for $\maxnorm{g} \le t+\epsilon$ and $t\le C_0\sqrt{\log d}$.
%we follow the same strategy as in the proof of Lemma \ref{lem:II1_bound}, i.e.,
%%and the second inequality holds by change of variable $x = |z_j|$, 
% choosing $\beta$ large enough such that $ \maxnorm{g}/\sigma_j-\beta\sigma_j<0$ (note that $\maxnorm{g}\le t+\epsilon \le (C_0+1)\sqrt{\log d}$). 
% 
Then the second inequality holds as a result of Lemma \ref{lem:mills} and the fact that $\beta \sigma_j  - \maxnorm{g}/\sigma_j >0$. The last inequality comes from rearranging and the fact that $  \beta \sigma_j  >  2\maxnorm{g}/\sigma_j$ for $\maxnorm{g} \le t+\epsilon$ and $t\le C_0\sqrt{\log d}$.
%choosing $\beta$ large enough such that $\beta > 2 \maxnorm{g}/\sigma_j^2$ when $\maxnorm{g} \in [t-\epsilon, t + \epsilon], 0\le t \le C_0\sqrt{\log d}$. 
Combining \eqref{eq:bound1condonG} with \eqref{eq:II2_cond_bound}, we have
% $\mathrm{II}_2$ as below:
\begin{eqnarray} \nonumber
\mathrm{II}_2
&\le &  \EE{ \mathrm{III}(G)\cdot  e^{-\beta \Gmax}\cdot  \Indrbr{ t-\epsilon \le \Gmax \le t +\epsilon} } \\ \nonumber
&\le & \frac{4}{\beta \sigma_j}
 \EE{  \phi\Big( \frac{\maxnorm{G}}{\sigma_j}\Big)  e^{\beta (\Gmax)}  \cdot e^{-\beta\maxnorm{G}}
 \cdot   \Indrbr{t-\epsilon \le \Gmax \le t +\epsilon} }\\
 %\cdot \frac{1}{p}
 &\le& 
\frac{4}{\beta \sigma_j}\int_{t-\epsilon  }^{t+\epsilon} \phi
\big( \frac{y}{\sigma_j}\big) f(y)dy,
%  &=& 
% \frac{C_3}{\beta}\int_{t-\epsilon  }^{t+\epsilon} \phi(\frac{\maxnorm{g}}{\sig})f(\maxnorm{g})d\maxnorm{g}
\label{eq:II2_1stbound}
%%%%%%%%%%%%%%%%%%%%%%%%%%%%%%% Since \Indrbr{\Gmax = \Zmax} \le 1
\end{eqnarray}
where $f(y)$ denotes the PDF of $\Gmax$. As $\{G_\ell \}_{\ell \in [\discon]}$ are i.i.d. Gaussian random variables satisfying $\forall~\ell \in [\discon]$, $\EE{G_\ell}=0$ and $\Var{G_\ell}=\sigma_j^2$, we have for $y>0$,
\begin{equation}\label{eq:Gmax_CDF}
 \PPP(\Gmax \le y)= \PPP(\bigcup_{\ell \in [\discon]} |G_\ell| \le y  )= (1 - 2 \PPP(G_\ell/\sigma_j > y/\sigma_j) )^\discon = ( 1 - 2\bar{\Phi}({y}/{\sigma_j}) )^{\discon}.
\end{equation}
%  since $|E_G|=p$, 
Thus we have the PDF of $\Gmax$ equals $f(y) = \frac{2\discon}{\sigma_j}  \left( 1 - 2\bar{\Phi}(\frac{y}{\sigma_j}) )^{\discon}\right)^{\frac{\discon-1}{\discon}} \phi(\frac{y}{\sigma_j})$. Plugging the expression of $f(y)$ into \eqref{eq:II2_1stbound}, we further derive the following bound
 \begin{eqnarray}\nonumber
\frac{ \mathrm{II}_2 }{  \mathbb{P}(\maxnorm{V} > t) } 
&\le&  
\frac{8 \discon}{\beta \sigma_j^2 } 
\int_{t-\epsilon  }^{t+\epsilon}
\frac{\left( 1 - 2\bar{\Phi}(\frac{y}{\sigma_j}) )^{\discon}\right)^{\frac{\discon-1}{\discon}} \phi^2(\frac{y}{\sigma_j}) }{\mathbb{P}(\maxnorm{V} > t) 
}dy\\ \nonumber
&\le&  
\frac{8 \discon }{\beta \sigma_j^2} 
\int_{t-\epsilon  }^{t+\epsilon}
\frac{ \left( 1 - 2\bar{\Phi}(\frac{y}{\sigma_j}) )^{\discon}\right)^{\frac{\discon-1}{\discon}} \phi^2(\frac{y}{\sigma_j}) }{1-\mathbb{P}(\Gmax \le t) 
}dy\\ \nonumber
&=&  
\frac{8 \discon }{\beta \sigma_j^2} 
\int_{t-\epsilon  }^{t+\epsilon}
\frac{\left( (1 -  2\bar{\Phi}(\frac{y}{\sigma_j}))^{\discon}\right)^{\frac{\discon-1}{\discon}} \phi^2(\frac{y}{\sigma_j}) }{1-
(1 -  2\bar{\Phi}(\frac{t}{\sigma_j}))^{\discon}
}dy\\ \nonumber
%&\le&
%\frac{16 \epsilon}{\beta \sigma_j \discon} 
%\frac{\left( (1 -  2\bar{\Phi}(\frac{t+\epsilon}{\sigma_j}) )^{\discon}\right)^{\frac{\discon-1}{\discon}} (\discon\phi( \frac{t-\epsilon}{\sigma_j}))^2 }{1-
%(1 - 2\bar{\Phi}(\frac{t}{\sigma_j}))^{\discon}
%}\\ \nonumber
&\le&
\frac{16 \epsilon}{\beta \sigma_j^2 \discon} 
\frac{\left( (1 -  2\bar{\Phi}(\frac{t+\epsilon}{\sigma_j}) )^{\discon}\right)^{\frac{\discon-1}{\discon}} (\discon\phi( \frac{t-\epsilon}{\sigma_j}))^2 }{1-
(1 - 2\bar{\Phi}(\frac{t+\epsilon}{\sigma_j}))^{\discon}
}
\le
\frac{C'' \epsilon \sqrt{\log d}}{\beta \discon},
%\cdot \frac{1}{p}
\end{eqnarray}
for some constant $C'$, where the second inequality holds due to \eqref{eq:VG_ineq}, as mentioned in the proof of Lemma \ref{lem:II1_bound}.
%since $\maxnorm{V}\ge \maxnorm{G}$. 
%Remark that in the second line, the Gaussian random vector $G$ appearing in the denominator is chosen w.r.t. $V=W(0)$. But we do not distinguish its notation from previous $\{G_m, m\in [\discon]\}$ since only their marginal $\discon$-dimensional PDFs (CDFs) matter in our derivations and these two actually have the same marginal distributions, due to the explanation in the proof of Theorem \ref{thm:ccb_sparse}.
% Remark that in the second line, the Gaussian random vector $G$ appearing in the denominator is chosen w.r.t. $V=W(0)$. But we do not distinguish this notation from previous $(G_\ell) _{\ell \in [\discon]}$ due to the same reason explained in the proof of Lemma \ref{lem:II1_bound}.
 %  since only their PDFs (CDFs) matter in our derivations and they are identically distributed.
 The equality holds as a result of substituting the expression of $\PPP(\Gmax \le t)$ by \eqref{eq:Gmax_CDF}. The third inequality holds since $1 - 2\bar{\Phi}(z)$ is monotonically increasing and ${\phi}(z)$ is monotonically decreasing when $z\ge 0$. As for the last line, we apply Lemma \ref{lem:Lambda_bound2}.
%apply a similar result as in Lemma \ref{lem:Lambda_bound} (except that $(\discon \phi( \frac{t-\epsilon}{\sigma_j}))^2$ replaces $\discon \phi( \frac{t-\epsilon}{\sigma_j})$). The explicit form \eqref{eq:phi_bound_similar} can be found in Remark \ref{rk:lem:Lambda_bound}. 
 Finally, \eqref{eq:boundII2_denom} is established.
\end{proof}

\begin{lemma}
\label{lem:maximas_pdf_bound}
%\cyancom{Let $G$ be a $\discon$-dimensional subvector of $Z$ with each entry coming from each different component (also different from the component to which $Z_j$ belongs)} and
 Suppose $\lambda_{\min}(\bSigma^U)\ge 1/b_0>0,\lambda_{\min}(\bSigma^V) \ge 1/b_0>0$ for some constant $b_0>0$. Recall that the density function of the conditional distribution of $\Zmax \mid  \{Z_j = z_j, G = g\}$ 
%(also $ \Zmax \mid Z_j$ since $G$ is independent from the other random variables in $Z$) 
is denoted by $f_{ g,z_j}(z)$. Suppose $\epsilon>0$, when $0\le t \le C_0\sqrt{\log d}$ for some constant $C_0>0$ and $|z_j|,\maxnorm{g} \le t+\epsilon$, we have
% $\frac{\phi(\frac{t}{\sigma_{\max}})}{\frac{t}{\sigma_{\max}}} \ge \frac{1}{d^2} $, 
$$
f_{g, z_j}(z)\le C \sqrt{\log d} ,\quad \forall ~ z \in ( \max\{|z_j|,\maxnorm{g}\}, t+\epsilon]. % <  z\le t+\epsilon.
%\quad z\le t+\epsilon,~~ |z_j|,\maxnorm{g} \le t+\epsilon
$$
\end{lemma}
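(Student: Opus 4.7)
The plan is to reduce the density bound to an anti-concentration inequality for Gaussian maxima on the remaining coordinates, after controlling the conditional mean and variance under $\{Z_j=z_j,\,G=g\}$.

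First, since $z>\max\{|z_j|,\maxnorm{g}\}$, on this conditioning event the constraints $|Z_j|\le z$ and $\maxnorm{G}\le z$ are automatic. Writing $\mathcal{K}:=[d]\setminus(\{j\}\cup\cE_G)$, this gives, for all $z$ in the stated range,
\[
\PPP\bigl(\maxnorm{Z}\le z\bigm|Z_j=z_j,\,G=g\bigr) \;=\; \PPP\bigl(\textstyle\max_{k\in\mathcal{K}}|Z_k|\le z\bigm|Z_j=z_j,\,G=g\bigr),
\]
so that $f_{g,z_j}(z)$ equals the conditional density of $\max_{k\in\mathcal{K}}|Z_k|$ at $z$ under this conditional law.

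Second, I identify this conditional law. Since $Z=W(s)$ is Gaussian with covariance $\bSigma^{W(s)}=s\bSigma^U+(1-s)\bSigma^V$ and $\lambda_{\min}$ is concave on positive semidefinite matrices, $\lambda_{\min}(\bSigma^{W(s)})\ge 1/b_0$. By the construction of $\cE_G$ in the proof of Theorem~\ref{thm:ccb_sparse}, the block $\bSigma^{W(s)}_{(j,\cE_G)(j,\cE_G)}$ equals $\sigma_j^2 I$, so the conditional mean of $Z_k$ takes the explicit form $\tilde\mu_k=\sigma_j^{-2}\sum_{m\in\{j\}\cup\cE_G}\bSigma^{W(s)}_{km}\,z_m$. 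Assumption~\ref{asp:connect_prop}(c) bounds the number of nonzero summands by a constant $c_0$, and $|z_j|,\maxnorm{g}\le t+\epsilon=O(\sqrt{\log d})$, so $|\tilde\mu_k|=O(\sqrt{\log d})$ uniformly in $k\in\mathcal{K}$. A Schur-complement computation gives conditional variances $\tilde\sigma_k^2\in[1/b_0,\,\lambda_{\max}(\bSigma^{W(s)})]$, i.e., bounded above and below by absolute constants.

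Third, $\max_{k\in\mathcal{K}}|Z_k|$ is the maximum of $2|\mathcal{K}|$ affine functionals of the centered Gaussian vector $(Z_k-\tilde\mu_k)_{k\in\mathcal{K}}$, whose variances lie in a fixed interval and whose affine shifts are $O(\sqrt{\log d})$. The Gaussian anti-concentration inequality (Theorem~2.1 of \citet{arun2018cram}, equivalently a form of Nazarov's inequality) then bounds the density of this maximum at $z$ by $C'\bigl(\sqrt{\log d}+\max_k|\tilde\mu_k|+|z|\bigr)$ up to constants depending only on $b_0$ and the variance bounds; since $z\le t+\epsilon=O(\sqrt{\log d})$, I conclude $f_{g,z_j}(z)\le C\sqrt{\log d}$.

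The main obstacle is the nonzero conditional means in the final step, since standard anti-concentration inequalities are typically stated for centered Gaussians. I would absorb the shifts into the anti-concentration bound by working with the centered vector $(Z_k-\tilde\mu_k)_{k\in\mathcal{K}}$ and writing $\max_{k\in\mathcal{K}}|Z_k|$ as the maximum over $2|\mathcal{K}|$ translated Gaussians, so that the bound depends on the shifts only through $\max_k|\tilde\mu_k|$; the conclusion then rests on the fact that all three contributions to the density bound are $O(\sqrt{\log d})$ in the stated regime for $z$.
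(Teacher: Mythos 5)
Your reduction and control of the conditional means and variances mirror the paper's argument: after observing that $z>\max\{|z_j|,\maxnorm{g}\}$ makes the constraints on $Z_j,G$ automatic, both you and the paper reduce to bounding the density of the conditional Gaussian maximum over the remaining coordinates and show that maximum has $O(\sqrt{\log d})$ conditional means (via Assumption~\ref{asp:connect_prop}(c)) and conditional variances in a fixed interval (via the eigenvalue hypotheses). The one caveat is the citation in the final step: Theorem~2.1 of \citet{arun2018cram} is a \emph{ratio}-type anti-concentration bound, stated relative to the tail $\PPP(\maxnorm{V}>t)$, and does not directly yield a pointwise density bound; what you actually need is Nazarov's inequality (which you also invoke) or, as the paper uses, the density bound $h(w)\le 3(\bar w+a_d)$ from Step~2 of Theorem~3 in \citet{chernozhukov2015comparison}, after introducing the normalized shifted variables $\tilde W_m$. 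With that reference corrected, your argument closes the same way the paper's does.
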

%\begin{remark}
%Through out this lemma, we denote the conditional distribution of $\Zmax \mid  \{Z_j = z_j, G = g\}$ by $f_{(z_j, g)}(z)$ instead of $f_{(z_j, g)}(\maxnorm{z})$ for notation simplicity.  
%\end{remark}

\begin{proof}[Proof of Lemma \ref{lem:maximas_pdf_bound}]
%First notice that $\Pc{\Zmax \ge \max{(|z_j|, \maxnorm{g}})}{Z_j = z_j, G =g} = 1$. In the following, we only need to consider bounding $f_{g, z_j}(z)$ when $|z| > \max{(|z_j|, \maxnorm{g})} $. 
%Conditioning on $\{Z_j, G\}$, we denote the $(d-\discon - 1)$-dimensional random vector (after excluding $\{Z_j, G\}$) by $X$ without causing confusion, and accordingly the maximum by $\Xmax$.
% 
First we introduce some new notations. Let $(\sigma_{jk})_{1\le j, k \le d} \in \RR^{d \times d}$ be the covariance matrix of $Z$. For given $j$, we denote
\begin{equation}  \label{eq:sigmakkj_def}
  \sigma_{kk\cdot j}:= \sigma_{kk} - \sigma_{kj}^2 \sigma_{jj}^{-1} - \sum_{m\in \cE_G }\sigma_{km}^2 \sigma_{mm}^{-1}. 
%  \quad \underline{\sigma_{\cdot j}}:=\min_{k\ne j}  \sqrt{\sigma_{kk\cdot j}}, 
%  \quad \bar{\rho}_{j} := \max_{k\ne j}\frac{|\sigma_{jk}|}{\sigma_{jj}}.
\end{equation}
%Denote $ \sigma_{kk\cdot j}= \sigma_{kk} - \sigma_{kj}^2 \sigma_{jj}^{-1} - \sum_{m\in \cE_G }\sigma_{km}^2 \sigma_{mm}^{-1}$, and $\underline{\sigma_{\cdot j}}=\min_{k\ne j}  \sigma_{kk\cdot j}$.   
As $z \in ( \max\{|z_j|,\maxnorm{g}\}, t+\epsilon]$, we can choose $\delta$ such that $0< \delta <  z- \max\{|z_j|,\maxnorm{g}\}$. Throughout the following proof, we will work with such $\delta$. Since $\max\{|z_j|,\maxnorm{g}\} - z < -\delta$, we have
%When $\forall~ |z_j|,\maxnorm{g} <  z\le t+\epsilon$, choosing $\delta$ small enough, we have
\begin{equation}\label{eq:density_eps}
\Pc{ \big| \Zmax - z \big| \le \delta}{Z_j = z_j,G=g} = \Pc{ \big| \Xmax - z \big| \le \delta}{Z_j = z_j,G=g}, 
\end{equation}
where $X$ denotes the $(d-\discon - 1)$-dimensional random vector by excluding $Z_j, G$ from $Z$ and therefore $\Zmax = \max\{\Xmax,|Z_j|,\Gmax\}$. %The equality holds since we can choose $\delta$ such that $\delta <  z- \max\{|z_j|,\maxnorm{g}\}$.
% notice $G\independent \{Z\setminus G\}$ and 
%Let $\tilde{Z} = Z+\bar{\rho}_j (|z_j| + \maxnorm{g})\bm{1}_{d}$, $\tilde{z}= z + \bar{\rho}_j(|z_j| + \maxnorm{g})$, $\tilde{X} = X+\bar{\rho}_j (|z_j| + \maxnorm{g})\bm{1}_{d-\discon - 1}$, where $\bar{\rho}_{j} := \max_{k\ne j}\frac{|\sigma_{jk}|}{\sigma_{jj}}$, 
%%and denote $\maxnorm{\tilde X}$ accordingly, 
%then \eqref{eq:density_eps} can be rewritten as,
%\begin{equation}
%\label{eq:density_eps_rewri}
%\Pc{ \big| \Xmax - z \big| \le \delta}{Z_j = z_j,G=g} = \Pc{ \Big| \maxnorm{\tilde{X}}  - \tilde{z}  \Big| \le \delta}{Z_j = z_j, G=g}.
%\end{equation}

Recalling $G = (G_\ell)_{\ell \in [\discon]} = (Z_m)_{m\in \cE_G}$, where $\cE_{G}$ denotes the indices of the random variables in $G$ (among $Z$), i.e., $\cE_G= \{m \in [d] :  Z_m = G_\ell \text{ for some } \ell \in [\discon]\}$, we have 
$$
\Xmax = \max_{k \in [d],~ k\notin \{j,\cE_G \} }\{\max\{Z_k,-Z_k\}\}.
$$
%Note that $\Xmax = \max_{1\le k \le d,~ k\notin \{j,\cE_G \} }\{\max\{Z_k,-Z_k\}\}$ where $\cE_{G}$ denote the indices of the random variables in $G$ (among $Z$), i.e., $\cE_G= \{m \in [d]:Z_m = G_\ell \text{ for some } \ell \in [\discon]\}$ and $G = (G_\ell)_{\ell \in [\discon]} = (Z_m)_{m\in \cE_G}$. 
Given $j$ and the choice of $G$, we also denote
\begin{equation} \label{eq:sigma_rho_def} 
  \quad \underline{\sigma_{\cdot j}}:=\min_{k \in [d], k\notin \{j,\cE_G \} }  \sqrt{\sigma_{kk\cdot j}}, 
  \quad \bar{\rho}_{j} := \max_{k \in [d], k\notin \{j,\cE_G \}}\frac{|\sigma_{jk}|}{\sigma_{jj}}.
\end{equation}
For each $k \in [d], k\notin \{j,\cE_G \}$, the conditional expectation $\Ec{Z_k}{Z_j,G}$ has the following expression, 
\begin{eqnarray}\label{eq:CE_expression}
\Ec{Z_k}{Z_j,G} = \sigma_{kj} \sigma_{jj}^{-1}Z_j + \sum_{m\in \cE_G}(\sigma_{km} \sigma_{mm}^{-1}Z_m),
\end{eqnarray}
since $(Z_j, G)$ are all independent. Note that the requirement $(c)$ in Assumption \ref{asp:connect_prop} says $\forall~ k \in [d]$, $|\{m \in \cE_0: |\sigma^{U}_{k m}|+ |\sigma^{V}_{k m}| \ne 0 \}| \le c_0$, we thus have 
\begin{eqnarray} \nonumber
\sum_{m \in \cE_G }\Indrbr{\sigma_{km}\ne 0} 
&=& \sum_{m \in \cE_G }\Indrbr{ (s \sigma^U_{km} + {(1-s)}\sigma^V_{km})\ne 0}\\
&\le &  \sum_{m \in \cE_G }\Indrbr{\sigma^{U}_{km} \ne 0\text{ or } \sigma^{V}_{km}\ne 0} \le c_0, \label{eq:indic_bound}
\end{eqnarray}
where the first equality holds by the definition of $\sigma_{km}$ and $Z= W(s)=\sqrt{s}U + \sqrt{1-s}V$.    
%$\sum_{m \in \cE_G }\Indrbr{\sigma_{km}\ne 0} =\sum_{m \in \cE_G }\Indrbr{\big((\sqrt{s}\sigma^U_{km})^2 + (\sqrt{1-s}\sigma^V_{km})^2\big)\ne 0} = \sum_{m \in \cE_G }\Indrbr{\sigma^{U}_{km}\text{ or } \sigma^{V}_{km}\ne 0} \le c_0$. 
Combining \eqref{eq:indic_bound} with \eqref{eq:CE_expression}, it yields the following bound on $|\Ec{Z_k}{Z_j=z_j,G=g}|$,
\begin{eqnarray}\nonumber
|\Ec{Z_k}{Z_j=z_j,G=g}| &=& \big|  \sigma_{kj} \sigma_{jj}^{-1}z_j + \sum_{m\in \cE_G}(\sigma_{km} \sigma_{mm}^{-1}z_m) \big| \\
&\le & \bar{\rho}_j (|z_j| + c_0\maxnorm{g}), \label{eq:CE_bound}
%\sigma_{kj} \sigma_{jj}^{-1}z_j + c_0\min_{m\in \cE_G}(\sigma_{km} \sigma_{mm}^{-1}z_m).
%\le \bar{\rho}_j (|z_j| + \max_{m\in E_G}{|z_m|})=\bar{\rho}_j (|z_j| + \maxnorm{g})
\end{eqnarray}
where $\bar{\rho}_{j} = \max_{k\in \cE_X}\frac{|\sigma_{jk}|}{\sigma_{jj}}$ as defined.
Denoting $\cE_X := \{k: k \in [d],~ k\notin \{j,\cE_G \}\}$, we define the following random variables,
\begin{equation}\label{eq:Wm_def}
\tilde{W}_{2k - 1} = \frac{ {Z_k} - {z} }{\sqrt{\sigma_{kk\cdot j}}}+ \frac{\tilde{z}}{\underline{\sigma_{\cdot j}}},\quad
\tilde{W}_{2k} = \frac{-{Z_k} - {z}}{\sqrt{\sigma_{kk\cdot j}}} + \frac{\tilde{z}}{\underline{\sigma_{\cdot j}}},\quad  k \in \cE_X, %^1\le k \le d,~ k\notin \{j,\cE_G \},
\end{equation}
where $\tilde{z} = z + \bar{\rho}_j (|z_j| + c_0\maxnorm{g})$.
%Notice that $\Xmax = \max_{1\le k \le d,~ k\notin \{j,\cE_G \} }\{\max\{Z_k,-Z_k\}\}$ where $\cE_{G}$ is the set of corresponding indexes of the sub-vector $G$ in $Z$, we define the following random variables% $\{W_m\}$,
%\begin{equation}%\label{eq:Wm_def}
%W_{2k - 1} = \frac{\tilde{Z_k} - \tilde{z} }{\sigma_{kk\cdot j}}+ \frac{\tilde{z}}{\underline{\sigma_{\cdot j}}},\quad
%W_{2k} = \frac{-\tilde{Z_k} - \tilde{z}}{\sigma_{kk\cdot j}} + \frac{\tilde{z}}{\underline{\sigma_{\cdot j}}},\quad  1\le k \le d,~ k\notin \{j,\cE_G \},
%\end{equation}
%where $ \sigma_{kk\cdot j}= \sigma_{kk} - \sigma_{kj}^2 \sigma_{jj}^{-1} - \sum_{m\in \cE_G }\sigma_{km}^2 \sigma_{mm}^{-1}$, and $\underline{\sigma_{\cdot j}}=\min_{k\ne j}  \sigma_{kk\cdot j}$.
%$\forall~ k \in [d]$, $|\{m \in \cE_0: \sigma^{U}_{k m}$ or $\sigma^{V}_{k m} \ne 0 \}| \le 2$
Then by the definitions of $\sigma_{kk\cdot j}, \underline{\sigma_{\cdot j}}$ and $\bar{\rho}_{j} $ in \eqref{eq:sigmakkj_def} and \eqref{eq:sigma_rho_def}, we have the above random variables satisfy the following properties,
%where $1\le k \le d,~ k\notin \{j,\cE_G \}$. Then by the definition of $ \bar{\rho}_{j} = \max_{k\ne j}\frac{|\sigma_{jk}|}{\sigma_{jj}}$ and $\tilde{Z}_k = Z_k  +\bar{\rho}_j (|z_j| + \maxnorm{g}) $, we have
%$$ 
%\Ec{\tilde{Z}_k}{Z_j = z_j, G=g}\ge 0,~~\Varc{\tilde{Z}_k}{Z_j = z_j, G=g}=\sigma_{kk\cdot j}^2,
%$$
%thus the random variables defined in \eqref{eq:Wm_def} satisfy
 $$
 \Ec{\tilde{W}_m}{Z_j = z_j, G=g}\ge 0 , \quad \Varc{\tilde{W}_m}{Z_j = z_j, G=g} =1,
 $$
 where $m= 2k-1$ or $2k$ and $k\in \cE_X$. Denote those random variables defined in \eqref{eq:Wm_def} by $\{\tilde{W}_m\}$ for notation simplicity. We let $q_{z_j, g}(w)$ be the PDF of the conditional distribution of $\max_m\{\tilde{W}_m\}\mid Z_j = z_j, G=g$. Then we will apply the derivation of Step $2$ in Theorem $3$ of \cite{chernozhukov2015comparison} to bound $q_{z_j, g}(w)$. Note that for the following derivations, we always conditional on the event $Z_j = z_j, G=g$. First, we verify the condition on $\{\tilde{W}_m\}$. Since $|\mathrm{Corr}{(U_j, U_k)}| \ne 1$, $|\mathrm{Corr}{(V_j, V_k)}| \ne 1$ for distinct $j,k \in [d]$, we then have the correlation between $\tilde{W}_{m_1}$ and $\tilde{W}_{m_2}$ for $m_1 \ne m_2$ is less than $1$.
% $\{\tilde{W}_m\}$ satisfy $\mathrm{Corr}{(\tilde{W}_{m_1}, \tilde{W}_{m_2})} < 1$ for $m_1 \ne m_2$. 
Therefore, by applying the derivation of Step $2$ in Theorem $3$ of \cite{chernozhukov2015comparison} to $\{\tilde{W}_m\}$, we have 
$$
q_{z_j, g}(w) \le h(w):= 2(w\vee 1)\exp\left\{- \frac{(w-\bar{w}- a_d)^2_{+}}{2}\right\},
$$
where $\bar{w}=\max_{m}\Ec{\tilde{W}_m}{Z_j = z_j, G=g} $ and 
$$
a_d = \max_{m}\Ec{\Big(\tilde{W}_m - \Ec{\tilde{W}_m}{Z_j = z_j, G=g}\Big)}{Z_j = z_j, G=g}. 
$$ 
When $w\le \bar{w}+a_d$, we have $h(w)\le 2( \bar{w}+a_d)$. To deal with the case where $w> \bar{w}+a_d$, we consider
\begin{eqnarray} \nonumber
\log (h(w)) &=& \log(2w) -\frac{(w-\bar{w}- a_d)^2}{2}, \\ \nonumber
\frac{d \log (h(w))}{d w} &=& \frac{1}{w} - (w-\bar{w}- a_d), \\ \nonumber
\frac{d^2 \log (h(w))}{d w^2} &=& -\frac{1}{w^2} - 1 <0.
\end{eqnarray}
Solving $\frac{d}{dw}\log (h(w)) = 0$ yields $
w^{\star}=\frac{ \bar{w}+ a_d + \sqrt{(\bar{w}+a_d)^2+4}}{2} 
$. Therefore, the PDF of the conditional distribution of $\max_m\{\tilde{W}_m\}\mid Z_j = z_j, G=g$ can be bounded by
\begin{equation}\label{eq:pdfg_bound}
h(w) \le h(w^{\star}) \le 3 (\bar{w}+a_d).  
\end{equation}
%$g(w) \le 3 (\bar{w}+a_d)$. 
%can be upper bounded by
%$
%3 (\bar{w}+a_d).
%$
%Denote $\cE_X := \{k:1\le k \le d,~ k\notin \{j,\cE_G \}\}$, then by combining
%\eqref{eq:density_eps}, \eqref{eq:density_eps_rewri} and \eqref{eq:Wm_def} we have the following bound,
%
%\begin{eqnarray}\nonumber
% &~& \Pc{ \big| \Zmax - z \big| \le \delta}{Z_j = z_j,G=g} \\
%     &=&\Pc{ \left| \maxnorm{\tilde{X}}  - \tilde{z} \right|\le \delta}{Z_j = z_j, G=g}\\ \nonumber
%     &=& \Pc{ \left| \max_{k\in \cE_X}\left\{\tilde{Z}_k,-\tilde{Z}_k\right\}  - \tilde{z} \right|\le \delta}{Z_j = z_j, G=g}\\ \nonumber
%     &\le& \Pc{ \left| \max_{k\in \cE_X}\left\{\frac{\tilde{Z}_k - \tilde{z}}{ \sigma_{kk\cdot j}},\frac{-\tilde{Z}_k - \tilde{z}}{ \sigma_{kk\cdot j}}\right\}   \right|\le \frac{\delta}{ \underline{\sigma_{\cdot j}}}}{Z_j = z_j, G=g}\\ \nonumber
%     &\le& \sup_{y\in \RR}\Pc{ \left| \max_{k\in \cE_X}\left\{\frac{\tilde{Z}_k - \tilde{z}}{ \sigma_{kk\cdot j}},\frac{-\tilde{Z}_k - \tilde{z}}{ \sigma_{kk\cdot j}}\right\} 
%     + \frac{\tilde{z}}{ \underline{\sigma_{\cdot j}}}- y
%     \right|\le \frac{\delta}{ \underline{\sigma_{\cdot j}}}}{Z_j = z_j, G=g}\\ \nonumber
%     &=& \sup_{y\in \RR}\Pc{ \left| \max_{m}\left\{ W_m\right\}  - y
%     \right|\le \frac{\delta}{ \underline{\sigma_{\cdot j}}}}{Z_j = z_j, G=g} \le \frac{6\delta}{\underline{\sigma_{\cdot j}}} (\bar{w}+a_d)
%\end{eqnarray}
Now we have
\begin{align}\nonumber
  ~~& \Pc{ \big| \Zmax - z \big| \le \delta}{Z_j = z_j,G=g} \\ \nonumber
     =~&\Pc{ \left| \maxnorm{{X}}  - {z} \right|\le \delta}{Z_j = z_j, G=g}\\ \nonumber
     = ~& \Pc{ \left| \max_{k\in \cE_X}\left\{{Z}_k,-{Z}_k\right\}  - {z} \right|\le \delta}{Z_j = z_j, G=g}\\ \nonumber
     \le ~& \Pc{ \left| \max_{k\in \cE_X}\left\{\frac{{Z}_k - {z}}{ \sqrt{\sigma_{kk\cdot j}}},\frac{-{Z}_k - {z}}{ \sqrt{\sigma_{kk\cdot j}}}\right\}   \right|\le \frac{\delta}{ \underline{\sigma_{\cdot j}}}}{Z_j = z_j, G=g}\\ \nonumber
     \le ~& \sup_{y\in \RR}\Pc{ \left| \max_{k\in \cE_X}\left\{\frac{{Z}_k - {z}}{\sqrt{ \sigma_{kk\cdot j}}} + \frac{\tilde{z}}{\underline{\sigma_{\cdot j}}} ,\frac{-{Z}_k - {z}}{ \sqrt{ \sigma_{kk\cdot j}} } + \frac{\tilde{z}}{\underline{\sigma_{\cdot j}}}\right\} 
     - y
     \right|\le \frac{\delta}{ \underline{\sigma_{\cdot j}}}}{Z_j = z_j, G=g}\\ 
     =~& \sup_{y\in \RR}\Pc{ \left| \max_{m}\{ \tilde{W}_m\}  - y
     \right|\le \frac{\delta}{ \underline{\sigma_{\cdot j}}}}{Z_j = z_j, G=g} \le \frac{6\delta}{\underline{\sigma_{\cdot j}}} (\bar{w}+a_d),
     \label{eq:Zpdf_final}
\end{align}
where the first equality holds by \eqref{eq:density_eps}, the second equality holds by the definition of $X$ and $\cE_X$, the first inequality holds since $ \underline{\sigma_{\cdot j}}=\min_{k\ne j}  \sqrt{\sigma_{kk\cdot j}}$, the third equality holds by the definition of $\{\tilde{W}_m\}$ in \eqref{eq:Wm_def}, and the last inequality holds by the  bound on $h(w)$ in \eqref{eq:pdfg_bound}. 
Regarding the quantity $\bar{w}=\max_{m}\Ec{\tilde{W}_m}{Z_j = z_j, G=g}$, we have
\begin{align}\nonumber
 \bar{w} %&=& \max_{m}\Ec{W_m}{Z_j = z_j, G=g}\\ \nonumber
=~&\max_{k\in \cE_X}\left\{ \frac{\pm \Ec{{Z_k}}{Z_j = z_j, G=g} - {z}}{\sqrt{ \sigma_{kk\cdot j}} } + \frac{\tilde{z}}{\underline{\sigma_{\cdot j}}}\right\}\\ \nonumber
\le ~& \max_{k\in \cE_X}\left\{ 
\frac{\pm \Ec{{Z_k}}{Z_j = z_j, G=g} }{ \sqrt{\sigma_{kk\cdot j}}}
\right\}  + \max_{ k\in \cE_X} \left\{ \frac{1}{\underline{\sigma_{\cdot j}}}  - \frac{1}{ \sqrt{\sigma_{kk\cdot j}} }
\right\}z+ \frac{\bar{\rho}_j (|z_j| + c_0\maxnorm{g})}{\underline{\sigma_{\cdot j}} }\\ \nonumber
\le ~& \max_{k\in \cE_X}\left\{ 
\frac{\pm  \big(\sigma_{kj} \sigma_{jj}^{-1}z_j + \sum_{m\in \cE_G}(\sigma_{km} \sigma_{mm}^{-1}z_m)\big) }{\sqrt{ \sigma_{kk\cdot j}}}
\right\} +  \frac{z}{ \underline{\sigma_{\cdot j}}}  %+ ( 1/\underline{\sigma_{\cdot j}} ~-  1/\overline{\sigma_{\cdot j}})z 
+ \frac{\bar{\rho}_j (|z_j| + c_0\maxnorm{g})}{\underline{\sigma_{\cdot j}} } \\ 
\le ~& \frac{2 \bar{\rho}_j(|z_j| +c _0 \maxnorm{g} ) }{ \underline{\sigma_{\cdot j}}} + \frac{z}{ \underline{\sigma_{\cdot j}}} \le \frac{2 \bar{\rho}_j(1+c_0) +1  }{ \underline{\sigma_{\cdot j}}}(t +\epsilon),
\label{eq:wbar_bound}
\end{align}
where $\max\{\pm A\} := \max\{A, -A\}$, the first inequality holds by the definition of $\tilde{z}$, the second inequality holds by \eqref{eq:CE_expression}, and the last inequality holds by the definitions of $\bar{\rho}_j$ and $\underline{\sigma_{\cdot j}}$ and the fact $\sum_{m \in \cE_G }\Indrbr{\sigma_{km}\ne 0}  \le c_0$. 

Let $\delta$ in \eqref{eq:Zpdf_final} go to $0$, we get the following bound on the density function of the conditional distribution of $\Zmax \mid  \{Z_j = z_j, G = g\}$, i.e., when $0\le t \le C_0\sqrt{\log d}$ and $|z_j|,\maxnorm{g} \le t+\epsilon$,
\begin{equation} \label{eq:f_bound_v1}
  f_{g, z_j}(z)\le \frac{6}{\underline{\sigma_{\cdot j}}} (\bar{w}+a_d)\le \frac{6}{\underline{\sigma_{\cdot j}}} \left( 
\frac{   2 \bar{\rho}_j(1+c_0) +1}{ \underline{\sigma_{\cdot j}}}C_1\sqrt{\log d}
+C_2\sqrt{\log d}
\right), % f_{g, z_j}(z)\le C \sqrt{\log d},
\end{equation}
for any $ z \in ( \max\{|z_j|,\maxnorm{g}\}, t+\epsilon]$. The first inequality holds by \eqref{eq:Zpdf_final}. Regarding the second inequality, we apply the result in $\eqref{eq:wbar_bound}$ and bound $(t+\epsilon)$ and $a_d$ by $C_1\sqrt{\log d}$ for some constant $C_1$. Note $a_d\le C_1\sqrt{\log d}$ is because of  the maximal inequalities for sub-Gaussian random variables (Lemma 5.2 in \cite{van2014probability}).
%Note can simply bound $(t+\epsilon)$ by $C_1\sqrt{\log d}$ and $a_{d}$ by $C_2\sqrt{\log d}$
%for some constants $C_1,C_2>0$, let $\delta$ go to $0$, we finally obtain, when $0\le t \le C_0\sqrt{\log d}$ and $|z_j|,\maxnorm{g} \le t+\epsilon$,
%$$
%f_{g, z_j}(z)\le  \frac{6}{\underline{\sigma_{\cdot j}}} \left( 
%\frac{   2 \bar{\rho}_j(1+c_0) +1}{ \underline{\sigma_{\cdot j}}}C_1\sqrt{\log d}
%+C_2\sqrt{\log d}
%\right)\le C \sqrt{\log d},\quad  \forall~ |z_j|,\maxnorm{g} < z \le t+ \epsilon,
%$$
As for $\bar{\rho}_{j} = \max_{k\in  \cE_X}\frac{|\sigma_{jk}|}{\sigma_{jj}}$, we have
$$
\bar{\rho}_{j}^2 \le \max_{k\ne j} {\frac{\sigma_{kk}}{\sigma_{jj}}} \le \frac{\max_j \sigma_{jj}^U }{ \min_j \sigma_{jj}^U } \le \frac{\max_j \sigma_{jj}^U }{ \lambda_{\min}(\bSigma^{U}) }   = O(1), %\le \frac{a_1}{a_0},
$$
where the first inequality holds by the Cauchy-Schwarz inequality, the second inequality holds by the definition of $Z$ and $\sigma_{jj}^U = \sigma_{jj}^V$, the third inequality holds by the fact that $ \min_j \sigma_{jj}^U  \ge \lambda_{\min}(\bSigma^{U})$, and the last step holds under the stated assumption of Theorem \ref{thm:ccb_sparse}. As for $\underline{\sigma_{\cdot j}} = \min_{k \in  \cE_X}  \sqrt{\sigma_{kk\cdot j}}$ where $\sigma_{kk \cdot j}= \sigma_{kk} - \sigma_{kj}^2 \sigma_{jj}^{-1} - \sum_{m\in \cE_G }\sigma_{km}^2 \sigma_{mm}^{-1} = \Varc{Z_k}{Z_j, G}$, we have
\begin{eqnarray*} 
\frac{1}{\underline{\sigma^2_{\cdot j}}} &=& \frac{1}{\min_{k \in \cE_X}\Varc{Z_k}{Z_j, G}}  \\
&\le & \frac{1}{\min_{k}\Varc{Z_k}{Z_{\text{-}k}}}  \\
 &=&  \max_{k} ((\bSigma^{Z})^{-1})_{kk} \\
 &\le &  \lambda_{\max}((\bSigma^{Z})^{-1}) \\
& = &1/\lambda_{\min}(\bSigma^{Z}) \\
 &\le &  (\min\{ \lambda_{\min}(\bSigma^{U}), \lambda_{\min}(\bSigma^{V}\})^{-1} \le b_0,
\end{eqnarray*}
under the stated assumption that $\lambda_{\min}(\bSigma^{U}) \ge  1/b_0, \lambda_{\min}(\bSigma^{V}) \ge 1/b_0$, where the first inequality holds since $(Z_j, G)$ is  a sub-vector of  $  Z_{\text{-}k}:= Z_{(1:d)\setminus k}$, the second equality holds by the relationship between the partial variances and the inverse covariance matrix, and the last three hold by the definitions of $\lambda_{\min}(\cdot), \lambda_{\max}(\cdot)$. Thus we have $f_{g, z_j}(z)\le C \sqrt{\log d}$ for some constant $C$, i.e., Lemma \ref{lem:maximas_pdf_bound} is proved. 
%depending on $a_0, a_1, b_0, c_0$, i.e., Lemma \ref{lem:maximas_pdf_bound} is proved.

%  when $|Z_j| = |z_j| \le t+\epsilon$ where $t$ satisfies $ \frac{\phi(\frac{t}{\sigma_{\max}})}{\frac{t}{\sigma_{\max}}} \ge \frac{1}{d^2} $. Note that it suffices to derive a upper bound on the PDF function of the conditional distribution of $\Zmax + \rho_{\max}|Z_j|\mid Z_j$. It only differs from $\max\{W\}\mid Z_j$ by the normalizing factor $\sigma_{kk\cdot j}$. 
\end{proof}
\begin{lemma}\label{lem:mills}
For $z>0$, we have
$$
\frac{\phi(z)}{2(z \vee 1)}  \le \bar{\Phi}(z)  = 1- {\Phi}(z)  \le \frac{\phi(z)}{z},
$$
where $\phi(z), \Phi(z)$ is the PDF and CDF of the standard Gaussian distribution respectively. 
\end{lemma}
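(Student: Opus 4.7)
Both inequalities are classical Mills-ratio bounds for the standard Gaussian tail, and my plan is to prove them via the identity $\phi'(t) = -t\phi(t)$ together with elementary monotonicity arguments on auxiliary functions. The upper bound is the standard one and is straightforward; the lower bound requires splitting into the regimes $z \geq 1$ and $0 < z < 1$, corresponding to the two branches of $z \vee 1$.

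For the upper bound $\bar{\Phi}(z) \leq \phi(z)/z$, I would observe that for $t \geq z > 0$ one has $1 \leq t/z$, so
\[
\bar{\Phi}(z) = \int_z^\infty \phi(t)\,dt \;\leq\; \frac{1}{z}\int_z^\infty t\,\phi(t)\,dt \;=\; \frac{1}{z}\bigl[-\phi(t)\bigr]_z^\infty \;=\; \frac{\phi(z)}{z},
\]
using $-\phi'(t) = t\phi(t)$ in the middle step.

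For the lower bound, I treat the case $z \geq 1$ first. Define $f(z) := \bar{\Phi}(z) - \phi(z)/(2z)$ and compute $f'(z)$ using $\phi'(z) = -z\phi(z)$; a short calculation yields $f'(z) = \phi(z)(1 - z^2)/(2z^2)$, which is $\leq 0$ on $[1, \infty)$. Since $f(z) \to 0$ as $z \to \infty$, monotonicity gives $f(z) \geq 0$ on $[1, \infty)$, which is exactly the desired inequality on that range. For the case $0 < z < 1$, where $z \vee 1 = 1$, I set $g(z) := \bar{\Phi}(z) - \phi(z)/2$ and compute $g'(z) = \phi(z)(z/2 - 1) < 0$ on $(0, 1)$; hence $g$ is decreasing on $[0,1]$, so it suffices to check $g(1) \geq 0$. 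But that is simply the $z \geq 1$ case evaluated at the boundary, and the two cases fit together by continuity.

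The main (mild) obstacle is just bookkeeping the sign of $f'$ and $g'$ and making sure the two cases match at $z=1$, so that the $z \vee 1$ in the denominator produces a continuous bound. There is no serious analytic difficulty here; the whole argument is essentially two monotonicity lemmas plus the standard Gaussian derivative identity.
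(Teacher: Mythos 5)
Your proof is correct and self-contained. The upper bound computation $\bar\Phi(z) = \int_z^\infty \phi \le \tfrac1z\int_z^\infty t\phi(t)\,dt = \phi(z)/z$ is exactly the standard Mills argument. For the lower bound, your derivative computations check out: with $f(z) = \bar\Phi(z) - \phi(z)/(2z)$ one indeed gets $f'(z) = \phi(z)(1-z^2)/(2z^2) \le 0$ on $[1,\infty)$, and since $f(\infty) = 0$ this gives $f \ge 0$ there; and with $g(z) = \bar\Phi(z) - \phi(z)/2$ one gets $g'(z) = \phi(z)(z/2 - 1) < 0$ on $(0,1)$, so $g(z) \ge g(1) = f(1) \ge 0$. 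The case match at $z=1$ is handled cleanly.

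The paper, by contrast, does not give an explicit proof at all: it simply remarks that the inequality ``is a simple fact derived from Mill's inequality'' and refers the reader to the derivations in the proof of Theorem~3 of \citet{chernozhukov2015comparison}. So while the underlying mathematics is the same classical Mills-ratio bound, your write-up is a genuinely different presentation in the sense of being fully self-contained. What your approach buys is independence from an external reference and transparency about where the factor $2(z\vee 1)$ comes from (the two monotonicity regimes); what the paper's approach buys is brevity. Both are standard.
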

\begin{proof}[Proof of Lemma \ref{lem:mills}]
This is a simple fact derived from Mill's inequality; see the derivations in the proof of Theorem 3 in \cite{chernozhukov2015comparison}.
\end{proof}
\begin{lemma}
\label{lem:Lambda_bound}
% Denote $\rho_{0} =(\frac{\sigma_{\min}}{\sigma_{\max}})^2$, we have
% \begin{equation}\label{eq:phi_bound}
% \Lambda(t,\epsilon,p):=
% \frac{(1 - 2 \bar{\Phi}(\frac{t + \epsilon}{\sigma_{\min}}))^p}{ 1- (1 - 2 \bar{\Phi}(\frac{t }{\sigma_{\min}}))^p} \cdot \phi( \frac{t - \epsilon}{\sigma_{\max}}) \stackrel{?}{\le} C  \frac{(d^2)^{\rho_0^{-1}-1}}{p}(\sqrt{\log d})^{\rho_0}
% \end{equation}
% whenever $t$ satisfies $ \frac{\phi(\frac{t}{\sigma_{\max}})}{\frac{t}{\sigma_{\max}}} \ge \frac{1}{d^2} $ and $\epsilon$ chosen to be at smaller enough. Further, if we impose the condition that 
% $\rho_0^{-1} = 1 + \frac{1}{\log d}$, we have
% $$
% \frac{(1 - 2 \bar{\Phi}(\frac{t + \epsilon}{\sigma_{\min}}))^p}{ 1- (1 - 2 \bar{\Phi}(\frac{t }{\sigma_{\min}}))^p} \cdot \phi( \frac{t - \epsilon}{\sigma_{\max}})  \stackrel{?}{\le} C'  \frac{\sqrt{\log d}}{p} 
% $$
%Suppose $\discon > 1$, 
Whenever $0\le t \le C_0\sqrt{\log d}$ for some constant $C_0>0$, and $\epsilon = {c}/{\max\{(\log d)^{3/2}, \discon \log d \}}$ for some small enough constant $c>0$, we have %for some constant $C>0$, the following holds,
\begin{equation}
\label{eq:phi_bound}
\Lambda(t,\epsilon,\discon):=
\frac{(1 - 2 \bar{\Phi}(\frac{t + \epsilon}{\sigma_{j}}))^\discon}{ 1- (1 - 2 \bar{\Phi}(\frac{t }{\sigma_{j}}))^\discon} \cdot \phi\big( \frac{t - \epsilon}{\sigma_{j}}\big) = O\rbr{ \frac{\sqrt{\log d}}{\discon}}.
\end{equation}
%$t$ satisfies $ \frac{\phi(\frac{t}{\sigma_{\max}})}{\frac{t}{\sigma_{\max}}} \ge \frac{1}{d^2} $ 
% Further, if we impose the condition that 
% $\rho_0^{-1} = 1 + \frac{1}{\log d}$, we have
% $$
% \frac{(1 - 2 \bar{\Phi}(\frac{t + \epsilon}{\sigma_{\min}}))^p}{ 1- (1 - 2 \bar{\Phi}(\frac{t }{\sigma_{\min}}))^p} \cdot \phi( \frac{t - \epsilon}{\sigma_{\max}})  \stackrel{?}{\le} C'  \frac{\sqrt{\log d}}{p} 
% $$
\end{lemma}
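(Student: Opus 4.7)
The plan is to reduce $\Lambda(t,\epsilon,\discon)$ to an algebraic expression in two scalars $a,b \in (0,1]$ with $a \le b$, and then exploit Mills' ratio (Lemma \ref{lem:mills}) together with a short case analysis on whether $\discon b$ is small or large.

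First I would set $x = t/\sigma_j$ and $y = \epsilon/\sigma_j$, and write $a = 2\bar\Phi(x+y)$, $b = 2\bar\Phi(x)$, so that
\[
\Lambda(t,\epsilon,\discon) \;=\; \frac{(1-a)^{\discon}}{1-(1-b)^{\discon}}\,\phi(x-y)\cdot\frac{1}{\sigma_j}.
\]
The two hypotheses of the lemma translate into the single smallness bound $xy = t\epsilon/\sigma_j^2 = O(\sqrt{\log d})\cdot o(1/(\log d)^{3/2}) = o(1)$, together with $y^2 = o(1)$. From the identity $\phi(x-y)/\phi(x) = \exp(xy - y^2/2)$, I would conclude that $\phi(x-y) \le C\,\phi(x)$ for some absolute constant $C$. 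Plugging in Mills' upper bound $\bar\Phi(x) \ge \phi(x)/(2(x\vee 1))$, this gives $\phi(x-y) \le 2C(x\vee 1)\bar\Phi(x) = C'\sqrt{\log d}\cdot b$ (using $x\vee 1 = O(\sqrt{\log d})$ from $t\le C_0\sqrt{\log d}$).

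Next I would compare $a$ and $b$ using both directions of Mills' inequality. Since $\phi(x+y)/\phi(x) = \exp(-xy - y^2/2) \ge e^{-o(1)}$ and $(x\vee 1)/((x+y)\vee 1) = 1-o(1)$, I would obtain $a \ge c_1 b$ for some absolute constant $c_1 > 0$, so in particular $\discon a \ge c_1 \discon b$.

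With these two ingredients in hand, the problem reduces to bounding $b(1-a)^{\discon}/(1-(1-b)^{\discon})$ by $O(1/\discon)$. Here I would split into the two cases $\discon b \le 1$ and $\discon b > 1$. In the first case the numerator is at most $b$ and the denominator satisfies $1-(1-b)^{\discon} \ge 1-e^{-\discon b} \ge \discon b/2$, giving the ratio $\le 2/\discon$. In the second case, the denominator is bounded below by $1-e^{-1}$, and I would use $(1-a)^{\discon} \le e^{-\discon a} \le e^{-c_1 \discon b}$ together with the elementary fact that $u\mapsto u e^{-c_1 u}$ is bounded on $(0,\infty)$ to get $b(1-a)^{\discon} \le (ec_1\discon)^{-1}$. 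Combining the two cases and multiplying by the $\sqrt{\log d}\cdot b$ factor from above yields the claimed bound $\Lambda(t,\epsilon,\discon) = O(\sqrt{\log d}/\discon)$.

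The main obstacle I anticipate is not any single estimate but rather picking the right regime decomposition: one has to exploit the $(1-a)^{\discon}$ decay precisely when $\discon b$ is large (so that the denominator $1-(1-b)^{\discon}$ saturates and no longer helps), while in the opposite regime the denominator is linear in $\discon b$ and carries the $1/\discon$ on its own. The Mills' ratio comparison $a\asymp b$, which depends crucially on the smallness $xy = o(1)$ guaranteed by the hypothesis $\epsilon = c/\max\{(\log d)^{3/2},\discon\log d\}$, is what allows the two regimes to be glued together into a uniform $O(\sqrt{\log d}/\discon)$ bound.
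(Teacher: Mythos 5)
Your proof is correct and takes a genuinely different route from the paper's. The paper first applies Mills' inequality to replace $\bar\Phi$ by $\phi/(\cdot\vee 1)$ throughout, handles $t/\sigma_j\le 1$ as a separate (exponentially small) case, and for $t/\sigma_j>1$ substitutes $\lambda=\discon\,\phi(\frac{t+\epsilon}{\sigma_j})/\frac{t+\epsilon}{\sigma_j}$, introduces $H(\lambda)=\log\bigl(\lambda(1-\lambda/\discon)^{\discon}/(1-(1-\lambda/\discon)^{\discon})\bigr)$, proves $H'(\lambda)<0$ via a derivative computation relying on Bernoulli's inequality, lower-bounds $\lambda$ by $\underline\lambda=\discon/d^{a_1}$, and evaluates $H(\underline\lambda)$ using a second-order expansion. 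You instead keep the $\bar\Phi$ form, extract $\phi(x-y)\lesssim\sqrt{\log d}\cdot b$ from the one-sided anti-concentration estimate $\phi(x-y)\le C\phi(x)$ (valid because $xy=o(1)$ by the hypothesis on $\epsilon$) and Mills' lower bound, establish $a\ge c_1 b$, and then dispose of $b(1-a)^{\discon}/(1-(1-b)^{\discon})=O(1/\discon)$ by a clean case split on $\discon b\lessgtr 1$: when $\discon b\le 1$ the denominator $1-(1-b)^{\discon}\ge 1-e^{-\discon b}\ge\discon b/2$ carries the $1/\discon$ on its own, and when $\discon b>1$ the denominator saturates at a constant while $be^{-c_1\discon b}\le (ec_1\discon)^{-1}$ supplies the decay. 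Your route is more elementary and avoids the monotonicity/endpoint-evaluation machinery of $H$, while uniformly treating all $t$ (no separate $t/\sigma_j\le 1$ case); the paper's $H(\lambda)$ framing is slightly heavier but is re-used almost verbatim for Lemma~\ref{lem:Lambda_bound2}, where the power structure in the numerator changes.

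One small note: the step deriving $a\ge c_1 b$ by chaining "both directions of Mills' inequality" is slightly imprecise for $x=t/\sigma_j$ near $0$, since the upper Mills bound $\bar\Phi(x)\le\phi(x)/x$ diverges there. It is cleaner to argue directly that $\frac{b-a}{b}=\frac{2\int_x^{x+y}\phi}{2\bar\Phi(x)}\le\frac{y\,\phi(x)}{\bar\Phi(x)}\le 2y(x\vee 1)=o(1)$ (the last step again using only the lower Mills bound), which gives $a/b=1-o(1)\ge c_1$ uniformly over $x\ge 0$ once $y(x\vee 1)$ is small, and that smallness is exactly what $\epsilon\le c/(\log d)^{3/2}$ buys. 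Also, the factor $1/\sigma_j$ you attach to $\Lambda$ in the first display is not in the definition in the statement, but since $\sigma_j\asymp 1$ this has no effect on the claimed order.
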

%\begin{remark}\label{rk:lem:Lambda_bound}
%
%\end{remark}
\begin{proof}[Proof of Lemma \ref{lem:Lambda_bound}]
% First notice a simple fact that
% $$
% \frac{\phi(z)}{2(z \vee 1)}  \le \bar{\Phi}(z)  = 1- {\Phi}(z)  \le \frac{\phi(z)}{z}
% $$
% where $\phi(z), \Phi(z)$ is the PDF and CDF function of the standard Gaussian distribution. 
By Lemma \ref{lem:mills}, we can simplify $\Lambda(t,\epsilon,\discon)$ into the following
$$
\Lambda(t,\epsilon,\discon)\le
\frac{\left(1 -  \frac{{\phi}(\frac{t + \epsilon}{\sigma_{j}})}{\frac{t + \epsilon}{\sigma_{j}}\vee 1} \right)^\discon}{ 1- \left(1 -  \frac{\phi(\frac{t }{\sigma_{j}})}{\frac{t }{\sigma_{}}\vee 1 }\right)^\discon} \cdot \phi\big( \frac{t - \epsilon}{\sigma_{j}}\big).
$$
When $\frac{t}{\sigma_{j}}\le 1$, we have $\frac{t+\epsilon}{\sigma_j}\le 2 $ due to the choice of $\epsilon$. Because $\frac{t}{\sigma_{j}}>0,\frac{t+\epsilon}{\sigma_j}> 0$ and $\phi(z)$ is monotonically decreasing when $z>0$, we then have the the bound below, 
$$
\Lambda(t,\epsilon,p) \le  \frac{(1-  \phi(2)/2)^\discon}{1 - (1- \phi(1))^\discon}= O\rbr{ \frac{\sqrt{\log d}}{\discon}},
$$
where the second inequality holds due to $0< \phi(2)<\phi(1)<0.5$ and $\discon > 1$. Now it suffices to consider the case where $ \frac{t+\epsilon}{\sigma_{j}}> \frac{t}{\sigma_{j}} > 1$ and deal with the following
$$
\Lambda(t,\epsilon, \discon) \le  \frac{\left(1 -  \frac{{\phi}(\frac{t + \epsilon}{\sigma_{j}})}{\frac{t + \epsilon}{\sigma_{j}}} \right)^\discon}{ 1- \left(1 -  \frac{\phi(\frac{t +\epsilon }{\sigma_{j}})}{\frac{t + \epsilon }{\sigma_{j}} }\right)^\discon} \cdot \phi\rbr{ \frac{t - \epsilon}{\sigma_{j}}}.
$$
We further bound $\Lambda(t,\epsilon, \discon)$ as
\begin{eqnarray} \nonumber
 \Lambda(t,\epsilon,\discon) 
 &\le & \frac{\left(1 -  \frac{{\phi}(\frac{t + \epsilon}{\sigma_j})}{\frac{t + \epsilon}{\sigma_j}} \right)^\discon}{ 1- \left(1 -  \frac{\phi(\frac{t +\epsilon }{\sigma_j})}{\frac{t + \epsilon }{\sigma_j} }\right)^\discon} \cdot \phi\Big( \frac{t + \epsilon}{\sigma_j}\Big) \cdot  e^{\frac{t\epsilon}{2\sigma^2_{j}}} \\ \nonumber
 &\le &  2~\frac{\left(1 -  \frac{{\phi}(\frac{t + \epsilon}{\sigma_j})}{\frac{t + \epsilon}{\sigma_j}} \right)^\discon}{ 1- \Bigg(1 -  \frac{\phi(\frac{t +\epsilon }{\sigma_j})}{\frac{t + \epsilon }{\sigma_j} }\Bigg)^\discon}  \cdot 
  \phi\Big( \frac{t + \epsilon}{\sigma_j}\Big)  \\ \nonumber
    &:= & 2 e^{H(\lambda)} \cdot \frac{t+\epsilon}{\discon\sigma_j}
    % \le   \frac{C\sqrt{\log d}}{\discon},
\end{eqnarray}
where the first inequality comes from rearranging, the second inequality holds since $ \exp{(\frac{t\epsilon}{2\sigma^2_{j}})} < 2$ for $t\le C_0\sqrt{\log d}$. This is because $\epsilon = {c}/{\max\{(\log d)^{3/2}, \discon \log d \}}$ for some small enough constant $c>0$. The last line holds by rewriting using some new notations: $\lambda:= \discon~\frac{{\phi}(\frac{t + \epsilon}{\sigma_{j}})}{\frac{t + \epsilon}{\sigma_{j}}}$ and 
% $H(\lambda)$
\begin{align}\label{eq:Hlambda_def}
    H(\lambda) 
    := \log \left(\frac{(1 - \frac{\lambda}{\discon})^\discon}{1- (1 - \frac{\lambda}{\discon})^\discon} 
    \cdot \lambda\right)
    = \discon \log\Big( 1 - \frac{\lambda}{\discon}\Big) - \log\Big({ 1- (1 - \frac{\lambda}{\discon})^\discon }\Big) +  \log \lambda.
\end{align}
Since $\frac{t+\epsilon}{\sigma_{j}}>1$, we have $0< \lambda <\discon$. Below we will first deal with $H(\lambda) $ then obtain the bound on $ \Lambda(t,\epsilon,\discon)$. To bound $H(\lambda) $, consider taking the derivative of $H(\lambda)$ with respect to $\lambda$, then we have
\begin{eqnarray} \nonumber
    H'(\lambda) 
    &=& \frac{\discon}{\lambda - \discon} - \frac{(1 - \frac{\lambda}{\discon})^{(\discon-1)}}{ 1- (1 - \frac{\lambda}{\discon})^\discon  } + \frac{1}{\lambda}
    \\ \nonumber
    &=& \frac{\discon}{\lambda - \discon} -  \frac{1}{1 - \frac{\lambda}{\discon}  }\cdot \frac{(1 - \frac{\lambda}{\discon})^{\discon}}{ 1- (1 - \frac{\lambda}{\discon})^\discon  } + \frac{1}{\lambda}\\ \nonumber
    &\le & \frac{\discon}{\lambda - \discon} -  \frac{1}{1 - \frac{\lambda}{\discon}  }\cdot \frac{1-\lambda }{1- (1-\lambda)} + \frac{1}{\lambda}\\ \nonumber
    &=& \frac{\discon}{\lambda - \discon} + \frac{1}{ 1 - \frac{\lambda}{\discon}} - \frac{1}{\lambda}\cdot \frac{1}{ 1 - \frac{\lambda}{\discon}} +  \frac{1}{\lambda} \\ \nonumber
    &\le& \frac{1}{\lambda}\Big(1 - \frac{\discon}{\discon-\lambda}\Big) < 0 ,
\end{eqnarray}
 where the first inequality holds by the Bernoulli's inequality: ${(1+x)^{r}\geq 1+rx}$ when $r \in \NN,1+x\ge 0$, and the last inequality holds since $0<\lambda< \discon$. Now we have $H(\lambda)$ is monotone decreasing. When $0\le t \le C_0\sqrt{\log d}$, we will first find the lower bound on $\lambda= \discon~\frac{{\phi}(\frac{t + \epsilon}{\sigma_{j}})}{\frac{t + \epsilon}{\sigma_{j}}}$, denoted by  $\underline{\lambda}$. Then we have $H(\lambda)$ is bounded by $H(\underline{\lambda})$ due to its monotonicity.
%  For $t$ satisfies $0\le t \le C_0\sqrt{\log d}$,
%  %$ \frac{\phi(\frac{t}{\sigma_{\max}})}{\frac{t}{\sigma_{\max}}} \ge \frac{1}{d^2} $, 
% %  which will give a upper bound on $t$, 
% if we find the lower bound  $\underline{\lambda}$ on $\lambda= \discon~\frac{{\phi}(\frac{t + \epsilon}{\sigma_{j}})}{\frac{t + \epsilon}{\sigma_{j}}}$, then $H(\lambda)$ can be upper bounded by $H(\underline{\lambda})$ due to its monotonicity.
%First we derive $\underline{\lambda}$ then deal with $H(\underline{\lambda})$.
% Below we first derive the bound on $H(\underline{\lambda})$ and
% Thus $H(\lambda)$ can be upper bounded by $H(\underline{\lambda})$.
%  where 
%  $$
%  \underline{\lambda} \ge \frac{p \phi(\bar{t}) }{ \bar{t}}.
%  $$
%  To derive $\underline{\lambda}$, we first notice a very coarse upper bound on $t$, i.e.
%  $$
% e^{-\frac{t^2}{2\sigma_{\max}^2}}\ge   \frac{\sqrt{2\pi}t}{\sigma_{\max}} \cdot \frac{1}{d^2} \ge \frac{1}{d^2}
%  $$
%  As $t\le K\sqrt{\log d}$, 
Regarding $\underline{\lambda}$, we denote $ \bar{x}:=  2 C_0 \sqrt{\log d}/\sigma_j$ and note $\frac{\phi(x)}{x}$ is monotone decreasing when $x\ge 0$. Then we have, when  $0\le t \le C_0\sqrt{\log d}$,
\begin{eqnarray*}
 \discon ~ \frac{\phi(\frac{t + \epsilon}{\sigma_{j}})}{\frac{t + \epsilon}{\sigma_{j}}} 
    \ge  \discon ~ \frac{\phi( \bar{x} )}{\bar{x}  }  \ge  \frac{\discon}{d^{a_1}} := \underline{\lambda},
       \end{eqnarray*}
where $a_1 >2$. 
% \begin{eqnarray} \nonumber
%    \lambda =
%    \discon ~ \frac{\phi(\frac{t + \epsilon}{\sigma_{j}})}{\frac{t + \epsilon}{\sigma_{j}}} 
%    = \discon  \cdot \frac{ \phi(\frac{t }{\sigma_{j}})}{\frac{t}{\sigma_{j}}} \cdot \frac{t}{t+\epsilon}\cdot 
%    e^{-\frac{\epsilon^2 + 2t\epsilon}{\sigma^2_{j}}}
%    \ge a_1 \discon  \cdot \frac{ \phi(\frac{t }{\sigma_{j}})}{\frac{t}{\sigma_{j}}}
%    % &=&  c_1 p \cdot \left( \frac{ \phi(\frac{t }{\sigma_{j}})}{\frac{t}{\sigma_{j}}} \right)^{\frac{1}{\rho_0}} \cdot \left(\frac{\sqrt{2\pi} t }{ \sigma_{j}}\right)^{(\frac{1}{\rho_0} - 1)}\\
%    \ge a_1  \frac{\discon }{d^{a_2}}:= \underline{\lambda} > 0,
% \end{eqnarray}
% for some constants $0< a_1 < 1$, $ a_2 > 2$. The first inequality holds since $\epsilon$ can be chosen small enough such that $\frac{t}{t+\epsilon}\exp\big(-\frac{\epsilon^2 + 2t\epsilon}{\sigma^2_{j}}\big) > a_1$. Regarding the second inequality, we first have $ { \phi(\frac{t}{\sigma_{j}})}/{\frac{t}{\sigma_{j}}}$ is monotonically decreasing over the range of $t$, then for $t\le C_0\sqrt{\log d}$, we can lower bound ${ \phi(\frac{t}{\sigma_{j}})}/{\frac{t}{\sigma_{j}}}$ by $1/d^{a_2}$ for some $2 < a_2 < M_1$. 
 Therefore we obtain 
 %$\underline{\lambda} = C\frac{p}{d^{2/\rho_0}}$ for some constant $C>0$. Then we have
 \begin{align}
     H(\lambda) \le H( {\underline{\lambda} })
     =
     \left .
     \log \left(\frac{(1 - \frac{\lambda}{\discon})^\discon}{1- (1 - \frac{\lambda}{\discon})^\discon} 
    \cdot \lambda\right)  \right \vert _{\lambda = \underline{\lambda} }
    \le 
     \left .
     \log \left(
     \frac{{\lambda}}{1 - (1- \discon \frac{{\lambda}}{\discon } + \frac{(\discon -1)\discon }{2} \frac{\lambda^2}{\discon ^2} )} 
     \right) \right \vert _{\lambda = \underline{\lambda} }
      \le C',
      \label{eq:Hlambda_bound}
 \end{align}
 where the second inequality holds due to the fact that $(1 - \frac{\lambda}{\discon})^\discon \le 1$, $\frac{\lambda}{\discon} \in [0,1]$ and Lemma \ref{lem:bern_result}. The third inequality holds since  $\underline{\lambda} =   \frac{\discon }{d^{a_1}} \le  \frac{1}{d^{a_1 -1}} < \frac{1}{d}$, then we have
$$
 \left .
    \left(
     \frac{{\lambda}}{1 - (1- \discon \frac{{\lambda}}{\discon } + \frac{(\discon -1)\discon }{2} \frac{\lambda^2}{\discon ^2} )} 
     \right) \right \vert _{\lambda = \underline{\lambda} } = \frac{\underline{\lambda}}{\underline{\lambda} - \frac{2(\discon - 1)}{\discon}\underline{\lambda}^2} 
     \le \frac{\underline{\lambda}}{\underline{\lambda} - 2\underline{\lambda}^2}  
    = \frac{1}{1 -2\underline{\lambda}} \le C'_1,
$$
for some constant $C'_1$. Now we figure out the bound on $\Lambda(t,\epsilon,\discon) $, %and establish Lemma \ref{lem:Lambda_bound}.
 \begin{eqnarray} \nonumber
 \Lambda(t,\epsilon,\discon) 
 &\le&  \frac{\left(1 -  \frac{{\phi}(\frac{t + \epsilon}{\sigma_j})}{\frac{t + \epsilon}{\sigma_j}} \right)^\discon}{ 1- \left(1 -  \frac{\phi(\frac{t +\epsilon }{\sigma_j})}{\frac{t + \epsilon }{\sigma_j} }\right)^\discon} \cdot \phi\Big( \frac{t - \epsilon}{\sigma_j}\Big) \\ \nonumber 
  &\le& \frac{\left(1 -  \frac{{\phi}(\frac{t + \epsilon}{\sigma_j})}{\frac{t + \epsilon}{\sigma_j}} \right)^\discon}{ 1- \left(1 -  \frac{\phi(\frac{t +\epsilon }{\sigma_j})}{\frac{t + \epsilon }{\sigma_j} }\right)^\discon} \cdot \phi\Big( \frac{t + \epsilon}{\sigma_j}\Big) \cdot  e^{\frac{t\epsilon}{2\sigma^2_{j}}} \\ \nonumber
   &\le&  2~\frac{\left(1 -  \frac{{\phi}(\frac{t + \epsilon}{\sigma_j})}{\frac{t + \epsilon}{\sigma_j}} \right)^\discon}{ 1- \Bigg(1 -  \frac{\phi(\frac{t +\epsilon }{\sigma_j})}{\frac{t + \epsilon }{\sigma_j} }\Bigg)^\discon}  \cdot 
  \phi\Big( \frac{t + \epsilon}{\sigma_j}\Big)  \\ \nonumber
    &\le& 2 e^{H(\lambda)} \cdot \frac{t+\epsilon}{\discon\sigma_j}
     \le   \frac{C\sqrt{\log d}}{\discon},
\end{eqnarray}
where the second inequality comes from rearranging, the third inequality holds since $ \exp{(\frac{t\epsilon}{2\sigma^2_{j}})} < 2$ for $t\le C_0\sqrt{\log d}$. This is because $\epsilon = {c}/{\max\{(\log d)^{3/2}, \discon \log d \}}$ for some small enough constant $c>0$. And the last line holds by \eqref{eq:Hlambda_def} and  \eqref{eq:Hlambda_bound}. Therefore Lemma \ref{lem:Lambda_bound} is established.
\end{proof}

\begin{lemma} \label{lem:Lambda_bound2}
Under the same conditions as Lemma \ref{lem:Lambda_bound}, we have
\begin{equation}\label{eq:phi_bound_similar}
%\discon~ \Lambda(t,\epsilon,\discon) \le 
\frac{  (1 -  2\bar{\Phi}(\frac{t+\epsilon}{\sigma_j}) )^{\discon - 1}  }{ 1- (1 - 2 \bar{\Phi}(\frac{t }{\sigma_{j}}))^\discon} \cdot \Big(\discon \phi\big( \frac{t - \epsilon}{\sigma_{j}}\big)\Big)^2 = O\rbr{\sqrt{\log d}}. 
\end{equation}  
\end{lemma}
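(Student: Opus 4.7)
The plan is to mirror the strategy of Lemma \ref{lem:Lambda_bound}'s proof, adapting it to the exponent $\discon-1$ in the numerator and the square on $\phi$. Let me abbreviate $u = 2\bar{\Phi}((t+\epsilon)/\sigma_j)$, $v = 2\bar{\Phi}(t/\sigma_j)$, and $y = (t-\epsilon)/\sigma_j$, so that the quantity to bound is $\Lambda_2 = (1-u)^{\discon-1}(\discon\phi(y))^2/(1-(1-v)^\discon)$. Note $u \le v$ since $\bar{\Phi}$ is decreasing.

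First I would apply the lower half of Lemma \ref{lem:mills}: $\bar{\Phi}(z) \ge \phi(z)/(2(z\vee 1))$ gives $\phi(t/\sigma_j) \le v\cdot(t/\sigma_j \vee 1)$. Because $t\epsilon/\sigma_j^2 = o(1)$ under the stated choice $\epsilon = c/\max\{(\log d)^{3/2},\discon\log d\}$, the ratio $\phi(y)/\phi(t/\sigma_j) = e^{t\epsilon/\sigma_j^2 - \epsilon^2/(2\sigma_j^2)}$ is $O(1)$, so $\phi(y) \le Cv\cdot (t/\sigma_j\vee 1)$. Second, I would use the elementary lower bound $1 - (1-v)^\discon = \int_0^v\discon(1-s)^{\discon-1}ds \ge \discon v(1-v)^{\discon-1}$, coming from monotonicity of $(1-s)^{\discon-1}$ on $[0,v]$. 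Writing $(1-u)/(1-v) = 1+(v-u)/(1-v)$ and estimating $v-u \le 2\phi(t/\sigma_j)\epsilon/\sigma_j$ via the intermediate value theorem, the smallness of $\epsilon$ forces $((1-u)/(1-v))^{\discon-1} = O(1)$. Combining these steps reduces the problem to bounding $C'\cdot(\discon v)\cdot(t/\sigma_j\vee 1)^2$.

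I would then split into two regimes according to $t/\sigma_j$. When $t/\sigma_j \le 1$, $v$ is bounded below by $2\bar{\Phi}(1)$ and $(t/\sigma_j\vee 1)^2 = 1$; either $\discon v$ is bounded, giving $\Lambda_2 = O(1)$ at once, or $\discon v$ is large, in which case $(1-u)^{\discon-1}\le e^{-(\discon-1)u}$ decays exponentially fast in $\discon v$ and overwhelms the polynomial $\discon^2$ factor. When $t/\sigma_j > 1$, I would use the upper half of Mills, $v \le 2\phi(t/\sigma_j)\sigma_j/t$, so that $\discon v\cdot(t/\sigma_j)^2 = O(\discon\phi(t/\sigma_j)\cdot t/\sigma_j)$, and introduce the same substitution $\lambda = \discon\phi((t+\epsilon)/\sigma_j)/((t+\epsilon)/\sigma_j)$ used in the proof of Lemma \ref{lem:Lambda_bound}. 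Analyzing the resulting expression as a function of $\lambda$ via its logarithmic derivative (the analogue of the function $H(\lambda)$ in \eqref{eq:Hlambda_def}), showing monotonicity, and evaluating at the minimal admissible value $\underline\lambda$, yields the claimed rate after invoking $t \le C_0\sqrt{\log d}$.

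The main obstacle is the intermediate regime where $\discon v$ and $t/\sigma_j$ are simultaneously moderate, so that none of the factors $(1-u)^{\discon-1}$, $\discon v$, $(t/\sigma_j)^2$ is negligible. Here the naive bound $\discon v\cdot(t/\sigma_j)^2$ only gives $O(\log d)$, and one must more carefully trade the Gaussian decay of $\phi(y)$ against the polynomial $\discon^2$ growth; this is the direct analogue of the $H(\lambda)$ analysis, but with an extra power of $\phi$ that makes the critical-point calculation more sensitive. I expect this case-$B$ function-analytic step to consume most of the proof's technical effort, and to require essentially the same Bernoulli/Taylor expansion trick used after \eqref{eq:Hlambda_def} in the proof of Lemma \ref{lem:Lambda_bound}.
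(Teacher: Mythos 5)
The statement itself is not correct — the right order is $O(\log d)$, not $O(\sqrt{\log d})$ — so no proof along your (or any) lines can establish \eqref{eq:phi_bound_similar} as written. To see this, set $z = t/\sigma_j$, $v = 2\bar{\Phi}(z)$, $\mu = \discon v$, and ignore the $\epsilon$-corrections, which you control correctly. For $z\ge 1$, Mills' bounds give $vz/2 \le \phi(z) \le vz$, so $\discon\phi(z)$ is of order $\mu z$, and with $(1-v)^{\discon}\approx e^{-\mu}$ the quantity in \eqref{eq:phi_bound_similar} is of order
\[
z^2\cdot\frac{\mu^2 e^{-\mu}}{1-e^{-\mu}} \;=\; z^2\cdot\frac{\mu^2}{e^{\mu}-1},
\]
and $\mu\mapsto\mu^2/(e^{\mu}-1)$ has a strictly positive maximum ($\approx 0.65$, near $\mu\approx 1.6$). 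Choosing $\discon$ so that $\mu = 2\discon\bar{\Phi}(z)\approx 1.6$ at $z = C_0\sqrt{\log d}/\sigma_j$ (admissible since such $\discon$ is of order $d^{C_0^2/(2\sigma_j^2)}\sqrt{\log d}\lesssim d$ when $C_0\le\sqrt 2\sigma_j$; if $C_0>\sqrt 2\sigma_j$, take $\discon$ of order $d$ and $z\approx\sqrt{2\log d}$ instead) makes the expression of order $z^2 = \Theta(\log d)$. Equivalently: the target equals $\Lambda(t,\epsilon,\discon)\cdot\discon^2\phi\bigl((t-\epsilon)/\sigma_j\bigr)/\bigl(1-2\bar{\Phi}((t+\epsilon)/\sigma_j)\bigr)$, and while Lemma \ref{lem:Lambda_bound} controls $\Lambda$ by $O(\sqrt{\log d}/\discon)$, the remaining factor is genuinely $\Theta(\sqrt{\log d})$ at the worst $t$, not $O(1)$. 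You already half-recognize this when you note that the intermediate regime ``only gives $O(\log d)$'' and hope to trade Gaussian decay against $\discon^2$; no such trade is available because $O(\log d)$ is sharp. (Your reduction via $1-(1-v)^{\discon}\ge\discon v(1-v)^{\discon-1}$ is additionally lossy by a factor $e^{\mu}$ when $\mu$ is large, so it overshoots even more near $t/\sigma_j=1$, but even the paper's direct $H(\lambda)$-style analysis only delivers $O(\log d)$ here.)

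The discrepancy is harmless downstream, which is presumably why it went unnoticed. Lemma \ref{lem:Lambda_bound2} is used only in the last line of Lemma \ref{lem:II2_bound}; replacing $\sqrt{\log d}$ by $\log d$ there gives $\mathrm{II}_2/\PPP(\maxnorm{V}>t) = O(\epsilon\log d/(\beta\discon))$, which is the same order as the $\mathrm{II}_1$ bound from Lemma \ref{lem:II1_bound}. Hence \eqref{eq:IIs_bound} and the rate in Theorem \ref{thm:ccb_sparse} are unaffected. The correct fix is to restate the lemma with $O(\log d)$ rather than attempt to prove $O(\sqrt{\log d})$.
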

\begin{proof}
Note that the result \eqref{eq:phi_bound} in Lemma \ref{lem:Lambda_bound} can be rewritten as
$$
\discon~ \Lambda(t,\epsilon,\discon) = \frac{(1 - 2 \bar{\Phi}(\frac{t + \epsilon}{\sigma_{j}}))^\discon}{ 1- (1 - 2 \bar{\Phi}(\frac{t }{\sigma_{j}}))^\discon} \cdot \Big(\discon \phi\big( \frac{t - \epsilon}{\sigma_{j}}\big)\Big) =  O\rbr{\sqrt{\log d}}.
$$
By similar derivations as in the proof of Lemma \ref{lem:Lambda_bound}, we can establish
\begin{equation}\nonumber
%\label{eq:phi_bound_similar}
%\discon~ \Lambda(t,\epsilon,\discon) \le 
\frac{(1 - 2 \bar{\Phi}(\frac{t + \epsilon}{\sigma_{j}}))^{\discon - 1}}{ 1- (1 - 2 \bar{\Phi}(\frac{t }{\sigma_{j}}))^\discon} \cdot \Big(\discon \phi\big( \frac{t - \epsilon}{\sigma_{j}}\big)\Big)^2 = O\rbr{\sqrt{\log d}}.  
\end{equation}
\end{proof}
\begin{lemma}\label{lem:bern_result}
% Suppose $\discon >1$, 
  For $x \in [0, 1]$, we have 
  $(1-x)^{\discon} \le 1- \discon x + 0.5 \discon(\discon - 1) x^2 
  $.
\end{lemma}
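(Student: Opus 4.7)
The plan is to reduce this to a straightforward calculus exercise by showing that the difference function vanishes to second order at $x=0$ and is convex. Concretely, I would define
\[
f(x) := 1 - \discon x + \tfrac{1}{2}\discon(\discon-1)x^{2} - (1-x)^{\discon}, \qquad x \in [0,1],
\]
and aim to prove $f(x) \ge 0$ on $[0,1]$ by verifying $f(0) = 0$, $f'(0) = 0$, and $f''(x) \ge 0$ on $[0,1]$.

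First I would handle the trivial boundary cases: $\discon = 1$ gives equality throughout, and $\discon = 0$ is vacuous, so I assume $\discon \ge 2$ in what follows. Next, a direct computation yields $f(0) = 1 - 1 = 0$, $f'(x) = -\discon + \discon(\discon-1)x + \discon(1-x)^{\discon-1}$ so that $f'(0) = 0$, and
\[
f''(x) = \discon(\discon-1) - \discon(\discon-1)(1-x)^{\discon-2} = \discon(\discon-1)\bigl(1 - (1-x)^{\discon-2}\bigr).
\]
Since $\discon \ge 2$ and $x \in [0,1]$, we have $(1-x)^{\discon-2} \in [0,1]$, hence $f''(x) \ge 0$ on $[0,1]$.

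The final step is then mechanical: integrating the inequality $f''(x) \ge 0$ from $0$ to $x$ gives $f'(x) \ge f'(0) = 0$, and integrating once more yields $f(x) \ge f(0) = 0$, which is exactly the claimed inequality $(1-x)^{\discon} \le 1 - \discon x + \tfrac{1}{2}\discon(\discon-1)x^{2}$. There is no real obstacle here — the only thing to be mildly careful about is that the second derivative argument requires $\discon \ge 2$ so that $(1-x)^{\discon-2}$ is well-defined and bounded by $1$ on $[0,1]$; the $\discon \le 1$ cases are checked by hand.
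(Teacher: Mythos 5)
Your proof is correct. It takes a closely related but genuinely different route from the paper's. The paper defines $Q(x) = (1-x)^{\discon} - (1 - \discon x + 0.5 \discon(\discon-1)x^2)$, notes $Q(0)=0$, and shows $Q'(x) \le 0$ directly by invoking Bernoulli's inequality $(1-x)^{\discon-1} \ge 1-(\discon-1)x$ inside the expression for $Q'$; monotonicity then finishes the argument. You instead differentiate once more, observe $f(0)=f'(0)=0$ and $f''(x) = \discon(\discon-1)\bigl(1-(1-x)^{\discon-2}\bigr) \ge 0$ for $\discon \ge 2$, and integrate twice. The two arguments are essentially equivalent: the inequality $f'(x)\ge 0$ that you obtain after the first integration is precisely Bernoulli's inequality at exponent $\discon-1$, so your second-derivative step is in effect a self-contained proof of the lemma the paper cites. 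The paper's version is slightly shorter because it treats Bernoulli as known; yours is slightly more self-contained and makes the convexity structure explicit. Your handling of the edge cases $\discon \in \{0,1\}$ (so that $(1-x)^{\discon-2}$ is well-defined) is the right precaution and matches the paper's separate treatment of $\discon = 1$.
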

\begin{proof}
When $\discon = 1$, the above simply holds. Now we consider the case where $\discon >1$. Let $Q(x) = (1-x)^{\discon} - (1- \discon x + 0.5 \discon(\discon - 1) x^2)$, we have $Q(0) = 0$ and   
\begin{align}
Q'(x) = -\discon(1-x)^{(\discon-1)} + \discon - \discon(\discon-1)x \le -\discon (1 - (\discon-1)x) + \discon - \discon(\discon-1)x = 0,
\end{align}
where the inequality holds by applying Bernoulli's inequality to $(1-x)^{(\discon-1)}$ for $\discon >1,  x \in [0,1]$. Therefore, $Q(x)$ is monotonically decreasing, and the statement is proved. 
\end{proof}

\subsection{Ancillary lemmas for Theorem \ref{thm:ccb_sparse_unitvar}}
\begin{remark}\label{rk:general_var_cond}
Recall that the connectivity assumption of Theorem \ref{eq:ccb_sparse_unitvar} assumes 
%that there exist disjoint partitions of nodes $\cup_{\ell=1}^{\discon}\cC^U_\ell = \cup_{\ell=1}^{\discon}\cC^V_\ell= [d], $ such that  $\sigma^U_{jk}$ ($\sigma^V_{jk}$) equals $0$ when $j,k$ belong to different components $\cC^U_\ell$ ($\cC^V_\ell$), and $\forall \ell \in[\discon]$, $\cC^U_{\ell} \cap \cC^V_{\ell} \ne \emptyset$. 
that there exists a disjoint $\discon$-partition of nodes $\cup_{\ell=1}^{\discon}\cC_\ell = [d]$ such that  $\sigma^U_{jk}=\sigma^V_{jk}=0$ when $j \in \cC_{\ell}$ and  $k \in \cC_{\ell'}$ for some $\ell \neq \ell'$. The more general version of the variance condition  assumes: $a_0 \le\sigma^U_{jj}=\sigma^V_{jj} \le a_1,~\forall j \in [d]$; given any $j \in \cC^U_\ell$ with some $\ell$, there exists at least one $m  \in \cC^U_{\ell'}$ such that $\sigma^U_{jj} = \sigma^{U}_{mm}$ for any $\ell' \ne \ell$.
% and $ \sigma^U_{jj} = \sigma^U_{kk}$ for $j, k $ belong to the same component. 
Denote $\tilde{\sigma}^{U}_{jk} = {\sigma}^{U}_{jk}/\sqrt{{\sigma}^{U}_{jj} {\sigma}^{U}_{kk}}$. And the general covariance condition says that there exists some $\sigma_0<1$ such that $|\tilde{\sigma}^V_{jk}| = |{\sigma}^V_{jk}|/\sqrt{{\sigma}^V_{jj} {\sigma}^V_{kk}}\le  \sigma_0$ for any $j\ne k$ and $ |\{(j,k): j\ne k, |\tilde{\sigma}^U_{jk} | = |{\sigma}^U_{jk} |/\sqrt{\sigma^U_{jj} \sigma^U_{kk} }>  \sigma_0 \} | \le b_0$ for some constant $b_0$. 
%The connectivity assumption assumes that there exist disjoint partitions of nodes $\cup_{\ell=1}^{\discon}\cC^U_\ell = \cup_{\ell=1}^{\discon}\cC^V_\ell= [d], $ such that  $\sigma^U_{jk}$ ($\sigma^V_{jk}$) equals $0$ when $j,k$ belong to different components $\cC^U_\ell$ ($\cC^V_\ell$), and $\forall \ell \in[\discon]$, $\cC^U_{\ell} \cap \cC^V_{\ell} \ne \emptyset$.	
\end{remark}
\begin{lemma}\label{lem:II1_bound_v2} 
For the term $\mathrm{II}_1 = \EE{  e^{-\beta (\Zmax - |Z_j|)}\cdot  \Indrbr{E_1}\cdot  \Indrbr{t-\epsilon \le \Zmax \le t +\epsilon} }$ with ${E_1}$ defined in \eqref{eq:E1set_def} and $\epsilon = {c}/{\max\{(\log d)^{3/2}, \discon \log d \}}$ for some small enough constant $c>0$,
whenever $t$ satisfies $0\le t \le C_0\sqrt{\log d}$ for some constant $C_0>0$, we have
\begin{eqnarray}\label{eq:boundII1_denom_rate}
\frac{\mathrm{II}_1 }{  \mathbb{P}(\maxnorm{V} > t) } \le  \frac{C'\epsilon \log  d}{\beta \discon}\left( 1 + \frac{b_0}{\sqrt{1 - (s + (1-s)\sigma_0)^2 } } \right). 
 \end{eqnarray}
 for any $s \in (0,1)$, where $\sigma_0 < 1$ and $b_0$ are the constants in the assumption of Theorem \ref{eq:ccb_sparse_unitvar}.
\end{lemma}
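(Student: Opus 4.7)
The plan is to mimic the proof of Lemma \ref{lem:II1_bound}, replacing the density bound from Lemma \ref{lem:maximas_pdf_bound} (which used the minimal eigenvalue assumption) by a more refined bound that uses only the general variance/covariance conditions together with the stronger connectivity structure available here. First, by the tower property and the indicator identity $\Indrbr{E_1} = \Indrbr{\maxnorm{Z} > \maxnorm{G},\, \maxnorm{Z} > |Z_j|}$, I write
\[
\mathrm{II}_1 = \EEE\!\left[ e^{\beta|Z_j|}\Indrbr{\maxnorm{G}\le t+\epsilon,\,|Z_j|\le t+\epsilon} \int_{t-\epsilon}^{t+\epsilon} f_{G,Z_j}(u)\, e^{-\beta u}\,\Indrbr{u>|Z_j|}\,du \right],
\]
exactly as in \eqref{eq:lem:II1_bound}. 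So the task reduces to controlling the conditional density $f_{g,z_j}(u)$ of $\maxnorm{Z}$ given $(Z_j=z_j,G=g)$ at $u\in(\max\{|z_j|,\maxnorm{g}\},t+\epsilon]$.

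The key step is to prove a version of Lemma \ref{lem:maximas_pdf_bound} valid without the minimal eigenvalue hypothesis, giving
\[
f_{g,z_j}(u) \le C\sqrt{\log d}\left(1+\frac{b_0}{\sqrt{1-(s+(1-s)\sigma_0)^2}}\right).
\]
To get this, I follow the Chernozhukov--Chetverikov--Kato standardization trick used in the proof of Lemma \ref{lem:maximas_pdf_bound}: write the conditional density of $\max_{k\in\cE_X}|Z_k|$ as a supremum over shifts of the density of $\max_m \tilde W_m$ for an appropriately standardized family $\{\tilde W_m\}$ with conditional variance $1$. The partial variance $\sigma_{kk\cdot j}$ is the quantity I must lower-bound. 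Splitting $\cE_X$ into (i) indices $k$ for which the covariance with $Z_j$ is bounded so that the normalized correlation is at most $s+(1-s)\sigma_0<1$ and (ii) the at most $b_0$ exceptional indices where it can be larger (this is where the generalized covariance condition on $U$ enters), I get $\sigma_{kk\cdot j}\gtrsim 1-(s+(1-s)\sigma_0)^2$ on group (i), which handles the bulk. The exceptional $b_0$ indices contribute only the additive $b_0/\sqrt{\,1-(s+(1-s)\sigma_0)^2\,}$ term because their number is $O(1)$ and their density can be bounded uniformly. The balanced-variance requirement in the generalized connectivity assumption is what lets $G$ and $Z_j$ be jointly independent with matched variances, so conditioning on $G$ introduces no additional correlation blow-up.

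With this improved density bound in hand, I substitute $C\sqrt{\log d}\bigl(1+b_0/\sqrt{1-(s+(1-s)\sigma_0)^2}\bigr)$ in place of $C\sqrt{\log d}$ wherever it appeared in the proof of Lemma \ref{lem:II1_bound}. The subsequent Gaussian integration on $z_j$ (completing the square with $e^{\beta|z_j|}\phi(z_j/\sigma_j)$, using $\beta\sigma_j\gg (t+\epsilon)/\sigma_j$, and invoking Mills' ratio via Lemma \ref{lem:mills}) gives the same $\frac{8\epsilon}{\beta\sigma_j}\exp(-(t-\epsilon)^2/(2\sigma_j^2))$ factor as in \eqref{eq:lem:III_bound}. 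Dividing by $\PP{\maxnorm{V}>t}\ge \PP{\maxnorm{G}>t}$ (using the sub-vector bound \eqref{eq:VG_ineq}) and rewriting the remaining ratio through the quantity $\Lambda(t,\epsilon,\discon)$, Lemma \ref{lem:Lambda_bound} already gives $\Lambda\lesssim \sqrt{\log d}/\discon$. Combining these yields the desired bound \eqref{eq:boundII1_denom_rate}.

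The main obstacle is the refined density bound: without the minimal eigenvalue hypothesis, controlling the partial variances $\sigma_{kk\cdot j}$ uniformly over $k$ is impossible, and one must instead separate out the $O(b_0)$ indices where $Z_k$ is highly correlated with $Z_j$ from the rest, leveraging the sparsity of high-correlation entries in $U$ and the uniform off-diagonal bound in $V$ that together yield the $s+(1-s)\sigma_0$ correlation ceiling along the Slepian interpolation path. Once the partial-variance splitting is done correctly, the Gaussian-integration machinery from Lemma \ref{lem:II1_bound} carries through essentially verbatim.
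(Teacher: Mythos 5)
Your overall plan is close in spirit to the paper's proof, which also isolates the $O(b_0)$ ``bad'' indices (where $|\tilde\sigma^U_{jk}|$ or some $|\tilde\sigma^U_{mk}|$, $m\in\cE_G$, exceeds $\sigma_0$) from the rest, obtains a clean density bound for the ``good'' part, and then handles the bad indices separately. But your description of the partial-variance split contains a misattribution that is not merely cosmetic --- it would actually break the downstream argument in Theorem~\ref{thm:ccb_sparse_unitvar}.

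You write that on the good group (i) you get $\sigma_{kk\cdot j}\gtrsim 1-(s+(1-s)\sigma_0)^2$ and that this ``handles the bulk.'' That is backwards. On the good group, by definition \emph{both} $|\tilde\sigma^U_{mk}|\le\sigma_0$ and $|\tilde\sigma^V_{mk}|\le\sigma_0$ hold for every $m\in\{j\}\cup\cE_G$, so the interpolated correlation $|s\tilde\sigma^U_{mk}+(1-s)\tilde\sigma^V_{mk}|\le\sigma_0$ uniformly in $s$, and (using the fact that each $k$ shares a connected component with at most one element of $\{j\}\cup\cE_G$) the partial variance satisfies $\sigma_{kk\cdot j}\gtrsim a_0(1-\sigma_0^2)$, a constant bounded away from zero \emph{independent of $s$}. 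This is what produces the additive ``$1$'' in the parenthetical factor of \eqref{eq:boundII1_denom_rate}. The $s$-dependent quantity $1-(s+(1-s)\sigma_0)^2$ is the correct floor only for the \emph{bad} indices, where $|\tilde\sigma^U_{km_0}|$ is only bounded by $1$ but $|\tilde\sigma^V_{km_0}|\le\sigma_0$. If you used the weaker $s$-dependent lower bound for the good group, the CCK-style density bound (which has a $1/\underline{\sigma_{\cdot j}}^2$ dependence, see \eqref{eq:f_bound_v2}) would yield a factor $1/(1-(s+(1-s)\sigma_0)^2)$ --- not its square root --- and $\int_0^1 \frac{ds}{1-(s+(1-s)\sigma_0)^2}=\infty$, so the integral over $s$ in the proof of Theorem~\ref{thm:ccb_sparse_unitvar} would diverge.

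The second, closely related gap is that for the bad indices you cannot simply fold them into the same CCK density machinery. The CCK bound for a max over a family degrades as $1/\underline{\sigma_{\cdot j}}^2$, while what you need --- and what the paper uses --- is that each bad index contributes through a \emph{one-dimensional} conditional density $f_{Z_j,G}(u)\le \frac{1}{\sqrt{2\pi a_0}}\cdot\frac{1}{\sqrt{1-(s+(1-s)\sigma_0)^2}}$, whose milder singularity is what makes $\int_0^1\frac{b_0\,ds}{\sqrt{1-(s+(1-s)\sigma_0)^2}}$ converge. Concretely, the paper's route is to decompose $\mathrm{II}_1$ itself as $\mathrm{II}_1\le\mathrm{II}_1^\dag+b_0\max_{k\notin\cE^\dag}\mathrm{II}_1^{(k)}$ using $1=\Indrbr{\maxnorm{Z^\dag}=\maxnorm{Z}}+\sum_{k\notin\cE^\dag}\Indrbr{|Z_k|=\maxnorm{Z}}$, bound $\mathrm{II}_1^\dag$ via Lemma~\ref{lem:maximas_pdf_bound_v2} with the constant good-group floor, and bound each $\mathrm{II}_1^{(k)}$ with a one-dimensional density; this is equivalent in content to your density-decomposition strategy but makes the 1-d/multi-d distinction unambiguous. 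Once you (a) use the constant floor $a_0(1-\sigma_0^2)$ on the good group, (b) invoke that each good $k$ shares a component with at most one $m_0\in\{j\}\cup\cE_G$, and (c) treat the $\le b_0$ bad indices with one-dimensional densities, the remaining Gaussian calculus and the $\Lambda(t,\epsilon,\discon)$ step do carry over verbatim from Lemma~\ref{lem:II1_bound}, as you say.
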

\begin{remark}
Recall the definition of $Z = W(s)$. Hence the term $\mathrm{II}_1$  depends on $s$. In Lemma \ref{lem:II1_bound}, we are able to derive a uniform upper bound when assuming the minimal eigenvalue condition as in Theorem \ref{thm:ccb_sparse}. Since Theorem \ref{thm:ccb_sparse_unitvar} does not make assumptions about the minimal eigenvalue condition, we will bound the term $\mathrm{II}_1$ differently and the upper bound depend on $s$, as showed in the following proof. 
\end{remark}

\begin{proof}[Proof of Lemma \ref{lem:II1_bound_v2}]
We basically use the same proof strategy as Lemma but will separately deal with two cases. First recall that
\begin{align}\nonumber
\mathrm{II}_1 =~&  \EE{  e^{-\beta (\Zmax - |Z_j|)}\cdot   \Indrbr{ 
\maxnorm{Z} >  \maxnorm{G}, \maxnorm{Z} > Z_j 
% \exists k, k\ne j, Z_k\notin \cE_G, |Z_k| = \Zmax, |Z_k| > \Gmax
} \cdot  \Indrbr{t-\epsilon \le \Zmax \le t +\epsilon} }.  
\end{align}	
We define $Z^\dag = (Z_k)_{k \in \cE^\dag}$ where 
\begin{equation}\label{eq:cEdeg_def}
\cE^\dag = \{j\}\cup \cE_G \cup \{k \in [d]:  |\tilde{\sigma}^U_{jk}|  \le \sigma_0,
 \max_{m \in \cE_G}\{ |\tilde{\sigma}^U_{mk}| \} \le \sigma_0
  \}.
 \end{equation}
 Under the condition of Theorem \ref{thm:ccb_sparse_unitvar}, we have
 $|[d] \setminus \cE^\dag| \le  |\{(j,k): j\ne k, |\tilde{\sigma}^U_{jk}|\} >  \sigma_0 \} | \le b_0$ for some constant $b_0$. Note we can write $1 = \Indrbr{ \maxnorm{Z^\dag} = \maxnorm{Z}} +\sum_{k \in [d] \setminus \cE^\dag }\Indrbr{ |Z_k|= \maxnorm{Z} } $. Then we have
 \begin{eqnarray}\label{eq:boundII1_denom_v2}
\frac{\mathrm{II}_1 }{  \mathbb{P}(\maxnorm{V} > t) } \le  \frac{\mathrm{II}^\dag_1 }{  \mathbb{P}(\maxnorm{V} > t) }  + b_0 \cdot \max_{k \in [d] \setminus \cE^\dag  }\frac{\mathrm{II}^{(k)}_1 }{  \mathbb{P}(\maxnorm{V} > t) }, 
 \end{eqnarray}
 where $\mathrm{II}^\dag_1$ and $\mathrm{II}^{(k)}_1$ are defined as
 \begin{align}\label{eq:IIdag_k_def}
  \begin{split} 
 \mathrm{II}^\dag_1 :=& \EE{  e^{-\beta (\maxnorm{Z^\dag} - |Z_j|)}\cdot   \Indrbr{ 
\maxnorm{Z^\dag} >  \maxnorm{G}, \maxnorm{Z^\dag} > Z_j } \cdot  \Indrbr{t-\epsilon \le \maxnorm{Z^\dag} \le t +\epsilon} },\\
 \mathrm{II}^{(k)}_1 :=& \EE{  e^{-\beta (|Z_k| - |Z_j|)}\cdot   \Indrbr{ 
|Z_k| >  \maxnorm{G}, |Z_k| > Z_j } \cdot  \Indrbr{t-\epsilon \le |Z_k| \le t +\epsilon} }.
 \end{split}
  \end{align}
Denote the conditional density function of $\maxnorm{Z^\dag} \mid  Z_j = z_j, G = g $ by $ f^\dag_{g,z_j}(u)$. Then we apply exactly the same derivations as in Lemma \ref{lem:II1_bound} (except that $f^\dag_{g,z_j}(u)$ is bounded using Lemma \ref{lem:maximas_pdf_bound_v2} instead of Lemma \ref{lem:maximas_pdf_bound}) and obtain the following bound
\begin{equation}\label{eq:IIdag_bound}
	\frac{\mathrm{II}^\dag_1 }{  \mathbb{P}(\maxnorm{V} > t) }\le \frac{C'\epsilon \log  d}{\beta \discon}.
\end{equation}
Regarding the term $ \mathrm{II}^{(k)}_1$, we follow the same derivations as in the beginning of the proof of Lemma \ref{lem:II1_bound}. Specifically, we have
% except that we do not need to deal with the bound on the conditional density function of Gaussian maxima. Directly, we have
\begin{align}\nonumber
\mathrm{II}^{(k)}_1 =~&  \EE{  e^{-\beta (|Z_k| - |Z_j|)}\cdot   \Indrbr{ 
|Z_k| >  \maxnorm{G}, |Z_k| > Z_j } \cdot  \Indrbr{t-\epsilon \le |Z_k| \le t +\epsilon} } \\ 
=~ &  \EE{ e^{\beta |Z_j|}\cdot \Indrbr{\maxnorm{G} \le t+\epsilon,|Z_j| \le t+\epsilon} \left(
\int_{t-\epsilon}^{t+\epsilon} f_{Z_j}(u) e^{-\beta u}\Indrbr{u>\maxnorm{G}, u >|Z_j|}du
 \right)}, \label{eq:IIk_bound}
\end{align} 
where $f_{Z_j,G}(u)$ denotes the conditional density of $Z_k$ given $Z_j, G$. Recall the construction of $G$ described in the proof of Theorem \ref{thm:ccb_sparse_unitvar}, we have for any $k\ne j, k\notin \cE_{G} = \{m \in [d]:Z_m = G_\ell \text{ for some } \ell \in [\discon]\}$, there exists at most one $m \in \{j\} \cup \cE_{G}$, such that $Z_k$ and $Z_m$ belong to the same component. Denote that random variable by $Z_{m_0}$, then $f_{Z_j,G}(u)$ is just the conditional density function of $Z_k$ given $Z_{m_0}$. Since $Z$ follows a multivariate Gaussian distribution, we can immediately figure out the expression of the conditional density $f_{Z_{m_0}}(u)$ and simply derive a bound
\begin{align}\nonumber
	f_{Z_j,G}(u) = f_{Z_{m_0}}(u) &\le  \frac{1}{\sqrt{2\pi \Varc{Z_k}{Z_{m_0}}}}  \\ \nonumber
	&= \frac{1}{\sqrt{2\pi (\sigma_{kk} - \sigma^2_{k m_0} /\sigma_{m_0 m_0} ) )} } \\ \nonumber
	&= \frac{1}{\sqrt{2\pi \sigma_{kk} }} \cdot \frac{1}{1 - \sigma^2_{k m_0} /(\sigma_{k k} \sigma_{m_0 m_0} ) }\\
	& \le \frac{1}{\sqrt{2\pi a_0 }} \cdot \frac{1}{1 - \sigma^2_{k m_0} /(\sigma_{k k} \sigma_{m_0 m_0} ) }  ,  \label{eq:fZm0_bound}
\end{align} 
where $\sigma_{kk} = \Var{Z_k} , \sigma_{m_0 m_0} = \Var{Z_{m_0}}, \sigma_{k m_0} = \Cov{Z_k}{Z_{m_0}} $ and we use the fact that 
$\sigma_{kk} = \Var{Z_k} = \sigma^U_{kk} \ge a_0$ (under the general variance assumption). Note $Z = \sqrt{s} U + \sqrt{1-s} V$, then we have $ \sigma^2_{k m_0}= (\Cov{Z_k}{Z_{m_0}})^2 = (s {\sigma}_{km_0}^U + (1 - s) {\sigma}_{km_0}^V)^2$ where $m_0 \in \{j\} \cup \cE_{G}$. 
%Recall that $k \in [d]\setminus \cE^\dag $ where $\cE^\dag$ is define in \eqref{eq:cEdeg_def}, we have
Since $|\tilde{\sigma}_{km_0}^U| \le 1 $ by definition and $|\tilde{\sigma}_{km_0}^V| \le \sigma_0 $ under the assumption of Theorem \ref{thm:ccb_sparse_unitvar}, we have
\begin{equation}\label{eq:cond_cov_bound}
 (s {\sigma}_{km_0}^U + (1 - s) \sigma_{km_0}^V)^2/(\sigma_{kk}\sigma_{m_0m_0})  = (s \tilde{\sigma}_{km_0}^U + (1 - s) \tilde{\sigma}_{km_0}^V)^2  \le  (s + (1-s)\sigma_0)^2.
\end{equation}
Now we obtain a upper bound on the conditional density function $f_{Z_j,G}(u)$ based on \eqref{eq:fZm0_bound} and \eqref{eq:cond_cov_bound}. Combining this bound and following the same derivations as in Lemma \ref{lem:II1_bound} to deal with the term in \eqref{eq:IIk_bound}, we establish the upper bound on the term $\mathrm{II}^{(k)}_1 /\mathbb{P}(\maxnorm{V} > t) $ for any $k \in [d] \setminus \cE^\dag$,
\begin{equation}\label{eq:IIk_bound_final}
	\frac{\mathrm{II}^{(k)}_1}{\mathbb{P}(\maxnorm{V} > t)} \le \frac{C'\epsilon \log  d}{\beta \discon}\cdot \frac{1}{\sqrt{1 - (s + (1-s)\sigma_0)^2 } }.
\end{equation}
Combining \eqref{eq:boundII1_denom_v2}, \eqref{eq:IIdag_k_def},\eqref{eq:IIdag_bound} with \eqref{eq:IIk_bound_final}, we derive the bound in \eqref{eq:boundII1_denom_rate}.
%A simple calculation gives the following result on $f_{Z_j,G}(u)$:
%\begin{equation}
%	f_{Z_j,G}(u) =f_{Z_{m_0}}(u) \le 
%\end{equation}
 \end{proof}

\begin{lemma}\label{lem:maximas_pdf_bound_v2} 
Recall that the density function of the conditional distribution of $\maxnorm{Z^\dag} \mid  \{Z_j = z_j, G = g\}$ is denoted by $f^{\dag}_{ g,z_j}(z)$ where $Z^\dag$ is defined in .... Suppose $\epsilon>0$, when $0\le t \le C_0\sqrt{\log d}$ for some constant $C_0>0$ and $|z_j|,\maxnorm{g} \le t+\epsilon$, we have
\begin{equation}\label{eq:f_bound_v2_state}
f^{\dag}_{g, z_j}(z)\le C \sqrt{\log d} ,\quad \forall ~ z \in ( \max\{|z_j|,\maxnorm{g}\}, t+\epsilon]. % <  z\le t+\epsilon.
\end{equation}
where the finite constant $C$ depends on $a_0$ and $\sigma_0 < 1$. %which is the assumed bound on $\sigma^U_{jk}, \sigma^V_{jk}, j\ne k$ in the assumption of Theorem \ref{eq:ccb_sparse_unitvar}.
\end{lemma}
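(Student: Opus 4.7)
The plan is to mirror the proof of Lemma \ref{lem:maximas_pdf_bound}, but to replace the role of the minimum eigenvalue condition (which is no longer assumed here) by the explicit bounded-correlation condition built into the definition of $\cE^\dag$ in \eqref{eq:cEdeg_def}. Concretely, for any $k \in \cE^\dag \setminus (\{j\}\cup \cE_G)$ and any $s\in[0,1]$, writing $\sigma_{jk} = s\sigma_{jk}^U + (1-s)\sigma_{jk}^V$ and using $|\tilde\sigma^U_{jk}|\le \sigma_0$, $|\tilde\sigma^V_{jk}|\le \sigma_0$, the same Cauchy-Schwarz argument as in \eqref{eq:cond_cov_bound} gives $\sigma_{jk}^2/(\sigma_{jj}\sigma_{kk})\le \sigma_0^2$, and the analogous estimate holds for $m\in \cE_G$ in place of $j$. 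Since the connectivity structure of Theorem \ref{thm:ccb_sparse_unitvar} forces $k$ to be in at most one component, at most one element of $\{j\}\cup \cE_G$ has a nonzero correlation with $Z_k$, so the partial variance satisfies
\begin{equation}\nonumber
\sigma_{kk\cdot j} = \sigma_{kk} - \frac{\sigma_{kj}^2}{\sigma_{jj}} - \sum_{m\in \cE_G}\frac{\sigma_{km}^2}{\sigma_{mm}} \ge \sigma_{kk}(1-\sigma_0^2) \ge a_0(1-\sigma_0^2).
\end{equation}
In particular, $\underline{\sigma_{\cdot j}}^2 \ge a_0(1-\sigma_0^2)$, a strictly positive constant depending only on $a_0$ and $\sigma_0$, and likewise $\bar\rho_j \le \sigma_0\sqrt{a_1/a_0}$ is bounded by a constant.

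With these two key estimates in hand, I would introduce the sub-vector $X^\dag = (Z_k)_{k\in \cE^\dag\setminus(\{j\}\cup \cE_G)}$ and the normalised variables
\begin{equation}\nonumber
\tilde W_{2k-1} = \frac{Z_k - z}{\sqrt{\sigma_{kk\cdot j}}} + \frac{\tilde z}{\underline{\sigma_{\cdot j}}}, \qquad \tilde W_{2k} = \frac{-Z_k - z}{\sqrt{\sigma_{kk\cdot j}}} + \frac{\tilde z}{\underline{\sigma_{\cdot j}}},
\end{equation}
with $\tilde z = z + \bar\rho_j(|z_j| + c_0\maxnorm{g})$, chosen so that each $\tilde W_m$ has unit conditional variance and nonnegative conditional mean given $(Z_j,G)$. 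I would then apply Step 2 in the proof of Theorem 3 of \citet{chernozhukov2015comparison} (the anti-concentration estimate for the maximum of a Gaussian vector via smoothing) to bound the conditional density $q_{z_j,g}(w)$ of $\max_m\tilde W_m$ by $3(\bar w + a_d)$, exactly as in \eqref{eq:pdfg_bound}.

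The main obstacle—and it is essentially a bookkeeping obstacle rather than a conceptual one—is verifying the applicability of that anti-concentration step when $|\mathrm{Corr}(Z_{k_1},Z_{k_2})|$ is merely bounded away from $1$ by a constant for $k_1,k_2\in \cE^\dag\setminus(\{j\}\cup \cE_G)$, rather than by a more refined spectral quantity. Here the restriction to $\cE^\dag$ is crucial: by construction every $k\in \cE^\dag\setminus(\{j\}\cup \cE_G)$ has $|\tilde\sigma^U_{jk}|\le \sigma_0$ and, via the component structure, shares a nonzero correlation with at most one other index, so the normalized pairwise correlations among $\{\tilde W_m\}$ stay strictly below $1$ uniformly. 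Combined with the maximal inequality $a_d = O(\sqrt{\log d})$ from Lemma 5.2 of \citet{van2014probability}, this forces $\bar w + a_d = O(\sqrt{\log d})$.

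Finally, letting $\delta \downarrow 0$ in the probability $\mathbb{P}(|\maxnorm{X^\dag} - z|\le \delta\mid Z_j=z_j, G=g)$ exactly as in \eqref{eq:Zpdf_final} yields
\begin{equation}\nonumber
f^\dag_{g,z_j}(z) \le \frac{6}{\underline{\sigma_{\cdot j}}}(\bar w + a_d) \le \frac{6}{\sqrt{a_0(1-\sigma_0^2)}}\Big(\tfrac{2\bar\rho_j(1+c_0)+1}{\sqrt{a_0(1-\sigma_0^2)}}C_1\sqrt{\log d} + C_2\sqrt{\log d}\Big),
\end{equation}
which is at most $C\sqrt{\log d}$ for some constant $C$ depending only on $a_0, a_1,\sigma_0$. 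This is precisely the bound \eqref{eq:f_bound_v2_state} claimed in the lemma, and the proof is complete. The whole argument reuses the machinery of Lemma \ref{lem:maximas_pdf_bound} verbatim; the only novelty is the replacement of $\lambda_{\min}(\bSigma)^{-1}$ by a constant coming from the $\sigma_0$-bounded correlation structure on $\cE^\dag$.
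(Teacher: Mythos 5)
Your proposal is correct and takes essentially the same route as the paper: keep the smoothing/anti-concentration machinery of Lemma \ref{lem:maximas_pdf_bound} intact and simply replace the role of $\lambda_{\min}(\bSigma)^{-1}$ by explicit constants $\bar{\rho}_j \le a_1/a_0$ and $\underline{\sigma_{\cdot j}}^{-2} \le (a_0(1-\sigma_0^2))^{-1}$, deduced from the $\sigma_0$-bounded normalized correlations enforced by the restriction to $\cE^\dag$ together with the fact that each $k \in \cE_X$ can be correlated with at most one member of $\{j\}\cup \cE_G$. Your aside about needing the pairwise correlations among $\{\tilde W_m\}$ to be \emph{uniformly} bounded away from $1$ is overcautious: the Chernozhukov--Chetverikov--Kato density bound only needs them to be strictly less than $1$, which both you and the paper already have; the quantitative constants all enter through $\underline{\sigma_{\cdot j}}$ and $\bar{\rho}_j$, which you control correctly.
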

\begin{proof}[Proof of Lemma \ref{lem:maximas_pdf_bound_v2}]
Following exactly the same derivations as in Lemma \ref{lem:maximas_pdf_bound} (up to \eqref{eq:f_bound_v1}), we have
\begin{equation}\label{eq:f_bound_v2}
  f^{\dag}_{g, z_j}(z)\le  \frac{6}{\underline{\sigma_{\cdot j}}} \left( 
\frac{   2 \bar{\rho}_j(1+c_0) +1}{ \underline{\sigma_{\cdot j}}}C_1\sqrt{\log d}
+C_2\sqrt{\log d}
\right), % f_{g, z_j}(z)\le C \sqrt{\log d},
\end{equation}
for any $ z \in ( \max\{|z_j|,\maxnorm{g}\}, t+\epsilon]$, where $C_1,C_2$ are some constants. First, $\bar{\rho}_j$ is defined in \eqref{eq:sigma_rho_def}. Simply, we have 
$$
\bar{\rho}_j \le  \max_{k\ne j} \frac{|\sigma_{jk}|}{\sigma_{jj}} \le \frac{\max_{j}\sigma_{jj}^U}{\min_{j}\sigma_{jj}^U} \le \frac{a_1}{a_0}
$$
under the general variance assumption. Recall the construction of $G$ described in the proof of Theorem \ref{thm:ccb_sparse_unitvar}, we have for any $k\ne j, k\notin \cE_{G} = \{m \in [d]:Z_m = G_\ell \text{ for some } \ell \in [\discon]\}$, there is at most one $m \in \cE_{G}$, such that $Z_k$ and $Z_m$ belong to the same component. Then we have
$$
\sum_{m \in \cE_{G}} \Indrbr{\sigma_{km} \ne 0} \le 1, 
$$
hence $c_0 = 1$ by definition. Also note by the definition of $Z^\dag$ and $\cE^\dag$ in \eqref{eq:cEdeg_def}, for any $k\in \cE^\dag, k\ne  j, k \ne \cE_G$, we have 
  \begin{equation}\label{eq:cor_cstrt}
 \max\{|\tilde{\sigma}^U_{jk}|, |\tilde{\sigma}^V_{jk}|\} \}  \le \sigma_0, \quad 
 \max_{m \in \cE_G}\{ |\tilde{\sigma}^U_{mk}|, |\tilde{\sigma}^V_{mk}|\} \le \sigma_0
  \end{equation} 
under the assumption of Theorem \ref{thm:ccb_sparse_unitvar}.
% says that there exists some $\sigma_0<1$ such that $ |\{(j,k): \max\{|\sigma^U_{jk}|, |\sigma^V_{jk}|\} \ge  \sigma_0 \} | \le b_0$ for some constant $b_0$. 
 We will take advantage of this together with the above property of $G$ to derive a bound on $\underline{\sigma_{\cdot j}}$. Similarly as in \eqref{eq:sigma_rho_def}, we have $\underline{\sigma^2_{\cdot j}} :=  \min_{k\in \cE_X}\Varc{Z_k}{Z_j, G}$ with $\cE_X := \{k \in \cE^\dag: k \ne j, k \notin \cE_G\}$. For each $k \in \cE_X$, we have it can at most belong to the same component as one of $\{j\} \cup \cE_G$,
%$$
%\Indrbr{\sigma_{jk} \ne 0} + \sum_{m \in \cE_{G}} \Indrbr{\sigma_{km} \ne 0} \le 1
%$$
due to the property of $G$. Then we have
\begin{eqnarray}\nonumber
	\Varc{Z_k}{Z_j, G} 
	&\ge & \min\{ \Varc{Z_k}{Z_j}, \min_{m \in \cE_G}\{ \Varc{Z_k}{Z_m}\}\}\\ \nonumber
	&=& {\sigma_{kk}} \cdot  \min\{ 1 -\sigma_{jk}^2/(\sigma_{jj}\sigma_{kk}), \min_{m \in \cE_G}\{ 1 -\sigma^2_{mk}/(\sigma_{mm}\sigma_{kk})\}\} \\ 
	& \ge & a_0 \cdot  \min\{ 1 -\sigma_{jk}^2/(\sigma_{jj}\sigma_{kk}), \min_{m \in \cE_G}\{ 1 -\sigma^2_{mk}/(\sigma_{mm}\sigma_{kk})\}\} \label{eq:condvar_lbd}.
\end{eqnarray}
since $(Z_j, G)$ are all independent and $\sigma_{kk} = \Var{Z_k} = \sigma^U_{kk} \ge a_0$ (under the general variance assumption). Recall the definition of $Z = \sqrt{s}U + \sqrt{1-s}V$, we have
\begin{equation}\label{eq:cor_bound}
	|\sigma_{mk}|/\sqrt{ \sigma_{mm} \sigma_{kk}} = |\Cov{Z_k}{Z_m}|/\sqrt{ \sigma_{mm} \sigma_{kk}}= |s \tilde{\sigma}^U_{mk} + (1-s) \tilde{\sigma}^V_{mk}| \le \sigma_0,\quad \forall~ s \in [0,1],
\end{equation}
when $m \in \{j\} \cup \cE_G$. This is due to \eqref{eq:cor_cstrt}. Then we can derive a bound on ${1}/{\underline{\sigma^2_{\cdot j}}}$, i.e.,
 %where $\cE_X$ is similarly defined in the proof of Lemma \ref{lem:maximas_pdf_bound}.
\begin{eqnarray*} 
\frac{1}{\underline{\sigma^2_{\cdot j}}} &=& \frac{1}{\min_{k\in \cE_X}\Varc{Z_k}{Z_j, G}}  \\
&\le & \frac{1}{a_0 \min_{k \in \cE_X} \min\{ 1 -\sigma_{jk}^2/(\sigma_{jj}\sigma_{kk}), \min_{m \in \cE_G}\{ 1 -\sigma_{mk}^2/(\sigma_{mm}\sigma_{kk})\}\} }  \\
 &\le &  \frac{1}{a_0(1 -\sigma^2_0)},
 \end{eqnarray*}
where the first inequality holds by \eqref{eq:condvar_lbd} and the second equality holds by \eqref{eq:cor_bound}. 
%where $\cE_X = \{k \in \cE^\dag: k \ne j, k \notin \cE_G\}$. Here $\cE_X$ is similarly defined in the proof of Lemma \ref{lem:maximas_pdf_bound}.
% except that the $[d]$ is replaced by $\cE^\dag$, due to the definition of $Z^\dag$.
Combining the above bound with \eqref{eq:f_bound_v2}, we finally establish \eqref{eq:f_bound_v2_state} for some finite constant $C$. 
\end{proof}
\section{Ancillary propositions for FDR control}
\label{sec:gmb_theory}

Throughout this section, we introduce some new notations. For a given mean zero random vector $\bY \in \RR^{d}$ with positive semi-definite covariance matrix $\bSigma^Y:= \EE{\bY \bY^{\top}} \in \RR^{d \times d}$, we denote its Gaussian counterpart by $\bZ \in \RR^{d}$ (i.e., $\EE{\bZ}=\mathbf{0}$ and its covariance matrix $\EE{\bZ \bZ^{\top}}:=\bSigma^Z$ equals $\bSigma^Y= (\sigma_{jk}^Y)_{1\le j,k\le d}$ ). Consider $n$ i.i.d. copies of $\bY$, denoted by $\bY_1,\cdots,\bY_n \in \RR^{d}$. We define the maximum $T_{\bY}$ and $T_{\bZ}$ as below,
%with positive semi-definite covariance matrix $\bSigma^Y$ and their Gaussian counterparts $\bZ_1,\cdots,\bZ_n \in \RR^{d}$ (i.e., $\forall~ i \in [n]$, $\EE{\bZ_i}=0$, $\EE{\bZ_i \bZ^{\top}_i}:=\bSigma^Z = \bSigma^Y= (\sigma_{jk}^Y)_{1\le j,k\le d}$), and define the maximum $T_{\bX}$ and $T_{\bY}$ as below:
\begin{equation}\label{eq:maxima_sum}
T_{\bY}:= \left \Vert \frac{1}{\sqrt{n}}\sum_{i=1}^{n}\bY_{i}\right \Vert_{\infty},
% =\max_{1\le j \le p}\left|\frac{1}{\sqrt{n}}\sum_{i=1}^{n}\bY_{i}(j)\right|, 
\quad T_{\bZ}:= \maxnorm{\bZ},
%\quad T_{\bZ}:=\left\Vert\frac{1}{\sqrt{n}}\sum_{i=1}^{n}\bZ_{i}\right\Vert_{\infty},
\end{equation}
where $q(\alpha;T_{\bY})$ and $q(\alpha;T_{\bZ})$ ($\alpha\in[0,1]$) are the corresponding upper quantile functions. Define the Gaussian multiplier bootstrap counterpart as
\begin{equation}\label{eq:gmb_def}
T_{\bW}:= \left \Vert \frac{1}{\sqrt{n}}\sum_{i=1}^{n}\bY_{i}\xi_{i}\right \Vert_{\infty},
\end{equation}
where $\xi_{i}\stackrel{i.i.d.}{\sim}\cN(0,1)$ and are independent from $\bY_1, \cdots, \bY_n$. Let $q_{\cqt}(\alpha;T_{\bW})$ be the conditional quantile of $T_{\bW}$, then we have $\Pp{\cqt}{ T_{\bW} \ge  q_{\cqt}(\alpha;T_{\bW})} = \alpha$. Note that we use the $\cqt$ subscript to remind ourselves that the probability measure is induced by the multiplier random variables $\{\xi_{i}\}_{i=1}^n$ conditional on $\{\bY_i\}_{i=1}^n$. And we have the covariance matrix of $\frac{1}{\sqrt{n}}\sum_{i=1}^{n}\bY_{i}\xi_{i}$ (conditional on $\{\bY_i\}_{i=1}^n$) equals $\bSigma^W :=\frac{1}{n}\sum_{i=1}^{n}\bY_{i}\bY_{i}^{\top}$. Denote $\maxdiff = \maxnorm{ \bSigma^Z - \bSigma^W} $, which measures the maximal differences between the true covariance matrix $\bSigma^Z$ and the sample version $\bSigma^W$.

\subsection{Cram\'{e}r-type deviation bounds for the Gaussian multiplier bootstrap}
%\label{sec:gmb_theory}
Based on the Cram\'{e}r-type Gaussian comparison bound in Theorem \ref{thm:ccb_max}, the Cram\'{e}r-type approximation bound \cite{arun2018cram}, the maximal inequalities and a careful treatment to the comparison of quantiles, we will establish the Cram\'{e}r-type deviation bounds for the Gaussian multiplier bootstrap (CGMB) in this section.
\begin{proposition}[CGMB]\label{prop:cramer_gmb}
Assuming  the covariance matrix $\bSigma^{Y}$ satisfies $0 <c_1\le \sigma^{Y}_{jj} \le c_2<\infty$, for any $j\in [d]$ and $\bY$ satisfies the tail condition that $\max_{1\le i\le n}\max_{1\le j \le p}\norm{\bY_{ij}}_{\psi_{1}}\le K_{3}$ %($\alpha =1$)
%the condition on the diagonal term of the covariance matrix $\bSigma^{Y}$ that 
for some constants $c_1,c_2,K_{3}$, under the scaling condition $(\log ed)^3 (\log (ed+n))^{56/3}/n = o(1)$, we have the following bound,
\begin{equation}
\label{eq:cramer_mb}
\sup_{\alpha \in [\alpha_{L},1]}
\left|
\frac{\mathbb{P}(T_{\bY} > q_{\cqt}(\alpha;T_{\bW}) )}{\mathbb{P}(T_{\bZ} > q(\alpha;T_{\bZ}) )} - 1 
\right| 
= O\rbr{  \frac{ (\log d)^{11/6}}{n^{1/6}\alpha_L^{1/3}}
+\frac{   (\log d)^{19/6}}{n^{1/6}} },
%O\rbr{\frac{(\log d)^{19/6}}{n^{1/6}} },
% O\left(\eta_1(d,n,\alpha_L) \right)
%, \bluecom{~?~ O\left(\frac{(\log d)^{19/6}}{n^{1/6}}\right)}
%c_{T_{\bY}}(\alpha)\\
%c_{\alpha}(T_{\bY})\\
%q(\alpha;T_{\bY}) \\
%c_{\alpha}(||Y||_{\infty})\\
%q(\alpha;||Y||_{\infty}) \\
%c_{||Y||_{\infty}}(\alpha)\\
%c_{T_0}(\alpha)\\
%    \left|\frac{\mathbb{P}(T^{\bY} > t)}{\mathbb{P}(T^{V} > t)}-1\right|\le [A(\Delta)+1]e^{A(\Delta)B(t)}-1
\end{equation}
where
% $\eta_1(d,n,\alpha_L)=
% \frac{(\log d)^{8/3}}{n^{1/6}}\cdot q({\alpha_L}/{2};T_{\bZ}) \bigvee
% \frac{(\log d)^{{5}/{8}}}{n^{1/8}}\cdot q^2(\alpha_{L};T_{\bZ})$ and 
$\alpha_{L}$ satisfies $q(\alpha_{L};T_{\bZ})=O\left(\sqrt{\log d}\right)$ and $\frac{\log^{11} d}{n\alpha_L} =O(1)$.
\end{proposition}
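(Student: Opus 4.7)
\textbf{Proof proposal for Proposition \ref{prop:cramer_gmb}.} The plan is to decompose the relative error by inserting appropriate intermediate probabilities so as to separate two distinct sources of error: (i) the Cram\'er-type Gaussian approximation error between $T_\bY$ and $T_\bZ$, and (ii) the Cram\'er-type Gaussian comparison error between the conditional bootstrap distribution of $T_\bW$ and $T_\bZ$. Concretely, I would write
\[
\frac{\mathbb{P}(T_\bY > q_\cqt(\alpha;T_\bW))}{\mathbb{P}(T_\bZ > q(\alpha;T_\bZ))} - 1
= \underbrace{\frac{\mathbb{P}(T_\bY > q_\cqt(\alpha;T_\bW))}{\mathbb{P}(T_\bZ > q_\cqt(\alpha;T_\bW))} - 1}_{\text{(I): Gaussian approx.\ at a random level}}
\;+\;\underbrace{\frac{\mathbb{P}(T_\bZ > q_\cqt(\alpha;T_\bW))}{\mathbb{P}(T_\bZ > q(\alpha;T_\bZ))} - 1}_{\text{(II): quantile comparison}},
\]
with the usual product-of-ratios identity to combine the two contributions. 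For (I), the idea is to apply Corollary~5.1 / Theorem~2.1 of \citet{arun2018cram}: on any event where $q_\cqt(\alpha;T_\bW)$ lies in the valid range $[0,C_0\sqrt{\log d}]$, the Cram\'er-type CLT bound yields (I) $= O((\log d)^{19/6}/n^{1/6})$, which is precisely the second summand of the stated rate.

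For (II), the key observation is that, conditional on $\bY_1,\ldots,\bY_n$, the variable $T_\bW$ is the $\ell_\infty$-norm of a Gaussian vector with covariance $\bSigma^W = n^{-1}\sum_i \bY_i\bY_i^\top$. Hence, conditional on the data, Theorem~\ref{thm:ccb_max} applies with $U,V$ corresponding to the Gaussian vectors underlying $T_\bW$ and $T_\bZ$, giving
\[
\sup_{0\le t \le C_0\sqrt{\log d}}\bigg|\frac{\mathbb{P}_\cqt(T_\bW>t)}{\mathbb{P}(T_\bZ>t)}-1\bigg| = O\big((\log d)^{5/2}\maxdiff^{1/2}\big)
\]
on the event $\{\maxdiff \le \delta\}$, where $\maxdiff = \|\bSigma^W - \bSigma^Z\|_{\max}$. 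I would use the concentration of sub-exponential quadratic forms (with the $\psi_1$-condition on $\bY_{ij}$) to obtain $\mathbb{P}(\maxdiff > \delta)\le \eta(\delta)$ with $\delta \sim \sqrt{(\log d)/n}$ up to polylog factors, and then invoke a quantile-comparison tool in the spirit of Lemma~\ref{lem:compqt} to translate the Cram\'er bound on tails into a Cram\'er bound on quantiles: $q_\cqt(\alpha;T_\bW) = q(\alpha';T_\bZ)$ for some $\alpha'$ with $|\alpha'/\alpha-1|$ controlled by $(\log d)^{5/2}\delta^{1/2}$. The absolute part of the quantile perturbation, re-expressed as a relative probability, contributes the $\alpha_L^{-1/3}$ factor after optimizing $\delta$ against the exceptional probability $\eta(\delta)/\alpha_L$ coming from the event $\{\maxdiff > \delta\}$; this is what produces the first summand $(\log d)^{11/6}/(n^{1/6}\alpha_L^{1/3})$.

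Finally, I would take a supremum over $\alpha \in [\alpha_L,1]$ and check that the assumed scaling condition $(\log ed)^3 (\log(ed+n))^{56/3}/n = o(1)$, together with $q(\alpha_L;T_\bZ) = O(\sqrt{\log d})$, is enough to keep every invocation of Theorem~\ref{thm:ccb_max} within its admissible range $(\log d)^5 \maxdiff = O(1)$, and to keep the Arun--Kuchibhotla bound applicable. The main obstacle, I expect, is the quantile-comparison step: Theorem~\ref{thm:ccb_max} gives a \emph{relative} error on the survival function, but turning this into a clean statement about how $q_\cqt(\alpha;T_\bW)$ shifts requires combining it with an anti-concentration inequality (and with the $\alpha \ge \alpha_L$ lower bound to avoid degeneracy) to invert the distribution function; balancing the exceptional-event probability $\mathbb{P}(\maxdiff>\delta)$ against the main CCB bound, both divided by $\alpha_L$, is the delicate optimization that yields the $\alpha_L^{-1/3}$ exponent.
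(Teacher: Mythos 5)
Your high-level strategy matches the paper's: compare $T_\bY$ with $T_\bZ$ via the Cram\'er-type CLT of \citet{arun2018cram}, compare the conditional bootstrap statistic $T_\bW$ to $T_\bZ$ via Theorem~\ref{thm:ccb_max}, translate that into a quantile sandwich in the spirit of Lemma~\ref{lem:compqt}, handle $\{\maxdiff>\delta\}$ by Markov's inequality together with a bound on $\EE{\maxdiff}$ of order $\sqrt{\log d/n}$, and optimize $\delta$ against $\alpha_L$ to obtain the $\alpha_L^{-1/3}$ exponent. All of those ingredients are the ones the paper uses, and your (II) is fine because $T_\bZ$ is independent of the data, so on $\{\maxdiff\le\delta\}$ the quantity $\mathbb{P}(T_\bZ>q_\cqt(\alpha;T_\bW))$ is genuinely sandwiched between $\alpha/(1\pm\pi(\delta))$ by Lemma~\ref{lem:compqt}.

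The genuine gap is in your term~(I). You write
\[
(\mathrm{I})=\frac{\mathbb{P}(T_\bY>q_\cqt(\alpha;T_\bW))}{\mathbb{P}(T_\bZ>q_\cqt(\alpha;T_\bW))}-1
\]
and assert that Corollary~5.1 of \citet{arun2018cram} gives $(\mathrm{I})=O((\log d)^{19/6}/n^{1/6})$. That corollary controls $\bigl|\mathbb{P}(T_\bY>t)/\mathbb{P}(T_\bZ>t)-1\bigr|$ only for \emph{deterministic} $t$. Here $q_\cqt(\alpha;T_\bW)$ is a measurable function of $\bY_1,\dots,\bY_n$, and $T_\bY$ is built from the same sample, so the numerator of (I) is a joint probability over dependent quantities, while the denominator is (by Fubini) $\EE_Y\bigl[\mathbb{P}_Z(T_\bZ>q_\cqt)\bigr]$. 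Conditioning on the value of the random threshold changes the law of $T_\bY$, so the Cram\'er CLT cannot be invoked as stated; to repair (I) you would have to sandwich $q_\cqt(\alpha;T_\bW)$ between deterministic endpoints on $\{\maxdiff\le\delta\}$ and pay $\mathbb{P}(\maxdiff>\delta)$ again, i.e.\ duplicate the exceptional-event machinery you already use in (II). The paper avoids this by choosing $\mathbb{P}(T_\bY>q(\alpha;T_\bZ))$ as the intermediate term: its first summand compares $T_\bY$ to $T_\bZ$ at the \emph{deterministic} level $q(\alpha;T_\bZ)$, where Arun's bound applies directly, and its second summand is $\alpha^{-1}\abr{\mathbb{P}(T_\bY>q_\cqt(\alpha;T_\bW))-\mathbb{P}(T_\bY>q(\alpha;T_\bZ))}$, which involves the same $T_\bY$ at two levels and is handled once by a symmetric-difference plus sandwich argument. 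Swap your intermediate term to that one and the rest of your outline goes through.
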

% \cyancom{Note the rate of $o(1)$ actually has a polynomial decaying in $n$ (up to a polynomial factor of $\log d$). This is crucial in the proof of FDR control.
% where $\alpha_{L}=\Omega(\frac{1}{d})$, remark that here we don't emphasize on the exact form of $\alpha_{L}$ with explicit constants. And we are not claiming this lower bound is tight, which is not our main focus. But $\Omega(\frac{1}{d})$ is enough for establishing our core result about asymptotic FDR control in this paper. }

The proof can be found in Appendix \ref{app:pf:cramer_gmb}. In practice, there are many situations where the relevant statistics come from the maxima of approximated averages. In particular, the test statistics in our node selection problem can not be directly expressed as maxima of scaled averages, but can be approximated by a $T_{\bY}$-like term with the approximation error suitably controlled. Therefore, we also prove an extended version of Proposition \ref{prop:cramer_gmb}. Suppose the statistics of interest and its Gaussian multiplier bootstrap counterpart, denoted by $T$ and $T^{\cB}$ respectively, can be approximated by $T_{\bY}$ (defined in \eqref{eq:maxima_sum}) and $T_{\bW}$ (defined in \eqref{eq:gmb_def}). The quantile functions $q(\alpha;T)$ and $q_{\cqt}(\alpha;T^{\cB})$ are defined correspondingly.
\begin{proposition}[CGMB with approximation]\label{prop:approx_cramer_gmb}
Under the same conditions as in Proposition \ref{prop:cramer_gmb} and the additional assumption about the differences between the maximum statistics:
\begin{eqnarray}\label{eq:approx_cond1}
    \mathbb{P}(|T-T_{\bY}|>\zeta_{1})<\zeta_{2},\\
    \mathbb{P}(\mathbb{P}_{\cqt}(|T^{\cB}-T_{\bW}|>\zeta_{1})>\zeta_{2})<\zeta_{2},
    \label{eq:approx_cond2}
\end{eqnarray}
where $\zeta_{1},\zeta_{2}\ge 0 $ characterize the approximation error and satisfy $\zeta_{1} \log d  = O(1), \zeta_{2} = O(\alpha_L)$, % and satisfy the condition that
% \begin{eqnarray}\label{eq:approx_rate}
%   \zeta_{2} = o\rbr{\alpha_{L}}, ~~~~ \zeta_{1}=o\rbr{\frac{1}{q(\alpha_{L};T_{\bZ}){\log d}}}
% \end{eqnarray}
we have the following Cram\'{e}r-type deviation bound
\begin{equation}
\label{eq:approx_cramer_mb}
\sup_{\alpha \in [\alpha_{L},1]}
\left|
\frac{\mathbb{P}(T > q_{\cqt}(\alpha;T^{\cB}) )}{\mathbb{P}(T_{\bZ} > q(\alpha;T_{\bZ}) )} - 1 
\right| 
=  \eta(d,n,\zeta_1,\zeta_2, \alpha_L),
% O\left(\eta_2(d,n,\alpha_L)
% \right) 
% o(1), \bluecom{~?~ \max\left\{
% O\left(\frac{(\log d)^{19/6}}{n^{1/6}}\right)
% ,
% O\left(\frac{s(\log d)^{3/2}}{n^{1/2}}\right)
% \right\}}
\end{equation}
where $\eta(d,n,\zeta_1,\zeta_2, \alpha_L)= O\rbr{\frac{(\log d)^{19/6}}{n^{1/6}} + \frac{ (\log d)^{11/6}}{n^{1/6}\alpha_L^{1/3}} + \zeta_1 \log d + \frac{\zeta_2}{\alpha_L} }$. 
 %O\rbr{\frac{(\log d)^{19/6}}{n^{1/6}} + \zeta_1 \log d + \frac{\zeta_2}{\alpha_L}}$.
% +\zeta_1 q(\alpha_{L}/2;T_{\bZ})
% +\frac{\zeta_2}{\alpha_L}$.
\end{proposition}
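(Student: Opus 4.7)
The plan is to reduce Proposition~\ref{prop:approx_cramer_gmb} to Proposition~\ref{prop:cramer_gmb} by sandwiching the unknown statistics $T,T^{\cB}$ between the tractable ones $T_{\bY},T_{\bW}$, then absorbing the two resulting perturbations (a $\zeta_1$ shift in the threshold and a $\zeta_2$ shift in the probability level) using Cram\'er-type anti-concentration for Gaussian maxima. Throughout, I would work only on the events $A_1=\{|T-T_{\bY}|\le\zeta_1\}$ and $A_2=\{\Pp{\cqt}{|T^{\cB}-T_{\bW}|>\zeta_1}\le\zeta_2\}$, each of probability at least $1-\zeta_2$ by \eqref{eq:approx_cond1}--\eqref{eq:approx_cond2}, and pay an additive $O(\zeta_2)$ for the complementary events.

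First, I would translate the approximation conditions into sandwiches. On $A_1$, for any threshold $q$,
\[
\PP{T_{\bY}>q+\zeta_1}-\zeta_2\ \le\ \PP{T>q}\ \le\ \PP{T_{\bY}>q-\zeta_1}+\zeta_2.
\]
On $A_2$, since $\Pp{\cqt}{T^{\cB}>t}$ and $\Pp{\cqt}{T_{\bW}>t}$ differ by at most $\zeta_2$ whenever $t$ is shifted by $\zeta_1$, a direct inversion of the quantile function gives
\[
q_{\cqt}(\alpha+\zeta_2;T_{\bW})-\zeta_1\ \le\ q_{\cqt}(\alpha;T^{\cB})\ \le\ q_{\cqt}(\alpha-\zeta_2;T_{\bW})+\zeta_1.
\]
Chaining these yields, up to $O(\zeta_2)$ in probability, both the upper bound $\PP{T>q_{\cqt}(\alpha;T^{\cB})}\le \PP{T_{\bY}>q_{\cqt}(\alpha+\zeta_2;T_{\bW})-2\zeta_1}+2\zeta_2$ and a matching lower bound with $\alpha-\zeta_2$ and $+2\zeta_1$.

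Next, I would apply Proposition~\ref{prop:cramer_gmb} at levels $\alpha\pm\zeta_2\in[\alpha_L,1]$ (which is legitimate because $\zeta_2=O(\alpha_L)$) to replace the bootstrap-quantile tail probability with the Gaussian one, picking up the rate $R_n:=\frac{(\log d)^{11/6}}{n^{1/6}\alpha_L^{1/3}}+\frac{(\log d)^{19/6}}{n^{1/6}}$. At this point the threshold is $q(\alpha\pm\zeta_2;T_{\bZ})\mp2\zeta_1$ rather than $q(\alpha;T_{\bZ})$, so I would invoke the non-uniform anti-concentration bound in Theorem~2.1 of \citet{arun2018cram} (the same tool used in the proof of Theorem~\ref{thm:ccb_max_general}): since $q(\alpha;T_{\bZ})=O(\sqrt{\log d})$ under the condition on $\alpha_L$, this yields
\[
\bigl|\PP{T_{\bZ}>q(\alpha;T_{\bZ})\pm2\zeta_1}-\alpha\bigr|\ \le\ C\,\zeta_1\sqrt{\log d}\,(\sqrt{\log d}+1)\,\alpha\ =\ O(\zeta_1\log d)\cdot\alpha,
\]
which accounts for the $\zeta_1\log d$ term in $\eta$. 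The shift from $\alpha$ to $\alpha\pm\zeta_2$ contributes a relative error of order $\zeta_2/\alpha\le\zeta_2/\alpha_L$, and the complementary-event probability $2\zeta_2$ likewise contributes $\zeta_2/\alpha_L$ after normalization by $\PP{T_{\bZ}>q(\alpha;T_{\bZ})}=\alpha\ge\alpha_L$. Summing $R_n+\zeta_1\log d+\zeta_2/\alpha_L$ delivers $\eta(d,n,\zeta_1,\zeta_2,\alpha_L)$.

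The main obstacle will be the bookkeeping around the random bootstrap quantile $q_{\cqt}(\alpha;T^{\cB})$: the $\zeta_1$ shift cannot be absorbed by anti-concentration \emph{before} passing to the deterministic Gaussian threshold $q(\alpha;T_{\bZ})$, so the order of operations matters, and I must first invoke Proposition~\ref{prop:cramer_gmb} to trade $q_{\cqt}(\alpha\pm\zeta_2;T_{\bW})$ for $q(\alpha\pm\zeta_2;T_{\bZ})$ and only afterward apply anti-concentration. A secondary subtlety is ensuring that the multiplicative error from Proposition~\ref{prop:cramer_gmb} remains $o(1)$ when we perturb the level by $\zeta_2$; this is guaranteed by $\zeta_2=O(\alpha_L)$, which keeps $(\alpha\pm\zeta_2)/\alpha$ bounded and so does not amplify the rate $R_n$.
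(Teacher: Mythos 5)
Your overall architecture is right — sandwich $T$ by $T_{\bY}$, sandwich $q_{\cqt}(\alpha;T^{\cB})$ by $q_{\cqt}(\alpha\pm\zeta_2;T_{\bW})\mp\zeta_1$, pay $O(\zeta_2/\alpha_L)$ for the exceptional events, and absorb the $\zeta_1$ threshold shift via non-uniform anti-concentration, which gives the $\zeta_1\log d$ term. However, there is a genuine gap at the step where you ``apply Proposition~\ref{prop:cramer_gmb} ... to trade $q_{\cqt}(\alpha\pm\zeta_2;T_{\bW})$ for $q(\alpha\pm\zeta_2;T_{\bZ})$.'' Proposition~\ref{prop:cramer_gmb} is a \emph{tail-probability} comparison: it tells you that $\mathbb{P}(T_{\bY}>q_{\cqt}(\alpha';T_{\bW}))=(1+O(R_n))\,\alpha'$. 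It does \emph{not} let you replace the random threshold $q_{\cqt}(\alpha';T_{\bW})$ by the deterministic $q(\alpha';T_{\bZ})$ inside the probability. That matters precisely because of the $\mp2\zeta_1$ shift you must handle: you are trying to bound
\[
\mathbb{P}\bigl(q_{\cqt}(\alpha';T_{\bW})-2\zeta_1 < T_{\bY}\le q_{\cqt}(\alpha';T_{\bW})\bigr),
\]
and you cannot invoke anti-concentration for $T_{\bY}$ (or for its Gaussian analogue) over an interval whose location is random without first pinning that location down.

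The missing ingredient is the quantile-comparison lemma (Lemma~\ref{lem:compqt}): on the event $\{\maxdiff\le\delta\}$ one has $q\bigl(\tfrac{\alpha'}{1-\pi(\delta)};T_{\bZ}\bigr)\le q_{\cqt}(\alpha';T_{\bW})\le q\bigl(\tfrac{\alpha'}{1+\pi(\delta)};T_{\bZ}\bigr)$, at the cost of an additional $\mathbb{P}(\maxdiff>\delta)/\alpha$ and a multiplicative $\pi(\delta)$ correction, both of which are folded into $R_n$ by the choice of $\delta$. Once the random bootstrap quantile is bracketed by deterministic Gaussian quantiles, your planned use of the arun2018cram anti-concentration bound to absorb $\pm2\zeta_1$ goes through, with one more application of Corollary 5.1 of \citet{arun2018cram} to swap $T_{\bY}$ for $T_{\bZ}$ at those shifted thresholds. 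This is exactly the paper's route: Proposition~\ref{prop:cramer_gmb} is not invoked as a black box in the proof of Proposition~\ref{prop:approx_cramer_gmb}; rather, the same two primitives it rests on — Lemma~\ref{lem:compqt} for quantiles and Corollary 5.1/Theorem 2.1 of \citet{arun2018cram} for Gaussian approximation and anti-concentration — are re-deployed with the extra $\zeta_1,\zeta_2$ bookkeeping you describe. Swap your black-box use of Proposition~\ref{prop:cramer_gmb} for Lemma~\ref{lem:compqt} at that juncture and your proof closes.
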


\subsection{Proof of Proposition \ref{prop:cramer_gmb}}
\label{app:pf:cramer_gmb}
Before proving Proposition \ref{prop:cramer_gmb}, we present Lemma \ref{lem:compqt}. It bounds the conditional quantile $q_{\cqt}(\alpha;T_{\bW}) $ in terms of the quantile $q(\alpha;T_{\bZ})$ of the Gaussian maxima $T_{\bZ}$ when the maximal covariance matrix differences are controlled. In the proof of Lemma \ref{lem:compqt}, we apply the 
Cram\'{e}r-type comparison bound \eqref{eq:ccb_max}, which is establised in Theorem \ref{thm:ccb_max}. To simplify the notation, we denote the bound $C_1(\log d)^{5/2}\maxdiff^{1/2}$ in \eqref{eq:ccb_max} by $\pi (\maxdiff)$, where the constant $C_1$ only depends on $\min_{1\le j\le d}\{\sigma^U_{jj},\sigma^V_{jj}\}$, $\max_{1\le j\le d}\{\sigma^U_{jj},\sigma^V_{jj}\}$.
\begin{lemma}\label{lem:compqt}
Suppose $\delta$ satisfies $(\log d)^{5}\delta = O(1)$. On the event $ \{\maxdiff \le \delta\}$, we have 
\begin{eqnarray}\label{eq:compqt1}
{q_{\cqt}(\alpha;T_{\bW}) \ge q\Big(\frac{\alpha}{1-\pi(\delta)} ;T_{\bZ} \Big) },\\  %\ge 1- \PP{\maxdiff > \delta },\\
{q_{\cqt}(\alpha;T_{\bW}) \le q\Big(\frac{\alpha}{1+\pi(\delta)} ;T_{\bZ} \Big) }. %  \ge 1- \PP{\maxdiff > \delta }.
\label{eq:compqt2}
\end{eqnarray}
\end{lemma}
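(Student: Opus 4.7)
\medskip
\noindent\textbf{Proof proposal for Lemma \ref{lem:compqt}.} The plan is to treat $T_{\bW}$, on the event $\{\maxdiff\le\delta\}$ and conditionally on the data $\{\bY_i\}_{i=1}^{n}$, as the $\ell_\infty$-norm of a centered Gaussian random vector whose covariance matrix $\bSigma^W=\frac{1}{n}\sum_i\bY_i\bY_i^\top$ differs from $\bSigma^Z$ by at most $\delta$ entrywise. I would then apply the Cram\'er-type Gaussian comparison bound from Theorem \ref{thm:ccb_max} to the pair of Gaussian maxima $T_{\bW}$ (conditional on the data) and $T_{\bZ}$: under $(\log d)^5\delta=O(1)$, for every $t\in[0,C_0\sqrt{\log d}]$,
\[
(1-\pi(\delta))\,\PPP(T_{\bZ}>t)\;\le\;\PPP_{\cqt}(T_{\bW}>t)\;\le\;(1+\pi(\delta))\,\PPP(T_{\bZ}>t),
\]
where $\pi(\delta)=C_1(\log d)^{5/2}\delta^{1/2}$ absorbs the constants from Remark \ref{rk:thm:ccb_max}. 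The variance hypothesis of Theorem \ref{thm:ccb_max} is inherited, because the bound $\pi(\delta)$ depends only on $\min_j\{\sigma^W_{jj},\sigma^Z_{jj}\}$ and $\max_j\{\sigma^W_{jj},\sigma^Z_{jj}\}$, which are pinned uniformly between $c_1/2$ and $2c_2$ on the event $\{\maxdiff\le\delta\}$ once $\delta$ is small enough.

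Next I would plug $t^\star:=q_{\cqt}(\alpha;T_{\bW})$ into this two-sided sandwich. Since the conditional distribution of $T_{\bW}$ (a Gaussian maximum) is continuous, $\PPP_{\cqt}(T_{\bW}>t^\star)=\alpha$. The upper half of the sandwich gives $\alpha\le(1+\pi(\delta))\PPP(T_{\bZ}>t^\star)$, i.e.\ $\PPP(T_{\bZ}>t^\star)\ge\alpha/(1+\pi(\delta))$, which by monotonicity of the quantile function (and continuity of the distribution of $T_{\bZ}$) is equivalent to $t^\star\le q(\alpha/(1+\pi(\delta));T_{\bZ})$, the second claim \eqref{eq:compqt2}. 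Symmetrically, the lower half yields $\PPP(T_{\bZ}>t^\star)\le\alpha/(1-\pi(\delta))$, hence $t^\star\ge q(\alpha/(1-\pi(\delta));T_{\bZ})$, which is \eqref{eq:compqt1}.

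The one delicate step will be justifying that $t^\star$ actually lies in the range $[0,C_0\sqrt{\log d}]$ on which Theorem \ref{thm:ccb_max} is valid; otherwise evaluating the sandwich at $t^\star$ is not permitted. I would handle this by observing that, conditional on the data, $T_{\bW}$ is a centered Gaussian maximum in dimension $d$ whose variances are bounded above by $2c_2$ on $\{\maxdiff\le\delta\}$. Standard maximal inequalities (e.g.\ Lemma~5.2 of \citet{van2014probability}) then give $q_{\cqt}(\alpha;T_{\bW})\lesssim\sqrt{\log d}+\sqrt{\log(1/\alpha)}$, and the range of $\alpha$ relevant for the application (namely $\alpha\in[\alpha_L,1]$ with $q(\alpha_L;T_{\bZ})=O(\sqrt{\log d})$) ensures $t^\star\le C_0\sqrt{\log d}$ for a suitable $C_0$. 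With this range check in place, the inversion in the previous paragraph is legitimate and the two inequalities of the lemma follow directly from Theorem \ref{thm:ccb_max}. No routine algebra beyond inverting monotone functions is required.
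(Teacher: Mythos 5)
Your proposal is correct and follows essentially the same route as the paper: apply Theorem~\ref{thm:ccb_max} on the event $\{\maxdiff\le\delta\}$ to get the two-sided Cram\'er-type sandwich between $\PPP_\cqt(T_{\bW}>t)$ and $\PPP(T_{\bZ}>t)$, then invert to compare quantiles. The only cosmetic difference is that the paper evaluates the sandwich at $t=q(\alpha/(1\mp\pi(\delta));T_{\bZ})$ to read off the bounds on $q_\cqt(\alpha;T_{\bW})$, whereas you evaluate it at $t^\star=q_\cqt(\alpha;T_{\bW})$ and invert $\PPP(T_{\bZ}>\cdot)$; both are equivalent, and your explicit attention to the range requirement $t\in[0,C_0\sqrt{\log d}]$ is a point the paper leaves implicit.
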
  
\begin{proof}[Proof of Lemma \ref{lem:compqt}] \label{pf:lem:compqt} 
On the event $ \{\maxdiff \le \delta\}$, we have $(\log d)^{5}\maxdiff \le (\log d)^{5}\delta= O(1)$, then by applying Theorem \ref{thm:ccb_max} to $\bZ$ and $\bW$, we obtain the following,
\begin{equation}\nonumber
    \sup_{0\le t \le C_0\sqrt{\log d}}\left|\frac{\mathbb{P}_{\cqt}(T_{\bW} > t)}{\mathbb{P}(T_{\bZ} > t)}-1\right|\le \pi(\delta).
    %M_1(\log d)^{3/2} A(\maxdiff)e^{M_1(\log d)^{3/2} A(\maxdiff)}
    %[A(\maxdiff)+1]e^{M_1(\log d)^{3/2} A(\maxdiff)}-1 %:= \pi(\maxdiff)
\end{equation}
%where $\pi (\delta)$ denotes $C_1(\log d)^{5/2}\delta^{1/2}$ with the constant $C_1$ only depends on $\min_{1\le j\le d}\{\sigma^U_{jj},\sigma^V_{jj}\}$, $\max_{1\le j\le d}\{\sigma^U_{jj},\sigma^V_{jj}\}$. 
Therefore we have
$$
\Pp{\cqt}{T_{\bW} \ge q\Big(\frac{\alpha}{1-\pi(\delta)} ;T_{\bZ} \Big)} \ge \PP{T_{\bZ} \ge q\Big(\frac{\alpha}{1-\pi(\delta)} ;T_{\bZ} \Big) }\cdot (1 - \pi(\delta)) = \alpha,
$$
when $t$ satisfies $0\le t \le C_0\sqrt{\log d}$. Then $q_{\cqt}(\alpha;T_{\bW}) \ge q\Big(\frac{\alpha}{1-\pi(\delta)} ;T_{\bZ} \Big)$ immediately follows, i.e., \eqref{eq:compqt1} holds. Similarly, on the event $\{\maxdiff \le \delta\}$, we have
$$
\Pp{\cqt}{T_{\bW} \ge q\Big(\frac{\alpha}{1+\pi(\delta)} ;T_{\bZ} \Big)} \le \PP{T_{\bZ} \ge q\Big(\frac{\alpha}{1+\pi(\delta)} ;T_{\bZ} \Big) }\cdot (1 + \pi(\delta)) = \alpha.
$$
Thus $q_{\cqt}(\alpha;T_{\bW}) \le q\Big(\frac{\alpha}{1+\pi(\delta)} ;T_{\bZ} \Big)$, i.e., \eqref{eq:compqt2} holds.
\end{proof}

\begin{proof}[Proof of Proposition \ref{prop:cramer_gmb}]
By the triangle inequality, we have
\begin{equation} \label{eq:I_II_def}
 \left|
\frac{\mathbb{P}(T_{\bY} > q_{\cqt}(\alpha;T_{\bW}) )}{\mathbb{P}(T_{\bZ} > q(\alpha;T_{\bZ}) )} - 1 
\right| \le 
    \underbrace{\left|
    \frac{\mathbb{P}(T_{\bY} > q(\alpha;T_{\bZ}) )}{\mathbb{P}(T_{\bZ} > q(\alpha;T_{\bZ}) )} - 1 
    \right|}_{\mathrm{I}} +  
    \underbrace{\frac{\left|
    {\mathbb{P}(T_{\bY} > q_{\cqt}(\alpha;T_{\bW}) )} - \mathbb{P}(T_{\bY} > q(\alpha;T_{\bZ}) )
    \right|}{\mathbb{P}(T_{\bZ} > q(\alpha;T_{\bZ}) )}}_{\mathrm{II}}.
 \end{equation}
Regarding the first term $\mathrm{I}$, we will directly apply Corollary 5.1 in \cite{arun2018cram}. Specifically, we verify %the scaling condition on $(n,d)$, 
the tail assumption on $\bY$ and the condition on the quantile that ${q(\alpha;T_{\bZ})} \le q(\alpha_{L};T_{\bZ})=O\left(\sqrt{\log d}\right)$ when $\alpha \in [\alpha_L, 1]$. Then we obtain the following bound 
\begin{equation}\label{eq:I_rate} 
  \mathrm{I} = \left|\frac
    {\mathbb{P}(T_{\bY} > q(\alpha;T_{\bZ}) )}{\PP{T_{\bZ} >  q(\alpha;T_{\bZ}) } } - 1 
    \right| = O\rbr{\frac{(\log d)^{19/6}}{n^{1/6}} }.
\end{equation}
Regarding the second term $\mathrm{II}$, we write it as
\begin{eqnarray}\nonumber
  \mathrm{II}
   & = & \frac{1}{\alpha}\left|
    {\mathbb{P}(T_{\bY} > q_{\cqt}(\alpha;T_{\bW}) )} - \mathbb{P}(T_{\bY} > q(\alpha;T_{\bZ}) )
    \right|  \\  \nonumber
   &\le  & \frac{1}{\alpha}
    \mathbb{P}(  \{T_{\bY} > q_{\cqt}(\alpha;T_{\bW})\} 
    \ominus
   \{T_{\bY} > q(\alpha;T_{\bZ})\}) \\ \nonumber
    &=  & \frac{1}{\alpha} \Big(
    \mathbb{P}(  T_{\bY} > q_{\cqt}(\alpha;T_{\bW}),
   T_{\bY} \le  q(\alpha;T_{\bZ})) + \mathbb{P}(  T_{\bY} \le q_{\cqt}(\alpha;T_{\bW}),
   T_{\bY} >  q(\alpha;T_{\bZ}))
    \Big)  \\  \nonumber
  &\le  & \frac{1}{\alpha} 
    \mathbb{P}(  T_{\bY} > q_{\cqt}(\alpha;T_{\bW}),
   T_{\bY} \le  q(\alpha;T_{\bZ}), \maxdiff \le \delta) 
     \\ \nonumber
   &~~& + ~  \frac{1}{\alpha}\mathbb{P}(  T_{\bY} \le q_{\cqt}(\alpha;T_{\bW}),
   T_{\bY} >  q(\alpha;T_{\bZ}),  \maxdiff \le \delta)+   \frac{2 \PP{\maxdiff > \delta } }{\alpha },
\end{eqnarray}
where the first inequality holds by the definition of the symmetric difference; recall the symmetric difference between $A$ and $B$ is defined as $A \ominus B = (A\setminus B) \cup (B\setminus A)$). Remark that we will give the explicit choice of $\delta$ later in the proof. Now we apply Lemma \ref{lem:compqt} (whose condition will be verified in \eqref{eq:check_delta}) and further bound $\mathrm{II}$ as,
\begin{eqnarray}\nonumber
   \mathrm{II} &\le  & \frac{1}{\alpha} \Big(
    \mathbb{P}\big(  T_{\bY} \ge  q\big(\frac{\alpha}{1-\pi(\delta)} ;T_{\bZ} \big),
   T_{\bY} \le  q(\alpha;T_{\bZ})\big) \\ \nonumber
  &~&\quad +~ \mathbb{P}(  T_{\bY} \le q\big(\frac{\alpha}{1+\pi(\delta)} ;T_{\bZ} \big),
   T_{\bY} >  q(\alpha;T_{\bZ}))\Big) +   \frac{2 \PP{\maxdiff > \delta } }{\alpha } \\            
    \label{eq:cond_key_step}
    &= & \frac{1}{\alpha}~{
    \PP{ q\big(\frac{\alpha}{1-\pi(\delta)} ;T_{\bZ} \big) \le T_{\bY}  \le q\big(\frac{\alpha}{1+\pi(\delta)} ;T_{\bZ} \big)}
    } + \frac{2 \PP{\maxdiff > \delta } }{\alpha }
    \\ \nonumber
        &\le & \frac{1}{\alpha}~{
    \PP{ q\big(\frac{\alpha}{1-\pi(\delta)} ;T_{\bZ} \big) \le T_{\bZ}  \le q\big(\frac{\alpha}{1+\pi(\delta)} ;T_{\bZ} \big)}
    } + \frac{2 \PP{\maxdiff > \delta } }{\alpha } + \mathrm{III}
    \\ \label{eq:II_rate}
    & = & \frac{2\pi(\delta) }{1- \pi^2(\delta)}+ \frac{2 \PP{\maxdiff > \delta } }{\alpha } + \mathrm{III},
%    \\
%   & \le & \left|
%    \frac{\PP{ T_{\bY} > q\big(\frac{\alpha}{1+\pi(\delta)} ;T_{\bZ} \big) } }{\mathbb{P}(T_{\bY} > q(\alpha;T_{\bZ}) )} - 1 
%    \right| + \left|
%    \frac{\PP{T_{\bY} > q\big(\frac{\alpha}{1-\pi(\delta)} ;T_{\bZ} \big) }}{\mathbb{P}(T_{\bY} > q(\alpha;T_{\bZ}) )} - 1 
%    \right|  + \frac{2 \PP{\maxdiff > \delta } }{\alpha } \\ \label{eq:I_123_def}
%    &:=& \mathrm{I}_1 + \mathrm{I}_2 + \mathrm{I}_3.
\end{eqnarray} 
where the term $\mathrm{III}$ in the second inequality is defined as,
\begin{eqnarray*}
\mathrm{III}:=\frac{1}{\alpha}\left|
    \PP{ q\big(\frac{\alpha}{1-\pi(\delta)} ;T_{\bZ} \big) \le T_{\bY}  \le q\big(\frac{\alpha}{1+\pi(\delta)} ;T_{\bZ} \big)} - 
    \PP{ q\big(\frac{\alpha}{1-\pi(\delta)} ;T_{\bZ} \big) \le T_{\bZ}  \le q\big(\frac{\alpha}{1+\pi(\delta)} ;T_{\bZ} \big)}
\right|.  
\end{eqnarray*}
%and derive the following bound on $\mathrm{II}$,
Below we further rewrite $\mathrm{III}$ as
\begin{eqnarray}\nonumber
\mathrm{III} =  \frac{1}{\alpha}\abr{ \frac{\alpha}{1-\pi(\delta)} \cdot \mathrm{III}_1 
 -  \frac{\alpha}{1+\pi(\delta)} \cdot \mathrm{III}_2 
},  
\end{eqnarray}
with $\mathrm{III}_1, \mathrm{III}_2$ defined as
\begin{eqnarray*}
\mathrm{III}_1 &=&  \frac{\PP{ T_{\bY}  > q\big(\frac{\alpha}{1-\pi(\delta)} ;T_{\bZ} \big)} - \PP{ T_{\bZ}  > q\big(\frac{\alpha}{1-\pi(\delta)} ;T_{\bZ} \big)}}{\PP{ T_{\bZ}  > q\big(\frac{\alpha}{1-\pi(\delta)} ;T_{\bZ} \big)} },\\
\mathrm{III}_2 &=&  \frac{\PP{ T_{\bY}  > q\big(\frac{\alpha}{1+\pi(\delta)} ;T_{\bZ} \big)} - \PP{ T_{\bZ}  > q\big(\frac{\alpha}{1+\pi(\delta)} ;T_{\bZ} \big)}}{\PP{ T_{\bZ}  > q\big(\frac{\alpha}{1+\pi(\delta)} ;T_{\bZ} \big)} }.
\end{eqnarray*}
Thus by applying Corollary 5.1 of \cite{arun2018cram} to $\mathrm{III}_1, \mathrm{III}_2$ similarly as in \eqref{eq:I_rate}, we have the following bound on $\mathrm{III}$,
\begin{eqnarray}\label{eq:III_rate}
\mathrm{III} =  O\rbr{\frac{(\log d)^{19/6}}{n^{1/6}} }.
\end{eqnarray}
%Combining \eqref{eq:I_II_def}, \eqref{eq:I_rate}, \eqref{eq:II_rate} and \eqref{eq:III_rate} yields the following
Combining \eqref{eq:II_rate} and \eqref{eq:III_rate} yields the following bound,
\begin{eqnarray}\nonumber
%\left|
%\frac{\mathbb{P}(T_{\bY} > q_{\cqt}(\alpha;T_{\bW}) )}{\mathbb{P}(T_{\bZ} > q(\alpha;T_{\bZ}) )} - 1  
%\right| 
 \mathrm{II}
 &\le  & \frac{1}{\alpha}
    \mathbb{P}(  \{T_{\bY} > q_{\cqt}(\alpha;T_{\bW})\} 
    \ominus
   \{T_{\bY} > q(\alpha;T_{\bZ})\}) \\
\nonumber&\le &   {\frac{C(\log d)^{19/6}}{n^{1/6}} }  + C_0' \pi(\delta) +  {\frac{C''~\PP{\maxdiff > \delta }}{\alpha} }  \\ \nonumber
&\le &  {\frac{C(\log d)^{19/6}}{n^{1/6}} }  + C'(\log d)^{5/2} \delta^{1/2} +  {\frac{C''~\EE{\maxdiff}}{\delta \alpha} }  \\
& =&
O\rbr{  \rbr{\frac{\EE{\maxdiff} \log^5 d}{\alpha}}^{1/3}
+\frac{   (\log d)^{19/6}}{n^{1/6}} },
\label{eq:prop1_rate}
\end{eqnarray}
where the second inequality holds due to the definition of $\pi(\delta)$ and Markov's inequality, the last line holds by choosing $\delta$ to be $ (\EE{\maxdiff})^{2/3}/(\alpha^{1/3}(\log d)^{5/3}) $. We will bound the term $\EE{\maxdiff}$ using Lemma C.1 in \cite{chernozhukov2013gaussian}. Specifically, under the stated tail assumption on $\bY$, the condition (E.1) of Lemma C.1 in \cite{chernozhukov2013gaussian} is satisfied; see Comment 2.2 in \cite{chernozhukov2013gaussian}. Thus we have 
\begin{equation}\label{eq:EE_maxdiff_rate} 
\EE{\maxdiff} \le \sqrt{\frac{B_{n}^{2}\log d}{n}} \vee \frac{B_{n}^{2}(\log (d n))^{2}(\log d)}{n},
\end{equation}
where $B_n$ equals some constant $C$ which does not depend on $n$. As promised previously, we 
verify the assumption of Lemma \ref{lem:compqt} for our choice of $\delta$. Specifically, for $\delta= (\EE{\maxdiff})^{2/3}/(\alpha^{1/3}(\log d)^{5/3}) $, we have $(\log d)^5  \delta $ satisfies the following 
\begin{equation}\label{eq:check_delta}
(\log d)^5  \delta  \le \frac{(\log d)^{5}(\EE{\maxdiff})^{2/3}}{\alpha_L^{1/3}(\log d)^{5/3}} = \rbr{\frac{\log^{11} d}{n\alpha_L}}^{1/3} = O(1),
\end{equation}
under the stated condition on $\alpha_L$. 
%Therefore, the assumption in Lemma \ref{lem:compqt} is verified.
Finally, when $\alpha \in [\alpha_{L},1]$, we combine \eqref{eq:I_II_def}, \eqref{eq:I_rate}, \eqref{eq:prop1_rate} with \eqref{eq:EE_maxdiff_rate}, then establish \eqref{eq:cramer_mb}, i.e.,
\begin{equation}\nonumber
  \sup_{\alpha \in [\alpha_{L},1]}
\left|
\frac{\mathbb{P}(T_{\bY} > q_{\cqt}(\alpha;T_{\bW}) )}{\mathbb{P}(T_{\bZ} > q(\alpha;T_{\bZ}) )} - 1 
\right|  = O\rbr{  \frac{ (\log d)^{11/6}}{n^{1/6}\alpha_L^{1/3}}
+\frac{   (\log d)^{19/6}}{n^{1/6}} }.
\end{equation}
\end{proof}

\subsection{Proof of Proposition \ref{prop:approx_cramer_gmb}}
\label{app:pf:approx_cramer_gmb}
Before proving Proposition \ref{prop:approx_cramer_gmb}, we need to present a simple lemma. It translates the approximation error $\zeta_1, \zeta_2$ into the bounds on the quantiles. And its proof is quite straightforward thus omitted. 
%bounds the conditional quantile $q_{\cqt}(\alpha;T_{\bW}) $ in terms of the quantile $q(\alpha;T_{\bZ})$ of the Gaussian maxima $T_{\bZ}$ when the maximal covariance matrix differences are controlled.

%Firstly we present a lemma on the comparison of quantiles.
\begin{lemma}\label{lem:approx_quantile}
Under the assumption in \eqref{eq:approx_cond2}, we have, for $\alpha\in (0,1)$,
\begin{eqnarray} \nonumber
    \mathbb{P}(q_{\cqt}(\alpha;T^{\cB})\le q_{\cqt}(\alpha+\zeta_{2};T_{\bW})+\zeta_{1})\ge 1-\zeta_{2},\\ \nonumber
    \mathbb{P}(q_{\cqt}(\alpha;T^{\cB})\ge q_{\cqt}(\alpha-\zeta_{2};T_{\bW})-\zeta_{1})\ge 1-\zeta_{2}.
\end{eqnarray}
\end{lemma}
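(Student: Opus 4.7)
The plan is to exploit the assumption \eqref{eq:approx_cond2}, which says that with high probability (at least $1-\zeta_2$) the conditional law $\mathbb{P}_{\cqt}$ of $T^{\cB}$ is close to the conditional law of $T_{\bW}$ in the sense that $|T^{\cB}-T_{\bW}| \le \zeta_1$ except on a $\cqt$-set of mass at most $\zeta_2$. This almost immediately translates into a closeness of the corresponding conditional upper quantiles.

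First, I would condition on the ``good'' event
\[
E \;=\; \bigl\{\mathbb{P}_{\cqt}(|T^{\cB}-T_{\bW}|>\zeta_{1}) \le \zeta_{2}\bigr\},
\]
which by \eqref{eq:approx_cond2} satisfies $\mathbb{P}(E) \ge 1-\zeta_{2}$. It therefore suffices to prove that each of the two quantile inequalities holds deterministically on $E$.

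Next, on $E$, I would use the elementary set inclusions
\[
\{T^{\cB} > t\} \;\subseteq\; \{T_{\bW} > t-\zeta_{1}\} \cup \{|T^{\cB}-T_{\bW}|>\zeta_{1}\}
\]
and its symmetric counterpart with $T^{\cB}$ and $T_{\bW}$ interchanged. Taking $\mathbb{P}_{\cqt}$ of both sides and applying the defining bound of $E$ yields the pointwise conditional tail comparisons
\[
\mathbb{P}_{\cqt}(T^{\cB}>t) \le \mathbb{P}_{\cqt}(T_{\bW} > t-\zeta_{1}) + \zeta_{2},\qquad \mathbb{P}_{\cqt}(T_{\bW} > t+\zeta_{1}) \le \mathbb{P}_{\cqt}(T^{\cB} > t) + \zeta_{2}
\]
valid for every $t \in \mathbb{R}$ and every realization of the data in $E$.

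Finally, I would turn these tail inequalities into quantile inequalities via the definition $q_{\cqt}(\beta;\cdot) = \inf\{t : \mathbb{P}_{\cqt}(\,\cdot\,\le t) \ge 1-\beta\}$. Plugging $t = q_{\cqt}(\alpha \mp \zeta_{2};T_{\bW}) + \zeta_{1}$ into the first pointwise inequality and using $\mathbb{P}_{\cqt}(T_{\bW} > q_{\cqt}(\beta;T_{\bW})) \le \beta$ produces the desired upper bound on $q_{\cqt}(\alpha;T^{\cB})$; the second pointwise inequality, applied at $t = q_{\cqt}(\alpha;T^{\cB})$, produces the lower bound. The argument is essentially bookkeeping, and the only things that need care are the monotonicity direction of the upper quantile $\beta \mapsto q_{\cqt}(\beta;\cdot)$ (decreasing) and the non-strict versus strict inequalities implicit in taking infima — neither constitutes a genuine obstacle.
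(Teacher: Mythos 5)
Your overall strategy is the right one, and the two set inclusions together with the definition of the good event $E$ are exactly what the argument needs. But you should not have dismissed the sign-tracking as harmless bookkeeping: carried out carefully, it proves a \emph{different} pair of inequalities than the ones stated in the lemma, and it reveals that the stated lemma, read literally, is actually false.

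Plug $t = q_{\cqt}(\gamma; T_{\bW}) + \zeta_1$ into your first pointwise bound. On $E$,
\[
\mathbb{P}_{\cqt}\bigl(T^{\cB} > q_{\cqt}(\gamma; T_{\bW}) + \zeta_1\bigr)
\;\le\; \mathbb{P}_{\cqt}\bigl(T_{\bW} > q_{\cqt}(\gamma; T_{\bW})\bigr) + \zeta_2
\;\le\; \gamma + \zeta_2,
\]
so to conclude $q_{\cqt}(\alpha; T^{\cB}) \le q_{\cqt}(\gamma; T_{\bW}) + \zeta_1$ you are forced to take $\gamma = \alpha - \zeta_2$, not $\alpha + \zeta_2$. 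The second pointwise bound, evaluated at $t = q_{\cqt}(\alpha; T^{\cB})$, similarly gives $q_{\cqt}(\alpha+\zeta_2; T_{\bW}) \le q_{\cqt}(\alpha; T^{\cB}) + \zeta_1$, hence the lower bound involves $\alpha + \zeta_2$, not $\alpha - \zeta_2$. So the correct conclusion is
\[
q_{\cqt}(\alpha;T^{\cB}) \le q_{\cqt}(\alpha-\zeta_{2};T_{\bW})+\zeta_{1}
\quad\text{and}\quad
q_{\cqt}(\alpha;T^{\cB}) \ge q_{\cqt}(\alpha+\zeta_{2};T_{\bW})-\zeta_{1},
\]
each with probability at least $1-\zeta_2$. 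The version displayed in the lemma, with the $\pm\zeta_2$ swapped, cannot hold: taking $T^{\cB} = T_{\bW} + \zeta_1$ (so $E$ holds with probability one) gives $q_{\cqt}(\alpha;T^{\cB}) = q_{\cqt}(\alpha;T_{\bW})+\zeta_1$, which exceeds $q_{\cqt}(\alpha+\zeta_{2};T_{\bW})+\zeta_{1}$ whenever the quantile function is strictly decreasing near $\alpha$. This is almost certainly a typographical error in the paper (it is immaterial to the rate in Proposition~\ref{prop:approx_cramer_gmb} since $\zeta_2/\alpha_L = O(1)$ absorbs constant factors), but because you left the argument of the quantile as the ambiguous $\alpha \mp \zeta_2$, your write-up neither proves the stated lemma nor flags the discrepancy. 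The ``genuine obstacle'' you waved away is precisely the place where a careful reader would notice the statement needs to be corrected.
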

\begin{proof}[Proof of Proposition \ref{prop:approx_cramer_gmb}]
By the triangle inequality, we have
\begin{equation}\label{eq:prop2_I_II_def}
  \left|
\frac{\mathbb{P}(T> q_{\cqt}(\alpha;T^{\cB}) )}{\mathbb{P}(T_{\bZ} > q(\alpha;T_{\bZ}) )} - 1 
 \right| \le \underbrace{ \left|
\frac{\mathbb{P}(T_{\bY}>  q(\alpha;T_{\bZ}) )}{\mathbb{P}(T_{\bZ} > q(\alpha;T_{\bZ}) )} - 1 
 \right|}_{\mathrm{I}} + \underbrace{
\frac{|\mathbb{P}(T> q_{\cqt}(\alpha;T^{\cB}) ) - \mathbb{P}(T_{\bY}>  q(\alpha;T_{\bZ}) )|}{\mathbb{P}(T_{\bZ} > q(\alpha;T_{\bZ}) )}  
}_{\mathrm{II}}.
\end{equation}
Note that \eqref{eq:I_rate} in the proof of Proposition \ref{prop:cramer_gmb} immediately gives the bound on $\mathrm{I}$, i.e.,
\begin{equation}\label{eq:prop2_I_rate}
  \mathrm{I} = O\Big(\frac{   (\log d)^{19/6}}{n^{1/6}}\Big).
\end{equation}
Regarding the term $\mathrm{II}$, we have
\begin{eqnarray}\nonumber
  \mathrm{II}
   & = & \frac{1}{\alpha}\left|
    \mathbb{P}(T> q_{\cqt}(\alpha;T^{\cB}) ) - \mathbb{P}(T_{\bY}>  q(\alpha;T_{\bZ}) )
    \right|  \\  \nonumber
  &\le & \frac{1}{\alpha}\left|
    \mathbb{P}(  \{ T> q_{\cqt}(\alpha;T^{\cB}) \} 
    \ominus
   \{T_{\bY} > q(\alpha;T_{\bZ})\})
    \right|  \\ 
  &=& \frac{1}{\alpha}
    \mathbb{P}( T> q_{\cqt}(\alpha;T^{\cB}),
   T_{\bY} \le  q(\alpha;T_{\bZ})) +  \frac{1}{\alpha} \mathbb{P}( T \le q_{\cqt}(\alpha;T^{\cB}),
   T_{\bY} >  q(\alpha;T_{\bZ})) \label{eq:II_twoterms}.
\end{eqnarray} 
To bound the two terms in \eqref{eq:II_twoterms}, first notice that on the event $|T-T_{\bY}|>\zeta_{1}$, we have 
$$ \{ T> q_{\cqt}(\alpha;T^{\cB}),
   T_{\bY} \le  q(\alpha;T_{\bZ}) \} \subset \{T_{\bY}> q_{\cqt}(\alpha;T^{\cB}) - \zeta_1,
   T_{\bY} \le  q(\alpha;T_{\bZ})\}.
$$ 
Then under the assumption in \eqref{eq:approx_cond1}, i.e., $ \mathbb{P}(|T-T_{\bY}|>\zeta_{1})<\zeta_{2}$, we obtain 
$$ \mathbb{P}( T> q_{\cqt}(\alpha;T^{\cB}),
   T_{\bY} \le  q(\alpha;T_{\bZ})) \le  \mathbb{P}( T_{\bY}> q_{\cqt}(\alpha;T^{\cB}) - \zeta_1,
   T_{\bY} \le  q(\alpha;T_{\bZ})) + \zeta_2.
 $$ 
 Applying such strategies to the second term in 
    $\eqref{eq:II_twoterms}$ similarly, we get the following,
  \begin{eqnarray}\label{eq:prop2_II_bound}
     \mathrm{II} &\le & \mathrm{II}_1 + \mathrm{II}_2   +  \frac{2 \zeta_2 }{\alpha },\quad \text{where}\\ \nonumber
     \mathrm{II}_1 &:=&   \frac{1}{\alpha}
    \mathbb{P}( T_{\bY}> q_{\cqt}(\alpha;T^{\cB}) - \zeta_1,
   T_{\bY} \le  q(\alpha;T_{\bZ})), \\ \nonumber
  \mathrm{II}_2 &:=&   \frac{1}{\alpha} \mathbb{P}( T_{\bY} \le q_{\cqt}(\alpha;T^{\cB}) + \zeta_2,
   T_{\bY} >  q(\alpha;T_{\bZ})). 
\end{eqnarray}  
Under the assumption \eqref{eq:approx_cond2}, by Lemma \ref{lem:approx_quantile}, we have   
\begin{eqnarray} \nonumber
%\label{eq:approx_quantile}
    \mathbb{P}(q_{\cqt}(\alpha;T^{\cB})\le q_{\cqt}(\alpha+\zeta_{2};T_{\bW})+\zeta_{1}) &\ge & 1-\zeta_{2},\\ \nonumber
    \mathbb{P}(q_{\cqt}(\alpha;T^{\cB})\ge q_{\cqt}(\alpha-\zeta_{2};T_{\bW})-\zeta_{1})
    &\ge & 1-\zeta_{2}.
\end{eqnarray}    
Hence we can bound $\mathrm{II}_1, \mathrm{II}_2$ as below,
\begin{eqnarray*}
   \mathrm{II}_1 & \le  & \frac{1}{\alpha}  \PP{ T_{\bY}> q_{\cqt}(\alpha - \zeta_{2};T_{\bW})-2\zeta_{1},  T_{\bY} \le  q(\alpha;T_{\bZ}) } + \frac{\zeta_2}{\alpha}, \\ 
   \mathrm{II}_2 & \le & \frac{1}{\alpha}  \PP{ T_{\bY} \le q_{\cqt}(\alpha + \zeta_{2};T_{\bW})+ 2\zeta_{1},  T_{\bY} >  q(\alpha;T_{\bZ}) }  + \frac{\zeta_2}{\alpha}.
\end{eqnarray*} 
Now we will use the strategy of deriving \eqref{eq:cond_key_step} in the proof of Proposition \ref{prop:cramer_gmb}, i.e., apply Lemma \ref{lem:compqt}, then we have,
\begin{eqnarray*}
   \mathrm{II}_1 & \le  & \frac{1}{\alpha}  \PP{ T_{\bY}> q\Big( \frac{\alpha - \zeta_{2}}{1 -\pi(\delta)} ;T_{\bZ}\Big)-2\zeta_{1},  T_{\bY} \le  q(\alpha;T_{\bZ}) } + \frac{\PP{\maxdiff > \delta }}{\alpha}  +  \frac{\zeta_2}{\alpha}, 
   \\ 
 \mathrm{II}_2 & \le & \frac{1}{\alpha}  \PP{ T_{\bY} \le q\Big( \frac{\alpha + \zeta_{2}}{1 + \pi(\delta)} ;T_{\bZ} \Big)+ 2\zeta_{1},  T_{\bY} >  q(\alpha;T_{\bZ}) } + \frac{\PP{\maxdiff > \delta }}{\alpha} + \frac{\zeta_2}{\alpha}.
\end{eqnarray*} 
Combining the above two inequalities with \eqref{eq:prop2_II_bound}, we have
\begin{equation}\label{eq:prop2_II_rate}
   \mathrm{II} \le \mathrm{III} +  \frac{2 \PP{\maxdiff > \delta }}{\alpha} + 
   \frac{4 \zeta_2}{\alpha},
\end{equation}
where $\mathrm{III}$ is defined as below,
\begin{align*}
   \mathrm{III} 
   :=~& \frac{1}{\alpha}\abr{ \PP{ T_{\bY}> q\Big( \frac{\alpha - \zeta_{2}}{1 -\pi(\delta)} ;T_{\bZ}\Big)-2\zeta_{1} } - \PP{ T_{\bY}> q\Big( \frac{\alpha + \zeta_{2}}{1 + \pi(\delta)} ;T_{\bZ}\Big) + 2\zeta_{1} } } \\ 
   =~& \frac{1}{\alpha}\Big |  \PP{ T_{\bY}> q\Big( \frac{\alpha - \zeta_{2}}{1 -\pi(\delta)} ;T_{\bZ}\Big)-2\zeta_{1} } - \PP{ T_{\bZ}> q\Big( \frac{\alpha - \zeta_{2}}{1 -\pi(\delta)} ;T_{\bZ}\Big)-2\zeta_{1} } \\
   ~& - \PP{ T_{\bY}> q\Big( \frac{\alpha + \zeta_{2}}{1 + \pi(\delta)} ;T_{\bZ}\Big) + 2\zeta_{1} }  + \PP{ T_{\bZ}> q\Big( \frac{\alpha + \zeta_{2}}{1 + \pi(\delta)} ;T_{\bZ}\Big) + 2\zeta_{1} } \\
   ~& + \PP{ T_{\bZ}> q\Big( \frac{\alpha - \zeta_{2}}{1 -\pi(\delta)} ;T_{\bZ}\Big)-2\zeta_{1} } -  \PP{ T_{\bZ}> q\Big( \frac{\alpha + \zeta_{2}}{1 + \pi(\delta)} ;T_{\bZ}\Big) + 2\zeta_{1} }
   \Big|   \\
   \le ~& \mathrm{III}_1 + \mathrm{III}_2 + \mathrm{III}_3.  
\end{align*}
The last line comes from the triangle inequality, with $\mathrm{III}_1, \mathrm{III}_2, \mathrm{III}_3 $ defined as, 
\begin{eqnarray*}
   \mathrm{III}_1 &:=& \frac{1}{\alpha}\Big|
   \PPP \Big( T_{\bY}> q\Big( \frac{\alpha + \zeta_{2}}{1 +\pi(\delta)} ;T_{\bZ}\Big) + 2\zeta_{1} \Big) - \PPP \Big( T_{\bZ}> q\Big( \frac{\alpha + \zeta_{2}}{1 +\pi(\delta)} ;T_{\bZ}\Big) + 2\zeta_{1} \Big)
    \Big|,\\
   \mathrm{III}_2 &:=& \frac{1}{\alpha}\Big|
   \PPP \Big( T_{\bY}> q\Big( \frac{\alpha - \zeta_{2}}{1 - \pi(\delta)} ;T_{\bZ}\Big) -2\zeta_{1} \Big) -  \PPP \Big( T_{\bZ}> q\Big( \frac{\alpha - \zeta_{2}}{1 -\pi(\delta)} ;T_{\bZ}\Big) - 2\zeta_{1} \Big) 
    \Big|,\\ 
     \mathrm{III}_3 &:=& \frac{1}{\alpha}\Big|   \PPP \Big( T_{\bZ}> q\Big( \frac{\alpha - \zeta_{2}}{1 -\pi(\delta)} ;T_{\bZ}\Big)-2\zeta_{1} \Big) -   \PPP \Big( T_{\bZ}> q\Big( \frac{\alpha + \zeta_{2}}{1 + \pi(\delta)} ;T_{\bZ}\Big) + 2\zeta_{1} \Big) \Big|.
\end{eqnarray*}
We first bound $  \mathrm{III}_3$ by the triangle inequality,
\begin{eqnarray*}
    \mathrm{III}_3 
    &=& \frac{1}{\alpha}\Big|   \PPP \Big( T_{\bZ}> q\Big( \frac{\alpha - \zeta_{2}}{1 -\pi(\delta)} ;T_{\bZ}\Big)-2\zeta_{1} \Big) -   \PPP \Big( T_{\bZ}> q\Big( \frac{\alpha + \zeta_{2}}{1 + \pi(\delta)} ;T_{\bZ}\Big) + 2\zeta_{1} \Big) \Big|\\
  &\le &  \underbrace{\frac{1}{\alpha} \Big|   \PPP \Big( T_{\bZ}> q\Big( \frac{\alpha + \zeta_{2}}{1 + \pi(\delta)} ;T_{\bZ}\Big) + 2\zeta_{1} \Big)  - \frac{\alpha + \zeta_{2}}{1 + \pi(\delta)} 
  \Big|}_{ \mathrm{III}_{31}} \\
  &+&   \underbrace{\frac{1}{\alpha} \Big|   \PPP \Big( T_{\bZ}> q\Big( \frac{\alpha - \zeta_{2}}{1 - \pi(\delta)} ;T_{\bZ}\Big)  - 2\zeta_{1} \Big)  
  - \frac{\alpha - \zeta_{2}}{1 - \pi(\delta)} \Big|}_{\mathrm{III}_{32}} 
   + \underbrace{ \frac{1}{\alpha} \Big|  \frac{\alpha -\zeta_{2}}{1  - \pi(\delta)} - \frac{\alpha + \zeta_{2}}{1 +\pi(\delta)} \Big| }_{ \mathrm{III}_{33}}.
   \end{eqnarray*} 
Note that $\mathrm{III}_{31}$ can be rewritten as 
\begin{align} \label{eq:use_antibound1}
  \mathrm{III}_{31} = ~& \frac{\alpha  +  \zeta_{2}}{\alpha(1  + \pi(\delta))}\cdot  \frac{\Big| \PPP \Big( T_{\bZ}> q\big( \frac{\alpha + \zeta_{2}}{1 +\pi(\delta)} ;T_{\bZ}\big) + 2\zeta_{1} \Big)  - \PP{ T_{\bZ}> q\big( \frac{\alpha + \zeta_{2}}{1 + \pi(\delta)} ;T_{\bZ}\big) }   \Big|}{  \PP{ T_{\bZ}>  q\big( \frac{\alpha + \zeta_{2}}{1 + \pi(\delta)} ;T_{\bZ}\big) }} \\
  \le ~& \frac{\alpha  +  \zeta_{2}}{\alpha(1  + \pi(\delta))}\cdot K_{4}\zeta_{1} \Big( q\big( \frac{\alpha + \zeta_{2}}{1 + \pi(\delta)} ;T_{\bZ}\big) +\zeta_{1} \Big) \le  C \zeta_{1} \log d, \label{eq:use_antibound2}
\end{align}
where the first inequality holds by applying a non-uniform anti-concentration bound. Specifically, we apply the part $3$ of Theorem 2.1 in \cite{arun2018cram} (with $r-\epsilon = q\big( \frac{\alpha + \zeta_{2}}{1 + \pi(\delta)} ;T_{\bZ}\big), r + \epsilon = q\big( \frac{\alpha + \zeta_{2}}{1 +\pi(\delta)} ;T_{\bZ}\big) + 2\zeta_{1} $  ) to the Gaussian random vector $\bZ$. Remark that the term $K_3$ is a constant only depending on $\min_{1\le j\le d}\{\sigma^Y_{jj}\},\max_{1\le j\le d}\{\sigma^Y_{jj}\}$ and the median of Gaussian maxima (up to 2-nd power, hence at most of rate $O({\log d})$). As for the second inequality, under the assumption $\zeta_2 = O(\alpha_L)$, we have $\frac{\zeta_2}{\alpha} \le \frac{\zeta_2}{\alpha_L} = O(1)$ when $\alpha \in [\alpha_L, 1]$; we also use the fact that $\zeta_1 =  O(\sqrt{\log d})$ (which holds under the stated assumption), and $q\big( \frac{\alpha + \zeta_{2}}{1 + \pi(\delta)} ;T_{\bZ}\big) = O(\sqrt{\log d})$ (which will be verified later in \eqref{eq:quantile_shift_bound}). Thus we show $\mathrm{III}_{31} = O( \zeta_{1} \log d)$. Similarly, $\mathrm{III}_{32}$ can be bounded as $O( \zeta_{1} \log d)$. As for $\mathrm{III}_{33}$, we have
$$
\mathrm{III}_{33} = \frac{1}{\alpha} \Big|  \frac{\alpha -\zeta_{2}}{1  - \pi(\delta)} - \frac{\alpha + \zeta_{2}}{1 +\pi(\delta)} \Big| \le \frac{2 \pi(\delta) }{1 - \pi^2(\delta)} +  \frac{2 \zeta_2 }{\alpha (1 - \pi^2(\delta))}.
$$ Thus by combining the bounds on $\mathrm{III}_{31}, \mathrm{III}_{32}, \mathrm{III}_{33}$, we obtain
\begin{equation}\label{eq:prop2_III3_bound}
\mathrm{III}_{3} \le \mathrm{III}_{31} + \mathrm{III}_{32} + \mathrm{III}_{33} \le C' \zeta_{1} \log d + 
\frac{2 \pi(\delta) }{1 - \pi^2(\delta)} +  \frac{2 \zeta_2 }{\alpha (1 - \pi^2(\delta))}. 
\end{equation}

Regarding the term $\mathrm{III}_1$, we first consider the following,
\begin{eqnarray*}
 \mathrm{III}_{11}
  &:= &  \frac{1}{\alpha}\PPP \Big( T_{\bZ}> q\big( \frac{\alpha + \zeta_{2}}{1 +\pi(\delta)} ;T_{\bZ}\big) + 2\zeta_{1} \Big)
  \\ 
 &\le & \frac{1}{\alpha} \PPP \Big( T_{\bZ}> q\big( \frac{\alpha + \zeta_{2}}{1 +\pi(\delta)} ;T_{\bZ}\big) \Big)\cdot \rbr{1 + K_{4}\zeta_{1} \Big( q\big( \frac{\alpha + \zeta_{2}}{1 + \pi(\delta)} ;T_{\bZ}\big) +\zeta_{1} \Big) }\\
   &= &  \frac{\alpha  +  \zeta_{2}}{\alpha(1  + \pi(\delta))} \cdot \rbr{1 + K_{4}\zeta_{1} \Big( q\big( \frac{\alpha + \zeta_{2}}{1 + \pi(\delta)} ;T_{\bZ}\big) +\zeta_{1} \Big) } \\
   &\le&  C'' + C \zeta_{1} \log d = O(1),
   \end{eqnarray*}
   where the first inequality holds due to the derivations from \eqref{eq:use_antibound1} to \eqref{eq:use_antibound2}, the second inequality holds due to the last inequality in \eqref{eq:use_antibound2} and the stated assumption $\zeta_2 = O(\alpha_L)$. 
Then we bound $\mathrm{III}_1$ in terms of $\mathrm{III}_{11}$ and write
\begin{eqnarray*}
   \mathrm{III}_1 
   &=& \frac{1}{\alpha}\Big|
   \PPP \Big( T_{\bY}> q\Big( \frac{\alpha + \zeta_{2}}{1 +\pi(\delta)} ;T_{\bZ}\Big) + 2\zeta_{1} \Big) - \PPP \Big( T_{\bZ}> q\Big( \frac{\alpha + \zeta_{2}}{1 +\pi(\delta)} ;T_{\bZ}\Big) + 2\zeta_{1} \Big)
    \Big| \\
   &=& \mathrm{III}_{11} \cdot \Bigg| \frac{
   \PPP \Big( T_{\bY}> q\Big( \frac{\alpha + \zeta_{2}}{1 +\pi(\delta)} ;T_{\bZ}\Big) + 2\zeta_{1} \Big) - \PPP \Big( T_{\bZ}> q\Big( \frac{\alpha + \zeta_{2}}{1 +\pi(\delta)} ;T_{\bZ}\Big) + 2\zeta_{1} \Big)}{\PPP \Big( T_{\bZ}> q\Big( \frac{\alpha + \zeta_{2}}{1 +\pi(\delta)} ;T_{\bZ}\Big) + 2\zeta_{1} \Big) }
    \Bigg| \\
    &\le &  \mathrm{III}_{11} \cdot \frac{   (\log d)^{19/6}}{n^{1/6}} = O\Big(\frac{   (\log d)^{19/6}}{n^{1/6}}\Big),
\end{eqnarray*}
where the inequality holds by applying Corollary 5.1 in \cite{arun2018cram} again to $T_{\bY}$ as the derivations of \eqref{eq:I_rate} in the proof of Proposition \ref{prop:cramer_gmb}. The term $  \mathrm{III}_2$ can be similarly bounded as $ \mathrm{III}_1$. Combining the above bounds on $  \mathrm{III}_1, \mathrm{III}_2$ and  \eqref{eq:prop2_III3_bound} yields the following bound on $\mathrm{III}$,
\begin{equation} \label{eq:prop2_III_rate}
  \mathrm{III} \le  \frac{  C (\log d)^{19/6}}{n^{1/6}} +  C' \zeta_{1} \log d + 
\frac{2 \pi(\delta) }{1 - \pi^2(\delta)} +  \frac{2 \zeta_2 }{\alpha (1 - \pi^2(\delta))}.
\end{equation}
By \eqref{eq:prop2_I_II_def}, \eqref{eq:prop2_I_rate}, \eqref{eq:prop2_II_rate} and \eqref{eq:prop2_III_rate}, we have, when $\alpha \in [\alpha_L,1]$,
\begin{eqnarray} \nonumber
  \left|
\frac{\mathbb{P}(T> q_{\cqt}(\alpha;T^{\cB}) )}{\mathbb{P}(T_{\bZ} > q(\alpha;T_{\bZ}) )} - 1 
 \right| &\le &\mathrm{I}+ \mathrm{II} 
 \le  \mathrm{I}+  \mathrm{III} +  \frac{2 \PP{\maxdiff > \delta }}{\alpha} + 
   \frac{4 \zeta_2}{\alpha}\\ \nonumber
  &\le  &  \frac{  C (\log d)^{19/6}}{n^{1/6}} +  C' \zeta_{1} \log d  + \frac{C'' \zeta_2}{\alpha}  + \frac{2 \pi(\delta) }{1 - \pi^2(\delta)} + \frac{2 \PP{\maxdiff > \delta }}{\alpha} \\ \nonumber
  &\le  &  \frac{  C (\log d)^{19/6}}{n^{1/6}} +  C' \zeta_{1} \log d  + \frac{C'' \zeta_2}{\alpha}  + \frac{ C (\log d)^{11/6}}{n^{1/6}\alpha_L^{1/3}} \\ 
  & = & O\rbr{\frac{(\log d)^{19/6}}{n^{1/6}} + \frac{ (\log d)^{11/6}}{n^{1/6}\alpha_L^{1/3}} + \zeta_1 \log d + \frac{\zeta_2}{\alpha_L} }.
  \label{eq:prop2_final_bound}
\end{eqnarray}
where the third line holds due to the derivations between \eqref{eq:III_rate} and \eqref{eq:check_delta} in the proof of Proposition \ref{prop:cramer_gmb}. Remark by the choice of $\delta$ and \eqref{eq:check_delta}, we have $\pi(\delta) = O(1)$. Also note that $\zeta_2 = O(\alpha_L)$, hence we can show
\begin{equation} \label{eq:quantile_shift_bound}
  q\big( \frac{\alpha + \zeta_{2}}{1 + \pi(\delta)} ;T_{\bZ}\big) = O(\sqrt{\log d}). 
\end{equation}
when $\alpha \in [\alpha_L, 1]$. Hence we are able to verify $q\big( \frac{\alpha + \zeta_{2}}{1 + \pi(\delta)} ;T_{\bZ}\big) = O(\sqrt{\log d})$, as promised when deriving \eqref{eq:use_antibound2}. Denoting the bound in \eqref{eq:prop2_final_bound} by $\eta(d,n,\zeta_1,\zeta_2, \alpha_L)$, we finally establish \eqref{eq:approx_cramer_mb}, i.e.,
\begin{equation}\nonumber
\sup_{\alpha \in [\alpha_{L},1]}
\left|
\frac{\mathbb{P}(T > q_{\cqt}(\alpha;T^{\cB}) )}{\mathbb{P}(T_{\bZ} > q(\alpha;T_{\bZ}) )} - 1 
\right| 
=  \eta(d,n,\zeta_1,\zeta_2, \alpha_L).
\end{equation}

\end{proof}

\section{Validity and power analysis of single node testing}
\label{app:node_test_validity} 
In this section, we focus on Lemma \ref{lem:single_test} and Lemma \ref{lem:approx_quantile}. Note that these results are established using the same strategies as Theorem 4.1, Lemma S.1 and Theorem S.7 in \cite{lu2017adaptive}. We still present their proofs for completeness. 

\subsection{Proof of Lemma \ref{lem:single_test}}\label{sec:proof:lem:single_test}
\label{app:pf:lem:single_test}
\begin{proof}
For given node $j$, we denote $N_{0j} = \{(j,k) : \bTheta_{jk} =0 \}$, then 
$N_{0j}^c = \{(j,k) : |\bTheta_{jk}| >0 \}$. First we
% we denote the set $N_{\star} = \{(j,k): |\bTheta_{jk}|>0\}$ and
consider the following event,
$$
 \cE = \Big\{\min_{e \in N_{0j}^c } \sqrt{n}|\tdTheta_e| >  \hat{c} (\alpha, E_0)\Big\}, \quad \text{ where } E_{0} = \{(j,k): k \neq j,  k \in [d]\}.
$$
By the definition of Algorithm \ref{algo:skipdown}, we immediately have the rejected edge set in the first iteration can be written as
$$
   E_1 = \{(j,k) \in E_0 : \sqrt{n} |\tdTheta_{jk}| >  \hat{c} (\alpha, E_0) \}.
$$
Regarding \myrom{1} i.e., under the alternative hypothesis $H_{1j}:  \jdeg \ge k_{\tau}$, we first note $\psi_{j,\alpha} = 1$ on the event $\cE$. Also notice that $N_{0j}^c \subseteq E_1$ given $\cE$. Then the following bound immediately follows:
\begin{equation}\label{eq:power-e1}
    \PP{\psi_{j,\alpha} = 1} \ge \PP{\cE}.
\end{equation}
% which is due to the fact that $E_{\star} \subseteq E_1$ under $\cE_1$. 
% Next we give a lower bound for $\PP{\cE_1}$,
We further derive a lower bound for $\PP{\cE}$ by the triangle inequality:
% We will bound $\PPP(\cE_1)$ next. We have $\PPP ( T_{V\times V}> c(\alpha, V\times V) ) \le \alpha$
\begin{align} \label{eq:power-e2}
%   {\color{blue}\nonumber \text{my correct version}:} \\ 
   \PP{\cE} & \ge \PP{\min_{e \in N_{0j}^c   } |\sTheta_e| > \frac{\hat{c} (\alpha, E_0)}{\sqrt{n}} + C_0 
   \sqrt{\frac{\log d}{n}} \text{ and } %\max_{e \in \cV \times \cV}
   \norm{\tdTheta - \sTheta}_{\max} \le C_0 \sqrt{\frac{\log d}{n}}}. 
\end{align}

% Applying  \eqref{eq:quantile-app} with $E = V\times V$,
% For any fixed $\alpha \in (0,1)$, we consider sufficiently large $n$ and $d$ such that $2/d^2 \le \alpha/2$ and $\PPP ( T_{E_0
% }> \hat{c}(\alpha, E_0) ) >\alpha/2 $.
% % $\PPP ( T_{\cV\times \cV}> \hat{c}(\alpha, \cV\times \cV) ) >\alpha/2$.
% \colored{%\color{blue}
% % then recall the definition $\hat{c} (\alpha,E) = \inf \left\{ t\in \RR : \PPP_\xi \left( |T^{\cB}_E| \le t  \right) \ge 1-\alpha   \right\}$, we have $\PPP ( T_{V\times V}> \hat{c}(\alpha, V\times V) ) \le \alpha$ but
% This is possible due to the convergence result of the quantile approximation \eqref{eq:quantile-app}. 
% % we have $\PPP ( T_{V\times V}> \hat{c}(\alpha, V\times V) ) >\alpha/2$, 
% Then, the above result together with \eqref{eq:max-debias-scaled} yields the following:
% }
% $$
% \hat{c} (\alpha, E_0) \le C_0 \sqrt{\frac{\log d}{n}}\cdot\sqrt{n}=C_0 \sqrt{\log d}.
% $$
For any fixed $\alpha \in (0,1)$, we consider sufficiently large $d$ such that $1/d \le \alpha$. By applying Lemma \ref{lem:quantile_logd}, we have
\begin{equation}\nonumber
% \label{eq:quantile_logd}
  \mathbb{P}\left(  \mathbb{P}_{\xi}(T^{\cB}_{E_{0}}  \ge C_0 \sqrt{\log d}\mid 
\{\bX_i\}_{i=1}^n
) \le  1/d\right)  \ge 1 - 1/d^2,
\end{equation}
where $E_{0} = \{(j,k): k \neq j,  k \in [d]\}$. Recall the definition of $\hat{c} (\alpha,E)$
$$
\hat{c} (\alpha,E) = \inf \left\{ t\in \RR : \PPP_\xi \left( T^{\cB}_{E} \le t  \right) \ge 1-\alpha    \right\}.
$$
We then have 
$
\hat{c} (\alpha, E_0) \le C_0 \sqrt{\log d}
$
for some constant $C_0>0$, with probability greater than $1 - 1/d^2$.
Choosing the constant in the signal strength condition of Lemma \ref{lem:single_test} to be $2 C_0$ 
% in $\cG_1(C; \cP)$ 
(i.e., for any $(j,k) \in N_{0j}^c $, $|\bTheta_{jk}| \ge 2C_0\sqrt{\log d/ n}$)
% here $c=2 C_0$, $=\sqrt{\log d/ n}$)} 
% and it follows from the signal strength condition, 
%\eqref{eq:power-e0} 
and applying \eqref{eq:max-debias-scaled},
we have \acc{with probability greater than $1 - 1/d^2$}
% \colored{%\color{blue}
\begin{align*}
   ~& \min_{e \in N_{0j}^c } |\sTheta_e| \ge 2C_0 \sqrt{\frac{\log d}{n}}\ge \frac{\hat{c} (\alpha, E_0)}{\sqrt{n}} +C_0 \sqrt{\frac{\log d}{n}} 
   \text{~~and~}\\
  ~&~ \PP{ \norm{\tdTheta - \sTheta}_{\max} \le C_0 \sqrt{\frac{\log d}{n}} } \ge 1- 2/d^2.
\end{align*}
% } %By the definition of $\hat{c} (\alpha,E)$ we have $PP\Big( \max_{e \in V \times V} |\hat\bTheta^{\text{d}}_e - \bTheta_e| \le \hat{c} (p, E_0) \Big) \ge 1- \alpha 2 \ge /d$.
Combining the above two inequalities with \eqref{eq:power-e1} and \eqref{eq:power-e2}, %(\col{\color{blue} my version}),
we have $\acc{\PP{\psi_{j,\alpha} = 1} \ge \PP{\cE} > 1- 3/d^2}$.
% \[
%   \PP{\psi_{\alpha} = 1} \ge \PP{(\cE_1} > 1- 2/d.
% \]
Therefore, we establish
$$
 \lim_{(n,d)\rightarrow \infty} \PP{\psi_{j,\alpha} = 1}= 1.
$$ 
Now we consider \myrom{2}, i.e., the case when $  \jdeg < k_{\tau}$. Since $ \jdeg \le k_{\tau}-1$, $\psi_{j,\alpha} = 1$ implies at least one edge in $N_{0j}$ is rejected in Algorithm \ref{algo:skipdown}. Suppose the first rejected edge in $N_{0j}$ is $(j,k_{\ast})$ and it is rejected at the $t_{\ast}$-th iteration. Then we have $N_{0j} \subseteq  E_{t_{\ast} -1}$ and
\begin{equation}\label{eq:monotone}
 \max_{e \in N_{0j}} \sqrt{n}|\tdTheta_e - \sTheta_{e}| \ge \sqrt{n}|\tdTheta_{jk_\ast}- \sTheta_{jk_\ast}| \ge \hat c( \alpha, E_{t_{\ast}-1}) \ge \hat c(\alpha, N_{0j}),
\end{equation}
%\color{blue}
where the first inequality holds since $(j,k_{\ast}) \subset N_{0j}$, the
second inequality holds since $\sTheta_{jk_\ast} = 0$ and the edge $(j,k_{\ast})$ is rejected at the $t_{\ast}$-th iteration. The last inequality holds simply because $N_{0j} \subseteq  E_{t_{\ast} -1}$.
% Note that in the above inequalities, we have $\sTheta_{e} =0$ for $e\in N_{0j}$.
Therefore by applying Lemma \ref{lem:quantile} with $E$ chosen to be $N_{0j}$, we have
$$
\lim_{(n,d)\rightarrow \infty} 
\PP{\psi_{j,\alpha} =1} \le \alpha.
$$
\end{proof}

\begin{lemma}
\label{lem:quantile_logd}
Under the same conditions as Lemma \ref{lem:quantile}, we have for any $j \in [d]$,
\begin{equation}
\label{eq:quantile_logd}
  \mathbb{P}\left(  \mathbb{P}_{\xi}(T^{\cB}_{E_{j}}  \ge 2C_0 \sqrt{\log d}\mid 
\{\bX_i\}_{i=1}^n
) \le  2/d\right)  \ge 1 - 1/d^2
\end{equation}
holds for some constant $C_0>0$, where $E_{j} := \{(j,k): k \ne j, k \in [d]\}$.
\end{lemma}

\begin{proof}[Proof of Lemma \ref{lem:quantile_logd}]
Recall the definitions of $   T_{E}^{\cB} $ and $ \breve{T}_{E}^{\cB}$ in \eqref{eq:TEcB} and \eqref{eq:brTEcB} respectively. 
% \begin{equation}\label{eq:TE}
%   T_E := \max_{(j,k) \in E}  \sqrt{n}\left|(  \dTheta_{jk}/\sqrt{\dTheta_{jj} \dTheta_{kk}} - {\bTheta_{jk}}/{\sqrt{\bTheta_{jj}\bTheta_{jk}}})\right|
% \end{equation}
% by the multiplier bootstrap process
% \begin{equation}\label{eq:TEcB}
%   T_{E}^{\cB} :=\max_{(j,k) \in E}  \frac{1}{\sqrt{n~ \hat{\bTheta}_{jj}\hat{\bTheta}_{kk} }} \bigg| \sum_{i=1}^n \hat{\bTheta}_{j}^{\top} (\bX_i \bX_i^{\top} \hat{\bTheta}_{k}- \eb_k)\xi_i \bigg|,
% \end{equation}
%we define two intermediate processes:
%\begin{align}\label{eq:brTE}
%  \breve{T}_E &:=  \max_{(j,k) \in E} \bigg| \frac{1}{\sqrt{n~ {\bTheta}_{jj}{\bTheta}_{kk}}}  \sum_{i=1}^n  {\bTheta}_{j}^{\top} (\bX_i \bX_i^{\top} {\bTheta}_{k}- \eb_k) \bigg|,
% \\ \label{eq:brTEcB}
% \breve{T}_{E}^{\cB} &:=  \max_{(j,k) \in E}\bigg| \frac{1}{\sqrt{n~ \bTheta_{jj}\bTheta_{kk}}}  \sum_{i=1}^n  {\bTheta}_{j}^{\top} (\bX_i \bX_i^{\top} {\bTheta}_{k}- \eb_k)\xi_i \bigg|.
%\end{align}
First, we have
\begin{align}
\nonumber
~&~\mathbb{P}\left(  \mathbb{P}_{\xi}(T^{\cB}_{E_j}  \ge 2C_0 \sqrt{\log d}\mid 
\{\bX_i\}_{i=1}^n
) \ge  2/d\right)  \\ \nonumber
=~&~ \mathbb{P}\left(  \mathbb{P}_{\xi}(\breve{T}_{E_j}^{\cB} + T^{\cB}_{E_j} - \breve{T}_{E_j}^{\cB} \ge 2C_0  \sqrt{\log d}\mid 
\{\bX_i\}_{i=1}^n
) \ge  2/d\right)  \\ \nonumber
\le ~&~ \mathbb{P}\left(  \mathbb{P}_{\xi}(\breve{T}_{E_j}^{\cB}  \ge C_0  \sqrt{\log d}\mid 
\{\bX_i\}_{i=1}^n 
)  + \mathbb{P}_{\xi}( |T^{\cB}_{E_j} - \breve{T}_{E_j}^{\cB} |\ge C_0  \sqrt{\log d}\mid 
\{\bX_i\}_{i=1}^n
)
\ge  2/d\right) \\ \nonumber
\le ~&~ \mathbb{P}\left(  \mathbb{P}_{\xi}(\breve{T}_{E_j}^{\cB} \ge C_0  \sqrt{\log d}\mid 
\{\bX_i\}_{i=1}^n 
)  \ge 1/d \right) \\ \nonumber
~&~ + \mathbb{P}\left( \mathbb{P}_{\xi}( |T^{\cB}_{E_j} - \breve{T}_{E_j}^{\cB} |\ge C_0  \sqrt{\log d}\mid 
\{\bX_i\}_{i=1}^n
)
\ge 1/d \right) \\
\le ~&~ \mathbb{P}\left(  \mathbb{P}_{\xi}(\breve{T}_{E_j}^{\cB} \ge C_0  \sqrt{\log d}\mid 
\{\bX_i\}_{i=1}^n 
)  \ge  1/d \right) + 1/d^2, 
\label{eq:quantile_logd_sub1}
\end{align}
where the first and second inequalities hold due to the union bound, and the last inequality holds due to \eqref{eq:zeta12_rate2} in the proof of Lemma \ref{lem:quantile}.
Now it suffices to prove
\begin{equation} \label{eq:quantile_logd_sub2}
 \max_{j \in [d]} \mathbb{P}\left(  \mathbb{P}_{\xi}(\breve{T}_{E_j}^{\cB} \ge C_0 \sqrt{\log d}\mid 
\{\bX_i\}_{i=1}^n
) \ge  1/d\right)  \le  1/d^2.
\end{equation}	
%where $\breve{T}_{E_j}^{\cB}$ is defined as 
%\begin{equation}
%	 \breve{T}_{E_j}^{\cB} :=  \max_{(j,k) \in E_j}\bigg| \frac{1}{\sqrt{n~ \bTheta_{jj}\bTheta_{kk}}}  \sum_{i=1}^n  {\bTheta}_{j}^{\top} (\bX_i \bX_i^{\top} {\bTheta}_{k}- \eb_k)\xi_i \bigg|.
%\end{equation}
Then we notice the following,
\begin{align} \nonumber
% ~ & ~ \mathbb{P}\big(\hat c(\alpha, E_{0j}) \ge C_0 \sqrt{\log d}\big) 
%  \le ~ 
 &~ \mathbb{P}_{\xi}\big( \breve{T}_{E_j}^{\cB}  \ge C_0 \sqrt{\log d} \mid \{\bX_i\}_{i=1}^n \big)\\ \nonumber
  = ~ &~  \mathbb{P}_{\xi}\Big(  \underset{(j,k)\in E_{j}}{\max} \; \frac{1}{\sqrt{n~ {{\bTheta}_{jj}{\bTheta}_{kk}}}}   \bigg|\sum_{i=1}^n  {{\bTheta}^{\top}_j \left( \bX_i \bX_i^{\top} {\bTheta}_k - \eb_k  \right)} \xi_i \bigg|  \ge C_0 \sqrt{\log d} \mid \{\bX_i\}_{i=1}^n\Big)\\
  \le ~ & ~\sum_{(j,k) \in E_{j}}  \mathbb{P}_{\xi}\Big(  \frac{1}{\sqrt{n~ {{\bTheta}_{jj}{\bTheta}_{kk}}}}   \bigg|\sum_{i=1}^n  {{\bTheta}^{\top}_j \left( \bX_i \bX_i^{\top} {\bTheta}_k - \eb_k  \right)} \xi_i \bigg|  \ge C_0 \sqrt{\log d} \mid \{\bX_i\}_{i=1}^n\Big).
  \label{eq:eachtailbound}
\end{align}
where the equality holds by the definition of $\breve{T}^{\cB}_{E_j}$ in \eqref{eq:TEcB} and the inequality holds by the union bound. In the following, we will bound \eqref{eq:eachtailbound} for each $(j,k) \in E_{j}$. Note that conditioning on $\{\bX_i\}_{i=1}^n$, the following random variable is a mean zero Gaussian random variable
\begin{equation}\nonumber
    G_{jk}:= \frac{1}{\sqrt{n~ {{\bTheta}_{jj}{\bTheta}_{kk}}}}  \sum_{i=1}^n  {{\bTheta}^{\top}_j \left( \bX_i \bX_i^{\top} {\bTheta}_k - \eb_k  \right)} \xi_i. 
\end{equation}
Hence we will bound its conditional variance then apply the sub-Gaussian tail probability bound (in Section 2.1.2 of \cite{wainwright2019}). 
Specifically, we have 
%with probability greater than $1 - 1/d^2$,
\begin{align}\nonumber
 \mathbb{P} \big( \mathrm{Var}(G_{jk}\mid \{\bX_i\}_{i=1}^n)    >2 C_0 \big) 
 =& ~  \mathbb{P} \big(   (\bTheta_{jj} \bTheta_{kk})^{-1}\cdot \frac{1}{n} \sum_{i=1}^n  \big[{{\bTheta}^{\top}_j ( \bX_i \bX_i^{\top} {\bTheta}_k - \eb_k  )}\big]^2   > 2C_0 \big)  \\ \nonumber
 \le & ~ \mathbb{P} \big(   (\bTheta_{jj} \bTheta_{kk})^{-1}\cdot \frac{1}{n} \sum_{i=1}^n  \big[{{\bTheta}^{\top}_j ( \bX_i \bX_i^{\top} {\bTheta}_k - \eb_k  )}\big]^2   > C_0 + C_0 \sqrt{\frac{\log d}{n}} \big) \\ \nonumber
 \le & ~ 
  1/d^2,
\end{align}
for some constant $C_0>0$, where the equality holds by the definition of $G_{jk}$, the first inequality holds under the scaling condition of Lemma \ref{lem:quantile}, and the last inequality holds due to the Bernstein's inequality (in Section 2.2.2 of \cite{vanderVaart1996Weak}) and the assumption that $\bTheta \in \cU(M,s, r_0)$ and $\bX_1,\ldots, \bX_n  \stackrel{\mathrm{i.i.d.}}{\sim} N_d (0,\bSigma)$.
Therefore, we have with probability greater than $1 - 1/d^2$,
\begin{equation} \nonumber
    \mathbb{P}_{\xi}\left( G_{jk}  \ge C_0 
    \sqrt{\log d} \mid \{\bX_i\}_{i=1}^n \right) \le \frac{1}{d^2} 
\end{equation}
for some constant $C_0>0$ by the sub-Gaussian tail probability bound (in Section 2.1.2 of \cite{wainwright2019}). Combining the above bound with \eqref{eq:eachtailbound}, we have
\begin{equation} \nonumber
   \mathbb{P}\left( \mathbb{P}_{\xi}\big(
    \breve{T}^{\cB}_{E_{j}}  \ge C_0 \sqrt{\log d} \mid \{\bX_i\}_{i=1}^n \big)  \le d\cdot \frac{1}{d^2}\right) \ge 1- 1/d^2
\end{equation}
since $|E_{j}|\le d$. The above derivations hold for any $j \in [d]$, thus \eqref{eq:quantile_logd_sub2} is established. Finally, combining \eqref{eq:quantile_logd_sub2} with \eqref{eq:quantile_logd_sub1} yields \eqref{eq:quantile_logd}.
\end{proof}

\subsection{Proof of Lemma \ref{lem:quantile}}
\label{app:pf:lem:quantile}
We first recall the definition of $ \cU(M,s, r_0)$ and write down the statement of Lemma \ref{lem:quantile} below.
\begin{equation}\label{eq:UM}
\begin{aligned}
 \cU(M,s, r_0) &= \Big\{\bTheta \in \RR^{d \times d} \,\big|\,  \lambda_{\min}(\bTheta) \ge 1/\lammin,
 \lambda_{\max}(\bTheta) \le \lammin,
 \max_{j \in [d]} \|\bTheta_{j}\|_{0} \le s, \|\bTheta \|_1 \le M \Big\}.
\end{aligned}
\end{equation}

\begin{lemma}
\label{pf:lem:quantile}
Suppose that $\bTheta \in \cU(M,s, r_0)$. If $(\log (dn))^7/n + s^2 (\log dn)^{4}/ {n} = o(1)$,  for any edge set $E \subseteq \cV \times \cV$, we have for any $\alpha \in [0,1]$,
\begin{equation}\label{eq:quantile-cvg}
    \lim_{(n,d)\rightarrow \infty} \sup_{\bTheta \in \cU(M,s, r_0)}  \sup_{\alpha \in (0,1)} 
    \left|\PPP
    \left( \max_{e \in E}  \sqrt{n} |\tdTheta_e -\sTheta_e|> \hat{c}(\alpha, E) 
    \right) - \alpha
    \right|=0.
\end{equation}
\end{lemma}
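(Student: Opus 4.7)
The plan is to establish the Kolmogorov-distance approximation \eqref{eq:quantile-cvg} by introducing an ``oracle'' linear statistic that is a maximum of normalized sums of i.i.d.\ mean-zero random vectors, invoking the high dimensional central limit theorem and the Gaussian multiplier bootstrap validity of \citet{chernozhukov2013gaussian}, and absorbing the nonlinear remainder via the sparsity of $\bTheta$. Specifically, I would define the oracle statistic and its bootstrap counterpart
\[
  T_0 := \max_{(j,k)\in E}\left|\frac{1}{\sqrt{n\,\bTheta_{jj}\bTheta_{kk}}}\sum_{i=1}^{n}\bTheta_j^{\top}\bigl(\bX_i\bX_i^{\top}\bTheta_k-\eb_k\bigr)\right|,\qquad
  T_0^{\cB} := \max_{(j,k)\in E}\left|\frac{1}{\sqrt{n\,\bTheta_{jj}\bTheta_{kk}}}\sum_{i=1}^{n}\bTheta_j^{\top}\bigl(\bX_i\bX_i^{\top}\bTheta_k-\eb_k\bigr)\xi_i\right|,
\]
with the same Gaussian multipliers $\xi_i$ used in \eqref{eq:gmb_quantile}. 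The summands inside $T_0$ are mean-zero, have sub-exponential tails uniformly under $\bTheta\in\cU(M,s,r_0)$, and have bounded variances, so the high dimensional CLT of \citet{chernozhukov2013gaussian} applies to give a Kolmogorov approximation of $T_0$ by the maximum of a centered Gaussian vector $\bZ$ with the matched covariance structure, while the same reference supplies the validity of the Gaussian multiplier bootstrap $T_0^{\cB}$ as a quantile estimator for $T_0$ conditional on the data.

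The main work is then a linearization argument showing that replacing $\bTheta$ by $\hat{\bTheta}$ and the standardization $\bTheta_{jj}\bTheta_{kk}$ by $\hat{\bTheta}_{jj}\hat{\bTheta}_{kk}$ costs only a negligible error. Using the KKT (or dual-feasibility) property of GLasso/CLIME I would write
\[
  \sqrt{n}\,(\tdTheta_{jk}-\sTheta_{jk}) \;=\; -\frac{1}{\sqrt{n\,\bTheta_{jj}\bTheta_{kk}}}\sum_{i=1}^{n}\bTheta_j^{\top}\bigl(\bX_i\bX_i^{\top}\bTheta_k-\eb_k\bigr)\;+\;R_{jk},
\]
where the remainder $R_{jk}$ is a sum of a quadratic form in $(\hat{\bTheta}-\bTheta)$, a term controlling the difference between $\hat{\bTheta}_j^{\top}\hat{\bSigma}_j$ and $\bTheta_{jj}$, and a delta-method term from the standardization. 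Each piece is bounded using $\|\hat{\bTheta}-\bTheta\|_{\max}=O_P(\sqrt{\log d/n})$ and $\max_j\|\hat{\bTheta}_j-\bTheta_j\|_1=O_P(s\sqrt{\log d/n})$ (standard rates for GLasso/CLIME under $\cU(M,s,r_0)$), together with the subgaussian maximal inequality for $\max_{j,k}|n^{-1}\sum_i(\bX_i\bX_i^{\top}-\bSigma)_{jk}|$. This yields the two ``approximation budgets''
\[
  \mathbb{P}\!\bigl(|T_E-T_0|>\zeta_1\bigr)\le \zeta_2,
  \qquad
  \mathbb{P}\!\bigl(\mathbb{P}_{\xi}(|T_E^{\cB}-T_0^{\cB}|>\zeta_1)>\zeta_2\bigr)\le \zeta_2,
\]
with $\zeta_1=O(s(\log d)^{2}/\sqrt{n})$ and $\zeta_2=O(1/d^{2})$; this is exactly what is invoked in the proof of Theorem~\ref{thm:fdr_hub} as \eqref{eq:zeta12_rate1}--\eqref{eq:zeta12_rate2}.

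Finally, I would combine the CLT/bootstrap step with the linearization step through the anti-concentration inequality of \citet{chernozhukov2014anti} for Gaussian maxima: since the Gaussian maximum has Lebesgue density of order $O(\sqrt{\log d})$, an $O(\zeta_1)$ perturbation of the threshold changes probabilities by at most $O(\zeta_1\sqrt{\log d})$. This gives
\[
  \sup_{\alpha\in(0,1)}\Bigl|\mathbb{P}\!\bigl(T_E>\hat{c}(\alpha,E)\bigr)-\alpha\Bigr|
  \;\lesssim\; \text{(CLT error)}+\text{(bootstrap error)}+\zeta_1\sqrt{\log d}+\zeta_2,
\]
which tends to zero under the scaling $(\log(dn))^{7}/n+s^{2}(\log dn)^{4}/n=o(1)$, uniformly over $\bTheta\in\cU(M,s,r_0)$ and over $E\subseteq\cV\times\cV$ (since the CLT/bootstrap bounds of \citet{chernozhukov2013gaussian} depend only on dimension-free moment quantities and the rate of $\|\hat{\bTheta}-\bTheta\|$ holds uniformly on the parameter space). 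The main obstacle I anticipate is the careful bookkeeping in controlling $R_{jk}$ uniformly in $(j,k)\in E$, especially the standardization term, where one must show that $\max_{j}|\hat{\bTheta}_{jj}/\bTheta_{jj}-1|$ is small enough that the delta-method expansion produces a remainder of the required order $s(\log d)^{2}/\sqrt{n}$ rather than the cruder rate $s\sqrt{\log d/n}$; the extra $\sqrt{\log d}$ factor is exactly what is absorbed into $\zeta_{1}$.
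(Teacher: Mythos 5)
Your proposal is correct and follows essentially the same route as the paper: you introduce the same oracle statistic $T_0$ and its bootstrap counterpart (the paper's $\breve{T}_E$ and $\breve{T}_E^{\cB}$), verify the same moment/tail condition and the same two approximation budgets $\zeta_1 \asymp s(\log d)^2/\sqrt n$, $\zeta_2 \asymp d^{-2}$ via the GLasso/CLIME rates and the one-step linearization, and then close the argument using the anti-concentration inequality. The only cosmetic difference is that the paper invokes Corollary~3.1 of \citet{chernozhukov2013gaussian}, which packages your CLT + bootstrap-validity + anti-concentration combination into a single statement, rather than re-deriving it.
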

Throughout the following parts, we will write the standardized one-step estimator explicitly:
$$
\dTheta_{jk}/\sqrt{\dTheta_{jj} \dTheta_{kk}},\quad \text{ where } \dTheta_{jk} := \hat{\bTheta}_{jk} - \frac{\hat{\bTheta}_{j}^{\top} \left( \hat{\bSigma} \hat{\bTheta}_k - \eb_k\right)}{\hat{\bTheta}_j^{\top} \hat{\bSigma}_j}.
$$
%Remark that from now on, $\dTheta_{jk}$ refers to the one-step estimator directly, and should not be mistaken as the standardized version (which uses the same notation) in \eqref{eq:quantile-cvg}, Section \ref{sec:method} and Appendix \ref{app:pf:fdr}.

In order to prove \eqref{eq:quantile-cvg}, we need preliminary results on the estimation rates of CLIME estimator.  \cite{cai2011constrained} gives the following theorem. We can also prove the same result for the GLasso estimator \cite{jankova2018inference}. Therefore, Lemma \ref{pf:lem:quantile} applies for both the CLIME estimator and the GLasso estimator. This also implies that the results in our paper apply to both the CLIME estimator and the GLasso estimator.
\begin{lemma}\label{lem:clime_rate}
 Suppose $\bTheta \in \cU(M,s, r_0)$ and we choose the tuning parameter $\lambda  \ge C M\sqrt{\log d/n}$ in the CLIME estimator. With probability greater than $1-c/d^2$, we have the following bounds:
\begin{align}\label{eq:clime-rates}
\begin{split}
&~\norm{\hat \bSigma - \bSigma}_ {\max}  \le C \sqrt{\frac{\log d}{n}}, \norm{\hat \bTheta \hat \bSigma - \Ib}_{\max} \le CM\sqrt{\frac{\log d}{n}},\text{ and } \\
&~\norm{\hat \bTheta - \bTheta}_{\max} \le CM\sqrt{\frac{\log d}{n}},
\norm{\hat \bTheta - \bTheta}_1 \le CM\sqrt{\frac{s^2\log d}{n}},
% ~~~~~
\end{split}
\end{align}
where $C$ is a universal constant only depending on $\lammin$ in \eqref{eq:UM}.
\end{lemma}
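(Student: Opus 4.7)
\textbf{Proof proposal for Lemma \ref{lem:clime_rate}.} The plan is to proceed in four steps, following the standard CLIME analysis of \citet{cai2011constrained}, each building on the previous one.

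First, I would establish the concentration bound $\|\hat\bSigma - \bSigma\|_{\max} \le C\sqrt{\log d/n}$ with probability $1 - c/d^2$. Since $\bX_i$ are i.i.d.\ Gaussian with covariance $\bSigma$ satisfying $\lambda_{\max}(\bSigma)\le \lammin$, each entry $(\bX_i\bX_i^\top - \bSigma)_{jk}$ is sub-exponential with constant parameter. A Bernstein-type inequality gives, for each fixed $(j,k)$, an exponential tail of order $\exp(-cn t^2)$ for $t \lesssim 1$. Taking $t = C\sqrt{\log d/n}$ with $C$ large enough and applying a union bound over the $d^2$ pairs yields the first inequality with the stated failure probability.

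Second, on the event in Step 1, I would show that the true $\bTheta$ is feasible for the CLIME program and hence obtain the second inequality as well as $\|\hat\bTheta\|_1 \le \|\bTheta\|_1 \le M$. The key identity is
\[
\|\hat\bSigma \bTheta - \Ib\|_{\max} = \|(\hat\bSigma - \bSigma)\bTheta\|_{\max} \le \|\hat\bSigma - \bSigma\|_{\max}\,\|\bTheta\|_1 \le CM\sqrt{\log d/n},
\]
where the inequality $\|AB\|_{\max}\le \|A\|_{\max}\|B\|_1$ is used. Thus for $\lambda \ge CM\sqrt{\log d/n}$, $\bTheta$ satisfies the CLIME constraint, so $\hat\bTheta$, being the $\ell_1$-minimizer, satisfies $\|\hat\bTheta\|_1 \le M$; feasibility of $\hat\bTheta$ itself gives $\|\hat\bSigma\hat\bTheta - \Ib\|_{\max}\le \lambda$, i.e.\ the second bound.

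Third, for the entrywise bound on $\hat\bTheta - \bTheta$, I would exploit $\bTheta\bSigma = \Ib$ to write $\hat\bTheta - \bTheta = \bTheta(\bSigma\hat\bTheta - \Ib)$ and decompose
\[
\bSigma\hat\bTheta - \Ib = (\bSigma - \hat\bSigma)\hat\bTheta + (\hat\bSigma\hat\bTheta - \Ib).
\]
Bounding each term in max norm via $\|AB\|_{\max}\le \|A\|_{\max}\|B\|_1$, and using Steps 1--2 along with $\|\hat\bTheta\|_1\le M$, gives $\|\bSigma\hat\bTheta - \Ib\|_{\max} \le CM\sqrt{\log d/n}$. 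Multiplying by $\bTheta$ and bounding again by $\|\bTheta\|_1\le M$ yields the third inequality (up to absorbing an $M$-factor into $C$). Finally, for the row-wise $\ell_1$ bound I would use the row sparsity $\|\bTheta_j\|_0\le s$ and the optimality of $\hat\bTheta$: letting $S_j = \mathrm{supp}(\bTheta_j)$, the triangle inequality gives $\|(\hat\bTheta_j - \bTheta_j)_{S_j}\|_1 \le \sqrt{s}\,\|\hat\bTheta_j - \bTheta_j\|_2 \le s\,\|\hat\bTheta_j - \bTheta_j\|_{\max}$, while $\|\hat\bTheta_j\|_1 \le \|\bTheta_j\|_1$ implies $\|(\hat\bTheta_j - \bTheta_j)_{S_j^c}\|_1 \le \|(\hat\bTheta_j - \bTheta_j)_{S_j}\|_1$, so $\|\hat\bTheta_j - \bTheta_j\|_1 \le 2s\,\|\hat\bTheta_j - \bTheta_j\|_{\max}\le CM\sqrt{s^2\log d/n}$.

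The main obstacle is the algebra of Step 3, where care is needed to pick the right identity (using $\bTheta\bSigma = \Ib$) so that the error decomposes into products admitting the $\|\cdot\|_{\max}$-$\|\cdot\|_1$ duality bound, and to track the dependence on $M$ and $r_0$ cleanly so that the final constant depends only on $\lammin$. Everything else is a fairly routine assembly of Gaussian concentration and the optimization-based feasibility/optimality argument.
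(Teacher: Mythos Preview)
Your proposal is correct and follows the standard CLIME analysis of \citet{cai2011constrained}, which is exactly what the paper invokes: the paper does not give an independent proof but simply remarks that the first bound is Equation~(26), the second is the CLIME feasibility constraint, and the third is Theorem~6 of that reference (the $\ell_1$ bound following by the same sparsity-plus-optimality argument you sketch). One cosmetic point: your Step~3 naturally produces an $M^2$ factor rather than $M$ in the max-norm bound, which is indeed what Theorem~6 of \citet{cai2011constrained} gives; the paper's stated $CM$ is slightly loose in its bookkeeping of the $M$-dependence, so your parenthetical about absorbing a factor is appropriate.
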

\begin{remark}
    Note the first inequality in \eqref{eq:clime-rates} directly follows from Equation (26) in \cite{cai2011constrained}, the second inequality follows from the constraint in the CLIME estimator and the third inequality holds due to Theorem 6 in \cite{cai2011constrained}.
\end{remark}

 Given a random variable $Z$, we define its $\psi_{\ell}$-norm for $\ell \ge 1$ as $\|Z\|_{\psi_{\ell}} = \sup_{p \ge 1} p^{-1/\ell} (\EEE|Z|^p)^{1/p}$. The following lemma controls the $\psi_{\ell}$-norm of $\bX$ and gives the lower bound of the variance of the debiased estimator.
 \begin{lemma}\label{lem:asp_moment}
    There exist universal constants $c$ and $C$ only depending on $\lammin$ in \eqref{eq:UM} such that
\begin{equation}\label{eq:moment}
 \sup_{\|\vb\|_2 = 1}\|\vb^{\top} \bSigma^{-1/2}\bX\|_{\psi_2}\le C  \text{ and } \min_{j,k \in [d]}  \EEE [(\bTheta_j^{\top} (\bX\bX^{\top} - \bSigma) \bTheta_k)^2 ] \ge c.
\end{equation}
\end{lemma}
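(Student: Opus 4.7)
My plan is to handle the two inequalities separately, both by exploiting that $\bX\sim N(0,\bSigma)$ so that the relevant quantities are explicit Gaussian moments.

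For the first inequality, I would observe that $\bSigma^{-1/2}\bX\sim N(0,\Ib)$, so for any unit vector $\vb$ the scalar $\vb^\top\bSigma^{-1/2}\bX$ is a standard Gaussian. Its $\psi_2$-norm is thus bounded by a universal constant (in fact, independent of $r_0$), which gives the claimed bound on $\sup_{\|\vb\|_2=1}\|\vb^\top\bSigma^{-1/2}\bX\|_{\psi_2}$.

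For the second inequality, the key step is to compute the variance of the quadratic form $\bTheta_j^\top\bX\bX^\top\bTheta_k=\sum_{a,b}\bTheta_{ja}\bTheta_{kb}X_aX_b$ using Isserlis' theorem. This yields
\begin{equation}\nonumber
\mathrm{Cov}(X_aX_b,X_cX_d)=\bSigma_{ac}\bSigma_{bd}+\bSigma_{ad}\bSigma_{bc},
\end{equation}
so after contracting the four $\bTheta$ factors one obtains
\begin{equation}\nonumber
\EEE\big[(\bTheta_j^\top(\bX\bX^\top-\bSigma)\bTheta_k)^2\big]=(\bTheta_j^\top\bSigma\bTheta_j)(\bTheta_k^\top\bSigma\bTheta_k)+(\bTheta_j^\top\bSigma\bTheta_k)^2.
\end{equation}
Because $\bSigma\bTheta=\Ib$, the identity $\bTheta_j^\top\bSigma\bTheta_k=\bTheta_{jk}$ holds, so the right-hand side simplifies to $\bTheta_{jj}\bTheta_{kk}+\bTheta_{jk}^2\ge \bTheta_{jj}\bTheta_{kk}$. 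Since $\bTheta\in\cU(M,s,r_0)$, the diagonal entries satisfy $\bTheta_{jj}\ge \lambda_{\min}(\bTheta)\ge 1/r_0$, which gives the lower bound $1/r_0^2$ uniformly in $j,k$ and establishes the claimed constant $c$.

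Neither step is really an obstacle once the moment identity is in place; the only thing one has to be slightly careful about is the Isserlis bookkeeping and the use of $\bSigma\bTheta=\Ib$ to convert $\bTheta_j^\top\bSigma\bTheta_k$ into the precision-matrix entry $\bTheta_{jk}$. After that the constants $C$ (from the standard Gaussian) and $c=1/r_0^2$ are immediate and depend only on $r_0$ as required.
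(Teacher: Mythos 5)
Your proof is correct and follows essentially the same route as the paper: both arguments reduce the first bound to the observation that $\vb^\top\bSigma^{-1/2}\bX\sim N(0,1)$, and both compute the variance of the quadratic form via Isserlis' theorem, use $\bSigma\bTheta=\Ib$ to reduce to $\bTheta_{jj}\bTheta_{kk}+\bTheta_{jk}^2$, and lower-bound the diagonal entries by $\lambda_{\min}(\bTheta)\ge 1/r_0$. The only cosmetic difference is that you carry out the Isserlis contraction at the level of $\mathrm{Cov}(X_aX_b,X_cX_d)$ while the paper works directly with $\EEE[(\ub^\top\bX)^2(\vb^\top\bX)^2]$.
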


\begin{proof}
The first inequality in \eqref{eq:moment} immediately follows since
  $\vb^{\top}\bSigma^{-1/2}\bX \sim N(0,1)$ for any $\|\vb\|_2 = 1$.
%   the first inequality in \eqref{eq:moment} is straightforward.
Regarding the second inequality, note that $\EEE [(\bTheta_j^{\top} (\bX\bX^{\top} - \bSigma) \bTheta_k)^2 ] = \var(\bTheta_j^{\top} \bX\bX^{\top} \bTheta_k)$. Below we calculate the expression of the general form $\Var{\ub^\top \bX \bX^\top \vb}$. Specifically, we apply Isserlis' theorem \cite{isserlis1918formula} to deal with the moments of Gaussian random variables. For any deterministic vectors $\ub, \vb \in \RR^d$, Isserlis' theorem says
\begin{align*}
% \nonumber 
\var(\ub^{\top} \bX \bX^{\top} \vb) 
&= \EEE[(\ub^{\top} \bX)^2(\vb^{\top}\bX)^2] - (\EEE[\ub^{\top} \bX \vb^{\top}\bX])^2 \\
% \nonumber 
&= \EEE[(\ub^{\top} \bX)^2]\EEE[(\vb^{\top}\bX)^2] + (\EEE[\ub^{\top} \bX \vb^{\top}\bX])^2 \\
&= (\ub^{\top} \bSigma \ub^{\top})(\vb^{\top} \bSigma \vb^{\top}) + (\ub^{\top} \bSigma \vb^{\top})^2. 
% \nonumber
%\label{eq:low-iss}
\end{align*}
Therefore, we obtain the following,
\[
 \EEE [(\bTheta_j^{\top} (\bX\bX^{\top} - \bSigma) \bTheta_k)^2 ] = (\bTheta_j^{\top} \bSigma \bTheta_j^{\top})(\bTheta_k^{\top} \bSigma \bTheta_k^{\top}) + (\bTheta_j^{\top} \bSigma \bTheta_k^{\top})^2 = \bTheta_{jj} \bTheta_{kk} + \bTheta_{jk}^2 \ge 1/\lammin^2,
\]
where the last inequality holds since $\lambda_{\min}(\bTheta) \ge 1/\lammin$ when $\bTheta \in \cU(M,s, r_0)$.
\end{proof}

Now we are ready to prove Lemma \ref{lem:quantile}. Note the proof of this lemma follows a similar idea as the one used in Proposition 3.1 of \cite{neykov2019combinatorial}. Since Lemma \ref{lem:quantile} involves
the standardized version of the one-step estimator in \cite{neykov2019combinatorial}, we still present the detailed proof for completeness.
% Lemma \ref{lem:quantile} in order to make the proof self-consistent.
\begin{proof}[Proof of Lemma \ref{lem:quantile}]
% \jlmargin{}{check the reference}
To approximate 
 \begin{equation}\label{eq:lem:TE}
   T_E := \max_{(j,k) \in E}  \sqrt{n}\left|(  \dTheta_{jk}/\sqrt{\dTheta_{jj} \dTheta_{kk}} - {\bTheta_{jk}}/{\sqrt{\bTheta_{jj}\bTheta_{jk}}})\right|,
 \end{equation}
 by the multiplier bootstrap process
 \begin{equation}
 \label{eq:lem:TEcB}
   T_{E}^{\cB} :=\max_{(j,k) \in E}  \frac{1}{\sqrt{n~ \hat{\bTheta}_{jj}\hat{\bTheta}_{kk} }} \bigg| \sum_{i=1}^n \hat{\bTheta}_{j}^{\top} (\bX_i \bX_i^{\top} \hat{\bTheta}_{k}- \eb_k)\xi_i \bigg|,
 \end{equation}
we define two intermediate processes
\begin{align}
\label{eq:lem:brTE}
  \breve{T}_E &:=  \max_{(j,k) \in E} \bigg| \frac{1}{\sqrt{n~ {\bTheta}_{jj}{\bTheta}_{kk}}}  \sum_{i=1}^n  {\bTheta}_{j}^{\top} (\bX_i \bX_i^{\top} {\bTheta}_{k}- \eb_k) \bigg|,
 \\ 
 \label{eq:lem:brTEcB}
 \breve{T}_{E}^{\cB} &:=  \max_{(j,k) \in E}\bigg| \frac{1}{\sqrt{n~ \bTheta_{jj}\bTheta_{kk}}}  \sum_{i=1}^n  {\bTheta}_{j}^{\top} (\bX_i \bX_i^{\top} {\bTheta}_{k}- \eb_k)\xi_i \bigg|.
\end{align}
The strategy of proving this lemma is to verify the three conditions in Corollary 3.1 of \cite{chernozhukov2013gaussian}:
\begin{enumerate}[(a)]
 \item $\min_{j,k}\mathbb{E}[ ({\bTheta}_{j}^{\top} (\bX \bX^{\top} {\bTheta}_{k}- \eb_k))^2]>c$ and $\max_{j,k \in [d]}  \norm{\bTheta_{j}^{\top} (\bX \bX^{\top} {\bTheta}_{k}- \eb_k)}_{\psi_1} \le C$ for some positive constants $c$ and $C$;
   \item $\PPP(|T_E - \breve{T}_E| > \zeta_1) < \zeta_2$ holds for some $\zeta_1 , \zeta_2 >0$;
   \item And $\PPP(\PPP_{\xi}(|T_{E}^{\cB}-\breve{T}_E^{\cB}| > \zeta_1\mid \{\bX_i\}_{i=1}^n)> \zeta_2) < \zeta_2$ holds for $\zeta_1 \sqrt{\log d}  + \zeta_2 = o(1)$.

\end{enumerate}

 Notice that in \cite{chernozhukov2013gaussian}, the original conditions require  the last scaling to be $\zeta_1 \sqrt{\log d}  + \zeta_2 = o(n^{-c_1})$ for some $c_1$. This is because
they pursue a stronger result that $|\PPP ( T_E> \hat c(\alpha, E) ) - \alpha| = O(n^{-c_1})$.
 Since we do not emphasize on the polynomial decaying in our result, we only require $\zeta_1 \sqrt{\log d}  + \zeta_2 = o(1)$.
%  and same for the scaling condition $(\log(dn))^7/n = o(1)$.
%%%%%%%%%%%%%%%%%%%%%%%%%%%%%%
% Start First Part
%%%%%%%%%%%%%%%%%%%%%%%%%%%%%

We start by checking the first condition (a). Lemma \ref{lem:asp_moment} immediately implies the first part. By the second condition in  \eqref{eq:moment}, we have $\|\bX_j \bX_k - \EEE[\bX_j \bX_k]\|_{\psi_1} \le C$. By the definition of the $\psi$-norms, we have
\begin{align}
\nonumber  \max_{j,k \in [d]}  \|{\bTheta}_{j}^{\top} (\bX_i \bX_i^{\top} {\bTheta}_{k}- \eb_k)\|_{\psi_1} &\le r_0^{2} \|(\bX_j \bX_k - \EEE[\bX_j \bX_k]) \|_{\psi_1} \\
  &\le r_0^{2}  \sup_{\|\vb\|_2 = 1}\| \vb^{\top}\bX \bX^{\top}\vb- \EEE[ \vb^{\top}\bX \bX^{\top}\vb]\|_{\psi_1} = O(1). 
\nonumber
\end{align}

% For any matrix $\bTheta = (\bTheta_{1}, \ldots, \bTheta_{d})  \in \RR^{d \times d}$, we define $\bTheta_{-j} := (\bTheta_{1}, \ldots, \bTheta_{j-1},\bTheta_{j+1}, \ldots,\bTheta_{d})\in \RR^{d \times (d-1)}$ being the submatrix of $\bTheta$.

Regarding the condition (b), we check by bounding the difference $|T_E - \breve{T}_E|$. Recall the one-step estimator 
$$
\dTheta_{jk} = \hat{\bTheta}_{jk} - \frac{\hat{\bTheta}_{j}^{\top} \left( \hat{\bSigma} \hat{\bTheta}_k - \eb_k\right)}{\hat{\bTheta}_j^{\top} \hat{\bSigma}_j},
$$
and plug it into $T_E$. Then we have the following bound,
\begin{eqnarray}\label{eq:TET0_expand} \nonumber
    |T_E - \breve{T}_E|
    &=& \left| \underset{(j,k)\in E}{\max} \; \sqrt{n} \cdot \acc{\bigg|
    \frac{\dTheta_{jk}}{\sqrt{\dTheta_{jj} \dTheta_{kk}}}
    -\frac{\bTheta_{jk}}{\sqrt{\bTheta_{jj}\bTheta_{kk}} } \bigg|}
    -  \max_{(j,k) \in E}  \frac{\sqrt{n}}{\sqrt{{\bTheta}_{jj}{\bTheta}_{kk}}} 
    \acc{\bigg| {\bTheta}_{j}^{\top} (\hat{\bSigma} {\bTheta}_{k}- \eb_k) 
    \bigg|} \right |\\
    &\le&   \frac{\mathrm{I}_1\mathrm{I}_2 }{\underset{(j,k)\in E}{\min}\sqrt{{\bTheta}_{jj}{\bTheta}_{kk} } } + \frac{\mathrm{I}_3}{\underset{(j,k)\in E}{\min}\sqrt{\acc{\dTheta_{jj}\dTheta_{kk} }}},
\end{eqnarray}
where $ \mathrm{I}_1 =  \underset{(j,k)\in E}{\max}|\dTheta_{jj}\dTheta_{kk} - \bTheta_{jj}\bTheta_{kk} |$, $ \mathrm{I}_2 = \underset{(j,k)\in E}{\max}| \sqrt{n} \cdot  {\bTheta}^{\top}_{j}\big(\hat{\bSigma}    {\bTheta}_k - \eb_k\acc{\big)}| $ and 
$$
\mathrm{I}_3 = \max_{(j,k) \in E} \Big| \sqrt{n}(\dTheta_{jk} - \bTheta_{jk})- \sqrt{n} \cdot
{\bTheta}^{\top}_{j}\big(\hat{\bSigma}    {\bTheta}_k - \eb_k\big)\Big|.
$$
Note $ \mathrm{I}_1 $ can be bounded using Lemma \ref{lem:max-debias}, i.e.,
\begin{eqnarray}\label{eq:TET0_II1}
    \mathrm{I}_1 
    &=& \underset{(j,k)\in E}{\max}|\dTheta_{jj}\dTheta_{kk} - \bTheta_{jj}\bTheta_{kk}|
    \le 2M  \big\|\dTheta - {\bTheta} \big\|_{\max} \le  CM^2\sqrt{\frac{\log d}{n}},
\end{eqnarray}
with probability $1-1/d^2$. As for the term $\mathrm{I}_2$, we have
\begin{eqnarray}\nonumber
    \mathrm{I}_2 
    = \underset{(j,k)\in E}{\max}\left| \sqrt{n}  {\bTheta}^{\top}_{j}\big(\hat{\bSigma}    {\bTheta}_k - \eb_k \big)\right|
    &=& \underset{(j,k)\in E}{\max} \sqrt{n}  \left| {\bTheta}^{\top}_{j}\big(\hat{\bSigma}   -\bSigma \big) {\bTheta}_k\right| \\
    &\le&  \sqrt{n}  M^2  \big\|\hat{\bSigma} - {\bSigma} \big\|_{\max}\le  CM^2 \sqrt{{\log d}}.
    \label{eq:TET0_II2} 
\end{eqnarray}
Denote ${\cTheta}_k=  (\hat\bTheta_{k1}, \ldots,\hat\bTheta_{k(j-1)}, \bTheta_{kj}, \hat\bTheta_{k(j+1)}, \ldots, \hat\bTheta_{kd})^{\top} \in \RR^{d}$.
% \begin{align*}
% % \bgamma^* &= (\bTheta_{k1}, \ldots,\bTheta_{k(j-1)},\bTheta_{k(j+1)} \ldots \bTheta_{kd})^{\top} \in \RR^{d-1}\text{ and } \\
%  \tilde{\bTheta}_k&=  (\hat\bTheta_{k1}, \ldots,\hat\bTheta_{k(j-1)}, \bTheta_{kj}, \hat\bTheta_{k(j+1)}, \ldots, \hat\bTheta_{kd})^{\top} \in \RR^{d}.
% \end{align*}
To deal with the term $\mathrm{I}_3$,
%$\sqrt{n} ( \tilde{\bTheta}^d_{jk} - {\bTheta_{jk}}/{\sqrt{ \bTheta_{jj}\bTheta_{kk}}})$, 
we first rewrite the following
\begin{equation}
\label{eq:rewrite_bias}
   \sqrt{n} (\dTheta_{jk} - {\bTheta}_{jk}) = - \sqrt{n} \cdot \frac{ \hat{\bTheta}^{\top}_{j}\big(\hat{\bSigma}  {\cTheta}_k - \eb_k\big) }{\hat{\bTheta}_{j}^{\top} \hat{\bSigma}_j},
\end{equation}
then quantify $\sqrt{n}\hat{\bTheta}^{\top}_{j}\big(\hat{\bSigma}  {\cTheta}_k - \eb_k^{\top}\big)$. Notice that
\begin{equation}\label{eq:omega-e1}
 \sqrt{n} \cdot \hat{\bTheta}^{\top}_{j}\big(\hat{\bSigma}  {\cTheta}_k - \eb_k^{\top}\big) =  \underbrace{\sqrt{n} \cdot  \hat{\bTheta}^{\top}_{j}\big(\hat{\bSigma}    {\bTheta}_k - \eb_k^{\top}\big)}_{\mathrm{II}_1} +  \underbrace{\sqrt{n} \cdot  \hat{\bTheta}^{\top}_{j}\hat{\bSigma} \big(  {\cTheta}_k -  {\bTheta}_k \big)}_{\mathrm{II}_2}.
\end{equation}
Further we expand $\mathrm{II}_1$ as
\begin{equation}\label{eq:omega-e2}
    \mathrm{II}_1 = \underbrace{\sqrt{n} \cdot  {\bTheta}^{\top}_{j}\big(\hat{\bSigma}    {\bTheta}_k - \eb_k\big)}_{\mathrm{II}_{11}} +  \underbrace{\sqrt{n} \cdot \big(\hat{\bTheta}^{\top}_{j} - {\bTheta}^{\top}_{j}\big)\big(\hat{\bSigma}  {\bTheta}_k - \eb_k\big)}_{\mathrm{II}_{12}},
\end{equation}
where $\mathrm{II}_{11}$ can be rewritten as $\mathrm{II}_{11} = \frac{1}{\sqrt{n}}  \sum_{i=1}^n  {\bTheta}_{j}^{\top} (\bX_i \bX_i^{\top} {\bTheta}_{k}- \eb_k) $. We bound $|\mathrm{II}_{12}|$ as
% which is same as what is inside of the maximum of $T_0$.We next bound $I_{12}$ by H\"{o}lder inequality,
\begin{align} \label{eq:expan-I12-rate}
 |\mathrm{II}_{12}| &= \sqrt{n} \cdot \big(\hat{\bTheta}_{j} - {\bTheta}_{j}\big)^{\top}\big(\hat{\bSigma}   - {\bSigma}  \big) {\bTheta}_k\le  \sqrt{n} \cdot \big\|\hat{\bTheta}_{j} - {\bTheta}_{j}\big\|_1\big\|\hat{\bSigma} - {\bSigma} \big\|_{\max}\|\bTheta_k\|_{1}.
\end{align}
According to Lemma \ref{lem:clime_rate}, \eqref{eq:expan-I12-rate}  yields that
\begin{equation}\label{eq:r2-e1}
\max_{ j,k \in [d]}|\mathrm{II}_{12}| \lesssim M^2\frac{s\log d}{\sqrt{n}},
\end{equation}
with probability $1-1/d^2$. By H\"{o}lder's inequality and Lemma \ref{lem:clime_rate}, we finally obtain the bound on $\mathrm{II}_{2}$:
\begin{equation}\label{eq:r2-e2}
 \max_{ j,k \in [d]}|\mathrm{II}_{2}| \le\sqrt{n} \cdot   \max_{ j,k \in [d]}\|\hat{\bTheta}^{\top}_{j}\hat{\bSigma}_{-j}  \|_{\infty}  \big\|\hat{\bTheta}_{k} - {\bTheta}_{k}\big\|_1  \lesssim M^2\frac{s\log d}{\sqrt{n}},
\end{equation}
 with probability $1-1/d^2$.
Therefore, we conclude that by \eqref{eq:r2-e1} and \eqref{eq:r2-e2}, with probability $1-1/d^2$, the following holds:
\begin{align}\label{eq:r2-e3}
  &\max_{j,k \in [d]}\sqrt{n}\cdot \Big| \hat{\bTheta}^{\top}_{j}\big(\hat{\bSigma}  {\cTheta}_k - \eb_k^{\top}\big) -  \bTheta_{j}^{\top} \Big(\hat{\bSigma}\bTheta_{k} - \eb_k^{\top} \Big)\Big| \lesssim M^2\frac{s\log d}{\sqrt{n}}.
\end{align}
Lemma \ref{lem:clime_rate} also implies
\begin{align}\label{eq:r2-e4}
 \max_{j \in [d]}|\hat{\bTheta}_{j}^{\top} \hat{\bSigma}_j - 1 | & \le \max_{j \in [d]} \norm{\hat \bTheta_j^{\top}\hat \bSigma - \eb_j}_{\infty} \lesssim M\sqrt{\frac{\log d}{n}}.
\end{align}
Combining \eqref{eq:omega-e1}, \eqref{eq:omega-e2} with \eqref{eq:r2-e3} and \eqref{eq:r2-e4}, for sufficiently large $d, n$, we have, with probability $1-1/d^2$, the following holds:
\begin{align}
\nonumber \mathrm{I}_{3} & \le \max_{(j,k) \in E} \sqrt{n}\Big|   \frac{ \hat{\bTheta}^{\top}_{j}\big(\hat{\bSigma}  {\cTheta}_k - \eb_k\big) }{\hat{\bTheta}_{j}^{\top} \hat{\bSigma}_j} - {\bTheta}^{\top}_{j}\big(\hat{\bSigma}    {\bTheta}_k - \eb_k\big)\Big|\\
\nonumber &    
\le \max_{(j,k) \in E} \big (2 \sqrt{n} |\hat{\bTheta}_{j}^{\top} \hat{\bSigma}_j - 1 |\cdot | {\bTheta}^{\top}_{j}\big(\hat{\bSigma}  - \bSigma\big)   {\bTheta}_k | \big)
+ 2 \max_{(j,k) \in E} |\hat{\bTheta}^{\top}_{j}\big(\hat{\bSigma}  {\cTheta}_k - \eb_k\big) - {\bTheta}^{\top}_{j}\big(\hat{\bSigma}    {\bTheta}_k - \eb_k\big)|\\
 & \le2M \sqrt{n}  \max_{j \in [d]}|\hat{\bTheta}_{j}^{\top} \hat{\bSigma}_j - 1 | \cdot \norm{\hat \bSigma - \bSigma}_{\max} +2\max_{j, k \in [d]} (|I_{12}| +|I_2|)
 \lesssim M^2\frac{s\log d}{\sqrt{n}} , \label{eq:TET0_II3}
 \end{align}
 where the second inequality uses $|x/(1+\delta) - y| \le 2|y\delta| + 2 |x-y|$ for any $|\delta| < 1/2$.
Therefore, combining \eqref{eq:TET0_expand}, \eqref{eq:TET0_II1},\eqref{eq:TET0_II2} with \eqref{eq:TET0_II3} and the fact $\underset{(j,k)\in E}{\min}\sqrt{{\bTheta}_{jj}{\bTheta}_{kk} }\ge \lambda_{\min}(\bTheta)\ge 1/\lammin$ (as $\bTheta \in \cU(M,s, r_0)$), we obtain the following:
\begin{equation} \label{eq:zeta12_rate1}
\PPP(|T_E - \breve{T}_E| > \zeta_1) < \zeta_2,
\end{equation}
where $\zeta_1 = {s\log d}/\sqrt{n}$ and $\zeta_2 = 1/d^2$; thus the condition (b) is verified. Also note that $\zeta_1 \sqrt{\log d}  + \zeta_2 = {s(\log d)^{3/2}}/\sqrt{n} + 1/d^2  = o(1)$ holds under the stated scaling condition of Lemma \ref{lem:quantile}. 
%%%%%%%%%%%%%%%%%
% End
%%%%%%%%%%%%%%%%%%

Regarding the third condition (c), we bound the difference between $T_{E}^{\cB}$ and $\breve{T}_E^{\cB}$ as
% \begin{eqnarray}
%      |T^B - T_0^B| \le \frac{\mathrm{III}_1\mathrm{III}_2 }{\underset{(j,k)\in E}{\min}\sqrt{{\bTheta}_{jj}{\bTheta}_{kk} } } + \frac{\mathrm{III}_3}{\underset{(j,k)\in E}{\min}\sqrt{\hat{\bTheta}_{jj}\hat{\bTheta}_{kk} }}
% \end{eqnarray}
% where $ \mathrm{III}_1 =  \underset{(j,k)\in E}{\max}|\hat{\bTheta}_{jj}\hat{\bTheta}_{kk} - \bTheta_{jj}\bTheta_{kk} |$ and $ \mathrm{III}_2 = \frac{1}{\sqrt{n}} \sum_{i=1}^n  {\bTheta}^{\top}_{j}\big(\bX_i \bX_i^{\top}  {\bTheta}_k - \eb_k\big)\xi_i $ and 
$$
     |T_E^{\cB} - \breve{T}_E^{\cB}| \le \max_{(j,k) \in E}\Big| \frac{1}{\sqrt{n}} \sum_{i=1}^n
     \Big( \frac{   \hat{\bTheta}^{\top}_{j} }{\sqrt{ \hat{\bTheta}_{jj}\hat{\bTheta}_{kk} }} 
     \big(\bX_i \bX_i^{\top}  \hat{\bTheta}_k - \eb_k\big) - 
     \frac{\bTheta^{\top}_{j}}{ \sqrt{{\bTheta}_{jj}{\bTheta}_{kk} } }\big(\bX_i \bX_i^{\top}  {\bTheta}_k - \eb_k\big)\Big)\xi_i  \Big|
$$

Conditioning on the data $\{\bX_i \}_{i=1}^n$, the right hand side of the above inequality is a suprema of a Gaussian process. Therefore, we need to bound the following conditional variance
\begin{align*}
  {\max_{(j,k) \in E} \frac{1}{n}  \sum_{i=1}^n \Big[\frac{   \hat{\bTheta}^{\top}_{j} }{\sqrt{ \hat{\bTheta}_{jj}\hat{\bTheta}_{kk} }} \big(\bX_i \bX_i^{\top}  \hat{\bTheta}_k - \eb_k\big) - \frac{\bTheta^{\top}_{j}}{ \sqrt{{\bTheta}_{jj}{\bTheta}_{kk} } }\big(\bX_i \bX_i^{\top}   {\bTheta}_k - \eb_k \big)\Big]^2}
%   & \le 2\frac{\mathrm{III}_1\mathrm{III}_2 }{\underset{(j,k)\in E}{\min}{{\bTheta}_{jj}{\bTheta}_{kk} }} + 2\frac{\mathrm{III}_3}{\underset{(j,k)\in E}{\min}{\hat{\bTheta}_{jj}\hat{\bTheta}_{kk} }} 
\end{align*}
Note the summand (for each $i$) can be bounded by 
$$
2\frac{\mathrm{III}_1\mathrm{III}_2 }{\underset{(j,k)\in E}{\min}{{\bTheta}_{jj}{\bTheta}_{kk} }} + 2\frac{\mathrm{III}_3}{\underset{(j,k)\in E}{\min}{\hat{\bTheta}_{jj}\hat{\bTheta}_{kk} }} 
$$
where $\mathrm{III}_1,\mathrm{III}_2$ and $\mathrm{III}_3$ are defined and bounded as below:
\begin{eqnarray} \label{eq:snode_III1}
    \mathrm{III}_1 &:=&  \underset{(j,k)\in E}{\max}|\hat{\bTheta}_{jj}\hat{\bTheta}_{kk} - \bTheta_{jj}\bTheta_{kk} |^2  \le \left( CM^2\sqrt{\frac{\log d}{n}}\right)^2\\ \nonumber
    \mathrm{III}_2 
    &:=& \underset{(j,k)\in E}{\max}[{\bTheta}^{\top}_{j}\big(\bX_i \bX_i^{\top}  {\bTheta}_k - \eb_k\big)]^2 
    = \underset{(j,k)\in E}{\max}[{\bTheta}^{\top}_{j}\big(\bX_i \bX_i^{\top}  - \bSigma \big){\bTheta}_k]^2 \\ \label{eq:snode_III2}
    &\le& \Big[M^2 \max_i\|\bX_i\bX_i^{\top}- \bSigma\|_{\max} \Big]^2\\ \nonumber
    \mathrm{III}_3
    & =& \max_{(j,k) \in E}  \Big|\hat{\bTheta}^{\top}_{j}\big(\bX_i \bX_i^{\top}  \hat{\bTheta}_k - \eb_k\big) - {\bTheta}^{\top}_{j}\big(\bX_i \bX_i^{\top}   {\bTheta}_k - \eb_k \big)\Big|^2\\ \label{eq:snode_III3}
   &\lesssim& \Big[2M\|\hat\bTheta- \bTheta\|_1 \max_i\|\bX_i\bX_i^{\top}- \bSigma\|_{\max} \Big]^2.
\end{eqnarray}
% $$ \mathrm{III}_1 =  \underset{(j,k)\in E}{\max}|\hat{\bTheta}_{jj}\hat{\bTheta}_{kk} - \bTheta_{jj}\bTheta_{kk} |^2 = \mathrm{II}^2_1 \le \left( CM^2\sqrt{\frac{\log d}{n}}\right)^2,
% $$
% \begin{eqnarray}
% \mathrm{III}_2 
% &=& \underset{(j,k)\in E}{\max}[{\bTheta}^{\top}_{j}\big(\bX_i \bX_i^{\top}  {\bTheta}_k - \eb_k\big)]^2 \\
% &=& \underset{(j,k)\in E}{\max}[{\bTheta}^{\top}_{j}\big(\bX_i \bX_i^{\top}  - \bSigma \big){\bTheta}_k]^2 \\
% &\le& \Big[M^2 \max_i\|\bX_i\bX_i^{\top}- \bSigma\|_{\max} \Big]^2
% \end{eqnarray}
% and 
% \begin{eqnarray}
%     \mathrm{III}_3
%     & =& \max_{(j,k) \in E}  \Big|\hat{\bTheta}^{\top}_{j}\big(\bX_i \bX_i^{\top}  \hat{\bTheta}_k - \eb_k\big) - {\bTheta}^{\top}_{j}\big(\bX_i \bX_i^{\top}   {\bTheta}_k - \eb_k \big)\Big|^2\\
%   &\lesssim& \Big[2M\|\hat\bTheta- \bTheta\|_1 \max_i\|\bX_i\bX_i^{\top}- \bSigma\|_{\max} \Big]^2.
% \end{eqnarray}
According to Lemma \ref{lem:asp_moment}, we have with probability $1-1/d^2$,
\begin{equation}\label{eq:covmat_max_bound}
    \max_i\|\bX_i\bX_i^{\top}- \bSigma\|_{\max} \le C\sqrt{\log (dn)}.
\end{equation}
Therefore, the event
$$
  \cE = \Big\{ \max_{(j,k) \in E} \frac{1}{n}  \sum_{i=1}^n \Big[\frac{   \hat{\bTheta}^{\top}_{j} }{\sqrt{ \hat{\bTheta}_{jj}\hat{\bTheta}_{kk} }} \big(\bX_i \bX_i^{\top}  \hat{\bTheta}_k - \eb_k\big) - \frac{\bTheta^{\top}_{j}}{ \sqrt{{\bTheta}_{jj}{\bTheta}_{kk} } }\big(\bX_i \bX_i^{\top}   {\bTheta}_k - \eb_k \big)\Big]^2 \le CM^2\frac{(s\log (dn))^2}{n} \Big\}
$$
satisfies $\PPP(\cE^c)< 1/d^2$.
Therefore, by the maximal inequality, under the event $\cE$,  we have
\begin{eqnarray} \nonumber
&~~&\Ec{ \max_{(j,k) \in E} \frac{1}{\sqrt{n}} \sum_{i=1}^n \Big( \frac{   \hat{\bTheta}^{\top}_{j} }{\sqrt{ \hat{\bTheta}_{jj}\hat{\bTheta}_{kk} }} 
     \big(\bX_i \bX_i^{\top}  \hat{\bTheta}_k - \eb_k\big) - 
     \frac{\bTheta^{\top}_{j}}{ \sqrt{{\bTheta}_{jj}{\bTheta}_{kk} } }\big(\bX_i \bX_i^{\top}  {\bTheta}_k - \eb_k\big)\Big)\xi_i}{ \{\bX_i\}_{i=1}^n} \\ \nonumber
&\lesssim& M^2 \frac{(s \log dn)\sqrt{\log d}}{\sqrt{n}}.
\end{eqnarray}
Applying Borell's inequality, we  have with probability $1-1/d^2$,
\begin{eqnarray} \nonumber
\Pc{ \max_{(j,k) \in E} \frac{1}{\sqrt{n}} \sum_{i=1}^n \left( \frac{   \hat{\bTheta}^{\top}_{j} \big(\bX_i \bX_i^{\top}  \hat{\bTheta}_k - \eb_k\big)}{\sqrt{ \hat{\bTheta}_{jj}\hat{\bTheta}_{kk} }} 
      - \frac{\bTheta^{\top}_{j} \big(\bX_i \bX_i^{\top}  {\bTheta}_k - \eb_k\big)}{ \sqrt{{\bTheta}_{jj}{\bTheta}_{kk} } }\right)\xi_i  >  C \sqrt{\frac{s^2 \log^{4} dn}{n}}}{\{\bX_i\}_{i=1}^n} 
      %\\ \nonumber Given a random variable $Z$, we define its $\psi_{\ell}$-norm for $\ell \ge 1$ as $\|Z\|_{\psi_{\ell}} = \sup_{p \ge 1} p^{-1/\ell} (\EEE|Z|^p)^{1/p}$.
    \le 1/d^2.
\end{eqnarray}
This implies that
\[
 \PP{\PPP_{\xi}\big(|T^{\cB}_E - \breve{T}_E^{\cB}|> \sqrt{(s^2 \log^{4} dn)/n} \mid  \{\bX_i\}_{i=1}^n\big) > 1/d^2}<1/d^2.
\]
Now we can verify the condition (c) by showing
\begin{equation}
	\PPP(\PPP_{\xi}(|T_{E}^{\cB}-\breve{T}_E^{\cB}| > \zeta_1\mid \{\bX_i\}_{i=1}^n)> \zeta_2) < \zeta_2,
	\label{eq:zeta12_rate2}
\end{equation}
where $\zeta_1 = {s(\log d)^2}/\sqrt{n}$, $\zeta_2 = 1/d^2$ and the condition $\zeta_1 \sqrt{\log d}  + \zeta_2 = {s(\log d)^{3/2}}/\sqrt{n} + 1/d^2  = o(1)$ holds under the stated scaling condition of Lemma \ref{lem:quantile}. Therefore, by Corollary 3.1 of \cite{chernozhukov2013gaussian}, we have
\begin{equation}\label{eq:quantile-app}
\lim_{(n,d)\rightarrow \infty} |\PPP ( T_E> \hat c(\alpha, E) ) - \alpha| = 0.
\end{equation}
And it holds for any edge set $E$, thus the proof is complete.
% Since the tail probability is independent to the edge set $E$,
% the proof is complete.
\end{proof}
\begin{lemma}
\label{lem:max-debias}
Under the same conditions as Lemma \ref{lem:quantile}, we have
\begin{equation}\label{eq:max-debias}
   \PPP\Big( \max_{j,k \in [d]} |\dTheta_{jk} -  \bTheta _{jk}| > C_0 \sqrt{\frac{\log d}{n}}\Big) <\frac{2}{d^2},
\end{equation}
for some constant $C_0 >0$.
\end{lemma}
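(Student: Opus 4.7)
The plan is to recycle the algebraic decomposition already derived in the proof of Lemma \ref{lem:quantile} and then control the dominant stochastic term by a standard Bernstein-plus-union-bound argument. Starting from the identity \eqref{eq:rewrite_bias}, namely
\[
\sqrt{n}(\dTheta_{jk}-\bTheta_{jk}) = -\sqrt{n}\cdot\frac{\hat\bTheta_j^\top(\hat\bSigma\cTheta_k-\eb_k)}{\hat\bTheta_j^\top\hat\bSigma_j},
\]
I would plug in the expansions \eqref{eq:omega-e1}--\eqref{eq:omega-e2}, which identify the \emph{leading} term $\mathrm{II}_{11}=\sqrt{n}\,\bTheta_j^\top(\hat\bSigma-\bSigma)\bTheta_k$ (using $\bSigma\bTheta_k=\eb_k$) and the two \emph{remainder} terms $\mathrm{II}_{12}$ and $\mathrm{II}_2$. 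The bounds \eqref{eq:r2-e1}, \eqref{eq:r2-e2} and \eqref{eq:r2-e4} already tell us that
\[
\max_{j,k}|\mathrm{II}_{12}|+\max_{j,k}|\mathrm{II}_2| \;\lesssim\; M^2\frac{s\log d}{\sqrt{n}},\qquad \max_j|\hat\bTheta_j^\top\hat\bSigma_j-1|\lesssim M\sqrt{\log d/n},
\]
with probability at least $1-1/d^2$. Under the scaling condition $s^2(\log dn)^4/n=o(1)$ these remainders are of strictly smaller order than $\sqrt{\log d/n}$, so only the leading term needs careful handling.

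For the leading piece, observe that
\[
\bTheta_j^\top(\hat\bSigma-\bSigma)\bTheta_k \;=\; \frac{1}{n}\sum_{i=1}^n \bigl(\bTheta_j^\top \bX_i\bX_i^\top \bTheta_k - \bTheta_{jk}\bigr),
\]
a centered average of $n$ i.i.d.\ random variables. By Lemma \ref{lem:asp_moment} together with boundedness of $\|\bTheta_j\|_1\le M$, the summands are sub-exponential with $\psi_1$-norm uniformly bounded by a constant (depending on $M,r_0$). Applying Bernstein's inequality for sub-exponential variables gives
\[
\PPP\Bigl(\bigl|\bTheta_j^\top(\hat\bSigma-\bSigma)\bTheta_k\bigr| > t\Bigr) \;\le\; 2\exp\bigl(-c\,n\min\{t^2,t\}\bigr),
\]
and a union bound over the $d^2$ pairs $(j,k)$ with $t=C_0'\sqrt{\log d/n}$ (for $C_0'$ large enough) yields
\[
\max_{j,k}\bigl|\bTheta_j^\top(\hat\bSigma-\bSigma)\bTheta_k\bigr| \;\le\; C_0'\sqrt{\log d/n}
\]
with probability at least $1-1/d^2$.

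Combining the two pieces with the elementary bound $|x/(1+\delta)-y|\le 2|y\delta|+2|x-y|$ used in \eqref{eq:TET0_II3} to absorb the normalization $\hat\bTheta_j^\top\hat\bSigma_j\approx 1$, we obtain on the intersection of the three high-probability events,
\[
\max_{j,k}\bigl|\dTheta_{jk}-\bTheta_{jk}\bigr| \;\le\; C_0'\sqrt{\tfrac{\log d}{n}} + C\,M^2\frac{s\log d}{n} \;\le\; C_0\sqrt{\tfrac{\log d}{n}},
\]
the last inequality holding for all sufficiently large $n,d$ thanks to the scaling assumption. A final union bound over the (at most three) failure events in Lemma \ref{lem:clime_rate} and Bernstein's inequality absorbs into the factor $2/d^2$ in \eqref{eq:max-debias}. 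The only non-routine step is verifying the sub-exponential norm of $\bTheta_j^\top\bX\bX^\top\bTheta_k-\bTheta_{jk}$ uniformly in $j,k$, which is where the assumption $\bTheta\in\cU(M,s,r_0)$ and Lemma \ref{lem:asp_moment} enter crucially.
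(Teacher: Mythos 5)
Your proof is correct and takes essentially the same route as the paper: you reuse the algebraic decomposition from the proof of Lemma~\ref{lem:quantile} to isolate the leading stochastic term $\bTheta_j^\top(\hat\bSigma-\bSigma)\bTheta_k$, verify it is a centered average of i.i.d.\ sub-exponential variables via Lemma~\ref{lem:asp_moment}, and control its maximum over $(j,k)$ at rate $\sqrt{\log d/n}$. The only cosmetic difference is that you apply Bernstein's inequality together with a union bound over the $d^2$ pairs, whereas the paper invokes the Orlicz maximal inequality (Lemma~2.2.2 of van der Vaart and Wellner); these are interchangeable and yield the same conclusion. One tiny slip: to bound $\|\bTheta_j^\top \bX\bX^\top\bTheta_k\|_{\psi_1}$ uniformly, the paper uses the spectral bound $\|\bTheta\|_2\le r_0$ (so the $\psi_1$-norm is $O(r_0^2)$, dimension-free), not the $\ell_1$ bound $\|\bTheta_j\|_1\le M$ that you cite; both work here since $M$ is a constant, but the $\ell_2$ route is what the paper's Lemma~\ref{lem:asp_moment} is actually tailored to.
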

\begin{proof}
By \eqref{eq:rewrite_bias} and \eqref{eq:TET0_II3}, we have with probability $1-1/d^2$,
\[
\max_{j,k \in [d]} |\dTheta_{jk} - \bTheta _{jk} +{\bTheta}^{\top}_{j}\big(\hat{\bSigma}    {\bTheta}_k - \eb_k\big)| \le C_1 \frac{s\log d}{n}.
\]
By Lemma \ref{lem:asp_moment} and $\|\bTheta\|_{2} \le r_0$, we have $\|{\bTheta}^{\top}_{j} \bX  \bX ^{\top} {\bTheta}_{k}\|_{\psi_1} \le C_2 r_0^2$. Applying the maximal inequality (Lemma 2.2.2 in \cite{vanderVaart1996Weak}), we have for some constant $C_3>0$
\begin{align*}
%  \lefteqn{
~&~~~~ \PPP\Big(\max_{j,k \in [d]} |{\bTheta}^{\top}_{j}\big(\hat{\bSigma}    {\bTheta}_k - \eb_k\big)| >C_3 r_0^2\sqrt{\frac{\log d}{n}}\Big)
%  }
\\
~&~ \le \PPP\Big(\max_{j,k \in [d]} \Big|\frac{1}{n}\sum_{i=1}^n ({\bTheta}^{\top}_{j} \bX_i  \bX_i^{\top} {\bTheta}_{k} - \EEE[{\bTheta}^{\top}_{j} \bX_i  \bX_i^{\top} {\bTheta}_{k} )]\Big| > C_3 r_0^2\sqrt{\frac{\log d}{n}}\Big) \le 1/d^2.
\end{align*}
With $C_0 = C_1 + C_3$, \eqref{eq:max-debias} is proved. And it is not hard to show a similar result for the standardized one-step estimator also holds, i.e.,
\begin{equation}
\label{eq:max-debias-scaled}
   \PPP\Big( \max_{j,k \in [d]} |\tdTheta_{jk} -  \sTheta _{jk}| > C_0' \sqrt{\frac{\log d}{n}}\Big) <\frac{2}{d^2}
\end{equation}
for some constant $C_0'>0$.
\end{proof}
\section{Tables and plots deferred from the main paper}
\label{app:plots_tables}
\subsection{Graph pattern demonstration}
\label{app:graph_pattern}
\begin{figure}[htbp]
    \centering
    \includegraphics[width = 0.75\linewidth]{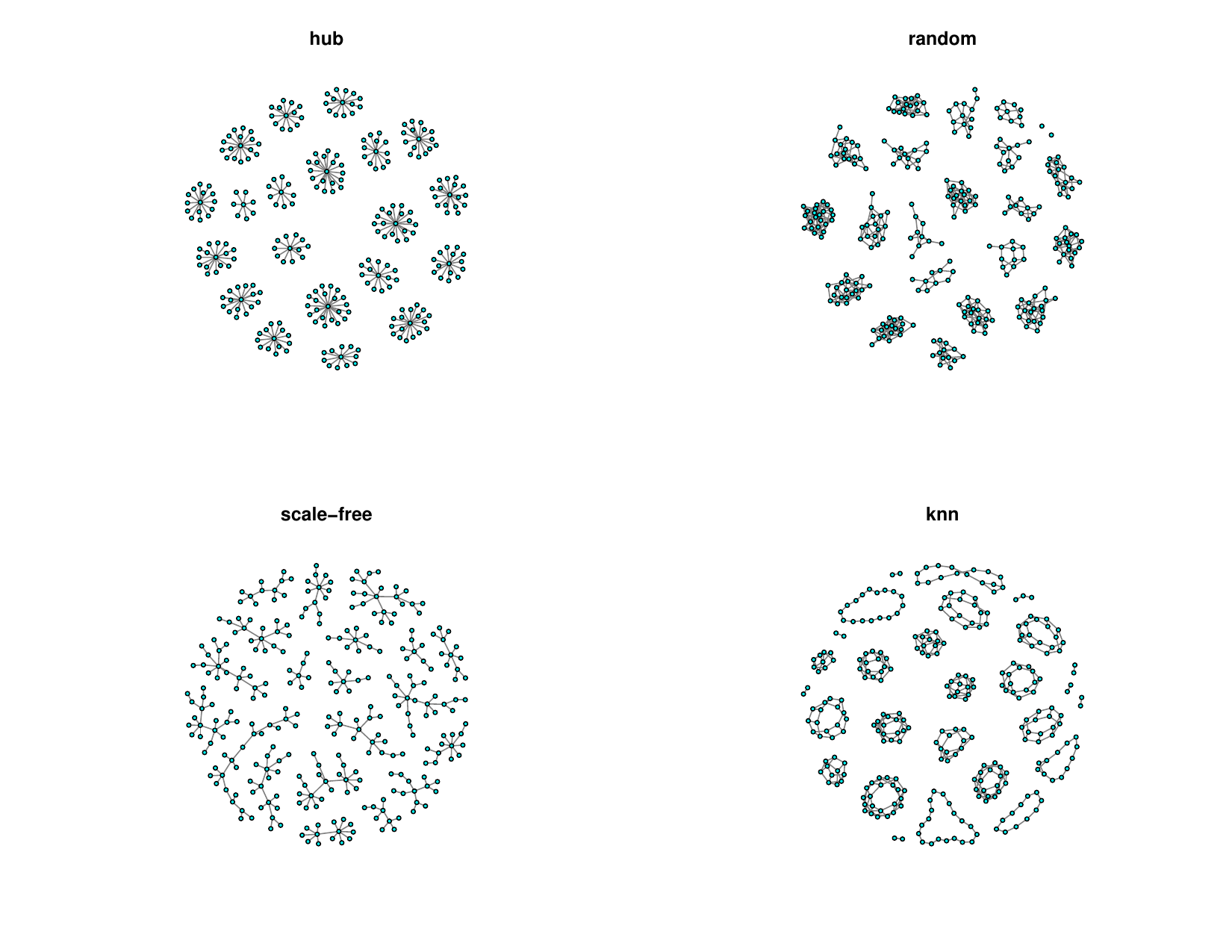}
\end{figure}
\subsection{Tables of $q\frac{d_0}{d}$}
\begin{table}[hptb]
% \small
\addtolength{\tabcolsep}{-4pt}
\begin{center}
    \caption{{\bf \rev{$q \frac{d_0}{d}$}}}
    \begin{tabular}{
    @{\hspace{1.2em}}c 
    @{\hspace{1.2em}}c
    @{\hspace{1em}}c
    @{\hspace{1em}}c|
    @{\hspace{1em}}c
    @{\hspace{1em}}c
    @{\hspace{1em}} c
    @{\hspace{1em}} c
    @{\hspace{1em}} c 
    @{\hspace{1.2em}} }
    \toprule
    $d= 300~$ &\multicolumn{3}{c}{$q=0.1$}&\multicolumn{3}{c}{$q=0.2$} \\
    \hline
    $n$ &200 &300 &400
        &200 &300 &400\\
    \hline & \multicolumn{5}{c}{$\quad\quad p=20$}&\multicolumn{1}{c}{} \\
    \hline
% hub & 0.0930 & 0.0930 & 0.0930 & 0.1870 & 0.1870 & 0.1870 \\ 
%   random & 0.0620 & 0.0610 & 0.0600 & 0.1230 & 0.1220 & 0.1200 \\ 
%   scale-free & 0.0810 & 0.0810 & 0.0810 & 0.1620 & 0.1630 & 0.1620 \\ 
%   knn & 0.0680 & 0.0700 & 0.0690 & 0.1360 & 0.1390 & 0.1390 \\ 
hub & 0.0930 & 0.0930 & 0.0930 & 0.1870 & 0.1870 & 0.1870 \\ 
  random & 0.0610 & 0.0610 & 0.0610 & 0.1220 & 0.1220 & 0.1220 \\ 
  scale-free & 0.0810 & 0.0810 & 0.0810 & 0.1620 & 0.1620 & 0.1620 \\ 
  knn & 0.0670 & 0.0670 & 0.0670 & 0.1340 & 0.1340 & 0.1340 \\     
    \hline & \multicolumn{5}{c}{$\quad\quad p=30$}&\multicolumn{1}{c}{} \\
    \hline
% hub & 0.0900 & 0.0900 & 0.0900 & 0.1800 & 0.1800 & 0.1800 \\ 
%   random & 0.0810 & 0.0810 & 0.0810 & 0.1620 & 0.1620 & 0.1620 \\ 
%   scale-free & 0.0810 & 0.0810 & 0.0810 & 0.1620 & 0.1620 & 0.1610 \\ 
%   knn & 0.0730 & 0.0750 & 0.0740 & 0.1460 & 0.1510 & 0.1480 \\ 
hub & 0.0900 & 0.0900 & 0.0900 & 0.1800 & 0.1800 & 0.1800 \\ 
  random & 0.0810 & 0.0810 & 0.0810 & 0.1620 & 0.1620 & 0.1620 \\ 
  scale-free & 0.0800 & 0.0800 & 0.0800 & 0.1610 & 0.1610 & 0.1610 \\ 
  knn & 0.0720 & 0.0720 & 0.0720 & 0.1430 & 0.1430 & 0.1430 \\ 
    \toprule
    \end{tabular}
    \label{tb:d0/dqlevel}
  \end{center}
\end{table}
\subsection{Supplementary FDP and Power plots}
\label{app:fdp_power_plots}
\begin{figure}[htbp]
    \centering
    \includegraphics[width = 0.95\linewidth]{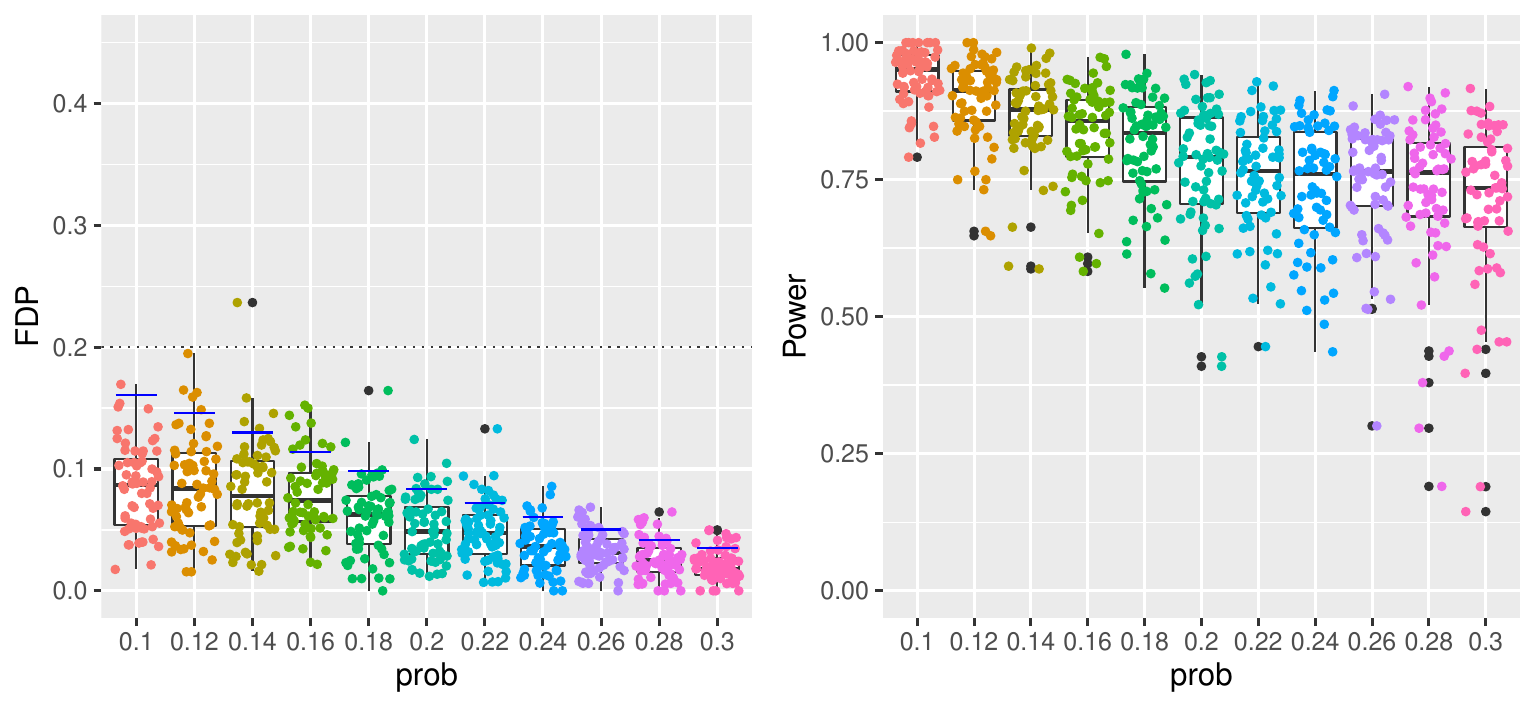}
    \caption{{FDP and power plots with $p=20$ and the nominal FDR level $q=0.2$. The other setups are the same as Figure \ref{fig:p20q0.1}.}}
    \label{fig:p20q0.2}
\end{figure}
\begin{figure}[htbp]
    \centering
    \includegraphics[width = 0.95\linewidth]{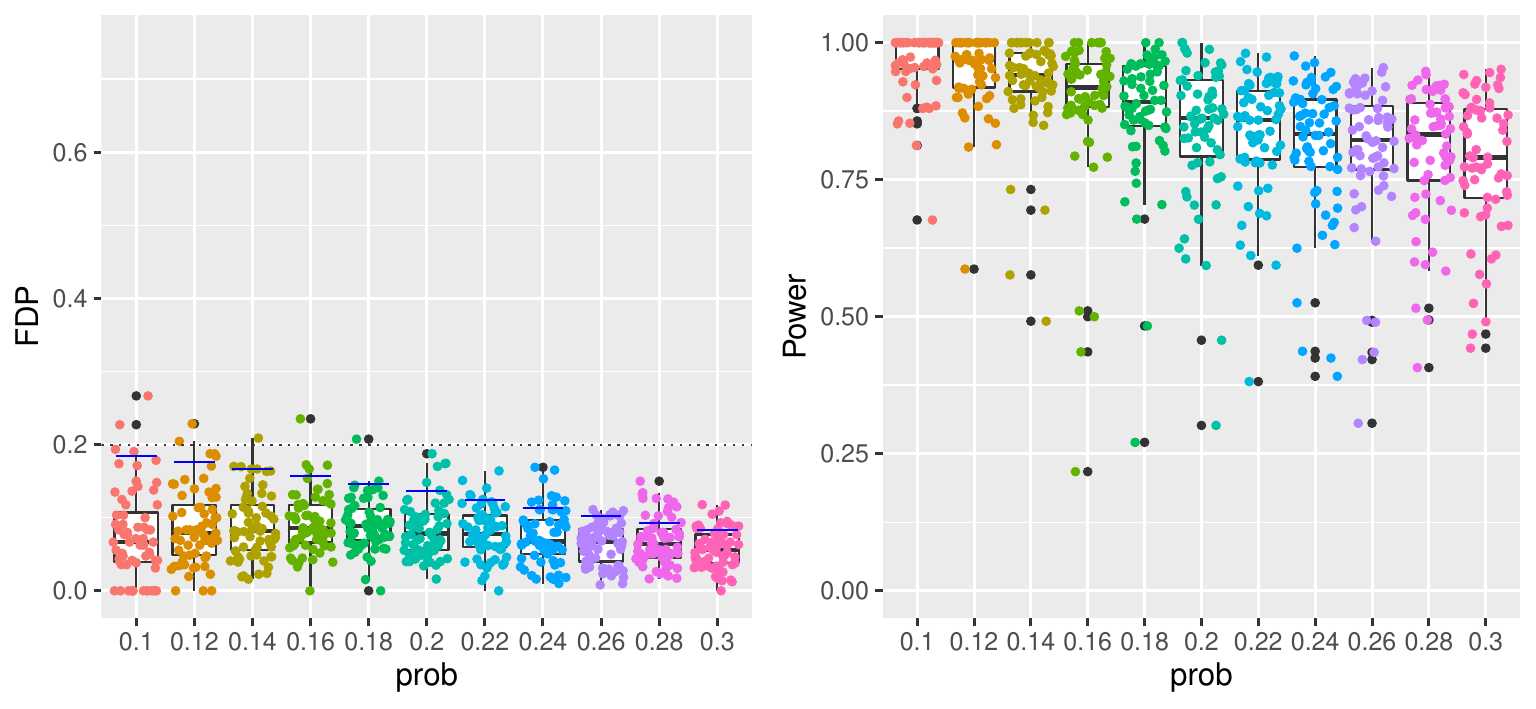}
    \caption{{FDP and power plots with $p=30$ and the nominal FDR level $q=0.2$. The other setups are the same as Figure \ref{fig:p20q0.1}.}}
    \label{fig:p30q0.2}
\end{figure}
 \end{supplement}
\end{document}